\documentclass[11pt]{article}

\usepackage[a4paper,margin=1in]{geometry}

\usepackage{booktabs}
\usepackage{array}
\usepackage{multirow}
\usepackage{float}
\usepackage{tabularx}
\usepackage{ragged2e}

\usepackage{microtype}
\usepackage[utf8]{inputenc}
\usepackage{mathpazo} 

\usepackage{mathtools,amssymb,amsthm}
\usepackage{dsfont}

\usepackage{enumitem}

\usepackage{graphicx}
\usepackage{tikz}

\usepackage[dvipsnames,svgnames,x11names]{xcolor}

\usepackage{cite}
\usepackage{doi}

\usepackage{hyperref}
\usepackage[capitalize]{cleveref}

\DeclareMathOperator{\Erf}{erf}
\DeclareMathOperator{\Erfc}{erfc}
\DeclarePairedDelimiterXPP\erf[1]{\Erf\mkern1mu}(){}{#1}
\DeclarePairedDelimiterXPP\erfc[1]{\Erfc\mkern1mu}(){}{#1}

\newcommand{\deq}{\coloneqq}
\newcommand{\eps}{\varepsilon}
\newcommand{\ot}{\otimes}

\newcommand{\C}{\mathbb{C}}
\newcommand{\Z}{\mathbb{Z}}

\newcommand{\Nat}{\mathbb{N}}

\newcommand{\zo}{\{0,1\}}
\newcommand{\bits}{\{0,1\}}

\newcommand{\norm}[1]{\left \| #1 \right \|}
\newcommand{\paren}[1]{ \left ( #1 \right )}

\newcommand{\ket}[1]{{|#1\rangle}}
\newcommand{\bra}[1]{{\langle#1|}}

\newcommand{\ketbra}[2]{|#1\rangle\! \langle #2|}

\newcommand{\braket}[2]{ \langle #1 | #2 \rangle}

\newcommand{\id}{\mathds{1}} 

\makeatletter
\DeclareMathOperator{\Tr@noarg}{Tr}
\newcommand{\Tr}{
   \@ifnextchar\bgroup
      {\Tr@witharg}
      {\Tr@noarg}
}
\newcommand{\Tr@witharg}[1]{
   \ensuremath{\Tr\left[#1\right]}
}
\makeatother
\DeclareMathOperator{\Pos}{Pos}
\DeclareMathOperator{\Obs}{Obs}
\DeclareMathOperator{\Cliff}{Cliff}
\DeclareMathOperator{\Herm}{Herm}

\newcommand{\textprotocol}[1]{\textnormal{\textsc{#1}}}

\renewcommand{\H}{\mathcal{H}} 

\newcommand{\Qa}{\mathcal{Q}_A}
\newcommand{\Qb}{\mathcal{Q}_B}

\DeclareMathOperator*{\E}{\mathbb{E}}

\DeclareMathOperator{\poly}{poly}

\newcommand{\BQP}{\mathsf{BQP}}
\newcommand{\QPT}{\mathsf{QPT}}
\newcommand{\nuQPT}{\mathsf{nuQPT}}
\newcommand{\PPT}{\mathsf{PPT}}

\newcommand{\Gen}{\mathsf{Gen}}
\newcommand{\Enc}{\mathsf{Enc}}
\newcommand{\Dec}{\mathsf{Dec}}
\newcommand{\Eval}{\mathsf{Eval}}

\newcommand{\sk}{\mathsf{sk}}
\newcommand{\evk}{\mathsf{evk}}

\newcommand{\ct}{\mathsf{ct}}

\newcommand{\QFHE}{\mathsf{QFHE}}

\makeatletter
\newcommand{\draw@noarg}{\,{\overset{\raisebox{1px}{$\scriptscriptstyle\; R$}}{\leftarrow}}\,}
\newcommand{\draw}{
   \@ifnextchar\bgroup
      {\draw@witharg}
      {\draw@noarg}
}
\newcommand{\draw@witharg}[1]{\,{\overset{\raisebox{1px}{$\scriptscriptstyle\;\;#1$}}{\leftarrow}}\,}
\makeatother

\newcommand{\Encsk}{\mathsf{Enc}_\mathsf{sk}}
\newcommand{\Decsk}{\mathsf{Dec}_\mathsf{sk}}

\newcommand{\cA}{\mathcal{A}}  \newcommand{\cC}{\mathcal{C}} 
\newcommand{\cE}{\mathcal{E}}   \newcommand{\cH}{\mathcal{H}}

\newcommand{\cQ}{\mathcal{Q}}  \newcommand{\cS}{\mathcal{S}} 
   \newcommand{\cX}{\mathcal{X}}

\newcommand{\ea}{\eta(\lambda)}
\newcommand{\pca}{{\psi^c_\alpha}}
\newcommand{\pea}[1]{{\psi^{\Enc(#1)}_\alpha}}
\newcommand{\pe}[1]{{\psi^{\Enc(#1)}}}
\newcommand{\ind}[1]{\mathbf{1}\{#1\}}
\newcommand{\peq}{{\psi^{\Enc(q)}}}

\newcommand{\kca}{\ket{\pca}}

\newcommand{\Eazz}{\E_{\substack{a \in \zo^n\\ |a| \equiv 0\pmod 2 }}}
\newcommand{\Eazzm}{\E_{\substack{a \sim\mu\\ |a| \equiv 0\pmod 2 }}}

\newcommand{\numberthis}{\addtocounter{equation}{1}\tag{\theequation}}
\newcommand{\restr}[2]{\left.#1\right|_{#2}}
\newcommand{\exten}[2]{#1^{(#2)}}

\newcommand{\floor}[1]{\left\lfloor #1 \right\rfloor}
\newcommand{\ceil}[1]{\left\lceil #1 \right\rceil}

\makeatletter
\newtheorem*{rep@theorem}{\rep@title}
\newcommand{\newreptheorem}[2]{%
\newenvironment{rep#1}[1]{%
 \def\rep@title{#2 \ref{##1} (restated)}%
 \begin{rep@theorem}}%
 {\end{rep@theorem}}}
\makeatother

\newtheorem{theorem}{Theorem}[section]
\newreptheorem{theorem}{Theorem}

\newtheorem{informaltheorem}{Informal Theorem}[section]
\newreptheorem{informaltheorem}{Informal Theorem}
\newtheorem*{informaltheorem*}{Informal Theorem}

\newtheorem{lemma}{Lemma}[section]
\newreptheorem{lemma}{Lemma}

\newreptheorem{proposition}{Proposition}

\newtheorem{corollary}{Corollary}[section]
\newreptheorem{corollary}{Corollary}

\newtheorem{definition}{Definition}[section]

\theoremstyle{remark}
\newtheorem{remark}{\textit{Remark}}[section]

\numberwithin{equation}{section}

\usepackage{tcolorbox}
\tcbuselibrary{breakable}
\usetikzlibrary{arrows.meta,positioning,fit,backgrounds}
\newcounter{protocol}
\newtcolorbox{protocolbox}[1]{
    breakable,
    colback=white,
    label type = protocol,
    colframe=black,
    fonttitle=\bfseries,
    title={%
        Protocol~\theprotocol
        \ifx#1\empty\else:~#1\fi
    },
    code={\refstepcounter{protocol}},
    left=6pt, right=6pt, top=6pt, bottom=6pt
}
\crefname{protocol}{protocol}{protocols}
\Crefname{protocol}{Protocol}{Protocols}

\hypersetup{%
  colorlinks=true,
  citecolor=DarkGreen,
  linkcolor=DarkRed,
  pdftitle  = {Classical Verification of Quantum Computation with Quasilinear Resources, from Compiled Nonlocal Games},
  pdfauthor = {Finn Holler and Anand Natarajan},
  pdfkeywords = {quantum computation, classical verification, compiled nonlocal games, self-testing, LWE},
}

\title{\bfseries Classical Verification of Quantum Computation with\\
       Quasilinear Resources, from Compiled Nonlocal Games}

\author{Finn Holler\footnote{fholler@ethz.ch}\\
\footnotesize ETH Zurich
\and
Anand Natarajan\footnote{anand@mit.edu}\\
\footnotesize MIT}

\date{September 29, 2026}

\begin{document}

\tikzset{every picture/.style={line width=0.75pt}}
\setlength{\parindent}{0cm}

\maketitle

\begin{abstract}

Computational self-testing gives a classical verifier command over the quantum register of a single computationally bounded prover. We use this framework to construct the first argument system for $\mathsf{BQP}$ with quasilinear total resource requirements in the circuit model. Our argument system is based on the learning with errors (LWE) assumption and requires total resources of $O(\poly(\lambda, \log g)\cdot g)$ for delegating a circuit with $g$ gates, where $\lambda$ is the LWE security parameter. This is achieved by constructing a new computational self-test for certifying the prover's quantum state and using it to dequantize the efficient verification protocol of Broadbent (ToC 2018). Specifically, this self-test enables the verifiable, random remote state preparation of tensor product states of the single-qubit Clifford observables $\sigma_X, \sigma_Y, \sigma_Z, (\sigma_Y-\sigma_X)/\sqrt{2}$ and $(\sigma_Y+\sigma_X)/\sqrt{2}$,
with constant robustness: the verification error is independent of the number of prepared qubits. This approach was first proposed by Coladangelo et al. (ToC 2024) in the multi-prover setting. We replicate their result in the single-prover setting by applying the compiler proposed by Kalai et al. (STOC 2023)---which turns any nonlocal game into a single-prover argument system---to a modified version of their self-test.

\end{abstract}
\newpage
{
\setcounter{tocdepth}{3}
\tableofcontents
}

\newpage

\section{Introduction}\label{chapter:introduction}

Quantum computers are becoming more capable, with a centralized `quantum cloud' being the dominant deployment model. In the near term, users are thus unlikely to own quantum hardware and they will have to access it remotely, on machines they can't control. This makes trustless classical delegation of quantum computation an important primitive.
Classically, this topic has been extensively studied and its practical relevance is clear from numerous applications in blockchain technology and cloud computing. Extending the idea to the quantum realm, we would like that a fully classical verifier could delegate an arbitrary quantum computation to a single untrusted prover, having a strong correctness guarantee for the purported outcome. How to achieve this functionality is far from obvious, and doing so efficiently with information-theoretic security is still a major open problem. 
\par
\medskip
In a breakthrough result Mahadev showed that \emph{classical verification of quantum computations} (CVQC) is indeed possible under cryptographic assumptions, constructing the first single-prover, classical-verifier argument system for $\mathsf{BQP}$ \cite{Mahadev2018}. Her protocol has overall complexity of $O(g^3)$ to verify a circuit of $g$ gates, placing high resource demands on the quantum prover. The overhead stems from a circuit-to-Hamiltonian reduction, which allows the verifier to check the computation by testing that there exists a Feynman-Kitaev history state with sufficiently low energy. Since the best-known circuit-to-Hamiltonian reductions \cite{Bausch2018} have an inverse-polynomial promise gap---distinguishing good from bad computations---a polynomial overhead is inherent to all works taking a Hamiltonian approach\footnote{A positive resolution of the quantum PCP conjecture as formulated in \cite{Aharonov2013} would improve this situation.}. This situation is unsatisfactory, since CVQC acts as a primitive in other works, which inherit the cubic overhead.
\par 
\medskip
Zhang partially addressed this issue by proposing the first linear-time CVQC protocol for measurement-based computations in the quantum random-oracle model \cite{Zhang2022}\footnote{The introduction of his paper provides a thorough overview of the protocol landscape.}. Although the measurement-based and circuit models are equivalent, converting from the latter to the former for a fixed cluster state incurs a polynomial factor in the circuit size, due to locality restrictions stemming from the geometry of the cluster state. Constructing an efficient protocol in the circuit model was explicitly left open in \cite{Zhang2022}; we resolve this by constructing the first CVQC protocol in the circuit model with quasilinear resource requirements and proving its soundness.
\par 

\paragraph{Delegation through dequantisation.} As inspiration for our construction, we turned to an older body of results in the nonlocal, information-theoretic model. Here a classical verifier interacts with \emph{multiple} provers sharing entanglement, who are assumed to not communicate. These interactive protocols are called \emph{nonlocal games}. In this model more efficient protocols have been constructed \cite{Coladangelo2024} using the powerful phenomenon of \emph{self-testing} \cite{Mayers2004}, in which nonlocal correlations can be used to certify specific quantum states and measurements.
\par 
\medskip
A direct application of self-testing can yield a test for an accepting history state, in the sense of the Mahadev protocol: this is done by the elegant nonlocal protocol of Grilo \cite{Grilo2019}. But self-testing enables richer control over the prover's system: Coladangelo et al. realized that it could be used to simulate protocols with quantum communication between the prover and verifier, which are known to have very low overhead. In particular, they were able to implement an efficient protocol due to Broadbent \cite{Broadbent2018} in the nonlocal setting, achieving quasi-linear total resources with a fully classical verifier. In the language of a more recent line of work \cite{Gheorghiu2019, Zhang2022, Gheorghiu2022, Zhang2025}, the self-testing results they relied on can be viewed as a nonlocal form of verifiable, random \emph{remote state preparation} (RSP) --- a primitive that can be used to construct CVQC by `dequantising' verification protocols requiring quantum communication. Informally, verifiable RSP is a single-prover interactive protocol with classical messages, which realises the functionality of sending a quantum state, even in the presence of cheating provers.
\par

\paragraph{Single-prover protocols from nonlocal games.}  We obtain an efficient CVQC protocol by constructing a verifiable random RSP, informed by the self-test of \cite{Coladangelo2024}. Specifically, we employ the compiler of Kalai et al. \cite{Kalai2023}, which serves as a bridge between the nonlocal and single-prover settings. Compilation here means taking any nonlocal game  and converting it into a single-prover argument system using a fixed transformation. The approach of building argument systems through compilation of proof systems has a long and fruitful history in the classical cryptographic literature\cite{Kalai2009,Kalai2021,Bitansky2012}. The KLVY transform achieves this by enforcing the structural assumption of non-communication by cryptographic means, specifically by using quantum fully homomorphic encryption (QFHE) with classical ciphertexts; a strengthening of classical FHE to allow for homomorphic evaluation of quantum circuits.
\par 
\medskip
Earlier results have shown that these \emph{compiled nonlocal games} are a powerful resource, enabling the combination of techniques from both the cryptography and nonlocal-games literature, while retaining the self-testing powers of their nonlocal counterparts \cite{Natarajan2023, Metger2024}. Unlike earlier works on computational self-testing, which had to exploit the specific structure of the underlying computational assumption and build bespoke cryptographic machinery \cite{Brakerski2021, Mahadev2018, Metger2021}, an approach through compiled nonlocal games yields a more modular construction that can be instantiated from different cryptographic assumptions \cite{Gupte2024,Bartusek2025,Bacho2025}, with proofs that closely follow their nonlocal models, where the cryptography can largely be abstracted away. For the first time we use this paradigm to construct verifiable random remote state preparation \`a la  \cite{Gheorghiu2019,Gheorghiu2022} from compiled nonlocal games.
\par 
\medskip
\paragraph{New rigidity results.} Our main contribution is a novel computational rigidity result\footnote{We believe that a straight-forward translation to the nonlocal setting should reproduce the rigidity guarantee of \cite{Coladangelo2024}, with slightly better resource requirements (no logarithmic factor).}, which generalizes the Pauli braiding test \cite{Natarajan2017} from the Heisenberg--Weyl group to what we call the \textit{extended Pauli group}. This allows us to certify that an efficient prover holds a state that is computationally indistinguishable from a uniformly random element of $\mathcal{S}^{\ot n}$, with
\[
\mathcal{S} = \left\lbrace\ket{0},\ket{1}\right\rbrace\cup\left\lbrace\ket{+_\theta} = \tfrac{1}{\sqrt{2}}\left(\ket{0}+e^{i\theta}\ket{1}\right): \theta\in\{0,\tfrac{\pi}{4},\dots,\tfrac{7\pi}{4}\}\right\rbrace,
\]
or its complex conjugate ($\overline{\mathcal{S}}^{\ot n}$), without the prover knowing which state he holds. This complex conjugate ambiguity (also called ``complex conjugation attack'') is unavoidable, since the correlations tested by interactive protocols are invariant under complex conjugation. Still, it is important that the final state held by the prover is indistinguishable from a classical mixture and not a coherent superposition of the two conjugation branches. This strength of characterization wasn't achieved by earlier works, such as \cite{Gheorghiu2019}, where a coherent mixture of the canonical and complex conjugate state can't be excluded. The RSP guarantee above can be used to dequantise Broadbent's efficient protocol for verifying $\mathsf{BQP}$ computations with constant completeness--soundness gap in the circuit model \cite{Broadbent2018}, yielding our headline result:
\begin{informaltheorem}[Formally, \cref{cor:bqp-argument}]
Under the assumption that LWE is sub-exponentially hard for non-uniform quantum adversaries, every language $L\in\mathsf{BQP}$ admits a single-prover, classical-verifier argument system with total resource requirement $\tilde O(|C_x|)$, where $C_x$ is the circuit deciding $L$ on input $x$.
\end{informaltheorem}
The sub-exponential hardness assumption is used only to set $\lambda=\mathrm{polylog}(|C_x|)$ to obtain quasilinear resources; polynomial hardness of LWE still yields a correct argument system, with almost-linear resource requirements, i.e.\ $O(|C_x|^{1+\eps})$ for all $\eps>0$. Here non-uniform adversaries do not include (potentially inefficiently computable) quantum advice. 
\par
\paragraph{A representation-theoretic perspective.} Our protocol design and discussion can serve as an independent discussion of \cite{Coladangelo2024}, who were the first to introduce the richer set of self-tests for operators beyond the Pauli group; we show that these tests can be extended to the single-prover setting. Along the way, we reformulate the analysis to use the framework of approximate group representation theory for self-testing, which was still nascent when \cite{Coladangelo2024} developed their protocol. Hopefully this new perspective can clarify some of the technical issues that were solved in a more ad-hoc way in their work.
\par    
\medskip
To the best of our knowledge, this is the first rigidity proof via
approximate group representation theory for a group whose
representation theory is richer than that of the Heisenberg--Weyl
group; we hope that this will inform future constant-robustness
rigidity results for larger classes of observables. Our construction
also yields a slight performance improvement over the self-test from
\cite{Coladangelo2024} (total resources of $O(n)$ instead of $O(n\log
n)$ for certifying $n$ qubits) for circuits compiled in the gateset
assumed by \cite{Broadbent2018}, because we no longer require
communicating a permutation on the qubits, as is done in their
`ParBell' subtest.

\section{Technical overview}\label{sec:technical-overview}
This section provides a high-level technical overview of the paper. We begin by recalling important prior results; we then introduce the idea behind our rigidity test and outline its soundness analysis, carried out in \cref{chapter:rigidity}. Finally, we sketch how the rigidity result is used to construct an efficient CVQC protocol with quasilinear resources, referring to \cref{chapter:verification} for the full soundness argument.

\subsection{Background}

In this work we use compiled nonlocal games (schematically illustrated in \cref{fig:compiled-mip}) toward an efficient argument system for $\mathsf{BQP}$.

\begin{figure}[htbp]
    \centering
    \tikzset{every picture/.style={line width=0.75pt}}
    \begin{tikzpicture}[x=0.75pt,y=0.75pt,yscale=-1,xscale=1]
    \draw   (234.73,136.1) .. controls (234.73,128.26) and (241.09,121.9) .. (248.93,121.9) -- (291.53,121.9) .. controls (299.38,121.9) and (305.73,128.26) .. (305.73,136.1) -- (305.73,222.7) .. controls (305.73,230.54) and (299.38,236.9) .. (291.53,236.9) -- (248.93,236.9) .. controls (241.09,236.9) and (234.73,230.54) .. (234.73,222.7) -- cycle ;
    \draw   (364.73,133.35) .. controls (364.73,127.76) and (369.26,123.23) .. (374.85,123.23) -- (425.61,123.23) .. controls (431.2,123.23) and (435.73,127.76) .. (435.73,133.35) -- (435.73,163.71) .. controls (435.73,169.3) and (431.2,173.83) .. (425.61,173.83) -- (374.85,173.83) .. controls (369.26,173.83) and (364.73,169.3) .. (364.73,163.71) -- cycle ;
    \draw    (306.53,139.53) -- (362.2,139.44) ;
    \draw [shift={(364.2,139.43)}, rotate = 179.9] [color={rgb, 255:red, 0; green, 0; blue, 0 }  ][line width=0.75]    (10.93,-3.29) .. controls (6.95,-1.4) and (3.31,-0.3) .. (0,0) .. controls (3.31,0.3) and (6.95,1.4) .. (10.93,3.29)   ;
    \draw    (364.87,157.6) -- (307.53,157.6) ;
    \draw [shift={(305.53,157.6)}, rotate = 360] [color={rgb, 255:red, 0; green, 0; blue, 0 }  ][line width=0.75]    (10.93,-3.29) .. controls (6.95,-1.4) and (3.31,-0.3) .. (0,0) .. controls (3.31,0.3) and (6.95,1.4) .. (10.93,3.29)   ;
    \draw  [fill={rgb, 255:red, 255; green, 255; blue, 255 }  ,fill opacity=1 ] (420.4,127.57) .. controls (420.4,124.87) and (422.59,122.68) .. (425.29,122.68) -- (434.11,122.68) .. controls (436.81,122.68) and (439,124.87) .. (439,127.57) -- (439,138.97) .. controls (439,138.97) and (439,138.97) .. (439,138.97) -- (420.4,138.97) .. controls (420.4,138.97) and (420.4,138.97) .. (420.4,138.97) -- cycle ;
    \draw  [draw opacity=0] (423.3,123.04) .. controls (423.29,122.9) and (423.29,122.76) .. (423.29,122.62) .. controls (423.29,117.25) and (426.08,112.91) .. (429.51,112.91) .. controls (432.94,112.91) and (435.72,117.25) .. (435.73,122.6) -- (429.51,122.62) -- cycle ; \draw   (423.3,123.04) .. controls (423.29,122.9) and (423.29,122.76) .. (423.29,122.62) .. controls (423.29,117.25) and (426.08,112.91) .. (429.51,112.91) .. controls (432.94,112.91) and (435.72,117.25) .. (435.73,122.6) ;  
    \draw  [fill={rgb, 255:red, 255; green, 255; blue, 255 }  ,fill opacity=1 ] (427.06,134.62) -- (428.69,127.38) -- (430.86,127.38) -- (432.49,134.62) -- cycle ;
    \draw  [draw opacity=0][fill={rgb, 255:red, 255; green, 255; blue, 255 }  ,fill opacity=1 ] (428.01,130.28) .. controls (427.55,129.85) and (427.27,129.24) .. (427.27,128.57) .. controls (427.27,127.27) and (428.33,126.22) .. (429.63,126.22) .. controls (430.93,126.22) and (431.98,127.27) .. (431.98,128.57) .. controls (431.98,129.16) and (431.76,129.7) .. (431.41,130.11) -- (429.63,128.57) -- cycle ; \draw   (428.01,130.28) .. controls (427.55,129.85) and (427.27,129.24) .. (427.27,128.57) .. controls (427.27,127.27) and (428.33,126.22) .. (429.63,126.22) .. controls (430.93,126.22) and (431.98,127.27) .. (431.98,128.57) .. controls (431.98,129.16) and (431.76,129.7) .. (431.41,130.11) ;

    \draw   (364.93,195.75) .. controls (364.93,190.16) and (369.46,185.63) .. (375.05,185.63) -- (425.81,185.63) .. controls (431.4,185.63) and (435.93,190.16) .. (435.93,195.75) -- (435.93,226.11) .. controls (435.93,231.7) and (431.4,236.23) .. (425.81,236.23) -- (375.05,236.23) .. controls (369.46,236.23) and (364.93,231.7) .. (364.93,226.11) -- cycle ;
    \draw   (350.2,123.29) .. controls (350.2,115.34) and (356.65,108.89) .. (364.6,108.89) -- (435.93,108.89) .. controls (443.89,108.89) and (450.33,115.34) .. (450.33,123.29) -- (450.33,236.23) .. controls (450.33,244.19) and (443.89,250.63) .. (435.93,250.63) -- (364.6,250.63) .. controls (356.65,250.63) and (350.2,244.19) .. (350.2,236.23) -- cycle ;
    \draw    (306.53,200.53) -- (362.2,200.44) ;
    \draw [shift={(364.2,200.43)}, rotate = 179.9] [color={rgb, 255:red, 0; green, 0; blue, 0 }  ][line width=0.75]    (10.93,-3.29) .. controls (6.95,-1.4) and (3.31,-0.3) .. (0,0) .. controls (3.31,0.3) and (6.95,1.4) .. (10.93,3.29)   ;
    \draw    (364.87,218.6) -- (307.53,218.6) ;
    \draw [shift={(305.53,218.6)}, rotate = 360] [color={rgb, 255:red, 0; green, 0; blue, 0 }  ][line width=0.75]    (10.93,-3.29) .. controls (6.95,-1.4) and (3.31,-0.3) .. (0,0) .. controls (3.31,0.3) and (6.95,1.4) .. (10.93,3.29)   ;
    
    \draw (242,168) node [anchor=north west][inner sep=0.75pt]   [align=left] {Verifier};
    \draw (381.87,139) node [anchor=north west][inner sep=0.75pt]   [align=left] {Alice};
    \draw (385,202) node [anchor=north west][inner sep=0.75pt]   [align=left] {Bob};

    \end{tikzpicture}
    
    \caption{Schematic of a compiled nonlocal game. Lock denotes quantum fully homomorphic encryption. Both `Alice' and `Bob' are invocations of the same single prover and are only distinguished for illustrative purposes. Communication is sequential and flows from top to bottom.}
    \label{fig:compiled-mip}
    \end{figure}
    
\par
\medskip
\begin{table}[t]
    \centering
    \renewcommand{\arraystretch}{1.25}
    \setlength{\tabcolsep}{6pt}
    \begin{tabularx}{0.96\linewidth}{@{}>{\RaggedRight\arraybackslash}m{1.9cm} c c c >{\RaggedRight\arraybackslash}X >{\Centering\arraybackslash}m{1.75cm} c@{}}
    \toprule
    Protocol & Type & Provers & Rounds & Total resources & Assumption & Blind \\
    \midrule
    \cite{Reichardt2013}             & MB & $2$ & $\mathrm{poly}(n)$  & $\ge g^{8192}$                 & IT            & Yes \\
    \cite{Gheorghiu2015}             & MB & $2$ & $\mathrm{poly}(n)$  & $\ge g^{2048}$                 & IT               & Yes \\
    \cite{Grilo2019}         & LH & $2$ & $1$                 & $\Omega(n\,g^{2})$             & IT               & No  \\
    \cite{Coladangelo2024} & C & $2$ & $O(\mathrm{depth})$/2 & $\Theta(g\log g)$ & IT & Yes/No \\
    \midrule
    \cite{Mahadev2018}     & LH & $1$ & $2$                 & $O(\poly(\lambda)\,g^{3})$ & LWE & No  \\
    \cite{Zhang2022}       & MB & $1$ & $O(\mathrm{depth})$ & $O(\poly(\lambda)\,g)$     & LWE +\newline\mbox{QROM}     & No  \\
    \hspace{4pt}\textbf{Here}      & C & $1$ & $O(\mathrm{depth})$ & $O(\poly(\lambda,\log g)\,g)$     & LWE       & No  \\
    \bottomrule
    \end{tabularx}
    \caption{Resource overheads of classical-verifier $\mathsf{BQP}$ verification protocols. The \cite{Coladangelo2024} result presents two protocols, which are compactly described as a single entry in the table (first entry is the `Leash' and second is the `Dog walker' protocol).
    $g$ is the number of gates in the delegated circuit, $n$ the qubit count, and $\lambda$ the security parameter. ``Total resources'' bundles prover time and gate complexity,
    EPR pairs, and classical communication; ``depth'' refers to the $T$-depth of the delegated circuit. Abbreviations: LH = local Hamiltonian, MB = MBQC, C = circuit, IT = information-theoretic, LWE = learning with errors, QROM = quantum random oracle model.}
    \label{tab:cvqc}
\end{table}
    
Proof and argument systems for $\mathsf{BQP}$ already exist in both the nonlocal and classical verification settings, and \cref{tab:cvqc} summarizes the resource requirements of prior works\footnote{These results are stated in different settings; for a fair comparison, we include any overhead from repetitions needed to achieve a constant completeness--soundness gap.}. We note that the polylogarithmic overhead in the circuit size in our construction is often unavoidable: most protocols assume a fixed universal gate set, and converting between gate sets typically introduces a polylogarithmic factor in the circuit size, via the Solovay--Kitaev theorem. In our case this overhead becomes explicit, since both QFHE and the Broadbent protocol require distinct gate sets, so one conversion is already baked into the protocol.
\par
\medskip

\subsubsection*{The Broadbent protocol}
Since the Broadbent protocol plays an important role in our construction, we briefly discuss it here. At a high level, its soundness rests on the indistinguishability of two kinds of rounds, called test rounds and computation rounds. The test rounds function as trapdoors that constrain a cheating prover: any strategy that is accepted with high probability in test rounds must yield the correct circuit outcome in computation rounds.
\par
\medskip
Round indistinguishability is achieved by encrypting the verifier's input states with the quantum one-time pad (QOTP). For random classical bits $a,b\in\zo$, the QOTP of a single-qubit state $\ket{\psi}\in\C^2$ is
\[
\sigma_X^a\sigma_Z^b\ket{\psi}.
\]
By the Pauli twirl, when $(a,b)$ is uniform the QOTP provides the same information-theoretic security as its classical analog:
\[
\frac{1}{4}\sum_{a,b\in\zo} \sigma_X^a\sigma_Z^b\ket{\psi}\bra{\psi}\sigma_Z^b\sigma_X^a= \frac{\id}{2},
\]
which generalizes straightforwardly to multi-qubit states.
\par
\medskip
A candidate approach, to make the different rounds indistinguishable, is for the verifier to encrypt the circuit input (assumed to be encoded in the computational basis) with the QOTP. Knowing the circuit specification, the verifier could ideally keep track of the updated QOTP keys by commuting each gate in the circuit past the tensor product of Pauli $\sigma_X$ and $\sigma_Z$ operators that implement the QOTP encryption. Unfortunately, for circuits over the universal gate set $\{\sigma_X, \sigma_Z, H, \mathrm{CNOT}, T\}$, this commutation trick succeeds only for the Clifford gates $\sigma_X$, $\sigma_Z$, $H$, and $\mathrm{CNOT}$---insufficient on its own, since any circuit using only Clifford gates can be efficiently simulated classically \cite{Aaronson2004}. By incorporating $T$-gate gadgets, however, the approach can be made to work: in addition to the QOTP-encrypted input qubits, the verifier sends encrypted magic state qubits (one per $T$-gate in the circuit) that enable the application of a $T$-gate on QOTP encrypted qubits through re-keying and hence allow the verifier to classically track all key updates through the full circuit. For every gate in the circuit, the QOTP keys thus evolve as $(a,b)\mapsto(a',b')$ in a way that is known to the verifier, so at the end of the computation the verifier knows exactly how he can decrypt the encrypted outcome returned by the prover.
\par
\medskip
We refer the reader to \cite{Broadbent2018} for a full description and security analysis. For our purposes, the crucial point is that the verifier needs only prepare states from the following set of ten:
\[
\cS=\{\ket{0}, \ket{1}\}\cup\left\lbrace \ket{+_\theta}=\tfrac{1}{\sqrt{2}}\left(\ket{0}+e^{i\theta}\ket{1}\right): \theta\in\{0,\tfrac{\pi}{4},\dots,\tfrac{7\pi}{4}\}\right\rbrace.
\]
If we can design a remote state preparation protocol which certifies that the prover holds a random tensor product of these states---without knowing which one---through purely classical interaction, we can use it to replace the quantum communication at the start of the Broadbent protocol. The set of states $\cS$ coincides with the states certified in a work by Gheorghiu and Vidick \cite{Gheorghiu2019}; unfortunately, their protocol does not have constant rigidity and thus cannot directly be used for our purposes. They obtain a per-qubit rigidity error of $\eps$ (which quantifies the deviation of the obtained state from the target state) with a resource cost of $O(1/\eps^3)$. To delegate a circuit of size $g$, the Broadbent protocol requires $O(g)$ prepared qubits, so naively invoking this RSP $g$ times already costs $O(g/\eps^3)$. Moreover, to obtain constant overall rigidity error $\eps_{\mathrm{tot}}$, the per-qubit error must scale as $\eps_{\mathrm{tot}}/g$, driving the total cost up to $O(g^4)$---worse than the Mahadev protocol. Zhang \cite{Zhang2022} gives an efficient parallel RSP protocol for states $\ket{+_\theta}$ with $\theta\in\{0,1,\ldots,7\}$, but this only covers part of $\cS$: the computational basis states are missing, which is why his construction applies only to verification in the MBQC model, through dequantizing the protocol of \cite{Ferracin2018}.

\subsection{Rigidity result}\label{section:rigidity}
So-called \textit{rigidity results}, which allow a verifier to characterize the internal quantum operations of an untrusted prover, are central to (computational) self-testing. A rigidity result for the simple fact that the prover is applying two anti-commuting observables---which is provided by the CHSH or Magic Square nonlocal games---is remarkably powerful and is sufficient for many applications \cite{Reichardt2013,Brakerski2021,Mahadev2018,Grilo2019}. How this observation about the internal workings of an untrusted prover is obtained differs strongly between nonlocal and early cryptographic works. The KLVY compiler allows us to reconcile the different settings, by enabling the use of nonlocal techniques in the cryptographic setting, uncovering their fundamental relation and reusing existing techniques, such as the framework of approximate group representations. It has been even noted that the analysis can be completely performed in the nonlocal setting, by restricting to \textit{computationally nonlocal strategies} as defined in \cite{Bartusek2025}, eliminating the need to deal with the specifics of the cryptography, but instead having to make sure that the quantum operations satisfy the definition of a computationally nonlocal strategy.
\par
\medskip
Rigidity results such as those in \cite{Mayers2004, Reichardt2013, Wu2016,Natarajan2017} are usually stated as guarantees on the observables which the provers implement and on the entangled state which they share. From such a guarantee it is typically straightforward to deduce that a specific state has been prepared, provided it is an eigenstate of the certified observables, since applying the certified measurements projects into one of its eigenstates.
\par
\medskip
The states in $\mathcal{S}$ are, up to a global phase, eigenstates of the single-qubit Clifford observables
\[
\sigma_X,\quad \sigma_Y,\quad \sigma_Z,\quad \sigma_F=\tfrac{1}{\sqrt{2}}(\sigma_Y-\sigma_X),\quad \sigma_G=\tfrac{1}{\sqrt{2}}(\sigma_Y+\sigma_X).
\]
Certifying preparation of states in $\mathcal{S}$ therefore reduces to certifying these observables. This is a stronger guarantee than that the prover is applying two anti-commuting observables, which has traditionally been extensively studied. Fortunately, the analysis of these types of rigidity results can be unified through the framework of approximate group representation theory\footnote{For a comprehensive and thorough introduction to the subject, we refer the reader to the excellent course notes of \cite{Vidick2021}.}, which has become a standard technique in self-testing after the development of the Pauli braiding test \cite{Natarajan2017}, which can also be applied in the cryptographic setting thanks to the KLVY compiler. We develop the main ideas of this framework first in the nonlocal setting, through the Heisenberg--Weyl group and the CHSH game (the scenario of certifying two anti-commuting observables), and then explain how the argument extends to the group of interest, which contains the five observables from above.
\par
\medskip
The CHSH game, named after Clauser, Horne, Shimony and Holt \cite{Clauser1969}, is a two-prover nonlocal game that distills the original Bell experiment. Beyond its role in ruling out local hidden-variable theories \cite{Hensen2015}, it has become a central tool in self-testing, including the breakthrough of \cite{Reichardt2013}. The game proceeds as follows:
\begin{itemize}
    \item The verifier samples two random bits $x,y\in\zo$ and sends them to the provers.
    \item The provers respond with one bit each, denoted $a,b\in\zo$.
    \item The verifier accepts if and only if $a\oplus b=x\cdot y$.
\end{itemize}
The no-communication assumption is what makes this game nontrivial, since both provers don't know which question the other received. Classical, unentangled provers can win with probability at most $3/4$, for instance by always outputting the same pre-agreed bit ($a=b$). Sharing entanglement, they can exploit the nonlocal correlations it enables and achieve a winning probability of $\cos^2(\pi/8)\approx 0.853$, which Tsirelson showed to be optimal among all quantum strategies \cite{Tsirelson1987}. A winning probability above $3/4$ therefore witnesses genuinely quantum behavior---the Bell-test, or proof-of-quantumness, aspect of CHSH.
\par
\medskip
The specific CHSH nonlocal game has even more power: its optimal strategy is unique up to isometry. This implies that if the provers win with a probability equal to the quantum optimum $\omega^*=\cos^2(\pi/8)$, the verifier can conclude that they are playing the optimal strategy. The optimal strategy for CHSH consists of the provers sharing an EPR pair and measuring anti-commuting observables on it. Fortunately, this also holds robustly, i.e. if the provers succeed in the CHSH game with a probability close to the quantum optimum $\omega^*-\eps$, then the provers' measurements approximately anti-commute in a suitable distance measure (defined in \Cref{chapter:preliminaries}) and they share a state which is close to an EPR pair. A classical verifier can therefore infer that successful, non-communicating provers implement approximately anti-commuting observables; in many applications, however, one needs control over \emph{which} observables are realized, not merely which algebraic relation they satisfy (anti-commutation in this case). Approximate group representation theory supplies a way to achieve this, by determining the group defined by the algebraic relations and using the irreducible representations of that group. We will illustrate this approach using the single-qubit Heisenberg--Weyl group as an example. This group is generated by elements $\omega$, $x$, and $z$, satisfying relations
\[
\langle \omega^2=x^2=z^2=\id,\, xz=\omega z x,\, z\omega=\omega z,\, x\omega=\omega x\rangle.
\]
This single-qubit group exhibits several one-dimensional irreducible representations (with $\omega=1$ and various sign choices for $x$ and $z$) and a unique two-dimensional irreducible representation,
\[
\omega = -\id,\quad x=\begin{pmatrix}0 & 1 \\ 1 & 0 \end{pmatrix}=\sigma_X,\quad z=\begin{pmatrix}1 & 0 \\ 0 & -1 \end{pmatrix}=\sigma_Z,
\]
corresponding to the Pauli matrices $\sigma_X$ and $\sigma_Z$. Imposing $\omega=-\id$ thus leaves the anti-commutation relation $\{z,x\}=0$ as the algebraic relation defining the group---precisely the relation certified by the CHSH game. Let $A_0$ and $A_1$ be the measurements implemented by one of the provers on questions 0 and 1, respectively. Because $A_0$ and $A_1$ are unitary (as binary observables) and approximately satisfy anti-commutation for a sufficiently successful prover, they form an approximate unitary representation of the single-qubit Heisenberg--Weyl group.
\par
\medskip
The Gowers--Hatami (GH) theorem \cite{Gowers2017} guarantees that any approximate unitary representation can be rounded to an exact one on a larger Hilbert space. Informally, this means that there exists an isometry $V$ such that for any $\eps > 0$,
\[
VA_0 \approx_{\eps} (\sigma_Z\ot\id) V,\quad\text{and}\quad VA_1 \approx_{\eps} (\sigma_X\ot\id) V,
\]
in the same state-dependent norm as defined in \cref{section:distance-measures}. We used the single-qubit case for illustration, but GH is most useful in the multi-qubit setting, where it certifies many qubits with \emph{constant robustness}: the approximation error $\eps$ does not grow with the number of certified qubits. Single-qubit self-testing was known before GH, via Jordan's lemma, but that approach does not easily scale. For $n$ qubits the Heisenberg--Weyl group has relations
\begin{align*}
\langle & \omega^2=x(a)^2=z(a)^2=\id,\, x(a)z(b)=\omega^{a\cdot b} z(b) x(a),\, z(a)\omega=\omega z(a),\\
&\quad  x(a)\omega=\omega x(a),\, x(a\oplus b)=x(a)x(b),\, z(a\oplus b)=z(a)z(b)\rangle,
\end{align*}
where $a,b\in\zo^n$. This group again admits many one-dimensional irreducible representations, corresponding to all possible sign assignments, but only one $2^n$-dimensional irreducible representation:
\[
\omega=-\id,\quad x(a)=\bigotimes_{i\in[n]}\sigma_X^{a_i}=\sigma_X(a),\quad z(a)=\bigotimes_{i\in[n]}\sigma_Z^{a_i}=\sigma_Z(a).
\] 
Assuming one can construct prover observables $A_0(a)$ and $A_1(a)$ that exactly satisfy $A_i(a)A_i(b)=A_i(a\oplus b)$ for $i\in\zo$ and approximately satisfy the characteristic relation $A_0(a)A_1(b)=(-1)^{a\cdot b} A_1(b) A_0(a)$, GH provides an isometry $V$ such that for any $\eps > 0$,
\[
VA_0(a) \approx_{\eps} (\sigma_Z(a)\ot\id) V,\quad\text{and}\quad VA_1(a) \approx_{\eps} (\sigma_X(a)\ot\id) V,
\]
with error $\eps$ independent of $n$. To our knowledge, the multi-qubit Heisenberg--Weyl group is the most complex group previously self-tested in this framework. The observables needed for $\mathcal{S}$ lie in a larger group, which we call the \emph{extended Pauli group}. It is generated by $\omega$, $x(a)$, $z(a)$ and $g(a)$ with relations
\begin{align*}
\langle &\omega^4=x(a)^2=z(a)^2=g(a)^2=\id,\, x(a)z(b)=\omega^{2a\cdot b} z(b) x(a),\, z(a)\omega=\omega z(a),\\
& x(a)\omega=\omega x(a),\, g(a)\omega=\omega g(a),\, g(a)z(b)=\omega^{2a\cdot b} z(b) g(a),\, g(a)x(a)g(a)=\omega^{|a|} x(a)z(a)\\
& x(a\oplus b)=x(a)x(b),\, z(a\oplus b)=z(a)z(b),\, g(a\oplus b)=g(a)g(b)\rangle.
\end{align*}
Because $\omega$ is a fourth root of unity, the representation theory is substantially richer. First, when $\omega^2=\id$, there are many $2^k$-dimensional ``classical'' irreducible representations ($0\leq k\leq n$); intermediate dimensions arise because $x$ and $g$ can anti-commute even when $\omega^2=\id$. Second, when $\omega^2=-\id$, there are $2^{n+1}$ distinct $2^n$-dimensional ``quantum'' irreducible representations which, up to sign choices and phase ambiguity, correspond to the matrices $\sigma_X$, $\sigma_Z$, and $\sigma_G=1/\sqrt{2}(\sigma_X+\sigma_Y)$. These are far more irreps than the unique quantum one in the Heisenberg--Weyl case, owing to sign freedom on $g$ induced by its conjugation relation to $x$ and $z$. Finally, both $\omega=i\id$ and $\omega=-i\id$ are admissible; this is the algebraic origin of the complex-conjugation attack discussed in the introduction and in \cite{Coladangelo2024,Zhang2022}.
\par
\medskip
To apply GH in this setting, we must self-test the full relation set, including a conjugation relation. The only prior self-test for such a relation that we are aware of is due to \cite{Coladangelo2024}, which heavily influenced our construction; it is specified in \cref{protocol:conjugation-test-mod} and its soundness is analyzed in \cref{subsec:conj}.
\par
\medskip
Even after the relations are certified and the approximate representation is rounded, a per-qubit sign ambiguity on $G(a)$ remains (here $G(a)$ denotes the prover's implementation of the abstract group element $g(a)$ in the $2^n$-dimensional branch). In the Heisenberg--Weyl case one excludes one-dimensional representations by requiring $-\id\mapsto -\id$ under the approximate representation, equivalently discarding the $\omega=\id$ branch; approximate anti-commutation enforces this automatically, as scalars cannot anti-commute. An analogous argument removes the $\omega^2=\id$ branch of ``classical'' irreps. for the extended Pauli group, but aligning the sign of $G(a)$ across qubits still requires an additional test, whose guarantee goes beyond the algebraic relations that defined the group, as we now want to pick out specific irreducible representations among the many admissible ones.
\par
\medskip
Our self-test again follows \cite{Coladangelo2024}, who confronted the same issue. They certify the same group with a protocol structurally similar to ours; the main difference is methodological, as their analysis predates the systematic GH framework and relies on more ad hoc arguments. We adapt their ideas into a self-test with a simpler structure and a more direct group-theoretic analysis.
\par
\medskip
The idea of forcing a consistent sign choice across all $G$ observables, used in \cite{Bowles2018,Coladangelo2024}, is a beautiful one. The test proceeds as follows:
\begin{itemize}
    \item The verifier asks Alice to measure pairs of her qubits in the Bell basis.
    \item The verifier asks Bob to measure all of his qubits separately in the $G$ eigenbasis.
    \item The verifier accepts if and only if the correlations between the different $G$ outcomes reported by Bob match the ones expected based on the Bell basis outcome reported by Alice.
\end{itemize}
Intuitively, the test performs entanglement swapping. Provers who pass the earlier subtests must share something that has the properties of perfect entanglement (up to efficient distinguishers); Alice's Bell measurement then converts inter-prover entanglement into entanglement internal to Bob's register. Bob does not know the post-measurement state on his side, so inconsistent per-qubit signs for $G$ alter his measurement statistics in a way the verifier can detect. We refer to \cref{protocol:cliff-test} subtests 2 and 3 for the protocol specification and to \cref{subsec:corr-analysis} for the full soundness argument; here we only sketch the main idea, using the Pauli $Y$ operator as an example.
\par
\medskip
Suppose Alice's Bell measurement projects two of her qubits onto an EPR pair $\ket{\Phi^+}$. By entanglement swapping, the two paired qubits on Bob's side are then $\ket{\Phi^+}$ as well. A direct calculation gives
\[
\bra{\Phi^+}\sigma_Y\otimes\sigma_Y\ket{\Phi^+}= -1,
\]
so if Bob applies $\sigma_Y$ with the same sign on both qubits, his outcomes are perfectly anti-correlated; a single sign flip on one of the observables converts this to perfect correlation, which the verifier would detect. The argument extends to the $\sigma_G$ observable and the other three Bell basis states. Since this test can only detect inconsistent pairwise signs, a global sign on $G(a)$ can survive; we remove it with a CHSH subtest, noting that $\sigma_F$ and $\sigma_G$ are optimal CHSH observables. 
\par
\medskip
Together, these components yield a self-test for specific $2^n$-dimensional irreducible representations of the $n$-qubit extended Pauli group: the prover's measurements are certified, up to isometry and complex conjugation, with constant robustness. This is captured by the following informal theorem:
\begin{informaltheorem}[Formally, \cref{theorem:clifford-mixed-basis}]
    For any efficient prover that passes the $n$-qubit compiled Clifford test from \cref{protocol:cliff-test} with probability $\omega^*-\eps$, there exists an isometry such that under uniform expectation over $\tilde W\in\{X,Y,Z,F,G\}^n$
    \[
        V\tilde W(a)\approx_\eps ((\sigma_{\tilde W}(a)\oplus \overline{\sigma}_{\tilde W}(a))\ot\id)V,
    \]
    where $\omega^*$ is the optimal winning probability of the test, $\tilde W(a)$ is the prover's observable on question $\tilde W$ and
    \[
        \sigma_{\tilde W}(a)\deq\bigotimes_{i\in[n]} \sigma_{\tilde W_i}^{a_i}.
    \]
\end{informaltheorem}
Here the direct sum structure is a consequence of the phase ambiguity, which was mentioned earlier. Using this characterization and a consistency test---the verifier asks both provers the same question and checks that the answers agree---we can also characterize the prover's state. This is possible because the notation $\approx_\eps$ relates to the state-dependent norm, which depends on the prover's state. Explicitly using this state dependence, we arrive at the following result:
\begin{informaltheorem}[Formally, \cref{theorem:rsp-guarantee}]
    For any efficient prover that passes the $n$-qubit compiled Clifford test from \cref{protocol:cliff-test} with probability $\omega^*-\eps$, there exists an isometry such that
    \[
        \E_{\tilde W}\sum_{v\in\zo^n} \ketbra{v,\tilde W}{v, \tilde W}_W\ot V\phi_v^{\Enc(\tilde W)}V^\dag \overset{c}{\approx}_{\sqrt{\eps}} \E_{v,\tilde W}\ketbra{v, \tilde W}{v, \tilde W}_W\ot\left(\tau_{\tilde W}^v\otimes \rho_{0} + \overline{\tau}_{\tilde W}^v\otimes \rho_{1}\right),
    \]
    where $\omega^*$ is the optimal winning probability of the test, the verifier holds the $W$ register, $\phi^{\Enc(\tilde W)}_v$ is the prover's state after receiving the encrypted question $\tilde W$ and answering with an encryption of $v$ and 
    \[
    \tau_{\tilde W}^v\deq\bigotimes_{i\in[n]} \frac{1}{2}\left(\id+(-1)^{v_i}\sigma_{\tilde W_i}\right),
    \]
\end{informaltheorem}
We have not yet introduced the notation, but $\overset{c}{\approx}_\delta$ should be understood as computational indistinguishability with an advantage of $\delta$.
The state guarantee for imaginary observables, such as $\sigma_Y$, $\sigma_F$, and $\sigma_G$, factors into two distinct blocks, because of the phase ambiguity. In general, there are also coherences between these two cases; it could be imaginable that the prover applies a coherent superposition of the canonical and complex conjugate observables. However, because the QFHE makes it impossible to consistently apply this superposition during both parts of the interaction, we can show that the coherences are cryptographically small (see \cref{cor:flag-dephasing} for the detailed statement), which achieves a stronger state characterization than prior results in our setting, such as \cite{Gheorghiu2019}, where these coherences are not generally excluded.
\par
\medskip
In what follows, we will show how this state characterization can be used to run the Broadbent protocol with purely classical communication and thus achieve $\mathsf{BQP}$ verification between a classical verifier and a single computationally bounded prover.
\subsection{BQP verification}
As mentioned before, the idea is to run the Broadbent protocol with a classical verifier, by replacing the quantum communication with verifiable remote state preparation, obtained through our rigidity result.
\par 
\medskip
To start our protocol, the verifier chooses the bases to prepare his qubits in, as in a normal execution of the Broadbent protocol; this yields the basis assignment string $\tilde W\in\{X,Y,Z,F,G\}^n$. Next, instead of preparing the qubits in the eigenstates of these bases dictated by the QOTP keys, he encrypts the basis assignment string and sends it to the prover. The prover replies with an encrypted answer $\alpha$, whose decryption ($v=\Dec(\alpha)\in\zo^n$), in an honest execution, encodes the measurement outcomes of measuring $n$ qubits in the basis specified by $\tilde W$. Recalling the previous section, we have a rigidity test, where a high passing probability forces this to be the case. Thus, by executing the rigidity test often enough to estimate this passing probability we can be sure that the prover holds the eigenstate of the tensor product of the $n$ observables specified by $\tilde W$, corresponding to the per-observable eigenvalues $v$ in his quantum register, after answering the first question.
\par 
\medskip
Importantly, the prover only receives the encryption of $\tilde W$ and homomorphically evaluated the measurement; thus, by the semantic security of the QFHE scheme, he doesn't know which bases his qubits are in. To actually perform the delegation, we need to sprinkle in executions of the Broadbent protocol. This is why there are two subtests in \cref{protocol:verification}, which are executed with different probabilities: the first is more likely to run and performs the rigidity protocol, while the second acts identical in the encrypted part of the interaction and then performs the classical part (all interactions after the verifier sent his quantum states to the prover) of the Broadbent verification protocol.
\par
\medskip
By appropriately choosing the probability with which the verification game is played, we can ensure that the rigidity guarantee (specifically the state characterization) applies in this second case as well, which makes the soundness guarantees of the Broadbent protocol kick in and ensures that the overall verification protocol is sound.

\subsection{Technical contributions}

In the compiled setting, many of the guarantees rest on the IND-CPA security of the underlying cryptography, and as a consequence some routine operations from the nonlocal setting become more delicate. For example, when `prover switching' observables corresponding to distinct questions, the analysis relies on the computational indistinguishability of the two state the prover holds after the QFHE encrypted part of the interaction. Because this indistinguishability holds only in computational distance, every operator that enters the argument---which often includes the Gowers--Hatami (GH) isometry---must be computationally efficient.
\par
\paragraph{An explicit GH isometry.} Building on the previous point, we had to show that the GH isometry is computationally efficient. To do so, we slightly modified the dilation-based proof of GH in \cite{Metger2024} to obtain an explicit expression for the isometry (see \cref{theorem:GH}). To block-diagonalize the exact representation to which GH rounds, we computed an explicit circuit for the Fourier transform over our extended Pauli group and verified that this operation is efficient. This construction may be of independent interest.
\par
\medskip
At a technical level, our modification of \cite[Theorem 3.1 (Gowers--Hatami)]{Metger2024} yields a tighter characterization of the error between the pre- and post-isometry rounded operators. In other self-testing works (including \cite{Metger2024}), the quantity that is bounded is
    \[ \|  X  - V^\dagger (\sigma_X \ot \id)V \|_\psi,\]
which led to what they called the ``$VV^\dagger$ problem.'' In our analysis, we instead bound quantities of the form
    \[ \| VX - (\sigma_X \ot \id) V \|_\psi.\]
This seemingly small change makes it substantially easier to compose error bounds when studying how the isometry acts on products of operators. Interestingly, deriving these improved bounds required the explicit expression for $V$ that our modified dilation-based proof of GH provides.
\par
\paragraph{Robust state preparation.} We reproduce the result of \cite{Metger2021}, extended to a parallel self-test of multiple EPR pairs with constant robustness (\cref{protocol:cliff-test}, subtest 2 and \cref{lemma:epr-bell}). This improves on \cite{Fu2023}, where a similar result is shown without constant robustness. Natarajan and Zhang \cite{Natarajan2023} asserted that compiled nonlocal games---specifically, their compiled Pauli braiding test---would suffice to certify EPR pairs, and our calculations confirm this intuition. Moreover, \cref{theorem:rsp-guarantee} extends the results of \cite{Gheorghiu2019,Gheorghiu2022} by yielding the first random verifiable RSP beyond BB84 states, with constant robustness. We are not the first to  achieve constant robustness computational self-testing \cite{Natarajan2023,Metger2024}, but we extend previous results from operator to state self testing and from the Heisenberg--Weyl group to the extended Pauli group. We achieve random remote state preparation, because in obtaining constant robustness, we need to retain the expectation over the test distribution of the basis assignments. In \cite{Gheorghiu2022} closeness is shown for all basis choices separately, which is achieved by a hybrid computational indistinguishability argument. We could repeat this argument, but it will always introduce a dependence on the number of prepared qubits. Our weaker notion of random remote state preparation is sufficient for any downstream application where the winning probability is taken over random basis choices anyway, which is the case in the Broadbent protocol.
\par
\paragraph{The complex conjugation ambiguity.} One of the main technical obstructions when working with imaginary observables (such as the Pauli $Y$) is the ``complex-conjugate'' attack (also called phase ambiguity). The nonlocal correlations tested during protocol execution are invariant under complex conjugation, so a prover may implement either the canonical strategy or its complex conjugate, or, in the worst case, even a coherent superposition of the two. This observation led McKague and Mosca \cite{McKague2011} to propose a generalized self-testing framework called \textit{complex self-testing}. Previous works \cite{Zhang2022,Coladangelo2024} have confronted the same issue; they are either in a different setting (nonlocal), or the states they certify are unitarily equivalent to their complex conjugates, which makes the complex conjugate attack harmless. The work closest to our setting \cite{Gheorghiu2019} contains a bug in the proof of Lemma 3.5, which ignores the phase ambiguity. It is impossible to completely remove the phase ambiguity, but in this work we show that an efficient prover can't apply a coherent superposition of the two strategies, which implies that the prepared state is computationally indistinguishable from a classical mixture of the canonical state and its complex conjugate.
\par
\paragraph{Entanglement between the simulated provers.} The results mentioned above are strong state self-testing guarantees for the prover's unencrypted part. Unfortunately, we have very little control over the state shared between the encrypted and unencrypted parts of the prover. To obtain a full analog of nonlocal games, we would like to show that these parts carry entanglement between them. We make partial progress toward this goal: we show that the state held by the prover after the QFHE encrypted part of the interaction, when marginalized over the encrypted responses, is computationally indistinguishable from the maximally mixed state (\cref{lemma:state-characterization}). Our resolution for the phase ambiguity coherences used consistency between the encrypted and unencrypted parts of the prover; in the nonlocal setting, \cite{Coladangelo2024} get rid of these coherences by certifying that the provers share EPR pairs.

\subsection{Open questions}
\begin{enumerate}
    \item At present we have only limited `inter-prover' state self-testing guarantees (e.g.\ \cref{lemma:state-characterization}), which hold up to computational indistinguishability. Is there a cheating strategy that passes our compiled $n$-qubit test while carrying far less entropy than the canonical honest strategy of preparing $n$ EPR pairs---e.g.\ a form of computational ``pseudo-entanglement''? Such strategies would bear directly on the quantum soundness of the KLVY compiler, since they would show that optimal compiled strategies can be far from any nonlocal strategy and would thereby constrain rounding-based approaches to quantum soundness. Merkulov and Arnon \cite{Merkulov2025} prove entropy lower bounds for strategies with anti-commuting observables in the cryptographic setting; stronger results of this kind would clarify how far state self-testing can be pushed in the compiled model.
    \item In the nonlocal setting, \cite{Coladangelo2024} construct a constant round protocol for $\mathsf{QMA}$ verification using the entanglement-based version of \cite{Broadbent2018}. We believe a similar extension should be possible here. We have not yet carried it out as we faced obstructions in state certification, attempting a route through an entanglement-based formulation in the compiled setting.
    \item Within the context of $\mathsf{QMA}$ verification, witness preservation is also of considerable interest. This question has been studied intensively in recent work of Kalai, Khurana, and Raizes \cite{Kalai2026}. We hope that, because our protocol delegates a verification circuit gate-by-gate, it may be easier to obtain witness-preserving protocols by delegating the Marriott--Watrous amplified verification circuit.
    \item Ultimately, the ``pie in the sky'' goal, as proposed by Justin Raizes~\cite{Raizes2026}, is an interactive argument for $\mathsf{QMA}$ that is as good as, or better than, the direct proof system: it should use as few rounds as possible (ideally a constant number), the prover should need only one copy of the witness state, and completeness and soundness should be negligibly close to $1$ and $0$, respectively. One would also hope for succinctness. It remains far from clear how to achieve all of these properties simultaneously, but we hope the tools introduced in this work will prove useful.
    \item Among more immediate protocol improvements, one straightforward route to near-perfect completeness would be to extend our current CHSH test (\cref{protocol:cliff-test}, subtest 5) to a parallel version with multi-qubit observables, along the lines of \cite{Coladangelo2024}; we have not yet completed the corresponding calculation. We also believe that blindness can be added to our protocol by delegating a universal circuit, following the standard approach of \cite{Coladangelo2024}, at the cost of an additional logarithmic factor in resource overhead and a potential sequential execution of the rigidity test.
    \item For succinctness, one promising direction is the recently proposed black-box compiler of Bartusek, Liu, and Malavolta \cite{Bartusek2026}. Another is to apply the techniques of \cite{Metger2024}, based on the ideas of de la Salle \cite{DeLaSalle2025}, to sparsify the group self-test. An important first step to achieving succinctness is to reduce the round complexity of our verification protocol.
\end{enumerate}

\subsection{Paper outline}

The remainder of this paper is organized as follows.
\par
\medskip
\Cref{chapter:preliminaries} introduces the background, concepts, and helpful tools needed for the analysis. We fix notation, introduce the extended Pauli group (\cref{def:extended-pauli-group}), and define the state-dependent distance measures used throughout (\cref{def:state-dependent-norm}). We also state the Gowers--Hatami theorem (\cref{theorem:GH}) that underlies our self-testing arguments.
\par
\medskip
\Cref{chapter:rigidity} develops the core technical contribution. We specify the Clifford rigidity test (\cref{protocol:cliff-test}) and the constituent subtests, introduce compiled prover switching as a tool for analyzing protocols in the cryptographic setting, and prove soundness of each subtest. The section concludes with a state characterization theorem (\cref{theorem:rsp-guarantee}) that certifies remote preparation of the states required by the Broadbent protocol.
\par
\medskip
\Cref{chapter:verification} applies this rigidity result to obtain the headline result: an efficient CVQC protocol. The protocol specification is given in \cref{protocol:verification} and its completeness and soundness for the $\mathsf{BQP}$-complete promise problem $\textnormal{Q-CIRCUIT}$ (\cref{def:q-circuit}) are proven in the subsequent lemmas, culminating in \cref{theorem:bqp-verification}. A constant number of sequential repetitions then yields the standard completeness and soundness bounds of $2/3$ and $1/3$ (\cref{lemma:sequential-repetition}), and thus an argument system for every language in $\mathsf{BQP}$ (\cref{cor:bqp-argument}).
\par
\medskip
Finally, the appendix (\cref{appendix:supplementary-material}) provides supplementary material, including an explicit efficient circuit for the Fourier transform over the extended Pauli group, which is needed to implement the Gowers--Hatami isometry efficiently and may be of independent interest.

\subsection*{Acknowledgments and AI statement}
AN was supported by NSF CAREER grant 2339948. We thank Thomas Vidick and Andru Gheorghiu for helpful conversations, and Tina Zhang and Tony Metger for sharing an unpublished note on self-testing the Pauli $Y$ operator.
\par
\medskip
The authors acknowledge the use of ChatGPT 5.5 for assistance in the initial analysis of the irreducible representations of the extended Pauli group and Claude 4.8 for suggestions regarding the explicit circuit for the quantum Fourier transform over that group. 
\par
\medskip
Anthropic's Claude models provided the proof ideas for Lemmas \ref{lemma:povm-indistinguishability}, \ref{lemma:QPT-measurable}, and \ref{lemma:sequential-repetition}, and the statement and proof of Corollary~\ref{cor:efficient-isometry-state-switching}. Claude was also used for generating \Cref{fig:cliff-structure}, for reviewing the paper, finding mistakes in the proofs and for minor writing improvements. The authors take responsibility for all content.

\section{Preliminaries}\label{chapter:preliminaries}

\subsection{Notation}\label{section:notation}

Throughout this work we assume basic knowledge of quantum information theory and linear algebra, including familiarity with Dirac bra--ket notation. \cref{table:notation-conventions} collects the most frequently used symbols for quick reference; each entry is introduced and discussed in more detail in the following subsections.

\subsubsection*{Sets and bitstrings}
For a positive integer $n$, we let
\[
    [n]\coloneq \{1,2,\ldots, n\}.
\]
We write $\zo^n$ for the set of $n$-bit binary strings, whose coordinates we index by $[n]$. For $x\in\zo^n$, its \emph{Hamming weight} $|x|$ is the number of ones in $x$, i.e.\ $|x|=\sum_{i\in[n]} x_i$, and its \emph{element-wise complement} is $\overline{x}\in\zo^n$ with $\overline{x}_i = 1\oplus x_i$ for all $i\in[n]$. For $i\in[n]$ we write $e_i\in\zo^n$ for the string with a $1$ in position $i$ and zeros elsewhere, where the length $n$ should be clear from the context.
\par 
\medskip
From an index set $I\subseteq [n]$ and a bit-string $x\in\zo^n$, we can define the following two objects:
\begin{itemize}
    \item the \emph{restriction} $\restr{x}{I}\in\zo^{|I|}$ is the string obtained by keeping only the coordinates of $x$ indexed by $I$;
    \item the \emph{projection} $\exten{x}{I}\in\zo^n$ is the $n$-bit string that agrees with $x$ on $I$ and is zero elsewhere.
\end{itemize}
Equivalently, $\exten{x}{I}$ is the element-wise product of $x$ and the indicator string $1^n_I\in\zo^n$ (the string with a $1$ at every position in $I$ and a $0$ elsewhere). 

\subsubsection*{Hilbert spaces, states, and operators}
We work exclusively with finite-dimensional Hilbert spaces over $\C$. A generic Hilbert space is denoted $\H$; when a Hilbert space is associated with a physical system (or register) labeled $A$, we write $\H_A$. Pure quantum states are denoted by Dirac kets $\ket{\psi}\in\H$, and the corresponding density matrix is denoted by the bare symbol $\psi\coloneq\ketbra{\psi}{\psi}$. We will say states (on $\H$) to denote the set of normalized density matrices
\[
    \{\rho\in\Pos(\H) : \Tr(\rho) = 1\},
\]
where $\Pos(\H)$ denotes the positive semi-definite operators on $\H$ (see \cref{table:operator-set-conventions}).
\par
\medskip
For an operator $M$ on $\H$, we denote by $M^\dagger$ its \emph{Hermitian adjoint} and by $\overline{M}$ its element-wise \emph{complex conjugate} in the computational basis; the transpose in the computational basis is $M^T = \overline{M^\dagger}$. We write $\id$ for the identity operator on $\H$ (with a subscript indicating the Hilbert space whenever ambiguity may arise). The trace of an operator $\rho$ is denoted $\Tr{\rho}$. A more thorough discussion of operator and vector norms, including our default convention $\norm{\cdot}=\norm{\cdot}_\infty$, is deferred to \cref{section:distance-measures}.
\par
\medskip   
Our conventions for the most frequently used subsets of the bounded operators on $\H$ are summarized in \cref{table:operator-set-conventions}.

\begin{table}[H]
\renewcommand{\arraystretch}{1.25}
\centering
\begin{tabular}{ll}
    \toprule
    \textbf{Symbol} & \textbf{Definition} \\
    \midrule
    $L(\H)$ & Linear operators $A:\H\to\H$ (endomorphisms of $\H$)\\
    $GL(\H)$ & Invertible linear operators $A:\H\to\H$ (automorphisms of $\H$)\\
    $U(\H)$ & Unitary operators $A:\H\to\H$\\
    $\Pos(\H)$ & Positive semi-definite operators $A:\H\to\H$\\
    $\Herm(\H)$ & Hermitian (self-adjoint) operators $A:\H\to\H$\\
    $\Obs(\H)$ & Binary observables $A:\H\to\H$\\
    $\Cliff(\H)$ & Clifford unitaries $A:\H\to\H$\\
    \bottomrule
\end{tabular}
\caption{Overview of operator sets on a finite-dimensional Hilbert space $\H$.}\label{table:operator-set-conventions}
\end{table}

The following inclusions hold:
\[
    \Pos(\H)\subsetneq\Herm(\H),
    \qquad
    U(\H)\subsetneq GL(\H)\subsetneq L(\H),
    \qquad
    \Cliff(\H) \subsetneq U(\H).
\]
For $A,B\in\Pos(\H)$ we write $A\preceq B$ when $B-A\in\Pos(\H)$, the PSD order. 
\par
\medskip
Unless stated otherwise, all observables are \textit{binary}; for brevity, ``observable'' will henceforth mean binary observable.

\subsubsection*{Fourier transform}
For a function $f:\zo^n\to\C$, we denote its Fourier (or Walsh--Hadamard) transform over $\Z_2^n$ by
\[
    g(x)=\widehat f(x)\coloneq\E_{a\in\zo^n} (-1)^{x\cdot a}f(a),
\]
where the expectation is taken uniformly over $a\in\zo^n$. The inverse transform is
\[
    f(a) = \sum_{x\in\zo^n} (-1)^{a\cdot x}g(x).
\]

\subsubsection*{Measurements and observable families}
Projectors corresponding to single-symbol questions in a protocol specification are denoted by the same symbol as the question, with a superscript denoting the $n$-bit measurement outcome (where $n$ is the expected number of answer bits on that question). For example, if the verifier sends question $Q$, expecting $n$ answer bits, we denote the PVM applied during the un-encrypted interaction with the prover as $\{Q^v\}_{v\in\zo^n}$. If the question is not a single symbol but a tuple, for example $(Q,a)$, we denote the corresponding two-outcome projectors by $\{Q_a^0, Q_a^1\}$. For other tailored questions we don't have a default convention; the corresponding projectors are always introduced explicitly.
\par 
\medskip
To every such projective measurement we associate a family of observables parameterized by $a\in\zo^n$:
\[
    W(a)\coloneq \sum_{x\in\zo^n} (-1)^{a\cdot x}W^x.
\]
If the PVM has only two outcomes, we use the shorthand $W = W(1) = W^0-W^1$; thus two-outcome measurements share the symbol with their corresponding single-symbol questions. The outcome projectors can be interpreted as the Fourier transform of the observable $W(a)$, i.e.\ $W^x = \widehat W(x) = \E_{a\in\zo^n} (-1)^{x\cdot a}W(a)$.

\subsubsection*{Expectations and probability distributions}
We denote the expectation value by $\E$. If the subscript is a variable together with an indication of the set it belongs to (e.g.\ $a\in S$), the expectation is the \emph{uniform} one over that set. If the set is omitted, the distribution should be clear from context. For non-uniform expectations, we are often explicit by writing both the variable and the distribution in the subscript, for example $\E_{W\sim\mu}$. We write $U_n$ for the uniform probability distribution over a set of $n$ elements, i.e.\ $U_n(a)=1/n$. We write $x\draw \cS$ when $x$ is uniformly sampled from a set $\cS$. The arrow may carry a distribution above it, for example $x\draw{\mu}\cS$, in which case $x$ is sampled according to the distribution $\mu$ from the set $\cS$. For a predicate $P$, we write $\ind{P}$ for the indicator taking value $1$ if $P$ holds and $0$ otherwise.

\newpage
\null\vfill
\begin{table}[H]
\renewcommand{\arraystretch}{1.25}
\centering
\begin{tabular}{ll}
    \toprule
    \textbf{Symbol} & \textbf{Definition} \\
    \midrule
    \multicolumn{2}{l}{\emph{Sets and bitstrings}}\\
    $\log(x)$ & The binary logarithm of $x$ (logarithm base 2) \\
    $[n]$ & The set of integers $\{1,2,\ldots,n\}$\\
    $\zo^n$ & The set of $n$-bit binary strings\\
    $e_i$ & The $n$-bit string with a $1$ at position $i\in[n]$ and zeros elsewhere\\
    $|x|$ & The Hamming weight of the binary string $x\in\zo^n$\\
    $\overline{x}$ & The element-wise complement of the binary string $x\in\zo^n$\\
    $\restr{x}{I}$ & The restriction of $x\in\zo^n$ to the coordinates indexed by $I\subseteq[n]$\\
    $\exten{x}{I}$ & The projection of $x\in\zo^n$ onto the coordinates indexed by $I\subseteq[n]$\\
    \midrule
    \multicolumn{2}{l}{\emph{Hilbert spaces, states, and operators}}\\
    $\H$, $\H_A$ & A (finite-dimensional, complex) Hilbert space; subscript denotes the system\\
    $\ket{\psi}$, $\psi$ & A pure state and its density matrix $\psi=\ketbra{\psi}{\psi}$\\
    $\id$ & The identity operator on a Hilbert space $\H$\\
    $M^\dagger$ & The Hermitian adjoint of an operator $M$\\
    $\overline{M}$ & The complex conjugate of an operator $M$ (in the computational basis)\\
    $\Tr$ & The trace of an operator\\
    $\norm{M}$ & The operator norm of $M$, equal to the largest singular value of $M$\\
    \midrule
    \multicolumn{2}{l}{\emph{Fourier transform, expectations, and distributions}}\\
    $\widehat f(a)$ & The Fourier (Walsh--Hadamard) transform of $f$ over $\Z^n_2$\\
    $U_n$ & The uniform distribution over a set of $n$ elements, $U_n(a)=1/n$\\
    $x\draw \cS$ & $x$ is uniformly sampled from the set $\cS$\\
    $\ind{P}$ & Indicator: $1$ if predicate $P$ holds, $0$ otherwise\\
    \midrule
    \multicolumn{2}{l}{\emph{Extended Pauli group (three-tier notation; \cref{def:extended-pauli-group})}}\\
    $x(a),z(a),g(a)$ & Abstract generators of $C_n$; also $y(a)$, $\mathsf{f}(a)$ (derived)\\
    $x_j$ & Single-qubit generator $x(e_j)$ (likewise $y_j$, $z_j$, $g_j$)\\
    $\sigma_W(a)$ & Pauli/Clifford \emph{matrix} $\bigotimes_i \sigma_W^{a_i}$\\
    $W(a)$ & Prover observable for the question labelled $W$\\
    \bottomrule
\end{tabular}
\caption{Overview of notational conventions used throughout this paper.}\label{table:notation-conventions}
\end{table}
\vfill\null
\subsection{Groups}\label{sec:groups}
\subsubsection*{Pauli group} 
\label{sec:pauli-group}

The Pauli matrices are defined as,
\begin{align*}
    \sigma_X \coloneqq \begin{pmatrix} 0 & 1 \\ 1 & 0 \end{pmatrix}, \quad
    \sigma_Y \coloneqq \begin{pmatrix} 0 & -i \\ i & 0 \end{pmatrix}, \quad
    \sigma_Z \coloneqq \begin{pmatrix} 1 & 0 \\ 0 & -1 \end{pmatrix}.
\end{align*}
Additionally, we define
\begin{align*}
    \sigma_G \coloneqq \frac{1}{\sqrt{2}}(\sigma_Y+\sigma_X) =  \frac{1}{\sqrt{2}}\begin{pmatrix} 0 & 1-i \\ 1+i & 0 \end{pmatrix},\quad
    \sigma_F \coloneqq \frac{1}{\sqrt{2}}(\sigma_Y-\sigma_X) =  \frac{1}{\sqrt{2}}\begin{pmatrix} 0 & -i-1 \\ i-1 & 0 \end{pmatrix},
\end{align*}
with $\overline{\sigma}_G = -\sigma_F$, as well as
\begin{equation*}
    \sigma_S \coloneqq \begin{pmatrix} 1 & 0 \\ 0 & i \end{pmatrix}, \quad \sigma_T \coloneqq \begin{pmatrix} 1 & 0 \\ 0 & e^{i\pi/4} \end{pmatrix}.
\end{equation*}
We write $\sigma_X, \sigma_Z,\dots$ rather than $X, Z, \dots$ in order to distinguish the ``true'' matrices from the untrusted operators that the prover applies; in the case of $\sigma_S$ and $\sigma_T$, we will never need to make this distinction in our analysis so we will often use $S$ and $T$ as shorthand for these. We also note that some references, in particular~\cite{Broadbent2018}, use $P$ instead of $S$.
\par
\medskip
For $a \in \mathbb{Z}_{2}^n$ and $W\in\{X,Y,Z,F,G\}$, let
\begin{align*}
    \sigma_W(a) \coloneqq \bigotimes_{i\in[n]} (\sigma_W)^{a_i},
\end{align*}
in other words, $\sigma_W(a)$ is the $2^n$-dimensional matrix that is the tensor product of $\sigma_W$ on all the qubits where $a_i = 1$ and identity elsewhere.
Furthermore, for $\tilde W\in\{X,Y,Z,F,G\}^n$, let
\[
\sigma_{\tilde W}(a) \coloneq\bigotimes_{i\in[n]}(\sigma_{\tilde W_i})^{a_i},
\]
which is the mixed-basis extension of the above. We denote the corresponding projector as 
\[
\tau_{\tilde W}^{a} \deq \bigotimes_{i\in[n]}\tau_{\tilde W_i}^{a_i},
\]
where $\tau_{\tilde W_i}$ is the single-qubit projector defined as
\[
\tau_{\tilde W_i}^{a_i} \deq \frac{1}{2}(\id + (-1)^{a_i}\sigma_{\tilde W_i}).
\]

\subsubsection*{Extended Pauli group}\label{subsec:group}

\begin{definition}[Extended Pauli group]\label{def:extended-pauli-group}
    We use $C_n$ to denote the $n$-qubit extended Pauli group, which is a subgroup of the $n$-qubit Clifford group. It is generated by a central phase \(\omega\) and single-qubit elements \(x_j,z_j,g_j\) for \(j\in[n]\). Writing \(x(a)\coloneqq\prod_{j:a_j=1}x_j\) and likewise \(z(a)\), \(g(a)\), these satisfy the following relations for all \(a,b\in\zo^n\):
    \begin{itemize}
        \item \textit{Orders:} \(\omega^4=x(a)^2=z(a)^2=g(a)^2=\id\).
        \item \textit{Linearity:} \(w(a\oplus b)=w(a)w(b)\) for \(w\in\{x,z,g\}\).
        \item \textit{Centrality:} \(w(a)\omega=\omega w(a)\) for \(w\in\{x,z,g\}\).
        \item \textit{(Anti)commutation:} \(w(a)z(b)=\omega^{2a\cdot b}z(b)w(a)\) for \(w\in\{x,g\}\).
        \item \textit{Conjugation:} \(g(a)x(b)g(a)=\omega^{|a\cap b|}x(b)z(a\cap b)\).
    \end{itemize}
    An element of $C_n$ can be uniquely written in the normal form\footnote{This is a bijection on $C_n$: the set of normal-form words is closed under right multiplication by the generators (surjectivity), and the defining representation of \cref{lemma:clifford-representations} is injective.}: $\omega^{p} x(a) g(b) z(c)$ for $p\in\{0,1,2,3\}$ and $a,b,c \in \zo^n$. This implies that the group has $2^{3n+2}$ elements.
\end{definition}

When a label \(W\in\{X,Y,Z,F,G\}\) is used as a variable, we write \(w(a)\) for the corresponding abstract element of \(C_n\). This includes the generators \(x(a)\), \(z(a)\), \(g(a)\) from the definition, and the derived elements \(y(a)\) and \(\mathsf{f}(a)\), specified by the unique words
\[
y(a)=\omega^{|a|}x(a)z(a),\qquad
\mathsf{f}(a)=\omega^{|a|}g(a)z(a).
\]
Single-qubit generators are the special case \(x_j=x(e_j)\), and likewise \(y_j\), \(z_j\), \(\mathsf{f}_j\), \(g_j\).

\begin{lemma}\label{lemma:clifford-representations}
    The irreducible representations $\rho_\mu: C_n\to U(\C^{d_\mu})$ of the $n$-qubit extended Pauli group can be characterized by their action on the center of the group, i.e.\ $\omega\mapsto i^\ell\id$ for $\ell\in\{0,1,2,3\}$. We can use this to distinguish two classes of irreps:
    \begin{itemize}
        \item \textbf{The `classical' irreps} ($\ell\in\{0,2\}$): $2^k$-dimensional representations (for $0\leq k\leq n$), which map $\omega\mapsto(-1)^{\ell/2}\id$. They can retain anti-commutation of $x$ and $g$ (if $k\neq 0$) but always collapse to commuting $x$ and $z$ (and $g$ and $z$), since they map $\omega^2\mapsto\id$. They are given by choosing $K\subset[n]$, with $|K|=k$, and assigning: 
        \begin{align*}
            \forall j\in K, \quad &x_j \mapsto \sigma_X(e_j), \\
            &g_j \mapsto \sigma_Z(e_j),
        \end{align*} 
        followed by choosing an assignment of either 1 or -1 for each of the generators 
        \[
        x_j, g_j
        \]
        with $j\not\in K$ and using that $y_j=g_j x_j g_j$ and $z_j=\omega y_j x_j$.
        \item \textbf{The `quantum' irreps} ($\ell\in\{1,3\}$): $2^n$-dimensional representations, which map $\omega\mapsto i^\ell\id$, where
        \begin{align*}
            \forall j\in[n], \quad &x_j \mapsto \sigma_X(e_j), \\
            &g_j \mapsto \pm\frac{1}{\sqrt{2}}(\sigma_X(e_j)+i^{(\ell-1)}\sigma_Y(e_j)),\\
            &z_j \mapsto \sigma_Z(e_j);
        \end{align*}
            There is an additional freedom of sign choice on each $g_j$, which means that there are $2^n$ such irreducible representations for a fixed $\ell$. The all-plus, \(\ell=1\) member of this family is the \emph{defining} representation, sending \(x(a)\mapsto\sigma_X(a)\), \(z(a)\mapsto\sigma_Z(a)\), \(g(a)\mapsto\sigma_G(a)\), and likewise \(y(a)\mapsto\sigma_Y(a)\), \(\mathsf{f}(a)\mapsto\sigma_F(a)\). This representation is injective, so the normal form of \cref{def:extended-pauli-group} labels each group element uniquely.
    \end{itemize}
\end{lemma}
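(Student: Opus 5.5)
The plan is the standard classify-by-counting argument: exhibit the listed representations, check that they are well defined, irreducible and pairwise inequivalent, and then show that the squares of their dimensions sum to the group order. The subtle point is that the statement also asserts that the defining representation is injective, which is what makes the order equal to $2^{3n+2}$. So I would not take $|C_n|=2^{3n+2}$ as given. I will use only the upper bound $|C_n|\le 2^{3n+2}$, which comes from closure of the normal-form words $\omega^p x(a)g(b)z(c)$ under right multiplication by the generators (the first half of the footnote to \cref{def:extended-pauli-group}). Because $\omega$ is central, Schur's lemma forces $\omega\mapsto i^\ell\id$ in every irrep with $\ell\in\{0,1,2,3\}$, which justifies the split into the $\ell$-classes.

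\textbf{Step 1: the maps are representations.} Generators on different qubits commute and are sent to tensor factors on different qubits, so it suffices to check the single-qubit relations of \cref{def:extended-pauli-group} (orders, centrality, (anti)commutation, conjugation) for each qubit.
\begin{itemize}
\item Quantum irreps: check the relations for $x\mapsto\sigma_X$, $z\mapsto\sigma_Z$, $g\mapsto\pm\frac{1}{\sqrt2}(\sigma_X+i^{\ell-1}\sigma_Y)$ and $\omega\mapsto i^\ell$. These are $2\times2$ computations; for example, for $\ell=1$ one has $\sigma_G\sigma_X\sigma_G=\sigma_Y=i\sigma_X\sigma_Z$. The case $\ell=3$ is the complex conjugate of $\ell=1$ (indeed $\overline{\sigma}_G=-\sigma_F$), so the same computation applies.
\item Classical irreps with $j\in K$: set $x\mapsto\sigma_X$, $g\mapsto\sigma_Z$ and derive $z_j=\omega\,g_jx_jg_jx_j=-\omega$. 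This is a central scalar, so it commutes with $x$ and $g$ as required when $\omega^2=\id$.
\item Classical irreps with $j\notin K$: $x_j,g_j$ are arbitrary signs, and then $z_j=\omega\cdot 1$ is again a scalar.
\end{itemize}

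\textbf{Step 2: irreducibility and inequivalence.}
\begin{itemize}
\item Irreducibility: in each case the images of the generators contain, on every qubit $j$ in $[n]$ (quantum case) or in $K$ (classical case), two anticommuting Paulis. These generate $M_2(\C)$ on that qubit, so the image generates the full matrix algebra and the representation is irreducible.
\item Inequivalence: compare characters.
  \begin{itemize}
  \item The value on $\omega$ determines $\ell$.
  \item Within the classical class, the trace of $x_j$ (resp.\ $g_j$) equals $\pm1$ times the dimension exactly when $j\notin K$, and is $0$ when $j\in K$. This recovers both $K$ and the signs.
  \item Within the quantum class, $\Tr[\rho(g_jx_j)]=\pm\frac{1}{\sqrt2}\,2^n\neq0$, with the sign of $g_j$ visible. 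This separates the $2^n$ sign patterns.
  \end{itemize}
\end{itemize}

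\textbf{Step 3: counting, and the main obstacle.} For fixed $\ell\in\{0,2\}$, the classical irreps contribute
\[
\sum_k\binom nk 4^{n-k}\,4^k=8^n .
\]
For fixed $\ell\in\{1,3\}$, the quantum irreps contribute $2^n\cdot4^n=8^n$. The total is $4\cdot 8^n=2^{3n+2}$. For pairwise inequivalent irreps, $\sum d_\mu^2\le|C_n|$, and combined with the normal-form bound $|C_n|\le 2^{3n+2}$ this forces equality. Hence $|C_n|=2^{3n+2}$, the normal form is unique, the list is complete, and the defining representation is injective. (Injectivity follows because distinct normal words already give distinct matrices $i^p\sigma_X(a)\sigma_G(b)\sigma_Z(c)$, by a Pauli-basis expansion argument.) I expect the main obstacle to be this bookkeeping: establishing the upper bound on $|C_n|$ cleanly and checking inequivalence within the sign families. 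The relation checks themselves are routine.
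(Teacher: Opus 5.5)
Your proposal is correct and follows the paper's overall plan. Like the paper, you check that the listed maps are representations, that they are pairwise inequivalent irreducibles, and that $\sum_\mu d_\mu^2 = 2^{3n+2} = |C_n|$ forces completeness. Where you differ is in how you pin down the group order.

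The paper takes $|C_n|=2^{3n+2}$ from the normal form of \cref{def:extended-pauli-group} together with injectivity of the defining representation. It proves injectivity first, by noting that a local factor $\sigma_X^{a_j}\sigma_G^{b_j}\sigma_Z^{c_j}$ is scalar only when $a_j=b_j=c_j=0$.

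You instead use a squeeze:
\begin{itemize}
\item closure of the normal-form words gives $|C_n|\le 2^{3n+2}$;
\item the exhibited pairwise inequivalent irreps give $|C_n|\ge\sum_\mu d_\mu^2=2^{3n+2}$.
\end{itemize}
So the order, uniqueness of the normal form and completeness of the list all follow from the count alone. Injectivity then becomes a separate step, and your observation that the eight single-qubit matrices $\sigma_X^a\sigma_G^b\sigma_Z^c$ are pairwise non-proportional handles it.

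The paper's route is shorter and ties the order to a concrete faithful representation. Yours needs no faithfulness input for the counting. It also proves two things the paper only asserts: irreducibility (the images generate $M_2(\C)$ on each active qubit) and inequivalence (via the characters of $\omega$, $x_j$, $g_j$ and $g_jx_j$).
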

\begin{proof}
    It can be verified by direct calculation that each one of these is a representation. It is also clear that all of them are mutually non-isomorphic. The defining representation is injective: a local factor $\sigma_X^{a_j}\sigma_G^{b_j}\sigma_Z^{c_j}$ is scalar if and only if $a_j=b_j=c_j=0$, so $\rho(\omega^p x(a)g(b)z(c))=\id$ forces $p=a=b=c=0$.
    \par
    \medskip
    Recall that the sum of the squares of the degrees of the irreps of a group is equal to the order of the group. The order of $C_n$ is $2^{3n+2}$. The number of $2^k$-dimensional `classical' representations is $N(2^k)=2\cdot 4^{n-k}\cdot\binom{n}{k}$. Summing over its product with the squared dimension yields:
    \[
    \sum_{k=0}^n N(2^k)\cdot (2^k)^2= 2\sum_{k=0}^{n} \binom{n}{k} 4^{n} = 2^{3n+1}
    \]
    the $2^n$-dimensional quantum representations ($\ell = 1$), and their complex conjugates ($\ell = 3$), also contribute $2^{3n+1}$, for a total of $2^{3n+1} + 2^{3n+1} = 2^{3n+2}$. Hence these are all of the irreps of $C_n$.
\end{proof}

\subsection{Quantum Fourier transform}\label{subsec:QFT}
We define the quantum Fourier transform (QFT) over our group as 
\[
U_{\mathrm{QFT}}\coloneq \sum_{g\in C_n}\sum_{\mu}\sum_{i,j\in[d_\mu]}\sqrt{\frac{d_\mu}{|C_n|}}\left[\rho_{\mu}(g)\right]_{i,j}\ket{\mu}\ket{i,j}_{\mu}\bra{g},
\]
where $\mu$ is an index over all irreducible representations and $\rho_\mu$ is a specific irrep. indexed by $\mu$, with dimension $d_\mu$. The subscript $\mu$ on the vector $\ket{i,j}_\mu$ indicates that it lives in a $d_\mu^2$-dimensional space. We will explicitly show that $U_{\mathrm{QFT}}$ is unitary
\begin{align*}
    U_{\mathrm{QFT}}^\dag U_{\mathrm{QFT}}&=\sum_{g,g'}\sum_{\mu}\sum_{i,j\in[d_\mu]}\frac{d_\mu}{|C_n|}\overline{\left[\rho_{\mu}(g')\right]}_{i,j}\left[\rho_{\mu}(g)\right]_{i,j}\ket{g'}\bra{g}\\
    &=\frac{1}{|C_n|}\sum_{g,g'}\sum_{\mu}\sum_{i,j\in[d_\mu]}d_\mu\left[\rho_{\mu}(g')^\dag\right]_{j,i}\left[\rho_{\mu}(g)\right]_{i,j}\ket{g'}\bra{g}\\
    &=\frac{1}{|C_n|}\sum_{g,g'}\sum_{\mu}d_\mu\Tr{\rho_{\mu}(g')^\dag\rho_{\mu}(g)}\ket{g'}\bra{g}\\
    &=\frac{1}{|C_n|}\sum_{g,g'}\sum_{\mu}d_\mu\chi_\mu((g')^{-1}g)\ket{g'}\bra{g}\\
    &=\sum_{g,g'}\delta_{gg'}\ket{g'}\bra{g}\\
    &=\id,
\end{align*}
where the first term uses orthogonality of the $\mu$ and potential $i,j$ registers, the fourth line uses the exact group homomorphism property and the second-to-last line uses \cite[Proposition 1 and Corollary 2]{Serre1977}, where $\chi_\mu$ is the character of the $\mu$-th irrep defined as 
\[
\chi_\mu(g)=\Tr{\rho_\mu(g)}.
\]

\par
\medskip

Let $\H_{C_n}$ be a $|C_n|$-dimensional Hilbert space, whose basis vectors correspond to group elements. The left regular representation $\pi:C_n\to U(\H_{C_n})$ is then defined as
\[
\pi(g)=\sum_{h\in C_n}\ket{gh}\bra{h}.
\]
We will now show that the quantum Fourier transform over $C_n$ block-diagonalizes the left regular representation
\begin{align*}
    U_{\mathrm{QFT}}&(\pi(g))U_{\mathrm{QFT}}^\dag\\
    &=\sum_{\mu,\mu'}\sum_{i,j,k\in[d_\mu]}\sum_{l,m\in[d_{\mu'}]}\frac{\sqrt{d_\mu d_{\mu'}}}{|C_n|}\left[\rho_\mu(g)\right]_{i,k}\overbrace{\sum_{g'}\left[\rho_\mu(g')\right]_{k,j}\overline{\left[\rho_{\mu'}(g')\right]}_{l,m}}^{\text{orthogonality of irreps}}\ket{\mu}\ket{i,j}_\mu\bra{\mu'}\bra{l,m}_{\mu'}\\
    &=\sum_{\mu,\mu'}\sum_{i,j,k\in[d_\mu]}\sum_{l,m\in[d_{\mu'}]}\frac{\sqrt{d_\mu d_{\mu'}}}{|C_n|}\left[\rho_\mu(g)\right]_{i,k}\frac{|C_n|}{d_\mu}\delta_{\mu\mu'}\delta_{kl}\delta_{jm}\ket{\mu}\ket{i,j}_{\mu}\bra{\mu'}\bra{l,m}_{\mu'}\\
    &=\sum_{\mu}\sum_{i,j,k\in[d_\mu]}\left[\rho_\mu(g)\right]_{i,k}\ket{\mu}\ket{i,j}_\mu\bra{\mu}\bra{k,j}_\mu\\
    &=\sum_{\mu}\ket{\mu}\bra{\mu}\sum_{i,k\in[d_\mu]}\left[\rho_\mu(g)\right]_{i,k}\ket{i}\bra{k}_\mu\ot\id_{d_\mu}\\
    &=\bigoplus_\mu \rho_\mu(g)\ot\id_{d_\mu},
\end{align*}
where we used the orthogonality of irreducible representations \cite[Corollary 3]{Serre1977}.

\subsubsection*{Explicit QFT circuit}
To obtain an explicit circuit for the quantum Fourier transform over our group, we need to find an explicit expression for $\left[\rho_{\mu}(g)\right]_{i,j}$. Ideally, this would be a simple index-based expression, matching the specific irreps chosen in \cref{lemma:clifford-representations}. We start by writing out this kind of compact expression for the `classical' irreps ($\ell\in\{0,2\}$):
\begin{align*}
    \left[\rho_{C,\mu}^{(\ell)}(g)\right]_{r,s}&=\left[\rho_{C,\mu}^{(\ell)}(\omega^p x(a)g(b)z(c))\right]_{r,s}\\
    &=i^{p\ell}\underbrace{\left(\prod_{j\in S(\mu\oplus 1^n)}(-1)^{a_jr_j+b_js_j+c_j\ell/2}\right)}_{\text{1D contributions}}\underbrace{\left(\prod_{j\in S(\mu)}(-1)^{b_js_j+(\ell/2+1)c_j}\delta_{r_j,s_j\oplus a_j}\right)}_{\text{2D contributions}}.
\end{align*}
Here $S(\mu)\coloneq\{i\in[n]:\mu_i=1\}$. This indeed yields the expected structure, for example if $S(\mu)=[n]$ and $g=x(1^n)$ we obtain
\[
\left[\rho_{C,K}^{(\ell)}(x(1^n))\right]_{r,s} = \prod_{j\in[n]}\delta_{r_j,s_j\oplus 1} = \delta_{r,s\oplus 1^n} = \sigma_X(1^n),
\]
which is the defining matrix of this irrep on that group element. We can do the same thing for the `quantum' irreps ($\ell\in\{1,3\}$). Let $\omega(s)\coloneq e^{i\frac{\pi}{4}(1-2s)}$, then for $r,s\in\zo$ and $\ell=1$ we have the matrix entries of \(\sigma_G\)
\[
[\sigma_G]_{r,s} = \omega(s)\delta_{r,s\oplus 1},
\]
if $\ell=3$, this becomes
\[
[\sigma_G]_{r,s} = \overline{\omega}(s)\delta_{r,s\oplus 1}.
\]
This observation lifted to the $n$-qubit case (where $r,s\in\zo^n$) allows us to also write down a compact expression for the quantum irreps
\begin{align*}
    \left[\rho_{Q,\mu}^{(\ell)}(g)\right]_{r,s} &= \left[\rho_{Q,\mu}^{(\ell)}(\omega^p x(a)g(b)z(c))\right]_{r,s}\\
    &=i^{p\ell}\prod_{j\in[n]}(-1)^{b_j\mu_j+c_js_j}\omega_\ell(s_j)^{b_j}\delta_{r_j,s_j\oplus a_j\oplus b_j},
\end{align*}
where $\omega_\ell(s)\coloneq \mathbf{1}\{\ell = 1\}\omega(s)+\mathbf{1}\{\ell=3\}\overline{\omega}(s)$. We can combine both classes into a single indexed set of irreps
\[
\left[\rho_{\mu}^{(\ell)}(g)\right]_{r,s}=\mathbf{1}\{\ell\equiv 0 \pmod 2\}\left[\rho_{C,\mu}^{(\ell)}(g)\right]_{r,s} + \mathbf{1}\{\ell\equiv 1 \pmod 2\} \left[\rho_{Q,\mu}^{(\ell)}(g)\right]_{r,s}.
\]
It remains to design an efficient quantum circuit that implements the following operation
\[
\ket{g} = \ket{p_1,p_0,a,b,c}\mapsto \sum_{\ell_0,\ell_1\in\zo}\sum_{\mu,r,s\in\zo^n} \sqrt{\frac{d_\mu}{|C_n|}} \left[\rho_{\mu}^{(\ell_0+2\ell_1)}(g)\right]_{r,s}\ket{\ell_0,\ell_1,\mu,r,s},
\]
where we represented a group element in bit notation where $p_0,p_1\in\zo$ and $a,b,c\in\zo^n$, by the observation that $\omega^p x(a)g(b)z(c)$ uniquely encodes a group element. An efficient circuit for the quantum Fourier transform over $C_n$ is shown in \cref{fig:circuit}, where any control or gate acting on a multi-qubit wire is interpreted as parallel operations (as visualized in \cref{fig:gate-convention}). A correctness proof for this specific circuit can be found in the appendix (\cref{appendix:supplementary-material}).
\begin{figure}
    \centering
    \includegraphics[width=0.40\linewidth]{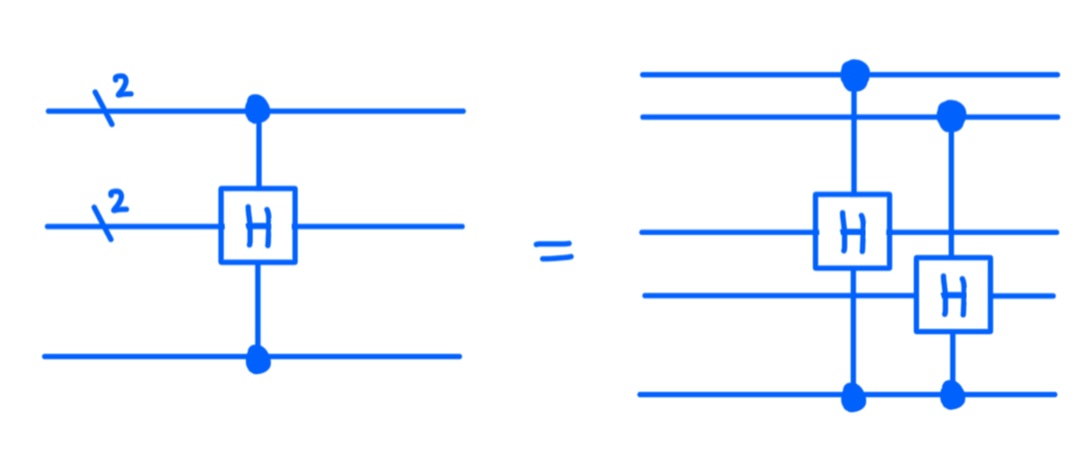}
    \caption{Parallel gate notational convention.}
    \label{fig:gate-convention}
\end{figure}

\begin{figure}
    \centering
    \includegraphics[width=0.95\linewidth]{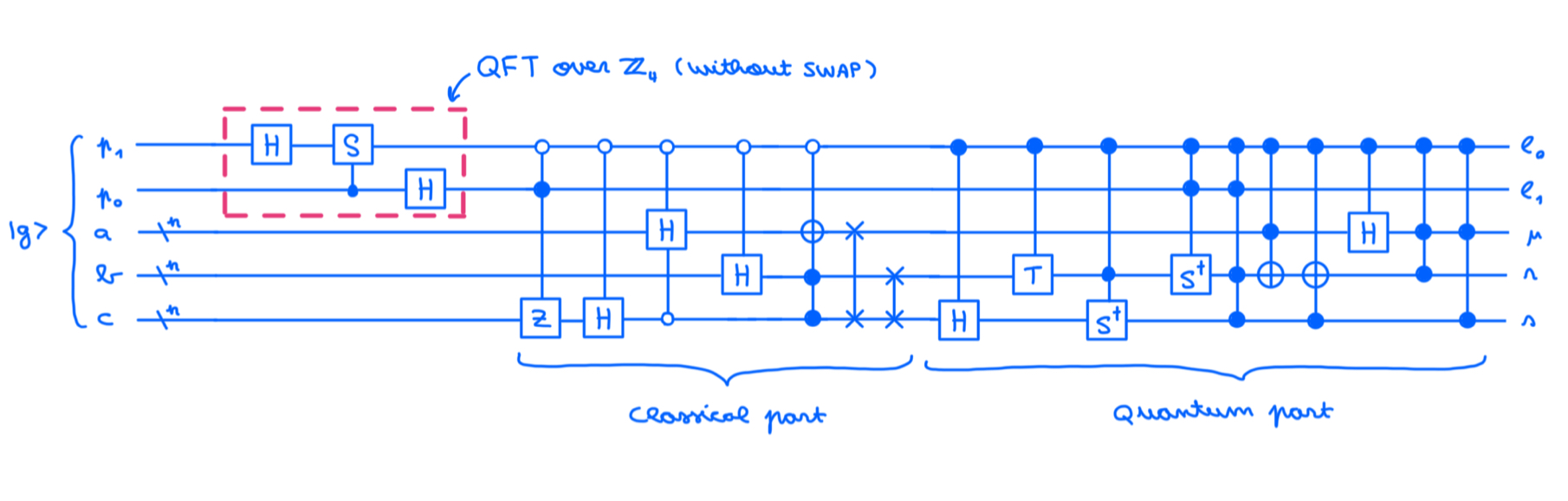}
    \caption{Explicit circuit implementation of $U_{\mathrm{QFT}}$ for $C_n$ in chosen basis. Note that we are using a slightly ambiguous notational convention for parallel gates.}
    \label{fig:circuit}
\end{figure}

\subsection{Bell states}\label{subsec:bell-states}
The four Bell states form an orthonormal basis for $\mathbb{C}^2 \otimes \mathbb{C}^2$
consisting of maximally entangled states.  They are the simultaneous eigenstates of the observables $\sigma_X \otimes \sigma_X$ and $\sigma_Z \otimes \sigma_Z$.
Explicitly, for $a, b \in \{0,1\}$ we define
\begin{equation}\label{eq:bell-def}
  \ket{\Phi^{ab}}
  \;=\;
  \frac{1}{\sqrt{2}}\sum_{s \in \{0,1\}} (-1)^{a\cdot s} \ket{s, s \oplus b},
\end{equation}
so that
\begin{equation*}
  (\sigma_X \otimes \sigma_X)\ket{\Phi^{ab}} = (-1)^{a}\ket{\Phi^{ab}},
  \qquad
  (\sigma_Z \otimes \sigma_Z)\ket{\Phi^{ab}} = (-1)^{b}\ket{\Phi^{ab}}.
\end{equation*}
In the traditional notation this gives
\begin{equation*}
  \ket{\Phi^{00}} = \ket{\Phi^+},\quad
  \ket{\Phi^{10}} = \ket{\Phi^-},\quad
  \ket{\Phi^{01}} = \ket{\Psi^+},\quad
  \ket{\Phi^{11}} = \ket{\Psi^-}.
\end{equation*}

The Bell basis is related to the EPR state
$\ket{\Phi^{00}} = \frac{1}{\sqrt{2}}(\ket{00}+\ket{11})$ by local Pauli
corrections:
\begin{equation*}
  \ket{\Phi^{ab}} \;=\; (\sigma_Z^a \sigma_X^b \otimes \id)\,\ket{\Phi^{00}}.
\end{equation*}
Equivalently, acting on the second register,
$\ket{\Phi^{ab}} = (\id \otimes \sigma_X^b \sigma_Z^a)\,\ket{\Phi^{00}}$.
More generally, for any operator $M$ acting on $\mathbb{C}^2$ the EPR state has the following transpose property
\begin{equation}\label{eqn:transpose-trick}
  (M \otimes \id)\ket{\Phi^{00}} \;=\; (\id \otimes M^T)\ket{\Phi^{00}},
\end{equation}
where $M^T$ denotes the transpose in the computational basis.

Because the Bell states form an ONB, they also resolve the identity
\begin{equation*}
  \sum_{a,b \in \{0,1\}} \ketbra{\Phi^{ab}}{\Phi^{ab}} \;=\; \id_4,
\end{equation*}
and satisfy the twirl identity: for any operator $\rho$ on $\mathbb{C}^2 \otimes \mathbb{C}^2$,
\begin{equation}\label{eq:bell-twirl}
  \frac{1}{4}\sum_{a,b}(\sigma_Z^a \sigma_X^b \otimes \id)\,\rho\,(\sigma_X^b \sigma_Z^a \otimes \id)
  \;=\;
  \sum_{a,b} \bra{\Phi^{ab}}\rho\ket{\Phi^{ab}}\,\ketbra{\Phi^{ab}}{\Phi^{ab}}.
\end{equation}
That is, twirling by the single-qubit Pauli group dephases any two-qubit state
into the Bell basis.

\subsection{Distance measures}\label{section:distance-measures}
The Schatten-$p$ norm of $A\in L(\H)$ is defined as
\[
\|A\|_p \coloneq \Tr{(A^\dag A)^{p/2}}^{1/p}.
\]
The Schatten norms satisfy H\"older's inequality for $p,q,r \in [1,\infty]$ such that $\frac{1}{r} = \frac{1}{p} + \frac{1}{q}$: 
\begin{align*}
\norm{A B}_r \leq \norm{A}_p \norm{B}_q\,.
\end{align*}
We use the shorthand notation $\|A\| = \|A\|_\infty$, i.e.\ our default operator norm is the Schatten-$\infty$ norm, which can equivalently be defined as
\[
\|A\| = \sup_{\braket{x}{x}\leq 1} \|A\ket{x}\| = \sup_{\substack{\braket{x}{x}\leq 1\\ \braket{y}{y}\leq 1}} |\bra{x}A\ket{y}|.
\]
We use the Euclidean norm as our default vector norm.
\begin{definition}[State-dependent inner product and norm]\label{def:state-dependent-norm}
Let $\H$ be a finite-dimensional Hilbert space and $A,B\in L(\H)$ be linear operators on $\H$. Let $\rho\in\mathrm{Pos}(\H)$. We define the state-dependent (semi) inner product of $A$ and $B$ w.r.t $\rho$ as
\[
\langle A,B\rangle_\rho \coloneq \Tr{A^\dagger B\rho}.
\]
This induces the state-dependent (semi) norm
\[
\|A\|^2_\rho \coloneq \langle A,A\rangle_\rho \coloneq \Tr{A^\dagger A\rho}.
\]
\end{definition}
\begin{remark}
    The state-dependent norm can also be expressed as a Schatten-2 norm
    \[
    \|A\|_\rho = \|A\rho^{1/2}\|_2.
    \]
    For a pure state, the state-dependent norm of an operator reduces to the Euclidean norm of that operator acting on the pure state vector.
\end{remark}
\begin{lemma}\label{lemma:state-dependent-norm-properties}
For all $\rho, \rho'\in\mathrm{Pos}(\H)$ and linear operators $A, B \in L(\H)$ on some finite-dimensional Hilbert space $\H$, the following identities hold:
\begin{enumerate}[label=(\roman*),ref=\roman*]
    \item $\|A\|_{B\rho B^\dag} = \|AB\|_\rho$.\label{prop:state-dependent-norm-properties,i}
    \item $\|AB\|_\rho \leq \|A\|\cdot\|B\|_\rho$.\label{prop:state-dependent-norm-properties,ii}
    \item $\forall\;U\in U(\H): \|UA\|_\rho = \|A\|_\rho$.\label{prop:state-dependent-norm-properties,iii}
    \item $\|A\|^2_{\rho+\rho'} = \|A\|^2_\rho + \|A\|^2_{\rho'}$.\label{prop:state-dependent-norm-properties,iv}
    \item $\norm{\sum_{x \in \cX} A_x}_\rho^2 \leq |\cX| \sum_{x \in \cX} \norm{A_x}_\rho^2$.\label{prop:state-dependent-norm-properties,v}
    \item $\norm{\E_{x \in \cX} A_x}_\rho^2 \leq \E_{x \in \cX} \norm{A_x}_\rho^2 \,.$\label{prop:state-dependent-norm-properties,vi}
\end{enumerate}
\end{lemma}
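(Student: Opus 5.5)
The plan is to reduce every item to the identity $\|A\|_\rho^2=\Tr{A^\dagger A\rho}$ from \cref{def:state-dependent-norm}, together with the Schatten-2 form $\|A\|_\rho=\|A\rho^{1/2}\|_2$ from the remark that follows it. For (\ref{prop:state-dependent-norm-properties,i}) I would use cyclicity of the trace:
\[
\|A\|^2_{B\rho B^\dagger}=\Tr{A^\dagger A B\rho B^\dagger}=\Tr{(AB)^\dagger (AB)\rho}=\|AB\|_\rho^2 .
\]
For (\ref{prop:state-dependent-norm-properties,iii}) the computation is immediate, since $(UA)^\dagger UA=A^\dagger U^\dagger U A=A^\dagger A$. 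For (\ref{prop:state-dependent-norm-properties,iv}) I would use linearity of the trace in $\rho$: $\Tr{A^\dagger A(\rho+\rho')}=\Tr{A^\dagger A\rho}+\Tr{A^\dagger A\rho'}$.

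For (\ref{prop:state-dependent-norm-properties,ii}) I would use the operator inequality $B^\dagger A^\dagger A B\preceq \|A\|^2 B^\dagger B$. This holds because $A^\dagger A\preceq\|A\|^2\id$, and conjugation by $B$ preserves the PSD order. Pairing with $\rho\succeq0$ then gives
\[
\Tr{B^\dagger A^\dagger AB\rho}\le\|A\|^2\Tr{B^\dagger B\rho}.
\]
Alternatively, I could apply H\"older with $(r,p,q)=(2,\infty,2)$ to $\|AB\rho^{1/2}\|_2$.

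For (\ref{prop:state-dependent-norm-properties,v}) and (\ref{prop:state-dependent-norm-properties,vi}) I would work in the Hilbert space $L(\H)$ equipped with the Hilbert--Schmidt inner product, where $\|\cdot\|_\rho$ is the seminorm $M\mapsto\|M\rho^{1/2}\|_2$. The triangle inequality for $\|\cdot\|_2$ gives $\|\sum_x A_x\|_\rho\le\sum_x\|A_x\|_\rho$. Cauchy--Schwarz on the vector $(\|A_x\|_\rho)_{x\in\cX}$ against the all-ones vector then yields $(\sum_x\|A_x\|_\rho)^2\le|\cX|\sum_x\|A_x\|_\rho^2$, which is (\ref{prop:state-dependent-norm-properties,v}). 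Dividing (\ref{prop:state-dependent-norm-properties,v}) applied to $A_x/|\cX|$ by $|\cX|^2$ gives (\ref{prop:state-dependent-norm-properties,vi}), i.e.\ Jensen's inequality for the convex function $\|\cdot\|_\rho^2$.

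None of these steps is a real obstacle. The only point needing care is (\ref{prop:state-dependent-norm-properties,ii}), where the operator-order argument has to be stated properly and must not assume that $A$ and $B$ commute. Everything else is trace cyclicity and the standard triangle and Cauchy--Schwarz inequalities.
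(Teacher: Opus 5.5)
The paper states this lemma without proof; it is given as a list of elementary facts right after \cref{def:state-dependent-norm} and the Schatten-2 remark. So there is no argument of the authors' to compare against, and your verification is correct and complete.

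\begin{itemize}
\item Items (i), (iii) and (iv) follow directly from trace cyclicity, from $U^\dagger U=\id$, and from linearity in $\rho$, respectively.
\item For (ii), your operator-order argument works for arbitrary $B$. The map $X\mapsto B^\dagger XB$ preserves the PSD order even when $B$ is not invertible, and $\Tr{Z\rho}=\Tr{\rho^{1/2}Z\rho^{1/2}}\ge 0$ for $Z,\rho\succeq 0$.
\item For (v), $M\mapsto\|M\rho^{1/2}\|_2$ is a norm composed with a linear map, hence a seminorm. It therefore satisfies the triangle inequality, and Cauchy--Schwarz finishes the argument.
\end{itemize}

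The only blemish is the wording of (vi). Applying (v) to the rescaled family $A_x/|\cX|$ already gives $\|\E_{x\in\cX}A_x\|_\rho^2\le\E_{x\in\cX}\|A_x\|_\rho^2$, with no further division. Equivalently, you can divide (v) for the unscaled $A_x$ by $|\cX|^2$. Doing both only rescales both sides by the same factor, so the conclusion is unaffected.
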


With our distance measure properly introduced, we can now define what it means for two operators to be close to each other.

\begin{definition}[$\delta$-isometric and $\delta$-equivalent]\label{def:delta-isometric}
Let $\H$ and $\mathcal{H'}$ be two finite-dimensional Hilbert spaces, $\delta > 0$, $R \in L(\H)$ and $S\in L(\mathcal{H'})$ linear operators. We say that $R$ and $S$ are $\delta$-isometric with respect to $\rho\in \Pos(\H)$, and write $R \simeq_{\delta} S$, if there exists an isometry $V : \H\to \mathcal{H'}$ such that
\[
    \|VR-SV\|^2_\rho = O(\delta).
\]
If $V$ is the identity (i.e.\ $\H$ = $\mathcal{H'}$), then we further say that R and S are $\delta$-equivalent, and write $R\approx_\delta S$ for $\|R-S\|_\rho^2 = O(\delta)$. With this we can equivalently write $R\simeq_\delta S$ as $VR\approx_\delta SV$, which clearly shows the asymmetry of this notation.
\end{definition}
\begin{remark}
    If we are dealing with the $\delta$-equivalence of a parameterized family of observables, i.e.\ $A(a)\approx_\delta B(a)$, the notation is overloaded to also include an implicit uniform expectation over the parameter, if not specified otherwise, thus
    \[
    \E_a\norm{A(a)-B(a)}^2_\rho = O(\delta),
    \]
    by the equivalence of $A(a)\simeq_\delta B(a)$ to $VA(a)\approx_\delta B(a)V$, this also holds for the $\delta$-isometric notation.
\end{remark}
\begin{remark}
    Our definition of $\delta$-isometric differs from the one used in \cite{Coladangelo2024}. Our notion is strictly stronger, since by \cref{lemma:closeness-equivalence} one can recover the one in \cite{Coladangelo2024} from ours without loss, but not the other way around incurs a square-root loss.
\end{remark}
The notations $R \simeq_{\delta} S$ and $R\approx_\delta S$ are ambiguous since they neither specify the state $\rho$ nor the isometry $V$. Both should always be obvious from the context and will be explicitly stated if not. The isometry relation is transitive, but not reflexive: the operator on the left acts on the state before the isometry, the operator on the right acts after the isometry. The notion of $\delta$-equivalence is both transitive and reflexive; we will use it as our main notion of distance and its transitivity will be frequently used.
\par 
\medskip
We overload the notation $a\approx_\delta b$ to mean $|a-b|=\delta$ for scalars $a$ and $b$, which is consistent with the canonical definition in this setting.
\begin{remark}
    The previous definition extends to scalars if we take $R$ and $S$ to be scalar linear maps ($R=a\cdot\id$ and $S=b\cdot\id$). In this setting $R\approx_\delta S$ would imply $(a-b)^2=O(\delta)$, which gives a square root difference to the canonical definition for approximate equality between scalars.
\end{remark}

\subsection{Nonlocal games}

\begin{definition}[Nonlocal game]\label{def:nonlocal-game}
    A two-player nonlocal game $G$ is specified by two question and answer sets ($\Qa,\;\Qb$ and $\cA_A,\;\cA_B$), a distribution $\cQ$ over pairs $(x,y)\in\Qa\times\Qb$, and a $\PPT$ verification predicate $V(x,y,a,b)\in\zo$, where $a\in\cA_A$ and $b\in\cA_B$.
\end{definition}

\begin{definition}\label{def:classical-strategy-nonlocal-game}
    A deterministic classical strategy $\mathcal{S}_c$ for a two-player nonlocal game $G$ consists of two functions: $f:\Qa\to\cA_A$ and $g:\Qb\to\cA_B$.
\end{definition}

The \emph{(classical)} value or winning probability of $\mathcal{S}_c$ in $G$ is
\begin{equation}\label{eq:classical-strategy-value}
    \omega(G,\mathcal{S}_c)
    \;\coloneqq\;
    \E_{(x,y)\sim Q}\ \sum_{a\in\cA_A}\sum_{b\in\cA_B}
    V(x,y,f(x),g(y))\,.
\end{equation}

\begin{definition}[Local value]\label{def:local-value}
    The local value of a game $G$ is
    \begin{equation}\label{eq:local-value}
        \omega_c(G) \;\coloneqq\; \sup_{\mathcal{S}_c}\ \omega(G,\mathcal{S}_c)\,,
    \end{equation}
    where the supremum is taken over all classical strategies $\mathcal{S}_c$ for $G$ (as in \cref{def:classical-strategy-nonlocal-game}). Here it is sufficient to take the supremum over all deterministic classical strategies, because the $\omega(G,\mathcal{S}_c)$ is linear and the set of classical strategies is the convex hull of the deterministic ones.
\end{definition}

\begin{definition}[Quantum strategy]\label{def:quantum-strategy-nonlocal-game}
    A quantum strategy $\mathcal{S}$ for a two-player nonlocal game $G$ consists of:
    \begin{itemize}
        \item a bipartite finite-dimensional quantum state $\ket{\psi}\in\H_{A}\otimes\H_{B}$;
        \item for every $x\in\Qa$, a projective measurement $\{A^{x}_{a}\}_{a\in\cA_A}$ on $\H_{A}$ (Alice's measurements);
        \item for every $y\in\Qb$, a projective measurement $\{B^{y}_{b}\}_{b\in\cA_B}$ on $\H_{B}$ (Bob's measurements).
    \end{itemize}
\end{definition}

The \emph{(quantum)} value or winning probability of $\mathcal{S}$ in $G$ is
\begin{equation}\label{eq:quantum-strategy-value}
    \omega^{*}(G,\mathcal{S})
    \;\coloneqq\;
    \E_{(x,y)\sim Q}\ \sum_{a\in\cA_A}\sum_{b\in\cA_B}
    V(x,y,a,b)\cdot \bra{\psi} A^{x}_{a}\ot B^{y}_{b}\ket{\psi}\,.
\end{equation}

\begin{definition}[Entangled value]\label{def:entangled-value}
    The entangled value of a game $G$ is
    \begin{equation}\label{eq:entangled-value}
        \omega^{*}(G) \;\coloneqq\; \sup_{\mathcal{S}}\ \omega^{*}(G,\mathcal{S})\,,
    \end{equation}
    where the supremum is taken over all quantum strategies $\mathcal{S}$ for $G$ (as in \cref{def:quantum-strategy-nonlocal-game}).
\end{definition}

\subsection{Cryptography}
\begin{definition}[Quantum polynomial time]
    The class \emph{quantum polynomial time} ($\QPT$) consists of all procedures that can be implemented by a logspace-uniform family of quantum circuits with size polynomial in 1) the number of qubits $n$ which they take as input, and 2) the security parameter $\lambda$. We denote by $\PPT$ the corresponding class of classical probabilistic polynomial-time procedures.
\end{definition}

\begin{definition}[Non-uniform quantum polynomial time]
    The class \emph{non-uniform quantum polynomial time} ($\nuQPT$) consists of all procedures that can be implemented by a family of quantum circuits $\{C_{\lambda}\}_{\lambda\in\Nat}$ with $|C_{\lambda}|\le\poly(\lambda)$. The circuit family $\{C_\lambda\}_{\lambda}$ need not be uniformly generated.
\end{definition}
\begin{remark}
    Non-uniform QPT often includes access to quantum advice, which is not necessarily efficiently computable. To work under more conservative hardness assumptions, we exclude it here. The indistinguishability reductions of this section run the prover only once, so they would lift to quantum-advice distinguishers if the hardness assumption was upgraded; it might be useful to assume quantum auxiliary input to get desirable guarantees for composability and sequential repetition, we show that the latter does not require quantum advice in our setting. Even though we are considering non-uniform procedures, throughout this work we will often drop the indexing by $\lambda$ for ease of exposition.
\end{remark}

\begin{definition}[Efficiency and distinguishing advantage]\label{def:crypto-small}
    Throughout this work, what qualifies as `efficient' and which distinguishing advantage is tolerated, will depend on a chosen hardness assumption. We will consider the following two hardness assumptions, which always include classical advice, which determine a class of \emph{efficient} quantum algorithms and a class of \emph{cryptographically small} functions $\eta:\Nat\to[0,1]$:
    \begin{itemize}
        \item \textbf{Polynomial hardness:} efficient means $\nuQPT$ and $\eta$ ranges over negligible functions of $\lambda$.
        \item \textbf{Sub-exponential hardness:} efficient means a non-uniform family of circuits $\{C_\lambda\}_{\lambda\in\Nat}$, with $|C_\lambda|\leq 2^{O(\lambda^\delta)}$ and $\eta$ ranges over functions satisfying $\eta(\lambda)=2^{-\Omega(\lambda^{\delta})}$. Here, $\delta\in(0,1)$ is determined by the assumed hardness of the cryptographic scheme.
    \end{itemize}
    Here, $\lambda$ denotes the security parameter of the cryptographic scheme. We write $\ea$ for a generic cryptographically small function arising from the cryptography, that may differ between statements.
\end{definition}
\begin{remark}[Setting $\lambda$ relative to the game size]\label{rem:lambda-vs-n}
    Honest provers act on $O(n)$ qubits, where $n$ is the game size. Under polynomial hardness this requires $\lambda=n^{\Omega(1)}$, so that honest, $\poly(n,\lambda)$-sized, provers lie in the $\nuQPT$ adversary class. Under sub-exponential hardness it suffices to take $\lambda=(\log n)^{\Theta(1/\delta)}$, since then $\poly(n,\lambda)=2^{O(\lambda^\delta)}$. Resource overheads of the form $\poly(\lambda)\cdot n$ are therefore almost linear in $n$ in the first instantiation (specifically $O(n^{1+\eps})$ for every $\eps>0$) and quasilinear in the second. We keep all rigidity lemmas in terms of a generic cryptographically small $\ea$ and instantiate $\lambda$ according to one of the two hardness assumptions from \cref{def:crypto-small}, only when stating concrete resource bounds (\cref{theorem:bqp-verification}, \cref{subsec:qfhe-overhead}).
\end{remark}

\begin{remark}[Non-uniformity]\label{rem:non-uniformity}
    Non-uniformity is primarily required for ease of exposition: From separate IND-CPA invocations we get multiple cryptographically small functions $\eta_p(\lambda)$. Hardwiring, for every security parameter, a worst-case parameter $p^*$ into the advice yields a single adversary, whose advantage $\nu(\lambda)$ must also be cryptographically small, with $\eta_p(\lambda) \le \nu(\lambda)$ for all $p$. Since $\mathbb{E}_p[\eta_p] \le \max_p \eta_p$, expectations over arbitrary distributions are then again cryptographically small (when considering non-uniform adversaries). We deliberately do not allow quantum advice as it is not needed for our argument and would be a strong assumption under sub-exponential hardness. 
\end{remark}

\begin{definition}[Computational indistinguishability]\label{def:comp-indistinguish}
Let $\{\rho_\lambda\}_\lambda$ and $\{\rho'_\lambda\}_\lambda$ be two families of states in $\Pos(\H)$, indexed by the security parameter $\lambda$. We say that they are computationally indistinguishable if for every efficient two-outcome POVM (corresponding to the non-uniform family $\{M_\lambda,\id-M_\lambda\}_\lambda)$ on $\H$, it holds that
\[
\Tr{M_\lambda(\rho_\lambda-\rho'_\lambda)}\leq \ea,
\]
with $\ea$ cryptographically small. This is denoted by
\[
\rho\overset{c}{\approx}\rho',
\]
where we suppress the dependence of the states on $\lambda$. If we write
\[
\rho\overset{c}{\approx}_\eps\rho',
\]
this means that every such distinguisher has an advantage of $O(\eps)$:
\[
\Tr{M_\lambda(\rho_\lambda-\rho'_\lambda)}\leq O(\eps).
\]
\end{definition}

The following definition is taken from \cite{Kalai2023}, with some minor modifications. For example, we make the evaluation key explicit as it will be considered separately in the resource accounting of \cref{subsec:qfhe-overhead}.

\begin{definition}[Quantum fully homomorphic encryption ($\QFHE$)]\label{def:quantum-fully-homomorphic-encryption}
    A quantum fully homomorphic encryption scheme $\QFHE=(\Gen,\Enc,\Eval,\Dec)$ for a class of quantum circuits $\cC$ is a tuple of algorithms with the following syntax:
    \begin{itemize}
        \item \textbf{$\Gen$} is a $\PPT$ algorithm that takes as input the security parameter $1^{\lambda}$ and the number of levels $1^L$ of the circuit, and outputs a classical secret key $\sk\in\zo^{\poly(\lambda)}$ and an evaluation key $\evk\in\zo^{\poly(\lambda,L)}$;
        \item \textbf{$\Enc$} is a $\PPT$ algorithm that takes as input a secret key $\sk$ and a classical plaintext $x$, and outputs a ciphertext $\ct$;
        \item \textbf{$\Eval$} is a $\QPT$ algorithm that takes as input a tuple $(C,\ket{\Psi},\ct_{\mathrm{in}},\evk)$, where $C:\H\times(\C^{2})^{\ot n}\to (\C^{2})^{\ot m}$ is a quantum circuit (with at most $L$ levels), $\ket{\Psi}\in\H$ is a quantum state, $\ct_{\mathrm{in}}$ is a ciphertext encrypting an $n$-bit message and $\evk$ is an evaluation key. $\Eval$ runs a quantum circuit $\Eval_{C}(\ket{\Psi}\ot\ket{0}^{\ot \poly(\lambda,n)},\ct_{\mathrm{in}},\evk)$ and outputs a ciphertext $\ct_{\mathrm{out}}$. If $C$ produces a classical output, then $\Eval_{C}$ is required to produce a classical output as well.
        \item \textbf{$\Dec$} is a $\QPT$ algorithm that takes as input a secret key $\sk$ and a ciphertext $\ct$, and outputs a quantum state $\ket{\phi}$. Moreover, if $\ct$ is a classical ciphertext, then decryption outputs a classical bit string $y$.
    \end{itemize}
    We require the following properties from $\QFHE$:
    \begin{itemize}
        \item \textbf{Correctness with auxiliary input:} For every security parameter $\lambda\in\Nat$, any quantum circuit $C:\H\times(\C^{2})^{\ot n}\to (\C^{2})^{\ot m}$ (with classical output and at most $L$ levels), any quantum state $\ket{\Psi}_{AB}\in\H_A\ot\H_B$, any message $x\in\zo^n$, any key pair $(\sk,\evk)\draw\Gen(1^\lambda,1^L)$ and any ciphertext $\ct\draw\Encsk(x)$, the following states have cryptographically small trace distance:
        \begin{enumerate}[label=\textbf{Game~\arabic*.},wide,labelindent=0pt]
            \item\label{game:qhe-corr-plain} Start from $(x,\ket{\Psi}_{AB})$, evaluate $C$ on classical input $x$ and register~$A$, obtaining a classical string $y$, and output $y$ together with the contents of register~$B$.
            \item\label{game:qhe-corr-hom} Start from $\ket{\Psi}_{AB}$ and $\ct$. Homomorphically evaluate $C$ on register~$A$ by running $\ct'\leftarrow\Eval_{C}(\cdot\ot \ket{0}^{\ot \poly(\lambda,n)},\ct,\evk)$. Compute $y'=\Decsk(\ct')$. Output $y'$ together with the contents of register~$B$.
        \end{enumerate}
        \item \textbf{IND-CPA security against efficient quantum distinguishers:} For any efficient adversary $\cA$ (in the sense of \cref{def:crypto-small}) and any two messages $x_{0},x_{1}\in\zo^n$, there exists a cryptographically small function $\ea$ such that,
        {\small
        \begin{align*}
            \Bigg|
            \Pr&\!\left[
            \cA^{\Encsk(\cdot)}(\ct_{0},\evk)=1
            \;\middle|\;
            \begin{array}{@{}l@{}}
                (\sk,\evk)\draw\Gen(1^{\lambda},1^L) \\
                \ct_{0}\draw\Encsk(x_{0})
            \end{array}
            \right]
            -\\
            &\Pr\!\left[
            \cA^{\Encsk(\cdot)}(\ct_{1},\evk)=1
            \;\middle|\;
            \begin{array}{@{}l@{}}
                (\sk,\evk)\draw\Gen(1^{\lambda},1^L) \\
                \ct_{1}\draw\Encsk(x_{1})
            \end{array}
            \right]
            \Bigg|
            \leq \ea.
        \end{align*}
        }
    Here $\Encsk(\cdot)$ denotes the encryption oracle instantiated with key $\sk$.
    \end{itemize}
\end{definition}

\begin{remark}[Instantiation]
We will use the leveled scheme due to Brakerski \cite{Brakerski2018} to instantiate QFHE in this work. The scheme has a hybrid structure, where ciphertexts consist of a QOTP encrypted state and a classical FHE encryption of the OTP keys. Once these classical encryptions are replaced by encryptions of
a fixed unrelated value---such as zero---by a hybrid argument,the OTP hides the state information-theoretically, so any adversary against the IND-CPA security of the QFHE yields an adversary against the classical FHE. One can then apply a standard argument relating FHE security to that of LWE, with a polynomial loss~\cite[Theorems~3.4 and~3.6]{Brakerski2018}. Thus, the hardness assumption on LWE (against polynomial-time or sub-exponential-time adversaries in our case) propagates cleanly, up to polynomial factors, to the semantic security of the QFHE scheme.
\end{remark}

\subsubsection{The KLVY transform}
With any scheme satisfying \cref{def:quantum-fully-homomorphic-encryption} we can instantiate the transformation from nonlocal games to single-prover arguments, proposed by \cite{Kalai2023}. \Cref{def:compiled-nonlocal-game} presents the transform for a two-player nonlocal game, since this is the only setting which will be used in this paper. The general transform applies to nonlocal games with an arbitrary number of players and is described in \cite[Section 3.2]{Kalai2023}.

\begin{definition}[Compiled nonlocal game]\label{def:compiled-nonlocal-game}
    Fix a quantum homomorphic encryption scheme $\QFHE = (\Gen, \Enc, \Eval, \Dec)$. The KLVY transform of the two-prover nonlocal game $G=(\cQ,V)$ works as follows:
    \begin{enumerate}
    \item The verifier samples $(x, y) \draw \cQ$, $(\sk,\evk) \draw\Gen(1^\lambda,1^L)$, and $c \draw \Encsk(x)$. The verifier then sends $c$ to the prover as its first message. (The evaluation key $\evk$ is used to evaluate the quantum circuit $C$, of depth $L$, on the ciphertext $c$ and is assumed to be public knowledge to the prover.)
    \item The prover replies with a message $\alpha$.
    \item The verifier sends $y$ to the prover in the clear.
    \item The prover replies with a message $b$.
    \item Define $a \deq \Decsk(\alpha)$. The verifier accepts if and only if $V(x, y, a, b) = 1$.
    \end{enumerate}
\end{definition}

\begin{remark}[Fixed-length encoding of Alice questions]\label{rem:fixed-length-encoding}
IND-CPA security (\cref{def:quantum-fully-homomorphic-encryption}) is only guaranteed for pairs of messages $x_0,x_1\in\zo^n$ of the \emph{same} length $n$; every switching argument built on it in \cref{subsec:crypto-security} (\cref{lemma:poly-state-indistinguishability,lemma:efficient-operator-indistinguishability,cor:efficient-operator-state-switching} and their uses throughout this work) is therefore only meaningful once all elements of Alice's question set $\Qa$ are encoded as bit strings of one common length before encryption. Throughout this work, we fix once and for all a canonical, efficiently invertible padding scheme and encode every Alice question---large-answer questions $W\in\Sigma^n$, small-answer questions $(W,a)$, Bell pairings, magic-square rows and columns, CHSH questions, and the delegation-game question $\tilde W\in\{X,Y,Z,F,G\}^m$ alike, across every test in \cref{protocol:cliff-test,protocol:verification}---as a bit string of the same length $n_{\Qa}=\poly(\lambda)$ before encryption, padding shorter encodings to $n_{\Qa}$. Under this convention $\ct_0,\ct_1$ are always ciphertexts of equal-length plaintexts, so the ciphertext length carries no information about which type of question was sent, and every use of \cref{lemma:poly-state-indistinguishability} or \cref{cor:efficient-operator-state-switching} to switch between Alice questions of different types is licensed by IND-CPA security in this sense.
\end{remark}

\subsubsection{Modeling prover strategies in a compiled game}\label{section:modeling}
The main benefit of compiled nonlocal games is the fact that you are only dealing with a single prover. Unfortunately, this also has some drawbacks. For example, a common technique in nonlocal games is to trace out one of the prover's systems, since by assuming non-communication and finite-dimensional Hilbert spaces, the actions of different provers can be assumed to be local to their system; because of this locality, any action which Alice performs after her measurement won't affect Bob's outcome. In the single-prover setting these assumptions no longer hold. The prover now acts on the same quantum state in the first and second round of the interaction. We will now establish the most general way to model the prover's actions in a compiled nonlocal game, using the same conventions as \cite{Natarajan2023,Metger2024}
\par 
\medskip
We start by introducing the initial prover state $\ket{\psi}$, which is assumed to be efficiently prepareable and an implicit function of the cryptography's evaluation key ($\evk$) and security parameter ($\lambda$). All actions performed by the prover before receiving the first verifier message can be absorbed into this state. Without loss of generality, we can assume that $\ket{\psi}$ is a pure state.
\par 
\medskip
When the prover receives the first encrypted message ($c\draw\Encsk(x)$ for $x\in\Qa$) from the verifier, he performs a general projective measurement $\{P^\alpha_c\}_\alpha$, where $\alpha$ goes over all possible encrypted responses to that question; again, there is an implicit dependence on $\evk$ and $\lambda$. The measurement is assumed to be projective, since we can always extend the prover's space to obtain a projective measurement by Naimark's dilation theorem. After this the prover is free to apply any unitary ($U^c_\alpha$) which depends on the question and his output; we can absorb this operation into his projective measurement and define the \textbf{non-unitary} operator $A^c_\alpha\deq U^c_\alpha P^\alpha_c$, where $\{(A^c_\alpha)^\dag A^c_\alpha\}_\alpha$ is a projector-valued measurement (PVM). Up to this point, we can identify all actions performed by the prover into the `Alice part' of the nonlocal game. The sub-normalized `post-Alice' state is defined as
\[
\ket{\psi^c_\alpha} = A^c_\alpha \ket{\psi} =  U^c_\alpha P^\alpha_c \ket{\psi},
\]
where 
\[
\Pr[\alpha] = \braket{\psi^c_\alpha}{\psi^c_\alpha} = \bra{\psi}(A^c_\alpha)^\dag A^c_\alpha\ket{\psi},
\]
is the probability of the prover outputting $\alpha$ on the first question. Lastly,
\[
\psi^c = \sum_\alpha \psi^c_\alpha = \sum_\alpha\ket{\psi^c_\alpha}\bra{\psi^c_\alpha}.
\]
After receiving the first response, the verifier sends the second message ($b\in\Qb$) in the clear and the prover's actions can be modeled by a projective measurement $\{B^b_y\}_b$, which will often be denoted by the corresponding question symbol. The prover is free to perform any unitary after this measurement, but we do not need to account for this in our model, since the interaction in a nonlocal game stops after receiving the second message from the prover.
\par 
\medskip
Throughout this work we will use the following compact notation:
\[
\pe{Q} = \sum_\alpha \pea{Q}\quad\text{and}\quad \pea{Q}=\E_{c\leftarrow\Enc(Q)}A^c_\alpha\ket{\psi}\bra{\psi}(A^c_\alpha)^\dag,
\]
where the expectation over the secret key $\sk$ is implicit. Specifically, in any expression containing one or multiple secret key dependent objects, we have an implicit expectation over the secret key. I.e.
\[
\norm{O(\Dec(\alpha))}_{\pe{Q}}^2 = \E_{\sk}\norm{O(\Decsk(\alpha))}^2_{\psi^{\Encsk(Q)}},
\]
where
\[
\psi^{\Encsk(Q)} = \E_{c\leftarrow\Encsk(Q)}\psi^c.
\]
This is very important, since the security guarantees of the cryptography (for example IND-CPA security) only hold under expectation over the secret key, as it is randomly generated by the key generation function. The same implicit expectation is assumed for the evaluation key.
\subsubsection{Security of the cryptography}\label{subsec:crypto-security}
In this section we will introduce some useful lemmas from \cite{Natarajan2023} regarding the indistinguishability of different objects under the cryptography. We had to modify some of the statements, since we are working with parametrized observables implemented by Bob, which can be correlated with the Alice question.
\par 
\medskip
The following lemma uses the IND-CPA security of the QFHE scheme to argue that two of the prover's states after the encrypted interaction---on two different questions---are computationally indistinguishable, even if the distinguishing measurement is allowed to depend on the prover's first-round outcome $\alpha$ and one takes the expectation over different questions. Since this invokes properties of the QFHE scheme, it introduces a cryptographically small function $\ea$, which represents the prover's distinguishing advantage and depends on the security parameter of the encryption. It also depends on the prover and on the distinguishing procedure (i.e.\ the distributions and the measurement family in the lemma below), each of which is understood as a $\lambda$-indexed family. This function will appear in most of our error bounds, and is always related to the security of the QFHE scheme.
\par
\paragraph{Convention (cryptographically small functions).} Throughout this work, cryptographically small functions (in the sense of \cref{def:crypto-small}) will appear. These always depend on the specific adversary. Thus when separately invoking the indistinguishability lemmas of this section, we are dealing with multiple different cryptographically small functions. When averaging multiple cryptographically small functions over a set whose size depends on $\lambda$, the result might not be cryptographically small. This is why the lemmas in this section construct a particular adversary, who samples the questions according to the desired distribution himself, such that a single cryptographically small function is obtained through a single invocation of the IND-CPA security. This is possible as long as the adversary can construct the question himself and its distribution is efficiently sampleable, which will always be the case in our work. However, explicitly carrying this average through calculations often unnecessarily complicates the notation. This is why we will often stick to point-wise bounds and resort to obtaining a bound on the expectation by leveraging the assumption of non-uniform adversaries (see \cref{rem:non-uniformity}).

\begin{remark}
    The following lemma introduces an efficient family of POVMs $\{(M_{p},\id-M_{p})\}_{p}$ that is \emph{uniform in $p$}: for every $\lambda\in\Nat$ there is a single circuit $C_{\lambda}$, of size at most the bound in \cref{def:crypto-small}, which takes $p$ as classical input and implements the measurement $(M_{p},\id-M_{p})$. Our adversaries and their operations are thus non-uniform in the security parameter $\lambda$, but uniform in additional parameters, such as $p$; in particular, the size bound does not depend on $p$, which is what \cref{rem:non-uniformity} requires.
\end{remark}

\begin{lemma}\label{lemma:poly-state-indistinguishability}
    Let $D$ be an efficiently sampleable distribution over triples $(x,z_0,z_1)$, where $x$ is any parameter and $z_0,z_1$ are plaintext Alice questions. Then, for any efficient prover (modeled as in \cref{section:modeling}) and any efficient two-outcome POVM family $\{M_{x,\alpha},\id-M_{x,\alpha}\}_{x,\alpha}$ (uniform in $x$ and $\alpha$), there exists a cryptographically small function $\ea$ such that for all $\lambda\in\Nat$,
    \[
    \left|\E_{(x,z_0,z_1)\sim D}\sum_\alpha\Tr{M_{x,\alpha}\paren{\pea{z_0}-\pea{z_1}}}\right|\leq\ea.
    \]
    In particular, if $D_1$ and $D_2$ are efficiently sampleable distributions over pairs $(x,z)$ with identical marginals on $x$ and efficiently sampleable conditionals $D_i(\cdot\mid x)$, the same bound holds with
    \[
    \left|\E_{(x,z)\sim D_1}\sum_\alpha\Tr{M_{x,\alpha}\pea{z}}-\E_{(x,z)\sim D_2}\sum_\alpha\Tr{M_{x,\alpha}\pea{z}}\right|\leq\ea.
    \]
\end{lemma}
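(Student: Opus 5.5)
We reduce directly to IND-CPA security of the QFHE scheme (\cref{def:quantum-fully-homomorphic-encryption}) by building a single efficient adversary that samples the parameters itself. The difficulty is that IND-CPA applies to one fixed pair of messages $x_0,x_1$, whereas the statement averages over $(x,z_0,z_1)\sim D$. We therefore use a hybrid-bit encoding. The adversary $\cA$ receives a challenge ciphertext of a single bit $\beta\in\zo$, padded to the common length $n_{\Qa}$ as in \cref{rem:fixed-length-encoding}. It also gets $\evk$ and the oracle $\Encsk(\cdot)$. It samples $(x,z_0,z_1)\sim D$ on its own. It then needs a ciphertext of $z_\beta$ without knowing $\beta$, and we see two routes to this.

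The cleanest route is homomorphic evaluation. Using the oracle, $\cA$ obtains $\Encsk(z_0)$ and $\Encsk(z_1)$, or simply $\Encsk$ of the concatenation $(z_0,z_1)$. It then runs $\Eval$ on the classical multiplexer circuit $(\beta,z_0,z_1)\mapsto z_\beta$, which yields a ciphertext $c$ encrypting $z_\beta$. One subtlety is that $c$ may not be distributed exactly as a fresh $\Encsk(z_\beta)$, while the prover's states $\pea{z}$ are defined over fresh encryptions. To avoid this mismatch, we prefer the alternative route: we use the message-pair form of IND-CPA with $x_0,x_1$ replaced by a random pair chosen by the adversary. This is allowed because, for non-uniform adversaries (\cref{rem:non-uniformity}), we may fix the worst-case $(x,z_0,z_1)$ in the advice. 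Concretely, the quantity in the statement is an average over $D$ of the pointwise differences $\Delta(x,z_0,z_1)=\sum_\alpha \Tr{M_{x,\alpha}(\pea{z_0}-\pea{z_1})}$. Hence $|\E_D \Delta|\le \max_{(x,z_0,z_1)}|\Delta(x,z_0,z_1)|$.

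For each fixed triple we build an adversary $\cA_{x,z_0,z_1}$. It receives $\ct_b$, a fresh encryption of $z_b$ (this is the IND-CPA challenge with messages $z_0,z_1$), together with $\evk$. It prepares the prover's initial state $\ket{\psi}$, which is efficiently preparable from $\evk$. It applies the prover's first-round operation $\{A^{\ct_b}_\alpha\}$, records $\alpha$, measures $\{M_{x,\alpha},\id-M_{x,\alpha}\}$ on the post-measurement state, and outputs the result. Its acceptance probability is exactly $\E_{\sk}\sum_\alpha\Tr{M_{x,\alpha}\psi^{\ct_b}_\alpha}$, so the advantage equals $|\Delta(x,z_0,z_1)|$. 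The adversary is efficient because the prover and the POVM family are efficient and the POVM is uniform in $(x,\alpha)$ by the preceding remark, so the circuit size is independent of the triple. We then hardwire a maximizing triple $p^*(\lambda)$ into the advice, which gives one non-uniform adversary whose advantage $\nu(\lambda)$ dominates every $|\Delta|$. IND-CPA then forces $\nu$ to be cryptographically small, and we set $\ea=\nu$. The step to watch is that the hardwired choice must be made after $\lambda$ is fixed. This is exactly what \cref{rem:non-uniformity} licenses, and it is the only delicate point. The efficient sampleability of $D$ matters only if one wants a uniform reduction instead; in that case $\cA$ samples the triple itself and the bound follows from the hybrid-bit argument above.

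For the ``in particular'' clause, we build a coupling of $D_1$ and $D_2$. We sample $x$ from the shared marginal, then independently sample $z_0\sim D_1(\cdot\mid x)$ and $z_1\sim D_2(\cdot\mid x)$. This defines an efficiently sampleable $D$ over triples. The marginal of $(x,z_i)$ under $D$ is $D_{i+1}$, so the difference of the two expectations equals $\E_D\Delta(x,z_0,z_1)$, and the first part of the lemma applies.
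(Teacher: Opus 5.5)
Your main argument is correct, but it takes a different route from the paper. The paper builds one adversary that does the following: it samples $(x,z_0,z_1)\sim D$ itself, submits the sampled pair $(z_0,z_1)$ to the challenger, runs the prover on the challenge ciphertext, and measures $\{M_{x,\alpha},\id-M_{x,\alpha}\}$. Its advantage is then $|\E_{t\sim D}\Delta(t)|$ directly. That proof uses the form of IND-CPA in which the adversary chooses the message pair, and this is where efficient sampleability of $D$ enters.

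You instead bound $|\E_D\Delta|\le\max_t|\Delta(t)|$ and hardwire a maximizing triple for each $\lambda$ as classical advice. Each invocation of IND-CPA is then against a message pair fixed for that $\lambda$. This is actually closer to \cref{def:quantum-fully-homomorphic-encryption} as literally written, which quantifies over a fixed pair $x_0,x_1$. Your averaging step is precisely what licenses the paper's use of adversary-sampled messages. The price is non-uniformity, which \cref{def:crypto-small} and \cref{rem:non-uniformity} permit anyway. The gain is that the first claim no longer needs $D$ to be efficiently sampleable. Your coupling argument for the second claim is the same as the paper's.

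One caveat concerns your closing remark that a uniform reduction "follows from the hybrid-bit argument". That is not supported. You yourself noted that the ciphertext of $z_\beta$ produced by $\Eval$ need not be distributed like a fresh encryption. The states $\pea{z}$ are defined only over fresh encryptions, so without circuit privacy nothing controls the prover on such ciphertexts. The uniform reduction needs no multiplexer: simply let the adversary sample from $D$ and submit the sampled pair, as the paper does. Your main proof does not depend on that side route, so this is only a flaw in the remark.
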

\begin{proof}
    This follows from a standard reduction to the IND-CPA security of the QFHE scheme. For a triple $t=(x,z_0,z_1)$ and $i\in\zo$ let
    \[
    p_i(t)\deq\sum_\alpha\Tr{M_{x,\alpha}\,\psi_\alpha^{\Enc(z_i)}},
    \]
    and write $f(t)\deq p_0(t)-p_1(t)$. The quantity $p_i(t)$ is the acceptance probability of the efficient experiment $\cE(x,c)$: prepare $\ket{\psi}$, apply the prover's first-round measurement on ciphertext $c$, obtaining an outcome $\alpha$, then measure $\{M_{x,\alpha},\id-M_{x,\alpha}\}$. Define a distinguisher $\cA$ against the IND-CPA security of the scheme as follows:
    \begin{itemize}
        \item Sample $t=(x,z_0,z_1)\sim D$ and submit $(z_0,z_1)$ to the IND-CPA challenger; receive the challenge ciphertext $c\draw\Enc(z_b)$, where $b\draw\zo$ is chosen by the challenger.
        \item Run $\cE(x,c)$ and output the outcome bit.
    \end{itemize}
    The reduction is efficient in the sense of \cref{def:crypto-small}: it samples from an efficient distribution, runs the efficient prover once, and implements a uniformly efficient measurement. Conditioned on the challenge bit $b$, the acceptance probability of $\cA$ is $\E_{t\sim D}p_b(t)$. The distinguishing advantage is therefore $\bigl|\E_{t\sim D}f(t)\bigr|$, which is cryptographically small by IND-CPA security of the QFHE scheme.
    \par
    \medskip
    For the $D_1,D_2$ form, sample $x$ from the common marginal (e.g.\ sample $(x,z)\sim D_1$ and discard $z$), then sample $z_0\sim D_1(\cdot\mid x)$ and $z_1\sim D_2(\cdot\mid x)$. This is an efficiently sampleable distribution over triples, and the first claim specialises to the second.
\end{proof}

\par
\medskip
The lemma above is a strengthening of \cite[Lemma 8]{Natarajan2023}, since it allows for a dependence on $\alpha$ and introduces an additional parameter $x$. Choosing $M_{x,\alpha}=M_x$ independent of $\alpha$ yields the corresponding statement for the marginal states $\pe{z}$.

\begin{lemma}\label{lemma:povm-indistinguishability}
    Let $D_1$ and $D_2$ be two distributions over pairs $(x,z)$, where $x$ is any parameter and $z$ is a plaintext Alice question. Suppose $D_1$ and $D_2$ satisfy the following.
    \begin{itemize}
        \item $D_1$ and $D_2$ are efficiently sampleable, with identical marginals on $x$.
        \item For any $x$ in the support of their common marginal, the conditional distributions $D_1(\cdot | x)$ and $D_2(\cdot | x)$ over $z$ are also efficiently sampleable.
    \end{itemize} Then, for any efficient prover, and for any efficient family of POVMs $\{\{M_{\beta,x,\alpha}\}_\beta\}_{x,\alpha}$ (uniform in $x$ and $\alpha$) with outcomes $|\beta|\leq O(1)$, there exists a cryptographically small function $\ea$ such that for all $\lambda\in\Nat$,
    \[
    \left|\E_{(x,z)\sim D_1}\sum_\alpha\sum_\beta\beta\Tr{M_{\beta,x,\alpha}\pea{z}}-\E_{(x,z)\sim D_2}\sum_\alpha\sum_\beta\beta\Tr{M_{\beta,x,\alpha}\pea{z}}\right|\leq\ea.
    \]
\end{lemma}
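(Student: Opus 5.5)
The plan is to reduce the statement to the two-outcome case, \cref{lemma:poly-state-indistinguishability}, by converting the bounded real-valued measurement into a single binary acceptance event. The outcome set of each POVM is finite with $|\beta|\le B$ for a constant $B$. Define
\[
M_{x,\alpha}\deq \sum_\beta \frac{\beta+B}{2B}\,M_{\beta,x,\alpha}.
\]
Since $(\beta+B)/(2B)\in[0,1]$ and $\sum_\beta M_{\beta,x,\alpha}=\id$, we get $0\preceq M_{x,\alpha}\preceq\id$. So $\{M_{x,\alpha},\id-M_{x,\alpha}\}$ is a two-outcome POVM, and by linearity of the trace
\[
\sum_\beta\beta\Tr{M_{\beta,x,\alpha}\pea{z}} = 2B\,\Tr{M_{x,\alpha}\pea{z}} - B\,\Tr{\pea{z}}.
\]

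Next we check that this binary POVM is efficient and uniform in $(x,\alpha)$. To implement it, run the circuit for $\{M_{\beta,x,\alpha}\}_\beta$ on input $(x,\alpha)$, obtain $\beta$, and then output $0$ with probability $(\beta+B)/(2B)$ using fresh classical randomness. Exact sampling of this probability may need care, since $\beta$ could be a non-dyadic rational. If so, either approximate the coin to within $2^{-\lambda}$ accuracy, which adds only a cryptographically small term, or assume the outcomes are specified with polynomially many bits. Either way the circuit size stays within the bound of \cref{def:crypto-small}.

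Now take the difference between the two expectations over $D_1$ and $D_2$. The $2B\,\Tr{M_{x,\alpha}\pea{z}}$ terms differ by at most $2B\eta$ by the $D_1,D_2$ form of \cref{lemma:poly-state-indistinguishability}. The offset terms $-B\sum_\alpha \Tr{\pea{z}}$ are exactly $-B$ under both distributions, because $\sum_\alpha\pea{z}=\pe{z}$ is a normalized state. Alternatively, they are handled by applying the same lemma with $M=\id$. Hence the total difference is at most $2B\,\ea$, which is cryptographically small since $B=O(1)$.

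The main point needing care is that the rescaled measurement really is an efficient uniform distinguisher, including the randomized rounding. Everything else is linearity. If the paper intends $\beta$ to range over integers in a constant range, the coin flip becomes exact and the obstacle disappears.
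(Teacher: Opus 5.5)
Your proposal is correct and matches the paper's proof: you rescale the outcomes affinely into $[0,1]$ and implement $\sum_\beta \frac{\beta+C}{2C}M_{\beta,x,\alpha}$ as a two-outcome POVM (measure, then flip a coin with the corresponding bias), apply \cref{lemma:poly-state-indistinguishability} in its $D_1,D_2$ form, and observe that the constant offset cancels because $\sum_\alpha \Tr{\pea{z}}=1$ under both distributions. Your remark about approximating a non-dyadic coin bias to accuracy $2^{-\lambda}$ adds a little care that the paper omits, but the argument is otherwise identical.
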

\begin{proof}
    Let $C\in\Nat$ such that $|\beta|\leq C$ for all $(x,\alpha)$, and define the rescaled outcomes as,
    \[
    \tilde\beta = \frac{\beta + C}{2C}\in[0,1].
    \]
    Define the two-outcome POVM $\{N_{x,\alpha},\id-N_{x,\alpha}\}$, with
    \[
    N_{x,\alpha} \coloneqq\sum_\beta\tilde\beta M_{\beta,x,\alpha},
    \]
    intuitively this measurement can be implemented as follows: first perform the $M$ measurement (efficient given the pair $(x,\alpha)$) and obtain an outcome $\beta$, then output $1$ with probability $\tilde\beta$. $N_{x,\alpha}$ is indeed a valid POVM element, since $N_{x,\alpha} \succeq 0$ ($\tilde\beta\geq 0$ and $M_{\beta,x,\alpha}\succeq 0$) and 
    \[
    N_{x,\alpha} \preceq \sum_\beta \id \cdot M_{\beta,x,\alpha}=\id,
    \]
    it is also uniformly efficient in $(x,\alpha)$, since $\{\{M_{\beta,x,\alpha}\}_\beta\}_{x,\alpha}$ is. Applying \cref{lemma:poly-state-indistinguishability} to $N_{x,\alpha}$ in the $D_1,D_2$ form, there exists a cryptographically small function $\ea$ such that
    \[
    \left|\E_{(x,z)\sim D_1}\sum_\alpha\Tr{N_{x,\alpha}\pea{z}}-\E_{(x,z)\sim D_2}\sum_\alpha\Tr{N_{x,\alpha}\pea{z}}\right|\leq\ea,
    \]
    inserting the definition of $N_{x,\alpha}$ and $\tilde\beta$, we get
    \[
    \left|\E_{(x,z)\sim D_1}\sum_{\alpha,\beta} \paren{\frac{\beta+C}{2C}}\Tr{M_{\beta,x,\alpha}\pea{z}}-\E_{(x,z)\sim D_2}\sum_{\alpha,\beta}\paren{\frac{\beta+C}{2C}}\Tr{M_{\beta,x,\alpha}\pea{z}}\right|\leq\ea.
    \]
    The shift of $C$ contributes the same constant to both terms, since by completeness of the POVM
    \[
    \sum_\alpha\sum_\beta\frac{C}{2C}\Tr{M_{\beta,x,\alpha}\pea{z}}=\frac{1}{2}\sum_\alpha\Tr{\pea{z}}=\frac{1}{2}\Tr{\pe{z}}=\frac{1}{2},
    \]
    so it cancels, and the re-scaling of $2C$ can be absorbed in $\ea$, which remains cryptographically small. This completes the proof.
\end{proof}

\begin{definition}
    Let $H_x \in\Herm(\H)$ be a Hermitian matrix parameterized by $x$. A uniformly efficient family of block encodings of $\{H_x\}$ with scale factors $\{t_x\}$ and auxiliary dimension $k$ is an efficient circuit, that given $x$ implements a unitary matrix $U_x\in U((\C^2)^{\ot k}\otimes\H)$, such that 
    \[
    (\bra{0^k}\otimes\id)U_x(\ket{0^k}\otimes\id)=t_xH_x,
    \]
    or graphically for the case of $k=1$:
    \[
    U_x = \begin{pmatrix}
    t_xH_x & * \\
    *  & *
    \end{pmatrix}.
    \]
    That is, the set $\{U_x\}_x$ is a efficient family of unitaries, uniform in $x$, with the additional block-encoding condition.
\end{definition}

\begin{lemma}\label{lemma:observables-efficient-unitaries}
    The set of unitaries $\{B(a)\}_a$ defined from an efficient PVM $\{B^x\}_{x\in\zo^n}$, as
    \[
    B(a)=\sum_{x\in\zo^n}(-1)^{x\cdot a}B^x,
    \]
    is a efficient family of unitaries, uniform in $a$. I.e. there exists an efficient procedure, that given $a$ as input, implements $B(a)$.
\end{lemma}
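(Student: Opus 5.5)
The plan is the standard uncompute trick: coherently compute the PVM outcome into an ancilla register, apply a phase depending on that outcome and on the classical input $a$, and then uncompute. Since $\{B^x\}_{x\in\zo^n}$ is an efficient PVM, there is an efficient unitary $U$ on $\H\ot(\C^2)^{\ot m}$ (with $m=\poly(\lambda)$ ancillas, including any Naimark dilation and workspace) and an $n$-qubit output register $O$ among the ancillas such that
\[
(\id\ot\bra{0^m})U^\dagger(\id\ot Z_O^x)U(\id\ot\ket{0^m})=B^x,
\]
where $Z_O^x$ denotes the projector onto outcome $x$ in register $O$. Because the family is a PVM, the state $U(\ket{\phi}\ot\ket{0^m})$ decomposes as $\sum_x\ket{\chi_x}\ot\ket{x}_O$, and the component with $O=x$ is exactly the image of $B^x\ket\phi$.

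\begin{enumerate}
    \item Apply $U$. This coherently writes the outcome $x$ into $O$ without measuring it.
    \item Apply the diagonal phase $\ket{x}_O\mapsto(-1)^{a\cdot x}\ket{x}_O$. This equals $\bigotimes_i \sigma_Z^{a_i}$ on $O$: a layer of $n$ single-qubit gates, each a $\sigma_Z$ classically controlled by the input bit $a_i$. The circuit is therefore identical for every $a$ up to these classical controls, which gives uniformity in $a$ with a size bound independent of $a$.
    \item Apply $U^\dagger$.
\end{enumerate}

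The combined operator is $U^\dagger(\id\ot\sum_x(-1)^{a\cdot x}Z^x_O)U$. Projecting the ancillas onto $\ket{0^m}$ gives $\sum_x(-1)^{a\cdot x}B^x=B(a)$.

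The only real obstacle is that this gives a block encoding rather than the exact action on $\H$ alone: the ancillas must return to $\ket{0^m}$ exactly. Exact return holds when $\{B^x\}$ is genuinely projective on the prover's space in the dilated picture, which is the modeling convention of \cref{section:modeling}. The argument is that $B(a)$ is unitary, since the $B^x$ are orthogonal projectors summing to $\id$. The operator $W=U^\dagger(\id\ot D_a)U$ is also unitary, and its compression to the $\ket{0^m}$ block equals this unitary $B(a)$. A unitary compression of a unitary forces the off-diagonal blocks to vanish, so $W(\ket\phi\ot\ket{0^m})=B(a)\ket\phi\ot\ket{0^m}$ exactly.

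If $U$ is only approximate, with error $\ea$, the same argument carries a cryptographically small error. Efficiency is clear: the cost is twice the cost of $U$ plus $n$ controlled $\sigma_Z$ gates.
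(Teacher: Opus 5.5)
Your proof is correct and is the same uncompute argument as the paper's. You apply the coherent measurement circuit $U$, then $\sigma_Z(a)$ on the outcome register, then $U^\dagger$. Your extra step, noting that a unitary whose $\ket{0^m}$-compression is the unitary $B(a)$ must act as $B(a)\ot\id$ on $\H\ot\ket{0^m}$, is a valid refinement: it covers measurement circuits that leave garbage on the ancillas, whereas the paper simply assumes the clean form $U\ket{\psi}\ket{0}=\sum_x (B^x\ket{\psi})\ket{x}$.
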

\begin{proof}
     Since the PVM is efficiently implementable, there exists an efficient unitary $U$ (the measurement circuit) such that
    \[
    U\ket{\psi}\ket{0}=\sum_x (B^x\ket{\psi})\ket{x},
    \]
    where the second register is an ancilla register, which records the measurement outcome. We have 
    \[
    B(a)\ket{\psi}\ket{0} = U^\dag(\id\ot\sigma_Z(a))U\ket{\psi}\ket{0},
    \]
    which is also efficiently implementable given $a$ as input.
\end{proof}

\begin{definition}[Uniformly efficient family of linear combination of unitaries (LCU)]
A set of operators $\{A_x\}_x$, indexed by a classical parameter $x$, is a uniformly efficient family of LCUs, if there exists a constant $m \in \mathbb{N}$ and:
\begin{itemize}
    \item uniformly efficient real coefficient $\gamma_{i,x}\in[-1,1]$, with $\sum_i |\gamma_{i,x}|\neq 0\;\forall x$.
    \item a set of uniformly efficient families of unitaries $\{\{U_{i,x}\}_x\}_{i\in[m]}$,
\end{itemize}
such that for all $x$
\[
A_x=\sum_{i\in[m]}\gamma_{i,x}U_{i,x}.
\]
\end{definition}

\begin{lemma}\label{lemma:block-implementable}
    For every uniformly efficient family of LCUs ($\{A_x\}_x$), where all $A_x$ are Hermitian, there exists a uniformly efficient family of block encodings of $\{A_x\}$ with uniformly efficient scale factors $t_x=1/\nu_x$, where $\nu_x=\sum_i|\gamma_{i,x}|$. In particular $|t_x|\geq 1/m$.
\end{lemma}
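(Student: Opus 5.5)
The plan is to use the standard LCU (linear combination of unitaries) construction with a PREPARE/SELECT pair. Fix $x$ and write $\nu_x=\sum_{i\in[m]}|\gamma_{i,x}|>0$. The signs of the coefficients go into the unitaries: set $U'_{i,x}\deq \mathrm{sgn}(\gamma_{i,x})U_{i,x}$, which is still uniformly efficient because the sign is an efficiently computable classical bit. Then
\[
A_x=\sum_i |\gamma_{i,x}|\,U'_{i,x},
\]
with all coefficients nonnegative. Take $k=\ceil{\log m}$ auxiliary qubits; since $m$ is a constant, $k=O(1)$.

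The two building blocks are as follows.
\begin{itemize}
\item A state-preparation unitary $P_x$ on $(\C^2)^{\ot k}$ with $P_x\ket{0^k}=\sum_i\sqrt{|\gamma_{i,x}|/\nu_x}\,\ket{i}$. Because $k$ is constant, this is a constant-size unitary whose entries are efficiently computable from $x$. It can therefore be synthesized uniformly in $x$, for instance with a constant number of controlled rotations whose angles are computed classically from $x$ to the required precision.
\item The selection unitary $\mathrm{SEL}_x=\sum_{i\in[m]}\ketbra{i}{i}\ot U'_{i,x}+\sum_{i\geq m}\ketbra{i}{i}\ot\id$. It is implemented by applying, for each of the constantly many $i$, the uniformly efficient circuit for $U'_{i,x}$ controlled on the ancilla being in state $\ket{i}$. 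Controlled versions of efficient circuits remain efficient.
\end{itemize}

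Define $U_x\deq (P_x^\dag\ot\id)\,\mathrm{SEL}_x\,(P_x\ot\id)$. A direct computation gives
\[
(\bra{0^k}\ot\id)U_x(\ket{0^k}\ot\id)=\sum_i \tfrac{|\gamma_{i,x}|}{\nu_x}U'_{i,x}=\tfrac{1}{\nu_x}A_x=t_xA_x .
\]
This is exactly the block-encoding condition. Hermiticity of $A_x$ is not needed for the construction itself; it only ensures that the result is a block encoding of a Hermitian matrix, as the definition requires. For the bound, $|\gamma_{i,x}|\leq 1$ gives $\nu_x\leq m$, and hence $t_x=1/\nu_x\geq 1/m$. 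The scale factor is efficiently computable because the $\gamma_{i,x}$ are.

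The main obstacle is the exact synthesis of $P_x$ when the amplitudes $\sqrt{|\gamma_{i,x}|/\nu_x}$ are arbitrary reals. I would resolve this in one of two ways:
\begin{itemize}
\item Assume that the uniformly efficient coefficients lie in a gate-set-friendly form, as they do in our uses, where the coefficients are rational numbers with small denominators such as $\pm 1/2$ and $\pm 1/\sqrt2$.
\item Otherwise, approximate $P_x$ to within exponentially small error. This perturbs the block encoding in operator norm by an amount that can be absorbed into the cryptographically small function $\ea$ in later applications.
\end{itemize}
I would state the lemma with the first convention and note the second as a remark.
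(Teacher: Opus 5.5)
This is essentially the paper's proof: it is the same PREPARE/SELECT construction, with the signs folded into the selected unitaries, $U_x=(P_x^\dag\ot\id)S_x(P_x\ot\id)$, and $\nu_x\le m$ from $|\gamma_{i,x}|\le 1$. Your identity padding on unused ancilla basis states (after reindexing to $\ket{i-1}$ so that it does not overlap at $i=m$) also fixes a small omission in the paper's $S_x$, which is not unitary when $m$ is not a power of two.

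On the exact synthesis of $P_x$, which the paper simply asserts, prefer your approximation remark over the ``gate-set-friendly'' convention. The paper's own uses do not clearly satisfy that convention: the LCU $\tfrac{1}{\sqrt2}(G(e_1)-F(e_1))-X(e_1)$ in the CHSH analysis puts weight $1/(1+\sqrt2)$ on $X(e_1)$, and $\pm 1/\sqrt2$ is not rational in any case.
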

\begin{proof}
    The circuit will act on $k=\ceil{\log(m)}$ ancilla qubits, where $m$ is the maximal number of terms in the LCUs. We first introduce the preparation unitary $P_x$, which given $x$ implements
    \[
    P_x\ket{0} = \sum_{i\in[m]}\sqrt{\frac{|\gamma_{i,x}|}{\nu_x}}\ket{i-1},
    \]
    where $\nu_x=\sum_i|\gamma_{i,x}|$ and $\ket{i-1}$ denotes the computational-basis encoding of the integer $i-1$ on the ancilla register. This operation is efficient since there are only a constant number of terms in the sum. Next we define the controlled application of our unitaries, based on the ancilla register, let
    \[
    S_x\coloneqq \sum_{i\in[m]}\ket{i-1}\bra{i-1}\ot s_{i,x}U_{i,x}
    \]
    where $s_{i,x}=\frac{\gamma_{i,x}}{|\gamma_{i,x}|}$ (and $s_{i,x}=1$ if $\gamma_{i,x}=0$). Again this unitary is uniformly efficient in $x$, because the family of unitaries are uniformly efficient in $x$ and because there are only constantly many of them. The operation
    \[
    U_x=(P_x^\dag\ot\id)S_x(P_x\ot\id)
    \]
    then satisfies
    \[
    (\bra{0}\ot\id)U_x(\ket{0}\ot\id)=\sum_{i\in[m]}\frac{|\gamma_{i,x}|}{\nu_x}\,s_{i,x}U_{i,x}=\frac{1}{\nu_x}A_x,
    \]
    so $U_x$ is a block encoding of $A_x$ with scale factor $t_x=1/\nu_x$. Since $\gamma_{i,x}\in[-1,1]$ and there are only $m$ terms, we have $0<\nu_x\leq m$, hence $|t_x|\geq 1/m$. Both $P_x$ and $S_x$ are efficient procedures taking $x$ as input and implementing a specific quantum circuit, so $\{U_x\}_x$ is a uniformly efficient family of unitaries, which completes the proof.
\end{proof}

\begin{lemma}\label{lemma:QPT-measurable}
    Suppose we have a uniformly efficient family of block encodings for a family of (not necessarily binary) observables $\{A_x\}_x$ with uniformly efficient scale factors $|t_x|\geq1/c$ for some constant $c$ independent of $x$. Then there exists a uniformly efficient POVM family $\{M_{\beta,x}\}_{\beta,x}$ with outcomes $\beta\in\{-1/t_x, 1/t_x\}$, satisfying $|\beta|\leq c = O(1)$, such that for any state $\rho$,
    \[
    \sum_\beta\beta\Tr{M_{\beta,x}\rho}=\Tr{A_x\rho}.
    \]
\end{lemma}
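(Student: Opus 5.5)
The plan is to use the standard Hadamard-test construction on the block encoding. Fix $x$ and let $U_x$ be the block encoding, acting on $k$ ancilla qubits and $\H$, with $(\bra{0^k}\ot\id)U_x(\ket{0^k}\ot\id)=t_xA_x$. Add one control qubit prepared in $\ket{+}$. Apply controlled-$U_x$, with the ancillas initialised to $\ket{0^k}$, then a Hadamard on the control, and measure the control qubit in the computational basis to obtain a bit $r$. Output $\beta=(-1)^r/t_x$; since $t_x$ may be negative, the sign of $t_x$ is absorbed into the labelling. This gives a two-outcome POVM. Its efficiency and uniformity in $x$ follow directly: controlled-$U_x$ is uniformly efficient because $U_x$ is, and $t_x$ is uniformly efficiently computable, so the outcome labels are too. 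The bound $|\beta|=1/|t_x|\le c$ is immediate from the hypothesis.

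The key step is the expectation computation. For a pure input $\ket{\phi}$, the standard Hadamard-test calculation gives
\[
\Pr[r=0]-\Pr[r=1]=\mathrm{Re}\,\bigl(\bra{0^k}\bra{\phi}U_x\ket{0^k}\ket{\phi}\bigr)=\mathrm{Re}\,\bigl(t_x\bra{\phi}A_x\ket{\phi}\bigr)=t_x\bra{\phi}A_x\ket{\phi}.
\]
The last equality holds because $A_x$ is Hermitian and $t_x$ is real, so the expectation is already real. This is exactly where the Hermiticity hypothesis is used: without it we would need a second test with an $S^\dagger$ on the control to recover the imaginary part. By linearity the identity extends to mixed $\rho$. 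Consequently
\[
\sum_\beta\beta\Tr{M_{\beta,x}\rho}=\tfrac{1}{t_x}\bigl(\Pr[r=0]-\Pr[r=1]\bigr)=\Tr{A_x\rho}.
\]
Writing $M_{\beta,x}$ explicitly, we get $M_{\pm1/t_x,x}=\tfrac12(\id\pm t_x\,\mathrm{Re}_{\mathrm{blk}})$, where $\mathrm{Re}_{\mathrm{blk}}$ is the Hermitian part of the ancilla-projected block. This confirms that the two operators are PSD and sum to $\id$, which settles that they form a valid POVM.

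I expect the only real subtlety to be conceptual rather than technical. The ancillas must be freshly initialised to $\ket{0^k}$ and then traced out, and the relevant compression is precisely $(\bra{0^k}\ot\id)U_x(\ket{0^k}\ot\id)$, because the interference term picks out exactly that block. A second point to state is that the construction must not depend on $\rho$, which is true here. The remaining issue is the edge case $t_x<0$, which is handled by the label assignment above; if $A_x$ is not binary, nothing changes, since only the first moment is claimed.
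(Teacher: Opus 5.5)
Your proposal is correct and matches the paper's proof essentially step for step. Both use a Hadamard test with controlled-$U_x$ and the block-encoding ancillas in $\ket{0^k}$, assign outcome label $(-1)^r/t_x$, and use Hermiticity of $A_x$ together with realness of $t_x$ to drop the real part. One small slip: the explicit POVM elements are $\tfrac12(\id\pm t_xA_x)$, because the Hermitian part of the projected block is already $t_xA_x$, so the extra factor of $t_x$ in your displayed formula should be removed.
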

\begin{proof}
    This is a generalization of \cite[Lemma 14]{Natarajan2023} to uniformly efficient families of observables and measurements. Since $\{A_x\}$ are observables, they are Hermitian, and we can replace the canonical phase-estimation approach by the simpler Hadamard test, which yields the same result for Hermitian operators.
    \par
    \medskip
    By the lemma assumptions we know that there exists a uniformly efficient family of block encodings of $\{A_x\}$, the elements of which we can denote by $U_x$. We then construct the binary measurement $M_{\beta,x}$ as follows: it performs the Hadamard test on $\rho$ with a controlled $U_x$ (and the block encoding ancillas initialized in $\ket{0}$), denoting the control qubit measurement outcome as $b$, the associated POVM outcome is then defined as $\beta=(-1)^b/t_x$. This POVM family is uniformly efficient in $x$, since both $t_x$ and $\{U_x\}$ are uniformly efficient and we only incur a multiplicative overhead when implementing the controlled operation. By the Hadamard test identity and the block-encoding relation $(\bra{0}\ot\id)U_x(\ket{0}\ot\id)=t_xA_x$, we have
    \[
    \Pr[b] = \frac{1}{2}\bigl(1+(-1)^b\Re\Tr{U_x(\ket{0}\bra{0}\ot\rho)}\bigr) = \frac{1}{2}\bigl(1+(-1)^b t_x\Tr{A_x\rho}\bigr),
    \]
    where the last step uses that $A_x$ is Hermitian (so the trace is real) and that $t_x$ is real. The two outcomes are $\beta_{\pm}=\pm 1/t_x$, and therefore
    \[
    \sum_\beta\beta\Tr{M_{\beta,x}\rho} = \frac{1}{t_x}\Pr[b=0]-\frac{1}{t_x}\Pr[b=1] = \Tr{A_x\rho}.
    \]
    The hypothesis $|t_x|\geq 1/c$ gives $|\beta|\leq c=O(1)$, which completes the proof.
\end{proof}

\begin{lemma}\label{lemma:efficient-operator-indistinguishability}
    Let $D_1$ and $D_2$ be two distributions over pairs $(x,z)$, where $x$ is any parameter and $z$ is a plaintext Alice question. Suppose $D_1$ and $D_2$ satisfy the following.
    \begin{itemize}
        \item $D_1$ and $D_2$ are efficiently sampleable, with identical marginals on $x$.
        \item For any $x$ in the support of their common marginal, the conditional distributions $D_1(\cdot | x)$ and $D_2(\cdot | x)$ over $z$ are also efficiently sampleable.
    \end{itemize} Then, for any efficient prover, and for any uniformly efficient family of LCUs $\{A_{x,\alpha}\}_{x,\alpha}$ (with $A_{x,\alpha}=A_{x,\alpha}^\dag$), indexed by the pair $(x,\alpha)$, there exists a cryptographically small function $\ea$ such that for all $\lambda\in\Nat$,
    \[
    \left|\E_{(x,z)\sim D_1}\sum_\alpha\Tr{A_{x,\alpha}\pea{z}}-\E_{(x,z)\sim D_2}\sum_\alpha\Tr{A_{x,\alpha}\pea{z}}\right|\leq\ea.\numberthis\label{eqn:eff-indistinguishability}
    \]
\end{lemma}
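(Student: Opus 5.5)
The plan is to chain the three preceding lemmas, turning the Hermitian LCU family into an efficient POVM whose expected outcome equals $\Tr{A_{x,\alpha}\rho}$, and then to invoke \cref{lemma:povm-indistinguishability}. We treat the pair $(x,\alpha)$ as a single classical index throughout; this is legitimate because the prover's outcome $\alpha$ is classical and available to the distinguisher.

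\textbf{Step 1 (block encoding).} Since $\{A_{x,\alpha}\}_{x,\alpha}$ is a uniformly efficient family of LCUs with each $A_{x,\alpha}$ Hermitian, \cref{lemma:block-implementable} applies. It gives a uniformly efficient family of block encodings $\{U_{x,\alpha}\}$ with scale factors $t_{x,\alpha}=1/\nu_{x,\alpha}$, where $\nu_{x,\alpha}=\sum_i|\gamma_{i,x,\alpha}|$. The lemma also gives $|t_{x,\alpha}|\geq 1/m$, and $m$ is a constant fixed by the LCU family.

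\textbf{Step 2 (Hadamard-test measurement).} Next I apply \cref{lemma:QPT-measurable} with $c=m$. Its statement is phrased for observables, but its proof uses only Hermiticity together with the block encoding, so it applies here. This yields a uniformly efficient POVM family $\{M_{\beta,x,\alpha}\}_\beta$ with two outcomes $\beta=\pm\nu_{x,\alpha}$, so $|\beta|\leq m=O(1)$. For every state $\rho$ it satisfies
\[
\sum_\beta \beta\,\Tr{M_{\beta,x,\alpha}\rho}=\Tr{A_{x,\alpha}\rho}.
\]
I then apply this identity to each subnormalized $\pea{z}$; it holds there by linearity. Summing over $\alpha$ and taking the expectation over $D_i$ rewrites each term of \eqref{eqn:eff-indistinguishability} as
\[
\E_{(x,z)\sim D_i}\sum_\alpha\sum_\beta\beta\,\Tr{M_{\beta,x,\alpha}\pea{z}} .
\]

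\textbf{Step 3 (conclude).} The hypotheses on $D_1,D_2$ are exactly those of \cref{lemma:povm-indistinguishability}: both are efficiently sampleable, they share the marginal on $x$, and their conditionals are efficiently sampleable. The POVM family is uniform in $(x,\alpha)$ with bounded outcomes. That lemma therefore gives a single cryptographically small $\ea$ bounding the difference, which is \eqref{eqn:eff-indistinguishability}.

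The main point requiring care is uniformity. The coefficients $\gamma_{i,x,\alpha}$, and hence the preparation unitary and the scale factor $\nu_{x,\alpha}$, must be computable from $(x,\alpha)$ by one circuit whose size does not depend on the parameter. Otherwise the reduction inside \cref{lemma:poly-state-indistinguishability} would not produce a single efficient adversary. This is guaranteed by the definition of a uniformly efficient LCU family together with the constant bound $m$, so I expect no genuine obstacle, only bookkeeping.
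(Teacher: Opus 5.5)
Your proposal is correct and matches the paper's proof: block-encode $A_{x,\alpha}$ via \cref{lemma:block-implementable}, convert it into a Hadamard-test POVM via \cref{lemma:QPT-measurable} with outcomes $\pm\nu_{x,\alpha}$ bounded by $m$, and conclude with \cref{lemma:povm-indistinguishability}, treating $(x,\alpha)$ as a single classical index. Your remarks on linearity for the subnormalized $\pea{z}$ and on uniformity in $(x,\alpha)$ are the right points of care, and the paper handles them the same way.
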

\begin{proof}
    Let $\{M_{\beta,x,\alpha}\}$ be the uniformly efficient family of POVMs guaranteed by \cref{lemma:QPT-measurable} applied to the uniformly efficient family of block encodings for $\{A_{x,\alpha}\}$ given by \cref{lemma:block-implementable}, both instantiated with the classical index $(x,\alpha)$. \Cref{lemma:block-implementable} yields scale factors $t_{x,\alpha}=1/\nu_{x,\alpha}$ with $\nu_{x,\alpha}\leq m$, hence $|t_{x,\alpha}|\geq 1/m$; the Hadamard-test outcomes therefore satisfy $|\beta|=\nu_{x,\alpha}\leq m=O(1)$ uniformly over $(x,\alpha)$, as required by \cref{lemma:povm-indistinguishability}. By construction of the Hadamard test, the left-hand side of \cref{eqn:eff-indistinguishability} equals
    \[
    \left|\E_{(x,z)\sim D_1}\sum_\alpha\sum_\beta\beta\Tr{M_{\beta,x,\alpha}\pea{z}}-\E_{(x,z)\sim D_2}\sum_\alpha\sum_\beta\beta\Tr{M_{\beta,x,\alpha}\pea{z}}\right|,
    \]
    which is cryptographically small by \cref{lemma:povm-indistinguishability}.
\end{proof}

\begin{corollary}\label{cor:efficient-operator-state-switching}
    Let $D_1$ and $D_2$ be two distributions over pairs $(x,z)$, where $x$ is any parameter and $z$ is a plaintext Alice question. Suppose $D_1$ and $D_2$ satisfy the following.
    \begin{itemize}
        \item $D_1$ and $D_2$ are efficiently sampleable, with identical marginals on $x$.
        \item For any $x$ in the support of their common marginal, the conditional distributions $D_1(\cdot | x)$ and $D_2(\cdot | x)$ over $z$ are also efficiently sampleable.
    \end{itemize} Then, for any efficient prover, and for any uniformly efficient family of LCUs $\{A_{x,\alpha}\}_{x,\alpha}$, there exists a cryptographically small function $\ea$ such that for all $\lambda\in\Nat$,
    \[
    \left|\E_{(x,z)\sim D_1}\sum_\alpha\norm{A_{x,\alpha}}_{\pea{z}}^2-\E_{(x,z)\sim D_2}\sum_\alpha\norm{A_{x,\alpha}}_{\pea{z}}^2\right|\leq\ea.\numberthis\label{eqn:norm-indistinguishability}
    \]
\end{corollary}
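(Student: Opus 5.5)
The plan is to reduce \cref{cor:efficient-operator-state-switching} to \cref{lemma:efficient-operator-indistinguishability} by rewriting the squared state-dependent norm as an expectation. By \cref{def:state-dependent-norm},
\[
\norm{A_{x,\alpha}}^2_{\pea{z}}=\Tr{A_{x,\alpha}^\dagger A_{x,\alpha}\pea{z}},
\]
and $H_{x,\alpha}\deq A_{x,\alpha}^\dagger A_{x,\alpha}$ is Hermitian. So it suffices to show that $\{H_{x,\alpha}\}_{x,\alpha}$ is again a uniformly efficient family of Hermitian LCUs. The lemma then applies verbatim with the same distributions $D_1,D_2$, whose efficient sampleability hypotheses are identical.

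To write $H_{x,\alpha}$ as an LCU, I would expand $A_{x,\alpha}=\sum_{i\in[m]}\gamma_{i,x,\alpha}U_{i,x,\alpha}$. This gives
\[
H_{x,\alpha}=\sum_{i,j\in[m]}\gamma_{i,x,\alpha}\gamma_{j,x,\alpha}\,U_{i,x,\alpha}^\dagger U_{j,x,\alpha}.
\]
The count of terms is $m^2$, which is still a constant. The coefficients $\gamma_i\gamma_j$ are real, lie in $[-1,1]$, and are uniformly efficiently computable. Each $U_i^\dagger U_j$ is a uniformly efficient unitary, since inverting an efficient circuit and composing two circuits are both efficient and uniform in $(x,\alpha)$.

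The one subtle point is the nondegeneracy condition $\sum_{i,j}|\gamma_i\gamma_j|=(\sum_i|\gamma_i|)^2\neq 0$, which follows from the corresponding assumption on $A_{x,\alpha}$. Hermiticity of $H_{x,\alpha}$ is automatic. Applying \cref{lemma:efficient-operator-indistinguishability} to $\{H_{x,\alpha}\}$ then gives a cryptographically small $\ea$ bounding exactly \eqref{eqn:norm-indistinguishability}.

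The main thing to check is the bound on the Hadamard-test outcomes. The scale factor obtained from \cref{lemma:block-implementable} is now at least $1/m^2$, which is still $O(1)$, so the outcomes remain bounded by a constant as \cref{lemma:povm-indistinguishability} requires.
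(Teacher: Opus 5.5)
Your proposal is correct and follows the paper's proof essentially verbatim: you rewrite the squared norm as $\Tr{A_{x,\alpha}^\dagger A_{x,\alpha}\pea{z}}$, observe that $A_{x,\alpha}^\dagger A_{x,\alpha}$ is again a uniformly efficient Hermitian LCU with $m^2$ terms, coefficients $\gamma_i\gamma_j\in[-1,1]$ and unitaries $U_i^\dagger U_j$, and then invoke \cref{lemma:efficient-operator-indistinguishability}. The paper leaves implicit your checks of the nondegeneracy condition and of the resulting $O(m^2)$ bound on the Hadamard-test outcomes, and both checks are correct.
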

\begin{proof}
    \Cref{eqn:norm-indistinguishability} is equivalent to
    \[
    \left|\E_{(x,z)\sim D_1}\sum_\alpha\Tr{A_{x,\alpha}^\dag A_{x,\alpha}\pea{z}}-\E_{(x,z)\sim D_2}\sum_\alpha\Tr{A_{x,\alpha}^\dag A_{x,\alpha}\pea{z}}\right|\leq\ea.
    \]
    If we can show that the set of all $C_{x,\alpha} = A_{x,\alpha}^\dag A_{x,\alpha}$ is again a uniformly efficient family of LCUs (indexed by $(x,\alpha)$), we can apply \cref{lemma:efficient-operator-indistinguishability} to $\{C_{x,\alpha}\}$ which completes the proof. We have
    \[
    C_{x,\alpha}=A_{x,\alpha}^\dag A_{x,\alpha} = \sum_{i\in[m]}\sum_{j\in[m]}\gamma_{i,x,\alpha}\gamma_{j,x,\alpha} U^\dag_{i,x,\alpha}U_{j,x,\alpha},
    \]
    which is an LCU with $m^2$ terms, indexed by pairs $(i,j)\in[m]\times[m]$. Each coefficient $\gamma_{i,x,\alpha}\gamma_{j,x,\alpha}$ lies in $[-1,1]$, and since $\{\{U_{i,x,\alpha}\}_{x,\alpha}\}_{i\in[m]}$ is a set of uniformly efficient families of unitaries, their pairwise products (cross-family for the same parameter) again form uniformly efficient families of unitaries. This shows that $\{C_{x,\alpha}\}$ is again a uniformly efficient family of LCUs, so the corollary conclusion immediately follows from \cref{lemma:efficient-operator-indistinguishability} applied to $\{C_{x,\alpha}\}$.
\end{proof}

\Cref{cor:efficient-operator-state-switching} switches the state argument of an LCU $A_{x,\alpha}\in L(\H)$, an operator \emph{on} $\H$. Several of our arguments (e.g.\ \cref{lemma:cliff-delta-self-consistent,lemma:state-rounding-it}) instead need to switch the state argument of a norm $\norm{VU_{x,\alpha}-N_{x,\alpha}V}_\peq^2$, where $V:\H\to\H'$ is the (generally non-unitary) isometry of \cref{def:clifford-isometry} and $U_{x,\alpha},N_{x,\alpha}$ are unitaries on $\H,\H'$ respectively: here the object being switched is not literally an operator on $\H$, so \cref{cor:efficient-operator-state-switching} does not apply directly. The following isometry version supplies the missing step.

\begin{corollary}[Isometry version of \cref{cor:efficient-operator-state-switching}]\label{cor:efficient-isometry-state-switching}
    Let $V:\H\to\H'$ be a QPT-implementable isometry: there exist an efficiently-sized ancilla space $\mathrm{anc}$, an efficient unitary dilation $U_V\in U(\H\ot\mathrm{anc})$, and a fixed efficient embedding $\H'\hookrightarrow\H\ot\mathrm{anc}$, such that $U_V(\ket\psi\ot\ket0_{\mathrm{anc}})=V\ket\psi$ for all $\ket\psi\in\H$. (This holds for the isometry of \cref{def:clifford-isometry}, by the same argument as in the proof of \cref{cor:gh-efficient}: the quantum Fourier transform of \cref{subsec:QFT} and the controlled-$f$ built from the prover's operators are both efficient.) Let $D_1,D_2$ be as in \cref{cor:efficient-operator-state-switching}, let $\{U_{x,\alpha}\}_{x,\alpha}$ be a uniformly efficient family of unitaries on $\H$, and let $\{N_{x,\alpha}\}_{x,\alpha}$ be a uniformly efficient family of unitaries on $\H'$. Then, for any efficient prover, there exists a cryptographically small function $\ea$ such that for all $\lambda\in\Nat$,
    \[
    \left|\E_{(x,z)\sim D_1}\sum_\alpha\norm{VU_{x,\alpha}-N_{x,\alpha}V}^2_{\pea{z}}-\E_{(x,z)\sim D_2}\sum_\alpha\norm{VU_{x,\alpha}-N_{x,\alpha}V}^2_{\pea{z}}\right|\leq\ea.
    \]
\end{corollary}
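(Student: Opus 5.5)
The plan is to rewrite the norm as the expectation of a Hermitian operator on the enlarged space $\H\ot\mathrm{anc}$, and then apply \cref{lemma:efficient-operator-indistinguishability} (or \cref{cor:efficient-operator-state-switching}) to states padded with $\ket0_{\mathrm{anc}}$. Using the dilation, $V\ket\phi=U_V(\ket\phi\ot\ket0)$, and identifying $\H'$ with its image in $\H\ot\mathrm{anc}$, I will extend $N_{x,\alpha}$ to a unitary $\tilde N_{x,\alpha}$ on $\H\ot\mathrm{anc}$. For instance, take it to act as $N_{x,\alpha}$ on the embedded copy of $\H'$ and as the identity on its complement. This extension is efficient whenever the embedding is efficient and $N_{x,\alpha}$ is implemented on the embedded register, which is the case for the Gowers--Hatami isometry. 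Then
\[
\norm{VU_{x,\alpha}-N_{x,\alpha}V}^2_{\rho}=\norm{\bigl(U_V(U_{x,\alpha}\ot\id)-\tilde N_{x,\alpha}U_V\bigr)(\id\ot\ket0)}^2_{\rho}=\norm{U_V(U_{x,\alpha}\ot\id)-\tilde N_{x,\alpha}U_V}^2_{\rho\ot\ketbra00}.
\]
Here I use that $\tilde N_{x,\alpha}$ agrees with $N_{x,\alpha}$ on the range of $V$.

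The operator $A_{x,\alpha}\deq U_V(U_{x,\alpha}\ot\id)-\tilde N_{x,\alpha}U_V$ is a uniformly efficient LCU with $m=2$, coefficients $\pm1$, and unitaries that are products of efficient, uniformly parametrized circuits. I then apply \cref{cor:efficient-operator-state-switching} with the prover's states replaced by $\pea{z}\ot\ketbra00_{\mathrm{anc}}$. Formally, I view the modified prover as one that appends the ancilla in $\ket0$ after its first-round operation. This is still an efficient prover in the sense of \cref{section:modeling}, because appending a fixed ancilla is efficient and does not depend on the ciphertext, so its post-Alice states are exactly $\pea{z}\ot\ketbra00$. Alternatively, I can fold the ancilla preparation into the distinguishing POVM inside \cref{lemma:poly-state-indistinguishability}, since the Hadamard-test measurement of \cref{lemma:QPT-measurable} may prepare its own ancillas. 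The two expectations then differ by a cryptographically small $\ea$.

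I expect the main obstacle to be checking that everything remains uniformly efficient and properly normalized. Concretely: (i) the embedding $\H'\hookrightarrow\H\ot\mathrm{anc}$ must let $\tilde N_{x,\alpha}$ be a genuine efficient unitary, not merely a partial isometry; and (ii) the LCU $A_{x,\alpha}^\dag A_{x,\alpha}$ must have $O(1)$ terms with bounded coefficients, so that \cref{lemma:block-implementable} gives a constant scale factor. I would handle (i) by choosing the extension via a controlled operation conditioned on the embedding flag, and (ii) by expanding $A_{x,\alpha}^\dag A_{x,\alpha}=2\id-2\,\mathrm{Re}\,(\cdots)$, exactly as in the proof of \cref{cor:efficient-operator-state-switching}.
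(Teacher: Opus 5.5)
Your proposal is correct, and it takes a slightly different route from the paper. You lift the norm to the dilated space. Using $(U_V(U_{x,\alpha}\ot\id)-\tilde N_{x,\alpha}U_V)(\id\ot\ket0)=VU_{x,\alpha}-N_{x,\alpha}V$, the quantity becomes $\norm{A_{x,\alpha}}^2_{\pea{z}\ot\ketbra{0}{0}}$ for a two-term LCU $A_{x,\alpha}$ on $\H\ot\mathrm{anc}$. You then invoke \cref{cor:efficient-operator-state-switching} as a black box on the padded states.

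The paper instead stays on $\H$. It expands the norm as $2-\sum_\alpha\Tr[(U^\dag_{x,\alpha}B_{x,\alpha}+B^\dag_{x,\alpha}U_{x,\alpha})\pea{z}]$ with $B_{x,\alpha}=V^\dag N_{x,\alpha}V$. It observes that $\hat U_{x,\alpha}=(U^\dag_{x,\alpha}\ot\id)U_V^\dag\tilde N_{x,\alpha}U_V$ is a scale-$1$ block encoding of the non-Hermitian operator $U^\dag_{x,\alpha}B_{x,\alpha}$. It measures the real part of that operator with a Hadamard test and then applies \cref{lemma:povm-indistinguishability}.

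The two arguments use the same mechanism. The block-encoding ancilla in $\ket0$ is exactly your padded ancilla, and $A^\dag_{x,\alpha}A_{x,\alpha}=2\id-\hat U_{x,\alpha}-\hat U^\dag_{x,\alpha}$. Your version gains by reusing an already-proved corollary, and it avoids applying \cref{lemma:QPT-measurable} to a non-Hermitian operator; the paper handles that point with a remark that the Hadamard test returns the real part. The cost is that you must justify the padded state. Both of your justifications are fine: modifying the prover to append $\ket0$ after its first round, or letting the distinguisher in \cref{lemma:poly-state-indistinguishability} prepare the ancilla itself.

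Your obstacle (ii) is already handled inside the proof of \cref{cor:efficient-operator-state-switching}. Obstacle (i) is the same efficiency assumption on the embedding that the paper makes.
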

\begin{proof}
 Write $B_{x,\alpha}\coloneqq V^\dag N_{x,\alpha}V\in L(\H)$. Since $V^\dag V=\id_\H$ and $U_{x,\alpha},N_{x,\alpha}$ are unitary, we can expand the squared norm to get
    \begin{align}
    \sum_\alpha \norm{VU_{x,\alpha}-N_{x,\alpha}V}^2_{\psi_\alpha^{\Enc(x)}} &= \sum_\alpha \Tr{(U_{x,\alpha}^\dag V^\dag - V^\dag N_{x,\alpha}^\dag)(VU_{x,\alpha}-N_{x,\alpha}V)\psi_\alpha^{\Enc(x)}} \nonumber \\
    &=2\underbrace{\sum_\alpha \Tr{\psi_\alpha^{\Enc(x)}}}_{=1} -\sum_\alpha \Tr{\bigl(U_{x,\alpha}^\dag B_{x,\alpha}+B_{x,\alpha}^\dag U_{x,\alpha}\bigr)\psi_\alpha^{\Enc(x)}}. \label{eq:vu-minus-nv}
    \end{align}
    Our main task is to show that the second term in the last line corresponds to an efficient measurement of the state, such that we can apply \cref{lemma:povm-indistinguishability} to relate its expectation value on the two distributions $D_1, D_2$.
    \par
    \medskip
    To do so, extend $N_{x,\alpha}$ to a unitary $\tilde N_{x,\alpha}\coloneqq N_{x,\alpha}\oplus\id$ on $\H\ot\mathrm{anc}$, acting as the identity on the orthogonal complement of the fixed embedding $\H'\hookrightarrow\H\ot\mathrm{anc}$; since this embedding is fixed and efficient, $\{\tilde N_{x,\alpha}\}$ is again uniformly efficient. Define
    \[
    \hat U_{x,\alpha}\coloneqq (U_{x,\alpha}^\dag\ot\id_{\mathrm{anc}})\,U_V^\dag\,\tilde N_{x,\alpha}\,U_V\;\in\;U(\H\ot\mathrm{anc}),
    \]
    which is efficient, since $U_{x,\alpha}, U_V$ and $\tilde N_{x,\alpha}$ are. Using $U_V(\ket\phi\ot\ket0_{\mathrm{anc}})=V\ket\phi\in\H'$ for all $\ket\phi\in\H$, and that $\tilde N_{x,\alpha}$ restricted to $\H'$ acts as $N_{x,\alpha}$, a direct computation gives, for all $\ket\phi,\ket\chi\in\H$,
    \[
    (\bra\phi\ot\bra0_{\mathrm{anc}})\,\hat U_{x,\alpha}\,(\ket\chi\ot\ket0_{\mathrm{anc}}) = \bra\phi\,U_{x,\alpha}^\dag V^\dag N_{x,\alpha}V\,\ket\chi = \bra\phi\, U_{x,\alpha}^\dag B_{x,\alpha}\,\ket\chi,
    \]
    so $\hat U_{x,\alpha}$ is an efficient, scale-$1$ block encoding of $U_{x,\alpha}^\dag B_{x,\alpha}\in L(\H)$, with ancilla register $\mathrm{anc}$. By \cref{lemma:QPT-measurable} (with $c=1$), this yields a uniformly efficient family of two-outcome ($\beta=\pm1$) Hadamard-test POVMs $\{M_{\beta,x,\alpha}\}$ with $\sum_\beta\beta\Tr{M_{\beta,x,\alpha}\rho}=\Tr{U_{x,\alpha}^\dag B_{x,\alpha}\,\rho}$ for every state $\rho$; since the Hadamard test measures the real part of this (complex, in general) quantity, the same family also satisfies \[ \sum_\beta\beta\Tr{M_{\beta,x,\alpha}\rho}=\Tr{\tfrac12\bigl(U_{x,\alpha}^\dag B_{x,\alpha}+B_{x,\alpha}^\dag U_{x,\alpha}\bigr)\rho}.\]

    Therefore, by \cref{eq:vu-minus-nv}, we have
    \[
    \sum_\alpha\norm{VU_{x,\alpha}-N_{x,\alpha}V}^2_{\pea{z}} = 2-\sum_\alpha\sum_\beta\beta\Tr{M_{\beta,x,\alpha}\pea{z}}.
    \]
    The first term does not depend on $z$, so applying \cref{lemma:povm-indistinguishability} to the uniformly efficient POVM family $\{M_{\beta,x,\alpha}\}$ bounds the difference of the second term between $D_1$ and $D_2$ by a single cryptographically small function, which completes the proof.
\end{proof}

\subsubsection{QFHE Overhead}\label{subsec:qfhe-overhead}
For our construction we will instantiate QFHE from Brakerski's scheme \cite{Brakerski2018} (which improves the original proposal by Mahadev \cite{Mahadev2020}). We choose a specific scheme to be able to determine the overall resource requirements, which includes all QFHE encryptions, homomorphic evaluations, etc. We believe that our construction is still efficient if we replace the QFHE scheme with the one from \cite{Gupte2024} but we did not go through the resource estimation in detail for that protocol. 
\par 
\medskip
\begin{table}[ht]
\centering
\renewcommand{\arraystretch}{1.25}
\setlength{\tabcolsep}{6pt}
\begin{tabular}{@{}l l l@{}}
\toprule
\textbf{Category} & \textbf{Resource} & \textbf{Asymptotic cost} \\
\midrule
\multicolumn{3}{@{}c}{\textit{Setting:} classical input of $c$ bits;\ security parameter $\lambda$;\ leveled scheme;} \\
\multicolumn{3}{@{}c}{Clifford + Toffoli circuit of size $n$ with Toffoli depth $L\leq O(\mathrm{polylog}(n))$ and width $q$.} \\
\midrule
\multirow{4}{*}{Communication}
 & Encrypted input \ ($V\!\to\!P$)            & $O(c\cdot\poly(\lambda))$ bits \\
 & Evaluation keys \ ($V\!\to\!P$)     & $O(L\cdot\poly(\lambda))$ bits \\
 & Encrypted output \ ($P\!\to\!V$)           & $O(q\cdot\poly(\lambda))$ bits \\
 & \textbf{Total}                              & $O\bigl((c+L+q)\cdot\mathrm{poly}(\lambda)\bigr)$ bits \\
\midrule
\multirow{3}{*}{Prover}
 & Quantum circuit size                        & $O(n\cdot\mathrm{poly}(\lambda))$ gates \\
 & Ancilla qubits per Toffoli (reusable)       & $\mathrm{poly}(\lambda)$ \\
 & Time complexity                              & $O(n\cdot\mathrm{poly}(\lambda))$\\
\midrule
Verifier
 & Time complexity                          & $O((c+q)\cdot\mathrm{poly}(\lambda))$  \\
\bottomrule
\end{tabular}
\caption{%
Resource requirements for Brakerski's QFHE protocol~\cite{Brakerski2018}, leveled instantiation under LWE with polynomial modulus and no circular
security assumption.}
\label{table:brakerski-overhead}
\end{table}

\cref{table:brakerski-overhead} tells us that the QFHE encryption only introduces a multiplicative overhead of $\poly(\lambda)$ in all resources. Unfortunately, this specific scheme requires the circuit, which will be evaluated on the encrypted input, to be compiled in the Clifford + Toffoli gate set. The circuit in question is the honest Alice strategy, which consists of a circuit taking in the verifier's question and Alice's quantum state as inputs and performing a specific measurement on Alice's state, depending on the question.
\par
\medskip
In our rigidity test (\cref{protocol:cliff-test} and subtests), there are only a few different measurement types which the honest prover must implement. The most important ones are:
\begin{itemize}
    \item \textbf{Large-answer measurements:} Every qubit with index $i\in[n]$ gets measured in the basis $\tilde W_i$, with $\tilde W\in\{X,Y,Z,F,G\}^n$. Pure-basis measurements are a special case of this where $\tilde W\in\{X^n,Y^n,Z^n,F^n,G^n\}$. This measurement type appears in protocols \ref{protocol:pure-vs-mixed-test}, \ref{protocol:small-large-answer-test}, \ref{protocol:product-test} and \ref{protocol:cliff-test}.
    \item \textbf{Small-answer measurements:} The prover can be asked to perform up to three simultaneous (compatible) binary measurements. For example, in the commutation test, observables $A$ and $B$ are simultaneously measured. This measurement type appears in protocols \ref{protocol:commutation-test}, \ref{protocol:anti-commutation-test}, \ref{protocol:conjugation-test-mod}, \ref{protocol:small-large-answer-test} and \ref{protocol:product-test}.
    \item \textbf{Bell-basis measurements:} In the Clifford test (\cref{protocol:cliff-test}) the prover is asked to measure all qubits in one of the two brick-wall pairings in the Bell basis.
\end{itemize}
One can come up with a general encoding of the verifier question, which specifies the measurement type and some type-specific parameters (such as the basis or pairing choice). The honest prover strategy---acting on an encoded question and the prover's state---is highly parallelizable (given $O(g)$ ancillas) since we can ensure that different measurement blocks act on distinct qubits (i.e. qubits are separately handled), that there is only a constant depth of intersecting operations within one measurement block and
that the rest of the operations is non-intersecting. When using the natural gate set (all the operations needed for the strategy are allowed), the circuit implementing Alice's honest strategy has constant depth.
\par 
\medskip
To allow for homomorphic evaluation, the circuit needs to be compiled to the Clifford + Toffoli gate set. By the Solovay--Kitaev construction for approximating arbitrary unitaries, this will increase the depth by at most a factor of $O(\mathrm{polylog}(g))$ (see \cite{Kuperberg2023} for an efficient algorithm), since there are $g$ gates in the circuit and approximating any gate up to precision $\eps$ introduces an overhead of $\mathrm{polylog}(1/\eps)$. If we want to achieve a constant overall approximation precision, this $\eps$ will depend on $g$, which justifies the claimed blow-up. The exact circuit depth can be determined a priori and it is small ($O(\mathrm{polylog}(g))$ which is $O(\poly(\lambda))$ under both hardness assumptions), which allows us to use a leveled QFHE scheme, which does not require the circular security assumption, but where the verifier needs to send as many evaluation keys as there are levels (which are defined as a sequence of Cliffords followed by one layer of non-intersecting Toffolis) in the circuit.
\par 
\medskip
Not every gate in the honest strategy needs to be approximated, the only non-Clifford, non-Toffoli gates that appear are $CH$, $CP$ and $CT$, where the first two can be exactly implemented (see \cite{Bera2008} for $CH$ and use an ancilla with phase kick-back for $CP$) and only the latter needs to be approximated.
\par 
\medskip
If we insert a circuit size of $O(g\poly(\log g))$ (which also trivially bounds the width\footnote{True width is $O(g)+\poly(\lambda)$ since the Toffoli ancillas are reusable.}) and classical input size of $O(g)$ into \cref{table:brakerski-overhead}, we see that the overall resource cost for the QFHE step is upper-bounded by $O(\poly(\lambda)\mathrm{polylog}(g)g)$. Under polynomial hardness of LWE one must take $\lambda=g^{\Omega(1)}$ to accommodate honest computations, so the overhead becomes $O(g^{1+\eps})$ for every $\eps>0$. Under sub-exponential hardness, one can take $\lambda=(\log g)^{\Theta(1/\delta)}$ and write the cost as $\widetilde{O}(g)$.

\subsection{Approximate group representations}\label{sec:stability-theorem}
A stability theorem is a result of the following form: if a collection of objects approximately satisfies a certain set of properties, then it can be ``rounded'' to another collection of objects that satisfy these properties exactly.
We will be concerned with the stability of group representations, i.e.\ results that show that if a collection of operators approximately satisfies the relations of a group representation, then there exists another exact representation that is, in a suitable sense, close to the approximate representation.
Such stability results have been shown in generality by Gowers and Hatami~\cite{Gowers2017} as well as de Chiffre~et al.~\cite{DeChiffre2019}.
\par
\medskip
For our results, we need two modifications of these existing stability theorems: first, we show that group representations are stable even when averages are taken over arbitrary measures on the underlying group.
This also provides a much simpler proof of a weaker version of the main result from~\cite{Gowers2017} using only elementary facts from quantum information theory, as shown in \cref{theorem:GH}. Second, we introduce an explicit expression for the isometry, which is defined in \cref{def:gh-isometry}.

\begin{definition}[GH Isometry]\label{def:gh-isometry}
    Let $G$ be a finite group, $U_{\mathrm{QFT}}$ be the quantum Fourier transform over that group and $f : G \to U(\H)$. Then the Gowers--Hatami (GH) isometry is the isometry $V : \H \to \H'$ defined as:
    \[
    V \coloneqq \frac{1}{|G|} \sum_{u, t\in G} U_{\mathrm{QFT}}\ket{t}\otimes f(u t^{-1})\otimes\ket{u}.
    \] 
\end{definition}
\begin{theorem}[Strengthening of Theorem 3.1 in \cite{Metger2024}]\label{theorem:GH}
Let $G$ be a finite group and $f : G \to U(\H)$. Then for all $g\in G$ and all $\rho\in\mathrm{Pos}(\H)$, it holds that
$$ \frac{1}{|G|}\sum_{h\in G} \| f(h)f(g) - f(hg) \|^2_{\rho} = \| Vf(g) - \pi(g)V \|^2_{\rho}. $$
where $V : \H \to \H'$ is the GH isometry from \cref{def:gh-isometry} and $\pi : G \to U(\H')$ is the direct sum of all irreducible representations $\rho_\mu: G \to U(\H_\mu)$ of $G$ 
$$
    \pi(g)\deq\bigoplus_\mu \rho_\mu(g)\ot\id.
$$
\end{theorem}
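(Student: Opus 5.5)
The plan is a direct computation with the explicit isometry $V$ of \cref{def:gh-isometry}. The essential ingredient is the block-diagonalization identity from \cref{subsec:QFT}:
\[
U_{\mathrm{QFT}}\,\pi_L(g)\,U_{\mathrm{QFT}}^\dagger=\bigoplus_\mu\rho_\mu(g)\ot\id,
\]
where $\pi_L(g)=\sum_h\ketbra{gh}{h}$ is the left regular representation. Reading this as $\pi(g)\,U_{\mathrm{QFT}}\ket{t}=U_{\mathrm{QFT}}\ket{gt}$, the operator $\pi(g)$ acts only on the first tensor factor of $\H'$, as $\pi(g)\ot\id_\H\ot\id$. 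It therefore simply shifts the label $t$ in $V$.

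First, I check that $V$ is an isometry. The vectors $\{U_{\mathrm{QFT}}\ket{t}\}_t$ are orthonormal because $U_{\mathrm{QFT}}$ is unitary (shown in \cref{subsec:QFT}), and $\{\ket{u}\}_u$ is orthonormal. Hence
\[
V^\dagger V=\frac{1}{|G|^2}\sum_{t,u}f(ut^{-1})^\dagger f(ut^{-1})=\id,
\]
using only that each $f(\cdot)$ is unitary.

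Second, I compute both sides of the difference. Directly from the definition,
\[
Vf(g)=\frac{1}{|G|}\sum_{t,u}U_{\mathrm{QFT}}\ket{t}\ot f(ut^{-1})f(g)\ot\ket{u}.
\]
Using the shift property,
\[
\pi(g)V=\frac{1}{|G|}\sum_{t,u}U_{\mathrm{QFT}}\ket{gt}\ot f(ut^{-1})\ot\ket{u}.
\]
Substituting $t'=gt$, so that $t^{-1}=t'^{-1}g$, and then renaming $t'$ back to $t$, this becomes
\[
\pi(g)V=\frac{1}{|G|}\sum_{t,u}U_{\mathrm{QFT}}\ket{t}\ot f(ut^{-1}g)\ot\ket{u}.
\]
Subtracting the two expressions gives
\[
Vf(g)-\pi(g)V=\frac{1}{|G|}\sum_{t,u}U_{\mathrm{QFT}}\ket{t}\ot\bigl(f(ut^{-1})f(g)-f(ut^{-1}g)\bigr)\ot\ket{u}.
\]

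Third, I evaluate $\|\cdot\|_\rho^2=\Tr{(\cdot)^\dagger(\cdot)\rho}$. By the orthonormality of both outer registers, the cross terms with $(t,u)\neq(t',u')$ vanish. This leaves
\[
\frac{1}{|G|^2}\sum_{t,u}\|f(ut^{-1})f(g)-f(ut^{-1}g)\|_\rho^2 .
\]
Now change variables to $h=ut^{-1}$. For each fixed $t$, the map $u\mapsto ut^{-1}$ is a bijection of $G$, so each $h$ is hit exactly $|G|$ times as $(t,u)$ ranges over $G\times G$. This produces the claimed $\frac{1}{|G|}\sum_h\|f(h)f(g)-f(hg)\|_\rho^2$.

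The only step that needs real care is the shift identity $\pi(g)U_{\mathrm{QFT}}\ket{t}=U_{\mathrm{QFT}}\ket{gt}$. It requires that the irrep convention and the multiplicity factor $\ot\id_{d_\mu}$ in $\pi$ match exactly the block structure that $U_{\mathrm{QFT}}$ produces; the rest is bookkeeping. I would verify this by rearranging the identity $U_{\mathrm{QFT}}\pi_L(g)U_{\mathrm{QFT}}^\dagger=\bigoplus_\mu\rho_\mu(g)\ot\id_{d_\mu}$ derived in \cref{subsec:QFT} and applying it to $U_{\mathrm{QFT}}\ket{t}$.
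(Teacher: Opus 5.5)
Your proposal is correct and is essentially the paper's proof: you check $V^\dagger V=\id$ from unitarity of $f$, use the block-diagonalization identity to get $\pi(g)\,U_{\mathrm{QFT}}\ket{t}=U_{\mathrm{QFT}}\ket{gt}$ and relabel the summation variable, then use orthonormality of the outer registers to remove cross terms and count the preimages of $h=ut^{-1}$. The only cosmetic difference is that you evaluate $\|\cdot\|_\rho^2$ directly as $\mathrm{Tr}[(\cdot)^\dagger(\cdot)\rho]$, whereas the paper first does the computation on a pure state and then extends to general $\rho$ by linearity over its eigendecomposition.
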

\begin{proof} 
This proof is in large part inspired by \cite[Theorem 3.1]{Metger2024}, we made the modification of explicitly writing out an isometry to obtain a tighter anti-symmetric distance bound and simplify the proof. In this form the proof consists almost entirely of elementary facts from linear algebra. By embedding the function $f$ inside the isometry, we are essentially just cleverly re-writing the expression to explicitly pull out an exact unitary representation of the group $G$.
\par 
\medskip
Let $\H_G$ be a $|G|$-dimensional Hilbert space, where every basis vector is associated to a group element and define a representation $\pi' : G \to U(\H_G)$ by 
\begin{equation}\label{eqn:exact-repr-definition}
    \pi'(g) = \sum_{h\in G} \ket{gh}\bra{h}.
\end{equation} 

It is clear that $\pi'$ is a unitary representation of $G$ (in fact, $\pi'$ is the left regular representation, written in Dirac notation). Recall from \cref{subsec:QFT} that the quantum Fourier transform block-diagonalizes the left regular representation, and thus
\[
    U_{\mathrm{QFT}}\,\pi'(g)\,U_{\mathrm{QFT}}^\dag = \bigoplus_\mu \rho_\mu(g)\ot\id_{d_\mu},
\]
where $\rho_\mu: G \to U(\H_\mu)$ is an irreducible representation of $G$ with dimension $d_\mu$. As choice for the isometry, we will use the GH isometry from \cref{def:gh-isometry}. The specific form is based on an inspection of the proof of \cite[Theorem 3.1]{Metger2024}. Specifically, after determining the Kraus operators for the channel corresponding to $\phi$ in said proof, we guessed the simplified form of \cref{def:gh-isometry} which yields a stronger bound and a shorter proof. Thus we have an isometry $V: \H\to\H_G\otimes\H\otimes\H_G$ given by
\[
V \coloneqq \frac{1}{|G|} \sum_{u, t\in G} U_{\mathrm{QFT}}\ket{t}\otimes f(u t^{-1})\otimes\ket{u},
\] 
note that $V$ is not necessarily efficient, since it requires computing the quantum Fourier transform of a general group, preparing the uniform superposition over all group element basis states and implementing the inverse and multiplication group operations. It is indeed an isometry since
\begin{align*}
V^\dagger V &= \frac{1}{|G|^2} \Bigg(\sum_{u', t'} \bra{t'}U_{\mathrm{QFT}}^\dag\otimes f(u' (t')^{-1})^\dagger \otimes \bra{u'}\Bigg) \left(\sum_{u, t} U_{\mathrm{QFT}}\ket{t}\otimes f(u t^{-1}) \otimes \ket{u}\right) \\
&= \frac{1}{|G|^2} \sum_{u, t} \overbrace{f(u t^{-1})^\dagger f(u t^{-1})}^{\id}\\
&= \id,
\end{align*}
where the last step follows from the unitarity of $f$.
The following expression will be useful later
\begin{align*}
(U_{\mathrm{QFT}}\pi'(g)U_{\mathrm{QFT}}^\dag\otimes\id \otimes \id_G) V &= \frac{1}{|G|} \sum_{u, t} U_{\mathrm{QFT}}\pi'(g)\ket{t}\otimes f(u t^{-1}) \otimes \ket{u} \\
&= \frac{1}{|G|} \sum_{u, t} U_{\mathrm{QFT}}\ket{gt}\otimes f(u t^{-1}) \otimes \ket{u} \\
&= \frac{1}{|G|} \sum_{u, t} U_{\mathrm{QFT}}\ket{t} \otimes f(u t^{-1}g) \otimes \ket{u},
\end{align*}
where we used the definition of $\pi'(g)$ in the second equality and relabeled $tg\mapsto t$ for the last equality. With this and an explicit expression for the isometry we can make the following observation, for arbitrary $g\in G$
\begin{align*}
&\left\| \left( (U_{\mathrm{QFT}}\pi'(g)U_{\mathrm{QFT}}^\dag\otimes\id \otimes \id_G) V - V f(g) \right) \ket{\psi} \right\|^2 \\
&= \frac{1}{|G|^2}  \Big\| \sum_{u, t}  U_{\mathrm{QFT}}\ket{t}\otimes (f(u t^{-1}g) - f(ut^{-1})f(g))\ket{\psi} \otimes \ket{u} \Big\|^2 \\
&= \frac{1}{|G|^2} \sum_{u, t} \left\|(f(u t^{-1}g) - f(ut^{-1})f(g))\ket{\psi} \right\|^2 \\
&= \frac{1}{|G|} \sum_{h\in G} \| (f(h g) - f(h) f(g)) \ket{\psi} \|^2,
\end{align*}
where we used left unitary invariance, the orthogonality of the group element basis and the following elementary fact for the last equality
\[
\sum_{u,t\in G}f(ut^{-1}) = |G|\sum_{h\in G}f(h).
\]
This fact follows from the closure of the group under the group operations and the fact that every group element has a unique inverse. Since any $\rho\in\mathrm{Pos}(\H)$ is also Hermitian, it can be written in its orthonormal eigenbasis as
\[
\rho = \sum_i \lambda_i\ket{\psi_i}\bra{\psi_i},
\]
with real eigenvalues $\lambda_i$. By the linearity in the state of the state-dependent (semi) norm and its equivalence to the Euclidean norm for pure states, we have 
\begin{align*}
\left\| \pi(g) V - V f(g) \right\|^2_\rho &= \sum_i \lambda_i \left\| \pi(g) V - V f(g) \right\|^2_{\psi_i}\\
&= \sum_i \frac{\lambda_i}{|G|}\sum_{h\in G}  \| f(h g) - f(h) f(g)\|^2_{\psi_i}\\
&= \frac{1}{|G|}\sum_{h\in G} \| f(h g) - f(h) f(g)\|^2_\rho
\end{align*}
with $\pi: G\to U(\H')$, $\pi(g)\coloneqq U_{\mathrm{QFT}}\pi'(g)U_{\mathrm{QFT}}^\dag\otimes\id \otimes \id_G$, this shows the theorem claim and concludes the proof.
\end{proof}

\begin{remark}
    If $f$ is \emph{exactly} right-multiplicative over some $\mu$, i.e.\ if $f(h)f(g) = f(hg)$ for all $g \in \mathrm{supp}(\mu)$ and for all $h \in G$, then $Vf(g) = \pi(g) V$ for all $g \in \mathrm{supp}(\mu)$. 
\end{remark}

\begin{corollary}\label{cor:gh-efficient}
Let $G$ be a finite group and $f : G \to U(\H)$. If the quantum Fourier transform over $G$, the group operations (including inversion) and a controlled-$f$ are QPT-implementable in $\log|G|$ and $\lambda$, then there exists a finite-dimensional Hilbert space $\H'$ such that the GH isometry $V : \H \to \H'$ from \cref{def:gh-isometry} is QPT-implementable, and it holds for all $g\in G$ and all $\rho\in\mathrm{Pos}(\H)$ that
$$ \frac{1}{|G|}\sum_{h\in G} \| f(h)f(g) - f(hg) \|^2_{\rho} = \| Vf(g) - \pi(g)V \|^2_{\rho}, $$
where $\pi : G \to U(\H')$ is the direct sum of all irreducible representations $\rho_\mu: G \to U(\H_\mu)$ of $G$
$$
    \pi(g)\deq\bigoplus_\mu \rho_\mu(g)\ot\id.
$$
\end{corollary}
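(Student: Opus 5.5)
The identity is exactly \cref{theorem:GH}, which holds for any finite group, any $f:G\to U(\H)$ and any $\rho\in\Pos(\H)$, with the same GH isometry $V$ and the same $\pi$. No new analysis is needed for the equality. What remains is to show that, under the stated hypotheses, $V$ can be implemented by an efficient circuit. My plan is to write down an explicit unitary dilation of $V$ that uses only the assumed primitives, and then invoke \cref{theorem:GH} unchanged.

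The construction acts on the input register $\H$ together with two fresh group registers $\H_G$, each of $\lceil\log|G|\rceil$ qubits initialized to $\ket{0}$. I would carry out the following steps.
\begin{enumerate}
\item Prepare the uniform superposition $\frac{1}{|G|}\sum_{u,t}\ket{t}\ket{u}$ on the two group registers. If $|G|$ is not a power of two, I would use exact amplitude amplification over an efficient encoding of group elements. For $C_n$ this is unnecessary: the normal form $\omega^p x(a)g(b)z(c)$ gives a bijection with $\zo^{3n+2}$, so Hadamards suffice.
\item Compute $ut^{-1}$ into an ancilla using the group inversion and multiplication circuits.
\item Apply controlled-$f$ from this ancilla onto $\H$.
\item Uncompute $ut^{-1}$ by running the inverse of the inversion/multiplication circuit.
\item Apply $U_{\mathrm{QFT}}$ to the $\ket{t}$ register.
\end{enumerate}
This produces exactly $V\ket{\psi}=\frac{1}{|G|}\sum_{u,t}U_{\mathrm{QFT}}\ket{t}\otimes f(ut^{-1})\ket{\psi}\otimes\ket{u}$. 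Every step costs $\poly(\log|G|,\lambda)$, so we can take $\H'=\H_G\otimes\H\otimes\H_G$, with the ancillas returned to $\ket{0}$. This gives the efficient unitary dilation required in \cref{cor:efficient-isometry-state-switching}.

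The main point to get right is the uncomputation in step 4. The value $ut^{-1}$ must be erased cleanly even though controlled-$f$ acted on $\H$ in between. This works because that register was used only as a control and is left unchanged, so running the reverse arithmetic circuit restores it to $\ket{0}$. Any garbage left by the reversible implementations of the group operations must also be uncomputed.

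A secondary point is that $U_{\mathrm{QFT}}$ maps into the irrep-indexed space $\ket{\mu}\ket{i,j}_\mu$, whose encoding has to fit in $O(\log|G|)$ qubits. For $C_n$ this is handled by the explicit indexing $(\ell_0,\ell_1,\mu,r,s)$ in \cref{subsec:QFT}, combined with the efficient circuit of \cref{fig:circuit}.
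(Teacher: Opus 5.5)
Your proposal is correct and matches the paper, whose proof simply invokes \cref{theorem:GH} for the identity and asserts that the stated hypotheses make $V$ efficient; your explicit dilation (uniform superposition, compute $ut^{-1}$, controlled-$f$, uncompute, then $U_{\mathrm{QFT}}$ on the $t$ register) just spells out the efficiency claim that the paper leaves implicit. One minor simplification for general $G$: the uniform superposition $\frac{1}{\sqrt{|G|}}\sum_g\ket{g}$ is exactly $U_{\mathrm{QFT}}^\dagger$ applied to the trivial-irrep basis label, so it already follows from the assumed efficient QFT, and you need neither amplitude amplification nor a membership test on the encoding.
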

\begin{proof}
    This follows from \cref{theorem:GH}, where the additional requirements ensure that the isometry $V$ (from \cref{def:gh-isometry}) can be efficiently implemented.
\end{proof}

\section{Rigidity}\label{chapter:rigidity}

\subsection{Protocols}\label{sec:protocols}
The Clifford test (\cref{protocol:cliff-test}) is built hierarchically from a small set of reusable subtests. \Cref{fig:cliff-structure} gives an overview of this structure: each box is a (sub)test, and an arrow points from a test to every subtest it invokes. The remainder of this section defines each of these protocols, working bottom-up from the primitive tests to the full Clifford test.
\par

\paragraph{Convention (automatic consistency test).} The consistency test (\cref{protocol:consistency-test}) never appears explicitly in the protocol boxes below. Instead, we wrap only \emph{named} tests which themselves send questions to the provers, i.e. the \emph{leaf} tests $\textprotocol{com}$, $\textprotocol{ac}$, $\textprotocol{slc}$, $\textprotocol{prel}$, $\textprotocol{mbt}$ and $\textprotocol{conj}$. Whenever such a leaf $T$ is invoked as a subtest, it is executed as $\textprotocol{con}(T)$: with probability $1/2$ the verifier plays $T$ and with probability $1/2$ it sends a question sampled from either the Alice-marginal or the Bob-marginal of $T$ (each with equal probability) to both players and checks that the answers agree. Named tests that only dispatch to other named tests---$\textprotocol{crel}$, $\textprotocol{conj-cliff}$ and $\textprotocol{cliff-group}$---are executed unwrapped, as are the top-level tests (the Clifford test (\cref{protocol:cliff-test}) and the verification protocol (\cref{protocol:verification})), the subtests specified inline within them, and the invocation of the Clifford test inside the verification protocol. This convention has two consequences that we use throughout. For completeness: the honest strategy (\cref{table:honest-prover}) passes every consistency check with probability $1$, because Alice honestly measures the transposed observables on her halves of the shared EPR pairs, so her outcomes agree with Bob's on identical questions; since moreover every wrapped leaf has perfect honest completeness, the wrapping leaves all quoted completeness values---in particular $\omega^*=\tfrac16(5+\cos^2\tfrac{\pi}{8})$ for the Clifford test, whose only imperfect item (the inline CHSH subtest) is unwrapped---unchanged. For soundness: a prover that succeeds with probability $1-\eps$ in an invocation of a leaf test $T$ (which, by the convention, is an invocation of $\textprotocol{con}(T)$) succeeds with probability $1-2\eps$ in $T$ itself and, by \cref{lemma:compiled-consistency}, every question distribution appearing in $T$ is $\eps$-self-consistent. Success $1-\eps$ in an unwrapped dispatcher is then success $1-O(\eps)$ in each of its constantly many wrapped leaves. All constant-factor bookkeeping arising from this is absorbed into the $O(\cdot)$ notation. Whenever a soundness lemma below assumes success probability $1-\eps$ in a leaf test, it is to be read as success probability $1-\eps$ in its consistency-wrapped execution.

\begin{figure}[H]
    \centering
    \resizebox{\textwidth}{!}{%
    \begin{tikzpicture}[
        font=\footnotesize,
        root/.style={draw, line width=0.9pt, rounded corners, align=center,
                     fill=SteelBlue!22, minimum width=2.1cm, minimum height=0.95cm, inner sep=4pt},
        disp/.style={draw, rounded corners, align=center,
                     fill=SteelBlue!7, minimum width=1.9cm, minimum height=0.95cm, inner sep=4pt},
        proto/.style={draw, double, double distance=1.15pt, rounded corners, align=center,
                      fill=SteelBlue!7, minimum width=1.9cm, minimum height=0.95cm, inner sep=4pt},
        prim/.style={draw, double, double distance=1.15pt, rounded corners, align=center,
                     fill=black!5, minimum width=1.7cm, minimum height=0.95cm, inner sep=4pt},
        inl/.style={draw, dashed, rounded corners, align=center,
                    fill=white, minimum width=1.6cm, minimum height=0.95cm, inner sep=4pt},
        bus/.style={line width=0.7pt, gray!75},
        arr/.style={-{Stealth[length=2.2mm]}, line width=0.7pt, gray!75},
    ]
    \node[root] (cliff) at (6.5,0) {\textprotocol{cliff}\\[1pt]{\scriptsize Clifford test}};
    \node[disp] (cgroup) at (4,-2.1)  {\textprotocol{cliff-group}\\[1pt]{\scriptsize Clifford group test}};
    \node[inl]   (bell)   at (6.5,-2.1)  {Bell-basis\\test};
    \node[inl]   (cons)   at (8.7,-2.1) {Sign con.\\test};
    \node[inl]   (chsh)   at (10.9,-2.1) {CHSH\\test};
    \node[prim]  (mbt0)   at (13.1,-2.1) {\textprotocol{mbt} ($\times2$)\\[1pt]{\scriptsize mixed-vs-pure}};
    \node[disp]  (crel)   at (1.6,-4)  {\textprotocol{crel} ($\times2$)\\[1pt]{\scriptsize comm.\ relation}};
    \node[prim]  (prel)   at (4,-4)  {\textprotocol{prel} ($\times2$)\\[1pt]{\scriptsize prod.\ relation}};
    \node[disp]  (ccliff) at (6.5,-4)  {\textprotocol{conj-cliff}\\[1pt]{\scriptsize Cliff.\ conjugation}};
    \node[prim]  (slcA)   at (0.2,-5.7) {\textprotocol{slc}\\[1pt]{\scriptsize small/large}};
    \node[prim]  (comacA) at (2.2,-5.7)  {\textprotocol{com}\,/\,\textprotocol{ac}};
    \node[proto] (conj)   at (4.4,-5.7)  {\textprotocol{conj} ($\times3$)\\[1pt]{\scriptsize conjugation}};
    \node[prim]  (slcB)   at (6.5,-5.7)  {\textprotocol{slc}};
    \node[prim]  (mbtB)   at (8.4,-5.7)  {\textprotocol{mbt}};
    \node[prim]  (comacB) at (4.4,-7.2)  {\textprotocol{com}\,/\,\textprotocol{ac}};

    \draw[bus] (cliff.south) -- (6.5,-1.0);
    \draw[bus] (4,-1.0) -- (13.1,-1.0);
    \foreach \c in {cgroup,bell,cons,chsh,mbt0} {\draw[arr] (\c.north |- 0,-1.0) -- (\c.north);}
    \draw[bus] (cgroup.south) -- (4,-2.9);
    \draw[bus] (1.6,-2.9) -- (6.5,-2.9);
    \foreach \c in {crel,prel,ccliff} {\draw[arr] (\c.north |- 0,-2.9) -- (\c.north);}
    \draw[bus] (crel.south) -- (1.6,-4.75);
    \draw[bus] (0.2,-4.75) -- (2.2,-4.75);
    \foreach \c in {slcA,comacA} {\draw[arr] (\c.north |- 0,-4.75) -- (\c.north);}
    \draw[bus] (ccliff.south) -- (6.5,-4.75);
    \draw[bus] (4.4,-4.75) -- (8.4,-4.75);
    \foreach \c in {conj,slcB,mbtB} {\draw[arr] (\c.north |- 0,-4.75) -- (\c.north);}
    \draw[arr] (conj.south) -- (comacB.north);
    \end{tikzpicture}%
    }
    \caption{Protocol structure of the Clifford test. An arrow points from each test to the subtests it invokes. Solid boxes are protocols defined in this section; the darkest is the top-level Clifford test (\cref{protocol:cliff-test}), gray boxes are the primitive tests, and dashed boxes are subtests specified inline within \cref{protocol:cliff-test}. Double outlines mark leaf tests that are executed as $\textprotocol{con}(T)$ (cf.\ the consistency convention above); named dispatchers ($\textprotocol{cliff-group}$, $\textprotocol{crel}$, $\textprotocol{conj-cliff}$) have a single outline and are unwrapped, as are the Clifford test itself and the inline subtests. A multiplicity ``$\times k$'' indicates that the protocol is invoked $k$ times (with different operator sets); e.g.\ \textprotocol{crel} and \textprotocol{prel} are each run for the Pauli and the Clifford generators.}
    \label{fig:cliff-structure}
\end{figure}

\begin{protocolbox}{Consistency test}\label{protocol:consistency-test}
\textbf{Notation:} $\textprotocol{con}(T)$\\
\textbf{Input:} any two-player test $T$, with question set $\Qa\times\Qb$ and question distribution $P_T(X,Y)$ for $(X,Y)\in\Qa\times\Qb$.
\par 
\medskip
Calculate $P_{T,A}(\cdot) = \sum_{y\in\Qb} P_T(\cdot,y)$ and $P_{T,B}(\cdot) = \sum_{x\in\Qa} P_T(x,\cdot)$, i.e.\ the Alice and Bob question marginals, and draw $b,s\draw \{0,1\}$.

\begin{enumerate}
        \item If $b = 0$, execute the game $T$.
        \item If $b = 1$, sample $W\sim P_{T,A}$ if $s=0$ and $W\sim P_{T,B}$ if $s=1$, and send this question to both players. Receive answers $a,b\in\{0,1\}^n$ from both provers, respectively and accept if and only if $a=b$. Here $n$ is equal to the number of answer bits expected for the label $W$ in game $T$.
\end{enumerate}

\end{protocolbox}

\begin{protocolbox}{Commutation test}\label{protocol:commutation-test}
\textbf{Notation:} $\textprotocol{com}(A,B)$\\
\textbf{Input:} Two questions $A$ and $B$, corresponding to some Bob observables.

\begin{enumerate}
    \item Send $W=(A,B)$ to Alice and receive answer $a=(a_1,a_2)\in\{0,1\}^2$.
    \item Draw $b\draw[2]$, send $W_b$ to Bob and receive answer $d\in\{0,1\}$. Accept if and only if $a_b = d$, where $W_1=A$ and $W_2=B$.
\end{enumerate}
\end{protocolbox}

\begin{protocolbox}{Anti-commutation test}\label{protocol:anti-commutation-test}
\textbf{Notation:} $\textprotocol{ac}(A,B)$\\
\textbf{Input:} Two questions $A$ and $B$, corresponding to some Bob observables. 
\par 
\medskip
\textit{In this protocol, the verifier plays a version of the Mermin--Peres Magic Square game \cite{Mermin1990,Peres1990} with the provers, in which Alice is asked to measure three observables forming a row or column of the square, and Bob is asked to measure one observable from a single cell of the square. All labels except for cells 2 and 4 should not appear in any other game, thus if the input labels to the anti-commutation game are $X$ and $Y$, the cells will have labels $C_{IX}$, $C_{IY}$, ....}
\[
\begin{tikzpicture}[x=1.1cm,y=1.1cm,
    cellnum/.style={anchor=north west, inner sep=1.5pt, font=\tiny\bfseries},
    cellbody/.style={anchor=center, inner sep=0pt}]
\foreach \i in {0,1,2,3} {
    \draw (\i,0) -- (\i,3);
    \draw (0,\i) -- (3,\i);
}
\node[cellbody] at (0.5,2.5) {$\Gamma^1_{AB}$}; \node[cellnum] at (0,3) {1};
\node[cellbody] at (1.5,2.5) {$\boldsymbol{A}$};  \node[cellnum] at (1,3) {2};
\node[cellbody] at (2.5,2.5) {$\Gamma^3_{AB}$}; \node[cellnum] at (2,3) {3};
\node[cellbody] at (0.5,1.5) {$\boldsymbol{B}$};  \node[cellnum] at (0,2) {4};
\node[cellbody] at (1.5,1.5) {$\Gamma^5_{AB}$}; \node[cellnum] at (1,2) {5};
\node[cellbody] at (2.5,1.5) {$\Gamma^6_{AB}$}; \node[cellnum] at (2,2) {6};
\node[cellbody] at (0.5,0.5) {$\Gamma^7_{AB}$}; \node[cellnum] at (0,1) {7};
\node[cellbody] at (1.5,0.5) {$\Gamma^8_{AB}$}; \node[cellnum] at (1,1) {8};
\node[cellbody] at (2.5,0.5) {$\Gamma^9_{AB}$}; \node[cellnum] at (2,1) {9};
\end{tikzpicture}
\]
\begin{enumerate}
    \item Choose a cell index $j\in[9]$ uniformly at random; choose the row or the column which contains cell $j$ uniformly at random, on the $3\times 3$ grid. Denote the 3 cells in this row or column as $(i_1, i_2, i_3)$ (by construction one of these will be equal to $j$).
    \item Send the triple of labels in the chosen row or column to Alice and receive three answer bits $a=(a_1, a_2, a_3)\in\{0,1\}^3$. For example if $(i_1, i_2, i_3) = (1,2,3)$ then Alice receives $(\Gamma^1_{AB},A,\Gamma^3_{AB})$.
    \item Send the label corresponding to the cell index $j$ to Bob and receive one answer bit $b\in\{0,1\}$. For example, if $j=3$, Bob receives $\Gamma^3_{AB}$.
    \item Accept if $a_l = b$ for $l$ s.t. $i_l = j$ and
    \[
    \begin{cases}
        a_1\oplus a_2\oplus a_3 = 0 & \text{If }(i_1, i_2, i_3)\neq(3,6,9)\\
        a_1\oplus a_2\oplus a_3 = 1 & \text{If }(i_1, i_2, i_3)=(3,6,9)
    \end{cases}
    \]
\end{enumerate}
\end{protocolbox}

\begin{protocolbox}{Conjugation test}\label{protocol:conjugation-test-mod}
\textbf{Notation:} $\textprotocol{conj}(A,B,R)$\\
\textbf{Input:} Three questions $A$, $B$ and $R$, corresponding to equally named Bob observables.
\par 
\medskip
Execute each of the following tests with equal probability (1/3 each).

\begin{enumerate}
    \item \textbf{(Anti)commutation test:} With uniform probability (1/5 each), execute one of the following tests: $\textprotocol{com}(X_R, C_{AB})$, $\textprotocol{ac}(X_C, Z_C)$, $\textprotocol{com}(A, X_C)$, $\textprotocol{com}(B, X_C)$ or $\textprotocol{com}(R, Z_C)$.
    \item \textbf{$C_{AB}$ characterization test:} Ask Alice to measure $A, B, C_{AB}$ or $Z_C$ (with probability 1/4 each) returning answer $a\in\{0,1\}$, and Bob to measure $(A,Z_C)$ or $(B, Z_C)$ (with probability 1/2 each), returning $b=(b_1,b_2)\in\{0,1\}^2$. Reject if either:
    \begin{itemize}
        \item Alice was asked $C_{AB}$, Bob was asked $(A, Z_C)$, $a\neq b_1$ and $b_2=0$;
        \item Alice was asked $C_{AB}$, Bob was asked $(B, Z_C)$, $a\neq b_1$ and $b_2=1$;
        \item Alice was asked $A$, Bob was asked $(A,Z_C)$ and $a\neq b_1$;
        \item Alice was asked $B$, Bob was asked $(B,Z_C)$ and $a\neq b_1$;
        \item Alice was asked $Z_C$ and $a\neq b_2$.
    \end{itemize}
    \item \textbf{$X_R$ characterization test:} Ask Alice to measure $R, X_C$ or $X_R$ (with probability 1/3 each) returning answer $a\in\zo$ and Bob to measure $(R, X_C)$ returning $b=(b_1,b_2)\in\zo^2$. Reject if either:
    \begin{itemize}
        \item Alice was asked $R$ and $a\neq b_1$;
        \item Alice was asked $X_C$ and $a\neq b_2$;
        \item Alice was asked $X_R$ and $a\neq b_1\oplus b_2$.
    \end{itemize}
\end{enumerate}

\end{protocolbox}

\begin{protocolbox}{Mixed-versus-pure basis test} \label{protocol:pure-vs-mixed-test}

\textbf{Notation:} $\textprotocol{mbt}(\Sigma, n, \mu)$\\
\textbf{Input:} A question alphabet $\Sigma$, expected answer length $n$ and a distribution $\mu$ over $\Sigma^n$.
\begin{enumerate}
    \item Sample $W\draw\Sigma$, send it to Alice and receive answer $a\in\zo^n$.
    \item Sample $\tilde W\draw{\mu}\Sigma^n$, send it to Bob and receive answer $b\in\zo^n$. Accept if and only if $a_i=b_i$ for all $i: \tilde W_i=W$.
\end{enumerate}
\end{protocolbox}

\begin{protocolbox}{Small/large answer consistency test} \label{protocol:small-large-answer-test}

\textbf{Notation:} $\textprotocol{slc}(\Theta, n, \mu)$\\
\textbf{Input:} A question set $\Theta$, expected answer length $n$ and bit-string distribution $\mu$.

\begin{enumerate}
    \item Sample $a\draw{\mu}\zo^n$ and $W\draw\Theta$. Send question $(W,a)$ to Alice, receiving answer $x\in\{0,1\}$ and send question $W$ to Bob, obtaining answer $u\in\{0,1\}^n$. Accept iff $u\cdot a=x$.
\end{enumerate}
\end{protocolbox}

\begin{protocolbox}{Commutation relation test} \label{protocol:crel-test}
\textbf{Notation:} $\textprotocol{crel}(\Sigma, n)$\\
\textbf{Input:} A question alphabet $\Sigma$ and expected answer length $n$.
\par 
\medskip
Sample $a,b\draw\zo^n$, $W_1\draw \Sigma$ and $W_2\draw\Sigma\setminus \{W_1\}$. Execute each of the following tests with equal probability.
\begin{enumerate}
    \item \textbf{Small/large answer consistency test:} Execute $\textprotocol{slc}(\Sigma, n,U_{2^n})$.
    \item \textbf{Small answer (anti)commutation test:}
\begin{enumerate}
\item If $a \cdot b = 0$: Execute $\textprotocol{com}((W_1,a),(W_2,b))$.
\item If $a \cdot b = 1$: Execute $\textprotocol{ac}((W_1,a),(W_2,b))$.
\end{enumerate}
\end{enumerate}
\end{protocolbox}

\begin{protocolbox}{Product relation test} \label{protocol:product-test}

\textbf{Notation:} $\textprotocol{prel}(Z,X,Y, n)$\\
\textbf{Input:} Three question symbols $Z$, $X$, $Y$ and expected answer length $n$.
\par 
\medskip
Sample $\alpha\draw\bits^{n-1}$, $y \draw [2]$ and compute the $n$-bit string $a \coloneqq \alpha \: \| \: (\alpha \cdot \alpha)$. Note that $a$ always has even Hamming weight. Let $W=(Z,X)$, so $W_1=Z$ and $W_2=X$.
\par 
\medskip
Execute each of the following tests with equal probability.

\begin{enumerate}
\item Send question $((Z,a),(X,a))$ to Alice, receive outcome $(x_1, x_2) \in \{0,1\}^2$, and send question $Y$ to Bob, obtaining an answer $u\in\{0,1\}^n$. Accept if and only if $u \cdot a  \oplus |a|/2 = x_1 \oplus x_2\pmod 2$.
\item Send question $((Z,a),(X,a))$ to Alice, receive outcome $(x_1, x_2) \in \{0,1\}^2$ and send question $W_y$ to Bob, obtaining an answer $u \in \{0,1\}^n$. Accept if and only if  $x_y= u \cdot a $.
\end{enumerate}

\end{protocolbox} 

\begin{protocolbox}{Clifford conjugation test} \label{protocol:conj-cliff-test}

\textbf{Notation:} $\textprotocol{conj-cliff}(X,Y,G,F, n)$\\
\textbf{Input:} Four question symbols $X$, $Y$, $G$, $F$ and expected answer length $n$.
\par 
\medskip
Sample $a,b\draw\zo^{n}$ and let $A$ be the unique element in $\{X,Y\}^n$, that has $X$'s at the positions $i\in[n]$ where $a_i=0$ and $Y$'s at the positions where $a_i=1$. Note its implicit dependence on $a$.
\par 
\medskip
Execute each of the following tests with equal probability.
\begin{enumerate}
    \item \textbf{Conjugation test:} Execute each of the following tests with equal probability
    \begin{enumerate}
        \item $\textprotocol{conj}((G,a),(F,a),(Y,a))$.
        \item $\textprotocol{conj}((X,a),(Y,a),(G,a))$.
        \item $\textprotocol{conj}((X,b),(A,b),(G,a))$.
    \end{enumerate}
    \item \textbf{Small/large answer consistency test:} Execute $\textprotocol{slc}(\{A,X,Y,G,F\},n, U_{2^n})$.
    \item \textbf{Mixed-versus-pure basis test:} Execute $\textprotocol{mbt}(\{X,Y\},n, U_{2^n})$.
\end{enumerate}

\end{protocolbox}

\begin{protocolbox}{Clifford group test} \label{protocol:cliff-group}
\textbf{Notation:} $\textprotocol{cliff-group}(X,Y,Z,F,G,n)$\\
\textbf{Input:} Five large-answer question symbols, $X$, $Y$, $Z$, $F$ and $G$ and expected answer length $n$.
\par 
\medskip
Execute each of the following tests with equal probability.

\begin{enumerate}
\item \textbf{Pauli (anti)commutation test:} Execute $\textprotocol{crel}(\{X,Y,Z\},n)$.
\item \textbf{Clifford (anti)commutation test:} Execute $\textprotocol{crel}(\{Z,F,G\},n)$.
\item \textbf{Pauli product relation test:} Execute $\textprotocol{prel}(X,Z,Y,n)$.
\item \textbf{Clifford product relation test:} Execute $\textprotocol{prel}(G,Z,F,n)$.
\item \textbf{Clifford conjugation test:} Execute $\textprotocol{conj-cliff}(X,Y,G,F,n)$.
\end{enumerate}
\end{protocolbox}

\begin{protocolbox}{Clifford test}\label{protocol:cliff-test}
\textbf{Notation:} $\textprotocol{cliff}(X, Y, Z, F, G, n)$\\
\textbf{Input:} Observable symbols $X,Y,Z,F,G$ and qubit count $n$.
\par\medskip
Let $P_0, P_1$ be the brick-wall pairings on $[n]$ (see \cref{def:brick-wall}). 
\par
\medskip
Execute each of the following tests with equal probability.

\begin{enumerate}
  \item \textbf{Clifford group test:} Execute $\textprotocol{cliff-group}(X,Y,Z,F,G,n)$.
  \item \textbf{Bell-basis test:}
    Let $W=(X,Z)$. Sample $b,c\draw\zo$.
    \begin{enumerate}
      \item Send $P_b$ to Alice, receiving Bell-measurement outcomes
            $v = (v_0,v_1) \in \zo^{|P_b|}\times\zo^{|P_b|}$.
      \item Send $W_c$ to Bob, receiving outcome $u\in\zo^n$.
      \item Accept iff for every $(j,k)\in P_b$:
            $u_j \oplus u_k = (v_c)_{\ceil{j/2}}$,
            where $\ceil{j/2}$ indexes the pair $(j,k)$ in $v_c$.
    \end{enumerate}

  \item \textbf{Sign consistency test:}
    Sample $b,c\draw\zo$.
    Let $W=(G, W_1)$, where $W_1\in\{G,F\}^n$ is the string of alternating $G$ and $F$ symbols, starting with $G$.
    \begin{enumerate}
      \item Send $P_b$ to Alice, receiving Bell-measurement outcomes
            $v = (v_0,v_1) \in \zo^{|P_b|}\times\zo^{|P_b|}$.
      \item Send $W_c$ to Bob, receiving outcome $u\in\zo^n$.
      \item Reject if there is a pair $(j,k)\in P_b$ with $(v_1)_{\ceil{j/2}}\neq c$ and $u_j \oplus u_k = (v_0)_{\ceil{j/2}}\oplus(v_1)_{\ceil{j/2}}$.
    \end{enumerate}

  \item \textbf{Mixed-versus-pure basis test:} Execute $\textprotocol{mbt}(\{G,F\},n,\mu)$, where $\mu$ is the point distribution on $W_1\in\{G,F\}^n$, the string of alternating $G$ and $F$ symbols, starting with $G$.

  \item \textbf{CHSH test:} Let $A\coloneqq(Y,X)$ and $B\coloneqq(F,G)$ and sample $x,y\draw\zo$.

\begin{enumerate}
    \item Send $A_x$ to Alice and receive answer $a\in\zo^n$.
    \item Send $B_y$ to Bob and receive answer $b\in\zo^n$. \item Accept if and only if
    \[
    a_1 \oplus b_1 \equiv x\cdot (1-y)\pmod 2\,.
    \]
\end{enumerate}
\item \textbf{Mixed-versus-pure basis test:} Execute $\textprotocol{mbt}(\{X,Y,Z,F,G\}, n, U_{5^n})$.
\end{enumerate}
\end{protocolbox}

\subsubsection{Honest prover strategy}
Most questions that a prover can receive in the Clifford test or any of its subtests are summarized in \cref{table:honest-prover}. A measurement in the basis indicated by $W\in\Sigma$ means that the prover measures in $\sigma_W$, as introduced in \cref{sec:pauli-group}. A \emph{large-answer} question is a single symbol $W\in\Sigma$; a \emph{small-answer} question is a pair $(W,a)$ with $W\in\Sigma$ and a binary string $a\in\zo^n$; and a \emph{mixed-basis} question is a length-$n$ string $\tilde W\in\Sigma^n$ specifying one symbol per qubit. In addition, the conjugation test (\cref{protocol:conjugation-test-mod}) internally introduces the auxiliary observable symbols $C_{AB}$, $X_C$, $Z_C$ and $X_R$, which are either constant ($X_C$ and $Z_C$) or depend on the test inputs $A,B$ and $R$. The honest prover responds to each question type as described below. By our consistency convention, Alice should perform the transpose of the specified operations.
\par
\medskip
Some questions, such as those internal to the anti-commutation test, are not listed here explicitly, since the magic-square test is standard in the self-testing literature, we refer the reader to \cite{Cleve2004} for more details. We do note that the prover also needs one additional EPR pair to pass the magic-square test (in addition to the control qubit required for the conjugation test).

\vspace{0.4\baselineskip}
\begin{table}[H]
\centering
\small
\renewcommand{\arraystretch}{1.28}
\setlength{\tabcolsep}{5.5pt}
\begin{tabular}{@{}>{\RaggedRight\arraybackslash}p{0.28\textwidth}>{\RaggedRight\arraybackslash}p{0.68\textwidth}@{}}
    \toprule
    \textbf{Question} & \textbf{Honest prover action} \\
    \midrule
    $W\in\Sigma$ &
    Measure every qubit $i\in[n]$ separately in the basis indicated by $W$ and return the $n$-bit outcome $x\in\zo^n$. \\
    \midrule
    $(W,a)\in\Sigma\times\zo^n$ &
    Simultaneously measure the qubits at positions $i\in[n]$ where $a_i=1$ in the basis indicated by $W$ and return the single-bit outcome $x\in\zo$. \\
    \midrule
    $\tilde W\in\Sigma^n$ &
    Measure each qubit $i\in[n]$ separately in the basis specified by the symbol $\tilde W_i$ and return the $n$-bit outcome $x\in\zo^n$. \\
    \midrule
    $(\tilde W,a)\in\Sigma^n\times\zo^n$ &
    Simultaneously measure the qubits at positions $i\in[n]$ where $a_i=1$ in the basis indicated by $\tilde W_i$ and return the single-bit outcome $x\in\zo$. \\
    \midrule
    $(Q_1,Q_2)$ \par
    {\itshape e.g.\ $(X_C,Z_C)$} &
    Jointly measure the two named small-answer observables and return the pair of bits $(x_1,x_2)\in\zo^2$. Extended to three observables in the anti-commutation test. \\
    \midrule
    $P\in\{P_0,P_1\}$ &
    Perform a Bell-basis measurement on each qubit pair in $P$ (indicating one of the two brick-wall pairings from \cref{def:brick-wall}) and return the outcomes $v\in\zo^{|P|}\times\zo^{|P|}$. \\
    \midrule
    $X_R$ &
    Implement the observable $X_R$ as defined in \Cref{eqn:C-X-R-definition} and return the single-bit outcome $x\in\zo$. \\
    \midrule
    $C_{AB}$ &
    Implement the observable $C_{AB}$ as defined in \Cref{eqn:C-X-R-definition} and return the single-bit outcome $x\in\zo$. \\
    \midrule
    $X_C$ or $Z_C$ &
    Measure qubit $n+1$ (the control) in the Hadamard or computational basis (depending on the question) and return the single-bit outcome $x\in\zo$. \\
    \bottomrule
\end{tabular}
\caption{Questions in the Clifford test and its subtests, with the honest prover action for each. Here $\Sigma=\{X,Y,Z,F,G\}$, $n$ is the qubit count, and $A,B,R$ are the conj. test inputs.}
\label{table:honest-prover}
\end{table}

\begin{remark}
    We will use $\Qa$ to denote the set of all Alice questions in the Clifford test (\cref{protocol:cliff-test}) and its subtests. Through the automatic consistency test, this includes every Bob question of a wrapped leaf; conversely, every Alice question of a wrapped leaf is also asked of Bob.
\end{remark}

\subsection{Compiled prover switching}

\begin{definition}[Question families]
    We define question families as the sets from which individual questions are sampled during protocol execution, for either player. The size of these sets can either depend on protocol parameters, such as the number of qubits or can be constant. All questions which get passed as input arguments to a game or which are fixed and not sampled from a set are also treated as individual question families.
\end{definition}

\begin{definition}[Question distributions]
    A question distribution consists of a tuple $(Q,\mu)$ where the first entry is a question family and the second is an efficiently sampleable probability distribution over the elements of $Q$. The question distribution for single-element question families is the trivial point-distribution ($U_1$).
\end{definition}

\begin{definition}[$\eps$-self-consistency]\label{def:self-consistent}
    Any question distribution $(Q_W,\mu_W)$, over questions $W\in Q_W$, corresponding to PVMs, with efficient projectors $\{W^v\}_{v\in\{0,1\}^{n}}$, which produce $n$-bit answers and satisfy the following expression
    \[
    \E_{W\sim\mu_W}\sum_\alpha\Tr{W^{\Dec(\alpha)}\pea{W}}\geq 1-O(\eps)
    \]
    is called $\eps$-self-consistent. Here $\Dec(\alpha)\in\{0,1\}^n$ and the sum over $\alpha$ goes over all valid encryptions of all $n$-bit strings. If $\mu_W$ is the uniform distribution we replace the expectation by $\E_{W\in Q_W}$ by our notational convention.
\end{definition}

\begin{definition}[$\eps$-cross-consistency]\label{def:cross-consistent}
    Any uniformly efficient family of PVMs $\{W_V\}_{V\in Q_V}$, with projectors $\{W_V^v\}_{v\in\{0,1\}^{n}}$, which produce $n$-bit answers and satisfy the following expression
    \[
    \E_{V\sim \mu_V}\sum_\alpha\Tr{W_V^{\Dec(\alpha)}\pea{V}}\geq 1-O(\eps)
    \]
    is called $\eps$-cross-consistent with the question distribution $(Q_V,\mu_V)$. Here $\Dec(\alpha)\in\{0,1\}^n$ and the sum over $\alpha$ goes over all valid encryptions of all $n$-bit strings.
\end{definition}

\begin{lemma}\label{lemma:compiled-consistency}
    For any computationally efficient prover modeled as in Section \ref{section:modeling} that succeeds with probability $1-\eps$ in the compiled consistency game $\textprotocol{con}(T)$, obtained from Protocol \ref{protocol:consistency-test}, it holds that the same strategy succeeds with probability $1-2\eps$ in game $T$ and that all question distributions in $T$ are $\eps$-self-consistent. 
\end{lemma}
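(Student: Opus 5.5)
The argument is a direct unpacking of the winning probability of $\textprotocol{con}(T)$ as an average of two branches. By \cref{protocol:consistency-test}, with probability $1/2$ the verifier plays $T$, and with probability $1/2$ he plays the consistency branch. Write $p_T$ for the success probability in $T$ and $p_C$ for the success probability in the consistency branch. Then
\[
\tfrac12 p_T+\tfrac12 p_C \ge 1-\eps .
\]
Since $p_C\le 1$, this gives $p_T\ge 1-2\eps$, which is the first claim. Symmetrically, $p_T\le 1$ gives $p_C\ge 1-2\eps$. All remaining work goes into turning $p_C\ge 1-2\eps$ into the self-consistency statement of \cref{def:self-consistent}.

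For the second claim, unpack $p_C$ using the modeling of \cref{section:modeling}. In the consistency branch the verifier samples $s\draw\zo$, then $W\sim P_{T,A}$ if $s=0$ or $W\sim P_{T,B}$ if $s=1$. He sends $\Enc(W)$ first and $W$ in the clear second, and accepts iff $\Dec(\alpha)=u$. The second-round answer is produced by the PVM $\{W^v\}_v$, the prover's projectors on the clear question $W$. The acceptance probability, conditioned on $W$, is therefore $\sum_\alpha\Tr{W^{\Dec(\alpha)}\pea{W}}$, with the implicit expectation over $\sk$. The correctness of this expression rests on two facts. The post-Alice state is $\pea{W}$ by definition. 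The second measurement depends only on the clear question and acts on the post-Alice state. Hence
\[
p_C=\tfrac12\E_{W\sim P_{T,A}}\sum_\alpha\Tr{W^{\Dec(\alpha)}\pea{W}}+\tfrac12\E_{W\sim P_{T,B}}\sum_\alpha\Tr{W^{\Dec(\alpha)}\pea{W}} .
\]
Each summand is at most $1$, so each of the two marginal expectations is at least $1-4\eps=1-O(\eps)$. This says exactly that $(Q,P_{T,A})$ and $(Q,P_{T,B})$ are $\eps$-self-consistent.

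The main obstacle is the final step. The statement asks that \emph{all} question distributions appearing in $T$ be self-consistent, not just the two marginals. In the protocols, a single question family (for instance large-answer questions $W\in\Sigma$ sampled uniformly) is a component of a marginal $P_{T,A}$ that mixes several families with constant weights. The plan is to condition on the family: if family $Q_i$ is drawn with probability $\pi_i$ inside the marginal, then the deficit of $Q_i$ is at most $4\eps/\pi_i$. Each wrapped leaf has constantly many families with constant weights, so $\pi_i=\Omega(1)$ and every family distribution is $O(\eps)$-self-consistent.

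If some family instead has a weight that is not bounded below by a constant, I would rely on the paper's convention that $\eps$-statements are up to $O(\cdot)$ with constant-size test structure, and note that the families in the leaf tests are indeed of constant number. Efficiency of the projectors $W^v$ is inherited from the efficient prover. No IND-CPA invocation is needed, since both rounds are evaluated on the same encrypted question.
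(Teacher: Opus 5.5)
Your proposal is correct and follows the paper's proof essentially step by step. You get the first claim by averaging the two branches and using $p_C\le 1$. You get self-consistency by restricting the consistency-branch acceptance $\E_{s}\E_{W\sim P_{T,s}}\sum_\alpha\Tr{W^{\Dec(\alpha)}\pea{W}}\ge 1-2\eps$ first to each marginal and then to each of the constantly many question families. Your explicit requirement that the family weights $\pi_i$ be bounded below by a constant, which gives deficit $4\eps/\pi_i$, makes precise what the paper leaves implicit in its appeal to ``a constant number of question families.''
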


\begin{proof}
The first claim of the lemma is immediate, since with probability $1/2$ the verifier just plays game $T$ and the averaged winning probability needs to be $1-\eps$.
\par
\medskip
For the second claim, winning the consistency game with probability $1-\eps$ requires
    \[
    \underset{s\in\{0,1\}}{\mathbb{E}}\,\underset{W\sim P_{T,s}}{\mathbb{E}} \sum_\alpha\Tr{W^{\Dec(\alpha)}\pea{W}} \geq 1-2\eps,
    \]
    where $P_{T,0}=P_{T,A}$ and $P_{T,1}=P_{T,B}$ are the Alice and Bob question marginals of $T$. Restricting to either value of $s$, each marginal therefore satisfies the same bound up to a constant factor. Each of these distributions can be decomposed into two separate distributions: one representing the probability of choosing a specific question family and the other representing the probability of choosing a specific question inside that family (according to the corresponding question distribution). Since there will always be only a constant number of question families within a game, on either side, we can disregard the expectation over this first distribution and conclude that for all question distributions $(Q_W,\mu_W)$ in game $T$ it holds that
    \[
    \E_{W\sim\mu_W}\sum_\alpha\Tr{W^{\Dec(\alpha)}\pea{W}} \geq 1-O(\eps),
    \]
    which completes the proof.
\end{proof}

\begin{lemma}\label{lemma:generic-cross-isometric-equality}
    For any $\eps$-self-consistent question distribution $(Q_A,\mu_A)$ over questions $A\in Q_A$, corresponding to efficient PVMs with projectors $\{A^v\}_{v\in\{0,1\}^{m}}$, and for every uniformly efficient family of PVMs $\{B_{A}\}_{A\in Q_A}$, with projectors $\{B_{A}^v\}_{v\in\{0,1\}^{n}}$, which satisfies the following equation for any efficiently computable function $f: \{0,1\}^m\times\{0,1\}^n\to\{0,1\}$ and any efficiently sampleable set $S\subseteq\zo^n$ (which can depend on $A$)
    \[
    \E_{A\sim\mu_A}\E_{a\in S}\sum_\alpha (-1)^{f(\Dec(\alpha),a)}\Tr{B_{A}(a){\pea{A}}}\geq 1- O(\delta),
    \]
    where $\Dec(\alpha)\in\{0,1\}^m$, the following holds: for all $q\in\Qa$ there exists a cryptographically small function $\ea$ such that
    \[
    \E_{A\sim\mu_A}\E_{a\in S}\norm{\sum_{v\in\{0,1\}^m}(-1)^{f(v,a)}A^v-B_{A}(a)}^2_{\psi^{\Enc(q)}}\leq O(\eps + \delta)+\ea.
    \]
\end{lemma}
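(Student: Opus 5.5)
Write $\tilde A_f(a)\deq\sum_{v}(-1)^{f(v,a)}A^v$. This is a binary observable: the $A^v$ form a PVM, so $\tilde A_f(a)$ is unitary and Hermitian. I will first prove the bound with the state $\pe{A}$ (the state matched to the question $A$), using the $\alpha$-resolved decomposition $\pe{A}=\sum_\alpha\pea{A}$. Afterwards I switch to an arbitrary $\pe{q}$ using \cref{cor:efficient-operator-state-switching}.

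\textbf{Step 1 (same-question bound).} Since both $\tilde A_f(a)$ and $B_A(a)$ are binary observables,
\[
\norm{\tilde A_f(a)-B_A(a)}^2_{\pea{A}}=2\Tr{\pea{A}}-2\Re\Tr{\tilde A_f(a)B_A(a)\pea{A}}.
\]
I will insert the ``classical'' value $(-1)^{f(\Dec(\alpha),a)}$ in place of $\tilde A_f(a)$ using the triangle inequality:
\[
\norm{\tilde A_f(a)-B_A(a)}_{\pea{A}}\le\norm{\tilde A_f(a)-(-1)^{f(\Dec(\alpha),a)}}_{\pea{A}}+\norm{(-1)^{f(\Dec(\alpha),a)}-B_A(a)}_{\pea{A}}.
\]
Squaring and using $(x+y)^2\le 2x^2+2y^2$, I then handle the two terms separately.

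The second term, summed over $\alpha$ and averaged, equals
\[
2-2\,\E_{A\sim\mu_A}\E_{a\in S}\sum_\alpha(-1)^{f(\Dec(\alpha),a)}\Tr{B_A(a)\pea{A}}\le O(\delta),
\]
by the hypothesis.

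For the first term, note that $\tilde A_f(a)-(-1)^{f(v,a)}=\sum_{v'}\bigl((-1)^{f(v',a)}-(-1)^{f(v,a)}\bigr)A^{v'}$ vanishes on the range of $A^v$. Its squared norm is therefore at most $4\Tr{(\id-A^{\Dec(\alpha)})\pea{A}}$. Summing over $\alpha$ and averaging, this is $O(\eps)$ by $\eps$-self-consistency (\cref{def:self-consistent}). This holds uniformly in $a$, so the average over $S$ costs nothing. Combining the two terms gives
\[
\E_{A\sim\mu_A}\E_{a\in S}\norm{\tilde A_f(a)-B_A(a)}^2_{\pe{A}}\le O(\eps+\delta).
\]

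\textbf{Step 2 (switching to $\pe{q}$).} I apply \cref{cor:efficient-operator-state-switching}. The parameter is $x=(A,a)$, where $A\sim\mu_A$ and $a$ is uniform on $S$; both are efficiently sampleable by assumption. Under $D_1$ the Alice question is $z=A$, and under $D_2$ it is the point mass $z=q$, so the two distributions have identical $x$-marginals. The operator family is $\tilde A_f(a)-B_A(a)$, independent of $\alpha$. It is a uniformly efficient LCU with two terms and coefficients $\pm1$: each term is a unitary of the form $U^\dag(\id\ot\text{phase})U$, where the phase $(-1)^{f(v,a)}$ is computed by an efficient classical circuit (the same mechanism as \cref{lemma:observables-efficient-unitaries}, with $f$ replacing the linear phase). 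The corollary then yields a cryptographically small $\ea$ with
\[
\E\norm{\tilde A_f(a)-B_A(a)}^2_{\pe{q}}\le \E\norm{\tilde A_f(a)-B_A(a)}^2_{\pe{A}}+\ea,
\]
and combining this with Step 1 proves the claim.

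\textbf{Main obstacle.} The delicate point is Step 1's use of the $\alpha$-resolved states. The hypothesis correlates $B_A(a)$ with the decrypted Alice answer $\Dec(\alpha)$, not with the operator $A^v$ applied to the post-measurement state $\pea{A}$, and self-consistency is exactly the bridge between the two. I also need to confirm that efficiency of $f$ and sampleability of $S$ give the uniformity required by the corollary. This uniformity is in $a$ and $A$, which are parameters of a single reduction, so a single cryptographically small function results.
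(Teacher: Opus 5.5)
Your proof is correct and follows the same route as the paper. You first prove an $\alpha$-resolved bound on $\pe{A}$ by inserting the classical value $(-1)^{f(\Dec(\alpha),a)}$, using self-consistency and the hypothesis, and then apply \cref{cor:efficient-operator-state-switching} once with parameter $(A,a)$; the only cosmetic difference is that you bound $\bigl(\tilde A_f(a)-(-1)^{f(v,a)}\id\bigr)^2\preceq 4(\id-A^{v})$ in one step, whereas the paper splits off the aligned projector $A^{\Dec(\alpha)}$ and uses orthogonality of the remaining projectors, which gives the same estimate.
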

\begin{proof}
     We can rewrite the self-consistency of $(Q_A,\mu_A)$ as:
    \begin{equation}\label{eqn:projector-close-to-id-2}
    \E_{A\sim\mu_A}\sum_\alpha \|A^{\Dec(\alpha)} - \id\|_{\pea{A}}^2 = 1 - \E_{A\sim\mu_A}\sum_\alpha\Tr{A^{\Dec(\alpha)}\pea{A}} \leq O(\eps),
    \end{equation}
    by the normalization property of the PVM (all projectors sum to the identity), we also know that
    \begin{equation}\label{eqn:projector-close-to-0-2}
    \E_{A\sim\mu_A}\sum_{\substack{a,\alpha\\\Dec(\alpha)\neq a}}\|A^{a}\|^2_{\pea{A}} = \E_{A\sim\mu_A}\sum_{\substack{a,\alpha\\\Dec(\alpha)\neq a}}\Tr{A^a\pea{A}} \leq O(\eps).
    \end{equation}
    Then,
    \begin{align*}
        &\E_{A\sim\mu_A}\E_{a\in S}\Bigg\|\overbrace{\sum_{v\in\{0,1\}^m}(-1)^{f(v,a)}A^v}^{A_f(a)}-B_{A}(a)\Bigg\|^2_{\psi^{\Enc(A)}} \\
        \intertext{We can separate the term that aligns with $\Dec(\alpha)$ from the remaining sum over $v$:}
        &= \E_{A\sim\mu_A}\E_{a\in S}\sum_\alpha \norm{(-1)^{f(\Dec(\alpha),a)}A^{\Dec(\alpha)}-B_{A}(a) + \sum_{\substack{v\in\{0,1\}^m\\v\neq\Dec(\alpha)}}(-1)^{f(v,a)}A^v}^2_{\pea{A}}\\
        \intertext{Applying the triangle inequality for the squared state-dependent norm (\cref{lemma:state-dependent-norm-properties}, point (\textit{\ref{prop:state-dependent-norm-properties,v}})):}
        &\leq 2\E_{A\sim\mu_A}\E_{a\in S}\sum_\alpha\left(\norm{(-1)^{f(\Dec(\alpha),a)}A^{\Dec(\alpha)}-B_{A}(a)}^2_{\pea{A}} + \norm{\sum_{\substack{v\in\{0,1\}^m\\v\neq\Dec(\alpha)}}(-1)^{f(v,a)}A^v}^2_{\pea{A}}\right)\\
        \intertext{Applying a telescopic sum, the same triangle inequality, left unitary invariance and using the fact that $\{A^v\}$ is a PVM:}
        &\leq 2\E_{A\sim\mu_A}\E_{a\in S}\sum_\alpha\Bigg( 2\Big(\norm{A^{\Dec(\alpha)}-\id}^2_{\pea{A}} + \norm{(-1)^{f(\Dec(\alpha), a)}\id-B_{A}(a)}^2_{\pea{A}}\Big)\\
        &\quad +\sum_{\substack{v\in\{0,1\}^m\\v\neq\Dec(\alpha)}}\norm{A^v}^2_{\pea{A}}\Bigg)\\
        &\leq O(\eps+\delta),
    \end{align*}
    where the last line follows from \eqref{eqn:projector-close-to-id-2}, \eqref{eqn:projector-close-to-0-2} and the lemma hypothesis (rewritten as a state-dependent norm). Lastly, the argument of the state-dependent norm in the lemma conclusion is $A_f(a)-B_{A}(a)$, where $\{A_f(a)\}_{A,a}$ is a uniformly efficient family of unitaries parameterized by $A\in\Qa$ and $a\in\zo^n$, since the prover projectors are efficiently measurable given the question and we can reuse the argument of \cref{lemma:observables-efficient-unitaries}, now with an efficient function in the exponent. Similarly, $\{B_{A}(a)\}_{A,a}$ is a uniformly efficient family of unitaries (parameterized by $A$ and $a$), by \cref{lemma:observables-efficient-unitaries}. Thus the argument of the norm is a uniformly efficient family of LCUs (which is also Hermitian) and we can apply \cref{cor:efficient-operator-state-switching}, where the sampled parameter is the tuple $(a,A)$, with $a$ uniformly from $S$, $A\sim\mu_A$, and $z=A$ deterministically for $D_1$. For $D_2$ the distribution over the sampled parameter is identical (recall identical marginal constraint) and $z=q$ deterministically. The point distribution on any fixed $q\in\Qa$ is efficiently sampleable, so the corollary yields, for every such $q$, a single cryptographically small function $\ea$ such that
    \[
    \E_{A\sim\mu_A}\E_{a\in S}\norm{\sum_{v\in\{0,1\}^m}(-1)^{f(v,a)}A^v-B_{A}(a)}^2_{\psi^{\Enc(q)}}\leq O(\eps+\delta)+\ea,
    \]
    which completes the proof.
\end{proof}

\begin{corollary}\label{cor:self-consistent-equality}
    For any $\eps$-self-consistent question distribution $(Q_A,\mu_A)$ over questions $A$, corresponding to PVMs, with efficient projectors $\{A^a\}_{a\in\{0,1\}^{n}}$ and any uniformly efficient family of PVMs $\{B_{A}\}_{A\in Q_A}$, with projectors $\{B_{A}^b\}_{b\in\{0,1\}^{n}}$, which is $\delta$-cross-consistent with $(Q_A,\mu_A)$, for all $q\in\Qa$ there exists a cryptographically small function $\ea$ such that
    \[
    \E_{A\sim\mu_A}\sum_{v\in\{0,1\}^n}\norm{A^v-B_{A}^v}^2_{\psi^{\Enc(q)}} = \E_{A\sim\mu_A}\E_{a\in\{0,1\}^n} \norm{A(a)-B_{A}(a)}^2_{\psi^{\Enc(q)}} \leq O(\eps+\delta)+\ea.
    \]
\end{corollary}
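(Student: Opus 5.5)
The plan is to reduce \cref{cor:self-consistent-equality} to \cref{lemma:generic-cross-isometric-equality}, taking $m=n$, $S=\zo^n$ and $f(v,a)=v\cdot a$. With these choices $\sum_v(-1)^{v\cdot a}A^v=A(a)$ is exactly the observable associated to the PVM $\{A^v\}$. The lemma will then bound the middle expression in the claimed chain by $O(\eps+\delta)+\ea$.

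\textbf{Checking the hypothesis of the lemma.} We must show
\[
\E_{A\sim\mu_A}\E_{a\in\zo^n}\sum_\alpha(-1)^{\Dec(\alpha)\cdot a}\Tr{B_A(a)\pea{A}}\geq 1-O(\delta).
\]
Substituting $B_A(a)=\sum_b(-1)^{b\cdot a}B_A^b$ and averaging over uniform $a$ gives
\[
\E_a(-1)^{(\Dec(\alpha)\oplus b)\cdot a}=\ind{b=\Dec(\alpha)}.
\]
The left-hand side therefore equals $\E_{A}\sum_\alpha\Tr{B_A^{\Dec(\alpha)}\pea{A}}$. By $\delta$-cross-consistency (\cref{def:cross-consistent}) this quantity is at least $1-O(\delta)$. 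The maps $f(v,a)=v\cdot a$ and uniform sampling of $S$ are trivially efficient, so the lemma applies and gives the bound for every $q\in\Qa$.

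\textbf{The equality.} It remains to prove the equality between the two expressions, which is a Parseval identity in the state-dependent norm. Put $D^v\deq A^v-B_A^v$, so that $A(a)-B_A(a)=\sum_v(-1)^{v\cdot a}D^v$. Expanding the square gives
\[
\E_a\norm{\sum_v(-1)^{v\cdot a}D^v}^2_\rho=\sum_{v,w}\E_a(-1)^{(v\oplus w)\cdot a}\Tr{(D^v)^\dag D^w\rho}.
\]
Averaging over $a$ keeps only the diagonal terms, leaving $\sum_v\norm{D^v}_\rho^2$. This holds for any $\rho$, in particular for $\rho=\psi^{\Enc(q)}$, and it survives the outer expectation over $A$ and the implicit expectation over keys by linearity.

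I expect no real obstacle. The only point needing care is that the cryptographically small function is produced by the single invocation inside \cref{lemma:generic-cross-isometric-equality}. The equality step is purely algebraic and pointwise in the state, so it does not introduce any additional $\ea$.
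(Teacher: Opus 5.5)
Your proposal is correct and is essentially the paper's proof. You instantiate \cref{lemma:generic-cross-isometric-equality} with $S=\zo^n$, $m=n$ and $f(v,a)=v\cdot a$, and check its hypothesis by collapsing the character average to $\sum_\alpha\Tr{B_A^{\Dec(\alpha)}\pea{A}}$, which $\delta$-cross-consistency bounds. The equality then follows from Parseval: the paper cites \cref{cor:parseval}, which you re-derive inline.
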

\begin{proof}
    If we set $S=\zo^n$, $m=n$ and $f(a,b)=a\cdot b$ in \cref{lemma:generic-cross-isometric-equality}, the lemma hypothesis is equivalent to the displayed $\delta$-cross-consistency requirement, since
    \begin{align*}
        \E_{A\sim\mu_A}\E_{a\in \bits^n}&\sum_\alpha (-1)^{\Dec(\alpha)\cdot a} \Tr{B_{A}(a)\pea{A}} \\
        &= \E_{A\sim\mu_A}\sum_\alpha\sum_{v\in\{0,1\}^n} \overbrace{\E_{a\in\{0,1\}^n} (-1)^{(\Dec(\alpha) +v)\cdot a}}^{\delta_{v,\Dec(\alpha)}}\Tr{B_{A}^v\pea{A}}\\
        &=\E_{A\sim\mu_A}\sum_\alpha\Tr{B_{A}^{\Dec(\alpha)}\pea{A}}\\
        &\geq 1-O(\delta).
    \end{align*}
    The hypothesis is thus satisfied by the cross-consistency and the conclusion holds. We again provide an equivalent form for the conclusion, which follows from Parseval's identity (\cref{cor:parseval}), since $A^v - B_{A}^v$ is the Fourier transform of $A(a)-B_{A}(a)$.
\end{proof}

\begin{lemma}[Restatement of \cite{Metger2024} Lemma 6.1]\label{lemma:self-consistency-identity}
    For any $\eps$-self-consistent question distribution $(Q_B,\mu_B)$ over questions $B$, corresponding to efficient PVMs $\{B^v\}_{v\in\{0,1\}^n}$, it holds that for all $a \in \{0,1\}^n$,
    \[
    \E_{B\sim\mu_B}\sum_\alpha\norm{B(a) - (-1)^{a \cdot \Dec(\alpha)}\id}_{\psi^{\Enc(B)}_\alpha}^2\leq O(\eps).
    \]
\end{lemma}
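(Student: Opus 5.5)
Fix $a\in\zo^n$ and write $B(a)=\sum_v(-1)^{a\cdot v}B^v$. The plan is to expand the squared state-dependent norm directly and reduce it to the self-consistency quantity. The key point is that $B(a)$ is a Hermitian unitary, so $B(a)^2=\id$. Hence
\[
\norm{B(a)-(-1)^{a\cdot\Dec(\alpha)}\id}^2_{\pea{B}} = 2\Tr{\pea{B}} - 2(-1)^{a\cdot\Dec(\alpha)}\Tr{B(a)\pea{B}}.
\]
No cryptographic switching is needed, since everything is evaluated on the state $\pea{B}$ matching the question $B$. This means no $\ea$ term appears, consistent with the statement.

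Next I would insert the spectral decomposition of $B(a)$ and write $u=\Dec(\alpha)$:
\[
(-1)^{a\cdot u}B(a)=\sum_v(-1)^{a\cdot(u\oplus v)}B^v = B^{u}+\sum_{v\neq u}(-1)^{a\cdot(u\oplus v)}B^v \succeq B^{u}-\sum_{v\neq u}B^v = 2B^{u}-\id .
\]
Therefore
\[
\Tr{\pea{B}}-(-1)^{a\cdot u}\Tr{B(a)\pea{B}}\le 2\Tr{\pea{B}}-2\Tr{B^{u}\pea{B}}.
\]
Summing over $\alpha$ and averaging over $B\sim\mu_B$ bounds the left-hand side of the lemma by
\[
4\Big(1-\E_{B\sim\mu_B}\sum_\alpha\Tr{B^{\Dec(\alpha)}\pea{B}}\Big)\le O(\eps).
\]
This uses $\sum_\alpha\Tr{\pea{B}}=1$ and the $\eps$-self-consistency of \cref{def:self-consistent}.

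The only mildly delicate step is the operator inequality bounding the off-diagonal sign terms. It holds because the $B^v$ are orthogonal projectors summing to the identity, so all signs can be bounded by the worst case uniformly in $a$. This is what makes the bound hold for every $a$ rather than only on average. The rest is bookkeeping.
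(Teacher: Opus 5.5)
Your proposal is correct and follows essentially the same route as the paper's proof. Both proofs expand the squared norm using $B(a)^2=\id$ and separate out the $B^{\Dec(\alpha)}$ term; you then bound the remaining signed terms from below by $-\sum_{w\neq\Dec(\alpha)}B^w=-(\id-B^{\Dec(\alpha)})$ as a PSD-order inequality, while the paper does the same bound termwise inside the trace using $\Tr{B^w\pea{B}}\geq 0$, and both arrive at $4\bigl(1-\E_{B\sim\mu_B}\sum_\alpha\Tr{B^{\Dec(\alpha)}\pea{B}}\bigr)\le O(\eps)$.
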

\begin{proof}
    The condition that $(Q_B,\mu_B)$ is $\eps$-self-consistent means that
    \begin{align*}
        \E_{B\sim\mu_B}\sum_{\alpha} \Tr[B^{\Dec(\alpha)} \psi^{\Enc(B)}_\alpha] &\geq 1 - O(\eps) \\
        \E_{B\sim\mu_B}\sum_\alpha\sum_{\substack{w\\w \neq \Dec(\alpha)}} \Tr[B^w \psi^{\Enc(B)}_\alpha] &\leq O(\eps),
    \end{align*}
    where the second expression follows immediately from the fact that a PVM is normalized. Thus, for all $a\in\zo^n$
    \begin{align*}
    &\E_{B\sim\mu_B}\sum_\alpha\|B(a) - (-1)^{a \cdot \Dec(\alpha)}\id \|_{\psi^{\Enc(B)}_\alpha}^2\\
    &= 2- 2\E_{B\sim\mu_B} \sum_\alpha (-1)^{a \cdot \Dec(\alpha)} \Tr{B(a) \psi^{\Enc(B)}_\alpha}\\ 
    &=2- 2 \E_{B\sim\mu_B}\sum_\alpha \Tr{B^{\Dec(\alpha)} \psi^{\Enc(B)}_\alpha} - 2\E_{B\sim\mu_B} \sum_\alpha\sum_{w\neq \Dec(\alpha)} (-1)^{a \cdot (w+\Dec(\alpha))} \overbrace{\Tr{B^w \psi^{\Enc(B)}_\alpha}}^{\geq 0}\\
    &\leq 2- 2 \E_{B\sim\mu_B}\sum_\alpha \Tr{B^{\Dec(\alpha)} \psi^{\Enc(B)}_\alpha} + 2 \E_{B\sim\mu_B}\sum_\alpha\sum_{w\neq \Dec(\alpha)}\Tr{B^w \psi^{\Enc(B)}_\alpha}\\
    & \leq O(\eps).
    \end{align*}
\end{proof}

\begin{lemma}[Restatement \cite{Metger2024} Lemma 6.2]\label{lemma:self-consistent-to-trace-norm-crit}
    For any $\eps$-self-consistent question distribution ($Q_W,\mu_W$) over questions $W\in Q_W$, which correspond to efficient PVMs $\{W^v\}_{v\in\{0,1\}^{n}}$, for all $a\in\bits^n$ it holds that
    \[
    \E_{W\sim\mu_W}\sum_\alpha \norm{W(a)\pea{W} - \pea{W}W(a)}_1 \leq O(\eps^{1/2}),
    \]
\end{lemma}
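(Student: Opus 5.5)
The plan is to reduce the trace-norm commutator bound to the state-dependent estimate of \cref{lemma:self-consistency-identity}, working pointwise in $\alpha$, and then to sum using Cauchy--Schwarz.

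Fix $a\in\zo^n$ and write $s_\alpha\deq(-1)^{a\cdot\Dec(\alpha)}$ and $D_\alpha\deq W(a)-s_\alpha\id$. Since $s_\alpha\id$ commutes with $\pea{W}$, we have for every $\alpha$
\[
W(a)\pea{W}-\pea{W}W(a)=D_\alpha\pea{W}-\pea{W}D_\alpha,
\]
so the triangle inequality gives $\norm{W(a)\pea{W}-\pea{W}W(a)}_1\leq 2\norm{D_\alpha\pea{W}}_1$. Here we used that $\norm{M^\dag}_1=\norm{M}_1$, that $W(a)$ is Hermitian, and that $\pea{W}$ is PSD, so the second term is the adjoint of the first.

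The key step is Hölder's inequality with $1=\tfrac12+\tfrac12$. We write $D_\alpha\pea{W}=(D_\alpha(\pea{W})^{1/2})(\pea{W})^{1/2}$, which gives
\[
\norm{D_\alpha\pea{W}}_1\leq\norm{D_\alpha(\pea{W})^{1/2}}_2\,\norm{(\pea{W})^{1/2}}_2=\norm{D_\alpha}_{\pea{W}}\,\Tr{\pea{W}}^{1/2},
\]
using the Schatten-2 form of the state-dependent norm from the remark after \cref{def:state-dependent-norm}. One subtlety needs care here: $\pea{W}$ carries an implicit expectation over $c$ and $\sk$. I would apply the bound to the mixed operator directly. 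This is fine, because the Hölder step only needs $\pea{W}$ to be PSD, and by convexity the expectation can be kept outside.

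Next I sum over $\alpha$ and take $\E_{W\sim\mu_W}$, then apply Cauchy--Schwarz jointly over $(W,\alpha)$:
\[
\E_W\sum_\alpha\norm{D_\alpha}_{\pea{W}}\Tr{\pea{W}}^{1/2}\leq\Big(\E_W\sum_\alpha\norm{D_\alpha}^2_{\pea{W}}\Big)^{1/2}\Big(\E_W\sum_\alpha\Tr{\pea{W}}\Big)^{1/2}.
\]
The second factor equals $1$ by normalization, $\sum_\alpha\Tr{\pea{W}}=\Tr{\pe{W}}=1$. The first factor is $O(\eps)^{1/2}$ by \cref{lemma:self-consistency-identity}, applied to $(Q_W,\mu_W)$. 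Combining with the factor $2$ yields $O(\eps^{1/2})$.

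I expect no real obstacle. The only point needing care is handling the implicit key and ciphertext expectations consistently, so that \cref{lemma:self-consistency-identity} is invoked on exactly the same averaged states. Because every bound used is pointwise in $\alpha$ and convex, no cryptographic switching step and no $\ea$ term are needed.
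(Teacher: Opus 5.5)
Your proposal is correct and follows the paper's proof: the same insertion of $(-1)^{a\cdot\Dec(\alpha)}\id$, triangle inequality and adjoint invariance, with your H\"older-plus-Cauchy--Schwarz step over $(W,\alpha)$ being exactly the content of \cite[Lemma 2.10]{Metger2024} that the paper invokes before concluding with \cref{lemma:self-consistency-identity}. One point to state precisely: because $D_\alpha$ depends on $\sk$ through $\Dec$, the implicit expectation over $\sk$ should be included in the joint Cauchy--Schwarz sum over $(\sk,W,\alpha)$ rather than absorbed into a single averaged operator (averaging over $c$ alone is harmless).
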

\begin{proof}
    Consider the quantity we want to bound
    \begin{align*}
       &\E_{W\sim\mu_W}\sum_\alpha \norm{W(a)\pea{W} - \pea{W} W(a)}_1\\
       &\leq \E_{W\sim\mu_W}\sum_\alpha\paren{\norm{(W(a)-(-1)^{a\cdot\Dec(\alpha)}\id)\pea{W}}_1+\norm{\pea{W}((-1)^{a\cdot\Dec(\alpha)}-W(a))}_1}\\
       \intertext{Since the Schatten-1 norm is invariant under Hermitian adjoint and both $\pea{W}$ and $W(a)$ are Hermitian:}
        &= 2\E_{W\sim\mu_W}\sum_\alpha\norm{(W(a)-(-1)^{a\cdot\Dec(\alpha)}\id)\pea{W}}_1\\
       \intertext{Applying \cite[Lemma 2.10]{Metger2024}, where the sums and the expectation are combined into one large sum:}
       &\leq 2\,\sqrt{\E_{W\sim\mu_W}\sum_\alpha\norm{W(a)-(-1)^{a\cdot\Dec(\alpha)}\id}^2_\pea{W}}\\
       &\leq O(\eps^{1/2}),
    \end{align*}
    the last line follows from \cref{lemma:self-consistency-identity} and the fact that the question distribution $(Q_W,\mu_W)$ is assumed to be $\eps$-self-consistent.
\end{proof}

\begin{lemma}[Compiled prover switching]\label{lemma:compiled-prover-switching}
    For any efficient LCU $A_{B}\in L(\H)$, with $\norm{A_{B}}\leq O(1)$ (for all $B\in Q_B$), and $\eps$-self-consistent question distribution $(Q_B,\mu_B)$ over questions $B$, corresponding to efficient PVMs $\{B^v\}_{v\in\{0,1\}^n}$, and for all $q\in\Qa$ and all $a \in \{0,1\}^n$ there exists a cryptographically small function $\ea$ such that
    \[
    \E_{B\sim\mu_B}\norm{A_{B}B(a)}^2_{\psi^{\Enc(q)}}\leq 2 \E_{B\sim\mu_B}\|A_{B}\|^2_{\psi^{\Enc(q)}} + O(\eps) + \ea.
    \]
\end{lemma}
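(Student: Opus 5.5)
The plan is to first prove the inequality on the ``native'' states $\pea{B}$, where $B(a)$ acts almost as a scalar. Then I will transport the resulting bound to an arbitrary Alice question $q$ using \cref{cor:efficient-operator-state-switching}.

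\textbf{Step 1: native state.} On $\pea{B}$ I write
\[
B(a)=(-1)^{a\cdot\Dec(\alpha)}\id+\bigl(B(a)-(-1)^{a\cdot\Dec(\alpha)}\id\bigr).
\]
Applying the squared triangle inequality (\cref{lemma:state-dependent-norm-properties}(\ref{prop:state-dependent-norm-properties,v})) and then (\ref{prop:state-dependent-norm-properties,ii}) with $\norm{A_B}\le O(1)$ gives
\[
\E_{B\sim\mu_B}\sum_\alpha\norm{A_BB(a)}^2_{\pea{B}}\le 2\,\E_{B}\sum_\alpha\norm{A_B}^2_{\pea{B}}+O(1)\cdot\E_B\sum_\alpha\norm{B(a)-(-1)^{a\cdot\Dec(\alpha)}\id}^2_{\pea{B}}.
\]
The second term is $O(\eps)$ by \cref{lemma:self-consistency-identity}. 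Summing over $\alpha$ (property (\ref{prop:state-dependent-norm-properties,iv})) turns both sides into norms with respect to $\pe{B}$. This yields
\[
\E_B\norm{A_BB(a)}^2_{\pe{B}}\le 2\,\E_B\norm{A_B}^2_{\pe{B}}+O(\eps).
\]
The factor $2$ in the statement comes exactly from this triangle inequality.

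\textbf{Step 2: switch to $q$.} Both $\{A_BB(a)\}_B$ and $\{A_B\}_B$ are uniformly efficient families of LCUs indexed by the parameter $x=(B,a)$. Indeed, $B(a)$ is an efficient unitary by \cref{lemma:observables-efficient-unitaries}, and multiplying each unitary of the LCU $A_B$ by $B(a)$ keeps the LCU form and the coefficients. I then apply \cref{cor:efficient-operator-state-switching} twice, each time with an $\alpha$-independent operator, so that $\sum_\alpha$ produces the marginal state. In both applications $D_1$ samples $B\sim\mu_B$ and sets $z=B$, while $D_2$ samples $B\sim\mu_B$ and sets $z=q$. These distributions have identical marginals on $x$, and all conditionals are efficiently sampleable. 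Each application costs a cryptographically small $\ea$:
\[
\E_B\norm{A_BB(a)}^2_{\peq}\le\E_B\norm{A_BB(a)}^2_{\pe{B}}+\ea,
\]
\[
\E_B\norm{A_B}^2_{\pe{B}}\le\E_B\norm{A_B}^2_{\peq}+\ea.
\]
Chaining these with Step 1 gives the claim, after absorbing the factor $3$ on the combined $\ea$ into a new cryptographically small function.

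\textbf{Main obstacle.} The subtle point is efficiency and uniformity. The switching corollary must be applied to a single uniformly efficient family, so the expectation over $B$ has to be folded into the sampled parameter rather than bounded pointwise; otherwise we would be averaging many different $\ea$'s. I must also check that the product $A_BB(a)$ still has $O(1)$ terms with coefficients in $[-1,1]$. If the hypothesis only says $\norm{A_B}\le O(1)$, I may have to rescale $A_B$ by a constant to fit the LCU definition and undo the rescaling at the end.
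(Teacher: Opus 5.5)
Your proof is correct and follows the paper's route. The paper also bounds $\E_{B}\norm{A_BB(a)}^2_{\pe{B}}$ on the native state by splitting $B(a)=(-1)^{a\cdot\Dec(\alpha)}\id+\bigl(B(a)-(-1)^{a\cdot\Dec(\alpha)}\id\bigr)$ and invoking \cref{lemma:self-consistency-identity}, and then transports the bound to $\peq$ by two applications of \cref{cor:efficient-operator-state-switching} to the $\alpha$-independent families $\{A_BB(a)\}$ and $\{A_B\}$; the only difference is that it switches to the native state first and back at the end. Your worry about rescaling is unnecessary: an LCU already has $O(1)$ terms with coefficients in $[-1,1]$, and right-multiplying by the efficient unitary $B(a)$ preserves this.
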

\begin{proof}
    Because of the $\eps$-self-consistency of $(Q_B,\mu_B)$, we can immediately apply \cref{lemma:self-consistency-identity}, yielding (for all $a\in\bits^n$)
    \[
    \E_{B\sim\mu_B}\sum_\alpha\norm{B(a) - (-1)^{a \cdot \Dec(\alpha)}\id}_{\psi^{\Enc(B)}_\alpha}^2\leq O(\eps).
    \]
    By the above, we can interpret $\psi^{\Enc(B)}_\alpha$ as an approximate eigenstate of $B(a)$. Since $B(a)$ is an observable, applying it on that state will only yield a factor of either $-1$ or $1$, which is irrelevant if we just want to make use of an upper-bound on the norm of $A_{B}$. This is the intuition behind the fact that we can completely disregard self-consistent operators on the right side of the state-dependent norm.
    \par 
    \medskip
    By \cref{cor:efficient-operator-state-switching}, where $D_1$ is the point distribution on $q$ and $D_2=\mu_b$ and the fact that $A_BB(a)$ is again an efficient LCU (since $B(a)$ is an efficient observable): 
    \begin{equation*}
        \E_{B\sim\mu_B}\|A_{B} B(a)\|_{\psi^{\Enc(q)}}^2 \approx_{\eta(\lambda)} \E_{B\sim\mu_B}\|A_{B}B(a)\|_{\psi^{\Enc(B)}}^2\,,
    \end{equation*}
    so we can focus on bounding the latter quantity. To do so, we calculate:
    \begin{align*}
        \E_{B\sim\mu_B}&\| A_{B}B(a) \|_{\psi^{\Enc(B)}}^2 \\
        &= \E_{B\sim\mu_B}\sum_{\alpha}  \| A_{B}B(a) \|_{\psi^{\Enc(B)}_\alpha}^2 \\
        &= \E_{B\sim\mu_B}\sum_{\alpha} \| A_{B}B(a) - (-1)^{a \cdot \Dec(\alpha)} A_{B} + (-1)^{a \cdot \Dec(\alpha)} A_{B} \|_{\psi^{\Enc(B)}_\alpha}^2 \\
        &\leq 2\E_{B\sim\mu_B}\sum_{\alpha} \left( \|(-1)^{a \cdot \Dec(\alpha)} A_{B} \|_{\psi^{\Enc(B)}_\alpha}^2+ \|A_{B}(B(a) - (-1)^{a \cdot \Dec(\alpha)}\id) \|_{\psi^{\Enc(B)}_\alpha}^2\right) \\
        &= 2\E_{B\sim\mu_B}\|A_{B}\|_{\psi^{\Enc(B)}}^2 + 2 \E_{B\sim\mu_B}\sum_\alpha \|A_{B}(B(a) - (-1)^{a \cdot \Dec(\alpha)}\id) \|_{\psi^{\Enc(B)}_\alpha}^2 \\
        &\leq 2\E_{B\sim\mu_B}\|A_{B}\|_{\psi^{\Enc(B)}}^2 + 2 \E_{B\sim\mu_B} \|A_{B}\|^2\sum_\alpha \|B(a) - (-1)^{a \cdot \Dec(\alpha)}\id \|_{\psi^{\Enc(B)}_\alpha}^2 \\
        &\leq 2\E_{B\sim\mu_B}\|A_{B}\|_{\psi^{\Enc(B)}}^2 + 2 O(1) \E_{B\sim\mu_B} \sum_\alpha \|B(a) - (-1)^{a \cdot \Dec(\alpha)}\id \|_{\psi^{\Enc(B)}_\alpha}^2 \\
        &\leq 2\E_{B\sim\mu_B}\|A_{B}\|_{\psi^{\Enc(B)}}^2 + O(\eps).
    \end{align*}
    We can then use \cref{cor:efficient-operator-state-switching} again (and the fact that $A_{B}$ is an efficient LCU), which completes the proof.
\end{proof}

\begin{corollary}\label{cor:compiled-prover-switching-proj}
    For any efficient LCU $A\in L(\H)$ with $\norm{A}\leq O(1)$ and $\eps$-self-consistent question distribution $(Q_B,\mu_B)$ over questions $B$, corresponding to efficient PVMs $\{B^v\}_{v\in\{0,1\}^n}$, for all $q\in\Qa$ there exists a cryptographically small function $\ea$ such that
    \[
    \E_{B\sim\mu_B}\sum_{v\in\zo^n}\norm{AB^v}^2_{\psi^{\Enc(q)}}\leq 2 \|A\|^2_{\psi^{\Enc(q)}} + O(\eps) + \ea.
    \]
\end{corollary}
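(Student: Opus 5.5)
The plan is to reduce the projector statement to the observable statement of \cref{lemma:compiled-prover-switching} via Parseval. Since $\{B^v\}_v$ is a PVM and $B(a)=\sum_v(-1)^{a\cdot v}B^v$, we have $B^v=\E_{a}(-1)^{a\cdot v}B(a)$. Expanding,
\[
\sum_{v\in\zo^n}\norm{AB^v}^2_{\rho}=\sum_v\Tr{B^vA^\dag AB^v\rho}=\E_{a\in\zo^n}\norm{AB(a)}^2_\rho ,
\]
for any $\rho\in\Pos(\H)$. This holds because the cross terms $\E_{a}(-1)^{a\cdot(v\oplus v')}$ vanish unless $v=v'$; equivalently, one can invoke Parseval (\cref{cor:parseval}) as in the proof of \cref{cor:self-consistent-equality}. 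Applying this with $\rho=\psi^{\Enc(q)}$ and averaging over $B\sim\mu_B$ turns the left-hand side of the claim into $\E_{B\sim\mu_B}\E_{a}\norm{AB(a)}^2_{\psi^{\Enc(q)}}$.

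Next, I would apply \cref{lemma:compiled-prover-switching} with the $B$-independent LCU $A_B\coloneqq A$, which has $\norm{A}\le O(1)$. For each fixed $a$ this gives
\[
\E_{B\sim\mu_B}\norm{AB(a)}^2_{\psi^{\Enc(q)}}\le 2\norm{A}^2_{\psi^{\Enc(q)}}+O(\eps)+\eta_a(\lambda).
\]
The right-hand side of the target then follows, because the $B$-independence of $A$ removes the expectation from the first term.

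The main obstacle is that averaging over the exponentially many $a\in\zo^n$ produces $\E_a\eta_a(\lambda)$, which is a priori not a single cryptographically small function. I would handle this the way the paper's convention prescribes. The cleanest route is to redo the two switching steps inside the proof of \cref{lemma:compiled-prover-switching} with $a$ included in the sampled parameter $x=(B,a)$, taking $a$ uniform. Both $A B(a)$ and $A$ are uniformly efficient LCUs in $(B,a)$ by \cref{lemma:observables-efficient-unitaries}, so \cref{cor:efficient-operator-state-switching} yields a single $\ea$. The self-consistency step of \cref{lemma:self-consistency-identity} already holds for every $a$ with a uniform $O(\eps)$, so averaging over $a$ costs nothing there. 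Alternatively, one can appeal to \cref{rem:non-uniformity}: $\E_a\eta_a\le\max_a\eta_a$, and that maximum is cryptographically small for non-uniform adversaries with the worst-case $a$ hardwired. Either route gives the stated bound.
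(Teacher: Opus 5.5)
Your proposal is correct and follows the paper's proof. The paper also reruns the argument of \cref{lemma:compiled-prover-switching} with the uniformly random $a$ folded into the sampled index of \cref{cor:efficient-operator-state-switching}, which yields a single $\ea$, and then turns $\E_{a}\norm{AB(a)}^2_{\psi^{\Enc(q)}}$ into $\sum_v\norm{AB^v}^2_{\psi^{\Enc(q)}}$ via Parseval (\cref{cor:parseval}). The differences are cosmetic: you apply Parseval first, and the paper writes the error as $\norm{A}^2O(\eps)$, which is $O(\eps)$ because $\norm{A}\leq O(1)$.
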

\begin{proof}
The same argument as in \cref{lemma:compiled-prover-switching}, with the additional parameter $a$ drawn uniformly from $\zo^n$ included in the sampled index of \cref{cor:efficient-operator-state-switching}, yields a single cryptographically small function $\ea$ (for this $q$) such that
    \begin{align*}
        \E_{B\sim\mu_B}\E_{a\in\zo^n} \|AB(a)\|^2_{\psi^{\Enc(q)}} &\leq 2\| A\|_{\psi^{\Enc(q)}}^2 + \|A\|^2O(\eps)+
        \ea.
    \end{align*}
    Parseval's identity (\cref{cor:parseval}), since $AB^v=\widehat{AB}(a)$, then gives
    \begin{align*}
    \E_{B\sim\mu_B}\sum_{v\in\zo^n} \|AB^v\|^2_{\psi^{\Enc(q)}} &= \E_{B\sim\mu_B}\E_{a\in\zo^n} \|AB(a)\|^2_{\psi^{\Enc(q)}} \\
    &\leq 2\| A\|_{\psi^{\Enc(q)}}^2 + \|A\|^2O(\eps)+
        \ea.
    \end{align*}
\end{proof}

\begin{corollary}\label{cor:compiled-prover-switching-obs}
    For any $\eps$-self-consistent question distribution $(Q_B,\mu_B)$ over questions $B$, corresponding to efficient PVMs $\{B^v\}_{v\in\{0,1\}^n}$ and efficient LCUs $A,C\in L(\H)$, with $A\approx_\delta C$ and $\norm{A-C}\leq O(1)$. For all $a\in\zo^n$ and all $q\in\Qa$ there exists a cryptographically small function $\ea$ such that
    \[
    AB(a)\approx_{\delta+\eps+\ea} CB(a),
    \]
    on $\peq$ and under implicit expectation over $B\sim\mu_B$.
\end{corollary}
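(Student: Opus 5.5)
We reduce the claim directly to \cref{lemma:compiled-prover-switching}, applied to the difference operator $A-C$ (taken independent of $B$). Unfolding the notation of \cref{def:delta-isometric}, we must show
\[
\E_{B\sim\mu_B}\norm{AB(a)-CB(a)}^2_{\peq}\leq O(\delta+\eps)+\ea .
\]
By linearity, $AB(a)-CB(a)=(A-C)B(a)$. We set $A_B\deq A-C$ for every $B\in Q_B$. Since $A$ and $C$ are efficient LCUs, their difference is again an efficient LCU: concatenate the two unitary lists and negate the coefficients of $C$. The term count stays constant and the coefficients stay in $[-1,1]$. The hypothesis $\norm{A-C}\leq O(1)$ is exactly the operator-norm bound required by \cref{lemma:compiled-prover-switching}.

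Next we invoke \cref{lemma:compiled-prover-switching} with this $A_B$, the $\eps$-self-consistent distribution $(Q_B,\mu_B)$, the fixed $q\in\Qa$ and the fixed $a$. It gives
\[
\E_{B\sim\mu_B}\norm{(A-C)B(a)}^2_{\peq}\leq 2\norm{A-C}^2_{\peq}+O(\eps)+\ea .
\]
The hypothesis $A\approx_\delta C$ on $\peq$ means $\norm{A-C}^2_{\peq}=O(\delta)$. Substituting this bound gives $O(\delta+\eps)+\ea$, which is the claimed $\approx_{\delta+\eps+\ea}$.

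The only point needing care is the efficiency bookkeeping. The difference must be a genuine uniformly efficient LCU, so that the switching corollary used inside \cref{lemma:compiled-prover-switching} applies. The cryptographically small function must also come from a single invocation for the fixed $q$ and $a$. Both follow immediately from the definitions, so I expect no real obstacle; the rest is a direct substitution.
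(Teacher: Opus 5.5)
Your proposal is correct and matches the paper's proof. The paper likewise applies \cref{lemma:compiled-prover-switching} to the efficient LCU $A-C$ to get $\E_{B\sim\mu_B}\norm{(A-C)B(a)}^2_{\peq}\leq 2\norm{A-C}^2_{\peq}+O(\eps)+\ea$, then substitutes the hypothesis $A\approx_\delta C$ (the paper merely asserts that $A-C$ is again an efficient LCU, which you justify by concatenating the two lists).
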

\begin{proof}
    \begin{align*}
        \E_{B\sim\mu_B}\| (A - C) B(a)\|^2_{\psi^{\Enc(q)}} &\leq 2\| A - C \|_{\psi^{\Enc(q)}}^2 + O(\eps)+
        \eta(\lambda)\\
        &\leq O(\delta+\eps+\ea),
    \end{align*}
    where the first inequality follows from \cref{lemma:compiled-prover-switching} since $A-C$ is again an efficient LCU and the fact that $B(a)$ is constructed from an efficient PVM, the last inequality uses the hypothesis.
\end{proof}

\begin{lemma}\label{lemma:reverse-compiled-prover-switching}
    For any efficient LCU $A\in L(\H)$ with $\norm{A}\leq O(1)$ and $\eps$-self-consistent question distribution $(Q_B,\mu_B)$ over questions $B$, corresponding to efficient PVMs $\{B^v\}_{v\in\{0,1\}^n}$, for all $q\in\Qa$ and all $a \in \{0,1\}^n$ there exists a cryptographically small function $\ea$ such that
    \[
    \norm{A}^2_{\psi^{\Enc(q)}}\leq 2\E_{B\sim\mu_B} \norm{AB(a)}^2_{\psi^{\Enc(q)}} + O(\eps)+\ea.
    \]
\end{lemma}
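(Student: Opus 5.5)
The plan is to mirror the proof of \cref{lemma:compiled-prover-switching}, with the roles of $A$ and $AB(a)$ exchanged. The key identity is $A = AB(a)B(a)$, since $B(a)$ is a binary observable and $B(a)^2=\id$. So I will apply the forward switching lemma to the operator $A' \coloneqq AB(a)$ in place of $A_B$. Here $A'$ depends on $B$, which \cref{lemma:compiled-prover-switching} explicitly allows.

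The first step is to check the hypotheses. $A'$ is an efficient LCU, because $A$ is an efficient LCU and $B(a)$ is an efficient unitary, uniform in $B$ and $a$, by \cref{lemma:observables-efficient-unitaries}. Multiplying each unitary term of $A$ by $B(a)$ preserves the LCU structure and the coefficients. Moreover $\norm{A'}\le\norm{A}\norm{B(a)}=\norm{A}\le O(1)$.

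Applying \cref{lemma:compiled-prover-switching} with $A_B = AB(a)$ and the same $a$ then gives
\[
\E_{B\sim\mu_B}\norm{AB(a)B(a)}^2_{\psi^{\Enc(q)}} \le 2\E_{B\sim\mu_B}\norm{AB(a)}^2_{\psi^{\Enc(q)}} + O(\eps)+\ea .
\]
Using $B(a)^2=\id$, the left-hand side is $\E_{B}\norm{A}^2_{\psi^{\Enc(q)}}=\norm{A}^2_{\psi^{\Enc(q)}}$, because $A$ does not depend on $B$. This is exactly the claim.

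If one prefers a self-contained argument, the route is the same as in the forward lemma. First switch the state from $\psi^{\Enc(q)}$ to $\psi^{\Enc(B)}$ via \cref{cor:efficient-operator-state-switching}. Then insert $\pm(-1)^{a\cdot\Dec(\alpha)}A$ and write $A = (-1)^{a\cdot\Dec(\alpha)}AB(a) + A\bigl((-1)^{a\cdot \Dec(\alpha)}\id - B(a)\bigr)$ on each $\pea{B}$. Apply the squared triangle inequality, bound the error term by $\norm{A}^2\cdot O(\eps)$ using \cref{lemma:self-consistency-identity}, and switch back to $\psi^{\Enc(q)}$.

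I expect no real obstacle. The only point needing care is that the cryptographically small functions are obtained from single invocations with parameter $B\sim\mu_B$ sampled by the reduction; this is already handled inside \cref{lemma:compiled-prover-switching}.
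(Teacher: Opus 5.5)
Your main argument is correct, but it takes a different route from the paper. The paper reproves the inequality from scratch. It switches from $\psi^{\Enc(q)}$ to $\psi^{\Enc(B)}$ with \cref{cor:efficient-operator-state-switching}, writes $(-1)^{a\cdot\Dec(\alpha)}A = AB(a) - A\bigl(B(a)-(-1)^{a\cdot\Dec(\alpha)}\id\bigr)$ on each $\pea{B}$, applies the squared triangle inequality together with \cref{lemma:self-consistency-identity}, and switches back. This is essentially your ``self-contained'' fallback.

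Your primary route instead obtains the statement as a formal corollary of \cref{lemma:compiled-prover-switching}, by substituting $A_B = AB(a)$ and using $B(a)^2=\id$. This is legitimate for three reasons:
\begin{itemize}
\item that lemma explicitly allows the operator to depend on $B$;
\item the quantifier over $a$ comes after the choice of $A_B$, so you may take the same $a$;
\item $AB(a)$ is an efficient LCU with $\norm{AB(a)}=\norm{A}$.
\end{itemize}
Your reduction buys brevity and shows that the forward and reverse switching bounds are one inequality related by the involution $B(a)$, with identical constants. The paper's version does not rely on the $B$-dependence clause of the forward lemma, so the reverse direction stands on its own.

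One small slip in your fallback: with $s=a\cdot\Dec(\alpha)$, the decomposition $A = (-1)^{s}AB(a) + A\bigl((-1)^{s}\id - B(a)\bigr)$ is false for $s=1$. The correct split is $A = (-1)^s AB(a) + A\bigl(\id-(-1)^s B(a)\bigr)$. Its error term has the same state-dependent norm as yours, so the resulting bound is unaffected.
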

\begin{proof}
    Because of the $\eps$-self-consistency of $(Q_B,\mu_B)$, we can immediately apply \cref{lemma:self-consistency-identity}, yielding (for all $a\in\bits^n$)
    \[
    \E_{B\sim\mu_B}\sum_\alpha\norm{B(a) - (-1)^{a \cdot \Dec(\alpha)}\id}_{\psi^{\Enc(B)}_\alpha}^2\leq O(\eps).
    \]
    By \cref{cor:efficient-operator-state-switching} and the fact that $A$ is an efficient LCU: 
    \begin{equation*}
        \|A\|_{\psi^{\Enc(q)}}^2 \approx_{\eta(\lambda)} \E_{B\sim\mu_B}\|A\|_{\psi^{\Enc(B)}}^2\,,
    \end{equation*}
    so we can focus on bounding the latter quantity. To do so, we calculate:
    \begin{align*}
        \E_{B\sim\mu_B}\| A\|_{\psi^{\Enc(B)}}^2 &= \E_{B\sim\mu_B}\sum_{\alpha}  \| A \|_{\psi^{\Enc(B)}_\alpha}^2 \\
        &= \E_{B\sim\mu_B}\sum_{\alpha} \norm{AB(a) - A\paren{B(a) - (-1)^{a \cdot \Dec(\alpha)} \id }}_{\psi^{\Enc(B)}_\alpha}^2 \\
        &\leq 2\E_{B\sim\mu_B}\sum_{\alpha} \paren{\|AB(a) \|_{\psi^{\Enc(B)}_\alpha}^2+ \|A(B(a) - (-1)^{a \cdot \Dec(\alpha)}\id) \|_{\psi^{\Enc(B)}_\alpha}^2} \\
        &\leq 2\E_{B\sim\mu_B}\|AB(a)\|_{\psi^{\Enc(B)}}^2 + 2 \|A\|^2 \E_{B\sim\mu_B}\sum_\alpha \|B(a) - (-1)^{a \cdot \Dec(\alpha)}\id \|_{\psi^{\Enc(B)}_\alpha}^2 \\
        &\leq 2\E_{B\sim\mu_B}\|AB(a)\|_{\psi^{\Enc(B)}}^2 + O(\eps).
    \end{align*}
    We can then use \cref{cor:efficient-operator-state-switching} (state switching) again since the product of an efficient LCU and an efficient observable is again an efficient LCU, which completes the proof.
\end{proof}

\begin{remark}
    \cref{lemma:compiled-prover-switching} and \cref{lemma:reverse-compiled-prover-switching} allow us to both introduce and remove observables corresponding to $\eps$-self-consistent question distributions on the right of the state-dependent norm. The bound in both directions is quite loose, but this gives the best $\eps$ dependence. For a tighter two-sided bound, refer to the following lemma, which has a square-root dependence on $\eps$.
 \end{remark}

\begin{lemma}\label{lemma:compiled-prover-switching-povm}
    Let $\{M_x,\id-M_x\}_x$ be a uniformly efficient family of two-outcome POVMs indexed by a classical parameter $x$, let $\nu$ be an efficiently sampleable distribution over pairs $(x,q)$ with $q\in\Qa$, and let $(Q_B,\mu_B)$ be an $\eps$-self-consistent question distribution over questions $B$, corresponding to efficient PVMs $\{B^v\}_{v\in\{0,1\}^n}$. Then for all $a\in\zo^n$ there exists a cryptographically small function $\ea$ such that
    \[
    \left|\E_{(x,q)\sim\nu}\E_{B\sim\mu_B}\Bigl(\Tr{B(a)M_xB(a)\,\psi^{\Enc(q)}}-\Tr{M_x\,\psi^{\Enc(q)}}\Bigr)\right|\leq C\eps^{1/2}+2\ea,
    \]
    for some constant $C\in\Nat$.
\end{lemma}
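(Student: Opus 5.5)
The plan is to follow the pattern of \cref{lemma:compiled-prover-switching}: first switch the state from $\psi^{\Enc(q)}$ to $\psi^{\Enc(B)}$ using IND-CPA security, then use self-consistency on $\psi^{\Enc(B)}$ to show that conjugating $M_x$ by $B(a)$ changes expectations by only $O(\eps^{1/2})$, and finally switch back. The two state switches account for the $2\ea$, and the self-consistency step for the $C\eps^{1/2}$.

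\textbf{Step 1 (first state switch).} Note that $B(a)M_xB(a)$ is again a uniformly efficient two-outcome POVM element, indexed by $(x,B)$. It is implemented by applying the efficient unitary $B(a)$ (\cref{lemma:observables-efficient-unitaries}), measuring $\{M_x,\id-M_x\}$, and then undoing $B(a)$. It is still a POVM element because $B(a)$ is unitary. Apply \cref{lemma:poly-state-indistinguishability} in the $D_1,D_2$ form with parameter $(x,B,q)$:
\begin{itemize}
\item $D_1$: sample $(x,q)\sim\nu$ and $B\sim\mu_B$, and set $z=q$;
\item $D_2$: same parameter marginal, but set $z=B$.
\end{itemize}
Both are efficiently sampleable. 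The first term can then be replaced by $\E\,\Tr{B(a)M_xB(a)\psi^{\Enc(B)}}$ at cost $\ea$. Doing the same with the POVM $M_x$ replaces the second term by $\E\,\Tr{M_x\psi^{\Enc(B)}}$. Merging the two invocations into a single adversary, or simply adding them, gives the $2\ea$.

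\textbf{Step 2 (self-consistency).} Decompose $\psi^{\Enc(B)}=\sum_\alpha\pea{B}$ and write $s_\alpha=(-1)^{a\cdot\Dec(\alpha)}$. Then
\[
\Tr{B(a)M_xB(a)\pea{B}}-\Tr{M_x\pea{B}} = \Tr{M_x\bigl(B(a)\pea{B}B(a)-\pea{B}\bigr)}.
\]
Since $B(a)^2=\id$, we have $B(a)\pea{B}B(a)-\pea{B} = \bigl(B(a)\pea{B}-\pea{B}B(a)\bigr)B(a)$. Using $\norm{M_x}\leq 1$, $\norm{B(a)}=1$ and H\"older, the absolute value is at most $\norm{B(a)\pea{B}-\pea{B}B(a)}_1$. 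Summing over $\alpha$ and averaging over $B\sim\mu_B$ and $(x,q)\sim\nu$, the bound is independent of $(x,q)$, and \cref{lemma:self-consistent-to-trace-norm-crit} gives $O(\eps^{1/2})$. Alternatively, one can write $B(a)=s_\alpha\id+(B(a)-s_\alpha\id)$, expand, and bound the cross terms by Cauchy--Schwarz against \cref{lemma:self-consistency-identity}. This also yields $O(\eps^{1/2})$, and the quadratic term contributes $O(\eps)$.

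\textbf{Step 3.} Combine by the triangle inequality.

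\textbf{Main obstacle.} The delicate point is Step 1. The state switch must be licensed for a POVM indexed jointly by $x$ and $B$, with $q$ correlated with $x$ through $\nu$. This is why the parameter must be the triple $(x,B,q)$ with identical marginals under $D_1$ and $D_2$, and why we need the uniform efficiency of $B(a)M_xB(a)$ in $(x,B)$. Everything else is routine given \cref{lemma:self-consistent-to-trace-norm-crit}.
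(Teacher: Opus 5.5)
Your proposal is correct and follows essentially the same route as the paper: a three-term telescoping split in which the two outer state-switching terms are each bounded by \cref{lemma:poly-state-indistinguishability} (giving the $2\ea$), and the middle term is bounded via tracial H\"older and \cref{lemma:self-consistent-to-trace-norm-crit} (giving $C\eps^{1/2}$). Your per-$\alpha$ treatment of the middle term and your $D_1,D_2$ phrasing of the switch are only cosmetic variants of the paper's direct use of the triple form with $(z_0,z_1)=(q,B)$.
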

\begin{proof}
    Expanding through a telescoping sum and using linearity of the trace and the expectation,
    \begin{align*}
    &\E_{(x,q)\sim\nu}\E_{B\sim\mu_B}\Bigl(\Tr{B(a)M_xB(a)\,\psi^{\Enc(q)}}-\Tr{M_x\,\psi^{\Enc(q)}}\Bigr)\\
    &\quad=\E_{(x,q)\sim\nu}\E_{B\sim\mu_B}\Tr{B(a)M_xB(a)\paren{\psi^{\Enc(q)}-\psi^{\Enc(B)}}}\\
    &\quad\quad+\E_{(x,q)\sim\nu}\E_{B\sim\mu_B}\Bigl(\Tr{B(a)M_xB(a)\,\psi^{\Enc(B)}}-\Tr{M_x\,\psi^{\Enc(B)}}\Bigr)\\
    &\quad\quad+\E_{(x,q)\sim\nu}\E_{B\sim\mu_B}\Tr{M_x\paren{\psi^{\Enc(B)}-\psi^{\Enc(q)}}}.
    \end{align*}
    For the middle term, the triangle inequality, the cyclicity of the trace, the tracial H\"older inequality \cite[Theorem 2]{Baumgartner2011}, $\norm{M_xB(a)}\leq 1$ and \cref{lemma:self-consistent-to-trace-norm-crit} give
    \begin{align*}
    \Bigl|\E_{B\sim\mu_B}\Bigl(\Tr{B(a)M_xB(a)\,\psi^{\Enc(B)}}-\Tr{M_x\,\psi^{\Enc(B)}}\Bigr)\Bigr|&\leq \E_{B\sim\mu_B}\norm{B(a)\psi^{\Enc(B)}-\psi^{\Enc(B)}B(a)}_1\\
    &\leq C\eps^{1/2},
    \end{align*}
    pointwise in $x$ and $a$, hence also after taking $\E_{(x,q)\sim\nu}$. For the first and third terms, both $\{M_x\}$ and $\{B(a)M_xB(a)\}_{(x,B,a)}$ are uniformly efficient two-outcome POVM families (conjugating an efficient POVM element by the efficient unitary $B(a)$ from \cref{lemma:observables-efficient-unitaries} preserves uniform efficiency), so two applications of \cref{lemma:poly-state-indistinguishability}---with $D$ sampling $(x,q)\sim\nu$ and $B\sim\mu_B$, and plaintext pair $(z_0,z_1)=(q,B)$---bound their absolute values by $\ea$ each. Combining the three bounds completes the proof.
\end{proof}

\begin{lemma}\label{lemma:compiled-prover-switching-2}
    For any efficient LCU $A\in L(\H)$ and $\eps$-self-consistent question distribution $(Q_B,\mu_B)$ over questions $B$, corresponding to efficient PVMs $\{B^v\}_{v\in\{0,1\}^n}$, for all $a\in\zo^n$ and all $q\in\Qa$ there exists a cryptographically small function $\ea$ such that
    \[
    \E_{B\sim\mu_B}\norm{AB(a)}^2_{\psi^{\Enc(q)}}\approx_{C\norm{A}^2\eps^{1/2}+2\ea} \|A\|^2_{\psi^{\Enc(q)}},
    \]
    with some constant $C\in\Nat$.
\end{lemma}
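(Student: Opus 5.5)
The plan is to reduce the statement to \cref{lemma:compiled-prover-switching-povm}, applied to the Hermitian PSD operator $A^\dag A$ suitably rescaled into a POVM element. By definition of the state-dependent norm,
\[
\E_{B\sim\mu_B}\norm{AB(a)}^2_{\psi^{\Enc(q)}}-\norm{A}^2_{\psi^{\Enc(q)}} = \E_{B\sim\mu_B}\Bigl(\Tr{B(a)A^\dag A B(a)\,\psi^{\Enc(q)}}-\Tr{A^\dag A\,\psi^{\Enc(q)}}\Bigr),
\]
using that $B(a)$ is a Hermitian unitary. So it suffices to bound the difference between the expectation of $B(a)\,A^\dag A\,B(a)$ and that of $A^\dag A$ on $\psi^{\Enc(q)}$. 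Set $M\coloneqq A^\dag A/\norm{A}^2$ (the case $A=0$ is trivial). Then $0\preceq M\preceq\id$, so $\{M,\id-M\}$ is a two-outcome POVM. Taking $\nu$ to be the point distribution on $(x,q)$ with trivial $x$, \cref{lemma:compiled-prover-switching-povm} gives
\[
\Bigl|\E_{B}\bigl(\Tr{B(a)MB(a)\psi^{\Enc(q)}}-\Tr{M\psi^{\Enc(q)}}\bigr)\Bigr|\leq C\eps^{1/2}+2\ea.
\]
Multiplying through by $\norm{A}^2$ yields the claim, with error $C\norm{A}^2\eps^{1/2}+2\norm{A}^2\ea$. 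The factor $\norm{A}^2$ on the cryptographic term can be absorbed into $\ea$ as long as $\norm{A}=O(1)$. This holds for an LCU with coefficients in $[-1,1]$ and constantly many terms, since $\norm{A}\leq m$.

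The main obstacle is the efficiency requirement. \cref{lemma:compiled-prover-switching-povm} needs $\{M,\id-M\}$ to be an efficiently implementable POVM, but we only know $A$ as an LCU. My first approach is to note that $A^\dag A$ is again a Hermitian uniformly efficient LCU, as in the proof of \cref{cor:efficient-operator-state-switching}. \cref{lemma:block-implementable} then gives a block encoding with scale $1/\nu$, and the Hadamard test of \cref{lemma:QPT-measurable} produces an efficient POVM with outcomes $\pm\nu$ whose mean is $\Tr{A^\dag A\rho}$.

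If this does not fit the two-outcome hypothesis directly, I will bypass the lemma and redo its three-term telescoping argument with $A^\dag A$ in place of $M_x$. The two outer terms, which switch $\psi^{\Enc(q)}\leftrightarrow\psi^{\Enc(B)}$, are then handled by \cref{lemma:efficient-operator-indistinguishability} applied to the Hermitian LCUs $A^\dag A$ and $B(a)A^\dag AB(a)$. The latter is again an efficient LCU, because $B(a)$ is an efficient unitary by \cref{lemma:observables-efficient-unitaries}. For the middle term, cyclicity and tracial H\"older give
\[
\bigl|\Tr{A^\dag A\,(B(a)\psi B(a)-\psi)}\bigr|\leq\norm{A^\dag A}\,\norm{B(a)\psi-\psi B(a)}_1,
\]
summed over $\alpha$, and \cref{lemma:self-consistent-to-trace-norm-crit} bounds this by $O(\norm{A}^2\eps^{1/2})$. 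Combining the three terms gives the claimed approximate equality, with cryptographically small terms merged via the non-uniformity convention.
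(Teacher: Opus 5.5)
Your fallback argument is the paper's proof. The paper switches $\psi^{\Enc(q)}\to\psi^{\Enc(B)}$ in both norms via \cref{cor:efficient-operator-state-switching}, which amounts to \cref{lemma:efficient-operator-indistinguishability} applied to $A^\dag A$ and $B(a)A^\dag AB(a)$, exactly as you propose. It then bounds the middle difference by cyclicity, tracial H\"older and \cref{lemma:self-consistent-to-trace-norm-crit} to get $O(\norm{A}^2\eps^{1/2})$, and switches back.

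You should present that argument on its own and drop the primary reduction to \cref{lemma:compiled-prover-switching-povm}. The element $A^\dag A/\norm{A}^2$ is not known to be efficiently implementable, and $\norm{A}$ need not even be efficiently computable. There are efficient substitutes: post-selecting the ancilla of the block encoding of $A$ gives $A^\dag A/\nu_A^2$ with $\nu_A=\sum_i|\gamma_i|$, and the Hadamard-test element $\tfrac12(\id+A^\dag A/\nu_A^2)$ also works, since its $\tfrac12\id$ part cancels under conjugation by $B(a)$. However, both only yield error $O(\nu_A^2\eps^{1/2})$ after rescaling. That lemma's bound uses only $\norm{M_x}\le1$ and discards the actual norm $\norm{A}^2/\nu_A^2$. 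Since $\nu_A\ge\norm{A}$ can be much larger than $\norm{A}$, this is $O(\eps^{1/2})$ but not of the stated form $C\norm{A}^2\eps^{1/2}$.
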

\begin{proof}
    By Definition \ref{def:state-dependent-norm}, the cyclicity of the trace, and the fact that $B(a)$ is an observable:
\begin{align*}
\E_{B\sim\mu_B}\Big| \|AB(a)\|^2_{\psi^{\Enc(B)}} -\|A\|^2_{\psi^{\Enc(B)}}\Big| &\leq \E_{B\sim\mu_B}\Big|\Tr{B(a)A^\dag A(B(a)\psi^{\Enc(B)} - \psi^{\Enc(B)}B(a))}\Big|\\
\intertext{now we apply the tracial Hölder's inequality \cite[Theorem 2]{Baumgartner2011}:}
&\leq \E_{B\sim\mu_B} \|A^\dag AB(a)\|\cdot \norm{B(a)\psi^{\Enc(B)} - \psi^{\Enc(B)}B(a)}_1\\
\intertext{Using that $B(a)$ is an observable and \cref{lemma:op-norm-adjoint-invariance}:}
&\leq \E_{B\sim\mu_B} \|A\|^2\cdot \|B(a)\psi^{\Enc(B)} - \psi^{\Enc(B)}B(a)\|_1\\
&\leq \norm{A}^2O(\eps^{1/2}),
\end{align*}
the last line uses the fact that $(Q_B,\mu_B)$ is a $\eps$-self-consistent question distribution and \cref{lemma:self-consistent-to-trace-norm-crit} (with a triangle inequality and the definition of $\pe{B}$). This tells us that for all $a\in\zo^n$
\begin{equation}\label{eqn:hoelder-closeness}
\E_{B\sim\mu_B} \|AB(a)\|^2_{\psi^{\Enc(B)}} \approx_{C\norm{A}^2\eps^{1/2}} \E_{B\sim\mu_B}\|A\|^2_{\psi^{\Enc(B)}},
\end{equation}
for some constant $C\in\Nat$. This bound is stated for the state obtained on question $B$; we need the same estimate for an arbitrary initial Alice question. Since $A$ is an efficient LCU and $B(a)$ is an efficient observable, their product is an efficient LCU and we can use \cref{cor:efficient-operator-state-switching} to switch out the state in the state-dependent norm
\begin{align*}
\E_{B\sim\mu_B}\|AB(a)\|^2_{\psi^{\Enc(q)}} 
&\approx_{\ea}\E_{B\sim\mu_B}\|AB(a)\|^2_{\psi^{\Enc(B)}}\\
&\approx_{C\norm{A}^2\eps^{1/2}}\E_{B\sim\mu_B}\|A\|^2_{\psi^{\Enc(B)}}\\
&\approx_{\ea}\|A\|^2_{\psi^{\Enc(q)}},
\end{align*}
where the middle line uses \eqref{eqn:hoelder-closeness} and the last line uses \cref{cor:efficient-operator-state-switching} again, combined with the fact that the operators are efficient. 
\end{proof}

\subsection{Soundness}
\subsubsection{Conjugation relation}\label{subsec:conj}
One way in which the extended Pauli group differs from the Heisenberg--Weyl group is in the group relations that define it. The $F$ and $G$ operators lie in the $XY$-plane and are related to the latter by a conjugation relation. To apply the GH theorem, we need to certify that the prover's operators approximately satisfy all group relations, which means that we need a self-test for conjugation relations. Such a self-test is defined in \cref{protocol:conjugation-test-mod} and proven sound in \cref{lemma:conj}. The idea (which was first proposed in \cite{Coladangelo2024}) is the following: we need to test that three binary observables $A,B,R$ approximately satisfy the following relations
\[
RA\approx_{\eps}BR\quad\text{and}\quad AR\approx_{\eps}RB.
\]
To this end, we introduce two new binary observables $C_{AB}$ and $X_R$, which are certified to have the following form, in the basis spanned by an additional control qubit (defined by anti-commuting observables $Z_C$ and $X_C$):
\[
C_{AB}=\begin{pmatrix}
    A & 0\\
    0 & B\\
\end{pmatrix},\quad\text{and}\quad
X_R=\begin{pmatrix}
    0 & R\\
    R & 0\\
\end{pmatrix}.\numberthis\label{eqn:C-X-R-definition}
\]
It remains to show that $C_{AB}$ and $X_R$ approximately anti-commute, which yields the desired conjugation relations. An anti-commutation test is standard in the literature, so the main challenge is to ensure the correct form for the observables $C_{AB}$ and $X_R$.

\begin{lemma}\label{lemma:conj}
    Let $P^*$ be any computationally efficient prover modeled as in Section \ref{section:modeling} that succeeds with probability $1-\eps$ in the compiled conjugation game $\textprotocol{conj}(A,B,R)$, obtained from Protocol \ref{protocol:conjugation-test-mod}. Then for all $q\in\Qa$ there exists a cryptographically small function $\ea$ such that on $\peq$
    \[
    RA\approx_{\eps+\ea}BR\quad\text{and}\quad AR\approx_{\eps+\ea}RB.
    \] 
\end{lemma}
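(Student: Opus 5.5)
We plan to follow the structure suggested by \cref{eqn:C-X-R-definition}: certify that $C_{AB}$ and $X_R$ have the claimed block form relative to the control observables $Z_C,X_C$, and then transfer approximate anti-commutation of $C_{AB}$ and $X_R$ to the conjugation relations. Winning $\textprotocol{conj}(A,B,R)$ with probability $1-\eps$ implies winning each of the three subtests, and each of their constantly many leaf invocations, with probability $1-O(\eps)$. By the consistency convention and \cref{lemma:compiled-consistency}, every question distribution that appears is then $O(\eps)$-self-consistent. All relations below are established first on the state $\pe{W}$ for the relevant Alice question $W$, and then moved to an arbitrary $\peq$ via \cref{cor:efficient-operator-state-switching}. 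This is allowed because all operators involved are products of the prover's efficient observables, hence efficient LCUs. Each switch costs one $\ea$.

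\textbf{Step 1 (algebraic relations from the (anti)commutation subtest).} The standard analysis of the commutation and magic-square tests, carried out in the compiled setting with \cref{cor:self-consistent-equality} and the prover-switching lemmas, should give on $\peq$
\[
X_CZ_C\approx -Z_CX_C,\quad AX_C\approx X_CA,\quad BX_C\approx X_CB,\quad RZ_C\approx Z_CR,\quad X_RC_{AB}\approx C_{AB}X_R,
\]
each with error $O(\eps)+\ea$. The last relation deserves attention. The subtest in \cref{protocol:conjugation-test-mod} as written is $\textprotocol{com}(X_R,C_{AB})$, yet anti-commutation is what produces the conjugation relation. The sign will presumably be absorbed by how $X_R$ is characterized: its off-diagonal structure $X_R\approx RX_C$ together with $RZ_C\approx Z_CR$ means that commuting with $C_{AB}$ is equivalent to the required relation. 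We will verify this sign carefully rather than assume it.

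\textbf{Step 2 (characterization subtests).} From the $X_R$ characterization test, Alice's $X_R$ answer must equal the XOR of Bob's jointly measured $(R,X_C)$ outcomes. Combining \cref{lemma:generic-cross-isometric-equality} with $f(v,a)=v_1\oplus v_2$ and \cref{cor:self-consistent-equality} should yield $X_R\approx_{\eps+\ea}RX_C$. The same test also gives the commutation of $R$ with $X_C$ on the state. From the $C_{AB}$ characterization test we obtain $C_{AB}Z_C^0\approx AZ_C^0$ and $C_{AB}Z_C^1\approx BZ_C^1$, where $Z_C^b$ denotes the projectors. The first rejection rule controls $C_{AB}$ against $A$ on the $b_2=0$ branch, and the remaining consistency rules identify Bob's joint-measurement marginals with the single observables $A$, $B$, $Z_C$. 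Summing the two branches gives $C_{AB}\approx AZ_C^0+BZ_C^1$. Moving operators between the left and right of the state-dependent norm will use \cref{lemma:compiled-prover-switching} and \cref{cor:compiled-prover-switching-obs}.

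\textbf{Step 3 (assembling).} We substitute $X_R\approx RX_C$ and $C_{AB}\approx AZ_C^0+BZ_C^1$ into the relation from Step 1. Using $X_CZ_C^b\approx Z_C^{1-b}X_C$, which follows from anti-commutation, together with the commutation of $A$ and $B$ with $X_C$ and of $R$ with $Z_C$, we reorder so that both sides have the form $(\cdot)X_C$. This gives
\[
(RA-BR)Z_C^{b}X_C\approx 0\quad\text{and the analogous relation with } AR-RB.
\]
We then strip the factor $X_C$ with \cref{lemma:reverse-compiled-prover-switching}, and remove $Z_C^b$ by summing the two branches with the triangle inequality. This yields $RA\approx BR$ and $AR\approx RB$ on $\peq$.

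\textbf{Main obstacle.} The main difficulty is Step 3. Each commutation of an operator past another costs a prover-switching step. To apply \cref{lemma:compiled-prover-switching}, the operator being moved must correspond to a self-consistent question distribution of the same wrapped leaf, and every intermediate product must remain an efficient LCU of bounded norm. Our approach is to always push the rightmost operator onto the state through self-consistency, never the inner ones. This keeps errors linear in $\eps$ rather than $\sqrt\eps$, as the statement demands, so we must avoid \cref{lemma:compiled-prover-switching-2}.
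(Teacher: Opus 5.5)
Your decomposition matches the paper's proof: item 1 supplies the algebraic relations, the $X_R$ characterization gives $X_R\approx RX_C$ and $[R,X_C]\approx 0$, the $C_{AB}$ characterization gives $C_{AB}\approx AZ_C^0+BZ_C^1$, and you then expand the commutator. Your sign worry also resolves: in the block form \cref{eqn:C-X-R-definition}, $[X_R,C_{AB}]$ has off-diagonal blocks $RB-AR$ and $RA-BR$, so commutation is exactly the right relation and no sign needs absorbing.

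The gap is the last line of Step 3. That same block computation shows the commutator does \emph{not} give $(RA-BR)Z_C^b\approx 0$ for both values of $b$. Carrying out your reordering yields only
\[
\bigl((RA-BR)Z_C^1+(RB-AR)Z_C^0\bigr)X_C\approx 0 .
\]
After stripping $X_C$ and separating the two terms, you control $RA-BR$ only on the $Z_C^1$ branch and $AR-RB$ only on the $Z_C^0$ branch. (With the branch projectors on the right, separating the terms needs \cref{cor:compiled-prover-switching-proj}; the paper first moves them to the left, where orthogonality splits the norm exactly.) A bound on $(RA-BR)Z_C^1$ controls $RA-BR$ only on the $Z_C=1$ part of the prover's state and says nothing about the $Z_C=0$ part. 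So ``summing the two branches'' has nothing to sum.

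The missing ingredient is the paper's $R$-insertion step. In exact algebra $RA=BR$ iff $AR=RB$, but in the state-dependent setting this conversion has to be done branch by branch. Since $R^2=\id$, the identity $RA-BR=R(AR-RB)R$ holds exactly, so
\[
\norm{(RA-BR)Z_C^0}_\peq^2=\norm{(AR-RB)RZ_C^0}_\peq^2\leq 2\norm{(AR-RB)Z_C^0R}_\peq^2+O(\eps+\ea).
\]
The inequality uses $[R,Z_C]\approx 0$, with the left factor bounded in operator norm. The trailing $R$ is then removed by \cref{lemma:compiled-prover-switching}, using self-consistency of $R$. This reduces the quantity to the already-controlled $\norm{(RB-AR)Z_C^0}_\peq$.

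The symmetric manipulation on the $Z_C^1$ branch, using $AR-RB=R(RA-BR)R$, transfers the bound on $RA-BR$ to $AR-RB$. Only once each relation is controlled on both branches do $Z_C^0+Z_C^1=\id$ and the triangle inequality give $RA\approx_{\eps+\ea}BR$ and $AR\approx_{\eps+\ea}RB$. Every step stays linear in $\eps$, so your error-budget strategy is unaffected.
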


\begin{proof}
The operators $Z_C$ and $X_C$ act exclusively as reference or ``control'' operators: they do not appear in any other test and are solely introduced to describe the basis in which $C_{AB}$ is block-diagonal and $X_R$ is anti-diagonal.
\par 
\medskip
By our convention on automatic consistency tests (see the beginning of \cref{sec:protocols}), the leaf game $\textprotocol{conj}(A,B,R)$ is executed wrapped in the consistency test $\textprotocol{con}(\cdot)$ from \cref{protocol:consistency-test}. This certifies that every question distribution which appears in $\textprotocol{conj}(A,B,R)$ is $\eps$-self-consistent, by \cref{lemma:compiled-consistency}. 
\par 
\medskip
We start by analyzing Item 2 of \cref{protocol:conjugation-test-mod}, to characterize the structure of $C_{AB}$. Let $\{W_{AZ}^{ij}\}_{i,j \in \{0,1\}}$ be the 4-outcome projector corresponding to the Bob question $(A,Z_C)$ and similarly for $\{W_{BZ}^{ij}\}_{i,j \in \{0,1\}}$ and $(B,Z_C)$. Introduce the following compact notation:
\[
W_P \coloneqq W_P^0 - W_P^1,\quad W_P^{i} \coloneqq \sum_z W_{PZ}^{iz},\quad W_{Z|P} \coloneqq W_{Z|P}^0 - W_{Z|P}^1\quad\text{ and }\quad W_{Z|P}^i \coloneqq \sum_{p} W_{PZ}^{pi},
\]
for $P \in \{A,B\}$. Then a success probability of $1-O(\eps)$ in the cross-consistency test (Alice was asked $A$ or $B$ and Bob was asked ($A$, $Z_C$) or ($B$, $Z_C$)), implies that:
\begin{equation}\label{eqn:consistency-AB}
\begin{aligned}
\sum_\alpha \Tr{W_{P}^{\Dec(\alpha)}\pea{P}} \geq 1 - O(\eps),\\
\sum_\alpha \Tr{W_{Z|P}^{\Dec(\alpha)}\pea{Z_C}} \geq 1 - O(\eps),
\end{aligned}
\end{equation}
with $P\in\{A,B\}$ and by the protocol specification we can assume that $\alpha$ is the encryption of a single bit. Equation \eqref{eqn:consistency-AB} tells us that $W_P$ is $\eps$-cross-consistent with $P$ and $W_{Z|P}$ is $\eps$-cross-consistent with $(\{Z_C\},U_1)$, for $P\in\{A,B\}$. Recall that the automatic self-consistency check guarantees $\eps$-self-consistency of $(\{A\},U_1),(\{B\},U_1)$ and $(\{Z_C\},U_1)$. With that we can invoke \cref{cor:self-consistent-equality} (both times the argument of the norm is an efficient LCU), to conclude that for $P\in\{A,B\}$ and all $q\in\Qa$ we have
\begin{align}\label{eqn:W-P-closeness}
\frac{1}{2}\norm{W_P-P}^2_\peq = \sum_{a\in\{0,1\}}\|W^a_P-P^a\|^2_{\psi^{\Enc(q)}} \leq O(\eps) + \ea,
\end{align}
and
\begin{equation}\label{eqn:W-Z-closeness}
\frac{1}{2}\norm{W_{Z|P}-Z_C}^2_\peq = \sum_{a\in\{0,1\}}\|W^a_{Z|P}-Z_C^a\|^2_{\psi^{\Enc(q)}}\leq O(\eps)+ \ea.
\end{equation}
We have $W_{AZ}^{az} = W_A^aW_{Z|A}^z=W_{Z|A}^zW_A^a$, which we use to show that $A$ and $Z_C$ approximately commute. 
\begin{align*}
    AZ_C&\approx_{\eps+\ea} W_A Z_C\\
    &\approx_{\eps+\ea} W_AW_{Z|A}\\
    &= W_{Z|A}W_A\\
    &\approx_{\eps+\ea} W_{Z|A}A\\
    &\approx_{\eps+\ea} Z_CA,
\end{align*}
where we used \eqref{eqn:W-P-closeness}, \eqref{eqn:W-Z-closeness}, the $\eps$-self-consistency of $(\{A\},U_1)$, the fact that all products are efficient unitaries and $(\{Z_C\},U_1)$ and \cref{lemma:compiled-prover-switching}. Thus
\[
[A,Z_C]\approx_{\eps+\ea} 0.
\]
Repeating the same steps for $W^{bz}_{BZ}$, we find that \[
[B,Z_C]\approx_{\eps+\ea}0.
\]
Both expressions hold on the state $\peq$ for all $q\in\Qa$.
\par 
\medskip
Success in the characterization tests, where Alice is asked to measure $C_{AB}$, guarantees
\begin{align*}
    \sum_\alpha \Tr{\left(W_{AZ}^{\Dec(\alpha)0}+W_{Z|A}^1\right)\pea{C_{AB}}}\geq 1-O(\eps),
\end{align*}
and
\begin{align*}
    \sum_\alpha \Tr{\left(W_{BZ}^{\Dec(\alpha)1}+W_{Z|B}^0\right)\pea{C_{AB}}}\geq 1-O(\eps),
\end{align*}
or equivalently, by the normalization of the $\{W_{AZ}^{ij}\}$ and $\{W_{BZ}^{ij}\}$ PVMs and the factorization into marginals
\begin{align*}
    \sum_\alpha\norm{W_{A}^{\overline{\Dec(\alpha)}}W_{Z|A}^0}_{\pea{C_{AB}}}^2 = \sum_\alpha \Tr{\left(W_{A}^{\overline{\Dec(\alpha)}}W_{Z|A}^0\right)\pea{C_{AB}}}\leq O(\eps),\\
    \sum_\alpha\norm{W_{B}^{\overline{\Dec(\alpha)}}W_{Z|B}^1}_{\pea{C_{AB}}}^2 = \sum_\alpha \Tr{\left(W_{B}^{\overline{\Dec(\alpha)}}W_{Z|B}^1\right)\pea{C_{AB}}}\leq O(\eps),
\end{align*}
here the equalities follow from the fact that the marginal projectors commute. An application of a telescoping sum and the triangle inequality, using \cref{eqn:W-P-closeness} and \cref{eqn:W-Z-closeness} (already on $\psi^{\Enc(C_{AB})}$, since those bounds hold for all $q\in\Qa$) and the $\eps$-self-consistency of $(\{Z_C\}, U_1)$ with \cref{cor:compiled-prover-switching-proj}, gives
\[
\sum_\alpha\norm{A^{\overline{\Dec(\alpha)}}Z_C^0}_{\pea{C_{AB}}}^2 \leq O(\eps)+\ea\quad\text{ and }\quad \sum_\alpha\norm{B^{\overline{\Dec(\alpha)}}Z_C^1}_{\pea{C_{AB}}}^2\leq O(\eps)+\ea.
\]
By $A^0+A^1=B^0+B^1=\id$ it immediately follows that
\begin{align*}
    \sum_\alpha\norm{(\id-A^{\Dec(\alpha)})Z_C^0}_{\pea{C_{AB}}}^2\leq O(\eps)+\ea,\\
    \sum_\alpha\norm{(\id-B^{\Dec(\alpha)})Z_C^1}_{\pea{C_{AB}}}^2\leq O(\eps)+\ea.
\end{align*}
These two observations are sufficient to conclude that
\[
\sum_\alpha\norm{AZ_C^0-(-1)^{\Dec(\alpha)}Z_C^0}_{\pea{C_{AB}}}^2\leq O(\eps)+\ea,\numberthis\label{eqn:C-W-characterization-1}
\]
and
\[ 
\sum_\alpha\norm{BZ_C^1-(-1)^{\Dec(\alpha)}Z_C^1}_{\pea{C_{AB}}}^2\leq O(\eps)+\ea.\numberthis\label{eqn:C-W-characterization-2}
\]
Combining \cref{eqn:C-W-characterization-1} and \cref{eqn:C-W-characterization-2} with the guarantee from \cref{lemma:self-consistency-identity} (since $(\{C_{AB}\}, U_1)$ is $\eps$-self-consistent) and an application of the triangle inequality gives
\[
\norm{AZ_C^0+BZ_C^1-C_{AB}}_{\pe{C_{AB}}}^2\leq O(\eps)+\ea.\numberthis\label{eqn:C-structure}
\]
Now we will analyze Item 3 of \cref{protocol:conjugation-test-mod}, to characterize the structure of $X_R$.
Let the 4-outcome PVM, executed when Bob is asked to measure $(R,X_C)$, be denoted by $\{W^{ij}_{RX}\}_{i,j\in\zo}$. Define 
\[
W_R^i\coloneqq\sum_j W^{ij}_{RX},\quad W_R\coloneqq W_R^0 - W_R^1,\quad W_X^j\coloneqq\sum_i W^{ij}_{RX}\quad\text{ and }\quad W_X\coloneqq W_X^0-W_X^1.
\]
A success probability of $1-O(\eps)$ in the cross-consistency test (Alice measures $R$ or $X_C$) implies
\begin{align*}
    \sum_\alpha \Tr{W_R^{\Dec(\alpha)}\pea{R}}\ge 1-O(\eps)\\
    \sum_\alpha \Tr{W_X^{\Dec(\alpha)}\pea{X_C}}\ge 1-O(\eps)
\end{align*}
The first equation tells us that $\{W_R^i\}_i$ is $\eps$-cross-consistent with $(\{R\},U_1)$ and since the latter is $\eps$-self-consistent, we can apply \cref{cor:self-consistent-equality} to conclude that
\begin{equation}\label{eqn:closeness-W}
\frac{1}{2}\norm{W_R-R}^2_\peq=\sum_{i\in\zo}\norm{W_R^{i} - R^i}_\peq^2\leq O(\eps)+\ea.
\end{equation}
Analogously, from the second equation, $\eps$-self-consistency of $(\{X_C\}, U_1)$ and \cref{cor:self-consistent-equality} we conclude that
\begin{equation}\label{eqn:closeness-X}
\frac{1}{2}\norm{W_X-X_C}^2_\peq=\sum_{i\in\zo}\norm{W_X^{i} - X_C^i}_\peq^2\leq O(\eps)+\ea.
\end{equation}
Again we can use this and the observation that $W^{ij}_{RX} = W^i_RW^j_X=W^j_XW^i_R$ to conclude that $X_C$ and $R$ approximately commute
\begin{align*}
    X_CR&\approx_{\eps+\ea}W_XR\\
    &\approx_{\eps+\ea}W_XW_R\\
    &=W_RW_X\\
    &\approx_{\eps+\ea}W_RX_C\\
    &\approx_{\eps+\ea}RX_C,
\end{align*}
where we used \eqref{eqn:closeness-W}, \eqref{eqn:closeness-X}, the $\eps$-self-consistency of $R$ and $X_C$ and \cref{cor:compiled-prover-switching-obs}.
\par 
\medskip
Success in the characterization test, where Alice is asked to measure $X_R$, implies
\begin{align*}
    \sum_\alpha \Tr\Bigg[\overbrace{\paren{W^{\Dec(\alpha)0}_{RX} + W^{(1-\Dec(\alpha))1}_{RX}}}^{=W_{X_R}^{\Dec(\alpha)}}\pea{X_R}\Bigg]\geq 1-O(\eps).
\end{align*}
Similarly, since $\{W_{X_R}^{i}\}_i$ forms a PVM and $(\{X_R\},U_1)$ is $\eps$-self-consistent, we can conclude by \cref{cor:self-consistent-equality} that
\[
\sum_{i\in\zo}\norm{W_{X_R}^{i} - X_R^i}_\peq^2\leq O(\eps)+\ea,
\]
inserting the definition of $W_{X_R}^i$, we get
\[
\sum_{i\in\zo}\norm{W_{R}^{i}W_X^0 + W_R^{(1-i)}W_X^1 - X_R^i}_\peq^2\leq O(\eps)+\ea,
\]
skipping some steps where we use \eqref{eqn:closeness-W}, \eqref{eqn:closeness-X} and the $\eps$-self-consistency of $(\{X_C\},U_1)$, we obtain
\[
\frac{1}{2}\norm{\tilde X_R - X_R}^2_\peq = \sum_{i\in\zo}\Big\|\overbrace{R^{i}X_C^0 + R^{(1-i)}X_C^1}^{\tilde X_R^i} - X_R^i\Big\|_\peq^2\leq O(\eps+\ea),
\]
where $X_R=X_R^0-X_R^1$, $\tilde X_R\coloneqq \tilde X_R^0-\tilde X_R^1 = RX_C$, and the left-hand side was obtained through Parseval's identity (\cref{cor:parseval}). Since $\tilde X_R = RX_C$ we can immediately conclude that for all $q\in\Qa$, on $\peq$ we have
\[
X_R\approx_{\eps+\ea} RX_C.
\]

It remains to show the conjugation relation between $A$ and $B$. Success with probability $1-O(\eps)$ in Item 1 of the protocol certifies, by \cref{lemma:compiled-commutation}, that
\[
    [X_R,C_{AB}]\approx_{\eps+\ea}0,\;\;[A,X_C]\approx_{\eps+\ea}0,[B,X_C]\approx_{\eps+\ea}0,\;[R,Z_C]\approx_{\eps+\ea}0,
\]
and by \cref{lemma:compiled-anti-commutation}
\[
    \{X_C,Z_C\}\approx_{\eps+\ea}0,
\]
note that all of bounds above hold for distinct cryptographically small functions, but since there is only a constant number of them, we take the maximum and denote it by a single symbol.
\par 
\medskip
We can use the (anti)commutator relations, together with our characterizations of $X_R$ and $C_{AB}$ (see \cref{eqn:C-structure}) to obtain the desired conjugation relations. Concretely, expanding the commutator with $X_R\approx RX_C$ and $C_{AB}\approx AZ_C^0+BZ_C^1$, commuting $X_C$ past $A$ and $B$, $Z_C$ past $R$, and using the exact relation $X_CZ_C^0=Z_C^1X_C$ (which follows from $\{X_C,Z_C\}=0$; in the approximate setting the anti-commutation bound is used instead), we obtain
\[
[X_R,C_{AB}]\approx \paren{Z_C^0(RB-AR)+Z_C^1(RA-BR)}X_C,
\]
and removing the unitary $X_C$ on the right via \cref{lemma:compiled-prover-switching} (using the $\eps$-self-consistency of $(\{X_C\},U_1)$) yields
\[
\norm{Z_C^0(RB-AR)+Z_C^1(RA-BR)}_\peq^2\leq O(\eps+\ea),
\]
since $\{Z_C^i\}_i$ is a PVM (we specifically need orthogonality and idempotence) this is equivalent to
\begin{equation}\label{eqn:double-norm-bound}
    \norm{Z_C^0(RB-AR)}_\peq^2+\norm{Z_C^1(RA-BR)}_\peq^2\leq O(\eps+\ea),
\end{equation}

since both terms are lower-bounded by 0, this gives us a separate upper bound for each of them. The two bounded blocks pair $Z_C^0$ with $RB-AR$ and $Z_C^1$ with $RA-BR$; to bound the full relations we also need the two opposite pairings, which follow by inserting a factor of $R$. Now
\begin{align*}
    \norm{Z_C^0(RA-BR)}_\peq^2 &= \norm{RZ_C^0(RA-BR)}_\peq^2\\
    \intertext{where the equality is exact, since $R$ is a binary observable and hence $R^2=\id$. Using the fact that $Z_C$ and $R$ approximately commute (as certified by Item 1 of the protocol), the $\eps$-self-consistency of $(\{A\},U_1),(\{B\},U_1)$ and $(\{R\},U_1)$, combined with repeated applications of \cref{lemma:compiled-prover-switching}:}
    &\leq \norm{Z_C^0R(RA-BR)}_\peq^2 + O(\eps+\ea)\\
    &= \norm{Z_C^0(RB-AR)R}_\peq^2 + O(\eps+\ea)\\
    \intertext{where we used $R(RA-BR)=A-RBR=-(RB-AR)R$. Again using $\eps$-self-consistency of $(\{R\},U_1)$ and \cref{lemma:compiled-prover-switching}:}
    &\leq 2\norm{Z_C^0(RB-AR)}_\peq^2 + O(\eps+\ea)\\
    &\leq O(\eps+\ea),
\end{align*}
the last line follows from \eqref{eqn:double-norm-bound}. This finally allows us to bound the desired quantity, by using the above and \eqref{eqn:double-norm-bound}
\begin{align*}
    \norm{RA-BR}^2_\peq &= \norm{(Z_C^0+Z_C^1)(RA-BR)}^2_\peq \\
    &= \norm{Z_C^0(RA-BR)}^2_\peq + \norm{Z_C^1(RA-BR)}^2_\peq\\
    &\leq O(\eps+\ea).
\end{align*}
To prove the second conjugation relation, one repeats the same $R$-insertion argument on the $Z_C^1$ block to find 
\[
    \norm{Z_C^1(AR-RB)}_\peq^2 \leq O(\eps+\ea),
\] 
which combines with $\norm{Z_C^0(AR-RB)}_\peq^2=\norm{Z_C^0(RB-AR)}_\peq^2\leq O(\eps+\ea)$ from \eqref{eqn:double-norm-bound} to give $\norm{AR-RB}_\peq^2\leq O(\eps+\ea)$. This completes the proof.

\end{proof}

\subsubsection{Mixed-versus-pure basis}\label{subsec:mixed-vs-pure}
In much of the initial certification of the extended Pauli group, including the application of GH, we will be restricting the prover to applying pure-basis measurements, i.e.\ measuring all qubits in the same basis. Not only does this simplify the analysis, it allows us to add on the certification of mixed-basis measurements later on, in a modular way. This is achieved through the mixed-versus-pure basis test, defined in \cref{protocol:pure-vs-mixed-test} and proven sound in \cref{lemma:mixed-vs-pure-guarantee}.
\par
\medskip

The idea is the following: the verifier sends a pure-basis question to Alice and a mixed-basis question to Bob; it then accepts if and only if the answers are equal on the subset of qubits where the bases align. This simple consistency test allows us to generalize our pure-basis results to the mixed-basis case. The verifier can also choose which distribution of mixed-basis questions to use, which provides an easy route to certifying different mixed-basis distributions through a simple protocol modification---potentially useful for downstream applications.

\begin{lemma}\label{lemma:mixed-vs-pure-guarantee}
    For any computationally efficient prover modeled as in Section \ref{section:modeling} that succeeds with probability $1-\eps$ in the compiled mixed-versus-pure basis game $\textprotocol{mbt}(\Sigma,n,\mu)$ (where $\Sigma$ is a constant size alphabet), obtained from Protocol \ref{protocol:pure-vs-mixed-test}, the following holds. For all $q\in\Qa$, all $W\in\Sigma$ and all distributions $\mu'$ over $\zo^n$, there exists a cryptographically small function $\ea$ such that
    \[
    \E_{\tilde W\sim\mu}\E_{a\sim\mu'}\norm{W(\exten{a}{J}) - \tilde W(\exten{a}{J})}_\peq^2\leq O(\eps)+\ea,
    \]
    where $J = \{j\in[n]:\tilde W_j=W\}$.
\end{lemma}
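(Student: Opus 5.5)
We reduce the claim to \cref{lemma:generic-cross-isometric-equality}, with Alice's pure-basis question $W$ as the self-consistent side and Bob's mixed-basis measurement, coarse-grained to the coordinates in $J$, as the cross-consistent family.

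\textbf{Step 1: self-consistency.} By the consistency convention, $\textprotocol{mbt}(\Sigma,n,\mu)$ is run as $\textprotocol{con}(\textprotocol{mbt})$. Hence \cref{lemma:compiled-consistency} shows that the pure-basis question distribution $(\Sigma,U_{|\Sigma|})$ is $\eps$-self-consistent. Since $|\Sigma|=O(1)$, the same holds for each fixed $W\in\Sigma$ individually, at the cost of a constant factor. Likewise the mixed-basis distribution $(\Sigma^n,\mu)$ is $\eps$-self-consistent.

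\textbf{Step 2: the winning condition as cross-consistency.} Fix $W$. Conditioned on Alice receiving $W$, success $1-O(\eps)$ means
\[
\E_{\tilde W\sim\mu}\sum_\alpha\Tr{B_{\tilde W}^{\restr{\Dec(\alpha)}{J}}\pea{W}}\geq 1-O(\eps),
\]
where $B_{\tilde W}^{u}\deq\sum_{v:\restr{v}{J}=u}\tilde W^v$ is the coarse-grained PVM on the coordinates in $J$. This family is uniformly efficient in $\tilde W$: measure $\tilde W$ and discard the coordinates outside $J$. For any $a$, the associated observable is $B_{\tilde W}(\restr{a}{J})=\tilde W(\exten{a}{J})$, by the Fourier definition of $W(a)$.

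\textbf{Step 3: apply the generic lemma.} In \cref{lemma:generic-cross-isometric-equality} take $Q_A=\{W\}$, the parameter $A=\tilde W\sim\mu$, the function $f(v,a)=v\cdot\exten{a}{J}$ and $S=\zo^n$. Two computations are needed.

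First, the hypothesis $\E_{a}\sum_\alpha(-1)^{\Dec(\alpha)\cdot\exten{a}{J}}\Tr{\tilde W(\exten{a}{J})\pea{W}}\geq 1-O(\eps)$ must follow from Step 2. Averaging over uniform $a$, the Fourier sum collapses to $\Tr{B_{\tilde W}^{\restr{\Dec(\alpha)}{J}}\pea{W}}$, just as in the proof of \cref{cor:self-consistent-equality}.

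Second, the conclusion's operator $\sum_v(-1)^{v\cdot\exten{a}{J}}W^v$ equals $W(\exten{a}{J})$.

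With these, the lemma gives, for every $q\in\Qa$,
\[
\E_{\tilde W\sim\mu}\E_{a\in\zo^n}\norm{W(\exten{a}{J})-\tilde W(\exten{a}{J})}^2_{\peq}\leq O(\eps)+\ea.
\]
The state switch from $\pe{W}$ to $\peq$ is done inside that lemma via \cref{cor:efficient-operator-state-switching}.

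\textbf{Step 4: arbitrary $\mu'$ (the main obstacle).} Step 3 only controls uniform $a$, and a non-uniform $\mu'$ can concentrate on bad strings, so averaging alone does not suffice. Instead, bound each fixed $a$ directly. The difference $W(\exten{a}{J})-\tilde W(\exten{a}{J})$ is the sum $\sum_{u}(-1)^{u\cdot\restr{a}{J}}(W_J^u-B_{\tilde W}^u)$ of marginal-projector differences over $u\in\zo^{|J|}$, where $W_J^u\deq\sum_{v:\restr{v}{J}=u}W^v$.

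We then expand the squared norm and use that both families are PVMs together with the near-agreement of the two outcomes. Concretely, $\sum_u\norm{W_J^u-B^u_{\tilde W}}^2$ is controlled by $2-2\sum_u\Re\Tr{W_J^uB^u_{\tilde W}\rho}$. The cross terms with $u\neq u'$ are bounded by the disagreement mass via \cref{lemma:self-consistency-identity}: on $\pea{W}$, $W^v$ acts as approximately $\delta_{v,\Dec(\alpha)}$.

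This yields a bound uniform in $a$, of the form
\[
\norm{W(\exten{a}{J})-\tilde W(\exten{a}{J})}^2_{\pe{W}}\leq O\bigl(1-\textstyle\sum_\alpha\Tr{B^{\restr{\Dec(\alpha)}{J}}_{\tilde W}\pea{W}}\bigr)+O(\eps).
\]
The point-wise-in-$a$ bound then survives any $\mu'$. Finally, we switch to $\peq$ with \cref{cor:efficient-operator-state-switching}, sampling $(\tilde W,a)\sim\mu\times\mu'$. This step requires $\mu'$ to be efficiently sampleable; otherwise we rely on the non-uniformity remark, \cref{rem:non-uniformity}.
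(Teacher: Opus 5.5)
Your argument is correct. Steps 1--3 follow the paper's proof, but Step 4 extends the bound to arbitrary $\mu'$ by a genuinely different route.

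The paper only ever proves the uniform-$a$ bound, via \cref{lemma:generic-cross-isometric-equality} applied per $\tilde W$ to the coarse-grained PVM followed by a state switch. It then extends this with the shifting trick:
\begin{itemize}
\item it inserts $W(\exten{b}{J})$, with $b$ uniform, on the right using \cref{lemma:reverse-compiled-prover-switching};
\item it merges this factor by exact linearity, so that $\exten{a}{J}+\exten{b}{J}$ is uniform;
\item it trades $W$ for $\tilde W$ using the uniform bound;
\item it removes the extra factor by left unitary invariance and the uniform bound again.
\end{itemize}

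You instead observe that the eigenvalue argument of \cref{lemma:self-consistency-identity} holds for each fixed character. Set $c_\alpha\deq(-1)^{\Dec(\alpha)\cdot\exten{a}{J}}$ and $d_\alpha\deq\restr{\Dec(\alpha)}{J}$. The operator inequality $c_\alpha\tilde W(\exten{a}{J})\succeq 2B^{d_\alpha}_{\tilde W}-\id$ gives, for every $a$,
\[
\sum_\alpha\norm{\tilde W(\exten{a}{J})-c_\alpha\id}^2_{\pea{W}}\le 4\Bigl(1-\sum_\alpha\Tr{B^{d_\alpha}_{\tilde W}\pea{W}}\Bigr).
\]
The analogous bound for $W(\exten{a}{J})$ is $O(\eps)$ by self-consistency, and a triangle inequality through $c_\alpha\id$ yields your displayed estimate. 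Carry out the pointwise step this way, not by expanding into the $u\neq u'$ cross terms and bounding them one at a time. The latter loses a factor exponential in $|J|$ unless you regroup the sums over $u'$, using that $\norm{\sum_{u'\neq d}\pm B^{u'}\ket{\psi}}=\norm{(\id-B^d)\ket{\psi}}$.

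Comparing the two routes:
\begin{itemize}
\item Yours makes Step 3 redundant, since the pointwise bound already implies the uniform one. It uses neither exact linearity nor reverse prover switching, and needs only a single state switch.
\item The paper's route is more roundabout here. Its advantage is that it reuses the same uniform-then-shift template it applies to \textprotocol{crel}, \textprotocol{prel} and the conjugation relations.
\end{itemize}

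Both routes need $\mu'$ to be efficiently sampleable, or else \cref{rem:non-uniformity}, for the final state switch. The paper's use of \cref{lemma:reverse-compiled-prover-switching} under $\E_{a\sim\mu'}$ carries the same requirement, though only implicitly.
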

\begin{proof}
Let $k\coloneqq|J|$. We will use the following shorthand notation in this proof $d\coloneqq\restr{\Dec(\alpha)}{J}$ (the restriction of the decryption of $\alpha$ to the indices included in $J$), and
\[
S_{d}\coloneqq \{a\in\zo^n: \restr{a}{J}=d\},
\]
these notations will be convenient but they leave a lot of indirect dependencies implicit. For example, $S_d$ depends on the set $J$, which in turn depends on both $W$ and $\tilde W$.
\par 
\medskip
Winning in $\textprotocol{mbt}(\Sigma,n,\mu)$, means that for all $W\in\Sigma$ (as long as the size of $\Sigma$ is constant)
\begin{align}\label{eqn:mbt-win-cond}
    \E_{\tilde W\sim\mu}\sum_{\alpha}\sum_{a\in S_d}\Tr{\tilde W^{a}\pea{W}}= \E_{\tilde W\sim\mu} N(\tilde W, W) \geq 1-O(\eps),
\end{align}
where we introduced
\[
N(\tilde W, W)\coloneqq \sum_{\alpha}\sum_{a\in S_d}\Tr{\tilde W^{a}\pea{W}} = \sum_\alpha \Tr{P^{d}\pea{W}},
\]
for the left-hand side we pulled the sum over $a\in S_d$ into the trace and defined
\[
P^x\coloneqq \sum_{a\in S_x}\tilde W^a,
\]
for $x\in\zo^k$. Clearly, $\{P^x\}_{x\in\zo^k}$ again forms a PVM, since
\[
\paren{P^x}^2 = \paren{\sum_{a\in S_x}\tilde W^a}\paren{\sum_{b\in S_x}\tilde W^b}= \sum_{a\in S_x}\tilde W^a=P^x\quad\forall x\in\zo^k,
\]
which follows from the fact that $\{\tilde W^a\}_a$ is a PVM. Moreover,
\[
P^x P^y = \paren{\sum_{a\in S_x}\tilde W^a}\paren{\sum_{b\in S_y}\tilde W^b} = 0\quad\forall x\neq y\in\zo^k,
\]
which follows from the fact that $S_x\cap S_y = \emptyset$ for $x\neq y$ and the orthogonality of $\{\tilde W^a\}_a$. Third, we have
\[
\sum_{x\in\zo^k}P^x=\sum_{x\in\zo^k}\sum_{a\in S_x}\tilde W^a=\id,
\]
since $\bigcup_{x\in\zo^k}S_x = \zo^n$ and $\{\tilde W^a\}_a$ is normalized. Lastly, $\{P^x\}_x$ is an efficient PVM, since $\{\tilde W^a\}_a$ is, $J$ can be efficiently computed and the sum of these projectors can coherently be prepared. At this point it is important to note that $P^x$ depends on $W$ and $\tilde W$, so we are assuming both to be fixed for now, i.e.\ $P^x$ will only appear inside a specific $N(\tilde W,W)$. 
\par 
\medskip
By the convexity of $O(\eps)$, we can use \cref{eqn:mbt-win-cond} to obtain the following general pointwise lower bound for a fixed $W$ and for all $\tilde W\in\Sigma^n$
\[
\sum_\alpha \Tr{P^{d}\pea{W}} = N(\tilde W, W)\geq 1-\eps_{\tilde W},
\]
where $\eps_{\tilde W}\geq 0$ and $\E_{\tilde W}\eps_{\tilde W}=O(\eps)$. Define
\[
P(a) = \sum_{x\in\zo^k}(-1)^{a\cdot x}P^x,
\]
if we then also choose $f(\Dec(\alpha), a)\coloneqq \restr{\Dec(\alpha)}{J}\cdot a$, we can equivalently write
\begin{align*}
    \E_{a\in\zo^k}\sum_\alpha (-1)^{f(\Dec(\alpha), a)}\Tr{P(a)\pea{W}} = \sum_\alpha \Tr{P^{\restr{\Dec(\alpha)}{J}}\pea{W}} \geq 1-\eps_{\tilde W},
\end{align*}
this is now in the form of the hypothesis of \cref{lemma:generic-cross-isometric-equality} (since we also know that $(\Sigma,U_{|\Sigma|})$ is $\eps$-self-consistent by the automatic consistency check, as this is the Alice-marginal of the leaf), $S=\bits^k$, the $m$ in the lemma is equal to $n$ here and the $n$ in the lemma is equal to $k$ here. Applying that lemma (without the final state switch) yields, for each $\tilde W$,
\begin{align*}
    \E_{a\in\zo^k}\norm{\sum_{v\in\zo^n}(-1)^{\restr{v}{J}\cdot a}W^v - P(a)}_{\psi^{\Enc(W)}}^2\leq O(\eps_{\tilde W}+\eps).
\end{align*}
Averaging over $\tilde W\sim\mu$ and using $\E_{\tilde W}\eps_{\tilde W}=O(\eps)$,
\begin{align*}
    \E_{\tilde W\sim\mu}\E_{a\in\zo^k}\norm{\sum_{v\in\zo^n}(-1)^{\restr{v}{J}\cdot a}W^v - P(a)}_{\psi^{\Enc(W)}}^2\leq O(\eps).
\end{align*}
The argument of the norm is a uniformly efficient family of LCUs parameterized by $(a,\tilde W)$. Applying \cref{cor:efficient-operator-state-switching} with this pair as the common marginal, $D_1$ having $z=W$ and $D_2$ the point distribution on the fixed $q\in\Qa$, we obtain a single cryptographically small function $\ea$ (depending on $q$) such that
\begin{align}\label{eqn:raw-mixed-vs-pure-guarantee}
    \E_{\tilde W\sim\mu}\E_{a\in\zo^{|J|}}\norm{\sum_{v\in\zo^n}(-1)^{\restr{v}{J}\cdot a}W^v - \sum_{x\in\zo^{|J|}}\sum_{\substack{b\in\bits^n\\\restr{b}{J}=x}}(-1)^{x\cdot a}\tilde W^b}_\peq^2 \leq O(\eps)+\ea,
\end{align}
here we replaced $S_x$ by its definition $\{b\in\bits^n:\restr{b}{J}=x\}$, since this makes the dependence on $J$ explicit. These restriction sets form a partitioning of $\bits^n$ and thus we can write the double sum on the left as a sum over $v$, obtaining
\begin{align*}
    \E_{\tilde W\sim\mu}\E_{a\in\zo^{|J|}}\norm{\sum_{v\in\zo^n}(-1)^{\restr{v}{J}\cdot a}\paren{W^v - \tilde W^v}}_\peq^2 \leq O(\eps)+\ea.
\end{align*}
Lastly, we pass from $a\in\zo^{|J|}$ to $a\in\zo^n$ by observing that $v\cdot\exten{a}{J}=\restr{v}{J}\cdot\restr{a}{J}$, so averaging $W(\exten{a}{J})$ over $n$-bit $a$ reproduces the previous average (each projection is repeated $2^{n-|J|}$ times). This yields
\begin{align*}
    \E_{\tilde W\sim\mu}\E_{a\in\bits^n}\norm{W(\exten{a}{J}) - \tilde W(\exten{a}{J})}_\peq^2&=\E_{\tilde W\sim\mu}\E_{a\in \bits^n}\norm{\sum_{v\in\zo^n}(-1)^{v\cdot \exten{a}{J}}\paren{W^v - \tilde W^v}}_\peq^2 \\
    &\leq O(\eps)+\ea.\numberthis\label{eqn:mbt-uniform-conclusion}
\end{align*}
Finally, we will use the linearity of the $W(a)$ and $\tilde W(a)$ operators and their $\eps$-self-consistency to extend this result to arbitrary distributions over $a$. This trick will be used in the Pauli basis test analysis as well.
\begin{align*}
    \E_{\tilde W\sim\mu}&\E_{a\sim\mu'}\norm{W(\exten{a}{J}) - \tilde W(\exten{a}{J})}_\peq^2 \\
    \intertext{By the $\eps$-self-consistency of $(\Sigma, U_{|\Sigma|})$ and \cref{lemma:reverse-compiled-prover-switching} (where we dropped the expectation over $\Sigma$):}
    &\leq 2\E_{\tilde W\sim\mu}\E_{a\sim\mu'}\E_{b\in\bits^n}\norm{W(\exten{a}{J}+\exten{b}{J}) - \tilde W(\exten{a}{J})W(\exten{b}{J})}_\peq^2 + O(\eps) + \ea\\
    \intertext{By the fact that $\exten{a}{J}+\exten{b}{J}$ is uniformly distributed for uniform $b$ and arbitrarily distributed $a$ and \cref{eqn:mbt-uniform-conclusion}:}
    &\leq 4\E_{\tilde W\sim\mu}\E_{a\sim\mu'}\E_{b\in\bits^n}\norm{\tilde W(\exten{a}{J}+\exten{b}{J}) - \tilde W(\exten{a}{J})W(\exten{b}{J})}_\peq^2 + O(\eps) + 3\ea\\
    \intertext{By the linearity of $\tilde W(a)$, left unitary invariance and \cref{eqn:mbt-uniform-conclusion}:}
    &\leq O(\eps) + 7\ea,\\
\end{align*}
this concludes the proof.
\end{proof}

\subsubsection{Small/large answer consistency}\label{subsec:slc}
Since we are not trying to achieve succinctness at this point, we can tolerate a linear communication complexity, so we can use large-answer questions in much of our analysis. These are questions, on which the prover answers with $n$ bits, which---in an honest setting---correspond to a measurement of all his qubits in the basis specified by the question. The benefit of these large-answer questions, is the fact that we can construct the corresponding observables in the analysis/on the verifier's side (as the Fourier transform of the projectors) and get exact linearity for free, which simplifies much of the analysis.
\par
\medskip
However, these large-answer questions don't work in every setting, since in an honest execution they correspond to a separate measurement of every qubit. In some cases, such as the commutation test, we want to simultaneously measure different qubits and only obtain one outcome. Think for example of $\sigma_X\otimes\sigma_X$, which commutes with $\sigma_Z\otimes\sigma_Z$, only if the two qubits are measured simultaneously. Thus to achieve completeness, we will sometimes have to rely on small-answer questions. 
\par
\medskip
The small/large answer consistency test, defined in \cref{protocol:small-large-answer-test} and proven sound in \cref{lemma:slc-guarantee}, allows us to relate the prover operators corresponding to both question types. The test has the same simple consistency check structure as the mixed-versus-pure basis test. In particular, the verifier sends a small-answer question to Alice and a large-answer question to Bob; it then accepts if and only if the XOR of the large answers (restricted to the qubits included in the small-answer question) is equal to the small answer.

\begin{lemma}\label{lemma:slc-guarantee}
    For any computationally efficient prover modeled as in \Cref{section:modeling} that succeeds with probability $1-\eps$ in the small/large answer consistency test $\textprotocol{slc}(\Theta, n, \mu)$, obtained from \Cref{protocol:small-large-answer-test}, the prover's efficient operators satisfy the following: for all $q\in\Qa$ there exists a cryptographically small function $\ea$ such that
   \begin{align}
   \E_{W\in\Theta}\E_{a\sim\mu}\norm{(W_a^0-W_a^1) - W(a)}^2_{\psi^{\Enc(q)}}\leq O(\eps)+\ea.
   \end{align}
\end{lemma}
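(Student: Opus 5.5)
The plan is to recognise that the winning condition of $\textprotocol{slc}(\Theta,n,\mu)$ is exactly a cross-consistency statement of the form required by \cref{lemma:generic-cross-isometric-equality}. In that lemma, Alice's questions are the small-answer questions $(W,a)$ with one-bit answers, and Bob's observable family is the large-answer family $\{W(a)\}$. We then invoke that lemma, including its final state switch, to pass from the state $\pea{(W,a)}$ to an arbitrary $\peq$.

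Concretely, the question distribution is $(W,a)$ with $W\draw\Theta$ and $a\sim\mu$. Because $\textprotocol{slc}$ is a leaf test executed as $\textprotocol{con}(\textprotocol{slc})$, \cref{lemma:compiled-consistency} makes this distribution, namely the Alice marginal, $\eps$-self-consistent. Its projectors are $\{W_a^0,W_a^1\}$, with $m=1$ answer bit.

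Next we rewrite the acceptance probability. The verifier accepts iff $u\cdot a=\Dec(\alpha)$, so success $1-O(\eps)$ in $\textprotocol{slc}$ itself reads
\[
\E_{W\in\Theta}\E_{a\sim\mu}\sum_\alpha\sum_{u:\,u\cdot a=\Dec(\alpha)}\Tr{W^u\pea{(W,a)}}\geq 1-O(\eps).
\]
Using $\sum_u (-1)^{u\cdot a}W^u=W(a)$ and the fact that the accept and reject probabilities sum to one, this becomes
\[
\E_{W}\E_{a}\sum_\alpha(-1)^{\Dec(\alpha)}\Tr{W(a)\pea{(W,a)}}\geq 1-O(\eps).
\]

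This is the hypothesis of \cref{lemma:generic-cross-isometric-equality} under the following identifications. The question $A$ is $(W,a)$, with $\mu_A$ equal to $U_{|\Theta|}\times\mu$. We take $B_A(\cdot)$ to be the large-answer observable, evaluated at the parameter $a$ that is itself part of the question. We take $f(v,\cdot)=v$ and make $S$ the singleton $\{a\}$, which is permitted because $S$ may depend on $A$.

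With these choices the left-hand operator in the lemma's conclusion is $\sum_{v\in\zo}(-1)^v W_a^v=W_a^0-W_a^1$. The lemma then yields, for every $q\in\Qa$,
\[
\E_{W}\E_{a}\norm{(W_a^0-W_a^1)-W(a)}^2_\peq\leq O(\eps)+\ea,
\]
which is the claim.

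The main point to check is that the parameterisation fits the lemma's uniformity requirements. The family $\{B_A\}$ is indexed by $A=(W,a)$, so $W(a)$ must be a uniformly efficient unitary in $(W,a)$; this follows from \cref{lemma:observables-efficient-unitaries}. The distribution $\mu$ must also be efficiently sampleable, so that the state switch via \cref{cor:efficient-operator-state-switching} produces a single cryptographically small $\ea$ for each $q$.

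If the singleton-$S$ bookkeeping feels awkward, there is a direct route. Redo the short computation of that lemma: expand $\norm{(W_a^0-W_a^1)-W(a)}^2_{\pea{(W,a)}}$ by isolating the $\Dec(\alpha)$ projector, using the self-consistency bounds \eqref{eqn:projector-close-to-id-2} and \eqref{eqn:projector-close-to-0-2} together with the display above. Then switch the state with \cref{cor:efficient-operator-state-switching}.
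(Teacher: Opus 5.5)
Your proposal is correct and is essentially the paper's own argument. You rewrite the $\textprotocol{slc}$ acceptance condition in the signed form $\E_{W}\E_{a}\sum_\alpha(-1)^{\Dec(\alpha)}\Tr{W(a)\pea{(W,a)}}\geq 1-O(\eps)$, and then apply \cref{lemma:generic-cross-isometric-equality} (including its final state switch to an arbitrary $\peq$) with $m=1$, $f(v,a)=v$ and $S=\{a\}$, using the $\eps$-self-consistency of the small-answer question distribution supplied by the consistency wrapper, exactly as the paper does.
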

\begin{proof}
Remember that we denote the small-answer projectors, corresponding to the question $(W,a)$ as $\{W_a^v\}_{v\in\{0,1\}}$. The large-answer projectors corresponding to question $W$ (on which an $n$-bit response is expected) are denoted by $\{W^v\}_{v\in\{0,1\}^n}$, which are combined to form $W(a)$. The winning condition can be rewritten as
\begin{align*}
   \E_{W\in\Theta}\E_{a\sim\mu}\sum_\alpha (-1)^{\Dec(\alpha)} \Tr{W(a)\pea{(W,a)}} \geq 1 - O(\eps),
\end{align*}
where $\Dec(\alpha)\in\{0,1\}$. This corresponds to the hypothesis of \cref{lemma:generic-cross-isometric-equality}, if we set $m=1$, $f(a,b) = a$ and we let $S$ be the set that only exactly the $a$ that appears in the Alice question $(W,a)$. By the automatic self-consistency check that is executed for this leaf test, every Alice question also exists on Bob's side and a winning probability of $1-\eps$ implies $\eps$-self-consistency of the question distribution $(Q_{s},\mu_s)$, over all small-answer questions $(W,a)$ (corresponding to projectors $\{W_a^0,W_a^1\}$). Thus we can apply \cref{lemma:generic-cross-isometric-equality} and conclude that
\[
\overbrace{\E_{W\in\Theta}\E_{a\sim\mu}}^{=\E_{(W,a)\sim\mu_s}}\norm{W_a^0 - W_a^1 - W(a)}^2_{\psi^{\Enc(q)}}\leq O(\eps)+\ea,
\]
where $\{W_a^0,W_a^1\}$ are the projectors applied by Bob on question $Q=(W,a)$, this completes the proof.
\end{proof}

\subsubsection{Commutation relation}\label{subsec:acom}
Multi-qubit (anti-)commutation relations are an important part of the group relations of the extended Pauli group. \Cref{protocol:crel-test} is designed to certify these and is proven sound in \cref{lemma:compiled-crel}. The relation we want to test is the following:
\[
W(a)W'(b)\approx_{\eps} (-1)^{a\cdot b}W'(b)W(a)\quad\text{for all }a,b\in\zo^n\text{ and }W\neq W'\in\{X,Y,Z\}.
\]
and the same for the observables in $\{Z,F,G\}$. The test is essentially a wrapper on the commutation and anti-commutation tests, which both act on the small-answer level. The test thus combines these two subtests with the small/large answer consistency test to obtain a guarantee on the large-answer level.
\begin{lemma}\label{lemma:compiled-crel}
    For any computationally efficient prover modeled as in \Cref{section:modeling} that succeeds with probability $1-\eps$ in the commutation relation test $\textprotocol{crel}(\Sigma,n)$, obtained from \Cref{protocol:crel-test}, the prover's efficient operators satisfy the following: for all $q\in\Qa$ there exists a cryptographically small function $\ea$ such that
    \begin{align}
    \E_{W\neq W'\in\Sigma}\E_{a\sim\mu,b\sim\mu'}\norm{W(a)W'(b) - (-1)^{a\cdot b}W'(b)W(a)}^2_{\psi^{\Enc(q)}} \leq O(\eps+\ea).
    \end{align}
\end{lemma}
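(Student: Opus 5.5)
The plan is to reduce the large-answer relation to small-answer relations using \cref{lemma:slc-guarantee}, and to certify the small-answer relations with the compiled commutation and anti-commutation lemmas (\cref{lemma:compiled-commutation} and \cref{lemma:compiled-anti-commutation}, invoked in the proof of \cref{lemma:conj}). Here $\mu,\mu'$ are uniform, as sampled in \cref{protocol:crel-test}. Success $1-\eps$ in $\textprotocol{crel}(\Sigma,n)$ gives success $1-O(\eps)$ in each of its two items, since there are only two. The second item is a mixture, over uniform $(W_1,W_2,a,b)$, of $\textprotocol{com}((W_1,a),(W_2,b))$ when $a\cdot b=0$ and of $\textprotocol{ac}((W_1,a),(W_2,b))$ when $a\cdot b=1$. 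The parameters are efficiently sampleable and are given to the prover only as part of the questions.

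First, from item 1 and \cref{lemma:slc-guarantee}, on $\peq$ and in expectation over $W\in\Sigma$ and uniform $a$,
\[
W_a \coloneqq W_a^0-W_a^1\approx_{\eps+\ea} W(a).
\]
Second, I would rerun the proofs of \cref{lemma:compiled-commutation} and \cref{lemma:compiled-anti-commutation} with their inputs averaged over the sampled parameters, rather than pointwise. The goal is
\[
\E_{W_1\neq W_2}\E_{a,b}\norm{W_{1,a}W_{2,b}-(-1)^{a\cdot b}W_{2,b}W_{1,a}}^2_\peq\leq O(\eps)+\ea .
\]
To get a single cryptographically small function, I would use the parameterized state-switching lemmas (\cref{lemma:generic-cross-isometric-equality}, \cref{cor:efficient-operator-state-switching}), with the index $(W_1,W_2,a,b)$ sampled by the adversary. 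Any residual pointwise $\ea$'s are absorbed via \cref{rem:non-uniformity}. Self-consistency of the small-answer question distributions $\{(W,a)\}$ is supplied by the automatic consistency wrapper and \cref{lemma:compiled-consistency}.

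Third, I would substitute the large-answer operators by a telescoping triangle inequality. Write
\[
W(a)W'(b)-(-1)^{a\cdot b}W'(b)W(a)
\]
and replace $W(a)$ by $W_a$, then $W'(b)$ by $W'_b$, on both sides. Each swap of the left factor costs $\norm{(W(a)-W_a)W'(b)}^2_\peq$. I would move the unitary $W'(b)$ off the right using \cref{lemma:compiled-prover-switching}. This needs $\eps$-self-consistency of the uniform large-answer distribution $(\Sigma,U_{|\Sigma|})$, which the wrapper around $\textprotocol{slc}$ provides. The lemma then reduces the cost to $2\norm{W(a)-W_a}^2_\peq+O(\eps)+\ea$. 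A swap of the right factor, where the operator sits on the state, costs $\norm{W'(b)-W'_b}^2_\peq$ directly by left unitary invariance. The small-answer factors on the right are handled by the small-answer self-consistency in the same way.

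I expect the main obstacle to be the \emph{ordering of expectations and operators} in these prover-switching steps. \cref{lemma:compiled-prover-switching} requires that the operator on the left be an efficient LCU indexed by the same parameters, and that the expectation be taken over the self-consistent distribution of the right factor. Here that right factor is correlated with the left one through $(W,W',a,b)$. I would try to handle this by treating $b$ and $W'$ as part of the classical index $x$ of the distributions $D_1,D_2$, exactly as in the proof of \cref{lemma:mixed-vs-pure-guarantee}. The remaining relation losses are all $O(\eps)+\ea$, and since only constantly many steps are used, the bound $O(\eps+\ea)$ follows.
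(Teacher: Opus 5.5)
Your treatment of the case of uniform $a,b$ is essentially the paper's. You get small-answer (anti)commutation from \cref{lemma:compiled-commutation} and \cref{lemma:compiled-anti-commutation}; the paper applies these pointwise and averages via \cref{rem:non-uniformity}, which is what you fall back on anyway. You then transfer $W_a^0-W_a^1\approx W(a)$ via \cref{lemma:slc-guarantee} and telescope with \cref{cor:compiled-prover-switching-obs}, as the paper does.

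The gap is your opening move of taking $\mu,\mu'$ to be uniform. In this lemma they are arbitrary distributions on $\zo^n$; the paper writes uniform expectations as $\E_{a\in\zo^n}$. That generality is what the rest of the argument consumes. \cref{lemma:epbt-basic-guarantee} item 3 is read off from this lemma for arbitrary $\mu,\mu'$. It is then used, for example, to commute a uniform $Z(c)$ past $X(d)$ with $d\sim\mu$ in \cref{lemma:f-approx-pure-basis}. \cref{lemma:EPBT-pure-basis} is later applied with $\mu$ the law of $h_b(s)$, which is supported on an exponentially small set. An $O(\eps)$ bound on the uniform average transfers to such $\mu$ only with an exponential loss, and the test only ever samples $a,b$ uniformly. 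So an extra idea is needed; as written, your proposal proves only the special case $\mu=\mu'=U_{2^n}$.

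The missing step is the shifting trick at the end of the paper's proof. It uses the exact linearity $W(a)W(c)=W(a+c)$ of the large-answer observables. The small-answer observables do not have this property, which is why the step must come after the \textprotocol{slc} transfer. Take $a\sim\mu$ and uniform $b$. Use self-consistency of the large-answer questions and \cref{lemma:reverse-compiled-prover-switching} to multiply on the right by $W(c)$ with $c$ uniform, at cost $O(\eps)+\ea$. Then split
\begin{align*}
W(a)W'(b)W(c)-(-1)^{a\cdot b}W'(b)W(a+c)
&= W(a)\bigl(W'(b)W(c)-(-1)^{b\cdot c}W(c)W'(b)\bigr)\\
&\quad+(-1)^{b\cdot c}\bigl(W(a+c)W'(b)-(-1)^{(a+c)\cdot b}W'(b)W(a+c)\bigr).
\end{align*}
After left unitary invariance, the first term involves only uniform $(b,c)$. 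In the second term $a+c$ is uniform whatever $\mu$ is. Both are therefore bounded by your uniform estimate. Repeating the same manoeuvre with a uniform $W'(d)$ inserted on the right then handles $b\sim\mu'$, using the case ($a\sim\mu$, $b$ uniform) just established.
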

\begin{proof}
If $a\cdot b = 0$, success in $\textprotocol{com}((W,a),(W',b))$ tells us (by \cref{lemma:compiled-commutation}) that for all $q\in\Qa$, all $a,b\in\zo^n$ with $a\cdot b=0$ and all $W\neq W'\in\Sigma$ there exists a cryptographically small function $\eta_{W,W',a,b}(\lambda)$ such that
    \[
    \norm{[W_a^0-W_a^1,{W'_b}^0-{W'_b}^1]}^2_{\psi^{\Enc(q)}} \leq O(\eps_{W,W',a,b})+\eta_{W,W',a,b}(\lambda),
    \]
    where
    \[
    \E_{W\neq W'\in\Sigma}\E_{\substack{a,b\in \zo^n\\a\cdot b=0}}\eps_{W,W',a,b} = \eps.
    \]
    To apply the technique from \cref{rem:non-uniformity}, we imagine an adversary against the commutation test, when averaging over $W\neq W'\in\Sigma$ and $a,b\in \zo^n$ with $a\cdot b=0$. This adversary receives the values of $W,W',a$ and $b$ which maximize $\eta_{W,W',a,b}(\lambda)$ (per $\lambda$) as classical advice. Since the commutation test is uniform in $W,W',a$ and $b$, the advantage of this new adversary is also bounded by a cryptographically small function $\ea$ such that
    \[
    \E_{W\neq W'\in\Sigma}\E_{\substack{a,b\in \zo^n\\a\cdot b=0}}\norm{[W_a^0-W_a^1,{W'_b}^0-{W'_b}^1]}^2_{\psi^{\Enc(q)}} \leq O(\eps)+\ea.
    \]
    Similarly, if $a\cdot b=1$, success in $\textprotocol{ac}((W,a),(W',b))$ tells us (by \cref{lemma:compiled-anti-commutation}) that for all $q\in\Qa$, all $a,b\in\zo^n$ with $a\cdot b=1$ and all $W\neq W'\in\Sigma$ there exists a cryptographically small function $\eta_{W,W',a,b}(\lambda)$ such that
    \[
    \norm{\{W_a^0-W_a^1,{W'_b}^0-{W'_b}^1\}}^2_{\psi^{\Enc(q)}} \leq O(\eps_{W,W',a,b})+\eta_{W,W',a,b}(\lambda),
    \]
    where
    \[
    \E_{W\neq W'\in\Sigma}\E_{\substack{a,b\in \zo^n\\a\cdot b=1}}\eps_{W,W',a,b} = \eps.
    \]
    As with the commutation guarantee, we again apply the technique from \cref{rem:non-uniformity}, to conclude that there exists a single negligible function $\ea$ such that
    \[
    \E_{W\neq W'\in\Sigma}\E_{\substack{a,b\in \zo^n\\a\cdot b=1}}\norm{\{W_a^0-W_a^1,{W'_b}^0-{W'_b}^1\}}^2_{\psi^{\Enc(q)}} \leq O(\eps)+\ea.
    \]
    Together, we thus have
    \begin{align*}
    \E_{W\neq W'\in\Sigma}\E_{a,b\in \zo^n}\norm{(W_a^0-W_a^1)({W'_b}^0-{W'_b}^1) - (-1)^{a\cdot b}({W'_b}^0-{W'_b}^1)(W_a^0-W_a^1)}^2_{\psi^{\Enc(q)}}& \\
    \leq O(\eps)+\ea&.
    \end{align*}
    Since the prover passes the small/large answer consistency test, we can invoke \Cref{lemma:slc-guarantee}, and conclude that
    \[
    \E_{W\in\Sigma}\E_{a\in \zo^n}\norm{W(a)- (W_a^0-W_a^1)}^2_{\psi^{\Enc(q)}}\leq O(\eps)+\ea,
    \]
    which also implies for some $W\in\Sigma$:
    \[
    \E_{W'\in(\Sigma\setminus W)}\E_{a\in \zo^n}\norm{W'(a)- ({W'_a}^0-{W'_a}^1)}^2_{\psi^{\Enc(q)}}\leq \frac{|\Sigma|}{|\Sigma|-1}\paren{O(\eps)+\ea},
    \]
    With this we can write out the following approximate equalities, under expectation over $W\neq W'\in\Sigma$ and $a,b\in \zo^n$
    \begin{align*}
        W(a)W'(b)&\approx_{\eps+\ea} (W_a^0-W_a^1) W'(b)\\
        &\approx_{\eps+\ea} (W_a^0-W_a^1)({W'_b}^0-{W'_b}^1)\\
        &\approx_{\eps+\ea} (-1)^{ab}({W'_b}^0-{W'_b}^1)(W_a^0-W_a^1) \\
        &\approx_{\eps+\ea} (-1)^{ab}({W'_b}^0-{W'_b}^1)W(a) \\
        &\approx_{\eps+\ea} (-1)^{ab}W'(b) W(a),
    \end{align*}
    here the first and the last line use compiled prover-switching (\cref{cor:compiled-prover-switching-obs}), since $(\Sigma,U_{|\Sigma|})$ and $(\Sigma\setminus W,U_{|\Sigma|-1})$ are $\eps$-self-consistent question distributions, as certified by the wrapped $\textprotocol{slc}$ subtest and \cref{lemma:compiled-consistency}. The chain of inequalities tells us that for all $q\in\Qa$  we have
    \[
    \E_{W\neq W'\in\Sigma}\E_{a,b\in \zo^n}\norm{W(a)W'(b) - (-1)^{a\cdot b}W'(b)W(a)}^2_{\psi^{\Enc(q)}} \leq O(\eps+\ea).\numberthis\label{eqn:uniform-commutation-relation-pbt}
    \]
    Now we will exploit the exact linearity of the observables and their $\eps$-self-consistency to extend the result to arbitrary distributions. Consider the quantity that needs to be bounded
    \begin{align*}
        &\E_{W\neq W'\in\Sigma}\E_{a\sim\mu,b\in \zo^n}\norm{W(a)W'(b) - (-1)^{a\cdot b}W'(b)W(a)}^2_{\psi^{\Enc(q)}}\\
        \intertext{By the $\eps$-self-consistency of $(\Sigma,U_{|\Sigma|})$ and \cref{lemma:reverse-compiled-prover-switching}:}
        &\leq 2 \E_{W\neq W'\in\Sigma}\E_{a\sim\mu,b,c\in \zo^n}\norm{W(a)W'(b)W(c) - (-1)^{a\cdot b}W'(b)W(a+c)}_\peq^2 \\
        &\quad +\, O(\eps) + \ea\\
        &\leq 4 \E_{W\neq W'\in\Sigma}\E_{a\sim\mu,b,c\in \zo^n}\norm{W(a)(W'(b)W(c) - (-1)^{c\cdot b}W(c)W'(b))}_\peq^2\\
        &\quad + 4 \E_{W\neq W'\in\Sigma}\E_{a\sim\mu,b,c\in \zo^n}\norm{W(a+c)W'(b) - (-1)^{(a+c)\cdot b}W'(b)W(a+c)}_\peq^2\\
        &\quad +\, O(\eps) + \ea\\
        \intertext{By left unitary invariance and since the distribution on $a+c$ is uniform again, we can apply \cref{eqn:uniform-commutation-relation-pbt} and conclude:}
        &\leq O(\eps+\ea).
    \end{align*}
    Repeating the same steps, starting from the expression above, one can show that for any $\mu$ and $\mu'$
    \[
    \E_{W\neq W'\in\Sigma}\E_{a\sim\mu,b\sim\mu'}\norm{W(a)W'(b) - (-1)^{a\cdot b}W'(b)W(a)}^2_{\psi^{\Enc(q)}} \leq O(\eps+\ea).
    \]
\end{proof}

\subsubsection{Product relation}\label{subsec:prod}
Product relations of the type certified in \cref{protocol:product-test} are not directly part of the group relations, but they are crucial for pinning down the abstract group elements $y$ and $\mathsf{f}$, which are not generators themselves, but are defined as products of the generators. 
\par
\medskip
The test only acts on even numbers of qubits; this is due to the fact that in an honest execution $Y$ and $F$ are defined as 
\[
Y = iXZ,\quad\text{and}\quad F = iGZ.
\]
Since we can't associate a prover operator to the imaginary phase unit $i$, because it wouldn't be an observable, it is not possible to test this product relation on the single-qubit level. Instead, if we pair an even number of qubits, the phase will factor out (since it is in the center of the group) and we obtain a relation that is directly testable. 
\par 
\medskip
These relations are needed only in the multi-qubit setting: for $n=1$ the test would not apply, but it would also be unnecessary, since the correct form of $Y$ and $F$ would then follow from the analysis directly.
\begin{lemma}\label{lemma:prod-test}
    For any computationally efficient prover modeled as in \Cref{section:modeling} that succeeds with probability $1-\eps$ in the product relation test $\textprotocol{prel}(X,Z,Y,n)$, obtained from \Cref{protocol:product-test}, the prover's efficient operators satisfy the following: for all $q\in\Qa$ there exists a cryptographically small function $\ea$ such that
    \begin{align}
    \Eazz\norm{(-1)^{|a|/2}X(a)Z(a)-Y(a)}_{\psi^{\Enc(q)}}^2\leq O(\eps)+\ea.
    \end{align}
\end{lemma}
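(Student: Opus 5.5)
The proof will follow the template of the \textprotocol{slc} and \textprotocol{com} analyses: first extract cross-consistency statements from the two subtests of \cref{protocol:product-test}, then convert them into operator closeness with \cref{lemma:generic-cross-isometric-equality}, and finally chain the resulting approximate equalities. Throughout I write $\mu_e$ for the distribution of $a=\alpha\,\|\,(\alpha\cdot\alpha)$, which is uniform over even-weight strings. Let $\{Q_a^{x_1x_2}\}$ denote the four-outcome PVM Bob applies on the pair question $((Z,a),(X,a))$. I introduce its marginal observables $Q_{Z,a}\deq\sum_{x_1,x_2}(-1)^{x_1}Q_a^{x_1x_2}$ and $Q_{X,a}\deq\sum_{x_1,x_2}(-1)^{x_2}Q_a^{x_1x_2}$. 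These commute exactly, and their product is $Q_{Z,a}Q_{X,a}=\sum(-1)^{x_1\oplus x_2}Q_a^{x_1x_2}$. By the automatic consistency wrapper and \cref{lemma:compiled-consistency}, the pair-question distribution $(\{((Z,a),(X,a))\}_a,\mu_e)$ is $\eps$-self-consistent, and so are the large-answer questions $Z,X,Y$ with point distributions.

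\textbf{Step 1 (item 2).} Success in item 2 says that the Bob observable $W_y(a)$ reproduces the decrypted bit $x_y$ of Alice's pair answer. This is exactly the hypothesis of \cref{lemma:generic-cross-isometric-equality} with $A$ the pair question, $m=2$, $f(v,a)=v_y$, $S=\{a\}$ and $B_A(a)=W_y(a)$. Applying it gives, for every $q\in\Qa$,
\[
\E_{a\sim\mu_e}\norm{Q_{Z,a}-Z(a)}^2_\peq\le O(\eps)+\ea,\qquad \E_{a\sim\mu_e}\norm{Q_{X,a}-X(a)}^2_\peq\le O(\eps)+\ea .
\]

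\textbf{Step 2 (item 1).} The acceptance condition in item 1 is $u\cdot a\oplus|a|/2=x_1\oplus x_2$. Applying \cref{lemma:generic-cross-isometric-equality} again, now with $f(v,a)=v_1\oplus v_2\oplus |a|/2$ and $B_A(a)=Y(a)$, yields
\[
\E_{a\sim\mu_e}\norm{(-1)^{|a|/2}Q_{Z,a}Q_{X,a}-Y(a)}^2_\peq\le O(\eps)+\ea .
\]
Both $f$'s are efficiently computable and $\mu_e$ is efficiently sampleable, so the final state-switching step of that lemma applies.

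\textbf{Step 3 (chaining).} I will chain the following, under $\E_{a\sim\mu_e}$ on $\peq$:
\[
(-1)^{|a|/2}X(a)Z(a)\approx(-1)^{|a|/2}Q_{X,a}Z(a)\approx(-1)^{|a|/2}Q_{X,a}Q_{Z,a}=(-1)^{|a|/2}Q_{Z,a}Q_{X,a}\approx Y(a).
\]
The second approximation replaces $Z(a)$ by $Q_{Z,a}$ behind the unitary $Q_{X,a}$, which is handled by left unitary invariance together with Step 1. The last approximation is Step 2.

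\textbf{Main obstacle.} The first approximation replaces $X(a)$ by $Q_{X,a}$ while the operator $Z(a)$ sits to its \emph{right}. This is where compiled prover switching is needed. I would apply \cref{cor:compiled-prover-switching-obs}, or more directly \cref{lemma:compiled-prover-switching} with $A_B=X(a)-Q_{X,a}$ and the self-consistent large-answer question $Z$. Here $Z(a)$ is the observable built from $Z$'s PVM, the distribution over $a$ is folded into the sampled index, and the bound $\norm{A_B}\le2$ holds. The care required is that the expectation over $a\sim\mu_e$ is carried through correctly. To get a single cryptographically small function, I would either include $a$ in the sampled parameter of \cref{cor:efficient-operator-state-switching} or use the non-uniformity argument of \cref{rem:non-uniformity}. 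Summing the constantly many $O(\eps)+\ea$ terms by the squared triangle inequality gives the claim.
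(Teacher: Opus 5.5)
Your proof is correct, but it takes a more roundabout route than the paper's.

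\textbf{The paper's route.} The paper never introduces Bob's PVM on the pair question $Q=((Z,a),(X,a))$. From the three win conditions it uses the identity $\sum_\alpha\norm{B-c_\alpha\id}^2_{\pea{Q}}=2-2\sum_\alpha c_\alpha\Tr{B\pea{Q}}$ (for signs $c_\alpha$). This shows that on the post-Alice state $\pea{Q}$ the observables $Z(a)$, $X(a)$ and $(-1)^{|a|/2}Y(a)$ act approximately as the scalars $t_\alpha=(-1)^{\Dec(\alpha)_1}$, $u_\alpha=(-1)^{\Dec(\alpha)_2}$ and $s_\alpha=t_\alpha u_\alpha$. It then checks $X(a)Z(a)\approx s_\alpha\id\approx(-1)^{|a|/2}Y(a)$ directly on $\psi^{\Enc(Q)}$, using only the triangle inequality and left unitary invariance, with explicit constant $144\eps$. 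Only at the end does it apply \cref{cor:efficient-operator-state-switching} once to move to $\peq$.

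Because the state is switched last, while all three observables are still approximately scalar, the step you flag as the main obstacle (an operator to the right of the one being replaced) never arises. Consequently the argument needs neither self-consistency, nor \cref{lemma:generic-cross-isometric-equality}, nor compiled prover switching. It uses only the win conditions of $\textprotocol{prel}$ itself.

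\textbf{Your route.} You use the consistency wrapper to obtain Bob's exactly commuting marginals $Q_{Z,a},Q_{X,a}$ as an intermediary and switch to $\peq$ first. You then pay for the right-multiplication with \cref{lemma:compiled-prover-switching} and for the bookkeeping of several cryptographically small functions. Your steps do go through. The pair-question and $Z$ distributions are $\eps$-self-consistent under the wrapper. The bound of \cref{lemma:self-consistency-identity} holds pointwise in $a$, so averaging over even-weight $a$ is harmless.

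\textbf{Comparison.} Your approach mirrors the $\textprotocol{slc}$/$\textprotocol{crel}$ template, which is a merit. However, it is longer, leans on more machinery than needed, and gives only unspecified constants.
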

\begin{proof}
Let $Q=((Z,a),(X,a))$, its dependence on $a$ is left implicit, but it is important to remember that the expectation over $a$ is an expectation over the Alice question. Since the subtests $(a)$ and $(b)$ are selected with uniform probability and the overall winning probability is $1-\eps$, we have
\begin{align*}
1-2\eps&\leq\Pr[u\cdot a + |a|/2=x_1+x_2 | \mathsf{subtest }(a)]\\
&=\frac{1}{2}+\frac{1}{2}\Eazz\sum_\alpha (-1)^{\Dec(\alpha)_1 + \Dec(\alpha)_2 + |a|/2} \Tr{Y(a)\pea{Q}},
\end{align*}
repeating the same reasoning for the other passing conditions yields
\begin{align*}
    \Eazz\sum_\alpha (-1)^{\Dec(\alpha)_1 + \Dec(\alpha)_2 + |a|/2} \Tr{Y(a)\pea{Q}} \geq 1 - 4\eps\\
    \Eazz\sum_\alpha (-1)^{\Dec(\alpha)_1} \Tr{Z(a)\pea{Q}} \geq 1 - 8\eps \\
    \Eazz\sum_\alpha (-1)^{\Dec(\alpha)_2} \Tr{X(a)\pea{Q}} \geq 1 - 8\eps. 
\end{align*}
From the above we can conclude that
\begin{equation}\label{eqn:Y-closeness-id}
\begin{aligned}
    \Eazz&\sum_\alpha\norm{(-1)^{\Dec(\alpha)_1 + \Dec(\alpha)_2 + |a|/2}Y(a)-\id}^2_\pea{Q}\\
    &= 2-2\Eazz\sum_\alpha (-1)^{\Dec(\alpha)_1 + \Dec(\alpha)_2 + |a|/2} \Tr{Y(a)\pea{Q}}\\
    &\leq 8\eps
\end{aligned}
\end{equation}
and similarly we have
\begin{equation}\label{eqn:xz-closeness-id}
\begin{aligned}
\Eazz\sum_\alpha\norm{(-1)^{\Dec(\alpha)_1}Z(a)-\id}^2_\pea{Q}\leq 16\eps\\
\Eazz\sum_\alpha\norm{(-1)^{\Dec(\alpha)_2}X(a)-\id}^2_\pea{Q}\leq 16\eps.
\end{aligned}
\end{equation}
Now we have all ingredients to bound the desired quantity under uniform distributions. For brevity, define
\[
s_\alpha \coloneqq (-1)^{\Dec(\alpha)_1+\Dec(\alpha)_2},\qquad
t_\alpha \coloneqq (-1)^{\Dec(\alpha)_1},\qquad
u_\alpha \coloneqq (-1)^{\Dec(\alpha)_2}.
\]
\begin{align*}
    &\Eazz\norm{(-1)^{|a|/2}X(a)Z(a)-Y(a)}_{\psi^{\Enc(Q)}}^2\\
    &= \Eazz\sum_\alpha\norm{X(a)Z(a)-(-1)^{|a|/2}Y(a)}^2_\pea{Q}\\
    &= \Eazz\sum_\alpha\norm{X(a)Z(a)-s_\alpha\id+s_\alpha\id-(-1)^{|a|/2}Y(a)}^2_\pea{Q}\\
    &\leq 2\Eazz\sum_\alpha\norm{X(a)Z(a)-t_\alpha X(a)+t_\alpha X(a)-s_\alpha\id}^2_\pea{Q}\\
    &\quad+2\Eazz\sum_\alpha\norm{s_\alpha\id-(-1)^{|a|/2}Y(a)}^2_\pea{Q}\\
    &\leq 16\eps
      +4\Eazz\sum_\alpha\norm{Z(a)-t_\alpha\id}^2_\pea{Q}
      +4\Eazz\sum_\alpha\norm{X(a)-u_\alpha\id}^2_\pea{Q}\\
    &\leq 144\eps,
\end{align*}
where the last two inequalities follow from \eqref{eqn:Y-closeness-id} and \eqref{eqn:xz-closeness-id}, respectively. We can then use the fact that the argument of the norm is a uniformly efficient family of LCUs (parameterized by $a$) to apply \cref{cor:efficient-operator-state-switching}. Specifically, the expectation over $a$ is the common marginal (where $a$ plays the role of $x$ in the corollary), in the first case $Q$ is fully determined by $a$ and we switch to a setting where $q$ is fixed and independent of $a$. The point distribution on any fixed $q\in\Qa$ is efficiently sampleable, so the corollary yields, for every such $q$, a cryptographically small function $\ea$ such that
\[
\Eazz\norm{(-1)^{|a|/2}X(a)Z(a)-Y(a)}_{\psi^{\Enc(q)}}^2\leq 144\eps + \ea,
\]
which concludes the proof.
\end{proof}

\subsubsection{Clifford conjugation}\label{subsec:conj-cliff}
\begin{lemma}\label{lemma:conj-cliff}
    For any computationally efficient prover modeled as in \Cref{section:modeling} that succeeds with probability $1-\eps$ in the compiled Clifford conjugation test $\textprotocol{conj-cliff}(X,Y,G,F, n)$, obtained from \Cref{protocol:conj-cliff-test}, the prover's efficient operators satisfy the following relationships: for all $q\in\Qa$ there exists a cryptographically small function $\ea$ such that
    \[
    \E_{a\in\zo^n}\norm{Y(a)X(a)-F(a)G(a)}_\peq^2\leq O(\eps+\ea),
    \]
    and
    \[
    \E_{a,b\in\zo^n}\norm{G(a)X(b) - X(b\setminus a)Y(a\cap b)G(a)}_\peq^2\leq O(\eps+\ea).
    \]
    \end{lemma}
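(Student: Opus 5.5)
The plan is to read off all the relations we need from the three conjugation subtests of \cref{protocol:conj-cliff-test}, using \cref{lemma:conj}. We then lift them from small-answer to large-answer observables with the $\textprotocol{slc}$ subtest (\cref{lemma:slc-guarantee}), and assemble them by a constant number of approximate substitutions. A success probability $1-\eps$ in $\textprotocol{conj-cliff}$ gives success $1-O(\eps)$ in each of its constantly many wrapped leaves, uniformly in $a,b$ (after averaging as in \cref{rem:non-uniformity}). So every question distribution in sight (large-answer $X,Y,G,F,A$, small-answer $(W,a)$, and the mixed-basis $A$) is $O(\eps)$-self-consistent by \cref{lemma:compiled-consistency}.

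\textbf{Step 1 (conjugation guarantees).} Applying \cref{lemma:conj} to subtests (a), (b), (c), on $\peq$ for every $q\in\Qa$ and under expectation over $a,b$, we get the following, writing $W_a\coloneqq W_a^0-W_a^1$ for small-answer observables.
\[
Y_aG_a\approx F_aY_a,\qquad G_aX_a\approx Y_aG_a,\qquad X_aG_a\approx G_aY_a,\qquad G_aX_b\approx A_bG_a .
\]
Here every $\approx$ carries error $O(\eps)+\ea$. By \cref{lemma:slc-guarantee}, with $\Theta=\{A,X,Y,G,F\}$ and $\mu$ uniform, each $W_a$ is $(\eps+\ea)$-close to the large-answer $W(a)$. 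We substitute one factor at a time. When the replaced factor sits on the left of a product we use left unitary invariance. When it sits on the right we use \cref{cor:compiled-prover-switching-obs}, which needs the $\eps$-self-consistency of the remaining factor's question distribution. In this way all four relations transfer to the large-answer observables $X(a),Y(a),G(a),F(a),A(b)$.

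\textbf{Step 2 (first claim).} Using $G(a)^2=Y(a)^2=\id$, the chain is
\[
F(a)G(a)\approx Y(a)G(a)Y(a)G(a)\approx Y(a)G(a)G(a)X(a)=Y(a)X(a).
\]
The first step multiplies the relation $Y(a)G(a)\approx F(a)Y(a)$ by $Y(a)$ on the right, which gives $Y(a)G(a)Y(a)\approx F(a)$. The second step uses $Y(a)G(a)\approx G(a)X(a)$. Each use of a relation with a unitary multiplied on the right is again justified by \cref{lemma:compiled-prover-switching} together with self-consistency. Only constantly many steps occur, so the error stays $O(\eps+\ea)$.

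\textbf{Step 3 (second claim).} We start from $G(a)X(b)\approx A(b)G(a)$. Here $A(b)$ is the large-answer observable of the mixed-basis question $A\in\{X,Y\}^n$, and exact linearity gives $A(b)=A(b\setminus a)\,A(a\cap b)$. Apply \cref{lemma:mixed-vs-pure-guarantee} for $\textprotocol{mbt}(\{X,Y\},n,U_{2^n})$ twice. With $W=X$, the set $J$ is the complement of $a$, and we get $A(b\setminus a)\approx X(b\setminus a)$. With $W=Y$, we have $J=a$ and get $A(a\cap b)\approx Y(a\cap b)$. Both hold with $\mu'$ uniform on $b$, which is allowed since that lemma holds for arbitrary $\mu'$; the dependence of $J$ on $a$ is absorbed into the expectation over $\tilde W$. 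Substituting the two factors, again using prover switching for the factor not on the left, yields $G(a)X(b)\approx X(b\setminus a)Y(a\cap b)G(a)$.

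\textbf{Main obstacle.} I expect the bookkeeping in Steps 1 and 3 to be the hardest part. \cref{lemma:conj} and \cref{lemma:mixed-vs-pure-guarantee} are pointwise-in-question statements, with questions parametrized by $a$ (and $b$). We must merge these into a single cryptographically small function under the joint expectation over $a,b$, via the non-uniform advice trick, and check that each intermediate product is an efficient LCU so that \cref{cor:efficient-operator-state-switching} applies. The delicate point is the substitution $A(b)\mapsto X(b\setminus a)Y(a\cap b)$ inside a product whose right factor is $G(a)$, since $A$ is correlated with $a$. I would handle it by treating $(a,b)$ jointly as the sampled index and using the self-consistency of $(\{G\},U_1)$ at uniformly random $a$ to strip $G(a)$ via \cref{lemma:compiled-prover-switching}.
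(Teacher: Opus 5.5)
Your proposal is correct and follows the paper's proof essentially step for step: you apply \cref{lemma:conj} to the three conjugation subtests, lift to large-answer observables via \cref{lemma:slc-guarantee}, split $A(b)=A(b\setminus a)A(a\cap b)$ and use \cref{lemma:mixed-vs-pure-guarantee} for the second claim, and obtain the first claim from a short chain using $G(a)^2=Y(a)^2=\id$ (your chain runs in the opposite direction from the paper's but draws on the same two conjugation subtests). One small slip in Step 1: you have the roles swapped. Left unitary invariance strips a unitary standing to the \emph{left} of the replaced factor, while \cref{cor:compiled-prover-switching-obs} (i.e.\ $AB(a)\approx CB(a)$) is needed when the replaced factor is on the left and a self-consistent factor remains on its right; your Steps 2 and 3 already apply the tools the right way round.
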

    \begin{proof}
        Since the prover passes the small/large answer consistency test with probability $1-O(\eps)$ we can apply \cref{lemma:slc-guarantee} and conclude that for all $W\in\{A,X,Y,G,F\}$ 
        \[
        \E_{a\in\zo^n}\norm{W(a)- (W_a^0-W_a^1)}^2_{\psi^{\Enc(q)}}\leq O(\eps)+\ea.\numberthis\label{eqn:conj-cliff-slc-guarantee}
        \]
        Meanwhile, \cref{lemma:conj} and the prover succeeding in the third conjugation test, tells us that for all $a,b\in\zo^n$ there exists a cryptographically small function $\eta_{a,b}(\lambda)$ such that
        \[
        \norm{(G_a^0-G_a^1)(X_b^0-X_b^1) - (A_b^0-A_b^1)(G_a^0-G_a^1)}_\peq^2\leq O(\eps_{a,b})+\eta_{a,b}(\lambda),
        \]
        with $\E_{a,b\in\zo^n}\eps_{a,b} = \eps$.
        To apply the technique from \cref{rem:non-uniformity}, we can imagine an adversary against the averaged conjugation test, who, per $\lambda$, receives the values of $a,b$ as classical advice, which maximize $\eta_{a,b}(\lambda)$. Since the whole test is uniform in $a$ and $b$, this adversaries advantage is also upper-bounded by a single negligible function $\ea$ such that
        \[
        \E_{a,b\in\zo^n}\norm{(G_a^0-G_a^1)(X_b^0-X_b^1) - (A_b^0-A_b^1)(G_a^0-G_a^1)}_\peq^2\leq O(\eps)+\ea.
        \]
        We can use \cref{eqn:conj-cliff-slc-guarantee} to obtain an expression in terms of the large-answer observables. Consider the following chain of approximate equalities, which holds under an implicit expectation over $a,b\in\zo^n$:
        \begin{align*}
            G(a)X(b)&\approx_{\eps+\ea}(G_a^0-G_a^1)X(b)\\
            &\approx_{\eps+\ea}(G_a^0-G_a^1)(X_b^0-X_b^1)\\
            &\approx_{\eps+\ea} (A_b^0-A_b^1)(G_a^0-G_a^1)\\
            &\approx_{\eps+\ea} (A_b^0-A_b^1)G(a)\\
            &\approx_{\eps+\ea} A(b)G(a),
        \end{align*}
        here the first and last line used \cref{lemma:compiled-prover-switching} and the $\eps$-self-consistency of $(\{Y,A,G\}, U_3)$ (as certified by the automatic consistency check performed for $\textprotocol{slc}$), which implies self-consistency of the separate observables. Thus,
        \[
        \E_{a,b\in\zo^n}\norm{G(a)X(b) - A(b)G(a)}_\peq^2\leq O(\eps+\ea).\numberthis\label{eqn:raw-conj-cliff-guarantee}
        \]
        Now we want to relate the mixed-basis large-answer observable $A(b)$ to a product of pure-basis large-answer observables. For this we will use the guarantee obtained from the pure-versus-mixed basis test and the exact linearity of the large-answer observables. Success in $\textprotocol{mbt}(\{X,Y\},n, U_{2^n})$ tells us that for all $W\in\{X,Y\}$ and all distributions $\mu$ over $\zo^n$:
        \[
        \E_{A\in\{X,Y\}^n}\E_{b\sim\mu}\norm{W(\exten{b}{J_W}) - A(\exten{b}{J_W})}_\peq^2\leq O(\eps)+\ea,
        \]
        where
        \[
        J_W\coloneqq\{i\in[n]: A_i = W\}.
        \]
        By the way that $A$ is defined from $a$ (see \cref{protocol:conj-cliff-test}), we know that
        \[
        J_X = \{i\in[n]: a_i = 0\}\quad\text{and}\quad J_Y=\{i\in[n]: a_i=1\},
        \]
        stepping over to more convenient notation, this means that
        \[
        \exten{b}{J_X} = b\setminus a\quad\text{and}\quad \exten{b}{J_Y}=a\cap b.
        \]
        
        We can now bound the following quantity
        \begin{align*}
            \E_{a,b\in\zo^n}&\norm{A(b)-X(b\setminus a)Y(a\cap b)}_\peq^2
            \intertext{Using exact linearity of $A$:}
            &=\E_{a,b\in\zo^n}\norm{A(b\setminus a)A(a\cap b)-X(b\setminus a)Y(a\cap b)}_\peq^2
            \intertext{By a telescopic sum and the triangle inequality:}
            &\leq 2\E_{a,b\in\zo^n}\norm{A(b\setminus a)A(a\cap b) - X(b\setminus a)A(a\cap b)}_\peq^2\\
            &\quad+2\E_{a,b\in\zo^n}\norm{X(b\setminus a)A(a\cap b)-X(b\setminus a)Y(a\cap b)}_\peq^2 + O(\eps) +\ea
            \intertext{By left unitary invariance, the $\eps$-self-consistency of $(\{A,X,Y,G,F\}, U_5)$ and \cref{lemma:compiled-prover-switching}:}
            &\leq 4\E_{a,b\in\zo^n}\norm{A(b\setminus a)-X(b\setminus a)}_\peq^2\\
            &\quad+  2\E_{a,b\in\zo^n}\norm{A(a\cap b) - Y(a\cap b)}_\peq^2 + O(\eps) +2\ea
            \intertext{By our earlier observations, these are exactly the guarantees we get out of the mixed-versus-pure basis test, which lets us conclude:}
            &\leq O(\eps+\ea).\numberthis\label{eqn:conj-cliff-splitting}
        \end{align*}
    
        Combining \eqref{eqn:raw-conj-cliff-guarantee} with \eqref{eqn:conj-cliff-splitting}, using $\eps$-self-consistency of $G$ and \cref{lemma:compiled-prover-switching}, we obtain
        \[
        \E_{a,b\in\zo^n}\norm{G(a)X(b) - X(b\setminus a)Y(a\cap b)G(a)}_\peq^2\leq O(\eps+\ea).
        \]
        Using the same combination of the conjugation test guarantee (\cref{lemma:conj}) with the non-uniform adversary trick and the small-versus-large answer test (\cref{lemma:slc-guarantee}), that were used to obtain \cref{eqn:raw-conj-cliff-guarantee}, we can conclude from the first and second conjugation test in \cref{protocol:conj-cliff-test} that
        \[
        \E_{a\in\zo^n}\norm{X(a)G(a) - G(a)Y(a)}_\peq^2\leq O(\eps+\ea),
        \]
        and
        \[
        \E_{a\in\zo^n}\norm{G(a)Y(a) - Y(a)F(a)}_\peq^2\leq O(\eps+\ea).
        \]
        With this we can make the following observation, under an implicit uniform average over $a\in\zo^n$
        \begin{align*}
            Y(a)X(a)&=Y(a)G(a)(G(a)X(a))\\
            &\approx_{\eps+\ea}Y(a)(G(a)Y(a))G(a)\\
            &\approx_{\eps+\ea}Y(a)^2F(a)G(a)\\
            &=F(a)G(a),
        \end{align*}
        the third line uses the self-consistency of $G$ and \cref{lemma:compiled-prover-switching}. This completes the proof.
    \end{proof}

\subsubsection{Clifford group relations}\label{subsec:cliff}
\begin{lemma}\label{lemma:epbt-basic-guarantee}
    For any computationally efficient prover modeled as in \Cref{section:modeling} that succeeds with probability $1-\eps$ in Items 1 to 5 of the compiled Clifford group test, obtained from \Cref{protocol:cliff-group}, the prover's efficient observables satisfy the following relationships: for all distributions $\mu$, $\mu'$ on $\zo^n$ and all $q\in\Qa$ there exists a cryptographically small function $\ea$ such that
    \begin{enumerate}
    \item $(\Sigma, U_3)$ is $\eps$-self-consistent for $\Sigma=\{X,Y,Z\}$ and $\Sigma=\{Z,F,G\}$.
    \item $W(a) W(b) = W(a+b)\;$ for all $a,b \in \{0,1\}^n$ and all $W \in \{X, Y, Z, F, G\}$.\label{item:linearity}
    \item For all $W\neq W'\in\{X,Y,Z\}$ and all $W\neq W'\in\{Z, F, G\}$:
    \[
    \E_{a\sim\mu, b \sim\mu'} \norm{ W(a) W'(b) - (-1)^{a \cdot b} W'(b)
      W(a) }_{\psi^{\Enc(q)}}^2 \leq O(\eps +\ea).
    \]\label{item:comrel}
    \item For $W=(X,Z,Y)$ and $W=(G,Z,F)$:
    \[
    \E_{a \sim\mu : |a| = 0 \pmod 2} \norm{(-1)^{|a|/2}W_1(a)W_2(a) -W_3(a)}^2_{\psi^{\Enc(q)}} \leq O(\eps + \ea).
    \]\label{item:prodrel}
    \item $\E_{a\in\zo^n}\E_{b\sim\mu}\norm{G(a)X(b) - X(b\setminus a)Y(a\cap b)G(a)}_\peq^2\leq O(\eps+\ea).$\label{item:conjrel}
    \item $\E_{a\sim\mu}\norm{Y(a)X(a)-F(a)G(a)}_\peq^2\leq O(\eps+\ea).$
    \end{enumerate}
    \end{lemma}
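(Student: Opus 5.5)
This lemma mostly assembles the subtest lemmas already proven, so the plan is to derive each item from the corresponding subtest. Then we upgrade the uniform distributions over question parameters to arbitrary $\mu,\mu'$ via the linearity trick.

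First I would use the dispatcher structure of \cref{protocol:cliff-group}. Success $1-\eps$ in Items 1 to 5 gives success $1-O(\eps)$ in each of the constantly many wrapped leaves. These are $\textprotocol{crel}(\{X,Y,Z\},n)$, $\textprotocol{crel}(\{Z,F,G\},n)$, the two $\textprotocol{prel}$ invocations, and $\textprotocol{conj-cliff}$.

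\textbf{Item 1.} It follows from \cref{lemma:compiled-consistency} applied to the wrapped $\textprotocol{slc}(\Sigma,n,U_{2^n})$ inside each $\textprotocol{crel}$. The Bob marginal there is exactly $(\Sigma,U_3)$.

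\textbf{Item 2.} It is exact and needs no test. The large-answer observables are defined as $W(a)=\sum_x(-1)^{a\cdot x}W^x$ from a PVM, so $W(a)W(b)=W(a\oplus b)$ by orthogonality of the projectors.

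\textbf{Item 3.} It is exactly \cref{lemma:compiled-crel}, which is already stated for arbitrary $\mu,\mu'$. One detail needs care: that lemma averages over ordered pairs $W\neq W'$. Since $|\Sigma|=3$ is constant, this gives the bound for every fixed pair at the cost of a constant factor.

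\textbf{Item 4.} \cref{lemma:prod-test} gives the bound only for uniform even-weight $a$, applied to $\textprotocol{prel}(X,Z,Y,n)$ and $\textprotocol{prel}(G,Z,F,n)$. To pass to an arbitrary $\mu$ on even-weight strings, I would mimic the argument at the end of \cref{lemma:compiled-crel}. First insert $W_3(c)$ for uniform even-weight $c$ on the right, using \cref{lemma:reverse-compiled-prover-switching} with the self-consistency from Item 1. Then use linearity to write $W_3(a)W_3(c)=W_3(a+c)$, where $a+c$ is again uniform even-weight. Expand $W_1(a)W_2(a)W_3(c)$ by replacing $W_3(c)$ with $(-1)^{|c|/2}W_1(c)W_2(c)$, using the uniform bound together with \cref{lemma:compiled-prover-switching}. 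The remaining task is to commute $W_2(a)$ past $W_1(c)$ and regroup into $W_1(a+c)W_2(a+c)$. This uses the approximate commutation relation of Item 3 with sign $(-1)^{a\cdot c}$, and the parity identity $|a|/2+|c|/2+a\cdot c\equiv |a+c|/2 \pmod 2$. Checking this sign bookkeeping is where I would be most careful.

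\textbf{Item 5.} Take the second conclusion of \cref{lemma:conj-cliff}, which is uniform in $b$, and extend to $b\sim\mu$ the same way. Insert $X(c)$ for uniform $c$ via \cref{lemma:reverse-compiled-prover-switching}. Combine $X(b)X(c)=X(b+c)$. Then use linearity of $X$ and $Y$ to split $X((b+c)\setminus a)Y(a\cap(b+c))$ into the $b$ and $c$ pieces. Reordering these pieces requires commuting $Y(a\cap c)$ past $X(b\setminus a)$; their supports are disjoint, so Item 3 applies with zero sign.

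\textbf{Item 6.} Extend the first conclusion of \cref{lemma:conj-cliff} from uniform $a$ to $a\sim\mu$. This is the step I expect to be the main obstacle, because $a$ appears in all four operators, so one cannot simply shift a single argument. My first attempt is to insert $G(c)$, write $a=(a+c)+c$ with uniform $a+c$, and expand each operator by linearity. Commuting the resulting factors back together requires cross relations between $\{X,Y\}$ and $\{F,G\}$ on overlapping supports. These are not available directly, so I would derive them from the conjugation relation in Item 5 together with Items 3 and 4. If that becomes unwieldy, the fallback is to restrict the claim to the uniform $\mu$ actually needed downstream.

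Throughout, I would collect the cryptographically small terms into a single $\ea$ as in \cref{rem:non-uniformity}.
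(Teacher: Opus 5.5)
Items 1--5 are fine and follow the paper's proof. For Item 4 you insert $W_3(c)$ on the right, whereas the paper multiplies on the left by the unitary $(-1)^{|b|/2}W_1(b)W_2(b)$. The two arguments are mirror images, and your parity identity $|a+c|/2\equiv|a|/2+|c|/2+a\cdot c \pmod 2$ is correct. In Item 5 you use the uniform relation at $b+c$ and at $c$. After that, the swap you actually need is $X(c\setminus a)$ past $Y(a\cap b)$, not $Y(a\cap c)$ past $X(b\setminus a)$. This swap turns $X(b\setminus a)X(c\setminus a)Y(a\cap b)Y(a\cap c)G(a)$ into $X(b\setminus a)Y(a\cap b)X(c\setminus a)Y(a\cap c)G(a)$. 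Since $(c\setminus a)\cdot(a\cap b)=0$, it is still a trivial-sign instance of Item 3, so this is only a labelling slip.

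The genuine gap is Item 6, where you give no argument. Your fallback does not work. Restricting to uniform $\mu$ does not prove the lemma as stated, and it is not what is needed downstream. \cref{claim:delta-commutes} takes Item 6 as a hypothesis for arbitrary $\mu$. \cref{lemma:approx-repr-close-to-F} then uses the resulting bound $\Delta(a)\approx\tilde\Delta(a)$ at the point mass $a=e_1$, because $f$ is built from $\Delta=\Delta(e_1)$. A uniform average controls a single point only after losing a factor $2^n$, which destroys constant robustness.

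The cross relations between $\{X,Y\}$ and $\{F,G\}$ that you plan to derive are also not needed. The missing idea, and the paper's route, is to multiply on the left by the unitary $Y(b)X(b)$ with $b$ uniform, which costs nothing by left unitary invariance:
\[
\E_{a\sim\mu}\norm{Y(a)X(a)-F(a)G(a)}_\peq^2=\E_{a\sim\mu}\E_{b\in\zo^n}\norm{Y(b)X(b)Y(a)X(a)-Y(b)X(b)F(a)G(a)}_\peq^2 .
\]
In the first term, a single commutation $X(b)Y(a)\approx(-1)^{a\cdot b}Y(a)X(b)$ within $\{X,Y,Z\}$ gives $(-1)^{a\cdot b}Y(a+b)X(a+b)$. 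Here $X(a)$ is removed on the right with \cref{lemma:compiled-prover-switching}.

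In the second term, first replace $Y(b)X(b)$ by $F(b)G(b)$ using the uniform relation at $b$. This requires removing $F(a)G(a)$ on the right by compiled prover switching, which is allowed since $F$ and $G$ are self-consistent. Then a single commutation $G(b)F(a)\approx(-1)^{a\cdot b}F(a)G(b)$ within $\{Z,F,G\}$ gives $(-1)^{a\cdot b}F(a+b)G(a+b)$.

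The signs agree, and what remains is the uniform relation of \cref{lemma:conj-cliff} at $a+b$, which is uniformly distributed. So only Item 3 and the uniform bound are needed, and no mixed-family relations are required.
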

    \begin{proof}
    
      We prove each item in turn.
      \begin{enumerate}
        \item This follows from the automatic execution of $\textprotocol{con}(\textprotocol{slc}(\{X,Y,Z\}), n))$,\\ $\textprotocol{con}(\textprotocol{slc}(\{Z,F,G\}), n))$ and \cref{lemma:compiled-consistency}.
        \item This follows directly from the definition of $W(a)$ and the
          projectivity of the Bob PVMs $\{W^u\}_{u\in\{0,1\}^n}$.
        \item Let $\Sigma=\{X,Y,Z\}$, then the first execution of $\textprotocol{crel}$ in \cref{protocol:cliff-group} ensures that for arbitrary distributions $\mu,\mu'$ over $\zo^n$:
        \[
        \E_{W\neq W'\in\Sigma}\E_{a\sim\mu, b \sim\mu'} \norm{ W(a) W'(b) - (-1)^{a \cdot b} W'(b)
      W(a) }_{\psi^{\Enc(q)}}^2 \leq O(\eps +\ea).
        \]
        Since the size of $\Sigma$ is small, we can drop the expectation over $W$ and $W'$, for a constant blow-up of the error. The same argument works if we set $\Sigma=\{Z,F,G\}$ by the second execution of $\textprotocol{crel}$ in \cref{protocol:cliff-group}. 
        
        \item From \Cref{lemma:prod-test}, and the fact that $\textprotocol{prel}$ is executed for $(X,Z,Y)$ and $(G,Z,F)$ with constant probability, we can conclude that (for $W=(X,Z,Y)$ and $W=(G,Z,F)$):
        \[
        \Eazz \norm{(-1)^{|a|/2}W_1(a)W_2(a) -W_3(a)}^2_{\psi^{\Enc(q)}} \leq O(\eps + \ea).\numberthis\label{eqn:prodrel-uniform-guarantee}
        \]
        Using the commutation guarantees 1 to 3 and a shifting trick we can extend this result to arbitrary distributions. Using left unitary invariance we can introduce uniformly distributed observables. For brevity, let
        \[
        V(b) \coloneqq (-1)^{|b|/2}W_1(b)W_2(b).
        \]
    \begin{align*}
        &\Eazzm\norm{(-1)^{|a|/2}W_1(a)W_2(a)-W_3(a)}_{\psi^{\Enc(q)}}^2\\
        &= \Eazzm\E_{\substack{b \in \zo^n\\ |b| \equiv 0\pmod 2 }}
        \norm{(-1)^{|a|/2}V(b)W_1(a)W_2(a)-V(b)W_3(a)}_{\psi^{\Enc(q)}}^2\\
        \intertext{By a telescoping sum, the triangle inequality, left unitary invariance, $\eps$-self-consistency of $(\{W_1, W_2, W_3\},U_3)$ with \cref{lemma:compiled-prover-switching} and the fact that $a+b$ is uniformly distributed and $|a+b|=|a|+|b|-2a\cdot b$:}
        &\leq 6\Eazzm\E_{\substack{b \in \zo^n\\ |b| \equiv 0\pmod 2 }}
        \norm{W_2(b)W_1(a)-(-1)^{a\cdot b}W_1(a)W_2(b)}_{\psi^{\Enc(q)}}^2\\
        &\quad +9\E_{\substack{c \in \zo^n\\ |c| \equiv 0\pmod 2 }}
        \norm{V(c)-W_3(c)}_{\psi^{\Enc(q)}}^2\\
        &\leq O(\eps+\ea),
    \end{align*}
    here the last line follows from \cref{item:comrel} and \cref{eqn:prodrel-uniform-guarantee}.
    
        \item Success in the $\textprotocol{conj-cliff}$ test and \cref{lemma:conj-cliff} guarantee that
        \[
        \E_{a,b\in\zo^n}\norm{G(a)X(b) - X(b\setminus a)Y(a\cap b)G(a)}_\peq^2\leq O(\eps+\ea).\numberthis\label{eqn:vanilla-conj-guarantee}
        \]
        It only remains to show that we can replace the uniform expectation over $b\in\zo^n$ by one under an arbitrary distribution on $T$. Again, we will use the familiar ``shifting trick'' from earlier calculations, which exploits the exact linearity of the large-answer observables and their self-consistency. For any distribution $\mu$ over $T$, consider
        \begin{align*}
            &\E_{a\in\zo^n}\E_{b\sim\mu}\norm{G(a)X(b) - X(b\setminus a)Y(a\cap b)G(a)}_\peq^2
            \intertext{Using the self-consistency of $(\{X,Y,Z\}, U_3)$ and \cref{lemma:reverse-compiled-prover-switching} to insert the unitary $X(c)$ on the right (note $X(b)X(c)=X(b+c)$ exactly, by \cref{item:linearity}):}
            &\leq 6\E_{a,c\in\zo^n}\E_{b\sim\mu}\norm{G(a)X(b+c) - X(b\setminus a)Y(a\cap b)G(a)X(c)}_\peq^2 + O(\eps) + \ea\\
            \intertext{By a telescoping sum and the triangle inequality:}
            &\leq 12\E_{a,c\in\zo^n}\E_{b\sim\mu}\norm{G(a)X(b+c) - X((b+c)\setminus a)Y(a\cap (b+c))G(a)}_\peq^2\\
            &\quad+ 12\E_{a,c\in\zo^n}\E_{b\sim\mu}\norm{X((b+c)\setminus a)Y(a\cap (b+c))G(a) - X(b\setminus a)Y(a\cap b)G(a)X(c)}_\peq^2\\
            &\quad+ O(\eps) + \ea\\
            \intertext{By exact linearity (e.g.\ $Y((b+c)\setminus a) = Y(b\setminus a)Y(c\setminus a)$), left unitary invariance and the fact that $b+c$ is uniformly distributed:}
            &\leq 24\E_{a,d\in\zo^n}\norm{G(a)X(d) - X(d\setminus a)Y(a\cap d)G(a)}_\peq^2\\
            &\quad+ 12\E_{a,c\in\zo^n}\E_{b\sim\mu}\norm{X(c\setminus a)Y(a\cap b) Y(a\cap c)G(a) - Y(a\cap b)G(a)X(c)}_\peq^2\\
            &\quad+ O(\eps) + \ea\\
            \intertext{Using \cref{eqn:vanilla-conj-guarantee}, a telescoping sum and left unitary invariance:}
            &\leq 24\E_{a,c\in\zo^n}\E_{b\sim\mu}\norm{(X(c\setminus a)Y(a\cap b) - Y(a\cap b)X(c\setminus a)) Y(a\cap c)G(a)}_\peq^2\\
            &\quad+ 24\E_{a,c\in\zo^n}\norm{X(c\setminus a) Y(a\cap c)G(a) - G(a)X(c)}_\peq^2\\
            &\quad+ O(\eps+\ea)\\
            \intertext{We can bound the second term using \cref{eqn:vanilla-conj-guarantee}, the first term can be bounded by applying \cref{lemma:compiled-prover-switching} twice (since $X$, $Y$ and $G$ are all $\eps$-self-consistent), using \cref{item:comrel} and the fact that $(c\setminus a)\cdot(a\cap b)=0$, which yields the final bound of:}
            &\leq O(\eps+\ea),
        \end{align*}
        \item Again $\textprotocol{conj-cliff}$ guarantees that
        \[
        \E_{a\in\zo^n}\norm{Y(a)X(a)-F(a)G(a)}_\peq^2\leq O(\eps+\ea),\numberthis\label{eqn:conj-cliff-raw2}
        \]
        which we will extend to arbitrary distributions using Items 1 to 3. Consider the quantity which we want to bound
        \begin{align*}
            &\E_{a\sim\mu}\norm{Y(a)X(a)-F(a)G(a)}_\peq^2\\
            &=\E_{a\sim\mu}\E_{b\in\zo^n}\norm{Y(b)X(b)Y(a)X(a)-Y(b)X(b)F(a)G(a)}_\peq^2\\
            \intertext{Using self-consistency of $X$ and $G$ with \cref{lemma:compiled-prover-switching}, \cref{item:linearity} and the fact that $a+b$ is again uniformly distributed:}
            &\leq 8 \E_{a\sim\mu}\E_{b\in\zo^n}\norm{X(b)Y(a)-(-1)^{a\cdot b}Y(a)X(b)}_\peq^2\\
            &\quad+ 8 \E_{a\sim\mu}\E_{b\in\zo^n}\norm{G(b)F(a)-(-1)^{a\cdot b}F(a)G(b)}_\peq^2\\
            &\quad+\E_{c\in\zo^n}\norm{Y(c)X(c)-F(c)G(c)}_\peq^2 + O(\eps+\ea)\\
            &\leq O(\eps+\ea),
        \end{align*}
        the last line follows from \cref{item:comrel} and \cref{eqn:conj-cliff-raw2}.
        \end{enumerate}
    
    \end{proof}

    \subsubsection{Rounding to an exact representation}
    We have now introduced all tests required to certify the group relations and proved their soundness. Thus, a successful prover in all of these tests (specifically \cref{protocol:cliff-group}) must be applying operations which approximately satisfy these group relations. What remains is to introduce an explicit approximate representation in terms of the prover's operators and use the certified relations to show that it indeed is an approximate unitary representation of the extended Pauli group. We can then round this approximate representation to an exact one using a stability theorem.
    This framework of approximate representation theory has become the standard approach for many recent self-testing proofs. 
    \par
    \medskip
    To define an approximate representation of the group, we recall that we can represent any group element in terms of its generators as $\omega^p x(a)g(b)z(c)$, for $p\in\{0,1,2,3\}$ and $a,b,c\in\zo^n$.
    Intuitively, we want to define our approximate representation as 
    \begin{align*}
    f(\omega^p x(a)g(b)z(c)) = (-1)^s\Delta^r X(a)G(b)Z(c),
    \end{align*}
    with $s\coloneqq \floor{p/2}\pmod 2$ and $r\coloneqq p\pmod 2$. Here $\Delta$ is a ``phase operator'' that is meant to represent the phase $i$ and will be defined in terms of the prover's observables (in particular the $Y$ and $F$ observables, which only enter into the approximate representation via $\Delta$).
    
    As a first step, we construct this phase operator and show that it satisfies some useful properties, which we would expect from a scalar.
    
    \paragraph*{Constructing the phase operator \label{subsec:phase-op}}

    \begin{lemma}
    \label{claim:delta-commutes}
        Let $\Delta(a) = Y(a) Z(a) X(a)$ and $\tilde\Delta(a)=F(a)Z(a)G(a)$. Suppose that for all distributions $\mu,\mu'$ on $\zo^n$ and all $q\in\Qa$ there exists a cryptographically small function $\ea$ such that:
        \begin{enumerate}
            \item $(\Sigma, U_3)$ is $\eps$-self-consistent for $\Sigma=\{X,Y,Z\}$ and $\Sigma=\{Z,F,G\}$.\label{item:self-con2}
            \item $W(a) W(b) = W(a+b)\;$ for all $a,b \in \{0,1\}^n$ and all $W \in \{X, Y, Z, F, G\}$.\label{item:linearity2}
            \item For all $W\neq W'\in\{X,Y,Z\}$ and all $W\neq W'\in\{Z, F, G\}$:
            \[
            \E_{a\sim\mu, b \sim\mu'} \norm{ W(a) W'(b) - (-1)^{a \cdot b} W'(b)
              W(a) }_{\psi^{\Enc(q)}}^2 \leq O(\eps +\ea).
            \]\label{item:comrel2}
            \item $\E_{a\sim\mu}\norm{Y(a)X(a)-F(a)G(a)}_\peq^2\leq O(\eps+\ea).$\label{item:ingredient-close-deltas}
        \end{enumerate} Then for all $W\in\{X,Y,Z,F,G\}$, all distributions $\mu$, $\mu'$ on $\bits^n$ and all $q\in\Qa$, there exists a cryptographically small function $\ea$ such that
        \begin{enumerate}
            \item $\norm{\Delta^{a+b}-(-1)^{a\cdot b}\Delta^{a\oplus b}}_\peq^2\leq O(\eps+\ea)$ for $a,b\in\zo$.
            \item $\E_{a\sim\mu}\norm{\Delta(a)-\tilde\Delta(a)}_\peq^2\leq O(\eps+\ea).$
            \item $\E_{a\sim\mu,b\sim\mu'}\norm{\Delta(a) W(b) - W(b) \Delta(a)}_\peq^2\leq O(\eps+\ea)$.
        \end{enumerate}
    \end{lemma}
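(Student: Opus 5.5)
The plan is to derive all three items purely algebraically from hypotheses 1--4. We repeatedly use left unitary invariance (\cref{lemma:state-dependent-norm-properties}) and the exact linearity $W(a)W(a)=\id$. Whenever an approximate relation needs to be applied with extra operators standing to its \emph{right} inside the norm, we strip those operators off using compiled prover switching, \cref{lemma:compiled-prover-switching} or \cref{cor:compiled-prover-switching-obs}. This is licensed by the $\eps$-self-consistency in hypothesis 1. Every operator that appears is a product of a constant number of efficient observables, hence an efficient LCU of bounded norm. The state-switching corollaries therefore apply, and each step costs $O(\eps+\ea)$. Since only a constant number of steps occur, these errors sum, and a single $\ea$ is obtained via \cref{rem:non-uniformity}.

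\textbf{Item 1.} Only the case $a=b=1$ is nontrivial; it asks for $\Delta^2\approx-\id$. Here $\Delta=\Delta(a)$ for a fixed string; honestly $\Delta(a)=i^{|a|}\id$, so I read the claim as being about an odd-weight (e.g.\ single-qubit) choice. In general one obtains $\Delta(a)^2\approx(-1)^{|a|}\id$. Expand $Y Z X\, Y Z X$ and perform three swaps, each via hypothesis 3 with point distributions $\mu=\mu'=\delta_a$:
\begin{itemize}
\item first swap the adjacent $X(a)Y(a)$, producing a sign $(-1)^{a\cdot a}$, and cancel $Y(a)^2=\id$;
\item then swap $Z(a)X(a)$, which again produces $(-1)^{|a|}$;
\item after cancelling $Z(a)^2$ and $X(a)^2$, one further swap is needed, giving a total sign $(-1)^{3|a|}=(-1)^{|a|}$.
\end{itemize}
At each swap, the leftmost operators are removed by unitary invariance, and the trailing $Z(a)X(a)$ or $X(a)$ factors are removed by \cref{lemma:compiled-prover-switching} using self-consistency of $\{X,Y,Z\}$.

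\textbf{Item 2.} First write $YZX\approx(-1)^{|a|}YXZ$; this uses $[Z(a),X(a)]$ anticommutation from hypothesis 3 and switches away nothing, since $Y$ is on the left. Next apply hypothesis 4, $Y(a)X(a)\approx F(a)G(a)$, with $Z(a)$ on the right; that $Z(a)$ is removed by \cref{cor:compiled-prover-switching-obs} using self-consistency of $Z$. Finally use $G(a)Z(a)\approx(-1)^{|a|}Z(a)G(a)$ from the $\{Z,F,G\}$ part of hypothesis 3. This yields $F(a)Z(a)G(a)=\tilde\Delta(a)$, all under $\E_{a\sim\mu}$.

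\textbf{Item 3.} For $W\in\{X,Y,Z\}$, move $W(b)$ leftward through $Y(a)$, $Z(a)$ and $X(a)$ using hypothesis 3 with the product distribution $\mu\times\mu'$. It commutes exactly with the factor of its own type, by linearity, and picks up $(-1)^{a\cdot b}$ from each of the other two. The total sign is $(-1)^{2a\cdot b}=1$. Right-hand factors are again stripped by prover switching.

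For $W\in\{F,G\}$, first replace $\Delta(a)$ by $\tilde\Delta(a)$ on both sides using item 2. On the side $\Delta(a)W(b)$ this needs prover switching to remove $W(b)$; on the side $W(b)\Delta(a)$ only unitary invariance is needed. Then run the identical argument in $\{Z,F,G\}$.

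\textbf{Main obstacle.} I expect the delicate point to be the bookkeeping of where the independent distributions live. The comparison $\Delta(a)W(b)$ versus $\tilde\Delta(a)W(b)$ requires item 2 to hold on the state "after $W(b)$". This is exactly what \cref{cor:compiled-prover-switching-obs} provides, but only because item 2 holds for \emph{arbitrary} $\mu$ and every $q\in\Qa$. The same remark applies to the point-distribution use of hypothesis 3 in item 1.
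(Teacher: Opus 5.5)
Your proposal is correct and follows essentially the same route as the paper: the same three-swap computation of $\Delta^2\approx-\id$ (the paper fixes $\Delta=\Delta(e_1)$, which matches your odd-weight reading), the same chain $YZX\approx(-1)^{|a|}YXZ\approx(-1)^{|a|}FGZ\approx FZG$ (all at argument $a$) for item 2, and the same commute-through argument for item 3, including the detour via $\tilde\Delta(a)$ for $W\in\{F,G\}$. There is one small bookkeeping slip in item 1: after your first swap of $X(a)Y(a)$ the two $Y(a)$ factors are not yet adjacent (you get $YZYXZX$), so $X(a)^2$ cancels only after the second swap and $Y(a)^2,Z(a)^2$ only after the third; your sign count $(-1)^{3|a|}$ is unaffected.
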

    \begin{proof}
        To obtain the first conclusion, we first show that $\Delta$ is approximately a root of unity:
        \begin{align*}
            \Delta^2 &= Y(e_1)Z(e_1)X(e_1)Y(e_1)(Z(e_1)X(e_1))\\
            &\approx_{\eps+\ea}-Y(e_1)Z(e_1)X(e_1)(Y(e_1)X(e_1))Z(e_1)\\
            &\approx_{\eps+\ea}Y(e_1)(Z(e_1)Y(e_1))Z(e_1)\\
            &\approx_{\eps+\ea}-\id
        \end{align*}
        here we repeatedly used \cref{item:self-con2} with \cref{lemma:compiled-prover-switching} and \cref{item:comrel2}. With this we can conclude the following, for $a,b\in\zo$:
        \[
        \norm{\Delta^{a+b}-(-1)^{a\cdot b}\Delta^{a\oplus b}}_\peq^2\leq O(\eps+\ea).
        \]
        \par
        \medskip
        For the third conclusion (its proof involves showing the second conclusion), suppose $W = Y$. We make the following observations, with an implicit expectation over $a\sim\mu$ and $b\sim\mu'$:
        \begin{align*}
            W(b) \Delta(a) &= Y(b) Y(a) Z(a) X(a)  \\
            \intertext{By exact linearity (hypothesis 2):}
            &= Y(a) (Y(b) Z(a)) X(a) \\
        \intertext{By hypotheses 1 (self-consistency) and 3 (anti-commutation) and \cref{cor:compiled-prover-switching-obs}:}
            &\approx_{\eps+\ea} (-1)^{b \cdot a} Y(a) Z(a) (Y(b) X(a)) \\
            &\approx_{\eps+\ea} (-1)^{2(b \cdot a)} Y(a) Z(a) X(a) Y(b) \\
            &= \Delta(a) W(b).
        \end{align*}
        So we have $W(b) \Delta(a) \approx_{\eps+\ea} \Delta(a) W(b)$ in the case $W = Y$. The other cases $W = X,Z$ are analogous. For the cases where $W=G,F$, we will use \cref{item:ingredient-close-deltas} to show that $\Delta(a)\approx\tilde\Delta(a)$, then we can perform the same steps as above with $\tilde\Delta(a)$ and switch it back out for $\Delta(a)$ in the final expression, using compiled prover switching. Consider the following chain of approximate inequalities under an implicit expectation over $a\sim\mu$:
        \begin{align*}
            \Delta(a)&=Y(a)(Z(a)X(a))\\
            &\approx_{\eps+\ea}(-1)^{|a|}(Y(a)X(a))Z(a)\\
            &\approx_{\eps+\ea}(-1)^{|a|}F(a)(G(a)Z(a))\\
            &\approx_{\eps+\ea}(-1)^{2|a|}F(a)Z(a)G(a)\\
            &=\tilde\Delta(a),
        \end{align*}
        the second and third line use \cref{item:comrel2} and the third line uses \cref{item:ingredient-close-deltas} and \cref{lemma:compiled-prover-switching} with the self-consistency of $Z$. This completes the proof.
    \end{proof}

    \paragraph*{Defining an approximate representation}
    
    \begin{definition} \label{def:prover-approx-rep}
    For a prover with observables $W(a)$ (for $W \in \{X,Y,Z,F,G\}$ and $a \in \bits^n$), we define a function $f: C_{n}\to U(\cH)$ as follows (recalling from \cref{def:extended-pauli-group} that we can represent an arbitrary group element as $\omega^p x(a)g(b)z(c)$ for $p \in \{0,1,2,3\}$ and $a,b,c \in \bits^{n}$):
    \begin{align*}
    f(\omega^p x(a) g(b)z(c)) = (-1)^{s(p)}\Delta^{r(p)} X(a)G(b) Z(c) \,,
    \end{align*}
    where $s(p)\coloneqq \floor{p/2}\pmod 2$ and $r(p)\coloneqq p\pmod 2$.
    \end{definition}
    
    Note that the definition of $f$ depends on the $Y$-observables only through $\Delta \coloneqq\Delta(e_1) = Y(e_1)Z(e_1)X(e_1)$, where $e_1$ is as in \cref{section:notation}.

    \begin{lemma}\label{lemma:approx-repr-close-to-Y}
        For any computationally efficient prover modeled as in Section \ref{section:modeling} that wins with probability $1 - \eps$ in Items 1 to 5 of the compiled Clifford group test, obtained from \cref{protocol:cliff-group}, it holds for the function $f$, from \cref{def:prover-approx-rep}, that for all $q\in\Qa$ and all distributions $\mu$ over $\zo^n$ there exists a cryptographically small function $\ea$ such that
        \[
        \E_{a\sim\mu}\norm{f(y(a))-Y(a)}_\peq^2\leq O(\eps+\ea).
        \]
    \end{lemma}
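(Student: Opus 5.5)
The plan is to compute $f(y(a))$ explicitly from \cref{def:prover-approx-rep} and then match it to $Y(a)$ using the product relation and the properties of the phase operator $\Delta$. By \cref{def:extended-pauli-group}, $y(a)=\omega^{|a|}x(a)z(a)$, so its normal form is $\omega^p x(a)g(0)z(a)$ with $p=|a|\bmod 4$. Hence
\[
f(y(a))=(-1)^{s(|a|)}\Delta^{r(|a|)}X(a)Z(a).
\]
I would split the expectation over $a\sim\mu$ into even and odd Hamming weight and bound each part separately. The conditional distributions are again arbitrary distributions on $\zo^n$, so every guarantee of \cref{lemma:epbt-basic-guarantee} and \cref{claim:delta-commutes} applies to them. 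The hypotheses of \cref{claim:delta-commutes} are exactly items 1--3 and 6 of \cref{lemma:epbt-basic-guarantee}.

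\emph{Even weight.} Here $r=0$, and $(-1)^{s(|a|)}=(-1)^{|a|/2}$ because $|a|\bmod 4\in\{0,2\}$. So the claim reduces to
\[
\norm{(-1)^{|a|/2}X(a)Z(a)-Y(a)}_\peq^2\leq O(\eps+\ea),
\]
which is item \ref{item:prodrel} of \cref{lemma:epbt-basic-guarantee} for $W=(X,Z,Y)$, applied with the conditional distribution of $a$.

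\emph{Odd weight.} Set $a'\coloneqq a\oplus e_1$, which has even weight. The distribution of $a'$ is the pushforward of $\mu$, so item \ref{item:prodrel} applies to it as well. By exact linearity (item \ref{item:linearity}), $Y(a)=Y(a')Y(e_1)$. Since $X(e_1)^2=Z(e_1)^2=\id$, the identity $Y(e_1)=\Delta X(e_1)Z(e_1)$ holds exactly. I would then chain the following steps.
\begin{enumerate}
    \item Replace $Y(a')$ by $(-1)^{|a'|/2}X(a')Z(a')$. This uses item \ref{item:prodrel}, with the unitary $\Delta X(e_1)Z(e_1)$ sitting on the right.
    \item Commute $\Delta$ to the front, past $Z(a')$ and then $X(a')$, using conclusion 3 of \cref{claim:delta-commutes} twice.
    \item Commute $Z(a')$ past $X(e_1)$, picking up the sign $(-1)^{a'_1}$ (item \ref{item:comrel}), and merge $X(a')X(e_1)=X(a)$ and $Z(a')Z(e_1)=Z(a)$ by exact linearity.
\end{enumerate}
Every replacement that happens with unitaries on the right is justified by \cref{cor:compiled-prover-switching-obs} or \cref{lemma:compiled-prover-switching}. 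These apply because $X$ and $Z$ are $\eps$-self-consistent (item 1). The right-hand factors are at the fixed point $e_1$, or are $\Delta$, a product of three self-consistent observables, so only a constant number of switches is needed.

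The outcome of this chain is
\[
Y(a)\approx_{\eps+\ea}(-1)^{|a'|/2+a'_1}\,\Delta X(a)Z(a).
\]
To finish, I check that the sign agrees with $(-1)^{s(|a|)}$ in both cases.
\begin{itemize}
    \item If $a_1=1$, then $a'_1=0$ and $|a|=|a'|+1$. So $|a'|\equiv 0\pmod 4$ gives $p=1$, and $|a'|\equiv 2\pmod 4$ gives $p=3$. Both match.
    \item If $a_1=0$, then $a'_1=1$ and $|a|=|a'|-1$. So $|a'|\equiv 0\pmod 4$ gives $p=3$ with sign $-1$, and $|a'|\equiv 2\pmod 4$ gives $p=1$ with sign $+1$. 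Both match.
\end{itemize}

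Combining the even and odd cases with the triangle inequality gives the claim. The constantly many cryptographically small functions can be merged into one via \cref{rem:non-uniformity}.

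The main obstacle I expect is bookkeeping rather than any new idea. There are two things to get right. The first is the sign accounting in the odd case, where $\Delta$ must stand in for the phase $i$ with the correct power. The second is making sure that each approximate replacement happening to the left of a product is legitimately moved through the operators on its right, via compiled prover switching with self-consistent observables. In particular, I need to check that conclusion 3 of \cref{claim:delta-commutes} is available with $\Delta$ at the fixed point $e_1$ and $W(b)$ at $b=a'$ drawn from an arbitrary distribution.
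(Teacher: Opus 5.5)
Your proof is correct. Its skeleton matches the paper's: you split by the parity of $|a|$, dispose of the even case directly with item~\ref{item:prodrel} of \cref{lemma:epbt-basic-guarantee}, and in the odd case shift to the even-weight string $a\oplus e_1$.

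The difference lies in how you handle the phase operator. The paper keeps $\Delta$ on the far left of $f(y(a))$ and multiplies the whole expression inside the norm on the left by the unitary $Y(e_1)$. Since $Y(e_1)\Delta=Z(e_1)X(e_1)$ and $Y(e_1)Y(a)=Y(a\oplus e_1)$ exactly, the odd case collapses to two ingredients: a single anticommutation of $Z(e_1)$ past $X(a\oplus e_1)$, with $Z(a)$ removed on the right by compiled prover switching, and the product relation at $a\oplus e_1$. So $\Delta$ never has to be moved.

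You instead factor $Y(a)=Y(a')\,\Delta X(e_1)Z(e_1)$ with $\Delta$ on the right. This forces you to transport $\Delta$ leftward through $X(a')Z(a')$ using conclusion~3 of \cref{claim:delta-commutes}, which costs a few extra prover switches. That step is legitimate: the lemma's hypotheses are supplied by \cref{lemma:epbt-basic-guarantee} independently of the present statement. Its conclusion also holds with $\Delta(e_1)$ at a point mass and $W(b)$ under an arbitrary distribution. Your sign bookkeeping is right in both subcases $a_1\in\{0,1\}$.

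The paper's left-multiplication trick is more economical, since it needs no commutation property of $\Delta$ at all. Your route is somewhat more modular, in that it mirrors how $\Delta$ is treated as a central phase in \cref{lemma:f-approx-pure-basis}. Finally, in your step~1 the right-hand factor $\Delta X(e_1)Z(e_1)$ is exactly $Y(e_1)$, so a single prover switch suffices there.
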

    \begin{proof}
    Recall that, by the group relations, we can alternatively represent $y(a)$ as 
    \[
    \omega^{|a|}x(a)z(a).
    \]
    Let $s(a)\coloneqq\floor{a/2}\pmod 2$ and $r(a)\coloneqq a\pmod 2$, so
    \begin{align*}
        &\E_{a\sim\mu}\norm{f(y(a))-Y(a)}_\peq^2 \\
        &= \E_{a\sim\mu}\norm{(-1)^{s(|a|)}\Delta^{r(|a|)}X(a)Z(a)-Y(a)}_\peq^2\\
        \intertext{We can split the expectation over $a\in\zo^n$ based on the Hamming weight parity:}
        &= \Eazzm\norm{(-1)^{|a|/2}X(a)Z(a)-Y(a)}_\peq^2\\
        &\quad+ \E_{\substack{a\sim\mu\\|a|\equiv 1 \pmod 2}}\norm{(-1)^{(|a|-1)/2}\Delta X(a)Z(a)-Y(a)}_\peq^2\\
        &\leq O(\eps+\ea),\numberthis\label{eqn:y-closeness-uniform}
    \end{align*}
    where we bounded both terms separately. The bound on the first term automatically follows from \cref{lemma:epbt-basic-guarantee}, \cref{item:prodrel}. To bound the second term, we have to do a bit more work. Expanding $\Delta$ and using left unitary invariance and the exact linearity of $Y(a)$ and $X(a)$:
    \begin{align*}
        \E_{\substack{a\sim\mu\\|a|\equiv 1 \pmod 2}}&\norm{(-1)^{(|a|-1)/2}\Delta X(a)Z(a)-Y(a)}_\peq^2\\
        &=\E_{\substack{a\sim\mu\\|a|\equiv 1 \pmod 2}}\norm{(-1)^{(|a|-1)/2}Z(e_1)X(a+e_1)Z(a)-Y(a+e_1)}_\peq^2\\
        \intertext{Using compiled prover switching left unitary invariance and the triangle inequality, we obtain:}
        &\leq 4\E_{\substack{a\sim\mu\\|a|\equiv 1 \pmod 2}}\norm{Z(e_1)X(a+e_1) - (-1)^{a\cdot e_1+1}X(a+e_1)Z(e_1)}_\peq^2\\
        &\quad + 2\E_{\substack{a\sim\mu\\|a|\equiv 1 \pmod 2}}\norm{(-1)^{(|a|+2a\cdot e_1 + 1)/2}X(a+e_1)Z(a+e_1) - Y(a+e_1)}_\peq^2\\
        &\quad+O(\eps+\ea)\\
        \intertext{The first term can be bounded using \cref{lemma:epbt-basic-guarantee}, Conclusion 3. The second term can be rewritten by introducing $\tilde a=a+e_1$; this string has even Hamming weight and $|\tilde a|\equiv |a|+2a\cdot e_1 +1 \pmod 4$:}
        &\leq 2\E_{\substack{\tilde a\sim\mu\\|\tilde a|\equiv 0 \pmod 2}}\norm{(-1)^{|\tilde a|/2}X(\tilde a)Z(\tilde a) - Y(\tilde a)}_\peq^2 + O(\eps+ \ea)\\
        \intertext{The final bound is then obtained by applying \cref{lemma:epbt-basic-guarantee}, \cref{item:prodrel},}
        &\leq O(\eps+\ea).
    \end{align*}

    \end{proof}
    
    \begin{lemma}\label{lemma:approx-repr-close-to-F}
        For any computationally efficient prover modeled as in Section \ref{section:modeling} that wins with probability $1 - \eps$ in Items 1 to 5 of the compiled Clifford group test, obtained from \cref{protocol:cliff-group}, it holds for the function $f$, from \cref{def:prover-approx-rep}, that for all $q\in\Qa$ and all distributions $\mu$ over $\zo^n$ there exists a cryptographically small function $\ea$ such that
        \[
        \E_{a\sim\mu}\norm{f(\mathsf{f}(a))-F(a)}_\peq^2\leq O(\eps+\ea).
        \]
    \end{lemma}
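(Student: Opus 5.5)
The plan mirrors the proof of \cref{lemma:approx-repr-close-to-Y}, with the triple $(G,Z,F)$ in place of $(X,Z,Y)$. The one new ingredient is that $f$ is built from $\Delta=Y(e_1)Z(e_1)X(e_1)$ rather than from $\tilde\Delta$.

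\textbf{Step 1 (normal form).} I first write $\mathsf{f}(a)=\omega^{|a|}g(a)z(a)$. This is already in normal form (with $x$-part $0$), so
\[
f(\mathsf{f}(a))=(-1)^{s(|a|)}\Delta^{r(|a|)}G(a)Z(a).
\]
I then split $\E_{a\sim\mu}$ according to the parity of $|a|$.

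\textbf{Step 2 (even weight).} For even $|a|$, $\Delta$ does not appear and $s(|a|)\equiv|a|/2$. The required bound is then exactly \cref{lemma:epbt-basic-guarantee}, \cref{item:prodrel}, applied with $W=(G,Z,F)$ and the distribution $\mu$ conditioned on even weight.

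\textbf{Step 3 (odd weight).} For odd $|a|$, I want to replace $\Delta$ by $\tilde\Delta(e_1)=F(e_1)Z(e_1)G(e_1)$.
\begin{itemize}
\item \cref{claim:delta-commutes}, conclusion 2, with $\mu$ the point mass on $e_1$, gives $\Delta\approx_{\eps+\ea}\tilde\Delta(e_1)$; its hypotheses come from \cref{lemma:epbt-basic-guarantee}.
\item This swap sits on the left of the product $G(a)Z(a)$. So I use \cref{cor:compiled-prover-switching-obs} twice, peeling off $Z(a)$ and then $G(a)$ via self-consistency of $(\{Z,F,G\},U_3)$. This shows $\Delta G(a)Z(a)\approx\tilde\Delta(e_1)G(a)Z(a)$ on $\peq$.
\end{itemize}
I expect this to be the main obstacle, because compiled prover switching only removes self-consistent observables from the right one at a time. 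That is fine here: only constantly many factors are involved, and both $\Delta-\tilde\Delta(e_1)$ and $G(a),Z(a)$ are efficient LCUs of bounded norm.

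\textbf{Step 4 (reduce odd to even).} After the swap, I use exact linearity $G(e_1)G(a)=G(a+e_1)$ and left unitary invariance to remove $F(e_1)$. Multiplying by $F(e_1)$ on the left also brings $F(a)$ to $F(a+e_1)$. The remaining task is to move $Z(e_1)$ past $G(a+e_1)$. This costs the sign $(-1)^{(a+e_1)\cdot e_1}$ and an $O(\eps+\ea)$ error by \cref{lemma:epbt-basic-guarantee}, \cref{item:comrel}, again using prover switching to handle the trailing $Z(a)$.

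The result combines to $Z(e_1)Z(a)=Z(a+e_1)$ with $\tilde a=a+e_1$ of even weight. As in the $Y$ case, $|\tilde a|\equiv|a|+2a\cdot e_1+1\pmod 4$, so the signs reduce to $(-1)^{|\tilde a|/2}$. The term then becomes
\[
\E\norm{(-1)^{|\tilde a|/2}G(\tilde a)Z(\tilde a)-F(\tilde a)}_\peq^2,
\]
taken over the pushed-forward distribution of $\tilde a$. This is bounded by \cref{item:prodrel} again, since that item holds for arbitrary distributions. Collecting the constantly many error terms gives $O(\eps+\ea)$.
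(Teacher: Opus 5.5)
Your proposal is correct and follows the paper's proof essentially step for step. Both arguments split by the parity of $|a|$ and dispose of the even part with \cref{lemma:epbt-basic-guarantee}, \cref{item:prodrel}. For the odd part, both swap $\Delta$ for $\tilde\Delta(e_1)$ using \cref{claim:delta-commutes} with compiled prover switching on $G$ and $Z$, and then reduce to the even case exactly as in the $Y$ lemma; your Step~4 simply spells out the reduction the paper leaves as ``bounded in exactly the same way''.
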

    \begin{proof}
    Plugging the guarantees from \cref{lemma:epbt-basic-guarantee} into \cref{claim:delta-commutes}, we know that 
    \[
    \E_{a\sim\mu}\norm{\Delta(a)-\tilde\Delta(a)}_\peq^2\leq O(\eps+\ea).\numberthis\label{eqn:deltas-close}
    \]
    Recall that, by the group relations, we can alternatively represent $\mathsf{f}(a)$ as
    \[
    \omega^{|a|}g(a)z(a).
    \]
    Let $s(a)\coloneqq\floor{a/2}\pmod 2$ and $r(a)\coloneqq a\pmod 2$, so
    \begin{align*}
        &\E_{a\sim\mu}\norm{f(\mathsf{f}(a))-F(a)}_\peq^2 \\
        &= \E_{a\sim\mu}\norm{(-1)^{s(|a|)}\Delta^{r(|a|)}G(a)Z(a)-F(a)}_\peq^2\\
        \intertext{We can split the expectation over $a\in\zo^n$ based on the Hamming weight parity:}
        &= \Eazzm\norm{(-1)^{|a|/2}G(a)Z(a)-F(a)}_\peq^2\\
        &\quad+ \E_{\substack{a\sim\mu\\|a|\equiv 1 \pmod 2}}\norm{(-1)^{(|a|-1)/2}\Delta G(a)Z(a)-F(a)}_\peq^2\\
        \intertext{Using \cref{lemma:epbt-basic-guarantee}, \cref{item:prodrel} and \cref{eqn:deltas-close} with compiled prover switching on $G$ and $Z$:}
         &\leq \E_{\substack{a\sim\mu\\|a|\equiv 1 \pmod 2}}\norm{(-1)^{(|a|-1)/2}\tilde\Delta G(a)Z(a)-F(a)}_\peq^2+O(\eps+\ea)\\
        &\leq O(\eps+\ea),
    \end{align*}
    here the second term can be bounded in exactly the same way as in the proof of \cref{lemma:approx-repr-close-to-Y}, where $F$ takes the role of $Y$ and $G$ takes the role of $X$.
    \end{proof}
    
    \begin{lemma}\label{lemma:tailored-conjrel}
    Suppose that for all distributions $\mu,\mu'$ on $\zo^n$ and all $q\in\Qa$ there exists a cryptographically small function $\ea$ such that:
    \begin{enumerate}
    \item $(\Sigma, U_3)$ is $\eps$-self-consistent for $\Sigma=\{X,Y,Z\}$ and $\Sigma=\{Z,F,G\}$.\label{item:self-con3}
    \item $W(a) W(b) = W(a+b)\;$ for all $a,b \in \{0,1\}^n$ and all $W \in \{X, Y, Z, F, G\}$.\label{item:linearity3}
    \item For all $W\neq W'\in\{X,Y,Z\}$ and all $W\neq W'\in\{Z, F, G\}$:
    \[
    \E_{a\sim\mu, b \sim\mu'} \norm{ W(a) W'(b) - (-1)^{a \cdot b} W'(b)
      W(a) }_{\psi^{\Enc(q)}}^2 \leq O(\eps +\ea).
    \]\label{item:comrel3}
    \item $\E_{a\in\zo^n}\E_{b\sim\mu}\norm{G(a)X(b) - X(b\setminus a)Y(a\cap b)G(a)}_\peq^2\leq O(\eps+\ea).$\label{item:conjrel3}
    \item For all $W\in\{X,Y,Z,F,G\}$: $\E_{a\sim\mu,b\sim\mu'}\norm{\Delta(a) W(b) - W(b) \Delta(a)}_\peq^2\leq O(\eps+\ea).$\label{item:delta-commutes3}
    \item $\E_{a\sim\mu}\norm{f(y(a))-Y(a)}_\peq^2\leq O(\eps+\ea).$\label{item:y-closeness}
    \end{enumerate}
    Then for all $q\in\Qa$ there exists a cryptographically small function $\ea$ such that
    \begin{enumerate}
        \item $\E_{a\in\zo^n}\E_{b\sim\mu}\norm{G(a)X(b) - (-1)^{a\cdot b+s(|a\cap b|)}\Delta^{r(|a\cap b|)}X(b)G(a)Z(a\cap b)}_\peq^2\leq \gamma.$
        \item $\E_{a\in\zo^n}\E_{b\sim\mu}\norm{G(a)Y(b) - (-1)^{s(|b\setminus a|)}\Delta^{r(|b\setminus a|)}X(b)G(a)Z(b\setminus a)}_\peq^2\leq \gamma.$
    \end{enumerate}
    where $\gamma=O(\eps+\ea)$, $s(a)\equiv\floor{a/2}\pmod 2$ and $r(a)\equiv a\pmod 2$.
    \end{lemma}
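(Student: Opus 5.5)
The plan is to start from the certified conjugation relation in \cref{item:conjrel3} and rewrite its right-hand side $X(b\setminus a)Y(a\cap b)G(a)$ into the normal-form product dictated by $f$. For the first conclusion I will split $X(b)=X(b\setminus a)X(a\cap b)$ using exact linearity (\cref{item:linearity3}). Then I will replace $Y(a\cap b)$ by $f(y(a\cap b))=(-1)^{s(|a\cap b|)}\Delta^{r(|a\cap b|)}X(a\cap b)Z(a\cap b)$ via \cref{item:y-closeness}. This replacement is legitimate under compiled prover switching: the operators to the right ($G(a)$) and to the left ($X(b\setminus a)$) come from $\eps$-self-consistent distributions, so \cref{lemma:compiled-prover-switching} and left unitary invariance remove them. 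Here the distribution $\mu''$ of $a\cap b$ induced by $a$ uniform and $b\sim\mu$ is used in \cref{item:y-closeness}, which holds for arbitrary distributions.

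Next I will move the phase and the $Z$ factor into place. $\Delta^{r}$ is commuted to the front past $X(b\setminus a)$ using \cref{item:delta-commutes3}. Note that $\Delta=\Delta(e_1)$, so I take the point distribution on $e_1$; when $r=0$ nothing needs to be done. The combined $X(b\setminus a)X(a\cap b)$ equals $X(b)$ exactly. Finally, $Z(a\cap b)$ must be moved past $G(a)$: \cref{item:comrel3} for the pair $(Z,G)$ gives $Z(a\cap b)G(a)\approx(-1)^{a\cdot(a\cap b)}G(a)Z(a\cap b)$, and $a\cdot(a\cap b)=|a\cap b|$. The resulting sign must then be reconciled with the claimed $(-1)^{a\cdot b}$, and indeed $a\cdot b\equiv|a\cap b|$, so it matches. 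Each step costs $O(\eps+\ea)$, and the constant number of steps is summed with the triangle inequality (\cref{lemma:state-dependent-norm-properties}). Cryptographically small functions from separate invocations are merged by taking the maximum.

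For the second conclusion I will write $Y(b)=Y(b\setminus a)Y(a\cap b)$ and handle the two pieces separately. The piece $G(a)Y(a\cap b)$ is treated by conjugating \cref{item:conjrel3} with $b$ replaced by $a\cap b$ (a legitimate distribution), giving $G(a)X(a\cap b)\approx Y(a\cap b)G(a)$. Combining this with $X^2=\id$ and $G^2=\id$ yields $G(a)Y(a\cap b)\approx X(a\cap b)G(a)$. For $G(a)Y(b\setminus a)$, disjointness of supports means it should commute. I would derive this by expressing $Y(b\setminus a)\approx f(y(b\setminus a))=\pm\Delta^{r}X(b\setminus a)Z(b\setminus a)$. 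Then I would use the conjugation relation with $b\setminus a$, where $a\cap(b\setminus a)=\emptyset$, so $G(a)X(b\setminus a)\approx X(b\setminus a)G(a)$. The remaining ingredients are $Z$–$G$ commutation (sign $a\cdot(b\setminus a)=0$) and $\Delta$-commutation. Reassembling gives $(-1)^{s}\Delta^{r}X(b\setminus a)X(a\cap b)G(a)Z(b\setminus a)$ after moving $Z(b\setminus a)$ past $G(a)$, which is the claim.

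The main obstacle is bookkeeping of the compiled prover switching. Every time an operator sits to the right of the difference being estimated, I must check that it is a self-consistent observable, or a product of such observables together with $\Delta$. $\Delta$ itself is a product of three self-consistent observables, so \cref{lemma:compiled-prover-switching} can be applied three times, at a constant blow-up. I must also check that all expectations are over distributions for which the hypotheses were stated. Since $a$ is uniform and $b\sim\mu$, the induced laws of $a\cap b$ and $b\setminus a$ are arbitrary, and the hypotheses hold for all distributions, so this works. Correlations between $a$ and $a\cap b$ within a single norm are the delicate point. I would handle them by noting that the hypotheses in \cref{item:comrel3} are pointwise-in-pair-distribution statements, applied to the joint law of the pair.
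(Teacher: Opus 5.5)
Your argument for the first conclusion is the paper's, step for step. You start from \cref{item:conjrel3}, substitute $Y(a\cap b)\approx f(y(a\cap b))$, move $\Delta^{r(|a\cap b|)}$ to the front, merge $X(b\setminus a)X(a\cap b)=X(b)$, and commute $Z(a\cap b)$ past $G(a)$ at the cost of $(-1)^{|a\cap b|}=(-1)^{a\cdot b}$.

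The second conclusion is where you depart from the paper, and there the justification has a gap. You invoke \cref{item:conjrel3} with $b$ replaced by $a\cap b$ and by $b\setminus a$, calling these legitimate distributions. That hypothesis, however, is only asserted for $a$ uniform and $b\sim\mu$ drawn \emph{independently} of $a$. The strings $a\cap b$ and $b\setminus a$ are functions of $a$, so this joint law is not covered. Your closing remark does not close the gap: \cref{item:comrel3} is likewise stated only for product laws $\mu\times\mu'$, not for an arbitrary joint law of the pair. The paper makes the same silent move with \cref{item:comrel3} at the correlated pair $(a\cap b,a)$ in the last step of the first conclusion. That step can be justified by the shifting trick: insert $G(d)$ with $d$ uniform via \cref{lemma:reverse-compiled-prover-switching}, so that only the product pairs $(a\cap b,a+d)$ and $(a\cap b,d)$ occur.

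The paper's route for the second conclusion never needs \cref{item:conjrel3} at correlated arguments. It proceeds as follows:
\begin{enumerate}
\item Write $Y(b)\approx(-1)^{s(|b|)}\Delta^{r(|b|)}X(b)Z(b)$ with $b\sim\mu$.
\item Pull $\Delta^{r(|b|)}$ past $G(a)$ with \cref{item:delta-commutes3}.
\item Apply the already established first conclusion to $G(a)X(b)$ at the original independent pair $(a,b)$, and merge $Z(a\cap b)Z(b)=Z(b\setminus a)$.
\item Reduce $\Delta^{r(|b|)+r(|a\cap b|)}$ using $\Delta^2\approx-\id$ together with $s(u+v)=s(u)+s(v)+r(u)r(v)$. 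The relation $\Delta^2\approx-\id$ is derivable from \cref{item:self-con3,item:comrel3} exactly as in \cref{claim:delta-commutes}.
\end{enumerate}

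Your splitting $Y(b)=Y(b\setminus a)Y(a\cap b)$ is algebraically correct. It has the small advantage that only one factor $\Delta^{r(|b\setminus a|)}$ ever appears, so no root-of-unity reduction is needed. To make it rigorous, however, you must first prove a correlated version of \cref{item:conjrel3}. For $c\in\{a\cap b,\,b\setminus a\}$, insert $X(d)$ with $d$ uniform on the right and apply \cref{item:conjrel3} at the independent pairs $(a,c+d)$ and $(a,d)$. You then have to commute $Y(a\cap c)$ past $X(d\setminus a)$, which is itself a correlated commutation, again handled by shifting. You also need the correlated commutations of $Z(b\setminus a)$ past $X(a\cap b)$ and past $G(a)$. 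That extra layer is precisely what reusing the first conclusion saves.
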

    \begin{proof}
    Towards proving the first conclusion, we show that under implicit expectation over $a\in\zo^n$ and $b\sim\mu$, we have
    \begin{align*}
        G(a)X(b)&\approx_{\eps+\ea}X(b\setminus a)Y(a\cap b)G(a)\\
        &\approx_{\eps+\ea}X(b\setminus a)(-1)^{s(|a\cap b|)}\Delta^{r(|a\cap b|)}X(a\cap b)Z(a\cap b)G(a)\\
        &\approx_{\eps+\ea}(-1)^{s(|a\cap b|)}\Delta^{r(|a\cap b|)}X(b)Z(a\cap b)G(a)\\
        &\approx_{\eps+\ea}(-1)^{a\cdot b +s(|a\cap b|)}\Delta^{r(|a\cap b|)}X(b)G(a)Z(a\cap b),
    \end{align*}
    here the first line uses \cref{item:conjrel3}, the second line uses \cref{item:y-closeness}, the third one follows from \cref{item:delta-commutes3} with \cref{item:linearity3} and the last one uses \cref{item:comrel3}. For most of these steps we also need \cref{item:self-con3} with compiled prover switching (\cref{lemma:compiled-prover-switching}). This shows the first conclusion. 
    \par
    \medskip
    Now, towards the second conclusion, we consider the following approximate equalities under an implicit expectation over $a\in\zo^n$ and $b\sim\mu$:
    \begin{align*}
        G(a)Y(b)&\approx_{\eps+\ea}G(a)(-1)^{s(|b|)}\Delta^{r(|b|)}X(b)Z(b)\\
        &\approx_{\eps+\ea}(-1)^{s(|b|)}\Delta^{r(|b|)}G(a)X(b)Z(b)\\
        &\approx_{\eps+\ea}(-1)^{a\cdot b + s(|b|)+s(|a\cap b|)}\Delta^{r(|b|)+r(|a\cap b|)}X(b)G(a)Z(b\setminus a)\\
        &\approx_{\eps+\ea}(-1)^{a\cdot b + s(|b|)+s(|a\cap b|)+r(|b|) r(|a\cap b|)}\Delta^{r(|b|)\oplus r(|a\cap b|)}X(b)G(a)Z(b\setminus a)\\
        &=(-1)^{s(|b\setminus a|)}\Delta^{r(|b\setminus a|)}X(b)G(a)Z(b\setminus a),
    \end{align*}
    the first approximate equality follows from \cref{item:y-closeness}, the second one uses \cref{item:delta-commutes3}, for the third line we used our previous observation about commuting $X$ past $G$ and \cref{item:linearity3}. The second-to-last line uses \cref{item:comrel3} and the final equality follows from the definitions of $s(a)$, $r(a)$ and the fact that $|b|=|a\cap b|+|b\setminus a|$ and the following identity for $u,v\in\Nat$:
    \[
    s(u+v)=s(u)+s(v)+r(u)r(v).
    \]
    \end{proof}
    
    \begin{lemma}\label{lemma:f-approx-pure-basis}
    For any computationally efficient prover modeled as in Section \ref{section:modeling} that wins with probability $1 - \eps$ in Items 1 to 5 of the compiled Clifford group test, obtained from \cref{protocol:cliff-group}, it holds for the function $f$, from \cref{def:prover-approx-rep}, that for all $q\in\Qa$, all $W\in\{X,Y,Z,F,G\}$ and all distributions $\mu$ over $\zo^n$ there exists a cryptographically small function $\ea$ such that
    \begin{align}
     \E_{a \sim \mu, h \in C_{n}} \| f(h) f(w(a)) - f(h\cdot w(a)) \|_\peq^2 \leq O(\eps+\ea) \,, \label{eqn:f-prover-approx-rep}
    \end{align}
    \end{lemma}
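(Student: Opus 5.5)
Write a uniformly random $h=\omega^p x(c)g(d)z(e)$, so that $f(h)=(-1)^{s(p)}\Delta^{r(p)}X(c)G(d)Z(e)$. For each generator type $W\in\{X,Z,G\}$ I would bring $f(h)f(w(a))$ into normal form and compare it with $f(h\cdot w(a))$, using only the relations collected in \cref{lemma:epbt-basic-guarantee}, \cref{claim:delta-commutes} and \cref{lemma:tailored-conjrel}. The derived elements $Y$ and $F$ are then obtained by reduction to the generator cases, via \cref{lemma:approx-repr-close-to-Y} and \cref{lemma:approx-repr-close-to-F}. Throughout, every approximate step is justified in the usual way: the replaced operator sits on the left of a product of $\eps$-self-consistent observables, so \cref{lemma:compiled-prover-switching} lets us move it (and the trailing operators) at constant cost. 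Each of the constantly many steps costs $O(\eps+\ea)$.

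The $Z$ case is exact: $Z(e)Z(a)=Z(e+a)$ by linearity, and the normal form of $h z(a)$ is $\omega^p x(c)g(d)z(e\oplus a)$. Hence $f(h)f(z(a))=f(hz(a))$ with zero error.

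For the $G$ case we need to commute $Z(e)$ past $G(a)$. This gives a sign $(-1)^{a\cdot e}$ by Item 3 of \cref{lemma:epbt-basic-guarantee}, and the sign matches the group relation $z(e)g(a)=\omega^{2a\cdot e}g(a)z(e)$. Linearity then gives $G(d)G(a)=G(d+a)$. Here $e$ is uniform and $a\sim\mu$, so the arbitrary-distribution version of the commutation relation applies.

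The $X$ case is the hard one, and I expect it to be the main obstacle. We must move $X(a)$ leftward past $Z(e)$, which is a plain sign, and then past $G(d)$. The conjugation relation $g(d)x(a)g(d)=\omega^{|d\cap a|}x(a)z(d\cap a)$ turns this second move into $X(a)G(d)$ multiplied by a phase $\Delta^{r(|a\cap d|)}$, a sign, and an extra $Z(a\cap d)$. This is exactly Conclusion 1 of \cref{lemma:tailored-conjrel}, applied with $d$ uniform and $a\sim\mu$; the uniformity of $d$ is what makes this lemma applicable. I would then do three things in turn. First, commute the resulting $\Delta$ to the front using Conclusion 3 of \cref{claim:delta-commutes}. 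Second, merge it with the existing $\Delta^{r(p)}$ via Conclusion 1 of \cref{claim:delta-commutes}, so that $\Delta^2\approx-\id$. Third, merge $X(c)X(a)$ by linearity and commute $Z(a\cap d)$ to the right to combine with $Z(e)$. The bookkeeping that remains is to check that the accumulated exponents of $\Delta$ and $-1$ reproduce $s(p')$ and $r(p')$ for the phase $p'=p+|a\cap d|+2(\cdot)$ in the normal form of $h x(a)$. This should follow from the identity $s(u+v)=s(u)+s(v)+r(u)r(v)$, as in \cref{lemma:tailored-conjrel}.

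For $W=Y$ I would write $f(h)f(y(a))\approx f(h)Y(a)$ by \cref{lemma:approx-repr-close-to-Y}. I would then expand $f(y(a))$ itself as $\pm\Delta^{r(|a|)}X(a)Z(a)$, so that $f(h)f(y(a))$ reduces to applying the already-established $X$ and $Z$ steps to $h$ and then to $h x(a)$. Since $h$ is uniform, $hx(a)$ is uniform as well. The step of pulling out the $\Delta$ again uses the $\Delta$-commutation of \cref{claim:delta-commutes}. For $W=F$ the argument is the same with $G$ in place of $X$, using \cref{lemma:approx-repr-close-to-F}. Finally, I would collect the errors. All intermediate bounds hold for every $q\in\Qa$ but with different cryptographically small functions, and there are only constantly many of them, so they can be absorbed into a single $\ea$ as in \cref{rem:non-uniformity}.
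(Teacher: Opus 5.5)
Your proposal is correct. For $W\in\{X,Z,G,F\}$ it is essentially the paper's proof:
\begin{itemize}
\item the $Z$ case is exact;
\item the $G$ case is one commutation of $Z(e)$ past $G(a)$ plus linearity;
\item the $X$ case commutes $Z(e)$ past $X(a)$, then uses Conclusion~1 of \cref{lemma:tailored-conjrel} (with the $g$-index of $h$ uniform), \cref{claim:delta-commutes}, and the identity $s(u+v)=s(u)+s(v)+r(u)r(v)$;
\item the $F$ case expands $f(\mathsf{f}(a))=(-1)^{s(|a|)}\Delta^{r(|a|)}G(a)Z(a)$ and reduces to the $G$ case.
\end{itemize}
The paper additionally restricts to $p\in\{0,1\}$ using $f(\omega^2h)=-f(h)$, which is cosmetic. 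Also, $Z(a\cap d)$ already sits directly to the left of $Z(e)$, so it merges by linearity without any commutation.

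The only real difference is the case $W=Y$. The paper computes $f(h)Y(d)$ with the prover's $Y$, commuting $G(b)$ past $Y(d)$ via Conclusion~2 of \cref{lemma:tailored-conjrel}. Its passage to $f(h)f(y(d))$ then implicitly uses \cref{lemma:approx-repr-close-to-Y}. You instead treat $f(y(a))=(-1)^{s(|a|)}\Delta^{r(|a|)}X(a)Z(a)$ as the word it is by definition. You pull $\Delta^{r(|a|)}$ through $f(h)$, invoke the $X$ case (stripping the trailing $Z(a)$ with \cref{lemma:compiled-prover-switching}), use the exact $Z$ step, and merge the two $\Delta$-powers.

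The proof of Conclusion~2 performs exactly this expansion of $Y$, just at the level of $G(a)Y(b)$. So the two routes differ only in where the expansion happens. Yours makes it transparent that the derived elements follow formally from the generator cases and the $\Delta$-relations, and it never needs Conclusion~2.

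In your write-up, drop the opening substitution $f(h)f(y(a))\approx f(h)Y(a)$. It is not used by the expansion argument, and it would lead back to the paper's route. The remark that $hx(a)$ is uniform is likewise unnecessary, because the $Z$ step is exact.
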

    \begin{proof}
        We first apply \cref{lemma:epbt-basic-guarantee}, which guarantees that the following relations hold for all distributions $\mu$, $\mu', \mu''$ on $\zo^n$ and all $q\in\Qa$:
        \begin{enumerate}
        \item $(\Sigma, U_3)$ is $\eps$-self-consistent for $\Sigma=\{X,Y,Z\}$ and $\Sigma=\{Z,F,G\}$.\label{item:self-consistent}
        \item $W(a) W(b) = W(a+b)\;$ for all $a,b \in \{0,1\}^n$ and all $W \in \{X, Y, Z, F, G\}$.\label{item:final-exact-lin}
        \item For all $W\neq W'\in\{X,Y,Z\}$ and all $W\neq W'\in\{Z, F, G\}$:
        \[
        \E_{a\sim\mu, b \sim\mu'} \norm{ W(a) W'(b) - (-1)^{a \cdot b} W'(b)
          W(a) }_{\psi^{\Enc(q)}}^2 \leq O(\eps +\ea).
        \]\label{item:final-anticomm}
        \end{enumerate}
        Using the conclusions of \cref{lemma:epbt-basic-guarantee} (conclusions $1,2$ and 3) and applying them to \cref{claim:delta-commutes}, we can also conclude that for all $W\in\{X,Y,Z,F,G\}$:
        \[
        \E_{a\sim\mu,b\sim\mu'}\norm{\Delta(a) W(b) - W(b) \Delta(a)}_\peq^2\leq O(\eps+\ea).\numberthis\label{eqn:tailored-conjrel-delta-commutes}
        \]
        and for all $a,b\in\zo$ 
        \[
        \norm{\Delta^{a+b}-(-1)^{a\cdot b}\Delta^{a\oplus b}}_\peq^2\leq O(\eps+\ea).\numberthis\label{eqn:final-delta-root}
        \]
        From \cref{lemma:approx-repr-close-to-Y} and \cref{lemma:approx-repr-close-to-F} we know that
        \[
        \E_{a\sim\mu}\norm{f(y(a))-Y(a)}_\peq^2\leq O(\eps+\ea),\numberthis\label{eqn:repr-close-Y-F-1}
        \]
        and
        \[
        \E_{a\sim\mu}\norm{f(\mathsf{f}(a))-F(a)}_\peq^2\leq O(\eps+\ea).\numberthis\label{eqn:repr-close-Y-F-2}
        \]
        With all of the above, we can apply \cref{lemma:tailored-conjrel} and conclude that:
        \begin{equation}\label{eqn:tailored-conjrel}
        \begin{aligned}
        \E_{a\in\zo^n}\E_{b\sim\mu}\norm{G(a)X(b) - (-1)^{a\cdot b+s(|a\cap b|)}\Delta^{r(|a\cap b|)}X(b)G(a)Z(a\cap b)}_\peq^2\leq O(\eps+\ea),\\
        \E_{a\in\zo^n}\E_{b\sim\mu}\norm{G(a)Y(b) - (-1)^{s(|b\setminus a|)}\Delta^{r(|b\setminus a|)}X(b)G(a)Z(b\setminus a)}_\peq^2\leq O(\eps+\ea),
        \end{aligned}
        \end{equation}
        where $s(a)\equiv\floor{a/2}\pmod 2$ and $r(a)\equiv a\pmod 2$.

        All of these guarantees together, which we obtained through the appropriate self-tests, give us enough information about the prover operators to show \cref{eqn:f-prover-approx-rep}. Since $\omega$ is in the center of the group and the approximate representation (\cref{def:prover-approx-rep}) maps $\omega^2$ to $-1$ we can assume, without loss of generality, that $h=\omega^px(a)g(b)z(c)$, with $p\in\{0,1\}$ and $a,b,c\in\zo^n$.
        \par
        \medskip
        Right multiplication on the normal form is given by the following identities, which follow from the relations in \cref{def:extended-pauli-group}. Anticommutation yields $z(c)x(d)=\omega^{2c\cdot d}x(d)z(c)$ and $z(c)g(d)=\omega^{2c\cdot d}g(d)z(c)$. Conjugation together with $g(b)^2=\id$ and a further anticommutation to restore the $xgz$ order yields $g(b)x(d)=\omega^{3|b\cap d|}x(d)g(b)z(b\cap d)$. Linearity of $x$, $g$ and $z$ then gives the claims for $x(d)$, $z(d)$ and $g(d)$; the claims for $y(d)$ and $\mathsf{f}(d)$ follow by substituting $y(d)=\omega^{|d|}x(d)z(d)$ and $\mathsf{f}(d)=\omega^{|d|}g(d)z(d)$. For all $d\in\zo^n$ and $h\in C_n$,
        \begin{align*}
            h\cdot x(d)&=\omega^{p+2c\cdot d+3|b\cap d|}x(a+d)g(b)z(c+b\cap d),\\
            h\cdot y(d)&=\omega^{p+|d|+2c\cdot d+3|b\cap d|}x(a+d)g(b)z(c+d\setminus b),\\
            h\cdot z(d)&=\omega^{p}x(a)g(b)z(c+d),\\
            h\cdot g(d)&=\omega^{p+2c\cdot d}x(a)g(b+d)z(c),\\
            h\cdot \mathsf{f}(d)&=\omega^{p+|d|+2c\cdot d}x(a)g(b+d)z(c+d).
        \end{align*}
        Applying \cref{def:prover-approx-rep}, reducing the central phase via $s(k+2m)=s(k)+m$, $r(k+2m)=r(k)$ and using the fact that $|d|=|b\cap d|+|d\setminus b|$, we obtain, for $p\in\{0,1\}$,
        \begin{align*}
            f(h\cdot x(d))&=(-1)^{(b+c)\cdot d+s(p+|b\cap d|)}\Delta^{r(p+|b\cap d|)}X(a+d)G(b)Z(c+b\cap d),\\
            f(h\cdot y(d))&=(-1)^{c\cdot d+s(p+|d\setminus b|)}\Delta^{r(p+|d\setminus b|)}X(a+d)G(b)Z(c+d\setminus b),\\
            f(h\cdot z(d))&=\Delta^{p}X(a)G(b)Z(c+d),\\
            f(h\cdot g(d))&=(-1)^{c\cdot d}\Delta^{p}X(a)G(b+d)Z(c),\\
            f(h\cdot \mathsf{f}(d))&=(-1)^{c\cdot d+s(p+|d|)}\Delta^{r(p+|d|)}X(a)G(b+d)Z(c+d).
        \end{align*}
        It remains to show that $f(h)f(w(d))$ is close to the right-hand side in each case. We start with $W=X$ and $W=Y$. Under uniform expectation over $a,b,c\in\zo^n$, $p\in\zo$ and $d\sim\mu$,
        \begin{align*}
            \Delta^{p}X(a)G(b)(Z(c)X(d))&\approx_{\eps+\ea}(-1)^{c\cdot d}\Delta^{p}X(a)G(b)X(d)Z(c)\\
            &\approx_{\eps+\ea}(-1)^{(b+c)\cdot d+s(|b\cap d|)}\Delta^{p+r(|b\cap d|)}X(a+d)G(b)Z(c+b\cap d)\\
            &\approx_{\eps+\ea}(-1)^{(b+c)\cdot d+s(|b\cap d|)+pr(|b\cap d|)}\Delta^{r(p+|b\cap d|)}X(a+d)G(b)Z(c+b\cap d)\\
            &=(-1)^{(b+c)\cdot d+s(p+|b\cap d|)}\Delta^{r(p+|b\cap d|)}X(a+d)G(b)Z(c+b\cap d),
    \\
            \Delta^{p}X(a)G(b)(Z(c)Y(d))&\approx_{\eps+\ea}(-1)^{c\cdot d}\Delta^{p}X(a)G(b)Y(d)Z(c)\\
            &\approx_{\eps+\ea}(-1)^{c\cdot d+s(|d\setminus b|)}\Delta^{p+r(|d\setminus b|)}X(a+d)G(b)Z(c+d\setminus b)\\
            &\approx_{\eps+\ea}(-1)^{c\cdot d+s(|d\setminus b|)+pr(|d\setminus b|)}\Delta^{r(p+|d\setminus b|)}X(a+d)G(b)Z(c+d\setminus b)\\
            &=(-1)^{c\cdot d+s(p+|d\setminus b|)}\Delta^{r(p+|d\setminus b|)}X(a+d)G(b)Z(c+d\setminus b)
        \end{align*}
        In both cases the first line follows from \cref{item:final-anticomm}, the second line uses \cref{eqn:tailored-conjrel} with \cref{eqn:tailored-conjrel-delta-commutes} and \cref{item:final-exact-lin}, the third line uses \cref{eqn:final-delta-root} and the last line follows from the definitions of $s(a)$, $r(a)$ and the fact that $s(p)=0$ for $p\in\zo$. The right-hand sides are $f(h\cdot x(d))$ and $f(h\cdot y(d))$ as displayed above, which shows \cref{eqn:f-prover-approx-rep} for $W=X,Y$.
        \par
        \medskip
        For $W=Z$, exact linearity immediately gives the exact equality 
        \[
        f(h)f(z(d))=\Delta^{p}X(a)G(b)Z(c)Z(d)=\Delta^{p}X(a)G(b)Z(c+d)=f(h\cdot z(d)).
        \]
        For $W=G$, under the same implicit expectation,
        \begin{align*}
            \Delta^{p}X(a)G(b)(Z(c)G(d))&\approx_{\eps+\ea}(-1)^{c\cdot d}\Delta^{p}X(a)G(b)G(d)Z(c)\\
            &=(-1)^{c\cdot d}\Delta^{p}X(a)G(b+d)Z(c),
        \end{align*}
        where the first step uses \cref{item:final-anticomm} and the second uses \cref{item:final-exact-lin}; the right-hand side is $f(h\cdot g(d))$. Thus it only remains to show \cref{eqn:f-prover-approx-rep} for $W=F$. Again we take an implicit expectation over $a,b,c\in\zo^n$, $p\in\zo$ and $d\sim\mu$:
        \begin{align*}
            \Delta^{p}X(a)G(b)Z(c)F(d)&\approx_{\eps+\ea}(-1)^{s(|d|)}\Delta^{p+r(|d|)}X(a)G(b)(Z(c)G(d))Z(d)\\
            &\approx_{\eps+\ea}(-1)^{c\cdot d+s(|d|)}\Delta^{p+r(|d|)}X(a)G(b+d)Z(c+d)\\
            &\approx_{\eps+\ea}(-1)^{c\cdot d+s(|d|)+pr(|d|)}\Delta^{r(p+|d|)}X(a)G(b+d)Z(c+d)\\
            &=(-1)^{c\cdot d+s(p+|d|)}\Delta^{r(p+|d|)}X(a)G(b+d)Z(c+d)
        \end{align*}
    
        Here the first line follows from \cref{eqn:repr-close-Y-F-1}, \cref{eqn:repr-close-Y-F-2} and \cref{eqn:tailored-conjrel-delta-commutes}, the second line uses \cref{item:final-anticomm} and \cref{item:final-exact-lin}, the third line uses \cref{eqn:final-delta-root} and the last line follows from the definitions of $s(a)$, $r(a)$ and the fact that $s(p)=0$ for $p\in\zo$. The right-hand side is $f(h\cdot \mathsf{f}(d))$, which shows \cref{eqn:f-prover-approx-rep} for $W=F$ and completes the proof.
    
    \end{proof}

    \paragraph*{From approximate representation to rigidity}
    \begin{definition}[Clifford test isometry]\label{def:clifford-isometry}
    Let $C_n$ be the $n$-qubit extended Pauli group (\cref{def:extended-pauli-group}), $U_{\mathrm{QFT}}$ be the quantum Fourier transform over that group (see \cref{subsec:QFT}) and $f : C_n \to U(\H)$ be the approximate unitary representation built from the prover operators in \cref{def:prover-approx-rep}. Then the Clifford isometry is the isometry $V : \H \to \H'$ defined as:
    \[
    V \coloneqq \frac{1}{|C_n|} \sum_{u, t\in C_n} U_{\mathrm{QFT}}\ket{t}\otimes f(u t^{-1})\otimes\ket{u}.
    \] 
    \end{definition}
    
    \begin{lemma}\label{lemma:EPBT-pure-basis}
    For any computationally efficient prover modeled as in Section \ref{section:modeling} that wins with probability $1 - \eps$ in the compiled Clifford group test $\textprotocol{cliff-group}(X,Y,Z,F,G,n)$, obtained from \cref{protocol:cliff-group}, there exists a complex Hilbert space $\cH'$ of finite dimension $d' = d|C_n|^2$ and an efficient isometry $V: \cH \to \cH'$ (from \cref{def:clifford-isometry}) such that for all $W \in \{X,Y,Z,G,F\}$, all distributions $\mu$ on $\bits^n$ and all $q\in\Qa$ there exists a cryptographically small function $\ea$ such that
    \begin{align*}
    \E_{a\sim\mu} \norm{ V W(a) -((\sigma_W(a)\oplus\overline{\sigma}_W(a)) \ot\Lambda_W(a)) V }_\peq^2 \leq O(\eps+\ea),
    \end{align*}
    where $\Lambda_X(a)=\Lambda_Y(a)=\Lambda_Z(a)=\id$ and $\Lambda_F(a)=\Lambda_G(a)=\sum_{k\in\zo^n}(-1)^{a\cdot k}\ket{k}\bra{k}\ot\id$ for all $a\in\zo^n$.
    \end{lemma}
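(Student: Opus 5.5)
The plan is to combine \cref{lemma:f-approx-pure-basis} with the Gowers--Hatami theorem in its efficient form (\cref{cor:gh-efficient}), and then to identify the exact representation $\pi$ on the relevant group elements. First I would check the efficiency hypotheses of \cref{cor:gh-efficient}. The QFT over $C_n$ is efficient by the explicit circuit of \cref{subsec:QFT}, and the group operations in normal form $\omega^p x(a)g(b)z(c)$ are classical and efficient using the multiplication rules written out in the proof of \cref{lemma:f-approx-pure-basis}. A controlled-$f$ is efficient because $f(\omega^p x(a)g(b)z(c))=(-1)^{s(p)}\Delta^{r(p)}X(a)G(b)Z(c)$ is a product of a constant number of uniformly efficient unitaries (\cref{lemma:observables-efficient-unitaries}). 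Hence the isometry $V$ of \cref{def:clifford-isometry} is efficient, and the dimension count $d'=d|C_n|^2$ follows from the registers in \cref{def:gh-isometry}.

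Next, for each $W$ and $a\sim\mu$, I would apply \cref{theorem:GH} with $g=w(a)$ and average over $a$. This gives $\E_a\norm{Vf(w(a))-\pi(w(a))V}_\peq^2=\E_{a,h}\norm{f(h)f(w(a))-f(hw(a))}_\peq^2\le O(\eps+\ea)$ by \cref{lemma:f-approx-pure-basis}. I then replace $f(w(a))$ by $W(a)$ using $\norm{V(f(w(a))-W(a))}_\peq=\norm{f(w(a))-W(a)}_\peq$. This difference is $0$ for $W\in\{X,Z,G\}$ by definition of $f$, and it is $O(\eps+\ea)$ for $Y$ and $F$ by \cref{lemma:approx-repr-close-to-Y,lemma:approx-repr-close-to-F}.

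The remaining and main step is to show $\pi(w(a))V\approx ((\sigma_W(a)\oplus\overline\sigma_W(a))\ot\Lambda_W(a))V$. By \cref{lemma:clifford-representations}, $\pi(w(a))$ is block diagonal over all irreps, whereas the target only involves the quantum irreps. I would first show that $V\ket\psi$ has only $O(\eps+\ea)$ weight on the classical irreps. The route is through the central element: $\pi(\omega^2)$ equals $+\id$ on classical blocks and $-\id$ on quantum ones. Applying the same GH identity with $g=\omega^2$ gives $\norm{Vf(\omega^2)-\pi(\omega^2)V}^2=0$, since $f(\omega^2)=-\id$ multiplies exactly, so $\pi(\omega^2)V=-V$. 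This exactly kills the classical components. Within the quantum irreps with $\ell\in\{1,3\}$, each family is indexed by a sign string $k\in\zo^n$, with $g_j\mapsto(-1)^{k_j}\sigma_G(e_j)$ or its conjugate. So $\pi(g(a))=\sigma_G(a)\ot\sum_k(-1)^{a\cdot k}\ketbra kk$ on $\ell=1$ and the analogous expression with $\overline\sigma_G$ on $\ell=3$. Here $x,y,z$ are sign-free, with $\sigma_Y$ conjugated on $\ell=3$. For $\mathsf f=\omega^{|a|}gz$, the $\omega$ phase together with the $G$ signs produces $\sigma_F$ or $\overline\sigma_F$ with the same $\Lambda_F$.

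The main obstacle is bookkeeping. The conjugate block must really equal $\overline{\sigma}_W$: $\omega\mapsto -i$ gives $y\mapsto\overline{\sigma}_Y$ and $g\mapsto\pm\overline\sigma_G$, consistent with the $i^{\ell-1}$ in \cref{lemma:clifford-representations}. The multiplicity spaces and the $\ket\mu$ register from the QFT must also be arranged so that the $\ell=1$ and $\ell=3$ blocks line up as a direct sum with a common $\Lambda_W$ factor, which I would absorb by relabelling the tensor factors of $\H'$. Finally I would collect the constantly many errors by the triangle inequality.
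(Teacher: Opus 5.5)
Your proposal is correct and follows essentially the same route as the paper. You apply \cref{cor:gh-efficient} to the bound of \cref{lemma:f-approx-pure-basis}, use the exact identity $f(h)f(\omega^2)=f(h\omega^2)$ to get $\pi(\omega^2)V=-V$ and thereby confine the image of $V$ to the quantum irreps (the paper phrases this as rounding $\pi$ to $\hat\pi=\pi_-\otimes\id_2$, which is the same observation), read off $(\sigma_W(a)\oplus\overline{\sigma}_W(a))\otimes\Lambda_W(a)$ from the sign-indexed $\ell\in\{1,3\}$ blocks, and finally replace $f(y(a))$ and $f(\mathsf{f}(a))$ by $Y(a)$ and $F(a)$ via \cref{lemma:approx-repr-close-to-Y,lemma:approx-repr-close-to-F}.
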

    
    \begin{proof}
    Consider any prover strategy that succeeds with probability $1-\eps$ and choose an arbitrary $q\in\Qa$ and distribution $\mu$ over $\bits^n$.
    Let $f$ be as in \cref{def:prover-approx-rep}.
    From \cref{lemma:f-approx-pure-basis}, we get that for any $W \in \{X,Y,Z,G,F\}$,
    \begin{align*}
    \E_{a \sim \mu, h \in C_n} \| f(h) f(w(a)) - f(h \cdot w(a)) \|_\peq^2 \leq O(\eps+\ea) \,. 
    \end{align*}
    Any element of $C_n$ can be encoded in $3n+2$ qubits and group multiplication is efficient under this encoding. Moreover, $f$ is constructed from prover operators with classical post-processing, which enables an efficient implementation of the controlled-$f$ operation. As shown in \cref{subsec:QFT} the quantum Fourier transform over $C_n$ can be efficiently implemented and thus we can apply \cref{cor:gh-efficient} and conclude that under the isometry $V: \cH \to \cH'$ from \cref{def:clifford-isometry} it holds that
    \begin{align*}
    \E_{a\sim\mu} \norm{Vf(w(a)) -  (\pi(w(a))\ot\id) V}_\peq^2 \leq O(\eps+\ea) \,,
    \end{align*}
    where the identity tensor factor is $d|C_n|$-dimensional and $\pi: C_n\to U(\H')$ is the direct sum of all irreducible representations $\rho_\mu: C_n \to U(\H_\mu)$ of $C_n$ (with dimension $d_\mu$),
    \[
    \pi(g)=\bigoplus_\mu \rho_\mu(g)\ot\id_{d_\mu}.
    \]
    Furthermore, from the definition of $f$ and the equivalence guaranteed by \cref{cor:gh-efficient}, we know that
    \begin{equation}\label{eqn:pi-close-fundamental}
    \begin{aligned}
    0&=\E_{h \in C_n} \norm{f(h) f(-\id) - f(-h)}^2_\peq \\
    &= \norm{Vf(-\id) -  (\pi(-\id)\ot\id) V}_\peq^2 \\
    &= \norm{\id + \pi(-\id)\ot\id }^2_{V\peq V^\dag} \,.
    \end{aligned}
    \end{equation}
    where $V$ is the same isometry as before.
    \par 
    \medskip
    Finally, to show that $X(a)$, $Y(a)$, $Z(a)$, $G(a)$ and $F(a)$ are actually close to the corresponding Clifford operators (up to potential choice of faithful representation) after isometry, we need to further characterize the unitary representation $\pi$. Specifically, we will show that there exists a $\hat{\pi}$, such that $\hat{\pi}$ is of the form stated in the lemma, and
    \begin{align}
    \E_{a\sim\mu} \|V W(a) - (\hat{\pi}(w(a))\ot\id) V \|_\peq^2 \leq O(\eps+\ea).
    \end{align}
    \par 
    \medskip
    To this end, recall that $\pi$ is the block-diagonalization of the left regular representation of $C_n$. By the specific choice of ordering for the encoding basis (see \cref{subsec:QFT}), we enforce that the `classical' representations occupy the top block, so we can write $\pi(g) = \pi_+(g)\oplus\pi_-(g)$, where $\pi_\pm(g)$ are representations satisfying $\pi_\pm(-g) = \pm\pi_\pm(g)$ for all $g\in C_n$. I.e. 
    \[
    \pi(g) = \begin{pmatrix}
    \pi_+(g) & 0 \\
    0 & \pi_-(g)
    \end{pmatrix}.
    \]
    We will now round $\pi(g)$ to a representation that only maps the negative identity group-element to $-\id$, since this is exactly the unique algebraic property that distinguishes the classical from the fundamental irreps. We know that every irreducible representation appears in the left regular representation with a multiplicity that is equal to its dimension. From \cref{lemma:clifford-representations} we thus know that $\dim(\pi_{+}) = \dim(\pi_{-}) = 2^{3n+1}$. As such, we define
    \[
    \hat{\pi}(g) = \begin{pmatrix}
    \pi_-(g)& 0 \\
    0 & \pi_-(g)
    \end{pmatrix} = \pi_-(g)\otimes\id_{2}.
    \]
    For all $g\in C_n$,
    \begin{align*}
        \|(\pi(g)-\hat{\pi}(g))\ot\id\|_{V\psi^{\Enc(q)}V^\dag} &= \frac{1}{2}\|(\pi(g)-\hat{\pi}(g))\cdot(\id+\pi(-\id))\ot\id\|_{V\psi^{\Enc(q)}V^\dag}\\
        &\leq \frac{1}{2}\|\pi(g)-\hat{\pi}(g)\|\cdot \|(\id+\pi(-\id))\ot\id\|_{V\psi^{\Enc(q)}V^\dag}\\
        &\leq \frac{1}{2}\left(\|\pi(g)\| + \|\hat{\pi}(g)\|\right)\cdot \|(\id+\pi(-\id))\ot\id\|_{V\psi^{\Enc(q)}V^\dag}\\
        &=\|(\id+\pi(-\id))\ot\id\|_{V\psi^{\Enc(q)}V^\dag} = 0,
    \end{align*}
    where the first equality exploits the exact structure of the rounding (since only the top left block of the difference is non-zero). The second line follows from Lemma \ref{lemma:state-dependent-norm-properties} point ($\ref{prop:state-dependent-norm-properties,ii}$), the third line makes use of the triangle inequality for the Schatten-$\infty$ norm, the last line follows from the fact that both $\pi(g)$ and $\hat{\pi}(g)$ are representations and thus unitaries and the last equality uses \cref{eqn:pi-close-fundamental}. Since the state-dependent norm is non-negative we can conclude that both representations are equivalent, when restricted to the post-isometry state and thus
    \begin{align*}
    \E_{a\in\mu} \norm{Vf(w(a)) -  (\hat\pi(w(a))\ot\id) V}_\peq^2 \leq O(\eps+\ea) \,.\numberthis\label{eqn:GH-conclusion-EPBT}
    \end{align*}
    The unitary representation $\hat\pi$ is constructed from a direct sum of copies of the $2^{n+1}$ `quantum' irreducible representations of $C_n$. We know how these look and by the specific basis ordering chosen for the QFT, the canonical ones appear before the conjugate ones and we can write
    \[
    \E_{a\in\mu} \norm{Vf(w(a)) -  ((\sigma_W(a)\oplus\overline{\sigma}_W(a))\ot\Lambda_W(a)) V}_\peq^2 \leq O(\eps+\ea) \,,
    \]
    where $\Lambda_X(a)=\Lambda_Y(a)=\Lambda_Z(a)=\id$ and $\Lambda_F(a)=\Lambda_G(a)=\sum_{k\in\zo^n}(-1)^{a\cdot k}\ket{k}\bra{k}\ot\id$ for all $a\in\zo^n$.
    \par 
    \medskip
    It remains to show that $f(w(a))=W(a)$ for all $W\in\{X,Y,Z\}$ and $a\in\bits^n$. By \cref{def:prover-approx-rep} it is immediately clear that $f(x(a))=X(a)$, $f(z(a))=Z(a)$ and $f(g(a))=G(a)$ for all $a\in\zo^n$, however, since the prover's $Y$ operator is hidden inside $\Delta$, and similarly for $F$. Earlier we already proved the required relations (see \cref{lemma:approx-repr-close-to-Y} and \cref{lemma:approx-repr-close-to-F}), plugging this into \cref{eqn:GH-conclusion-EPBT} (with a triangle inequality), we can conclude that
    \begin{align*}
    \E_{a\sim\mu} \norm{ V W(a) -((\sigma_W(a)\oplus\overline{\sigma}_W(a))\ot\Lambda_W(a)) V}_\peq^2 \leq O(\eps+\ea) \,,\numberthis\label{eqn:EPBT-conc1}
    \end{align*}
    where the $\Lambda$ operators are defined as above. This concludes the proof.
    \end{proof}

\subsubsection{Clifford test}\label{section:corr}
The guarantee of \cref{lemma:EPBT-pure-basis} doesn't yet pin down the sign on the prover's $G$ and $F$ observables, which requires a test that certifies an algebraic relation that is not one of the group relations. As sketched in \cref{section:rigidity}, this test will use entanglement swapping to enforce correct pairwise correlations on the observables corresponding to the unencrypted interaction.

\paragraph*{Bell-basis test}\label{subsec:epr-analysis}
The first step is to convert entanglement shared between Alice and Bob into entanglement internal to Bob's register. To this end, Alice receives a specific question, which instructs her to measure a predefined pairing of her qubits in the Bell basis. The second subtest of \cref{protocol:cliff-test} ensures that Alice really performs the requested measurement, by using Bob to check her reported outcomes.

\begin{definition}[Brick-wall pairings]\label{def:brick-wall}
For $n$ qubits, define the two \textbf{brick-wall pairings}:
\[
  P_0 \coloneqq \bigl\{(2i-1,\, 2i)\bigr\}_{i\in[\floor{n/2}]},
  \qquad
  P_1 \coloneqq \bigl\{(2i,\, 2i+1)\bigr\}_{i\in[\floor{(n-1)/2}]}.
\]
\end{definition}

\begin{definition}[Pairwise Bell and Pauli projectors]\label{def:pair-projectors}
For a single qubit, define $\tau_W^u \coloneqq \frac{1}{2}(\id + (-1)^u \sigma_W)$ for $W\in\{X,Z,G,F\}$ and $u\in\zo$; for multi-bit superscripts, we mean the natural extension to tensor products of projectors.
For a pair of qubits $(j,k)$, define the single-qubit Bell-basis projector as
\[
  \Phi_{j,k}^{u,v}
  \coloneqq \paren{\sum_{l\oplus m = u}\left(\tau_{X}^l\right)_j \left(\tau_{X}^m\right)_k} \cdot \paren{\sum_{l\oplus m = v}\left(\tau_{Z}^l\right)_j \left(\tau_{Z}^m\right)_k}
  = \ketbra{\Phi^{uv}}{\Phi^{uv}}_{j,k}\,.
\]
For a pairing $P_b$ and label strings $\alpha,\beta\in\zo^{|P_b|}$, define
$\Phi_b^{(\alpha,\beta)} \coloneqq \bigotimes_{i\in[|P_b|]} \Phi_{(P_b)_i}^{\alpha_i,\beta_i}$, with identity on all un-paired qubits.
\end{definition}

\begin{lemma}\label{lemma:epr-bell}
For any computationally efficient prover modeled as in \Cref{section:modeling} that succeeds with probability $1-\eps$ in Items 1 and 2 of the compiled Clifford test, obtained from \Cref{protocol:cliff-test}, Alice's post-measurement state (after the isometry from \cref{def:clifford-isometry}) satisfies, for all $b\in\zo$:
\[
  \sum_{\alpha}
  \norm{
    \paren{\Phi_b^{\Dec(\alpha)}\otimes \id} - \id
  }_{V\pea{P_b} V^\dag}^2
  \leq O(\eps+\ea).
\]
\end{lemma}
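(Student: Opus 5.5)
The plan is to convert winning in the Bell-basis test (Item 2) into a statement about the post-isometry state on Alice's question $P_b$. Then I use the rigidity guarantee of \cref{lemma:EPBT-pure-basis} for $X(a)$ and $Z(a)$, which comes from Item 1, to replace Bob's observables by the Pauli operators $\sigma_X(a),\sigma_Z(a)$ after the isometry. The Bell projector $\Phi_b^{(v_0,v_1)}$ is, by \cref{def:pair-projectors}, a product of the parity projectors of $\sigma_X\otimes\sigma_X$ and $\sigma_Z\otimes\sigma_Z$ on each pair. Both the canonical and the conjugate quantum irreps send $x,z$ to real matrices, so on $\sigma_W(a)\oplus\overline{\sigma}_W(a)$ with $W\in\{X,Z\}$ there is no conjugation ambiguity: $\sigma_W(a)\oplus\overline\sigma_W(a)=\sigma_W(a)\otimes\id_2$. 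This lets me write $\Phi_b^{(v_0,v_1)}\otimes\id$ as a Fourier average of $\sigma_X(a)\sigma_Z(a')$ over strings $a,a'$ supported on pair-indicator vectors.

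\textbf{Step 1 (winning condition as an observable statement).} For $c\in\zo$ and a string $s\in\zo^{|P_b|}$, let $a(s)\in\zo^n$ be the string with ones on both qubits of every pair $i$ with $s_i=1$. Then $u\cdot a(s)=\sum_i s_i(u_j\oplus u_k)$. Success $1-O(\eps)$ in Item 2 gives, exactly as in \cref{corollary}-style Fourier manipulations (cf.\ \cref{cor:self-consistent-equality}),
\[
\E_{c}\E_{s}\sum_\alpha(-1)^{s\cdot (\Dec(\alpha))_c}\Tr{W_c(a(s))\pea{P_b}}\ge 1-O(\eps),\qquad W_0=X,\ W_1=Z .
\]
Equivalently, $\E_{c,s}\sum_\alpha\norm{W_c(a(s))-(-1)^{s\cdot(\Dec(\alpha))_c}\id}^2_{\pea{P_b}}\le O(\eps)$.

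\textbf{Step 2 (move through the isometry).} I apply \cref{lemma:EPBT-pure-basis}, with $\mu$ the distribution of $a(s)$, to get $VW_c(a(s))\approx(\sigma_{W_c}(a(s))\otimes\id)V$ on $\psi^{\Enc(q)}$ for any $q\in\Qa$. That bound is stated for the marginal $\psi^{\Enc(q)}$, which is what I need, so I take $q=P_b$ and use $\sum_\alpha\norm{\cdot}^2_{\pea{P_b}}=\norm{\cdot}^2_{\pe{P_b}}$ (no $\alpha$-dependence on the operator side). Since $V$ is an isometry, a triangle inequality gives
\[
\E_{c,s}\sum_\alpha\norm{\sigma_{W_c}(a(s))\otimes\id-(-1)^{s\cdot(\Dec(\alpha))_c}\id}^2_{V\pea{P_b}V^\dag}\le O(\eps+\ea).
\]

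\textbf{Step 3 (reassemble the projector).} For fixed $\alpha$ with $\Dec(\alpha)=(v_0,v_1)$, I write $\Pi_X^{v_0}=\E_s(-1)^{s\cdot v_0}\sigma_X(a(s))$ and similarly $\Pi_Z^{v_1}$, so that $\Phi_b^{(v_0,v_1)}=\Pi_X^{v_0}\Pi_Z^{v_1}$ with commuting factors. Then
\[
\id-\Phi_b^{(v_0,v_1)}=(\id-\Pi_X^{v_0})+\Pi_X^{v_0}(\id-\Pi_Z^{v_1}),
\]
and $\norm{\id-\Pi_X^{v_0}}_\rho^2\le\E_s\norm{\id-(-1)^{s\cdot v_0}\sigma_X(a(s))}^2_\rho$ by convexity, \cref{lemma:state-dependent-norm-properties}(\ref{prop:state-dependent-norm-properties,vi}). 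The same bound holds for the $Z$ term, with the contraction $\norm{\Pi_X^{v_0}}\le1$. Summing over $\alpha$ and using Step 2 gives the claim.

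\textbf{Main obstacle.} I expect the main obstacle to be Step 2. \Cref{lemma:EPBT-pure-basis} requires success in the Clifford group test, but here the assumption is only success in Items 1 and 2 jointly, so I will note that success in Item 1 is $1-O(\eps)$. The other concern is that the GH-error is stated under expectation over $a\sim\mu$ with $\mu$ the pushforward of uniform $s$. This is permitted since the lemma holds for all distributions $\mu$, and the single cryptographically small function is handled as in \cref{rem:non-uniformity}.
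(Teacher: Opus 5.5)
Your proof is correct and follows essentially the same route as the paper. You rewrite the Bell-basis winning condition with the character identity over pair-indicator strings (your $a(s)$, the paper's $h_b(s)$), push $X$ and $Z$ through the isometry using \cref{lemma:EPBT-pure-basis} with $\mu$ the law of $h_b(s)$ and $q=P_b$, and reassemble $\Phi_b^{\Dec(\alpha)}$ as the product of the $XX$- and $ZZ$-parity projectors via convexity and a triangle inequality. The only difference is that you push individual observables through $V$ before Fourier-averaging, whereas the paper first aggregates into the parity projectors $P^v_{W_c,b}$ and $\tilde\tau^v_{W_c,b}$ and then pushes the aggregate through $V$; the two orderings are equivalent.
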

\begin{proof}
Let $W=(X,Z)$ as in the protocol specification. 
Fix $b\in\zo$ and let $f_b:\zo^n\to\zo^{|P_b|}$ map an outcome string to its pairwise parities according to the corresponding brick-wall pairing, i.e.\ 
\[
f_b(u)_i=u_{2i-1+b}\oplus u_{2i+b}
\]
for $i\in[|P_b|]$. We also define $h_b:\zo^{|P_b|}\to\zo^n$ for mapping into pair-consistent, $n$-bit strings as
\[
  h_b(s)_{2i-1+b}=h_b(s)_{2i+b}=s_i
\]
for every $(2i-1+b,2i+b)\in P_b$, and set all unpaired coordinates to zero. Notice that: $s\cdot f_b(u)=h_b(s)\cdot u$ for all $s\in\zo^{|P_b|}$ and $u\in\zo^n$.
For $v\in\zo^{|P_b|}$ and $c\in\zo$, define the projectors
\[
  P_{W_c,b}^v\coloneqq\sum_{\substack{u\in\zo^n\\f_b(u)=v}}W_c^u
  \qquad\text{and}\qquad
  \tilde\tau^v_{W_c,b}\coloneqq\sum_{\substack{u\in\zo^n\\f_b(u)=v}}\tau^u_{W_c}.
\]
The left projector can be used to characterize the prover's winning probability. For a prover to pass the Bell-basis test, the parities of Bob's reported outcomes ($u\in\zo^n$) need to match pair-wise measurement outcome string reported by Alice ($v_c=\Dec(\alpha)_c\in\zo^{|P_b|}$). We want to transition from projectors to binary observables, so the following character identity will be useful:
\[
  \ind{f_b(u)=v}
  =\E_{s\in\zo^{|P_b|}}(-1)^{s\cdot(f_b(u)\oplus v)}.
\]
Using the above with $s\cdot f_b(u)=h_b(s)\cdot u$ and the observable expressions $W_c(a)$ and $\sigma_{W_c}(a)$, we obtain
\begin{align*}
  P_{W_c,b}^v
  &=\sum_u\E_{s\in\zo^{|P_b|}}
    (-1)^{s\cdot(f_b(u)\oplus v)}W_c^u\\
  &=\E_{s\in\zo^{|P_b|}}(-1)^{v\cdot s}
    \sum_u(-1)^{h_b(s)\cdot u}W_c^u\\
  &=\E_{s\in\zo^{|P_b|}}(-1)^{v\cdot s}W_c(h_b(s)),\\
\end{align*}
and similarly
\begin{align*}
  \tilde\tau_{W_c,b}^v=\E_{s\in\zo^{|P_b|}}(-1)^{v\cdot s}\sigma_{W_c}(h_b(s)).
\end{align*}

The two Fourier expressions above (linking projectors to observables) give, for every $\alpha$,
\begin{align*}
  &VP_{W_c,b}^{\Dec(\alpha)_c}
  -(\tilde\tau_{W_c,b}^{\Dec(\alpha)_c}\otimes\id)V=\E_{s\in\zo^{|P_b|}}(-1)^{\Dec(\alpha)_c\cdot s}
  \left(VW_c(h_b(s))-(\sigma_{W_c}(h_b(s))\otimes\id)V\right).
\end{align*}
Applying \cref{lemma:state-dependent-norm-properties}(\ref{prop:state-dependent-norm-properties,vi}), followed by linearity in the state, we obtain
\begin{align*}
  \sum_\alpha&
  \norm{VP_{W_c,b}^{\Dec(\alpha)_c}
  -(\tilde\tau_{W_c,b}^{\Dec(\alpha)_c}\otimes\id)V}_{\pea{P_b}}^2\\
  &\leq \E_{s\in\zo^{|P_b|}}\sum_\alpha
  \norm{VW_c(h_b(s))-(\sigma_{W_c}(h_b(s))\otimes\id)V}_{\pea{P_b}}^2\\
  &=\E_{s\in\zo^{|P_b|}}
  \norm{VW_c(h_b(s))-(\sigma_{W_c}(h_b(s))\otimes\id)V}_{\pe{P_b}}^2\\
  &\leq O(\eps+\ea),
\end{align*}
where the last inequality follows from \cref{lemma:EPBT-pure-basis} by taking $q=P_b$ and choosing $\mu$ to be the distribution of $h_b(s)$ for uniform $s\in\zo^{|P_b|}$. This allows us to push the aggregated Bob projector through the isometry.
\par
\medskip
Since the Bell-basis test succeeds with probability $1-O(\eps)$, we have for all $c\in\zo$
\begin{equation}\label{eqn:epr-succ}
  \sum_{\alpha}\Tr\bigl[P_{W_c,b}^{\Dec(\alpha)_c}\,\pea{P_b}\bigr] \geq 1-O(\eps).
\end{equation}

Since $P_{W_c,b}^v$ is a projector for every $v$, \cref{eqn:epr-succ} implies
\[
  \sum_\alpha\norm{P_{W_c,b}^{\Dec(\alpha)_c}-\id}_{\pea{P_b}}^2
  =1-\sum_\alpha\Tr\bigl[P_{W_c,b}^{\Dec(\alpha)_c}\pea{P_b}\bigr]
  \leq O(\eps).
\]
Combining the previous two bounds with a triangle inequality and using that $V$ is an isometry yields, for all $c\in\zo$,
\[
\sum_\alpha\norm{
    (\tilde\tau^{\Dec(\alpha)_c}_{W_c,b} \otimes\id) - \id}_{V\pea{P_b}V^\dag}^2\leq O(\eps+\ea).\numberthis\label{eqn:partial-bell-projector}
\]
By the tensor product structure (everything factorizes into pairs) we can easily verify that
\[
\tilde\tau^{\Dec(\alpha)_0}_{X,b}\tilde\tau^{\Dec(\alpha)_1}_{Z,b}=\Phi_b^{\Dec(\alpha)}.\numberthis\label{eqn:bell-projector-observation}
\]
Concretely, for each pair $(j,k)\in P_b$, one checks that
\[
  \left(\tilde\tau^{u}_{X,b}\right)_j\left(\tilde\tau^{v}_{Z,b}\right)_k = \paren{\sum_{l\oplus m=u}
\left(\tau_{X}^{l}\right)_j\left(\tau_{X}^{m}\right)_k} \cdot \paren{\sum_{l\oplus m=v}\left(\tau_{Z}^{l}\right)_j\left(\tau_{Z}^{m}\right)_k}=\Phi_{j,k}^{u,v},
\]
and thus the full tensor product of all pairs yields $\tilde\tau_{X,b}^\alpha\cdot\tilde\tau_{Z,b}^\beta = \Phi_b^{(\alpha,\beta)}$, a
multi-qubit Bell-basis projection with pairing $P_b$. We can now use the two cases of \cref{eqn:partial-bell-projector} with \cref{eqn:bell-projector-observation} to prove the lemma statement
\begin{align*}
    \sum_{\alpha}&
  \norm{
    \paren{\Phi_b^{\Dec(\alpha)}\otimes \id} - \id
  }_{V\pea{P_b} V^\dag}^2\\
  &= \sum_{\alpha}
  \norm{
    \paren{\tilde\tau^{\Dec(\alpha)_0}_{X,b}\tilde\tau^{\Dec(\alpha)_1}_{Z,b}\otimes \id} - \id
  }_{V\pea{P_b} V^\dag}^2\\
  &\leq 2\sum_{\alpha}
  \norm{
    \paren{\tilde\tau^{\Dec(\alpha)_0}_{X,b}\ot\id}\paren{\paren{\tilde\tau^{\Dec(\alpha)_1}_{Z,b}\otimes \id} - \id}
  }_{V\pea{P_b} V^\dag}^2 \\
  &\quad+ 2\sum_{\alpha}
  \norm{
   \paren{\tilde\tau^{\Dec(\alpha)_0}_{X,b}\otimes \id} - \id
  }_{V\pea{P_b} V^\dag}^2\\
  &\leq 2\sum_{\alpha}
  \norm{
    \tilde\tau^{\Dec(\alpha)_0}_{X,b}\ot\id}^2_\infty\norm{\paren{\tilde\tau^{\Dec(\alpha)_1}_{Z,b}\otimes \id} - \id
  }_{V\pea{P_b} V^\dag}^2\\
  &\quad + 2\sum_{\alpha}
  \norm{
   \paren{\tilde\tau^{\Dec(\alpha)_0}_{X,b}\otimes \id} - \id
  }_{V\pea{P_b} V^\dag}^2\\
  &\leq O(\eps+\ea).
\end{align*}
\end{proof}

\paragraph*{Sign consistency test}\label{subsec:corr-analysis}
We will now analyze the sign consistency subtest (\cref{protocol:cliff-test} item 3), which ensures that Bob's reported outcomes are consistent with the underlying entanglement on his side, which Alice generated by collapsing her qubits pairwise into one of four Bell states.
\par
\medskip
Because the correlations either remain in the same observable ($G$) or cross observables ($G$ and $F$), we have two consistency subtests. We will first analyze the single-observable case.
\par 
\medskip
For ease of analysis we define the following set
\begin{definition}[Accepted answer sets]\label{def:accepted-answer-set}
For a decrypted Alice answer $(u,v)$ (for the Bell measurement question) and pairing index $b\in\zo$, we define
\[
  S_{(u,v),b} \coloneqq
  \begin{cases}
    \bigl\{a\in\zo^n : \forall (j,k)\in P_b,\;
           v_{\ceil{j/2}} = 1 \implies a_j\oplus a_k = u_{\ceil{j/2}}\bigr\}
    & \text{if } v \neq 0^{|P_b|},\\
    \zo^n & \text{if } v = 0^{|P_b|}.
  \end{cases}
\]
and
\[
  R_{(u,v),b} \coloneqq
  \begin{cases}
    \bigl\{a\in\zo^n : \forall (j,k)\in P_b,\;
           v_{\ceil{j/2}} = 0 \implies a_j\oplus a_k \neq u_{\ceil{j/2}}\bigr\}
    & \text{if } v \neq 1^{|P_b|},\\
    \zo^n & \text{if } v = 1^{|P_b|}.
  \end{cases}
\]
\end{definition}
Here $v$ serves as mask, which indicates on which pairs of the chosen pairing ($P_0$ or $P_1$, depending on $b$) a specific parity constraint is checked. The set contains only bit-strings which satisfy the parity constraints on all pairs that are checked and if the mask is trivial, the set contains all possible bit-strings. To justify this choice of sets we have to look at the behavior of our Clifford operators on the four Bell states (by \cref{eqn:transpose-trick}).
\begin{equation}\label{eqn:cliff-bell-correlations}
\begin{aligned}
-\sigma_G\otimes\sigma_F\ket{\Phi^{00}} = \ket{\Phi^{00}},\quad\sigma_G\otimes\sigma_F\ket{\Phi^{10}} = \ket{\Phi^{10}},\\
\quad\sigma_G\otimes\sigma_G\ket{\Phi^{01}} = \ket{\Phi^{01}},\quad -\sigma_G\otimes\sigma_G\ket{\Phi^{11}} = \ket{\Phi^{11}}.
\end{aligned}
\end{equation}
From there the pattern is clear: if the $ZZ$ outcome (corresponding to $v$) is 1, we are in the single-observable case and the parity of the pair directly corresponds to the $XX$ outcome. If the $ZZ$ outcome is 0, we are in the mixed-observable case and the parity of the pair is opposite of  the $XX$ outcome. The $S_{(u,v),b}$ set encodes these constraints in the single-observable case and the $R_{(u,v),b}$ set does the same in the mixed-observable setting.

\begin{definition}[$G$ and $GF$ pair projectors]\label{def:pair-projectors-G-GF}
For a decrypted Alice answer $(u,v)$ (for the Bell measurement question) and pairing index $b\in\zo$, we define
\[
  \tilde\tau^{(u,v)}_{G,b}=\sum_{a\in S_{(u,v),b}}\sum_{k\in\zo^n}\paren{\tau_G^{a\oplus k}\oplus\overline{\tau}_G^{a\oplus k}}\ot\ket{k}\bra{k}\ot\id
\]
and
\begin{align*}
  \tilde\tau^{(u,v)}_{GF,b}=\sum_{a\in R_{(u,v),b}}\sum_{k\in\zo^n}\paren{\tau_{GF}^{a\oplus k}\oplus\overline{\tau}_{GF}^{a\oplus k}}\ot\ket{k}\bra{k}\ot\id&
\end{align*}
where 
\[
\tau_{GF}^{a} = \prod_{i\in I}\left(\tau_{G}^{a_i}\right)_i\cdot\prod_{j\in J}\left(\tau_{F}^{a_j}\right)_j
\]
with $I=\{2i-1 \,|\, i\in[\floor{(n+1)/2}]\}$ and $J=\{2i \,|\, i\in[\floor{n/2}]\}$.
\end{definition}

\begin{lemma}\label{lemma:pair-projector-eigenstates}
For any computationally efficient prover modeled as in \Cref{section:modeling} that succeeds with probability $1-\eps$ in Items 1 to 4 of the compiled Clifford test, obtained from \Cref{protocol:cliff-test}, it holds that for all $b\in\zo$:
\[
  \sum_{\alpha}
  \norm{
    \tilde\tau^{\Dec(\alpha)}_{G,b} - \id
  }_{V\pea{P_b} V^\dag}^2
  \leq O(\eps+\ea),
  \]
  and
  \[
  \sum_{\alpha}
  \norm{
    \tilde\tau^{\Dec(\alpha)}_{GF,b} - \id
  }_{V\pea{P_b} V^\dag}^2
  \leq O(\eps+\ea),
\]
where $V$ is the isometry from \cref{def:clifford-isometry}.
\end{lemma}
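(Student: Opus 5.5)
The plan is to mirror the proof of \cref{lemma:epr-bell}, with the sign consistency subtest (Item 3) playing the role of the Bell-basis test and the mixed-versus-pure basis test (Item 4) used to relate the mixed question $W_1$ to pure-basis observables. I treat the two claims in parallel: $c=0$ (Bob asked $G$) gives the $\tilde\tau_{G,b}$ bound, and $c=1$ (Bob asked $W_1\in\{G,F\}^n$) gives the $\tilde\tau_{GF,b}$ bound.

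\textbf{Step 1: turn the acceptance condition into a projector bound before the isometry.} Reading the rejection rule, Alice's answer $(u,v)=\Dec(\alpha)$ together with $c$ constrains only pairs with $v_{\ceil{j/2}}\neq c$. For $c=0$ these are the pairs with $v=1$ (the set $S$); for $c=1$ they are the pairs with $v=0$ (the set $R$). I take the accepted parity constraints exactly as encoded in $S$ and $R$ in \cref{def:accepted-answer-set}. Define the aggregated Bob projectors
\[
P^{(u,v)}_{c,b}\coloneqq\sum_{a\in S\text{ or }R}(W_c)^a .
\]
Success $1-O(\eps)$ then gives
\[
\sum_\alpha\norm{P^{\Dec(\alpha)}_{c,b}-\id}^2_{\pea{P_b}}\leq O(\eps),
\]
exactly as in \cref{eqn:epr-succ}.

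\textbf{Step 2: Fourier-expand the aggregated projector.} Each constraint set is an affine parity condition on a subset of pairs. Using the character identity from \cref{lemma:epr-bell}, I write
\[
P^{(u,v)}_{c,b}=\E_{s}(-1)^{u\cdot s}\,W_c(h_b(s\wedge m)),
\]
where the mask $m$ depends on $v$ and $c$, and the flip in the $R$ case adds a sign $(-1)^{|s\wedge m|}$. The same expansion applied to $\tilde\tau_{G,b}$ and $\tilde\tau_{GF,b}$ produces $(\sigma_W\oplus\overline\sigma_W)(h_b(\cdot))\ot\Lambda$ terms, where $\Lambda_G(a)=\sum_k(-1)^{a\cdot k}\ketbra{k}{k}$. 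This $\Lambda$ is exactly what the $k$-shift $\tau^{a\oplus k}$ in \cref{def:pair-projectors-G-GF} absorbs.

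\textbf{Step 3: push through $V$.} For $c=0$, \cref{lemma:EPBT-pure-basis} applied with $W=G$, $q=P_b$, and $\mu$ the law of $h_b(s\wedge m)$ gives $VP\approx\tilde\tau V$. Two subtleties arise here. First, $\mu$ depends on $\alpha$ through the mask. I handle this by noting that \cref{lemma:EPBT-pure-basis} holds for all distributions, and by conditioning on $\alpha$ via the non-uniform advice trick of \cref{rem:non-uniformity}, or by bounding the sum over masks by summing over $\alpha$ directly, since the bound is linear in the state. Second, on each pair $h_b$ places weight on both qubits, so $\Lambda_G(h_b(s))$ depends only on the parities $k_j\oplus k_k$. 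This dependence is what makes a sign flip on a single qubit change the accepted set.

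For $c=1$, I first use \cref{lemma:mixed-vs-pure-guarantee} from Item 4 (point distribution on $W_1$) to replace $W_1(a)$ by $G(a|_I)F(a|_J)$, after which I apply \cref{lemma:EPBT-pure-basis} to $G$ and $F$ separately. Self-consistency together with \cref{lemma:compiled-prover-switching} lets me split the product. For $F$ the block sign also appears as $\Lambda_F=\Lambda_G$, so the combined $\tau_{GF}^{a\oplus k}$ structure emerges.

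\textbf{Step 4: conclude.} With these in hand, the triangle inequality as in \cref{lemma:epr-bell} yields $\sum_\alpha\norm{\tilde\tau-\id}^2_{V\pea{P_b}V^\dag}\leq O(\eps+\ea)$.

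\textbf{Main obstacle.} I expect the hardest step to be verifying that the $\alpha$-dependent Fourier expansions match $\tilde\tau$ exactly, including the complex-conjugate block, where $\overline\sigma_G=-\sigma_F$ could flip the accepted parity. My first check is that the parity constraints involve two-qubit products $\sigma_G\ot\sigma_G$ and $\sigma_G\ot\sigma_F$, which are invariant under conjugation up to the paired signs $\overline\sigma_G\ot\overline\sigma_G=\sigma_F\ot\sigma_F$. If this fails, the $\overline\tau$ block in the definition is what compensates, and I would verify it pair by pair as in \cref{eqn:cliff-bell-correlations}.
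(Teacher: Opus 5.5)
\paragraph{Assessment.} Your Steps 1 and 4 match the paper. So does the reduction of the mixed question $W_1$ to $G$ and $F$ via \cref{lemma:mixed-vs-pure-guarantee} and self-consistency. The gap is in Steps 2--3, and the fixes you suggest for it do not work.

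\paragraph{The gap.} The mask $m$ is a function of $v=\Dec(\alpha)_1$. After the Jensen step you must therefore bound $\sum_\alpha \E_s \norm{Y_{m(\alpha)}(s)}^2_{\pea{P_b}}$, where $Y_m(s)\coloneqq VW_c(h_b(s\wedge m))-N(h_b(s\wedge m))V$ and $N$ is the post-isometry operator.
\begin{itemize}
\item \cref{lemma:EPBT-pure-basis} only bounds $\E_{a\sim\mu}\norm{\cdot}^2_{\pe{P_b}}$ for a \emph{fixed} $\mu$ on the \emph{marginal} state. It says nothing about the sub-normalized pieces $\sum_{\alpha:\,m(\alpha)=m}\pea{P_b}$. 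These are defined through decryption, so state switching cannot reach them.
\item Linearity in the state recombines $\sum_\alpha\norm{Y}^2_{\pea{P_b}}=\norm{Y}^2_{\pe{P_b}}$ only when $Y$ does not depend on $\alpha$. Otherwise you must sum the fixed-mask bounds over all $2^{|P_b|}$ masks, which destroys constant robustness.
\item The advice trick of \cref{rem:non-uniformity} is irrelevant here. It merges cryptographically small functions indexed by classical parameters; it cannot condition the prover's state on its own encrypted output.
\item The step you flagged as the main obstacle is not one. $\tilde\tau^{(u,v)}_{G,b}$ is by definition the sum, over the accepted set, of the Fourier transforms of the post-isometry operators. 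The match, including the conjugate block, is therefore automatic.
\end{itemize}

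\paragraph{The missing idea.} Do not aggregate outcomes before pushing through $V$.
\begin{enumerate}
\item Apply \cref{cor:parseval} at full $n$-bit resolution. With $\tilde\tau^a_G\coloneqq\sum_k(\tau_G^{a\oplus k}\oplus\overline\tau_G^{a\oplus k})\ot\ketbra{k}{k}\ot\id$, this gives the $\alpha$-independent bound $\sum_{a\in\zo^n}\norm{VG^a-\tilde\tau^a_G V}^2_{\pe{P_b}}\leq O(\eps+\ea)$, and the analogue for $W_1$ with $\tilde\tau^a_{GF}$.
\item Since $\{\tilde\tau^a_G\}_a$ is a PVM, $\norm{\tilde\tau^{\Dec(\alpha)}_{G,b}-\id}^2_{V\pea{P_b}V^\dag}=\sum_{a\notin S_{\Dec(\alpha),b}}\norm{\tilde\tau^a_G V}^2_{\pea{P_b}}$.
\item Use the per-outcome triangle inequality $\norm{\tilde\tau^a_G V}^2\leq 2\norm{\tilde\tau^a_G V-VG^a}^2+2\norm{VG^a}^2$. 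The $\alpha$-dependent set now only restricts a sum of nonnegative terms.
\item Extend the first sum to all $a$. Summing over $\alpha$ then yields the marginal state, and step 1 bounds it.
\item Summed over $\alpha$, the second sum equals one minus the acceptance probability of the sign consistency test, which is $O(\eps)$.
\item The same argument with $R_{\Dec(\alpha),b}$ and $\tilde\tau^a_{GF}$ gives the $GF$ claim.
\end{enumerate}
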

\begin{proof}
Fix $b\in\zo$. From success in the Clifford test and \cref{lemma:EPBT-pure-basis}, taking $q=P_b$, we have for all distributions $\mu$ over $\zo^n$
\[
  \E_{a\sim\mu} \norm{VG(a) - (\sigma_{G}(a)\oplus\overline{\sigma}_{G}(a))\otimes\Lambda_{G}(a))V}_{\pe{P_b}}^2 \leq O(\eps+\ea),
\]
and similarly for $F(a)$, where $\Lambda_F(a)=\Lambda_G(a)=\sum_k (-1)^{a\cdot k}\ket{k}\bra{k}\ot\id$.
Define
\[
  \tilde\tau_G^a\coloneqq
  \sum_k(\tau_{G}^{a\oplus k}\oplus\overline{\tau}_{G}^{a\oplus k})
  \otimes\ket{k}\bra{k}\ot\id.
\]
By Parseval (\cref{cor:parseval} with $\mu$ a uniform expectation over $a\in\zo^n$):
\begin{equation}\label{eqn:parseval-g}
  \sum_a \norm{V G^a - \tilde\tau_G^aV}_{\pe{P_b}}^2 \leq O(\eps+\ea).
\end{equation}
By success in the pure-vs-mixed basis test, we know from \cref{lemma:mixed-vs-pure-guarantee} that for any distribution $\mu$ on $\zo^n$
\[
\E_{a\sim\mu}\norm{\tilde W(a)-G(\exten{a}{I})F(\exten{a}{J})}^2_{\pe{P_b}}\leq O(\eps)+\ea,
\]
where $I\subset[n]$ is the set of odd indices, $J\subset [n]$ is the set of even indices and we define $\tilde W\in\{G,F\}^n$ as the string of alternating $G$ and $F$ symbols, starting with $G$. Chaining this with the Clifford test guarantee and the automatic self-consistency of $F$, a routine calculation yields
\[
\E_{a\sim\mu} \norm{V\tilde W(a) - ((\sigma_{G}(\exten{a}{I})\sigma_{F}(\exten{a}{J})\oplus\overline{\sigma}_{G}(\exten{a}{I})\overline{\sigma}_{F}(\exten{a}{J}))\otimes\Lambda_{G}(a))V}_{\pe{P_b}}^2 \leq O(\eps+\ea).
\]
Let $P^a=\prod_{i\in I}\tau_{G,i}^{a_i}\cdot\prod_{j\in J}\tau_{F,j}^{a_j}$ and define
\[
  \tilde\tau_{GF}^a\coloneqq
  \sum_k(P^{a\oplus k}\oplus\overline{P}^{a\oplus k})
  \otimes\ket{k}\bra{k}\ot\id.
\]
Applying Parseval again gives
\[
\sum_{a}\norm{V\tilde W^a - \tilde\tau_{GF}^aV}_{\pe{P_b}}^2
\leq O(\eps+\ea).\numberthis\label{eqn:parseval-gf}
\]
The families $\{\tilde\tau_G^a\}_{a\in\zo^n}$ and $\{\tilde\tau_{GF}^a\}_{a\in\zo^n}$ are PVMs. For the first family this follows from the fact that $\{\tau_G^a\}_a$ is a PVM. For the second family, it follows from the fact that $\{P^a\}_a$ is a PVM (the $G$ and $F$ projectors act on disjoint qubits).
\par 
\medskip

Since the sign consistency test succeeds with probability $1-O(\eps)$, we have
\begin{equation}\label{eqn:consistency-succ}
  \sum_{\alpha}\sum_{a\in S_{\Dec(\alpha),b}}\Tr\bigl[G^{a}\, \pea{P_b}\bigr] \geq 1-O(\eps),\quad\text{ and }\quad  \sum_{\alpha}\sum_{a\in R_{\Dec(\alpha),b}}\Tr\bigl[\tilde W^{a}\, \pea{P_b}\bigr] \geq 1-O(\eps),
\end{equation}
where $S_{(u,v),b}$ and $R_{(u,v),b}$ are defined as in \cref{def:accepted-answer-set}. 
\par
\medskip
We will now transition to bounding the support of the state on the complementary, rejected, outcomes instead of summing the distance to the identity over all accepted outcomes. By \cref{def:pair-projectors-G-GF},
\[
  \tilde\tau_{G,b}^{\Dec(\alpha)}
  =\sum_{a\in S_{\Dec(\alpha),b}}\tilde\tau_G^a.
\]
Since $\{\tilde\tau_G^a\}_a$ is a PVM, orthogonality and completeness give
\begin{align*}
  \sum_\alpha\norm{\tilde\tau_{G,b}^{\Dec(\alpha)}-\id}_{V\pea{P_b}V^\dag}^2
  &=\sum_\alpha\sum_{a\notin S_{\Dec(\alpha),b}}
    \norm{\tilde\tau_G^aV}_{\pea{P_b}}^2\\
  &\leq 2\sum_\alpha\sum_{a\notin S_{\Dec(\alpha),b}}
    \norm{\tilde\tau_G^aV-VG^a}_{\pea{P_b}}^2
    +2\sum_\alpha\sum_{a\notin S_{\Dec(\alpha),b}}
    \norm{VG^a}_{\pea{P_b}}^2\\
  &\leq 2\sum_a\norm{\tilde\tau_G^aV-VG^a}_{\pe{P_b}}^2\\
  &\quad+2\left(1-\sum_\alpha\sum_{a\in S_{\Dec(\alpha),b}}
    \Tr\bigl[G^a\pea{P_b}\bigr]\right)\\
  &\leq O(\eps+\ea).
\end{align*}
Here the second-to-last inequality uses linearity in the state, $V^\dag V=\id$, and the fact that the summands are nonnegative; the final inequality follows from \cref{eqn:parseval-g,eqn:consistency-succ}. This proves the first claim.
\par
\medskip
For the mixed-observable case, we analogously have by \cref{def:pair-projectors-G-GF}
\[
\tilde\tau_{GF,b}^{\Dec(\alpha)}
=\sum_{a\in R_{\Dec(\alpha),b}}\tilde\tau_{GF}^a.
\]
Using the complementary outcomes of the PVM $\{\tilde\tau_{GF}^a\}_a$ gives
\begin{align*}
  \sum_\alpha\norm{\tilde\tau_{GF,b}^{\Dec(\alpha)}-\id}_{V\pea{P_b}V^\dag}^2
  &=\sum_\alpha\sum_{a\notin R_{\Dec(\alpha),b}}
    \norm{\tilde\tau_{GF}^aV}_{\pea{P_b}}^2\\
  &\leq 2\sum_\alpha\sum_{a\notin R_{\Dec(\alpha),b}}
    \norm{\tilde\tau_{GF}^aV-V\tilde W^a}_{\pea{P_b}}^2
    +2\sum_\alpha\sum_{a\notin R_{\Dec(\alpha),b}}
    \norm{V\tilde W^a}_{\pea{P_b}}^2\\
  &\leq 2\sum_a\norm{\tilde\tau_{GF}^aV-V\tilde W^a}_{\pe{P_b}}^2\\
  &\quad+2\left(1-\sum_\alpha\sum_{a\in R_{\Dec(\alpha),b}}
    \Tr\bigl[\tilde W^a\pea{P_b}\bigr]\right)\\
  &\leq O(\eps+\ea),
\end{align*}
where the last line follows from \cref{eqn:parseval-gf,eqn:consistency-succ}. This proves the second claim.
\end{proof}

\paragraph*{Sign ambiguity characterization}
It remains to fix the remaining global sign choice on the $F$ and $G$ observables to $+$, which is the whole reason for having the correlation test. The design was informed by the following observation: by performing Bell basis measurements on her qubits, Alice performs entanglement swapping, transforming the inter-prover entanglement into intra-prover entanglement on Bob's side, if we consider the nonlocal view for a moment. 
\par
\medskip
From Alice's reported measurement outcomes, the verifier knows which Bell state every pair on Bob's side collapsed into. This information is crucial, because by \cref{eqn:cliff-bell-correlations} the verifier knows exactly which correlations should be present if Bob, for example, performs a $G$ measurement on every qubit separately. If within a checked pair Bob would be using $-\sigma_G$ and $\sigma_G$, the verifier could detect this through a correlation mismatch. This is also why we need to use the brick-wall pairings, since with one type of pairing alone, Bob could still use inconsistent sign choices across distinct pairs, testing these staggered pairings allows the verifier to force Bob to use a consistent sign choice for all of its $G$ observables, either all $+$ or all $-$.
\par 
\medskip
The two lemmas above are critical in concluding that the prover has to be consistent in his sign choice. The first tells us that Alice's post-measurement state is really projected into a Bell basis eigenstate; the second identifies which operations Bob implements when asked to perform an all $G$ or a mixed $G$ and $F$ measurement, and that he passes the verifier's consistency checks. Combining these facts, we can use \Cref{eqn:cliff-bell-correlations} to extract information about the sign choice of the prover.

\begin{lemma}\label{lemma:G-factor}
For any $b\in\zo$ and $(u,v)\in\zo^{|P_b|}\times\zo^{|P_b|}$,
\[
  \tilde\tau_{G,b}^{(u,v)}\bigl(\Phi_b^{(u,v)}\otimes\id\bigr)
  =
  \paren{\sum_{k\in K_{v,b}} \id\otimes\ketbra{k}{k}\ot\id}
  \bigl(\Phi_b^{(u,v)}\otimes\id\bigr),
\]
where
$K_{v,b} \coloneqq \bigl\{a\in\zo^n : a_{j} = a_{k}\;\;\forall\, (j,k)\in P_b \text{ where } v_{\ceil{j/2}}=1\bigr\}$.
\end{lemma}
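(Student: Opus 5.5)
This is an exact algebraic identity on the $2^n$-qubit register times the $k$-register. It factorizes over pairs, and on each pair over the two direct-sum blocks, so I would verify it pair by pair.

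First I would rewrite $\tilde\tau_{G,b}^{(u,v)}$ in the tensor form $\sum_k \bigl(\sum_{a\in S_{(u,v),b}}(\tau_G^{a\oplus k}\oplus\overline\tau_G^{a\oplus k})\bigr)\otimes\ketbra{k}{k}\otimes\id$. Since $\Phi_b^{(u,v)}\otimes\id$ acts as the identity on the $k$ register, it suffices to fix $k$. For that fixed $k$ the claim becomes
\[
\Bigl(\sum_{a\in S_{(u,v),b}}(\tau_G^{a\oplus k}\oplus\overline\tau_G^{a\oplus k})\Bigr)\,\Phi_b^{(u,v)} = \ind{k\in K_{v,b}}\,\Phi_b^{(u,v)},
\]
where $\Phi_b^{(u,v)}$ acts blockwise on both direct-sum copies; I would check on which block the projector sits by consulting the ordering of the QFT. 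Substituting $a'=a\oplus k$, the set $S_{(u,v),b}\oplus k$ is described by the constraints $a'_j\oplus a'_{k'}=u_i\oplus k_j\oplus k_{k'}$ on the pairs with $v_i=1$. The product projector $\tau_G^{a'}$ factorizes over pairs and unpaired qubits.

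\emph{Per-pair computation.} Summing over the unconstrained coordinates gives the identity on unpaired qubits and on pairs with $v_i=0$. On a pair $(j,k')$ with $v_i=1$, the sum of $\tau_G^{l}\otimes\tau_G^{m}$ over $l\oplus m=c$ equals $\tfrac12(\id+(-1)^c\sigma_G\otimes\sigma_G)$. By \cref{eqn:cliff-bell-correlations}, $\sigma_G\otimes\sigma_G\ket{\Phi^{u_i1}}=(-1)^{u_i}\ket{\Phi^{u_i1}}$. The same holds for $\overline\sigma_G\otimes\overline\sigma_G$, since $\overline\sigma_G\otimes\overline\sigma_G=(-\sigma_F)\otimes(-\sigma_F)=\sigma_F\otimes\sigma_F$, and because $\overline\sigma_G=\sigma_G^T$, which I would verify directly via the transpose trick. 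The pair projector with parity $c=u_i\oplus k_j\oplus k_{k'}$ therefore acts on $\ket{\Phi^{u_i1}}$ as multiplication by $\tfrac12(1+(-1)^{k_j\oplus k_{k'}})=\ind{k_j=k_{k'}}$. Taking the product over all pairs with $v_i=1$ yields $\ind{k\in K_{v,b}}$ on both blocks. The case $v=0^{|P_b|}$ (where $S=\zo^n$) gives the identity and $K_{v,b}=\zo^n$, which is consistent with the general formula.

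\emph{Main obstacle.} I expect the main difficulty to be sign bookkeeping. I would first pin down the convention that $\Phi^{ab}$ has $XX$ eigenvalue $(-1)^a$ and $ZZ$ eigenvalue $(-1)^b$, so that $u$ carries the $XX$ outcome and $v$ the $ZZ$ outcome. I would then re-derive $\sigma_G\otimes\sigma_G$ on $\ket{\Phi^{01}}$ and $\ket{\Phi^{11}}$ explicitly from $\sigma_G\otimes\sigma_G=\tfrac12(\sigma_X\otimes\sigma_X+\sigma_Y\otimes\sigma_Y+\sigma_X\otimes\sigma_Y+\sigma_Y\otimes\sigma_X)$, checking that the cross terms vanish on these states. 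This ensures that the acceptance set $S$ matches the eigenvalues and that the complex-conjugate block produces the same sign.
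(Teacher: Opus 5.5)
Your proposal is correct and takes essentially the same route as the paper. Both factor over pairs, let the unconstrained coordinates (unpaired qubits and pairs with $v_i=0$) sum to the identity, and on each constrained pair use the eigenvalue $(-1)^{u_i}$ of $\sigma_G\otimes\sigma_G$ on $\ket{\Phi^{u_i1}}$ from \cref{eqn:cliff-bell-correlations}, with the conjugate block behaving identically. The only cosmetic difference is the order of operations: you sum over the parity-fixed pair first to get $\tfrac12(\id+(-1)^c\sigma_G\otimes\sigma_G)$, while the paper applies each $\tau_G^c\otimes\tau_G^d$ to the Bell state and then sums out the remaining free bit. For the conjugate block, the quickest justification is that $\ket{\Phi^{u_i1}}$ has real amplitudes, so conjugating the eigen-equation gives the same eigenvalue; your rewriting as $\sigma_F\otimes\sigma_F$ does not by itself read off from the cited equation.
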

\begin{proof}
Since everything factors into tensor products of pairs, we can analyze the action of $\tilde\tau_{G,b}^{(u,v)}$ pair by pair.
Fix a pair $(j,l)\in P_b$ and let $s = u_{\ceil{j/2}}$, $t = v_{\ceil{j/2}}$.
On positions where $v$ vanishes, the strings in $S_{(u,v),b}$ are unconstrained (because the qubits got projected into a Bell state which requires mixed-observable measurement) and the sum over that set makes projectors sum to identity; we are thus only interested in the case $t\neq 0$ (there is a parity constraint on this pair). Let $a\in S_{(u,v),b}$ be any bit-string which we are summing over, for which we can define $c=a_j\oplus k_j$ and $d=a_l\oplus k_l$.
Using \cref{eqn:cliff-bell-correlations}, we compute:
\begin{align*}
  (\tau_{G,j}^c\otimes\tau_{G,l}^d)\ket{\Phi^{s1}}\bra{\Phi^{s1}}
  &= \frac{1}{4}\bigl(
       (1+(-1)^{c+d+s})\id\otimes\id
       + ((-1)^c + (-1)^{d+s})\sigma_{G,j}\otimes\id
     \bigr)\ket{\Phi^{s1}}\bra{\Phi^{s1}}\\
  &= \ind{c\oplus d\oplus s = 0}\;
     (\tau_{G,j}^c\otimes\id)\,\ketbra{\Phi^{s1}}{\Phi^{s1}}\\
  &= \ind{k_j\oplus k_l = 0}\;
     (\tau_{G,j}^c\otimes\id)\,\ketbra{\Phi^{s1}}{\Phi^{s1}}.
\end{align*}
where the last line follows from the fact that membership in $S_{(u,v),b}$ guarantees that $a_j\oplus a_l=s$. The remaining $G$ projector on qubit $j$ vanishes once we sum over all valid $a$, since only the parity of the pair is constrained, which leaves two options for the value of $c$, so the projectors sum to the identity. The same guarantee holds for the complex conjugate block and in both cases, every pair yields an indicator for the parity constraint on $k$. Tensoring over all pairs in $P_b$ and recalling that $\overline{\sigma}_G = -\sigma_F$, we can conclude:
\begin{align*}
\sum_{a\in S_{(u,v),b}}&\sum_{k\in\zo^n}\paren{\tau_G^{a\oplus k}\Phi_b^{(u,v)}\oplus\tau_F^{a\oplus k\oplus 1}\Phi_b^{(u,v)}}\ot\ket{k}\bra{k}\ot\id \\
&= \sum_{k\in\zo^n}\prod_{\substack{(j,l)\in P_b:\\ v_{\ceil{j/2}}=1}}\ind{k_j\oplus k_l = 0}\Phi_b^{(u,v)}\otimes\ket{k}\bra{k}\ot\id,
\end{align*}
which is equal to $\bigl(\sum_{k\in K_{v,b}} \id\otimes\ketbra{k}{k}\ot\id\bigr)
  \bigl(\Phi_b^{(u,v)}\otimes\id\bigr)$ and thus completes the proof.
\end{proof}

The identical computation with $\sigma_F$ replacing $\sigma_G$ (and conjugated Bell states $\ket{\Phi^{u_0}}$
replaced by $\ket{\tilde\Phi^{u_0}}$) yields:

\begin{lemma}\label{lemma:GF-factor}
For any $b\in\zo$ and $(u,v)\in\zo^{|P_b|}\times\zo^{|P_b|}$,
\[
  \tilde\tau_{GF,b}^{(u,v)}\bigl(\Phi_b^{(u,v)}\otimes\id\bigr)
  =
  \paren{\sum_{k\in K_{\overline{v},b}} \id\otimes\ketbra{k}{k}\ot\id}
  \bigl(\Phi_b^{(u,v)}\otimes\id\bigr),
\]
where
$K_{v,b} \coloneqq \bigl\{a\in\zo^n : a_{j} = a_{k}\;\;\forall\, (j,k)\in P_b \text{ where } v_{\ceil{j/2}}=1\bigr\}$.
\end{lemma}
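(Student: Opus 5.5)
The plan is to mirror the proof of \cref{lemma:G-factor} pair by pair, exploiting the tensor-product structure of both $\tilde\tau_{GF,b}^{(u,v)}$ and $\Phi_b^{(u,v)}$ over the pairs of $P_b$.

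\textbf{Reduction to pairs.} Fix $b$ and $(u,v)$. Every pair $(j,l)\in P_b$ with $j<l$ has one odd and one even index. By \cref{def:pair-projectors-G-GF}, the smaller index carries a $\tau_G$ factor and the larger a $\tau_F$ factor when $j\in I$; when $b=1$ the roles are swapped ($j$ even, $l$ odd). Because $\sigma_G\ot\sigma_F$ and $\sigma_F\ot\sigma_G$ have the same action on each Bell state by the symmetry of $\ket{\Phi^{s t}}$ (up to the sign $(-1)^{st}$ from swapping, which is $+1$ whenever $t=0$), this swap should be harmless in the relevant case. It suffices to analyze, for a single pair with $s=u_{\ceil{j/2}}$ and $t=v_{\ceil{j/2}}$, the operator
\[
\sum_{c\oplus d \,\text{allowed}}(\tau_{G,j}^{c}\ot\tau_{F,l}^{d})\ketbra{\Phi^{st}}{\Phi^{st}},
\]
together with its complex-conjugate block. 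Unpaired qubits and pairs with $t=1$ are unconstrained in $R_{(u,v),b}$, so there the projectors sum to the identity. This is exactly why the constraint set on $k$ is indexed by $\overline v$, giving $K_{\overline v,b}$.

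\textbf{Key pair computation for $t=0$.} By \cref{eqn:cliff-bell-correlations}, $(\sigma_G\ot\sigma_F)\ket{\Phi^{s0}}=(-1)^{s+1}\ket{\Phi^{s0}}$. Expanding $\tau_G^c\ot\tau_F^d$ as $\tfrac14(\id+(-1)^c\sigma_G)(\id+(-1)^d\sigma_F)$ and using $\sigma_F\ket{\Phi}=\pm\sigma_G\ket{\Phi}$ on the second register, the same algebra as in \cref{lemma:G-factor} gives
\[
(\tau^c_{G,j}\ot\tau^d_{F,l})\ketbra{\Phi^{s0}}{\Phi^{s0}}=\ind{c\oplus d\oplus s\oplus 1=0}\,(\tau^c_{G,j}\ot\id)\ketbra{\Phi^{s0}}{\Phi^{s0}}.
\]
With $c=a_j\oplus k_j$, $d=a_l\oplus k_l$, and membership $a\in R_{(u,v),b}$ forcing $a_j\oplus a_l=s\oplus 1$, the indicator becomes $\ind{k_j\oplus k_l=0}$. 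Summing over the two admissible values of $c$ then removes the residual $\tau_G$ factor.

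\textbf{Conjugate block.} The complex-conjugate block uses $\overline\tau_G$ and $\overline\tau_F$, i.e.\ $-\sigma_F$ and $-\sigma_G$. Its product has the same sign on $\ket{\Phi^{s0}}$, since the two sign flips cancel and the swapped order gives $(-1)^{st}=1$. Hence it produces the identical indicator. Tensoring over pairs and summing over $k$ with the $\ketbra{k}{k}$ register yields $\sum_{k\in K_{\overline v,b}}$, as claimed.

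\textbf{Main obstacle.} The delicate part is the sign bookkeeping: verifying the eigenvalue $-(-1)^s$ of $\sigma_G\ot\sigma_F$ on $\ket{\Phi^{s0}}$ via the transpose trick (\cref{eqn:transpose-trick}) with $\sigma_G^T=\overline\sigma_G=-\sigma_F$, and confirming it is unchanged both for the $b=1$ ordering and in the conjugate block. I would check these explicitly on the four Bell states before writing out the general product.
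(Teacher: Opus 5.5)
Your proposal is correct and follows the paper's route. The paper's proof only says the argument is identical to \cref{lemma:G-factor} with $R_{(u,v),b}$ in place of $S_{(u,v),b}$, and your pair-by-pair computation supplies exactly the omitted details: the eigenvalue $(-1)^{s+1}$ of $\sigma_G\otimes\sigma_F$ on the swap-symmetric $\ket{\Phi^{s0}}$, the constraint $a_j\oplus a_l=s\oplus 1$ reducing the indicator to $\ind{k_j\oplus k_l=0}$, and the unchanged sign in the conjugate block. One remark is too strong, though harmless: $\sigma_G\otimes\sigma_F$ and $\sigma_F\otimes\sigma_G$ do not agree on $\ket{\Phi^{s1}}$ up to $(-1)^{st}$, since these are not eigenstates. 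You only use that claim for $t=0$, where it holds.
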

\begin{proof}
  The argument is identical to \cref{lemma:G-factor}, except that we now consider a tensor product of two different observables per pair. Here we use $R_{(u,v),b}$, and unconstrained pairs are indicated by a vanishing entry in $v$, which explains the appearance of $\overline{v}$.
\end{proof}

\begin{lemma}\label{lemma:k-support}
For any computationally efficient prover modeled as in \Cref{section:modeling} that succeeds with probability $1-\eps$ in Items 1 to 4 of the compiled Clifford test, obtained from \Cref{protocol:cliff-test}, it holds that for all $b\in\{0,1\}$:
\[
  \sum_\alpha
  \sum_{k\notin K_{\Dec(\alpha)_1,b}}\norm{
    \id\otimes\ketbra{k}{k}\ot\id
  }_{V\pea{P_b} V^\dag}^2
  \leq O(\eps+\ea),
\]
and
\[
  \sum_\alpha
  \sum_{k\notin K_{\overline{\Dec(\alpha)}_1,b}}\norm{
    \id\otimes\ketbra{k}{k}\ot\id
  }_{V\pea{P_b} V^\dag}^2
   \leq O(\eps+\ea),
\]
where
$K_{v,b} \coloneqq \bigl\{a\in\zo^n : a_{j} = a_{k}\;\;\forall\, (j,k)\in P_b \text{ where } v_{\ceil{j/2}}=1\bigr\}$ and $V$ is the isometry from \cref{def:clifford-isometry}.
\end{lemma}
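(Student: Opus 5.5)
The plan is to combine \cref{lemma:epr-bell} and \cref{lemma:pair-projector-eigenstates} through the exact algebraic identities of \cref{lemma:G-factor} and \cref{lemma:GF-factor}. I treat the first claim in detail; the second follows by replacing $\tilde\tau_{G,b}$ with $\tilde\tau_{GF,b}$ and $K_{v,b}$ with $K_{\overline v,b}$. Fix $b$ and abbreviate
\[
\Pi^{(u,v)}\coloneqq\sum_{k\in K_{v,b}}\id\otimes\ketbra{k}{k}\ot\id .
\]
This is a projector because the $\ketbra{k}{k}$ are orthogonal. By orthogonality,
\[
\sum_{k\notin K_{v,b}}\norm{\id\otimes\ketbra{k}{k}\ot\id}^2_\rho=\norm{\id-\Pi^{(u,v)}}^2_\rho .
\]
The left-hand side of the claim is therefore $\sum_\alpha\norm{\id-\Pi^{\Dec(\alpha)}}^2_{V\pea{P_b}V^\dag}$. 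It depends on $\Dec(\alpha)$ only through its second component, which is the relevant $v$.

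The main step is a telescoping decomposition. I insert the Bell projector $\Phi\coloneqq\Phi_b^{\Dec(\alpha)}\otimes\id$ and write
\[
\id-\Pi=(\id-\Pi)(\id-\Phi)+(\id-\Pi)\Phi .
\]
\Cref{lemma:G-factor} gives $\Pi\Phi=\tilde\tau\Phi$, where $\tilde\tau\coloneqq\tilde\tau^{\Dec(\alpha)}_{G,b}$. The second term therefore equals $(\id-\tilde\tau)\Phi$, which I split as
\[
(\id-\tilde\tau)\Phi=(\id-\tilde\tau)-(\id-\tilde\tau)(\id-\Phi).
\]
Applying the squared triangle inequality (\cref{lemma:state-dependent-norm-properties}(\ref{prop:state-dependent-norm-properties,v})) and the bound $\norm{AB}_\rho\le\norm{A}\norm{B}_\rho$, with operator norms $\norm{\id-\Pi},\norm{\id-\tilde\tau}\le1$ because both $\Pi$ and $\tilde\tau$ are projectors (for $\tilde\tau$ this holds since $\{\tilde\tau_G^a\}$ is a PVM), each $\alpha$-term is bounded by a constant times
\[
\norm{\id-\Phi}^2_{V\pea{P_b}V^\dag}+\norm{\id-\tilde\tau}^2_{V\pea{P_b}V^\dag}.
\]
Summing over $\alpha$, the first piece is $O(\eps+\ea)$ by \cref{lemma:epr-bell} and the second is $O(\eps+\ea)$ by \cref{lemma:pair-projector-eigenstates}. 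Because both bounds hold on the same state $V\pea{P_b}V^\dag$, no prover switching or state switching is needed, and the two cryptographically small functions can be merged by taking their maximum.

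The main obstacle is a careful check of \cref{lemma:G-factor}, which is where the sign information actually enters. Two points need verifying. First, its identity must hold for the same $(u,v)=\Dec(\alpha)$ appearing in both prior lemmas. Second, the indexing conventions must agree: the Bell-basis test returns $(v_0,v_1)$ with $v_1$ the $ZZ$ parities, and this $v_1$ has to be the mask defining $S$ and $K$. The statement writes $K_{\Dec(\alpha)_1,b}$, so I would reconcile it with the $(u,v)$ labels in \cref{def:accepted-answer-set}. In the conjugate block the relation $\overline\sigma_G=-\sigma_F$ flips the Bell eigenvalue relation. I expect the indicator $\ind{k_j=k_l}$ to be unaffected because the shift by $k$ compensates on both qubits, but this needs confirming. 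Once this is checked, the second claim follows verbatim using \cref{lemma:GF-factor} and the $GF$ half of \cref{lemma:pair-projector-eigenstates}.
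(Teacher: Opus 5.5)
Your proposal is correct and takes essentially the same route as the paper: insert the Bell projector $\Phi_b^{\Dec(\alpha)}\otimes\id$ (controlled by \cref{lemma:epr-bell}), replace $\Pi\Phi$ by $\tilde\tau_{G,b}^{\Dec(\alpha)}\Phi$ via the exact identity of \cref{lemma:G-factor} (resp.\ \cref{lemma:GF-factor}), bound the result with \cref{lemma:pair-projector-eigenstates}, and remove the Bell projector again, all on the single state $V\pea{P_b}V^\dag$. Your three-term telescoping is a compact packaging of the paper's two-step insert-then-remove triangle-inequality argument. The indexing you flagged is consistent: $\Dec(\alpha)_1$ is exactly the $ZZ$-parity mask $v$ used in \cref{def:accepted-answer-set} and \cref{lemma:G-factor}.
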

\begin{proof}
We recall our previous results
\begin{enumerate}
    \item $\sum_{\alpha}
  \norm{
    \paren{\Phi_b^{\Dec(\alpha)}\otimes \id} - \id
  }_{V\pea{P_b} V^\dag}^2
  \leq O(\eps+\ea)$ (by \cref{lemma:epr-bell}).\label{item:bell-proj}
  \item $\sum_{\alpha}
  \norm{
    \tilde\tau^{\Dec(\alpha)}_{G,b} - \id
  }_{V\pea{P_b} V^\dag}^2
  \leq O(\eps+\ea)$ (by \cref{lemma:pair-projector-eigenstates}).\label{item:G-id}
  \item $\sum_{\alpha}
  \norm{
    \tilde\tau^{\Dec(\alpha)}_{GF,b} - \id
  }_{V\pea{P_b} V^\dag}^2
  \leq O(\eps+\ea)$ (by \cref{lemma:pair-projector-eigenstates}).\label{item:GF-id}
\end{enumerate}
For $W\in\{G,GF\}$, we know that $\tilde\tau_{W,b}^{(u,v)}$ is a projector, thus $\norm{\tilde\tau_{W,b}^{(u,v)}}_\infty \leq 1$, and by a triangle inequality:
\begin{align*}
  \sum_\alpha&
  \norm{
    (\tilde\tau_{W,b}^{\Dec(\alpha)} - \id)(\Phi_b^{\Dec(\alpha)}\otimes\id)
  }_{V\pea{P_b} V^\dag}^2\\
  &\leq
  2\, \sum_\alpha
    \norm{\tilde\tau_{W,b}^{\alpha} - \id}^2_\infty
    \norm{\Phi_b^{\Dec(\alpha)}\otimes\id - \id}^2_{V\pea{P_b} V^\dag}
  + 2\,\sum_\alpha
    \norm{\tilde\tau_{W,b}^{\alpha} - \id}^2_{V\pea{P_b} V^\dag}\\
  &\leq O(\eps+\ea).
\end{align*}
Where we used \cref{item:bell-proj} and \cref{item:G-id} or \ref{item:GF-id}. From \cref{lemma:G-factor} and the above:
\begin{align*}
\sum_\alpha& \norm{\paren{\sum_{k\notin K_{\Dec(\alpha)_1,b}} \id\otimes\ketbra{k}{k}\ot\id}\paren{\Phi_b^{\Dec(\alpha)}\otimes\id}}^2_{V\pea{P_b}V^\dag}\\
&=\sum_\alpha\norm{\paren{\sum_{k\in K_{\Dec(\alpha)_1,b}} \id\otimes\ketbra{k}{k}\ot\id-\id}
  \paren{\Phi_b^{\Dec(\alpha)}\otimes\id}}_{V\pea{P_b}V^\dag}^2\\
  &=\sum_\alpha
  \norm{
    (\tilde\tau_{G,b}^{\Dec(\alpha)} - \id)(\Phi_b^{\Dec(\alpha)}\otimes\id)
  }_{V\pea{P_b} V^\dag}^2\\
  &\leq O(\eps+\ea).
\end{align*}
Using \cref{lemma:GF-factor} we similarly obtain:
\[
\sum_\alpha \norm{\paren{\sum_{k\notin K_{\overline{\Dec(\alpha)}_1,b}} \id\otimes\ketbra{k}{k}\ot\id}\paren{\Phi_b^{\Dec(\alpha)}\otimes\id}}^2_{V\pea{P_b}V^\dag}\leq O(\eps+\ea)
\]

Now we just need to drop the Bell projector again, which we can do using \cref{item:bell-proj} and the same triangle inequality based argument that we used to introduce it (again the pre-factor is a projector with bounded operator norm). The orthogonality of the projectors then immediately completes the proof.
\end{proof}

\begin{lemma}[Mismatch set characterization]\label{lemma:mismatch-set}
Let $M_b(a) \coloneqq \{(j,k)\in P_b : a_j \neq a_k\}$ denote the set of pairs in $P_b$ on which $a$ has mismatching bits.
For any $v\in\zo^{|P_b|}$ we have
\[
  \{a\in\zo^n:M_b(a)\neq\emptyset\}\subseteq
  \{a\in\zo^n:a\notin K_{v,b}\} \cup \{a\in\zo^n:a\notin K_{\overline{v},b}\}.
\]
\end{lemma}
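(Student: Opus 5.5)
This is a purely combinatorial statement, so I would prove it by contraposition, working pair by pair. By definition, $a\in K_{v,b}$ holds if and only if $a_j=a_k$ for every pair $(j,k)\in P_b$ with $v_{\ceil{j/2}}=1$. Likewise, $a\in K_{\overline v,b}$ holds if and only if $a_j=a_k$ for every pair with $\overline v_{\ceil{j/2}}=1$, that is, every pair with $v_{\ceil{j/2}}=0$. The right-hand side is the complement of $K_{v,b}\cap K_{\overline v,b}$, so the inclusion is equivalent to
\[
K_{v,b}\cap K_{\overline v,b}\subseteq\{a\in\zo^n:M_b(a)=\emptyset\}.
\]

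For the key step, fix $a\in K_{v,b}\cap K_{\overline v,b}$ and an arbitrary pair $(j,k)\in P_b$. The bit $v_{\ceil{j/2}}$ is either $1$ or $0$. If it is $1$, membership in $K_{v,b}$ forces $a_j=a_k$. If it is $0$, then $\overline v_{\ceil{j/2}}=1$, and membership in $K_{\overline v,b}$ forces $a_j=a_k$. So every pair of $P_b$ is matched, giving $M_b(a)=\emptyset$, which is the contrapositive of the claim.

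I do not expect a genuine obstacle here. The one point to check is that the indexing $\ceil{j/2}$ assigns a consistent, well-defined position in $v$ to each pair in both pairings. For $P_1$, whose pairs are $(2i,2i+1)$, the formula gives $\ceil{j/2}=i$, matching the enumeration in \cref{def:brick-wall}. For $P_0$, the formula also gives the correct index $i$. In any case the argument only needs each pair to be assigned a single coordinate of $v$, and both $K_{v,b}$ and $K_{\overline v,b}$ use the same assignment.

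Unpaired qubits impose no constraint in either set and never appear in $M_b(a)$, so they need no separate treatment. The proof therefore amounts to the observation that $v$ and $\overline v$ together cover every pair.
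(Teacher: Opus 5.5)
Your proof is correct and matches the paper's argument. The paper argues directly: a mismatched pair $(j,k)$ is checked by whichever of $v$ or $\overline{v}$ has a $1$ at position $\ceil{j/2}$, because $v\oplus\overline{v}=1^{|P_b|}$. You state the same observation as the contrapositive $K_{v,b}\cap K_{\overline{v},b}\subseteq\{a\in\zo^n: M_b(a)=\emptyset\}$, and your extra check that $\ceil{j/2}$ indexes the pairs consistently in both brick-wall pairings is a harmless addition.
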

\begin{proof}
Take any $a\in\zo^n$ with $M_b(a)\neq\emptyset$, then there exists a pair $(j,k)\in P_b$ for which $a_j\neq a_k$ and thus $a$ is either included in $\{a\notin K_{v,b}\}$ or in $\{a\notin K_{\overline{v},b}\}$, since $v\oplus\overline{v}=1^{|P_b|}$, thus every pair is checked in one of the two sets.
\end{proof}

\begin{lemma}\label{lemma:sign-consistency}
Let $P^*$ be any computationally efficient prover modeled as in \Cref{section:modeling} that succeeds with probability $1-\eps$ in Items 1 to 4 of the compiled Clifford test, obtained from \Cref{protocol:cliff-test}. Then for all $q\in\Qa$ there exists a cryptographically small function $\ea$ such that
\[
  \sum_{k\in\zo^n\setminus\{0^n,1^n\}}
  \norm{\id\otimes\ketbra{k}{k}\ot\id}_{V\peq V^\dag}^2
  \leq O(\eps+\ea),
\]
where $V$ is the isometry from \cref{def:clifford-isometry}.
\end{lemma}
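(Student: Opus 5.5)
The plan is to combine the two bounds of \cref{lemma:k-support} for a fixed pairing $P_b$, then use the two brick-wall pairings together to exclude every $k$ other than $0^n,1^n$, and finally switch the state from $\pe{P_b}$ to $\peq$ using \cref{cor:efficient-isometry-state-switching}.

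\textbf{Step 1 (one pairing).} Fix $b\in\zo$. Sum the two inequalities of \cref{lemma:k-support}. By \cref{lemma:mismatch-set}, applied with $v=\Dec(\alpha)_1$ for each $\alpha$, every $k$ with $M_b(k)\neq\emptyset$ lies outside $K_{\Dec(\alpha)_1,b}$ or outside $K_{\overline{\Dec(\alpha)}_1,b}$. Every summand is nonnegative, so
\[
\sum_\alpha\sum_{k:\,M_b(k)\neq\emptyset}\norm{\id\ot\ketbra{k}{k}\ot\id}^2_{V\pea{P_b}V^\dag}\leq O(\eps+\ea).
\]
By linearity in the state, the sum over $\alpha$ collapses to the single state $V\pe{P_b}V^\dag$.

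\textbf{Step 2 (switch to a common state).} Write $\Pi_b\coloneqq\sum_{k:\,M_b(k)\neq\emptyset}\id\ot\ketbra{k}{k}\ot\id$. This is a projector acting on the $\Lambda$-register of $\H'$, diagonal in $k$, and is efficiently implementable: it only computes pair parities of $k$.

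\begin{itemize}
\item The bound reads $\norm{\Pi_b V}^2_{\pe{P_b}}\leq O(\eps+\ea)$.
\item Since $\Pi_b$ is a projector, $\norm{\Pi_b V}^2_\rho=\norm{V-(\id-2\Pi_b)V}^2_\rho/4$, and $\id-2\Pi_b$ is an efficient unitary.
\item \Cref{cor:efficient-isometry-state-switching} therefore applies, with $U=\id$ and $N=\id-2\Pi_b$, to switch the state from $\pe{P_b}$ to $\peq$ for arbitrary $q\in\Qa$, at the cost of an additive $\ea$.
\end{itemize}

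The isometry $V$ is efficient, as noted in the statement of that corollary.

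\textbf{Step 3 (combine pairings).} On $\peq$ we now have $\norm{\Pi_0V}^2+\norm{\Pi_1V}^2\leq O(\eps+\ea)$. The brick-wall pairings $P_0\cup P_1$ connect all of $[n]$ into a path $1-2-3-\dots-n$. Hence any $k\notin\{0^n,1^n\}$ has adjacent indices $i,i+1$ with $k_i\neq k_{i+1}$, so $M_0(k)\neq\emptyset$ or $M_1(k)\neq\emptyset$. Consequently
\[
\sum_{k\neq 0^n,1^n}\norm{\id\ot\ketbra{k}{k}\ot\id}^2_{V\peq V^\dag}\leq\norm{\Pi_0V}^2_\peq+\norm{\Pi_1V}^2_\peq,
\]
which gives the claim. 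Here the left side equals $\norm{\Pi V}^2$, where $\Pi$ is the projector onto those $k$, and $\Pi\preceq\Pi_0+\Pi_1$ because all three operators are diagonal in $k$.

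\textbf{Main obstacle.} The main obstacle is Step 2. Passing $\Pi_b$ through the efficient-switching machinery requires checking that the diagonal projector on the $\Lambda$-register, which is the $\ket{k}$ index of the $\pi_-$ block inside $U_{\mathrm{QFT}}$'s output, is efficiently implementable. That requires the explicit QFT circuit and the basis ordering from \cref{subsec:QFT}.

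A secondary point is Step 1. There one needs $\Dec(\alpha)_1$ in \cref{lemma:k-support} to be the $ZZ$-outcome string $v$ appearing in \cref{lemma:mismatch-set}. This is consistent with the definition of $S_{(u,v),b}$.
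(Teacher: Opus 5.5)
Your proposal is correct and follows the paper's proof. For each pairing you combine the two bounds of \cref{lemma:k-support} via \cref{lemma:mismatch-set}, collapse the sum over $\alpha$ by linearity, switch the state to $\peq$, and then take the union over the two brick-wall pairings, whose only common non-mismatched strings are $0^n,1^n$. The one cosmetic difference is in the state switch: the paper treats $V^\dag\Pi_bV$ directly as an efficient POVM element and applies \cref{lemma:povm-indistinguishability}, whereas you route through \cref{cor:efficient-isometry-state-switching} with the reflection $\id-2\Pi_b$. Both rely on the same efficiency of $V$ and of the $k$-diagonal projector $\Pi_b$.
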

\begin{proof}
Fix $b\in\zo$. By \cref{lemma:mismatch-set}, we can set $v=\Dec(\alpha)_1$ and conclude that
\begin{align*}
  \sum_{k:\,M_b(k)\neq\emptyset}&
  \norm{\id\otimes\ketbra{k}{k}\ot\id}_{V\pea{P_b}V^\dag}^2\\
  &\leq
  \sum_{k\notin K_{\Dec(\alpha)_1,b}}
  \norm{\id\otimes\ketbra{k}{k}\ot\id}_{V\pea{P_b}V^\dag}^2
  +
  \sum_{k\notin K_{\overline{\Dec(\alpha)}_1,b}}
  \norm{\id\otimes\ketbra{k}{k}\ot\id}_{V\pea{P_b}V^\dag}^2.
\end{align*}

This holds for every valid $\alpha$ separately, so we can sum both sides over all $\alpha$:
\begin{align*}
  \sum_\alpha&
  \sum_{k:\,M_b(k)\neq\emptyset}
  \norm{\id\otimes\ketbra{k}{k}\ot\id}_{V\pea{P_b}V^\dag}^2\\
  &\leq
  \sum_\alpha
  \sum_{k\notin K_{\Dec(\alpha)_1,b}}
  \norm{\id\otimes\ketbra{k}{k}\ot\id}_{V\pea{P_b}V^\dag}^2
  +
  \sum_\alpha
  \sum_{k\notin K_{\overline{\Dec(\alpha)}_1,b}}
  \norm{\id\otimes\ketbra{k}{k}\ot\id}_{V\pea{P_b}V^\dag}^2\\
  &\leq O(\eps+\ea),
\end{align*}
where the last line follows from the two bounds in \cref{lemma:k-support}. The set $\{k : M_b(k)\neq\emptyset\}$ does not depend on $\alpha$, and the state-dependent norm is linear in the state, so the left-hand side equals
\begin{align*}
  \sum_{k:\,M_b(k)\neq\emptyset}
  \sum_\alpha\norm{\id\otimes\ketbra{k}{k}\ot\id}_{V\pea{P_b}V^\dag}^2
  &=
  \sum_{k:\,M_b(k)\neq\emptyset}
  \norm{\id\otimes\ketbra{k}{k}\ot\id}_{V\pe{P_b}V^\dag}^2\\
  &=\norm{\id\ot\Pi_b\ot\id}_{V\pe{P_b}V^\dag}^2,
\end{align*}
with 
\[
  \Pi_b\deq \sum_{k:\,M_b(k)\neq\emptyset}\ketbra{k}{k}.
\]
Expanding out the state-dependent norm, defining the POVM element $M_b\deq V^\dag\Pi_b V$ and then swapping out the state by \cref{lemma:povm-indistinguishability}, we obtain:
\[
  \norm{\id\ot\Pi_b\ot\id}_{V\peq V^\dag}^2
  \leq O(\eps+\ea),
  \qquad \forall\;b\in\{0,1\}.
\]
The only $k\in\zo^n$ for which both $M_0(k)=\emptyset$ and $M_1(k)=\emptyset$
are $0^n$ and $1^n$, thus $\{k : M_0(k)\neq\emptyset\}\cup\{k : M_1(k)\neq\emptyset\} = \zo^n\setminus\{0^n,1^n\}$,
and we can conclude:
\begin{align*}
  \sum_{k\in\zo^n\setminus\{0^n,1^n\}}&
  \norm{\id\otimes\ketbra{k}{k}\ot\id}_{V\peq V^\dag}^2\\
  &\leq
  \sum_{k:\,M_0(k)\neq\emptyset}
  \norm{\id\otimes\ketbra{k}{k}\ot\id}_{V\peq V^\dag}^2
  +
  \sum_{k:\,M_1(k)\neq\emptyset}
  \norm{\id\otimes\ketbra{k}{k}\ot\id}_{V\peq V^\dag}^2\\
  &\leq O(\eps+\ea).\qedhere
\end{align*}
\end{proof}

\begin{corollary}\label{cor:sign-consistency}
    Let $P^*$ be any computationally efficient prover modeled as in \Cref{section:modeling} that succeeds with probability $1-\eps$ in Items 1 to 4 of the compiled Clifford test, obtained from \Cref{protocol:cliff-test}. Then for all $q\in\Qa$ there exists a cryptographically small function $\ea$ such that:
    \[
    \norm{\id\ot(\Lambda_G(a)-\Delta_G(a))}^2_{V\peq V^\dag}\leq O(\eps+\ea).
    \]
    where $\Lambda_G(a)=\sum_{k\in\zo^n}(-1)^{a\cdot k}\ketbra{k}{k}\ot\id$ and $\Delta_G(a)=\sigma_{Z}^{|a|}\ot\id$, here the Pauli $Z$ operator acts only on the first qubit of the $k$ register and $V$ is the isometry from \cref{def:clifford-isometry}.
\end{corollary}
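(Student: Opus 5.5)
The plan is to reduce the claim to \cref{lemma:sign-consistency} by comparing the two diagonal operators on the $k$ register. Both $\Lambda_G(a)$ and $\Delta_G(a)$ act diagonally in the computational basis $\{\ket{k}\}_{k\in\zo^n}$ of the $k$ register, tensored with identity elsewhere. Here I read $\Delta_G(a)$ as $(-1)^{|a|k_1}$ on $\ket{k}$, i.e.\ $\sigma_Z^{|a|}$ on the first qubit of $k$. I would first write
\[
\Lambda_G(a)-\Delta_G(a)=\sum_{k\in\zo^n}\bigl((-1)^{a\cdot k}-(-1)^{|a|k_1}\bigr)\ketbra{k}{k}\ot\id ,
\]
and check the two surviving basis strings directly. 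For $k=0^n$ both signs equal $1$. For $k=1^n$ we have $a\cdot 1^n=|a|$ and $k_1=1$, so both signs equal $(-1)^{|a|}$. Hence the coefficient vanishes on $\{0^n,1^n\}$, and the difference is supported on $\zo^n\setminus\{0^n,1^n\}$ with coefficients of modulus at most $2$.

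Next I would expand the state-dependent norm. Let $\Pi_k\deq\id\ot\ketbra{k}{k}\ot\id$. These projectors are mutually orthogonal, so $\norm{\sum_k c_k\Pi_k}_\rho^2=\sum_k|c_k|^2\norm{\Pi_k}_\rho^2$. This gives
\[
\norm{\id\ot(\Lambda_G(a)-\Delta_G(a))}^2_{V\peq V^\dag}\le 4\sum_{k\notin\{0^n,1^n\}}\norm{\Pi_k}^2_{V\peq V^\dag},
\]
and the right-hand side is $O(\eps+\ea)$ by \cref{lemma:sign-consistency}. The bound is pointwise in $a$, uniform in $a$, and the cryptographically small function is the one from \cref{lemma:sign-consistency} for the fixed $q$, so no averaging over $a$ is needed.

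The main point to get right is the register bookkeeping. The $k$ register in $\Lambda_G$ is the multiplicity register introduced by the QFT block structure in \cref{lemma:EPBT-pure-basis}. It is the same register on which \cref{lemma:sign-consistency} places its projectors $\id\ot\ketbra{k}{k}\ot\id$, so the orthogonal decomposition applies verbatim. Beyond that identification, the argument is just the two-case check above.
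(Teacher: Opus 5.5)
Your argument is correct and is essentially the paper's proof. The paper bounds $\id\ot(\Lambda_G(a)-\Delta_G(a))^2\preceq 4\sum_{k\in\zo^n\setminus\{0^n,1^n\}}\id\ot\ketbra{k}{k}\ot\id$ in the PSD order and then applies \cref{lemma:sign-consistency}; this is exactly your orthogonal-projector expansion, and your explicit check that the coefficients vanish at $k=0^n$ and $k=1^n$ supplies the step the paper leaves implicit. (One small terminological point: the $k$ register labels the $2^n$ sign choices on the $g_j$, i.e.\ which quantum irrep, rather than a multiplicity space; this has no effect on the argument.)
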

\begin{proof}
    In the PSD order we have
    \begin{align*}
        \id\ot (\Lambda_G(a)-\Delta_G(a))^2\preceq 4\sum_{k\in\zo^n\setminus\{0^n,1^n\}}\id\ot\ketbra{k}{k}\ot\id,
    \end{align*}
    which immediately allows us to conclude that
    \[
    \norm{\id\ot(\Lambda_G(a)-\Delta_G(a))}^2_{V\peq V^\dag}\leq 4\sum_{k\in\zo^n\setminus\{0^n,1^n\}}\norm{\id\ot\ketbra{k}{k}\ot\id}^2_{V\peq V^\dag}.
    \]
    Using \cref{lemma:sign-consistency} we obtain the desired bound.
\end{proof}

Using \cref{cor:sign-consistency} and the fact that the prover passes the Clifford subtest, we can conclude the following
\begin{lemma}\label{lemma:cliff-sign-consistent}
    For any computationally efficient prover modeled as in Section \ref{section:modeling} that wins with probability $1 - \eps$ in Items 1 to 4 of the compiled Clifford test $\textprotocol{cliff}(X,Y,Z,F,G,n)$, obtained from \cref{protocol:cliff-test}, there exists a complex Hilbert space $\cH'$ of finite dimension $d' = d|C_n|^2$ and an efficient isometry $V: \cH \to \cH'$ (from \cref{def:clifford-isometry}) such that for all $W \in \{X,Y,Z,G,F\}$, all distributions $\mu$ on $\bits^n$ and all $q\in\Qa$ there exists a cryptographically small function $\ea$ such that
    \begin{align*}
    \E_{a\sim\mu} \norm{ V W(a) -((\sigma_W(a)\oplus\overline{\sigma}_W(a)) \ot\Delta_W(a)) V }_\peq^2 \leq O(\eps+\ea),
    \end{align*}
    where $\Delta_X(a)=\Delta_Y(a)=\Delta_Z(a)=\id$ and $\Delta_F(a)=\Delta_G(a)=\sigma_{Z}^{|a|}\ot\id$ for all $a\in\zo^n$.
\end{lemma}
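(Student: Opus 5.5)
The plan is to start from \cref{lemma:EPBT-pure-basis}, which already gives the statement with $\Lambda_W(a)$ in place of $\Delta_W(a)$. For $W\in\{X,Y,Z\}$ both operators are the identity, so only $W\in\{G,F\}$ needs work. There we insert $\Delta_W(a)$ through a triangle inequality (\cref{lemma:state-dependent-norm-properties}, point (\ref{prop:state-dependent-norm-properties,v})):
\begin{align*}
\E_{a\sim\mu}\norm{VW(a)-((\sigma_W(a)\oplus\overline{\sigma}_W(a))\ot\Delta_W(a))V}_\peq^2
&\leq 2\E_{a\sim\mu}\norm{VW(a)-((\sigma_W(a)\oplus\overline{\sigma}_W(a))\ot\Lambda_W(a))V}_\peq^2\\
&\quad+2\E_{a\sim\mu}\norm{((\sigma_W(a)\oplus\overline{\sigma}_W(a))\ot(\Lambda_W(a)-\Delta_W(a)))V}_\peq^2 .
\end{align*}
The first term is $O(\eps+\ea)$ by \cref{lemma:EPBT-pure-basis}. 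The second term is what remains to be bounded.

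For the second term, note that $\Lambda_G(a)$ and $\Delta_G(a)$ are both diagonal in the $\ket{k}$ register. Also, $\sigma_W(a)\oplus\overline{\sigma}_W(a)$ acts on a different tensor factor, so it commutes with $\id\ot(\Lambda_G(a)-\Delta_G(a))$. Using \cref{lemma:state-dependent-norm-properties}(\ref{prop:state-dependent-norm-properties,i},\ref{prop:state-dependent-norm-properties,ii}), we can move the difference to the right of the unitary and drop the unitary, since its operator norm is $1$. This bounds the second term by
\[
\E_{a\sim\mu}\norm{\id\ot(\Lambda_G(a)-\Delta_G(a))}^2_{V\peq V^\dag},
\]
and \cref{cor:sign-consistency} bounds this by $O(\eps+\ea)$ pointwise in $a$. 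The same argument applies to $F$, because $\Lambda_F=\Lambda_G$ and $\Delta_F=\Delta_G$.

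The main obstacle I expect is making the cryptographically small terms uniform in $a$. \cref{cor:sign-consistency} is stated pointwise for each $a$, but we need a single $\ea$ after averaging over $a\sim\mu$.

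I would avoid this issue by going back to the PSD-order bound in the proof of \cref{cor:sign-consistency}: $(\Lambda_G(a)-\Delta_G(a))^2\preceq 4\sum_{k\neq 0^n,1^n}\ketbra{k}{k}$. The right-hand side does not depend on $a$. So the bound from \cref{lemma:sign-consistency} holds uniformly in $a$, survives the expectation over $\mu$, and comes with a single cryptographically small function. If an $a$-dependent step were ever needed, I would instead use the non-uniform advice argument of \cref{rem:non-uniformity}.

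Finally, we absorb constants and take the maximum over the constantly many cryptographically small functions, which gives the claimed bound.
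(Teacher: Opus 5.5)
Your proposal is correct and is exactly the paper's argument. The paper's proof is the one-line combination of \cref{lemma:EPBT-pure-basis} and \cref{cor:sign-consistency} via a triangle inequality, which you have spelled out, including splitting off the unitary factor with \cref{lemma:state-dependent-norm-properties}. You also observe that the PSD bound $(\Lambda_G(a)-\Delta_G(a))^2\preceq 4\sum_{k\neq 0^n,1^n}\ketbra{k}{k}$ does not depend on $a$, so a single cryptographically small function survives the average over $\mu$; this is correct and is a detail the paper leaves implicit.
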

\begin{proof}
    Combining \cref{lemma:EPBT-pure-basis} and \cref{cor:sign-consistency} by a triangle inequality.
\end{proof}

Now that the prover's freedom to choose a sign has been reduced to a global sign, we can execute a CHSH game, which by the form of the $F$ and $G$ observables in terms of the $X$ and $Y$ observables, is ideally suited to distinguish between a positive or negative global sign choice.
 
\begin{lemma}\label{lemma:chsh}
For any computationally efficient prover modeled as in \Cref{section:modeling} that succeeds with probability $\omega^*-\eps$ in Items 1 to 5 of the compiled Clifford test, obtained from \Cref{protocol:cliff-test}, it holds under the isometry from \cref{def:clifford-isometry} that for all $q\in\Qa$, all $W\in\{X,Y,Z,F,G\}$ and all distributions $\mu$ over $\zo^n$ there exists a cryptographically small function $\ea$ such that:
\[
  \E_{a\sim\mu}\norm{VW(a) - ((\sigma_W(a)\oplus\overline{\sigma}_W(a))\ot\id)V}_{\peq}^2
  \leq O(\eps+\ea).
\]
Here $\omega^*=\tfrac{1}{5}(4+\cos^2(\frac{\pi}{8}))$ is the optimal winning probability of Items 1 to 5\footnote{It should be possible to bring completeness exponentially close to 1 by exploiting the fact that we are effectively running $n$ CHSH tests in parallel with the large-answer measurements.}: an honest prover achieves it up to a cryptographically small loss, and no efficient prover can exceed it by more than $\ea$ (cf.\ \cref{eqn:chsh-pi}).
\end{lemma}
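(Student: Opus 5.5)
The starting point is \cref{lemma:cliff-sign-consistent}. Success $\omega^*-\eps$ in Items 1 to 5 forces success $1-O(\eps)$ in Items 1 to 4, because the honest prover wins those items with certainty and Item 5 can contribute at most $\cos^2(\pi/8)$, up to $\ea$. So for every $W$ we already have
\[
VW(a)\approx ((\sigma_W(a)\oplus\overline{\sigma}_W(a))\ot\Delta_W(a))V, \qquad \Delta_F(a)=\Delta_G(a)=\sigma_Z^{|a|}\ot\id.
\]
For $W\in\{X,Y,Z\}$ this is already the claim. For $W\in\{F,G\}$ the only obstruction left is the global sign register: the first qubit of the $k$-register, with $k\in\{0^n,1^n\}$. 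I would write $\Pi_\pm$ for the projectors onto this qubit being $0$ and $1$ respectively. Since $\Delta_W(a)-\id=-2\,\Pi_-$ when $|a|$ is odd, and the difference vanishes when $|a|$ is even, a triangle inequality reduces the lemma to
\[
\norm{\id\ot\Pi_-\ot\id}^2_{V\peq V^\dag}\leq O(\eps+\ea).
\]

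To prove this bound, I would analyse the CHSH item on the first qubit, with Alice asked $A_x\in\{Y,X\}$ and Bob asked $B_y\in\{F,G\}$. I would write the winning probability as $\tfrac12+\tfrac18\sum_{x,y}(-1)^{x(1-y)}\sum_\alpha(-1)^{\Dec(\alpha)_1}\Tr{B_y(e_1)\pea{A_x}}$. Using \cref{lemma:self-consistency-identity} (Alice's questions are self-consistent by the consistency wrap on the $\textprotocol{mbt}$ leaf and the group test), I would replace the scalar $(-1)^{\Dec(\alpha)_1}$ on $\pea{A_x}$ by $A_x(e_1)$, up to $O(\sqrt\eps)$. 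Then I would use \cref{cor:efficient-operator-state-switching}, with the efficient LCU $A_x(e_1)B_y(e_1)+B_y(e_1)A_x(e_1)$, to move everything onto a single reference state $\peq$, at cost $\ea$. This gives a Bell-type expression $\tfrac12+\tfrac18\sum_{x,y}(-1)^{x(1-y)}\tfrac12\langle\{A_x(e_1),B_y(e_1)\}\rangle_{\peq}$.

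Next I would push this expression through $V$ using the current characterization. The $A_x$ factor becomes $\sigma_Y\oplus\overline{\sigma}_Y$ or $\sigma_X\oplus\overline{\sigma}_X$. The $B_y$ factor becomes $\pm(\sigma_F\oplus\overline{\sigma}_F)$ or $\pm(\sigma_G\oplus\overline{\sigma}_G)$, with sign $+$ on $\Pi_+$ and $-$ on $\Pi_-$. A direct $2\times2$ computation gives the anticommutator combinations $\pm2\sqrt2\,\id$ in both conjugation blocks; the conjugate block works because $\overline{\sigma}_G=-\sigma_F$ and $\overline{\sigma}_Y=-\sigma_Y$ compensate each other. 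The bias therefore becomes $\tfrac{\sqrt2}{4}\bigl(\norm{\Pi_+}^2-\norm{\Pi_-}^2\bigr)$ on $V\peq V^\dag$. This yields a CHSH winning probability of $\cos^2(\pi/8)-\tfrac{\sqrt2}{2}\norm{\id\ot\Pi_-\ot\id}^2_{V\peq V^\dag}+O(\sqrt{\eps}+\ea)$, and hence an overall winning probability of $\omega^*-\Omega(\norm{\Pi_-}^2)+O(\sqrt\eps+\ea)$. Comparing with the hypothesis bounds $\norm{\Pi_-}^2$.

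The main obstacle is the error dependence. The route above loses a square root, through \cref{lemma:self-consistency-identity} and the step of moving operators inside traces. To keep $O(\eps)$ I would instead phrase the CHSH winning condition directly as a cross-consistency statement. In the spirit of \cref{lemma:generic-cross-isometric-equality}, the winning bias equals $\E_{x,y}\sum_\alpha(-1)^{\Dec(\alpha)_1+x(1-y)}\Tr{B_y(e_1)\pea{A_x}}$. I would then insert the rounded operators by the squared-norm bound $|\langle\cdot\rangle|\le$ norm, accepting an $O(\sqrt{\eps+\ea})$ loss if the linear bound does not go through. A second point to check is that $\Pi_-$, the $k$-register projector after $V$, is an efficient measurement, so that \cref{lemma:povm-indistinguishability} can swap $\pe{A_x}$ and $\peq$. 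This holds because $V$ is efficient by \cref{cor:gh-efficient}. The upper bound $\omega^*+\ea$ for efficient provers falls out of the same computation, since the bias is at most $\tfrac{\sqrt2}{4}$.
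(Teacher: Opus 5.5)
Your endgame is the same as the paper's: you reduce everything to bounding $\norm{\id\ot\Pi_-\ot\id}^2_{V\peq V^\dag}$, using $\sigma_Z-\id=-2\Pi_-$, and your $2\times2$ block computation is correct (both conjugation blocks give $2\sqrt2\,\id$). There are nevertheless two genuine gaps.

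\emph{First gap: a square-root loss.} Two of your steps are Cauchy--Schwarz steps: replacing $(-1)^{\Dec(\alpha)_1}$ by $A_x(e_1)$ via \cref{lemma:self-consistency-identity}, and pushing $\{A_x(e_1),B_y(e_1)\}$ through $V$ \emph{inside a trace}. So the route you carry out only yields $\norm{\Pi_-}^2\le O((\eps+\ea)^{1/2})$, not the claimed $O(\eps+\ea)$, and you say you would accept this. Your proposed repair via \cref{lemma:generic-cross-isometric-equality} does not apply. That lemma needs a bias of $1-O(\delta)$ for observables coming from PVMs, whereas the CHSH bias of $F$ or $G$ is at most $1/\sqrt2$, and $\tfrac{1}{\sqrt2}(G(e_1)\pm F(e_1))$ is not a priori unitary.

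\emph{Second gap: circularity.} Your first step needs the compiled Tsirelson bound for \emph{every} efficient prover, namely that Item~5 is won with probability at most $\cos^2(\pi/8)+\ea$. You only derive this at the end, from the rigidity characterization, and that characterization already assumed Items~1--4 are won with probability $1-O(\eps)$. Even ignoring the circularity, your computation gives only $\cos^2(\pi/8)+O(\sqrt{\delta+\ea})$ for an Items~1--4 deficit $\delta$. Since $\sqrt\delta\gg\delta$, this cannot be bootstrapped to $\delta=O(\eps)$.

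\emph{The missing idea: a rigidity-free sum-of-squares identity.} It fixes both gaps at once. Write $s_\alpha=(-1)^{\Dec(\alpha)_1}$ and set
\[
p^2=\sum_\alpha\norm{\tfrac{1}{\sqrt2}(G(e_1)+F(e_1))-s_\alpha\id}^2_{\pea{Y}},\qquad q^2=\sum_\alpha\norm{\tfrac{1}{\sqrt2}(G(e_1)-F(e_1))-s_\alpha\id}^2_{\pea{X}}.
\]
Expanding with $G(e_1)^2=F(e_1)^2=\id$ gives
\[
p^2+q^2=4(1+\sqrt2)-8\sqrt2\Pr[\text{win}]+\tfrac12\Tr{\{G(e_1),F(e_1)\}(\pe{Y}-\pe{X})}.
\]
The last term is at most $\ea$ by \cref{lemma:efficient-operator-indistinguishability}, because the anticommutator is a bounded Hermitian LCU.

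Two consequences follow:
\begin{itemize}
\item Nonnegativity of $p^2+q^2$ gives the Tsirelson bound $\cos^2(\pi/8)+\ea$ for all efficient provers, with no rigidity input.
\item A CHSH deficit of $O(\eps)$ gives $q^2\le O(\eps+\ea)$ \emph{linearly}, because $q^2$ is already a squared state-dependent norm.
\end{itemize}

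To finish, use the triangle inequality with $\sum_\alpha\norm{X(e_1)-s_\alpha\id}^2_{\pea{X}}\le O(\eps)$. This gives $\norm{\tfrac{1}{\sqrt2}(G(e_1)-F(e_1))-X(e_1)}^2_{\pe{X}}\le O(\eps+\ea)$. Move this to $\peq$ with \cref{cor:efficient-operator-state-switching}. Then push it through $V$ at the operator level using \cref{lemma:cliff-sign-consistent}, together with $\tfrac{1}{\sqrt2}(\sigma_G-\sigma_F)=\tfrac{1}{\sqrt2}(\overline\sigma_G-\overline\sigma_F)=\sigma_X$. This yields $\norm{\id\ot(\sigma_Z-\id)\ot\id}^2_{V\peq V^\dag}\le O(\eps+\ea)$ with no square-root loss, which is what the lemma (and \cref{theorem:clifford-mixed-basis} downstream) requires.
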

\begin{remark}
    The CHSH test of \cref{protocol:cliff-test} is designed under the assumption that Alice applies the transpose (in the computational basis) of the operators that she is asked to measure. This indeed has to be the case, by the answer equality between Alice and Bob, enforced by the consistency test (i.e.\ we enforce $Y\otimes Y\ket{\psi}=\ket{\psi}$ instead of the natural $Y\otimes Y\ket{\psi}=-\ket{\psi}$).
\end{remark}
\begin{proof}
We begin with the completeness of the CHSH game (\cref{protocol:cliff-test} subtest 5). A simple calculation confirms that if both parties share $n$ EPR pairs, Alice measures $-\sigma_Y$ and $\sigma_X$, and Bob measures $\sigma_F$ and $\sigma_G$, the Tsirelson bound is achieved, i.e.\ ideal provers win with probability $\cos^2(\frac{\pi}{8})$ in subtest 5. Since all other subtests have perfect completeness, honest provers win with probability $\omega^*=\tfrac{1}{5}(4+\cos^2(\frac{\pi}{8}))$, up to a cryptographically small correctness loss of the QFHE scheme; conversely, \cref{eqn:chsh-pi} below shows that no efficient prover can win with probability greater than $\omega^*+\ea$.
\par 
\medskip
We next give a short sum-of-squares proof that the quantum value of the CHSH game is preserved under compilation with KLVY. This proof is inspired by \cite{Natarajan2023}, however, it follows a simplified approach, without pseudo-expectations. The argument applies to any flavor of CHSH game (different game polynomial); we carry it out for our choice of game polynomial, since this version of the game certifies the operators that we are interested in. Throughout, we use the convention that Alice implements the transpose of the operation on Bob's side, as this is the only way to achieve completeness in the consistency test.
\par 
\medskip
The probability that the verification predicate is satisfied at any position $i\in[n]$, is given by
\begin{align*}
    \Pr[\text{win}]=\frac{1}{2}+\frac{1}{8}\sum_\alpha(-1)^{\Dec(\alpha)_1}\paren{\Tr{(G(e_1)+F(e_1))\pea{Y}}+\Tr{(G(e_1)-F(e_1))\pea{X}}}.\numberthis\label{eqn:chsh-pwin}
\end{align*}
A direct calculation gives
\begin{align*}
p^2&=\sum_\alpha\norm{\frac{1}{\sqrt{2}}(G(e_1)+F(e_1))-(-1)^{\Dec(\alpha)_1}\id}_{\pea{Y}}^2 \\
&= 2-\sqrt{2}\sum_\alpha(-1)^{\Dec(\alpha)_1}\Tr{(G(e_1)+F(e_1))\pea{Y}}+\frac{1}{2}\sum_\alpha\Tr{\{G(e_1),F(e_1)\}\pea{Y}},
\end{align*}
and 
\begin{align*}
q^2&=\sum_\alpha\norm{\frac{1}{\sqrt{2}}(G(e_1)-F(e_1))-(-1)^{\Dec(\alpha)_1}\id}_{\pea{X}}^2\\
&= 2-\sqrt{2}\sum_\alpha(-1)^{\Dec(\alpha)_1}\Tr{(G(e_1)-F(e_1))\pea{X}}-\frac{1}{2}\sum_\alpha\Tr{\{G(e_1),F(e_1)\}\pea{X}},
\end{align*}
thus, plugging in \cref{eqn:chsh-pwin}
\begin{align*}
    p^2+q^2 = 4(1+\sqrt{2})-8\sqrt{2}\Pr[\text{win}]+\frac{1}{2}\Tr{\{G(e_1),F(e_1)\}\paren{\pe{Y}-\pe{X}}},
\end{align*}
we note that the operator $M=\frac{1}{2}\{G(e_1),F(e_1)\}$ is a Hermitian LCU and has bounded eigenvalues. Hence, we can use it as POVM in \cref{lemma:efficient-operator-indistinguishability} and let $D_1,D_2$ be the appropriate point distributions. The lemma then guarantees that
\[
\left|\Tr{M\paren{\pe{Y}-\pe{X}}}\right|=\frac{1}{4}\left|\Tr{\{G(e_1),F(e_1)\}\paren{\pe{Y}-\pe{X}}}\right|\leq\ea.
\]
Plugging this into the expression and rearranging, we get
\[
\Pr[\text{win}]=\frac{1}{2\sqrt{2}}(1+\sqrt{2})-\frac{1}{8\sqrt{2}}(p^2+q^2)+\ea\leq\frac{1}{2}+\frac{\sqrt{2}}{4}+\ea=\cos^2\paren{\frac{\pi}{8}}+\ea,\numberthis\label{eqn:chsh-pi}
\]
thus the quantum value of the compiled game respects the Tsirelson bound, up to cryptographically small advantage. 
\par 
\medskip
By the \cref{lemma:cliff-sign-consistent} we know that for all $W\in\{X,Y,Z,F,G\}$, all $q\in\Qa$ and all distributions $\mu$ over $\zo^n$:
\begin{align*}
    \E_{a\sim\mu} \norm{ V W(a) -((\sigma_W(a)\oplus\overline{\sigma}_W(a)) \ot\Delta_W(a)) V }_\peq^2 \leq O(\eps+\ea),\numberthis\label{eqn:chsh-isometry-guarantee}
\end{align*}
where $\Delta_X(a)=\Delta_Y(a)=\Delta_Z(a)=\id$ and $\Delta_F(a)=\Delta_G(a)=\sigma_{Z}^{|a|}\ot\id$. Using the $\eps$-self-consistency of $(\{X,Y,Z\}, U_3)$ as certified by Item 1 (specifically the $\textsc{slc}$ within its first $\textsc{crel}$ subtest) of \cref{protocol:cliff-test}, \cref{lemma:self-consistency-identity} gives
\[
\sum_\alpha\norm{X(e_1)-(-1)^{\Dec(\alpha)_1}}_{\pea{X}}^2\leq O(\eps).\numberthis\label{eqn:chsh-x-consistency}
\]
From the lemma assumption, we know that the prover succeeds in the CHSH subtest with probability at least $\cos^2(\frac{\pi}{8})-O(\eps)$. Hence, we can conclude that
\[
q^2 = \sum_\alpha\norm{\frac{1}{\sqrt{2}}(G(e_1)-F(e_1))-(-1)^{\Dec(\alpha)_1}\id}_{\pea{X}}^2\leq O(\eps+\ea).
\]
Combining this with \cref{eqn:chsh-x-consistency} through a triangle inequality, we get
\[
\norm{\frac{1}{\sqrt{2}}(G(e_1)-F(e_1))-X(e_1)}_{\pe{X}}^2\leq O(\eps+\ea).
\]
Since the norm argument is a LCU, we can apply \cref{cor:efficient-operator-state-switching} to switch out the state for any $\peq$. Pushing this through the isometry using \cref{eqn:chsh-isometry-guarantee}, we get
\[
\norm{\id\ot(\sigma_Z-\id)\ot\id}^2_{V\peq V^\dag}\leq O(\eps+\ea),
\]
where the first identity has the dimension of twice the qubit register (i.e.\ acting on $(\C^2)^{\ot n}\oplus(\C^2)^{\ot n}$). Combining this with \cref{eqn:chsh-isometry-guarantee} completes the proof.
\end{proof}

What remains is to extend our characterization from pure-basis to mixed-basis observables. This can be achieved in a very modular way by invoking the mixed-versus-pure basis test (\cref{protocol:pure-vs-mixed-test}), which yields the following theorem:

\begin{theorem}\label{theorem:clifford-mixed-basis}
    For any computationally efficient prover modeled as in Section \ref{section:modeling} that wins with probability $\omega^* - \eps$ in the compiled Clifford test $\textprotocol{cliff}(X,Y,Z,F,G,n)$, obtained from \cref{protocol:cliff-test}, there exists a complex Hilbert space $\cH'$ of finite dimension $d' = d|C_n|^2$ and an efficient isometry $V: \cH \to \cH'$ (from \cref{def:clifford-isometry}) such that for all distributions $\mu$ on $\bits^n$ and all $q\in\Qa$ there exists a cryptographically small function $\ea$ such that
    \begin{align*}
\E_{\tilde W\in\{X,Y,Z,F,G\}^n}\E_{a\sim\mu} \norm{ V \tilde W(a) -((\sigma_{\tilde W}(a)\oplus \overline{\sigma}_{\tilde W}(a)) \otimes \id) V }_\peq^2 \leq O(\eps+ \ea).
\end{align*}
    Here $\omega^*=\tfrac{1}{6}(5+\cos^2(\frac{\pi}{8}))$ is the optimal winning probability of the Clifford test (an honest prover achieves it up to a cryptographically small loss, and no efficient prover can exceed it by more than $\ea$).
\end{theorem}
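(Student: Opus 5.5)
The plan is to reduce the mixed-basis statement to the pure-basis guarantee of \cref{lemma:chsh}, using the mixed-versus-pure basis test (Item 6 of \cref{protocol:cliff-test}) and exact linearity. First, the bookkeeping. Winning with probability $\omega^*-\eps$ in the full Clifford test implies winning Items 1 to 5 with probability $\tfrac15(4+\cos^2\tfrac{\pi}{8})-O(\eps)$. Here we use that the CHSH item cannot exceed its value by more than $\ea$ (\cref{eqn:chsh-pi}) and that every other item is bounded by $1$. The same reasoning gives Item 6 with probability $1-O(\eps+\ea)$. \Cref{lemma:chsh} therefore applies for every $W\in\{X,Y,Z,F,G\}$, every $q$, and every distribution on $\zo^n$. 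Also, \cref{lemma:mixed-vs-pure-guarantee} with $\Sigma=\{X,Y,Z,F,G\}$ and $\mu=U_{5^n}$ gives, for each $W$ and each $\mu'$,
\[
\E_{\tilde W}\E_{a\sim\mu'}\norm{W(\exten{a}{J_W})-\tilde W(\exten{a}{J_W})}_\peq^2\leq O(\eps)+\ea,
\]
where $J_W=\{j:\tilde W_j=W\}$.

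Second, decompose. For fixed $\tilde W$, the sets $J_X,\dots,J_G$ partition $[n]$, so exact linearity gives $\tilde W(a)=\prod_{W}\tilde W(\exten{a}{J_W})$ in a fixed order of the five labels. Likewise $\sigma_{\tilde W}(a)=\prod_W\sigma_W(\exten{a}{J_W})$ holds exactly, and so does its conjugate. We telescope through five hybrids, replacing one factor $\tilde W(\exten{a}{J_W})$ at a time by $W(\exten{a}{J_W})$. Each factor to the left is removed by left unitary invariance. Each factor to the right is removed by \cref{lemma:compiled-prover-switching}, using the $\eps$-self-consistency of $(\Sigma,U_5)$ and of the mixed-basis question distribution, which is certified by the consistency wrapping of $\textprotocol{mbt}$. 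Since there are only five factors, this yields
\[
\E_{\tilde W}\E_{a\sim\mu}\norm{\tilde W(a)-\prod_W W(\exten{a}{J_W})}_\peq^2\leq O(\eps+\ea).
\]

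Third, push the pure-basis product through $V$. We apply \cref{lemma:chsh} factor by factor, starting from the rightmost:
\[
VW_1W_2\cdots\approx(\sigma_{W_1}\oplus\overline\sigma_{W_1})VW_2\cdots.
\]
To do this we need the state $V\peq V^\dag$ with the remaining prover factors moved to the right. We handle it the way \cref{lemma:compiled-prover-switching} handles it: we bound $\norm{(VW_1-\hat\sigma_{W_1}V)W_2(\cdot)\cdots}_\peq^2$ by twice $\norm{VW_1-\hat\sigma_{W_1}V}_\peq^2$ plus $O(\eps)+\ea$. This step is licensed by \cref{cor:efficient-isometry-state-switching}, because $V$ is efficient and the trailing factors are self-consistent observables. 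The distribution over $\exten{a}{J_W}$, induced by $\tilde W$ uniform and $a\sim\mu$, is efficiently sampleable, so \cref{lemma:chsh} applies with that distribution as its $\mu$. The ideal operators $\sigma_W\oplus\overline\sigma_W$ are block-diagonal, so their products are the direct sums of the products. This yields $(\sigma_{\tilde W}(a)\oplus\overline\sigma_{\tilde W}(a))\ot\id$.

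The main obstacle is the third step. The right-hand removal of prover observables in \cref{lemma:compiled-prover-switching} is stated for LCUs on $\H$. Here the operator $VW_1-\hat\sigma_{W_1}V$ maps $\H$ to $\H'$, so the argument has to be rerun with \cref{cor:efficient-isometry-state-switching} in place of \cref{cor:efficient-operator-state-switching}. One also has to check that the expectation over $\tilde W$, and the correlation between $\tilde W$ and the supports $J_W$, keep everything a single uniformly efficient family. That is needed so that only one cryptographically small function arises, which we obtain via \cref{rem:non-uniformity}.
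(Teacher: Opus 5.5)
Your proposal is correct and takes essentially the same route as the paper. Both first do the bookkeeping that licenses \cref{lemma:chsh} for every pure basis, then use exact linearity and \cref{lemma:mixed-vs-pure-guarantee} (with left unitary invariance and compiled prover switching) to replace $\tilde W(a)$ by $\prod_W W(\exten{a}{J_W})$, then push that product through $V$ one factor at a time; the paper only reverses the order of the last two chains. Your observation that removing the trailing prover observables from $\norm{(VW_1-\hat\sigma_{W_1}V)W_2\cdots}_\peq^2$ needs the isometry form of state switching (\cref{cor:efficient-isometry-state-switching}, rerun via a Hadamard test) is a fair tightening of the paper's step, which cites \cref{cor:compiled-prover-switching-obs} even though that corollary is stated only for operators on $\H$.
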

\begin{proof}
An overall winning probability of $\omega^*-\eps=\tfrac16(5+\cos^2(\tfrac\pi8))-\eps$ in the six equally weighted items of \cref{protocol:cliff-test} guarantees (using the trivial bound of $1$ on the CHSH item) that the prover wins Items 1 to 5 with probability at least $\tfrac15(4+\cos^2(\tfrac\pi8))-O(\eps)$, and each of the perfect-completeness items (in particular Item 6, the mixed-versus-pure basis test) with probability $1-O(\eps+\ea)$, where the $\ea$ term comes from bounding the CHSH item by the compiled Tsirelson bound $\cos^2(\tfrac\pi8)+\ea$ of \cref{eqn:chsh-pi}. Hence we can invoke \cref{lemma:chsh} and conclude that for all $W\in\{X,Y,Z,F,G\}$, any distribution $\mu$ over $\bits^n$ and all $q\in\Qa$
\begin{align}\label{eqn:epbt-pure-guarantee}
\E_{a\sim\mu} \norm{ V W(a) -((\sigma_W(a)\oplus\overline{\sigma}_W(a))\ot\id) V}_\peq^2 \leq O(\eps+\ea) \,,
\end{align}
We introduce the following notation
\begin{align*}
    J_W \coloneqq \{i\in [n]: \tilde W_i=W\},
\end{align*}
where we leave the $\tilde W$-dependence implicit. The following chain of approximate equalities holds under implicit expectation over $a\sim\mu$ and over $\tilde W\in\{X,Y,Z,F,G\}^n$ (uniform):
\begin{align*}
    &VX(\exten{a}{J_X})Z(\exten{a}{J_Z})Y(\exten{a}{J_Y})F(\exten{a}{J_F})G(\exten{a}{J_G})\\
    &\approx_{\tilde\eps}(\sigma_X(\exten{a}{J_X})\ot\id)VZ(\exten{a}{J_Z})Y(\exten{a}{J_Y})F(\exten{a}{J_F})G(\exten{a}{J_G})\\
    &\approx_{\tilde\eps}(\sigma_X(\exten{a}{J_X})\sigma_Z(\exten{a}{J_Z})\ot\id)VY(\exten{a}{J_Y})F(\exten{a}{J_F})G(\exten{a}{J_G})\\
    &\approx_{\tilde\eps} (\sigma_X(\exten{a}{J_X})\sigma_Z(\exten{a}{J_Z})(\sigma_Y(\exten{a}{J_Y})\oplus\overline{\sigma_Y}(\exten{a}{J_Y}))\ot\id) VF(\exten{a}{J_F})G(\exten{a}{J_G})\\
    &\approx_{\tilde\eps} (\sigma_X(\exten{a}{J_X})\sigma_Z(\exten{a}{J_Z})(\sigma_Y(\exten{a}{J_Y})\sigma_F(\exten{a}{J_F})\oplus\overline{\sigma_Y}(\exten{a}{J_Y})\overline{\sigma_F}(\exten{a}{J_F}))\ot\id) VG(\exten{a}{J_G})\\
    &\approx_{\tilde\eps} (\sigma_X(\exten{a}{J_X})\sigma_Z(\exten{a}{J_Z})(\sigma_Y(\exten{a}{J_Y})\sigma_F(\exten{a}{J_F})\sigma_G(\exten{a}{J_G})\oplus\overline{\sigma_Y}(\exten{a}{J_Y})\overline{\sigma_F}(\exten{a}{J_F})\overline{\sigma_G}(\exten{a}{J_G}))\ot\id) V,
\end{align*}
where $\tilde\eps=\eps+\ea$ and every line follows by \cref{eqn:epbt-pure-guarantee}, where we explicitly need the fact that this holds for arbitrary distributions for $a$. We also used left unitary invariance and $\eps$-self-consistency of $(\{W\},U_1)$ for $W\in\{X,Y,Z,F,G\}$, combined with \cref{cor:compiled-prover-switching-obs}. From this we can conclude that for all $q\in\Qa$ and under expectation over $a\sim\mu$ and $\tilde W\in\{X,Y,Z,F,G\}$
\[
\prod_{W\in\{X,Y,Z,F,G\}^n}W(\exten{a}{J_W})\simeq_{\eps+\ea} (\sigma_{\tilde W}(a)\oplus\overline{\sigma}_{\tilde W}(a)) \ot \id.\numberthis\label{eqn:mixed-basis-switch1}
\]
It remains to bound the following term
\begin{align*}
    &\E_{\tilde W\in\{X,Y,Z,F,G\}^n}\E_{a\sim\mu}\norm{V\tilde W(a) - V\paren{\prod_{W\in\{X,Y,Z,F,G\}}W(\exten{a}{J_W})}}_\peq^2\\
    \intertext{By left unitary invariance, because $\tilde W(a)$ is linear and because $\bigcup_{W\in\{X,Y,Z,F,G\}}J_W=[n]$:}
    &=\E_{\tilde W\in\{X,Y,Z,F,G\}^n}\E_{a\sim\mu}\norm{\paren{\prod_{W\in\{X,Y,Z,F,G\}}\tilde W(\exten{a}{J_W})} - \paren{\prod_{W\in\{X,Y,Z,F,G\}}W(\exten{a}{J_W})}}_\peq^2\\
    &\leq O(\eps+\ea),\numberthis\label{eqn:mixed-basis-switch2}
\end{align*}
here the final upper bound is obtained by repeatedly using \cref{lemma:mixed-vs-pure-guarantee}, since the prover succeeds in $\textprotocol{mbt}(\{X,Y,Z,F,G\},n, U_{5^n})$. Furthermore we used left unitary invariance and compiled prover switching (\cref{lemma:compiled-prover-switching}), since the wrapped $\textprotocol{slc}$ subtests of $\textprotocol{crel}$ guarantee $\eps$-self-consistency of $(\{W\},U_1)$ for $W\in\{X,Y,Z,F,G\}$, by \cref{lemma:compiled-consistency}.
\par 
\medskip
Combining \cref{eqn:mixed-basis-switch1} and \cref{eqn:mixed-basis-switch2}, yields
\begin{align*}
\E_{\tilde W\in\{X,Y,Z,F,G\}^n}\E_{a\sim\mu} \norm{ V \tilde W(a) - ((\sigma_{\tilde W}(a)\oplus\overline{\sigma}_{\tilde W}(a)) \ot \id) V }_\peq^2 \leq O(\eps+ \ea),
\end{align*}
which completes the proof.
\end{proof}

\subsection{State characterization}
We want to show that during both parts of the interaction (encrypted and unencrypted), the prover must be consistently using the canonical or complex conjugated irreducible representation of the extended Pauli group. The following lemma plays an important part in this as it shows shows that coherences in the flag register are computationally indetectable.

\begin{lemma}\label{lemma:cliff-delta-self-consistent}
    For any computationally efficient prover modeled as in Section \ref{section:modeling} that wins with probability $1 - \eps$ in Items 1 to 4 of the compiled Clifford test $\textprotocol{cliff}(X,Y,Z,F,G,n)$, obtained from \cref{protocol:cliff-test}, there exist complex Hilbert spaces $\cH',\hat\cH$ of finite dimension and an efficient isometry $V: \cH \to \cH' \cong(\C^2)^n\ot \C^2\ot\hat\cH$ (from \cref{def:clifford-isometry}) such that for every uniformly efficient family of two-outcome POVMs $\{M_q,\id-M_q\}_{q\in\Qa}$ in $L(\cH')$, indexed by the Alice question, and every efficiently sampleable distribution $\nu$ over $\Qa$, there exists a cryptographically small function $\ea$ such that
    \begin{align*}
    \left|\E_{q\sim\nu}\Tr{M_q\left((\id\ot\sigma_Z\ot\id) V\peq V^\dag(\id\ot\sigma_Z\ot\id) - V\peq V^\dag\right)}\right|\leq O((\eps+\ea)^{1/2}).
    \end{align*}
\end{lemma}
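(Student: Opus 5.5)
The plan is to realise the flag operator $\id\ot\sigma_Z\ot\id$, on the support of $V\peq V^\dag$, as the image under the isometry of the prover's phase operator $\Delta=\Delta(e_1)=Y(e_1)Z(e_1)X(e_1)$ (up to the global phase $i$). Conjugating the state by $\Delta$ is then conjugation by three $\eps$-self-consistent observables, each of which is computationally undetectable by \cref{lemma:compiled-prover-switching-povm}.

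\emph{Step 1: $\Delta$ rounds to $\pi(\omega)$.} Apply \cref{theorem:GH} (via \cref{cor:gh-efficient}) with $g=\omega$. By \cref{def:prover-approx-rep}, $f(\omega)=\Delta$, so I need to bound $\E_{h\in C_n}\norm{f(h)\Delta-f(h\omega)}^2_\peq$. Write $h=\omega^p x(a)g(b)z(c)$. Then $f(h\omega)=(-1)^{s(p+1)}\Delta^{r(p+1)}X(a)G(b)Z(c)$ and $f(h)\Delta=(-1)^{s(p)}\Delta^{r(p)}X(a)G(b)Z(c)\Delta$. I will commute $\Delta$ leftwards past $Z(c)$, $G(b)$ and $X(a)$ using the third conclusion of \cref{claim:delta-commutes}, with compiled prover switching (\cref{lemma:compiled-prover-switching}) and self-consistency from \cref{lemma:epbt-basic-guarantee}. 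I then merge the powers using $\Delta^2\approx-\id$ (first conclusion of \cref{claim:delta-commutes}). This gives
\[
\norm{V\Delta-(\pi(\omega)\ot\id)V}^2_\peq\leq O(\eps+\ea).
\]
By \cref{eqn:pi-close-fundamental}, $V\peq V^\dag$ is supported exactly on the $\pi_-$ block, i.e.\ on the quantum irreps. There \cref{lemma:clifford-representations} gives $\pi(\omega)=i\id$ on the canonical summand and $-i\id$ on the conjugate summand. With the basis ordering of \cref{subsec:QFT}, this means $\pi(\omega)=i(\id\ot\sigma_Z\ot\id)$ on that support, exactly as in the identification used in \cref{lemma:EPBT-pure-basis}. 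Hence $\norm{(\id\ot\sigma_Z\ot\id)V-V(-i\Delta)}^2_\peq\leq O(\eps+\ea)$.

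\emph{Step 2: replace the flag conjugation by $\Delta$-conjugation.} For any pure $\ket\phi$ and unitaries $U_1,U_2$, the trace distance between $U_1\phi U_1^\dag$ and $U_2\phi U_2^\dag$ is at most $2\norm{(U_1-U_2)\ket\phi}$. Extending this to mixed states by Cauchy--Schwarz over the eigendecomposition, the bound from Step 1 gives
\[
\norm{(\id\ot\sigma_Z\ot\id)V\peq V^\dag(\id\ot\sigma_Z\ot\id)-V\Delta\peq\Delta^\dag V^\dag}_1\leq O((\eps+\ea)^{1/2}).
\]
The global phase $-i$ cancels under conjugation. Averaging over $q\sim\nu$ is harmless, provided I obtain a single $\ea$ via the non-uniformity argument of \cref{rem:non-uniformity}.

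\emph{Step 3: peel off $Y,Z,X$.} The quantity $\Tr{M_qV\Delta\peq\Delta^\dag V^\dag}$ equals $\Tr{X(e_1)N^{(1)}_qX(e_1)\,\peq}$ with $N^{(1)}_q=Z(e_1)Y(e_1)V^\dag M_qVY(e_1)Z(e_1)$. Since $V$ is efficient (\cref{cor:gh-efficient}) and the observables are efficient (\cref{lemma:observables-efficient-unitaries}), $\{N^{(1)}_q,\id-N^{(1)}_q\}_q$ is a uniformly efficient two-outcome POVM family. I apply \cref{lemma:compiled-prover-switching-povm} with $B=X$; the $\eps$-self-consistency of $(\{X\},U_1)$ comes from the wrapped $\textprotocol{slc}$ in Item 1. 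This removes $X(e_1)$ at cost $O(\eps^{1/2})+\ea$. I then repeat for $Z$ and $Y$, with the correspondingly smaller conjugated POVMs, and arrive at $\Tr{M_qV\peq V^\dag}$. Summing the errors of Steps 2 and 3 gives the claimed $O((\eps+\ea)^{1/2})$.

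\emph{Expected obstacle.} I expect the main difficulty to be Step 1. The commutation guarantee for $\Delta$ is stated under an expectation over $b\sim\mu$ for the observables $W(b)$, whereas here the arguments $a,b,c$ are uniform and appear simultaneously. Keeping the ordering of the compiled prover-switching steps correct, so that every operator removed on the right is self-consistent, needs care. Pinning down that the flag qubit in $\cH'$ coincides with the $\pm i$ eigenspaces of $\pi(\omega)$ is mostly bookkeeping of the QFT basis ordering.
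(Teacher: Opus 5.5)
Your proposal is correct, and its skeleton matches the paper's. In both arguments, conjugation by the flag $\id\ot\sigma_Z\ot\id$ is identified, after the isometry, with conjugation by $\Delta=Y(e_1)Z(e_1)X(e_1)$. The three self-consistent observables are then stripped off with \cref{lemma:compiled-prover-switching-povm}. The two proofs differ only in how the relation $V\Delta\approx i(\id\ot\sigma_Z\ot\id)V$ is obtained.

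\textbf{The paper's route.} The paper never forms $\Delta$. It starts from the single-observable guarantees of \cref{lemma:cliff-sign-consistent} for $X(a),Z(a),Y(a)$ with $|a|=1$, where the conjugation ambiguity shows up as $\sigma_Y\oplus\overline{\sigma}_Y=\sigma_Y\ot\sigma_Z$. It upgrades these to $\E_{q\sim\nu}$ with \cref{cor:efficient-isometry-state-switching}. It then interleaves six in-trace steps: insert $X$, push through $V$, insert $Z$, push through, use $\sigma_X\sigma_Z\propto\sigma_Y$ to pull back as $Y$, remove $Y$.

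\textbf{Your route.} You apply \cref{theorem:GH} once more, at the central element $\omega$. This costs an extra check that $f$ is approximately multiplicative at $\omega$, and that check does go through as you sketch:
\begin{itemize}
\item commute $\Delta$ leftward past $Z(c)$ using conclusion 3 of \cref{claim:delta-commutes} and left unitary invariance;
\item commute it past $G(b)$ and $X(a)$ after removing the factors to its right with \cref{lemma:compiled-prover-switching};
\item merge powers with conclusion 1 of \cref{claim:delta-commutes}.
\end{itemize}
So the ordering obstacle you anticipate is harmless. In exchange, you read the flag off directly as $-i\pi(\omega)$ on the $\pi_-$ support given by \cref{eqn:pi-close-fundamental} and \cref{lemma:clifford-representations}.

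\textbf{What each buys.} Your route is more conceptual. It makes explicit that the flag is the central character separating the $\ell=1$ and $\ell=3$ irreps, and that its prover-side representative is a product of self-consistent observables. It also separates an operator-level rounding statement from the computational peeling step. The paper's route avoids a new GH application by reusing the per-generator rigidity it has already proved.

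One simplification for your Step 2: $\norm{V\Delta-i(\id\ot\sigma_Z\ot\id)V}^2_\peq$ is already of the form covered by \cref{cor:efficient-isometry-state-switching}. That gives the average over $q\sim\nu$ with a single cryptographically small function, without appealing to \cref{rem:non-uniformity}.
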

\begin{proof}
    Throughout the proof write $\tilde Z\coloneqq\id\ot\sigma_Z\ot\id$ and fix a specific $a\in\zo^n$ with $|a|=1$. From \cref{lemma:cliff-sign-consistent} we know that there exists an efficient isometry $V:\cH\to\cH'\cong (\C^2)^n\ot\C^2\ot\hat\cH$, such that for all $W\in\{X,Y,Z\}$ and all $q\in\Qa$ there exists a cryptographically small function $\ea$ with
    \[
    \norm{ V W(a) -(\sigma_W(a) \ot\Sigma_W^{|a|}\ot\id) V }_\peq^2 \leq O(\eps+\ea),
    \]
    with $\Sigma_X=\Sigma_Z=\id$ and $\Sigma_Y=\sigma_Z$. Here we used the identification $A\oplus A \cong A\otimes \C^2$ and the fact that for $W\in\{X,Y,Z\}$ the complex conjugate becomes a scalar factor on the second system. The norm is of the isometry form $\norm{VW(a)-NV}_\peq^2$ with $W(a)$ and $N\coloneqq\sigma_W(a)\ot\Sigma_W^{|a|}\ot\id$ uniformly efficient unitaries (independent of the Alice question) on $\cH$ and $\cH'$ respectively, so \cref{cor:efficient-isometry-state-switching} applied with $D_1$ a point distribution on an arbitrary fixed question and $D_2$ sampling $q\sim\nu$ upgrades the bound to an expectation over $\nu$, with a single cryptographically small function depending on $\nu$:
    \[
    \E_{q\sim\nu}\norm{ V W(a) -(\sigma_W(a) \ot\Sigma_W^{|a|}\ot\id) V }_\peq^2 \leq O(\eps+\ea)\numberthis\label{eqn:cliff-delta-self-consistent-bound}.
    \]
    By Cauchy--Schwarz, for every family of operators $\{N_q\}$ with $\norm{N_q}\leq 1$ and every $W\in\{X,Y,Z\}$,
    \begin{align*}
    &\left|\Tr{N_q\left(VW(a)\peq W(a)V^\dag - (\sigma_W(a)\ot\Sigma_W^{|a|}\ot\id)V\peq V^\dag(\sigma_W(a)\ot\Sigma_W^{|a|}\ot\id)\right)}\right|\\
    &\leq 2\norm{ V W(a) -(\sigma_W(a) \ot\Sigma_W^{|a|}\ot\id) V }_\peq.
    \end{align*}
    Taking the expectation over $q\sim\nu$ and using $\bigl|\E f\bigr|\le\E|f|$ together with Jensen's inequality, $\E_\nu\norm{\cdot}_\peq\leq\sqrt{\E_\nu\norm{\cdot}^2_\peq}$, we obtain with \cref{eqn:cliff-delta-self-consistent-bound} that
    \begin{align*}
    &\left|\E_{q\sim\nu}\Tr{N_q\left(VW(a)\peq W(a)V^\dag - (\sigma_W(a)\ot\Sigma_W^{|a|}\ot\id)V\peq V^\dag(\sigma_W(a)\ot\Sigma_W^{|a|}\ot\id)\right)}\right|\\
    &\leq O((\eps+\ea)^{1/2}).\numberthis\label{eqn:push-through-V}
    \end{align*}
    Furthermore, for every uniformly efficient family of POVM elements $\{N_q\}_q$ in $L(\cH')$ the family $\{V^\dag N_q V\}_q$ is a uniformly efficient family of POVM elements in $L(\cH)$ ($V$ is an efficient isometry and $0\preceq V^\dag N_q V\preceq V^\dag V=\id$), so \cref{lemma:compiled-prover-switching-povm} (with the $\eps$-self-consistent question distributions certified by the consistency test, \cref{lemma:compiled-consistency}) lets us insert or remove the observables $X(a)$, $Z(a)$ and $Y(a)$ around $\peq$ at cost $O(\eps^{1/2}+\ea)$ in the signed average over $q\sim\nu$. With this we can make the following observation:
    \begin{align*}
        &\E_{q\sim\nu}\Tr{M_q\tilde Z V\peq V^\dag\tilde Z}\\  &\approx_{\sqrt{\eps}+\ea} \E_{q\sim\nu} \Tr{M_q\tilde Z VX(a)\peq X(a)V^\dag\tilde Z } & \text{by \cref{lemma:compiled-prover-switching-povm}}\\
        &\approx_{(\eps+\ea)^{1/2}} \E_{q\sim\nu}\Tr{(\sigma_X(a)\ot\sigma_Z\ot\id)^\dag M_q(\sigma_X(a)\ot\sigma_Z\ot\id) V\peq V^\dag} & \text{by \cref{eqn:push-through-V}}\\
        &\approx_{\sqrt{\eps}+\ea} \E_{q\sim\nu}\Tr{(\sigma_X(a)\ot\sigma_Z\ot\id)^\dag M_q(\sigma_X(a)\ot\sigma_Z\ot\id) VZ(a)\peq Z(a)V^\dag} & \text{by \cref{lemma:compiled-prover-switching-povm}}\\
        &\approx_{(\eps+\ea)^{1/2}} \E_{q\sim\nu}\Tr{(\sigma_Y(a)\ot\sigma_Z\ot\id)^\dag M_q(\sigma_Y(a)\ot\sigma_Z\ot\id) V\peq V^\dag} & \text{by \cref{eqn:push-through-V}} \\
        &\approx_{(\eps+\ea)^{1/2}} \E_{q\sim\nu}\Tr{M_q VY(a)\peq Y(a)V^\dag} & \text{by \cref{eqn:push-through-V}}\\
        &\approx_{\sqrt{\eps}+\ea} \E_{q\sim\nu}\Tr{M_qV\peq V^\dag}. & \text{by \cref{lemma:compiled-prover-switching-povm}}
    \end{align*}
    In the second-to-last step $\sigma_Y(a)=i\sigma_X(a)\sigma_Z(a)$ is used and \cref{eqn:push-through-V} is applied with $W=Y$: since $|a|=1$, the post-isometry action of $Y(a)$ is exactly $\sigma_Y(a)\ot\Sigma_Y\ot\id=\sigma_Y(a)\ot\sigma_Z\ot\id$. All families of POVM elements appearing on the left of the traces are uniformly efficient in $q$, since $M_q$ is and all conjugating unitaries are fixed efficient operators. Combining the six steps yields
    \[
    \left|\E_{q\sim\nu}\Tr{M_q\tilde Z V\peq V^\dag\tilde Z}-\E_{q\sim\nu}\Tr{M_qV\peq V^\dag}\right|\leq O((\eps+\ea)^{1/2}),
    \]
    which completes the proof.
\end{proof}

The encryption forbids the prover from centrally keeping track of his sign choice. The consistency can thus only be achieved by something similar to an entangled pair, which keeps track of the choice, and is allowed by the encryption as it is `transparent' to entanglement (cf.\ the correctness with auxiliary input property in \cite{Kalai2023}). Since Bob only has partial information about this shared system, this effectively (under computational assumptions) dephases the `flag' register, which allows us to conclude that the prover can't be applying a coherent superposition of the canonical and the complex conjugate of the Pauli $Y$ observable.

\begin{corollary}\label{cor:flag-dephasing}
    For any computationally efficient prover modeled as in Section \ref{section:modeling} that wins with probability $1 - \eps$ in Items 1 to 4 of the compiled Clifford test $\textprotocol{cliff}(X,Y,Z,F,G,n)$, obtained from \cref{protocol:cliff-test}, there exist complex Hilbert spaces $\cH',\hat\cH$ of finite dimension and an efficient isometry $V: \cH \to \cH' \cong(\C^2)^n\ot \C^2\ot\hat\cH$ (from \cref{def:clifford-isometry}) such that for every uniformly efficient family of two-outcome POVMs $\{M_q,\id-M_q\}_{q\in\Qa}$ in $L(\cH')$ and every efficiently sampleable distribution $\nu$ over $\Qa$, there exists a cryptographically small function $\ea$ such that
    \begin{align*}
    \left|\E_{q\sim\nu}\Tr{M_q\left(\mathcal{D}_Z(V\peq V^\dag) - V\peq V^\dag\right)}\right|\leq O((\eps+\ea)^{1/2}),
    \end{align*}
    where $\mathcal{D}_Z$ is the dephasing operation on the phase ambiguity flag register, i.e.\
    \[
    \mathcal{D}_Z(\rho) = \frac{1}{2}\left(\rho+(\id\ot\sigma_Z\ot\id)\rho(\id\ot\sigma_Z\ot\id)\right).
    \]
\end{corollary}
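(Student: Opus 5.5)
This corollary should follow directly from \cref{lemma:cliff-delta-self-consistent} by linearity of the trace. The plan is to take the isometry $V$ of \cref{def:clifford-isometry}, with the identification $\cH'\cong(\C^2)^n\ot\C^2\ot\hat\cH$ used in that lemma, and write $\tilde Z\coloneqq\id\ot\sigma_Z\ot\id$. Expanding the definition of $\mathcal{D}_Z$ gives
\[
\mathcal{D}_Z(V\peq V^\dag)-V\peq V^\dag=\tfrac12\left(\tilde Z V\peq V^\dag\tilde Z - V\peq V^\dag\right).
\]
Hence for any uniformly efficient family $\{M_q,\id-M_q\}_{q\in\Qa}$ and any efficiently sampleable $\nu$,
\[
\left|\E_{q\sim\nu}\Tr{M_q\left(\mathcal{D}_Z(V\peq V^\dag)-V\peq V^\dag\right)}\right|=\tfrac12\left|\E_{q\sim\nu}\Tr{M_q\left(\tilde Z V\peq V^\dag\tilde Z-V\peq V^\dag\right)}\right|.
\]

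Next we invoke \cref{lemma:cliff-delta-self-consistent} with the same family $\{M_q\}$ and the same distribution $\nu$. Its hypotheses are exactly ours: winning probability $1-\eps$ in Items 1 to 4 of the compiled Clifford test, a uniformly efficient two-outcome POVM family indexed by the Alice question, and an efficiently sampleable $\nu$. So the right-hand side is at most $\tfrac12\cdot O((\eps+\ea)^{1/2})$, and the factor $\tfrac12$ is absorbed into the $O(\cdot)$.

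The only point needing care is bookkeeping. The cryptographically small function must be the single one that \cref{lemma:cliff-delta-self-consistent} produces for this particular $(\{M_q\},\nu)$, and the isometry and the decomposition of $\cH'$ must be the same ones used there. Both hold by construction, since we reuse the lemma's objects verbatim. I expect no real obstacle: all the substantive work, namely pushing the operators $X(a)$, $Y(a)$, $Z(a)$ with $|a|=1$ through $V$ and using $\sigma_Y=i\sigma_X\sigma_Z$ to generate the flag-register $\sigma_Z$ conjugation, is already done in the lemma.
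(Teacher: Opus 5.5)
Your proposal is correct and matches the paper's proof: expand $\mathcal{D}_Z$ to get $\tfrac12\bigl(\tilde Z V\peq V^\dag\tilde Z - V\peq V^\dag\bigr)$, then apply \cref{lemma:cliff-delta-self-consistent} with the same $\{M_q\}$ and $\nu$, absorbing the factor $\tfrac12$ into the $O(\cdot)$.
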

\begin{proof}
    Inserting the definition of $\mathcal{D}_Z$,
    \begin{align*}
    &\left|\E_{q\sim\nu}\Tr{M_q\left(\mathcal{D}_Z(V\peq V^\dag) - V\peq V^\dag\right)}\right|\\
    &=\frac12\left|\E_{q\sim\nu}\Tr{M_q\left((\id\ot\sigma_Z\ot\id)V\peq V^\dag (\id\ot\sigma_Z\ot\id) - V\peq V^\dag\right)}\right|,
    \end{align*}
    so the claim follows directly from \cref{lemma:cliff-delta-self-consistent} (absorbing the factor $\tfrac12$ into the asymptotic notation).
\end{proof}

The next lemma is an important tool for our later analysis, as it allows us to decouple the qubit register from the verifier's encrypted question and the prover's encrypted answer. It shows that, up to efficient distinguishers, the post-Alice state satisfies an important property, which would be expected from EPR pairs being shared between the two parts of the prover. It is the closest we come to some kind of replacement for certifying EPR pairs (inter-prover), as can be achieved in the nonlocal setting.

\begin{lemma}\label{lemma:state-characterization}
    For any computationally efficient prover modeled as in Section \ref{section:modeling} that wins with probability $1 - \eps$ in the compiled Clifford group test $\textprotocol{cliff-group}(X,Y,Z,F,G,n)$, obtained from \cref{protocol:cliff-group}, there exists a complex Hilbert space $\cH'$ of finite dimension and an efficient isometry $V: \cH \to (\C^2)^n \ot\cH'$ (from \cref{def:clifford-isometry}) such that for all uniformly efficient POVM families $\{M_\alpha,\id-M_\alpha\}_\alpha$ and all $q\in\Qa$, there exists a cryptographically small function $\ea$ such that
    \begin{align*}
    \left|\sum_\alpha\Tr{M_\alpha\paren{V\pea{q} V^\dag - \frac{\id}{2^n}\ot\Tr_1\left[V\pea{q} V^\dag\right]}}\right|\leq O((\eps+\ea)^{1/2}),
    \end{align*}
    where $\Tr_1[\cdot]$ denotes the partial trace over the $(\C^2)^n$ register. In words, the post-isometry post-Alice state is computationally indistinguishable (with some advantage) from being fully mixed on the qubit register.
\end{lemma}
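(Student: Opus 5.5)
The plan is to use a Pauli twirl on the qubit register of $V\pea{q}V^\dag$. The twirl identity is
\[
\frac{1}{4^n}\sum_{x,z\in\zo^n}(\sigma_X(x)\sigma_Z(z)\ot\id)\,\rho\,(\sigma_Z(z)\sigma_X(x)\ot\id)=\frac{\id}{2^n}\ot\Tr_1[\rho].
\]
It therefore suffices to show that conjugating $V\pea{q}V^\dag$ by $\sigma_X(x)\sigma_Z(z)\ot\id$ is computationally undetectable. This should hold on average over uniform $(x,z)$ and summed over $\alpha$ against the family $\{M_\alpha\}$. The twirl is an exact expectation, so linearity reduces the statement to bounding
\[
\E_{x,z}\Bigl|\sum_\alpha\Tr{\bigl(P_{x,z}^\dag M_\alpha P_{x,z}-M_\alpha\bigr)V\pea{q}V^\dag}\Bigr|,
\]
where $P_{x,z}\deq\sigma_X(x)\sigma_Z(z)\ot\id$. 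The Pauli $\sigma_X,\sigma_Z$ act only on the qubit factor, so the conjugate-representation branch plays no role.

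Each conjugation is moved back to the prover in three steps, following the proof of \cref{lemma:cliff-delta-self-consistent}. First, by \cref{lemma:EPBT-pure-basis}, for $W\in\{X,Z\}$ and $a$ uniform we have $\E_a\norm{VW(a)-(\sigma_W(a)\ot\id)V}^2_{\pea{q}}$ summed over $\alpha$, which equals the bound on $\pe{q}$ and is at most $O(\eps+\ea)$. Second, the Cauchy--Schwarz estimate of \eqref{eqn:push-through-V} then replaces $P_{x,z}V\pea{q}V^\dag P_{x,z}^\dag$ by $VZ(z)X(x)\pea{q}X(x)Z(z)V^\dag$, at cost $O((\eps+\ea)^{1/2})$ in the $\alpha$-summed trace. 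This uses uniform $x,z$ and Jensen, and handles the two factors sequentially using left unitary invariance. Third, $\{V^\dag M_\alpha V\}$ is a uniformly efficient POVM family, so an $\alpha$-dependent version of \cref{lemma:compiled-prover-switching-povm} removes the conjugations by $X(x)$ and $Z(z)$, at cost $O(\eps^{1/2}+\ea)$. That version comes from repeating its proof with \cref{lemma:poly-state-indistinguishability} in its $\alpha$-dependent form, switching the state from $\pea{q}$ to $\pea{X}$ (resp.\ $\pea{Z}$) and back. The middle term is controlled by $\sum_\alpha\norm{W(a)\pea{W}-\pea{W}W(a)}_1=O(\eps^{1/2})$ from \cref{lemma:self-consistent-to-trace-norm-crit}. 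Here $(\{X,Y,Z\},U_3)$ is $\eps$-self-consistent by \cref{lemma:epbt-basic-guarantee}. The parameter $a$ is averaged uniformly, and the averages over $x,z$ are folded into the distribution $D$ of a single IND-CPA reduction.

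I expect the main obstacle to be the third step. The state-switching lemmas were stated with $\alpha$-independent operators, but the claim keeps the dependence on the encrypted answer $\alpha$. Switching $\pea{q}\to\pea{W}$ with an $\alpha$-dependent measurement is legitimate by \cref{lemma:poly-state-indistinguishability}, which explicitly allows $M_{x,\alpha}$. However, the commutation estimate on $\pea{W}$ must be used in its $\alpha$-summed trace-norm form, not through the marginal state. This is exactly what \cref{lemma:self-consistent-to-trace-norm-crit} provides, so I would check that the three-term telescoping works termwise in $\alpha$.

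Finally, the contributions are summed. The push-through steps cost $O((\eps+\ea)^{1/2})$ and dominate the switching costs $O(\eps^{1/2}+\ea)$. By the convention of \cref{rem:non-uniformity}, the constantly many cryptographically small functions collapse into a single $\ea$, which gives the claimed bound.
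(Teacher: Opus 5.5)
Your overall strategy matches the paper's: Pauli twirl on the qubit register, push $\sigma_X,\sigma_Z$ through $V$ using \cref{lemma:EPBT-pure-basis} plus Cauchy--Schwarz, and remove prover conjugations via the $\alpha$-summed trace-norm commutator bound of \cref{lemma:self-consistent-to-trace-norm-crit} with $\alpha$-dependent state switching. The step you flag as the main obstacle is fine and is handled exactly as you describe.

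\textbf{The gap is in your second step.} There you push the whole product $\sigma_Z(z)\sigma_X(x)$ through $V$ on $\pea{q}$ in one go. After the first factor is pushed, the remaining error is
\[
\sum_\alpha\norm{\bigl(\sigma_Z(z)V-VZ(z)\bigr)X(x)}^2_{\pea{q}},
\]
which is the push-through error for $Z(z)$ measured on the state $X(x)\pea{q}X(x)$. \Cref{lemma:EPBT-pure-basis} only controls push-through errors on bare post-Alice states $\psi^{\Enc(q')}$. Left unitary invariance only removes unitaries on the left of the norm argument, so it cannot remove $X(x)$ here. Pushing the factors in the other order, or first commuting $Z(z)$ past $X(x)$ using \cref{lemma:epbt-basic-guarantee}, just moves the same problem to the other factor.

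\textbf{The fix} is to interleave your second and third steps, so that every push-through acts on a bare state.
\begin{enumerate}
\item Write the twirl term as $\sum_\alpha\Tr{(\sigma_Z(z)\ot\id)M_\alpha(\sigma_Z(z)\ot\id)\,(\sigma_X(x)\ot\id)V\pea{q}V^\dag(\sigma_X(x)\ot\id)}$. The conjugated POVM is still uniformly efficient.
\item Push $\sigma_X(x)$ through on $\pea{q}$.
\item Remove the resulting $X(x)$-conjugation with your $\alpha$-dependent version of \cref{lemma:compiled-prover-switching-povm}. This lands you back on the $\sigma_Z(z)$-conjugated $V\pea{q}V^\dag$.
\item Repeat steps 2 and 3 for $\sigma_Z(z)$.
\end{enumerate}

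Alternatively, you could strip the right-hand $X(x)$ by compiled prover switching in norm form. Since $\sigma_Z(z)V-VZ(z)$ is not an operator on $\cH$, this needs the isometry version of state switching (\cref{cor:efficient-isometry-state-switching}), not \cref{lemma:compiled-prover-switching} as stated.

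The paper takes the interleaved route in insert-then-push order:
\begin{enumerate}
\item switch $\pea{q}\to\pea{X}$;
\item insert the $X(a)$-conjugation via \cref{lemma:self-consistent-to-trace-norm-crit};
\item push through $V$ on $\pea{X}$;
\item switch to $\pea{Z}$ with the conjugated POVM $N_{a,\alpha}$;
\item repeat steps 2 and 3 for $Z(b)$;
\item switch back to $\pea{q}$.
\end{enumerate}
That way only the $\alpha$-dependent form of \cref{lemma:poly-state-indistinguishability} is ever needed. With this correction your proposal becomes essentially the paper's proof, with the same $O((\eps+\ea)^{1/2})$ accounting.
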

\begin{proof}
    Let $q\in\Qa$ and the uniformly efficient POVM family $\{M_\alpha,\id-M_\alpha\}_\alpha$ with $0\preceq M_\alpha\preceq \id$ be arbitrary. By \cref{lemma:EPBT-pure-basis}, for $W\in\{X,Z\}$ and all $a\in\zo^n$,
    \[
    \norm{VW(a)-(\sigma_W(a)\ot\id)V}_\peq^2\leq O(\eps+\ea)
    \]
    By Cauchy--Schwarz, the triangle inequality, and the above we have for any $M_\alpha$ with $\|M_\alpha\|_\infty\leq 1$, $W\in\{X,Z\}$ and any $a\in\zo^n$
    \begin{align*}
    \Big|\sum_\alpha&\Tr\Big[M_\alpha\paren{VW(a)\pea{q} W(a)V^\dag-(\sigma_W(a)\ot\id)V\pea{q}V^\dag(\sigma_W(a)\ot\id)}\Big]\Big|\\
    &\leq 2\left(\sum_\alpha\norm{VW(a)-(\sigma_W(a)\ot\id)V}_\pea{q}^2\right)^{1/2} \left(\sum_\alpha\norm{M_\alpha^\dag VW(a)}_\pea{q}^2\right)^{1/2}\\
    &\leq O((\eps+\ea)^{1/2}).\numberthis\label{eqn:intrace-isometry-switch}
    \end{align*}
    With this we can make the following observation
    \begin{align*}
        &\E_{a,b}\sum_\alpha\Tr{ M_\alpha V\pea{q} V^\dag}\\
        &\approx_{\ea}\E_{a,b}\sum_\alpha\Tr{M_\alpha V\pea{X}V^\dag} & \text{by \cref{lemma:poly-state-indistinguishability}}\\
        &\approx_{\sqrt{\eps}}\E_{a,b}\sum_\alpha\Tr{M_\alpha VX(a)\pea{X}X(a)V^\dag} & \text{by \cref{lemma:self-consistent-to-trace-norm-crit}}\\
        &\approx_{\sqrt{\eps+\ea}}\E_{a,b}\sum_\alpha\Tr{M_\alpha(\sigma_X(a)\ot\id)V\pea{X} V^\dag(\sigma_X(a)\ot\id)} & \text{by \cref{eqn:intrace-isometry-switch}}\\
        &\approx_{\ea}\E_{a,b}\sum_\alpha\Tr\Big[\overbrace{(\sigma_X(a)\ot\id)M_\alpha(\sigma_X(a)\ot\id)}^{N_{a,\alpha}}V\pea{Z} V^\dag\Big] & \text{by \cref{lemma:poly-state-indistinguishability}}\\
        &\approx_{\sqrt{\eps}}\E_{a,b}\sum_\alpha\Tr{N_{a,\alpha} VZ(b)\pea{Z}Z(b)V^\dag} & \text{by \cref{lemma:self-consistent-to-trace-norm-crit}}\\
        &\approx_{\sqrt{\eps+\ea}}\E_{a,b}\sum_\alpha\Tr{N_{a,\alpha}(\sigma_Z(b)\ot\id)V\pea{Z} V^\dag(\sigma_Z(b)\ot\id)} & \text{by \cref{eqn:intrace-isometry-switch}}\\
        &\approx_{\ea}\E_{a,b}\sum_\alpha\Tr{M_{(a,b),\alpha}' V\pea{q}V^\dag} & \text{by \cref{lemma:poly-state-indistinguishability}}\\
    \end{align*}
    where $M_{(a,b),\alpha}'\deq (\sigma_Z(b)\sigma_X(a)\ot\id)M_\alpha(\sigma_X(a)\sigma_Z(b)\ot\id)$. Evaluating the uniform expectation over $a,b\in\zo^n$ and noticing that only $M'_{(a,b),\alpha}$ depends on $a,b$, we obtain
    \[
        \sum_\alpha\Tr{ M_\alpha V\pea{q} V^\dag}\approx_{\sqrt{\eps+\ea}} \sum_\alpha\Tr{\left(\frac{\id}{2^n}\ot\Tr_1[M_\alpha]\right)V\pea{q} V^\dag},\numberthis\label{eqn:trace-state-second-step}
    \]
    by the Pauli twirl identity:
    \[
    \E_{a,b\in\zo^n}(\sigma_Z(b)\sigma_X(a)\ot\id)M_\alpha(\sigma_X(a)\sigma_Z(b)\ot\id) = \frac{\id}{2^n}\ot\Tr_1[M_\alpha].
    \]
    Finally, we can use \cref{eqn:trace-state-second-step} to conclude that 
    \[
        \sum_\alpha\Tr{ M_\alpha V\pea{q} V^\dag}\approx_{\sqrt{\eps+\ea}} \sum_\alpha\Tr{M_\alpha\left(\frac{\id}{2^n}\ot\Tr_1\left[V\pea{q} V^\dag\right]\right)},
    \]
    which follows from the following property of the partial trace 
    \[
    \Tr{(\id\ot \Tr_1[A])B} = \Tr\left[A\left(\id\ot\Tr_1\left[B\right]\right)\right],
    \]
    completing the proof.
\end{proof}

An important part of obtaining a remote state preparation guarantee in our setting is characterizing the post-Alice state. From the different rigidity tests we already obtained characterizations of the Bob observables in the previous section. Now we will relate these observable characterizations to a state characterization, which we can do through the state-dependence of the norm we are using (\cref{def:state-dependent-norm}). 

\begin{lemma}\label{lemma:state-rounding-it}
    For any computationally efficient prover modeled as in Section \ref{section:modeling} that wins with probability $\omega^* - \eps$ in the compiled Clifford test $\textprotocol{cliff}(X,Y,Z,F,G,n)$, obtained from \cref{protocol:cliff-test}, there exists a complex Hilbert space $\hat\cH$ of finite dimension, an efficient isometry $V: \cH \to (\C^2)^n\ot \C^2 \ot\hat\cH$ (from \cref{def:clifford-isometry}), and a cryptographically small function $\ea$ such that
    \begin{align*}
    \E_{\tilde W}\sum_\alpha\norm{P_{\tilde W}^{\Dec(\alpha)} V\pea{\tilde W}V^\dag P_{\tilde W}^{\Dec(\alpha)} - V\pea{\tilde W}V^\dag}_1\leq O((\eps+\ea)^{1/2}),
    \end{align*}
    where $\omega^*=\tfrac{1}{6}(5+\cos^2(\frac{\pi}{8}))$ is the optimal winning probability of the Clifford test (an honest prover achieves it up to a cryptographically small loss, and no efficient prover can exceed it by more than $\ea$, cf.\ \cref{lemma:chsh}) and
    \[
    P_{\tilde W}^{a} = \sum_{k\in\zo}\tau_{\tilde W,k}^{a}\ot\ket{k}\bra{k}\ot\id,
    \]
    with $\tau_{\tilde W, 0}^{a}=\tau_{\tilde W}^{a}$, $\tau_{\tilde W, 1}^{a} = \overline{\tau}_{\tilde W}^{a}$. In words, the post-isometry post Alice measurement state is information theoretically close to fully lying inside the image of the Bob projectors corresponding to the reported outcome of the requested measurement.
\end{lemma}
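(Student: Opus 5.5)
The plan is to reduce the trace-norm bound to a state-dependent norm bound and obtain the latter from the mixed-basis rigidity statement, \cref{theorem:clifford-mixed-basis}. Fix $\tilde W$ and $\alpha$, and write $\rho_\alpha\deq V\pea{\tilde W}V^\dag$ and $P\deq P_{\tilde W}^{\Dec(\alpha)}$. By the triangle inequality,
\[
\norm{P\rho_\alpha P-\rho_\alpha}_1\leq \norm{P\rho_\alpha(P-\id)}_1+\norm{(P-\id)\rho_\alpha}_1 .
\]
Each term is controlled by \cite[Lemma 2.10]{Metger2024}, i.e.\ the same Cauchy--Schwarz step used in \cref{lemma:self-consistent-to-trace-norm-crit}. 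Summing over $\alpha$ and averaging over $\tilde W$ then gives a bound of $2\bigl(\E_{\tilde W}\sum_\alpha\norm{P_{\tilde W}^{\Dec(\alpha)}-\id}^2_{V\pea{\tilde W}V^\dag}\bigr)^{1/2}$. It therefore suffices to show that this quantity is $O(\eps+\ea)$.

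The next step is to pass from projectors to observables. Under the identification $A\oplus\overline A\cong\sum_k A_k\ot\ketbra{k}{k}$, the PVM $\{P^v_{\tilde W}\}_v$ is exactly the Fourier transform of the unitaries $(\sigma_{\tilde W}(a)\oplus\overline\sigma_{\tilde W}(a))\ot\id$. I will write $\norm{P^{\Dec(\alpha)}-\id}^2_{V\pea{}V^\dag}$ as $\norm{(P^{\Dec(\alpha)}-\id)V}^2_{\pea{}}$ and insert the prover projector $\tilde W^{\Dec(\alpha)}$ via a telescoping sum, which yields
\[
\norm{(P^{\Dec(\alpha)}-\id)V}^2_{\pea{}}\leq 2\norm{P^{\Dec(\alpha)}V-V\tilde W^{\Dec(\alpha)}}^2_{\pea{}}+2\norm{\tilde W^{\Dec(\alpha)}-\id}^2_{\pea{}} .
\]
The second term is handled by $\eps$-self-consistency of the mixed-basis question distribution $(\{X,Y,Z,F,G\}^n,U_{5^n})$. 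This comes from the wrapped $\textprotocol{mbt}$ in Item 6 together with \cref{lemma:compiled-consistency}; summed over $\alpha$ and averaged over $\tilde W$ it is $O(\eps)$, exactly as in \eqref{eqn:projector-close-to-id-2}. For the first term I bound it by $\sum_v\norm{P^vV-V\tilde W^v}^2_{\pea{}}$, which removes the $\alpha$-dependence of the outcome. Summing over $\alpha$ turns this into $\sum_v\norm{P^vV-V\tilde W^v}^2_{\pe{\tilde W}}$, and by Parseval (\cref{cor:parseval}) this equals $\E_{a}\norm{V\tilde W(a)-((\sigma_{\tilde W}(a)\oplus\overline\sigma_{\tilde W}(a))\ot\id)V}^2_{\pe{\tilde W}}$. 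That is precisely the quantity in \cref{theorem:clifford-mixed-basis}, with $\mu$ uniform.

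The main obstacle is that \cref{theorem:clifford-mixed-basis} is stated for a fixed state $\peq$, whereas here the state $\pe{\tilde W}$ is correlated with the averaged question $\tilde W$. I plan to resolve this with the isometry state-switching corollary, \cref{cor:efficient-isometry-state-switching}. I will take the sampled parameter to be $x=(\tilde W,a)$, set $D_1$ to use $z=q$ fixed, and set $D_2$ to use $z=\tilde W$. The unitaries involved are $U_{x}=\tilde W(a)$, which is uniformly efficient by \cref{lemma:observables-efficient-unitaries}, and $N_x=(\sigma_{\tilde W}(a)\oplus\overline\sigma_{\tilde W}(a))\ot\id$, which is a fixed product of single-qubit gates controlled on the flag; $V$ is efficient by \cref{cor:gh-efficient}. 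Switching the state in this way costs a single additive $\ea$ and yields $O(\eps+\ea)$ on $\pe{\tilde W}$. Taking the square root of the total bound gives $O((\eps+\ea)^{1/2})$, as claimed. One further check is needed: the flag ordering in $P^a_{\tilde W}$ must match the direct-sum ordering ($k=0$ canonical, $k=1$ conjugate) produced by the QFT basis convention, and I will simply verify this against \cref{lemma:EPBT-pure-basis}.
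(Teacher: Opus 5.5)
Your proposal is correct and follows essentially the same route as the paper. The steps are: switch the state in the mixed-basis rigidity bound of \cref{theorem:clifford-mixed-basis} to $\pe{\tilde W}$ via \cref{cor:efficient-isometry-state-switching}; apply Parseval; chain with self-consistency of the mixed-basis questions to get $\E_{\tilde W}\sum_\alpha\norm{P^{\Dec(\alpha)}_{\tilde W}-\id}^2_{V\pea{\tilde W}V^\dag}\leq O(\eps+\ea)$; then convert to trace norm using \cite[Lemma 2.10]{Metger2024} and $\norm{P\rho P-\rho}_1\leq 2\norm{(P-\id)\rho}_1$. The differences are cosmetic: you apply Parseval before the state switch rather than after, and you spell out the telescoping and $\sum_v$ bound that the paper compresses into ``chaining''.
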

\begin{proof}
    From \cref{theorem:clifford-mixed-basis} we know that there exists an isometry $V: \H\to ((\C^2)^n\ot\H')^{\oplus 2}$, such that for all $q\in\Qa$ there exists a cryptographically small function $\ea$ with
    \[
    \E_{\tilde W\in\{X,Y,Z,F,G\}^n}\E_{a\in\zo^n} \norm{ V \tilde W(a) -((\sigma_{\tilde W}(a)\oplus \overline{\sigma}_{\tilde W}(a)) \otimes \id) V }_\peq^2 \leq O(\eps+ \ea),
    \]
    choosing some fixed $q\in\Qa$. This is of the isometry form $\norm{V\tilde W(a)-NV}_\peq^2$ with $\tilde W(a)$ and $N\coloneqq(\sigma_{\tilde W}(a)\oplus\overline\sigma_{\tilde W}(a))\ot\id$ a uniformly efficient family of unitaries, parameterized by $a$ and $\tilde W$, so \cref{cor:efficient-isometry-state-switching} lets us switch out the state and conclude that
    \[
    \E_{\tilde W\in\{X,Y,Z,F,G\}^n}\E_{a\in\zo^n} \norm{ V \tilde W(a) -((\sigma_{\tilde W}(a)\oplus \overline{\sigma}_{\tilde W}(a)) \otimes \id) V }_\pe{\tilde W}^2 \leq O(\eps+ \ea),
    \]
    applying \cref{cor:parseval} (Parseval's identity) yields
    \[
    \E_{\tilde W\in\{X,Y,Z,F,G\}^n}\sum_a \norm{ V \tilde W^a -((\tau_{\tilde W}^a\oplus \overline{\tau}_{\tilde W}^a) \otimes \id) V }_\pe{\tilde W}^2 \leq O(\eps+ \ea),
    \]
    where
    \[
    \tau_{\tilde W}^a \coloneqq\bigotimes_{i\in[n]}\tau_{\tilde W_i}^{a_i}.
    \]
    Interpreting the isometry co-domain as $(\C^2)^n\ot\C^2\ot\hat\cH$ (under the canonical isomorphism between $A\oplus A$ and $A\otimes \C^2$) we can rewrite this as
    \[
    \E_{\tilde W\in\{X,Y,Z,F,G\}^n}\sum_a\Bigg\|V\tilde W^a - \overbrace{\paren{\sum_{k\in\zo}\tau_{\tilde W, k}^{a}\ot\ket{k}\bra{k}\ot\id}}^{P_{\tilde W}^a}V\Bigg\|_\pe{\tilde W}^2\leq O(\eps+\ea),
    \]
    where $\tau_{\tilde W, 0}^{a}=\tau_{\tilde W}^{a}$ and $\tau_{\tilde W, 1}^{a}=\overline{\tau}_{\tilde W}^{a}$.
    By the $\eps$-self-consistency of $(\{X,Y,Z,F,G\}^n,U_{5^n})$, 
    \[
    \E_{\tilde W}\sum_\alpha\norm{\tilde W^{\Dec(\alpha)}-\id}_{\pea{\tilde W}}^2\leq O(\eps).
    \]
    Chaining these two observations (using the fact that $\pe{\tilde W}=\sum_\alpha \pea{\tilde W}$), we obtain
    \[
    \E_{\tilde W}\sum_\alpha\Big\|P_{\tilde W}^{\Dec(\alpha)}-\id\Big\|_{V\pea{\tilde W}V^\dag}^2\leq O(\eps+\ea),\numberthis\label{eqn:post-isometry-projector-char}
    \]

    Using \eqref{eqn:post-isometry-projector-char} and \cite[Lemma 2.10]{Metger2024}, we can bound the following trace norm
    \begin{align*}
        \E_{\tilde W}\sum_\alpha\norm{(P_{\tilde W}^{\Dec(\alpha)}-\id)V\pea{\tilde W}V^\dag}_1 &\leq \sqrt{\E_{\tilde W}\sum_\alpha\norm{P_{\tilde W}^{\Dec(\alpha)}-\id}_{V\pea{\tilde W}V^\dag}^2}\\
        &\leq O((\eps+\ea)^{1/2}).\numberthis\label{eqn:projector-trace-norm-char}
    \end{align*}
    For any projector $P$ and Hermitian operator $\rho$, it holds by the triangle inequality, Hölder's inequality and invariance of the trace norm under Hermitian adjoint, that
    \[
    \norm{P\rho P-\rho}_1\leq \norm{P\rho(P-1)}_1 + \norm{(P-1)\rho}_1\leq 2\norm{(P-1)\rho}_1
    \]
    applying the above to $P_{\tilde W}^{a}$, we can use \eqref{eqn:projector-trace-norm-char} to conclude that:
    \[
    \E_{\tilde W}\sum_\alpha\norm{P_{\tilde W}^{\Dec(\alpha)} V\pea{\tilde W}V^\dag P_{\tilde W}^{\Dec(\alpha)} - V\pea{\tilde W}V^\dag}_1\leq O((\eps+\ea)^{1/2}),
    \]
    which completes the proof.
\end{proof}

Instead of having a characterization in terms of encrypted prover answers, we want to relate the post-Alice state directly to the decrypted answer from the first interaction round. Even though decryption is an inefficient operation (without knowledge of the secret key), this can be achieved because, at the current phase of the argument, all guarantees are information theoretic.

\begin{corollary}\label{cor:state-rounding-it-marginal}
    For any computationally efficient prover modeled as in Section \ref{section:modeling} that wins with probability $\omega^* - \eps$ in the compiled Clifford test $\textprotocol{cliff}(X,Y,Z,F,G,n)$, obtained from \cref{protocol:cliff-test}, there exists a complex Hilbert space $\hat\cH$ of finite dimension, an efficient isometry $V: \cH \to (\C^2)^n\ot \C^2 \ot\hat\cH$ (from \cref{def:clifford-isometry}), and a cryptographically small function $\ea$ such that
    \begin{align*}
    \E_{\tilde W}\sum_{v\in\zo^n}\norm{P_{\tilde W}^v V\pe{\tilde W}V^\dag P_{\tilde W}^v - V\phi^{\Enc(\tilde W)}_v V^\dag}_1\leq O((\eps+\ea)^{1/2}),
    \end{align*}
    where $\omega^*=\tfrac{1}{6}(5+\cos^2(\frac{\pi}{8}))$ is the optimal winning probability of the Clifford test (an honest prover achieves it up to a cryptographically small loss, and no efficient prover can exceed it by more than $\ea$, cf.\ \cref{lemma:chsh}) and
    \[
    P_{\tilde W}^{a} = \sum_{k\in\zo}\tau_{\tilde W,k}^{a}\ot\ket{k}\bra{k}\ot\id\quad\text{and}\quad\phi^{\Enc(\tilde W)}_v = \sum_{\alpha:\Dec(\alpha)=v} \pea{\tilde W},
    \]
    with $\tau_{\tilde W, 0}^{a}=\tau_{\tilde W}^{a}$, $\tau_{\tilde W, 1}^{a} = \overline{\tau}_{\tilde W}^{a}$. In words, the post-isometry post Alice measurement state (grouped per decrypted outcome) is information theoretically close to the fully marginalized state, projected into the image space of the projector corresponding to the reported outcome of the requested measurement.
\end{corollary}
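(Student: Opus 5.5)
The plan is to derive the corollary directly from \cref{lemma:state-rounding-it}, which controls the same quantity per encrypted answer $\alpha$. The only change is that the answers are now grouped by their decryption $v=\Dec(\alpha)$, and the projector $P_{\tilde W}^v$ is applied to the full marginal $\pe{\tilde W}=\sum_\alpha\pea{\tilde W}$ rather than to the per-$\alpha$ pieces. Since every quantity is now a trace-norm bound on fixed states, the argument is purely information-theoretic. Inefficiency of decryption does not matter: no further state switching is needed, and the cryptographic term is inherited from \cref{lemma:state-rounding-it}.

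The first step is to split $V\pe{\tilde W}V^\dag=\sum_{v'}V\phi^{\Enc(\tilde W)}_{v'}V^\dag$. Then, for each $v$, I write
\[
P^v_{\tilde W}V\pe{\tilde W}V^\dag P^v_{\tilde W}-V\phi^{\Enc(\tilde W)}_vV^\dag
=\bigl(P^v_{\tilde W}V\phi^{\Enc(\tilde W)}_vV^\dag P^v_{\tilde W}-V\phi^{\Enc(\tilde W)}_vV^\dag\bigr)+\sum_{v'\neq v}P^v_{\tilde W}V\phi^{\Enc(\tilde W)}_{v'}V^\dag P^v_{\tilde W}.
\]

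The first term is handled by the triangle inequality over the $\alpha$ with $\Dec(\alpha)=v$. After summing over $v$, it is bounded by exactly the quantity in \cref{lemma:state-rounding-it}, which gives $O((\eps+\ea)^{1/2})$.

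For the cross terms, positivity gives $\norm{P^v\rho P^v}_1=\Tr{P^v\rho}$ for $\rho\succeq0$. The total is therefore
\[
\sum_\alpha\sum_{v\neq\Dec(\alpha)}\Tr{P^v_{\tilde W}V\pea{\tilde W}V^\dag}=\sum_\alpha\Tr{(\id-P^{\Dec(\alpha)}_{\tilde W})V\pea{\tilde W}V^\dag},
\]
where I use that $\{P^v_{\tilde W}\}_v$ is a PVM on the image of $V$; this holds because both $\tau$ and $\overline\tau$ families are PVMs, tensored with the flag projectors. The right-hand side equals $\sum_\alpha\norm{P^{\Dec(\alpha)}_{\tilde W}-\id}^2_{V\pea{\tilde W}V^\dag}$. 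This is bounded in expectation over $\tilde W$ by $O(\eps+\ea)$ through \cref{eqn:post-isometry-projector-char}, obtained inside the proof of \cref{lemma:state-rounding-it}. I would re-derive it here in one line from \cref{theorem:clifford-mixed-basis}, Parseval and self-consistency, exactly as done there. Since $O(\eps+\ea)\le O((\eps+\ea)^{1/2})$ for small arguments, combining the two contributions gives the claim.

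The main obstacle is making sure the cross-term bound really is the squared-norm quantity and not something requiring an extra square root or a sum over exponentially many $v$ without control. The identity $\sum_{v\neq v'}P^v=\id-P^{v'}$ resolves this, provided the $P^v$ are genuinely complete, not just subnormalized. It also requires that the trace-norm sum over $v$ collapses correctly, with no $2^n$ blow-up, which is guaranteed because each term is handled by an exact equality rather than by a triangle inequality over $v$.
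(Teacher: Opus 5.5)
Your proof is correct. Its skeleton matches the paper's, but you bound the off-diagonal term with a different tool.

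Abbreviate $P^v\deq P^v_{\tilde W}$ and $\phi_v\deq\phi^{\Enc(\tilde W)}_v$. Your split into $P^vV\phi_vV^\dag P^v-V\phi_vV^\dag$ plus the leakage $\sum_{v'\neq v}P^vV\phi_{v'}V^\dag P^v$ is exactly the paper's decomposition. The paper's term $P^vV\pe{\tilde W}V^\dag P^v-P^vV\phi_vV^\dag P^v$ is your cross term. Your treatment of the diagonal piece is also identical: the triangle inequality over $\alpha$, then \cref{lemma:state-rounding-it}.

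The difference is how the leakage is bounded. The paper rewrites it as $P^v\bigl(\sum_{v'}(V\phi_{v'}V^\dag-P^{v'}V\phi_{v'}V^\dag P^{v'})\bigr)P^v$. It then applies the pinching inequality $\sum_v\norm{P^vAP^v}_1\le\norm{A}_1$, which reuses the diagonal trace-norm bound. That route needs only the statement of \cref{lemma:state-rounding-it} and no positivity.

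You instead use positivity together with completeness of $\{P^v\}_v$. Completeness holds on the whole codomain $(\C^2)^n\ot\C^2\ot\hat\cH$; "a PVM on the image of $V$" is imprecise, since $P^v$ does not preserve that image. This turns the leakage into exactly $\sum_\alpha\Tr{(\id-P^{\Dec(\alpha)})V\pea{\tilde W}V^\dag}$. You then control that with the squared-norm estimate \cref{eqn:post-isometry-projector-char} from inside the lemma's proof.

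This gives a sharper $O(\eps+\ea)$ bound on the leakage. The final rate is still the $O((\eps+\ea)^{1/2})$ coming from the diagonal term, so the gain does not show in the result.

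If you would rather not reach into that proof, there is a shortcut. For $\rho\succeq 0$ one has $\Tr{(\id-P)\rho}=\left|\Tr{P\rho P-\rho}\right|\le\norm{P\rho P-\rho}_1$. So the leakage is also bounded directly by the lemma's statement.
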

\begin{proof}
    By \cref{lemma:state-rounding-it}, there exists an isometry $V: \cH \to (\C^2)^n\ot \C^2 \ot\hat\cH$ such that
    \begin{align*}
        \E_{\tilde W}\sum_\alpha\norm{P_{\tilde W}^{\Dec(\alpha)} V\pea{\tilde W}V^\dag P_{\tilde W}^{\Dec(\alpha)} - V\pea{\tilde W}V^\dag}_1\leq O((\eps+\ea)^{1/2}),
    \end{align*}
    where
    \[
    P_{\tilde W}^v = \sum_{k\in\zo}\tau_{\tilde W,k}^{v}\ot\ket{k}\bra{k}\ot\id.
    \]
    By the triangle inequality, this implies
    \[
        \E_{\tilde W}\sum_{v\in\zo^n}\norm{P_{\tilde W}^v V\phi^{\Enc(\tilde W)}_v V^\dag P_{\tilde W}^v - V\phi^{\Enc(\tilde W)}_v V^\dag}_1\leq O((\eps+\ea)^{1/2}).\numberthis\label{eqn:state-rounding-it-marginal-1}
    \]
    note that for a set of orthogonal projectors $\{P^v\}_v$, and arbitrary operator $A$, by the pinching inequality it holds that
    \[
    \sum_v \norm{P^v A P^v}_1\leq \norm{A}_1.
    \]
    With this we can write
    \begin{align*}
        \E_{\tilde W}&\sum_{v\in\zo^n}\norm{P_{\tilde W}^v V\psi^{\Enc(\tilde W)}V^\dag P_{\tilde W}^v - P_{\tilde W}^v V\phi^{\Enc(\tilde W)}_v V^\dag P_{\tilde W}^v}_1 \\
        &= \E_{\tilde W}\sum_{v\in\zo^n}\norm{P_{\tilde W}^v\left(\sum_{v'} \left(V\phi^{\Enc(\tilde W)}_{v'} V^\dag - P_{\tilde W}^{v'} V\phi^{\Enc(\tilde W)}_{v'} V^\dag P_{\tilde W}^{v'}\right)\right)P_{\tilde W}^v}_1 \\
        &\leq \E_{\tilde W}\norm{\sum_{v'} \left(V\phi^{\Enc(\tilde W)}_{v'} V^\dag- P_{\tilde W}^{v'} V\phi^{\Enc(\tilde W)}_{v'} V^\dag P_{\tilde W}^{v'}\right)}_1 \\
        &\leq \E_{\tilde W}\sum_{v'}\norm{V\phi^{\Enc(\tilde W)}_{v'} V^\dag- P_{\tilde W}^{v'} V\phi^{\Enc(\tilde W)}_{v'} V^\dag P_{\tilde W}^{v'}}_1 \\
        &\leq O((\eps+\ea)^{1/2}),
    \end{align*}
    where the last inequality follows from \cref{eqn:state-rounding-it-marginal-1}. Combining the above with \cref{eqn:state-rounding-it-marginal-1}, through the triangle inequality, we obtain
    \[
    \E_{\tilde W}\sum_{v\in\zo^n}\norm{P_{\tilde W}^v V\psi^{\Enc(\tilde W)} V^\dag P_{\tilde W}^v - V\phi^{\Enc(\tilde W)}_v V^\dag}_1\leq O((\eps+\ea)^{1/2}),
    \]
    which completes the proof.
\end{proof}

We also want to apply the dephasing observation from \cref{cor:flag-dephasing} to the decrypted-outcome post-Alice state. This is a bit delicate, because generally we can't apply the inefficient grouping per decrypted outcome to a computational statement. However, we can use the tools established in the lemmas above to combine computational and information-theoretic guarantees such that the desired statement follows.
\begin{lemma}\label{lemma:full-flag-dephasing}
For any computationally efficient prover modeled as in Section \ref{section:modeling} that wins with probability $\omega^* - \eps$ in the compiled Clifford test $\textprotocol{cliff}(X,Y,Z,F,G,n)$, obtained from \cref{protocol:cliff-test}, there exist complex Hilbert spaces $\cH',\hat\cH$ of finite dimension and an efficient isometry $V: \cH \to \cH' \cong(\C^2)^n\ot \C^2\ot\hat\cH$ (from \cref{def:clifford-isometry}) such that for all uniformly efficient (in $v$ and $\tilde W$) POVMs $\{M_{v,\tilde W},\id-M_{v,\tilde W}\}_{v,\tilde W}\in L(\cH')$, there exists a cryptographically small function $\ea$ such that
\begin{align*}
\left|\E_{\tilde W\in\{X,Y,Z,F,G\}^n}\sum_{v\in\zo^n}\Tr{M_{v,\tilde W}\left(\mathcal{D}_Z(V\phi_v^{\Enc(\tilde W)} V^\dag) -  V\phi_v^{\Enc(\tilde W)} V^\dag\right)}\right| \leq O((\eps+\ea)^{1/2})
\end{align*}
where
\[
\phi_v^{\Enc(\tilde W)} \deq \sum_{\alpha:\Dec(\alpha)=v} \pea{\tilde W},
\]
$\mathcal{D}_Z$ is the dephasing operation on the phase ambiguity flag register, i.e.\
\[
\mathcal{D}_Z(\rho) = \frac{1}{2}\left(\rho+(\id\ot\sigma_Z\ot\id)\rho(\id\ot\sigma_Z\ot\id)\right).
\]
Here $\omega^*=\tfrac{1}{6}(5+\cos^2(\frac{\pi}{8}))$ is the optimal winning probability of the Clifford test.
\end{lemma}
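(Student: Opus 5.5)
The obstacle is that the grouping of $\pea{\tilde W}$ by decrypted outcome $v$ is not an efficient operation, so \cref{cor:flag-dephasing} cannot be applied to $\phi_v^{\Enc(\tilde W)}$ directly. I will therefore first trade the inefficient grouping for an efficient one using the information-theoretic closeness of \cref{cor:state-rounding-it-marginal}. Then I will apply the computational dephasing statement to the fully marginalized state $\pe{\tilde W}$, and finally transport back.

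Concretely, write $\rho_{\tilde W}\deq V\pe{\tilde W}V^\dag$. By \cref{cor:state-rounding-it-marginal},
\[
\E_{\tilde W}\sum_v\norm{P^v_{\tilde W}\rho_{\tilde W}P^v_{\tilde W}-V\phi_v^{\Enc(\tilde W)}V^\dag}_1\leq O((\eps+\ea)^{1/2}).
\]
Since $\mathcal{D}_Z$ is a channel, and hence trace-norm contractive, the same bound holds after applying $\mathcal{D}_Z$ to both terms. Because $\norm{M_{v,\tilde W}}\leq 1$, Hölder's inequality lets me replace $V\phi_v^{\Enc(\tilde W)}V^\dag$ by $P^v_{\tilde W}\rho_{\tilde W}P^v_{\tilde W}$ in both traces of the target expression, at cost $O((\eps+\ea)^{1/2})$.

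The key structural fact is that $P^v_{\tilde W}=\sum_k\tau^v_{\tilde W,k}\ot\ketbra{k}{k}\ot\id$ is block diagonal in the flag register, so it commutes with $\id\ot\sigma_Z\ot\id$. Hence $\mathcal{D}_Z(P\rho P)=P\mathcal{D}_Z(\rho)P$, and the quantity to bound becomes
\[
\E_{\tilde W}\sum_v\Tr{M_{v,\tilde W}P^v_{\tilde W}\bigl(\mathcal{D}_Z(\rho_{\tilde W})-\rho_{\tilde W}\bigr)P^v_{\tilde W}}=\E_{\tilde W}\Tr{N_{\tilde W}\bigl(\mathcal{D}_Z(\rho_{\tilde W})-\rho_{\tilde W}\bigr)},
\]
where $N_{\tilde W}\deq\sum_vP^v_{\tilde W}M_{v,\tilde W}P^v_{\tilde W}$. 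We have $0\preceq N_{\tilde W}\preceq\sum_vP^v_{\tilde W}=\id$, so $N_{\tilde W}$ is a valid POVM element. It is also efficient: measure the qubit register in the product basis $\tilde W$ (or its conjugate, controlled on the flag) coherently into an ancilla holding $v$, apply $M_{v,\tilde W}$ controlled on $v$, and uncompute. Strictly this implements the pinched measurement, which has exactly the same statistics on $\sum_vP^v M P^v$. Since $\tilde W$ is just a classical input, the family is uniformly efficient in $\tilde W$.

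Then \cref{cor:flag-dephasing}, with $q=\tilde W$, $\nu$ uniform over $\{X,Y,Z,F,G\}^n$ and the POVM family $\{N_{\tilde W}\}$, bounds this by $O((\eps+\ea)^{1/2})$. One further point needs checking. \cref{cor:flag-dephasing} is stated for success $1-\eps$ in Items 1 to 4, whereas here the hypothesis is success $\omega^*-\eps$ in the whole Clifford test. As in the proof of \cref{theorem:clifford-mixed-basis}, bounding the CHSH item by the compiled Tsirelson bound \cref{eqn:chsh-pi} gives success $1-O(\eps+\ea)$ in Items 1 to 4, so the corollary applies. Collecting the three error terms gives the claim.

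The main obstacle is the middle step: the argument depends on the pinching $P^v_{\tilde W}$ commuting with the flag $\sigma_Z$, and on $N_{\tilde W}$ being an efficient POVM. I will verify carefully that $\tau_{\tilde W,k}$ acts only on the qubit register conditioned on $k$, so the commutation is exact. I will also check that the coherent $v$-measurement is implementable uniformly in $\tilde W$ for both flag branches; this holds because the projectors $\overline{\tau}$ are also efficient single-qubit projectors.
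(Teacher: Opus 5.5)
Your proposal is correct and follows essentially the same route as the paper. The paper also replaces $V\phi_v^{\Enc(\tilde W)}V^\dag$ by $P^v_{\tilde W}V\pe{\tilde W}V^\dag P^v_{\tilde W}$ using \cref{cor:state-rounding-it-marginal}, Hölder, the commutation of $P^v_{\tilde W}$ with the flag $\sigma_Z$, and the trace-norm contractivity of $\mathcal{D}_Z$. It then applies \cref{cor:flag-dephasing} to the pinched POVM $\sum_v P^v_{\tilde W}M_{v,\tilde W}P^v_{\tilde W}$ with $\nu$ uniform over $\tilde W$, after reducing the $\omega^*-\eps$ hypothesis to $1-O(\eps+\ea)$ success on Items 1 to 4 via the compiled Tsirelson bound, exactly as you do.
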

\begin{proof}
    Let $M_{v,\tilde W}$ be any uniformly efficient family of POVM elements in $v\in\zo^n$ and $\tilde W\in\{X,Y,Z,F,G\}^n$.
    The lemma's hypothesis (success probability of $\omega^*-\eps$ on the full Clifford test) combined with the Tsirelson bound on the CHSH subtest and the equal weighting of the six items implies that the prover wins each non-CHSH subtest, and items 1 to 4 in particular, with probability $1-O(\eps+\ea)$. Hence both \cref{cor:state-rounding-it-marginal} and \cref{cor:flag-dephasing} apply, and make use of the same isometry $V$ (from \cref{def:clifford-isometry}).
    \par
    \medskip
    The proof idea is to replace the outcome-dependent state by a projection of the marginal, which is given by
    \[
    P^v_{\tilde W}V\pe{\tilde W}V^\dag P^v_{\tilde W}=P^v_{\tilde W}V\left(\sum_{v'} \phi_{v'}^{\Enc(\tilde W)} \right)V^\dag P^v_{\tilde W}. 
    \]
    From \cref{cor:state-rounding-it-marginal}, we know that both states are close in trace norm
    \[
    \E_{\tilde W}\sum_v\norm{V\phi_v^{\Enc(\tilde W)} V^\dag-P^v_{\tilde W} V\pe{\tilde W} V^\dag P^v_{\tilde W}}_1\leq \delta,\numberthis\label{eqn:rounding-bound}
    \]
    with $\delta=O((\eps+\ea)^{1/2})$. Using $\|M_{v,\tilde W}\|\leq 1$, the tracial Hölder inequality \cite[Theorem 2]{Baumgartner2011} and the cyclicity of the trace, we have for each $\tilde W$ and $v$
    \begin{align*}      
    \big|\Tr&\big[M_{v,\tilde W} V\phi_v^{\Enc(\tilde W)} V^\dag\big]-\Tr\big[P^v_{\tilde W}M_{v,\tilde W} P^v_{\tilde W}V\pe{\tilde W} V^\dag\big]\big|\\
    &=\big|\Tr\big[M_{v,\tilde W}\big(V\phi_v^{\Enc(\tilde W)} V^\dag-P^v_{\tilde W} V\pe{\tilde W} V^\dag P^v_{\tilde W}\big)\big]\big|\\
    &\leq\norm{V\phi_v^{\Enc(\tilde W)} V^\dag-P^v_{\tilde W} V\pe{\tilde W} V^\dag P^v_{\tilde W}}_1.\numberthis\label{eqn:non-dephased-error}
    \end{align*}
    The projector $P^v_{\tilde W}=\sum_k\tau_{\tilde W,k}^v\ot\ketbra{k}{k}\ot\id$ commutes with $\id\ot\sigma_Z\ot\id$ and thus with the channel $\mathcal{D}_Z$:
    \[
    \mathcal{D}_Z(P^v_{\tilde W}\rho P^v_{\tilde W})=P^v_{\tilde W}\mathcal{D}_Z(\rho)P^v_{\tilde W}\qquad\forall\rho.
    \]
    Combining this commutation relation with the trace-norm contractivity of $\mathcal{D}_Z$ gives
    \begin{align*}
    \big|\Tr&\big[M_{v,\tilde W}\mathcal{D}_Z(V\phi_v^{\Enc(\tilde W)} V^\dag)\big]-\Tr\big[P^v_{\tilde W}M_{v,\tilde W} P^v_{\tilde W}\mathcal{D}_Z(V\pe{\tilde W} V^\dag)\big]\big|\\
    &=\big|\Tr\big[M_{v,\tilde W}\,\mathcal{D}_Z\big(V\phi_v^{\Enc(\tilde W)} V^\dag-P^v_{\tilde W} V\pe{\tilde W} V^\dag P^v_{\tilde W}\big)\big]\big|\\
    &\leq\norm{V\phi_v^{\Enc(\tilde W)} V^\dag-P^v_{\tilde W} V\pe{\tilde W} V^\dag P^v_{\tilde W}}_1.\numberthis\label{eqn:dephased-error}
    \end{align*}
    Combining \cref{eqn:non-dephased-error} and \cref{eqn:dephased-error} via the triangle inequality, summing over $v$, taking the expectation over $\tilde W$, and applying \cref{eqn:rounding-bound} yields
    \begin{align*}
    \E_{\tilde W}\bigg|&\sum_v\Tr\big[M_{v,\tilde W}\big(\mathcal{D}_Z(V\phi_v^{\Enc(\tilde W)} V^\dag)-V\phi_v^{\Enc(\tilde W)} V^\dag\big)\big]\\
    &\quad-\sum_v\Tr\big[P^v_{\tilde W}M_{v,\tilde W} P^v_{\tilde W}\big(\mathcal{D}_Z(V\pe{\tilde W} V^\dag)-V\pe{\tilde W} V^\dag\big)\big]\bigg|\leq 2\delta.\numberthis\label{eqn:rounded-negation}
    \end{align*}
    Write $A$ and $B$ for the two summed traces inside \cref{eqn:rounded-negation}, so $\E_{\tilde W}|A-B|\leq 2\delta$. It remains to bound the signed average of $B$. Since $\mathcal{D}_Z(V\pe{\tilde W} V^\dag)-V\pe{\tilde W} V^\dag$ does not depend on $v$, we can pull the sum over $v$ inside the trace and define:
    \[
     M_{\tilde W}\deq\sum_vP^v_{\tilde W}M_{v,\tilde W} P^v_{\tilde W}.
    \]
    We claim that $\{M_{\tilde W},\id-M_{\tilde W}\}$ is a binary POVM and that $M_{\tilde W}$ is uniformly efficient given $\tilde W$.
    \par
    \medskip
    The family $\{P^v_{\tilde W}\}_v$ is a complete set of orthogonal projectors:
    \[
    P^v_{\tilde W}P^{v'}_{\tilde W}=\delta_{vv'}P^v_{\tilde W},\qquad \sum_v P^v_{\tilde W}=\id\ot(\ketbra{0}{0}+\ketbra{1}{1})\ot\id=\id.
    \]
    Combined with $0\preceq M_{v,\tilde W}\preceq\id$ this gives $0\preceq P^v_{\tilde W}M_{v,\tilde W} P^v_{\tilde W}\preceq P^v_{\tilde W}$, hence $0\preceq M_{\tilde W}\preceq\sum_vP^v_{\tilde W}=\id$.
    \par
    \medskip
    Given $\tilde W$, the binary POVM $\{M_{\tilde W},\id-M_{\tilde W}\}$ can be implemented sequentially: first projectively measure $\{P^v_{\tilde W}\}_v$ to obtain an outcome $v\in\zo^n$. Then apply $\{M_{v,\tilde W},\id-M_{v,\tilde W}\}$, which is uniformly efficient given $v$ and $\tilde W$. With this identification, \cref{cor:flag-dephasing} applied with the question-indexed family $\{M_{\tilde W}\}_{\tilde W}$ (uniformly efficient by the sequential implementation above) and $\nu$ the uniform distribution over $\tilde W\in\{X,Y,Z,F,G\}^n$ gives
    \[
    \left|\E_{\tilde W}\Tr\big[M_{\tilde W}\big(\mathcal{D}_Z(V\pe{\tilde W} V^\dag)-V\pe{\tilde W} V^\dag\big)\big]\right|\leq O((\eps+\ea)^{1/2}).\numberthis\label{eqn:contradiction-final}
    \]
    The triangle inequality $\bigl|\E_{\tilde W}A\bigr|\le\E_{\tilde W}|A-B|+\bigl|\E_{\tilde W}B\bigr|$ together with \cref{eqn:rounded-negation} and \cref{eqn:contradiction-final} yields the claim.
\end{proof}

With the above we have all required tools in place to show our random remote state preparation guarantee. We will use \cref{lemma:state-rounding-it} to project the post-Alice state into the eigenspace of the outcome projectors corresponding to the certified observables. Then we will use \cref{lemma:full-flag-dephasing} to get rid of potential coherences in the flag register. Along the way we will need to argue that different junk states on the auxiliary system, are computationally indistinguishable, making sure that they can't leak information about the state to the prover; this can be achieved by a reduction to IND-CPA security of the QFHE, through \cref{lemma:state-characterization}.  

\begin{theorem}\label{theorem:rsp-guarantee}
    For any computationally efficient prover modeled as in Section \ref{section:modeling} that wins with probability $\omega^* - \eps$ in the compiled Clifford test $\textprotocol{cliff}(X,Y,Z,F,G,n)$, obtained from \cref{protocol:cliff-test}, there exists a complex Hilbert space $\hat\cH$ of finite dimension, an efficient isometry $V: \cH \to (\C^2)^n\ot \C^2 \ot\hat\cH$ (from \cref{def:clifford-isometry}), and a cryptographically small function $\ea$ such that
    \begin{align*}
        \E_{\tilde W}\sum_{v\in\zo^n} \ketbra{v,\tilde W}{v, \tilde W}_W\ot V\phi_v^{\Enc(\tilde W)}V^\dag \overset{c}{\approx}_{\gamma} \E_{v,\tilde W}\ketbra{v, \tilde W}{v, \tilde W}_W\ot\left(\tau_{\tilde W}^v\otimes \rho_{0} + \overline{\tau}_{\tilde W}^v\otimes \rho_{1}\right),
    \end{align*}
    where $\gamma = O((\eps+\ea)^{1/2})$, the expectation on $\tilde W$ is uniform over $\{X,Y,Z,F,G\}^n$ and that on $v$ is uniform over $\zo^n$. The verifier holds the $W$ register, let
    \[
    \phi_v^{\Enc(\tilde W)} \deq \sum_{\alpha:\Dec(\alpha)=v} \pea{\tilde W},
    \]
    and $\rho_{0}$ and $\rho_{1}$ are sub-normalized states with orthogonal support, such that $\Tr{\rho_0+\rho_1}=1$. Here $\omega^*=\tfrac{1}{6}(5+\cos^2(\frac{\pi}{8}))$ is the optimal winning probability of the Clifford test.
\end{theorem}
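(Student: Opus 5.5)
The proof chains three earlier results through a sequence of hybrids. Each hybrid is either information-theoretically close in trace norm or computationally close. Closeness of the first kind also implies indistinguishability against any distinguisher, since a distinguisher's advantage is bounded by trace distance.

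\textbf{Step 1: project onto the certified eigenspaces.} Start from the left-hand side and apply \cref{cor:state-rounding-it-marginal}. Up to $O((\eps+\ea)^{1/2})$ in trace norm, averaged over $\tilde W$ and summed over $v$, each $V\phi_v^{\Enc(\tilde W)}V^\dag$ can be replaced by $P^v_{\tilde W}\,V\pe{\tilde W}V^\dag P^v_{\tilde W}$. The classical $W$ register is block diagonal, so these trace-norm errors simply add.

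\textbf{Step 2: dephase the flag register.} Apply \cref{lemma:full-flag-dephasing}. A distinguisher acting on the whole classical-quantum state reads $(v,\tilde W)$ and then measures. It is therefore exactly a uniformly efficient family $\{M_{v,\tilde W}\}$, which is the form the lemma requires. Hence $\mathcal{D}_Z$ may be inserted at computational cost $O((\eps+\ea)^{1/2})$. Because $P^v_{\tilde W}$ commutes with $\mathcal{D}_Z$, the state becomes
\[
\sum_{k\in\zo}\tau^v_{\tilde W,k}\otimes\ketbra{k}{k}\otimes\,\bigl(\text{the }(k,k)\text{ block of }V\pe{\tilde W}V^\dag\bigr),
\]
still sandwiched by $\tau^v_{\tilde W,k}$ on the qubit register.

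\textbf{Step 3: remove the dependence of the junk on $\tilde W$.} Next I would use \cref{lemma:poly-state-indistinguishability} or \cref{lemma:povm-indistinguishability}. The distinguisher samples $(v,\tilde W)$ itself and measures $\sum_v P^v_{\tilde W}M_{v,\tilde W}P^v_{\tilde W}$, which is uniformly efficient given $\tilde W$ (as in the proof of \cref{lemma:full-flag-dephasing}). This replaces $\pe{\tilde W}$ by $\pe{q_0}$ for a fixed Alice question $q_0$, at cost $\ea$. After this step the marginal state no longer depends on $\tilde W$.

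\textbf{Step 4: factor the qubit register off.} Apply \cref{lemma:state-characterization}, with $M_\alpha$ again absorbing the sandwich. This replaces $V\pe{q_0}V^\dag$ by $\frac{\id}{2^n}\otimes\Tr_1[V\pe{q_0}V^\dag]$ at cost $O((\eps+\ea)^{1/2})$. Care is needed here: the lemma is stated on $(\C^2)^n\otimes\cH'$, with the flag inside $\cH'$, and after Step 2 the flag is already block diagonal.

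\textbf{Step 5: assemble the target form.} Now $\tau^v_{\tilde W,k}\,\frac{\id}{2^n}\,\tau^v_{\tilde W,k}=2^{-n}\tau^v_{\tilde W,k}$. The factor $2^{-n}$ turns $\sum_v$ into $\E_v$, and we obtain $\E_{v,\tilde W}\ketbra{v,\tilde W}{v,\tilde W}\otimes(\tau^v_{\tilde W}\otimes\rho_0+\overline\tau^v_{\tilde W}\otimes\rho_1)$. Here $\rho_k\deq\ketbra{k}{k}\otimes\bigl(\text{the }(k,k)\text{ block of }\Tr_1[V\pe{q_0}V^\dag]\bigr)$. These have orthogonal support because of the flag label, and $\Tr{\rho_0+\rho_1}=1$. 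Summing the errors from Steps 1--4 with the triangle inequality gives $\gamma=O((\eps+\ea)^{1/2})$.

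\textbf{Main obstacle.} The hardest part is Steps 3--4, because the switches must be performed inside the inefficient, outcome-grouped object. The plan is always to move the $v$-dependence into the measurement via $M_{\tilde W}=\sum_v P^v M_{v} P^v$, since both $P^v_{\tilde W}$ and $V$ are efficient, and to switch only the marginal $\pe{\cdot}$. A secondary point is to check that \cref{lemma:state-characterization} still applies once $\mathcal{D}_Z$ and the flag projection are composed into the POVM. The order of Steps 2 and 4 should be chosen so that every POVM fed into a computational lemma is efficient.
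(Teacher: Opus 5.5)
Your argument is correct. It draws on the same ingredients as the paper: \cref{lemma:state-rounding-it} and its marginal version \cref{cor:state-rounding-it-marginal}, flag dephasing, \cref{lemma:poly-state-indistinguishability} and \cref{lemma:state-characterization}. You arrange them as a single linear chain of hybrids, whereas the paper splits the proof in two.

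The paper first rounds and dephases the outcome-grouped states $V\phi_v^{\Enc(\tilde W)}V^\dag$ directly. This gives $\tau^v_{\tilde W}\ot\theta_{v,0}+\overline{\tau}^v_{\tilde W}\ot\theta_{v,1}$, where the junk $\theta_{v,k}$ is extracted from $V\phi_v^{\Enc(\tilde W)}V^\dag$ and so still depends on $(v,\tilde W)$. It then needs a separate argument to show this junk is indistinguishable from the fixed reference. That argument pinches the distinguisher onto the blocks $\tau^v_{\tilde W}\ot\ketbra{0}{0}$ and $\overline{\tau}^v_{\tilde W}\ot\ketbra{1}{1}$. It then telescopes through $V\phi_v^{\Enc(\tilde W)}V^\dag$, $V\pe{\tilde W}V^\dag$, $V\peq V^\dag$ and $\frac{\id}{2^n}\ot\Tr_1[V\peq V^\dag]$.

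You instead replace $\phi_v^{\Enc(\tilde W)}$ by the projected marginal $P^v_{\tilde W}V\pe{\tilde W}V^\dag P^v_{\tilde W}$ at the outset. All $v$-dependence then sits in the efficient sandwich $\sum_v P^v_{\tilde W}(\cdot)P^v_{\tilde W}$, so each later step is one lemma applied to the marginal and no pinching is needed. In the paper, the passage from outcome-grouped states to the marginal happens only inside the telescoping sum, after pinching has removed the $v$-dependence of the distinguisher. Your route is shorter. The paper's route makes the intermediate junk structure explicit, in the style of \cite{Gheorghiu2022}, but the final statement does not need it.

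Two small precision points:
\begin{itemize}
\item In Step~2, the statement you actually invoke is \cref{cor:flag-dephasing} with $M_{\tilde W}=\sum_v P^v_{\tilde W}M_{v,\tilde W}P^v_{\tilde W}$ and $\nu$ uniform. \cref{lemma:full-flag-dephasing} is phrased for $V\phi_v^{\Enc(\tilde W)}V^\dag$, though it transfers via Step~1 and trace-norm contractivity of $\mathcal{D}_Z$.
\item In Steps~3 and~4, the POVM acting on the marginal should be $\sum_v P^v_{\tilde W}\mathcal{D}_Z(M_{v,\tilde W})P^v_{\tilde W}$, using $\Tr[M\mathcal{D}_Z(\rho)]=\Tr[\mathcal{D}_Z(M)\rho]$, and averaged over $\tilde W$ in Step~4. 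It remains uniformly efficient.
\end{itemize}
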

\begin{proof}
    Applying \cref{lemma:state-rounding-it} yields an isometry $V: \cH \to (\C^2)^n\ot \C^2 \ot\hat\cH$ such that
    \begin{align*}
        \E_{\tilde W}\sum_\alpha\norm{P_{\tilde W}^{\Dec(\alpha)} V\pea{\tilde W}V^\dag P_{\tilde W}^{\Dec(\alpha)} - V\pea{\tilde W}V^\dag}_1\leq O((\eps+\ea)^{1/2}),
    \end{align*}
    where
    \[
    P_{\tilde W}^{a} = \sum_{k\in\zo}\tau_{\tilde W,k}^{a}\ot\ket{k}\bra{k}\ot\id,
    \]
    next we can combine this with \cref{lemma:full-flag-dephasing} to obtain that for all uniformly efficient families of POVMs $\{M_{v,\tilde W},\id-M_{v,\tilde W}\}_{v,\tilde W}\in L((\C^2)^n\ot \C^2 \ot\hat\cH)$,
    \[
    \left|\E_{\tilde W}\sum_{v}\Tr{M_{v,\tilde W}\left(\theta_v^{\Enc(\tilde W)} - \mathcal{D}_Z\left(P_{\tilde W}^v\,\theta_v^{\Enc(\tilde W)} P_{\tilde W}^v\right)\right)}\right|\leq O((\eps+\ea)^{1/2}),\numberthis\label{eqn:post-isometry-dephasing-char} 
    \]
    where
    \[
    \theta_v^{\Enc(\tilde W)} \deq \sum_{\alpha:\Dec(\alpha)=v} V\pea{\tilde W}V^\dag.
    \]
    We will start by expanding the right term:
    \begin{align*}
    P_{\tilde W}^v\,\theta_v^{\Enc(\tilde W)} P_{\tilde W}^v &= \tau_{\tilde W}^v\ot \ketbra{0}{0} \ot\Tr_{1,2}\left[\left(\tau_{\tilde W}^v\ot \ketbra{0}{0}\ot \id\right)\theta_v^{\Enc(\tilde W)}\right]\\
    &\quad+ \overline{\tau}_{\tilde W}^v\ot \ketbra{1}{1} \ot\Tr_{1,2}\left[\left(\overline{\tau}_{\tilde W}^v\ot \ketbra{1}{1}\ot \id\right)\theta_v^{\Enc(\tilde W)}\right]\\
    &\quad+ \ketbra{\overline{\tau}_{\tilde W}^v}{\tau_{\tilde W}^v}\ot \ketbra{1}{0} \ot\Tr_{1,2}\left[\left(\ketbra{\tau_{\tilde W}^v}{\overline{\tau}_{\tilde W}^v}\ot \ketbra{0}{1}\ot \id\right)\theta_v^{\Enc(\tilde W)}\right]\\
    &\quad + h.c.
\end{align*}
where $h.c.$ denotes the Hermitian conjugate, $\tau_{\tilde W}^v = \ketbra{\tau_{\tilde W}^v}{\tau_{\tilde W}^v}$ and $\overline{\tau}_{\tilde W}^v = \ketbra{\overline{\tau}_{\tilde W}^v}{\overline{\tau}_{\tilde W}^v}$. The notation $\Tr_{1,2}$ denotes the partial trace over the first ($(\C^2)^n$) and second ($\C^2$) register. Applying the dephasing channel removes the coherences and we obtain
\begin{align*}
    \mathcal{D}_Z\left(P_{\tilde W}^v\,\theta_v^{\Enc(\tilde W)} P_{\tilde W}^v\right) &= \tau_{\tilde W}^v\ot \ketbra{0}{0} \ot\Tr_{1,2}\left[\left(\tau_{\tilde W}^v\ot \ketbra{0}{0}\ot \id\right)\theta_v^{\Enc(\tilde W)}\right]\\
    &\quad+ \overline{\tau}_{\tilde W}^v\ot \ketbra{1}{1} \ot\Tr_{1,2}\left[\left(\overline{\tau}_{\tilde W}^v\ot \ketbra{1}{1}\ot \id\right)\theta_v^{\Enc(\tilde W)}\right]\\
    \intertext{Using the definition of $P^v_{\tilde W}$ and the fact that it commutes with computational basis projectors, we obtain}
    &= \tau_{\tilde W}^v\ot \ketbra{0}{0} \ot\Tr_{1,2}\left[\left(\id\ot \ketbra{0}{0}\ot \id\right)P_{\tilde W}^v\,\theta_v^{\Enc(\tilde W)}P_{\tilde W}^v\right]\\
    &\quad+ \overline{\tau}_{\tilde W}^v\ot \ketbra{1}{1} \ot\Tr_{1,2}\left[\left(\id\ot \ketbra{1}{1}\ot \id\right)P_{\tilde W}^v\,\theta_v^{\Enc(\tilde W)}P_{\tilde W}^v\right].
\end{align*}
Applying \cref{lemma:state-rounding-it} (using the data-processing of the trace norm) we can conclude that
\begin{align*}
    \E_{\tilde W}\sum_{v}\norm{\mathcal{D}_Z\left(P_{\tilde W}^v\,\theta_v^{\Enc(\tilde W)} P_{\tilde W}^v\right) - \left(\tau_{\tilde W}^v\ot \theta^{\Enc(\tilde W)}_{v,0} + \overline{\tau}_{\tilde W}^v\ot \theta^{\Enc(\tilde W)}_{v,1}\right)}_1\leq O((\eps+\ea)^{1/2}),   
\end{align*}
where 
\[
    \theta^{\Enc(\tilde W)}_{v, k} \deq \ketbra{k}{k} \ot\overbrace{\Tr_{1,2}\left[\left(\id\ot \ketbra{k}{k}\ot \id\right)\theta_v^{\Enc(\tilde W)}\right]}^{\rho^{\Enc(\tilde W)}_{v, k}}
\]
for $k\in\zo$. Combining this trace-norm bound with \cref{eqn:post-isometry-dephasing-char} via the triangle inequality for signed averages, we obtain
\begin{align*}
    \left|\E_{\tilde W}\sum_{v}\Tr{M_{v,\tilde W}\left(\theta_v^{\Enc(\tilde W)} - \left(\tau_{\tilde W}^v\ot \theta^{\Enc(\tilde W)}_{v,0} + \overline{\tau}_{\tilde W}^v\ot \theta^{\Enc(\tilde W)}_{v,1}\right)\right)}\right|\leq \delta,\numberthis\label{eqn:post-isometry-dephasing-char-2}
\end{align*}
where $\delta\leq O((\eps+\ea)^{1/2})$. What remains is to show that the auxiliary states are computationally indistinguishable from a $\tilde W$- and $v$-independent reference; i.e.\ no computationally bounded prover can learn any information about the basis choice or measurement outcomes from the auxiliary state. Choose any fixed $q\in\Qa$ and define the reference auxiliary states on $\C^2\ot\hat\cH$
\[
\theta_k \deq \theta_k^{\Enc(q)} = \ketbra{k}{k}\ot\rho_k^{\Enc(q)},\qquad \rho_k^{\Enc(q)}\deq\Tr_{1,2}\left[\left(\id\ot\ketbra{k}{k}\ot\id\right)V\peq V^\dag\right],
\]
for $k\in\zo$. Note that $\theta_0$ and $\theta_1$ have orthogonal supports on the flag register and that 
\[
\Tr{\theta_0+\theta_1}=\Tr{\rho_0^{\Enc(q)}+\rho_1^{\Enc(q)}}=\Tr{V\peq V^\dag}=1.
\]
Since the compiled Clifford group test is a subtest of the compiled Clifford test (and the lemma hypothesis ensures a winning probability of $1-O(\eps+\ea)$ in that subtest) we can apply a weakened version of \cref{lemma:state-characterization} (ignoring the possible $\alpha$ dependence of the POVM) and conclude that for the same isometry $V$ and any efficient POVM $\{M,\id-M\}\in L((\C^2)^n\ot\C^2\ot\hat\cH)$,
\begin{align*}
\left|\Tr{M\left(V\peq V^\dag-\frac{\id}{2^n}\ot\rho^{\Enc(q)}\right)}\right|\leq\xi,\numberthis\label{eqn:state-char-q}
\end{align*}
where $\rho^{\Enc(q)}\deq\Tr_1\left[V\peq V^\dag\right]$ is the post-isometry qubit-marginal on $\C^2\ot\hat\cH$, and $\xi\leq O((\eps+\ea)^{1/2})$. Recalling that $\Tr_{2}$ denotes the partial trace over the flag register, note that
\[
\Tr_2\left[\left(\ketbra{k}{k}\ot\id\right)\rho^{\Enc(q)}\right]=\rho_k^{\Enc(q)}.
\]

The next steps are inspired by the proof of Proposition 4.32 in \cite{Gheorghiu2022}, but we give a direct argument instead of a proof by contradiction. We will show that for any uniformly efficient family of two-outcome POVMs $\{\Lambda_{v,\tilde W},\id-\Lambda_{v,\tilde W}\}_{v,\tilde W}\in L((\C^2)^n\ot\C^2\ot\hat\cH)$,
\begin{align*}
    \left|\E_{\tilde W}\sum_{v}\Tr{\Lambda_{v,\tilde W}\left(\tau_{\tilde W}^v\ot \left(\theta^{\Enc(\tilde W)}_{v, 0}-\tfrac{1}{2^n}\theta_{0}\right) + \overline{\tau}_{\tilde W}^v\ot \left(\theta^{\Enc(\tilde W)}_{v, 1}-\tfrac{1}{2^n}\theta_{1}\right)\right)}\right|\leq \delta+\xi+\ea.\numberthis\label{eqn:post-isometry-dephasing-char-3}
\end{align*}
Define the mutually orthogonal projectors
\[
P^v_{\tilde W,0}\deq\tau_{\tilde W}^v\ot\ketbra{0}{0}\ot\id,
\qquad
P^v_{\tilde W,1}\deq\overline{\tau}_{\tilde W}^v\ot\ketbra{1}{1}\ot\id.
\]
The operator against which $\Lambda_{v,\tilde W}$ is traced is block diagonal and supported on these two projectors. Therefore, replacing $\Lambda_{v,\tilde W}$ by its pinching
\[
\Lambda_{v,\tilde W}\longmapsto \sum_{k\in\zo}P^v_{\tilde W,k} \Lambda_{v,\tilde W}P^v_{\tilde W,k}
\]
leaves the quantity in \cref{eqn:post-isometry-dephasing-char-3} unchanged. This replacement preserves uniform efficiency, since the projectors are efficiently measurable given $v$ and $\tilde W$. Consequently, without loss of generality,
\[
\Lambda_{v,\tilde W} = \tau^v_{\tilde W}\ot\ketbra{0}{0}\ot\Lambda_{v,\tilde W,0}+ \overline{\tau}^v_{\tilde W}\ot\ketbra{1}{1}\ot\Lambda_{v,\tilde W,1},\numberthis\label{eqn:lambda-classical-mixture}
\]
with $0\preceq\Lambda_{v,\tilde W,k}\preceq\id$ for all $v\in\zo^n$, $k\in\zo$, and the family $\{\Lambda_{v,\tilde W,k}\}_{v,\tilde W}$ uniformly efficient given $\tilde W$. Using \cref{eqn:lambda-classical-mixture} we can derive the following identity:
\[
    \Tr{\Lambda_{v,\tilde W} \paren{\tfrac{\id}{2^n}\ot\rho^{\Enc(q)}}}=\Tr{\Lambda_{v,\tilde W} \paren{\tau_{\tilde W}^v\ot \tfrac{1}{2^n}\theta_{0} + \overline{\tau}_{\tilde W}^v\ot \tfrac{1}{2^n}\theta_{1}}}\numberthis\label{eqn:lambda-sum-rho-enc-q}
\]
We will now expand the summand of \cref{eqn:post-isometry-dephasing-char-3} using a telescoping sum:
\begin{align*}
    \sum_v&\Tr{\Lambda_{v,\tilde W}\left(\tau_{\tilde W}^v\ot \left(\theta^{\Enc(\tilde W)}_{v, 0}-\tfrac{1}{2^n}\theta_{0}\right) + \overline{\tau}_{\tilde W}^v\ot \left(\theta^{\Enc(\tilde W)}_{v, 1}-\tfrac{1}{2^n}\theta_{1}\right)\right)}\\
    \intertext{Telescoping over $\sum_v\theta_v^{\Enc(\tilde W)}$ and introducing a second sum on the first term, using the exact structure of $\Lambda_{v,\tilde W}$ given in \cref{eqn:lambda-classical-mixture} (specifically orthogonality), we obtain:}
    &=\sum_{v}\Tr{\Bigl(\sum_{v'}\Lambda_{v',\tilde W}\Bigr)\paren{\bigl(\tau_{\tilde W}^{v}\ot \theta^{\Enc(\tilde W)}_{v,0} + \overline{\tau}_{\tilde W}^{v}\ot \theta^{\Enc(\tilde W)}_{v,1}\bigr)-\theta_{v}^{\Enc(\tilde W)}}}\\
    \intertext{Using $\sum_{v}\theta_{v}^{\Enc(\tilde W)}=V\pe{\tilde W}V^\dag$ and telescoping over $V\peq V^\dag$:}
    &\quad+\Tr{\Bigl(\sum_{v'}\Lambda_{v',\tilde W}\Bigr)\paren{V\pe{\tilde W}V^\dag -V\peq V^\dag}}\\
    \intertext{Correcting for $V\peq V^\dag$ and using \cref{eqn:lambda-sum-rho-enc-q} for the last term:}
    &\quad+\Tr{\Bigl(\sum_{v'}\Lambda_{v',\tilde W}\Bigr)\Bigl(V\peq V^\dag-\frac{\id}{2^n}\ot\rho^{\Enc(q)}\Bigr)}.
\end{align*}

Let 
\[
\Gamma_{\tilde W}\deq V^\dag\overbrace{\Big(\sum_v \Lambda_{v,\tilde W} \Big)}^{M_{\tilde W}}V\in L(\cH).
\]
By the form of \cref{eqn:lambda-classical-mixture} (orthogonal projectors on first two systems), $\{\Gamma_{\tilde W},\id-\Gamma_{\tilde W}\}$ is indeed a POVM and it's efficient given $\tilde W$, since one can apply the efficient isometry $V$, measure the second register in the computational basis (obtaining outcome $k$), then measure the first register in the bases specified by $\tilde W$ (or their complex conjugate depending on the flag outcome $k$) and then apply $\Lambda_{v,\tilde W, k}$ which is efficient given both outcomes and $\tilde W$. With this, the telescoping sum (taking the expectation over $\tilde W$) and the triangle inequality, we have
\begin{align*}
    \Biggl|\E_{\tilde W}\sum_v&\Tr{\Lambda_{v,\tilde W}\left(\tau_{\tilde W}^v\ot \left(\theta^{\Enc(\tilde W)}_{v, 0}-\tfrac{1}{2^n}\theta_{0}\right) + \overline{\tau}_{\tilde W}^v\ot \left(\theta^{\Enc(\tilde W)}_{v, 1}-\tfrac{1}{2^n}\theta_{1}\right)\right)}\Biggr|\\
    &\leq\Biggl|\E_{\tilde W}\sum_{v}\Tr{M_{\tilde W}\paren{\bigl(\tau_{\tilde W}^{v}\ot \theta^{\Enc(\tilde W)}_{v,0} + \overline{\tau}_{\tilde W}^{v}\ot \theta^{\Enc(\tilde W)}_{v,1}\bigr)-\theta_{v}^{\Enc(\tilde W)}}}\Biggr|\\
    &\quad+\Biggl|\E_{\tilde W}\Tr{\Gamma_{\tilde W}\paren{\pe{\tilde W}-\peq}}\Biggr|\\
    &\quad+\Biggl|\Tr{\Bigl(\E_{\tilde W}M_{\tilde W}\Bigr)\Bigl(V\peq V^\dag-\frac{\id}{2^n}\ot\rho^{\Enc(q)}\Bigr)}\Biggr|\\
    &\leq \delta + \xi + \ea,
\end{align*}
where the last inequality follows from \cref{eqn:post-isometry-dephasing-char-2}, \cref{eqn:state-char-q} (since $\E_{\tilde W}M_{\tilde W}$ is an efficient POVM) and \cref{lemma:poly-state-indistinguishability} applied with $D$ sampling $\tilde W$ uniformly and $(z_0,z_1)=(\tilde W, q)$, with the family $\{\Gamma_{\tilde W}\}_{\tilde W}$ uniformly efficient in $\tilde W$.
\par
\medskip
This shows \cref{eqn:post-isometry-dephasing-char-3}. Combining \eqref{eqn:post-isometry-dephasing-char-3} with \eqref{eqn:post-isometry-dephasing-char-2} via the triangle inequality for signed averages, we obtain
\begin{align*}
    \left|\E_{\tilde W}\sum_{v}\Tr{M_{v,\tilde W}\left(\theta_v^{\Enc(\tilde W)} - \tfrac{1}{2^n}\left(\tau_{\tilde W}^v\ot \theta_{0} + \overline{\tau}_{\tilde W}^v\ot \theta_{1}\right)\right)}\right|\leq O((\eps+\ea)^{1/2}),\numberthis\label{eqn:state-guarantee-final}
\end{align*}
Identifying $\theta_v^{\Enc(\tilde W)} = V\phi_v^{\Enc(\tilde W)} V^\dag$ and $\rho_k=\theta_k$ for $k\in\zo$ in the theorem statement concludes the proof, since the $W$ register in the theorem statement is classical, so without loss of generality the efficient POVM (implicit in the computational indistinguishability notation) can be written as 
\[
M=\sum_{v\in\zo^n}\ketbra{v,\tilde W}{v,\tilde W}\ot M_{v,\tilde W}.
\]
\end{proof}

\section{BQP verification}\label{chapter:verification}

\subsection{Protocol}
\begin{protocolbox}{Verification protocol} \label{protocol:verification}
    \textbf{Notation:} $\textprotocol{verify}(C, m, p)$\\
    \textbf{Input:} An $n$-qubit circuit $C$ compiled in the universal gate set $\{\sigma_X, \sigma_Z,T,H,CNOT\}$, where every $H$-gate is replaced by $H(TTH)^3$ (i.e.\ $HPHPHPH$), and a subtest probability $0<p<1$.
    \par
    \medskip
    Let $t$ be the number of $T$ gates in $C$ after this replacement ($t_0$ of which have even parity, meaning an even number of Hadamard gates precede them in the compiled circuit). The argument $m=\Theta(n+t)$ should be such that every symbol appears at least $n+t$ times in a uniformly random $\tilde W\in\{X,Y,Z,F,G\}^m$ with probability $1-e^{-O(m)}$.
    \par 
    \medskip
    Execute the following tests with probability $p$ and $1-p$, respectively.

    \begin{enumerate}
    \item \textbf{State test:} Execute $\textprotocol{cliff}(X,Y,Z,F,G,m)$.
    \item \textbf{Delegation game:} Sample $\tilde W\draw\{X,Y,Z,F,G\}^m$ and send it to Alice, receiving answer $v\in\zo^m$. If $\tilde W$ does not contain every symbol at least $n+t$ times, reject. Otherwise, execute each of the following subtests with equal probability (1/3 each).
    \par
    \medskip
    In each subtest the index sets $N,T_0,T_1$ are sampled uniformly among all tuples of disjoint subsets of the stated sizes whose positions carry the required symbols. We write $P_S \coloneqq \{i: \tilde W_i\in S\}$ for $S\subseteq\{X,Y,Z,F,G\}$ to denote the different symbol classes.
    \begin{enumerate}
        \item \textbf{Computation run:} Choose $N$ uniformly among all $n$-subsets of $P_{\{Z\}}$. Sample $T_0$ and $T_1$ uniformly among disjoint subsets of $P_{\{G,F\}}$ of sizes $t_0$ and $t-t_0$, respectively.
        
        Let $a=\restr{v}{N}$, $b=0^n$, $d\draw\zo^t$, and let $y'\in\zo^t$ with $y'_i=0$ if $(\restr{\tilde W}{T_0\cup T_1})_i=G$ and $y'_i=1$ otherwise; then $y=y'\oplus d$ and $e=\restr{v}{T_0\cup T_1}\oplus d$. 
        \par
        \medskip
        Send $(C,N,T_0, T_1)$ to Bob and execute \textit{Interactive Proof System 1} from \cite{Broadbent2018} starting at step A.3 with keys $(a,b,d,e,y)$ on the qubits in $N$, $T_0$ and $T_1$.

        \item \textbf{X-test run:} Choose $N$ and $T_0$ uniformly among disjoint subsets of $P_{\{Z\}}$ of sizes $n$ and $t_0$, respectively. Sample $T_1$ uniformly among all $(t-t_0)$-subsets of $P_{\{X,Y\}}$.
        
        Let $a=\restr{v}{N}$, $b=0^n$, $d_0=\restr{v}{T_0}$, $d_1=\restr{v}{T_1}$ and let $y\in\zo^{t-t_0}$ with $y_i=0$ if $(\restr{\tilde W}{T_1})_i=X$ and $y_i=1$ otherwise. 
        \par
        \medskip
        Send $(C,N,T_0, T_1)$ to Bob and execute \textit{Interactive Proof System 1} from \cite{Broadbent2018} starting at step B.3 with keys $(a,b,d_0,d_1,y)$ on the qubits in $N$, $T_0$ and $T_1$.

        \item \textbf{Z-test run:} Choose $T_0$ uniformly among all $t_0$-subsets of $P_{\{X,Y\}}$ first. Then sample $N$ uniformly among all $n$-subsets of $P_{\{X\}}\setminus T_0$ and $T_1$ uniformly among all $(t-t_0)$-subsets of $P_{\{Z\}}$.
        
        Let $a=0^n$, $b=\restr{v}{N}$, $d_0=\restr{v}{T_0}$, $d_1=\restr{v}{T_1}$ and let $y\in\zo^{t_0}$ with $y_i=0$ if $(\restr{\tilde W}{T_0})_i=X$ and $y_i=1$ otherwise.
        \par
        \medskip
        Send $(C,N,T_0, T_1)$ to Bob and execute \textit{Interactive Proof System 1} from \cite{Broadbent2018} starting at step C.3 with keys $(a,b,d_0,d_1,y)$ on the qubits in $N$, $T_0$ and $T_1$.
    \end{enumerate}

    \end{enumerate}
    \end{protocolbox}

\subsection{Soundness}
\begin{lemma}[Soundness against malicious `Bob']\label{lemma:soundness-against-malicious-bob}
    Suppose the verifier executes the compiled Verification protocol $\textprotocol{verify}(C,m,p)$, obtained from \cref{protocol:verification}, where\footnote{$\Pi_0$ denotes the projector onto the $\ket{0}$ state of the first output qubit of the circuit, this indicates acceptance.} $\norm{\Pi_0C\ket{0^n}}^2\leq 1/3$, with a prover $P^*$ (not necessarily efficient) such that the shared state after the encrypted interaction is
    \[
    \E_{v\in\zo^m}\E_{\tilde W\in\{X,Y,Z,F,G\}^m}\ketbra{v, \tilde W}{v, \tilde W}_W\ot\left(\tau_{\tilde W}^v\otimes \rho_{0} + \overline{\tau}_{\tilde W}^v\otimes \rho_{1}\right),
    \]
    where the verifier holds the $W$ register and $\rho_0$ and $\rho_1$ are sub-normalized states with orthogonal support, such that $\Tr{\rho_0+\rho_1}=1$. Then the verifier accepts in the delegation game with probability at most $7/9$.
\end{lemma}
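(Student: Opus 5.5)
The plan is to reduce the statement to the soundness theorem of Broadbent's \textit{Interactive Proof System 1} \cite{Broadbent2018}. That theorem says that if the verifier sends correctly QOTP-encrypted input and magic-state qubits and then runs rounds A, B, C with probability $1/3$ each, then no prover is accepted with probability more than $7/9$ on a no-instance. We handle the two flag branches separately. Because $\rho_0$ and $\rho_1$ have orthogonal support, the prover's state is a classical mixture. With weight $p_0=\Tr{\rho_0}$ it is the canonical state $\tau^v_{\tilde W}\ot\hat\rho_0$; with weight $p_1=\Tr{\rho_1}$ it is the conjugate $\overline{\tau}^v_{\tilde W}\ot\hat\rho_1$. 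Since the flag is orthogonal and classical, the prover's later actions can be conditioned on it without loss. So it suffices to bound the acceptance probability by $7/9$ in each branch and then take the convex combination.

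\textbf{Canonical branch.} Conditioned on $\tilde W$ containing every symbol at least $n+t$ times (otherwise the verifier rejects outright, which only helps), we check for each run that the qubits on $N,T_0,T_1$ are exactly what Broadbent's verifier would send with keys $(a,b,d,e,y)$. In the computation run, the qubits on $N\subseteq P_{\{Z\}}$ are $\ket{v_N}=\sigma_X^{a}\ket{0^n}$, a QOTP of the all-zero input. The qubits on $T_0\cup T_1\subseteq P_{\{G,F\}}$ are eigenstates $\ket{+_\theta}$ with $\theta\in\{\pi/4,3\pi/4,5\pi/4,7\pi/4\}$. These equal $\sigma_Z^{e}P^{y}\ket{+_{\pi/4}}$, that is, encrypted $T$-magic states, with $y$ fixed by the basis symbol ($G$ or $F$) and $e$ by the outcome. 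The extra $d$ is just a one-time pad on the classical key, as in Broadbent's computation run. In the $X$- and $Z$-test runs, the $Z$-basis, $X$-basis and $Y$-basis eigenstates likewise reproduce the trap states $\ket{0}/\ket{1}$, $\ket{+}/\ket{-}$ and $P\ket{\pm}$. We must also check that $v$ and $\tilde W$ are uniform and independent, and that $N,T_0,T_1$ are sampled independently of the outcome. Given this, the joint distribution of keys and states matches Broadbent's step 2 exactly. The rest of the protocol is then literally steps A.3/B.3/C.3, so his soundness bound applies to any (unbounded) prover. The auxiliary $\hat\rho_0$ is independent of $(v,\tilde W)$, so it is just prover side information and poses no problem.

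\textbf{Conjugate branch.} Here each qubit is complex-conjugated. Since $C$ is real except for $T$ (and $P$), we use the following fact: a prover holding conjugated states faces exactly the situation of the canonical branch for the conjugate circuit $\overline C$. This holds because complex conjugation is an antiunitary symmetry of all acceptance statistics; every probability $\bra{\psi}\Pi\ket{\psi}$ equals its conjugate. More concretely, replace the prover's whole strategy by its entrywise complex conjugate. The classical transcript distribution is unchanged, and the conjugated strategy now acts on the canonical states. Hence the acceptance probability equals that of some prover against canonical states in Broadbent's protocol for the same classical description of $C$, which is at most $7/9$.

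The main obstacle is making the conjugate branch rigorous, in particular arguing that conjugating the prover's strategy leaves the verifier's decision function, and therefore the acceptance probability, unchanged. This needs care with the key-update rules, since $\overline{P}=P^\dag$. If direct conjugation fails, I would instead relabel keys: conjugation maps $\ket{+_\theta}$ to $\ket{+_{-\theta}}$, so $G\leftrightarrow -F$ and $Y\to -Y$. The conjugate branch is then exactly a canonical execution with the bits $y$ and $e$ re-randomized. Because these bits are uniform and hidden, that re-randomization preserves Broadbent's distribution.
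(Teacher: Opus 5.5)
Your overall route is the paper's: split the state by linearity into the $\rho_0$ and $\rho_1$ branches, treat the $\rho_0$ branch as a genuine run of Broadbent's Interactive Proof System 1, and handle the $\rho_1$ branch by conjugating the prover's Kraus operators entrywise. However, the obstacle you single out is not a real one, and your fallback would fail. Only the prover is conjugated. The verifier's key updates and acceptance predicate are unchanged classical functions of the transcript. Each transcript probability is a real number $\Tr{K(\overline{\tau}^v_{\tilde W}\otimes\rho_1)K^\dagger}$, with $K$ the composed Kraus operator of that transcript, so it equals $\Tr{\overline{K}(\tau^v_{\tilde W}\otimes\overline{\rho}_1)\overline{K}^\dagger}$. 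The conjugated prover need not be honest, so $\overline{P}=P^\dagger$ and the key-update rules play no role; this is all the paper uses. The relabelling fallback is wrong. Since $\overline{\sigma}_G=-\sigma_F$ and $\overline{\sigma}_Y=-\sigma_Y$, conjugation sends $\tau_G^v\mapsto\tau_F^{v\oplus 1}$, $\tau_F^v\mapsto\tau_G^{v\oplus 1}$ and $\tau_Y^v\mapsto\tau_Y^{v\oplus 1}$; equivalently, it acts as $\sigma_X$ on exactly the hidden $G,F,Y$ positions. That is a deterministic, basis-dependent mismatch between the physical qubits and the verifier's recorded keys, not an independent re-randomisation. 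So it does not reproduce Broadbent's joint distribution of states and keys.

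The genuine omission is the step that is specific to this protocol. Broadbent's $7/9$ bound requires the three run types to be indistinguishable to the prover. Here, though, the verifier sends $(C,N,T_0,T_1)$ in the clear, and the prover also keeps the $m-(n+t)$ unused qubits. Conditioned on the index sets, the symbols on those unused qubits have a run-dependent distribution. Checking that $N,T_0,T_1$ are ``sampled independently of the outcome'' does not address this. You need two facts, both shown in the paper:
\begin{enumerate}
\item The law of $(N,T_0,T_1)$, with the hidden $\tilde W$ averaged out, is the same in all three runs. This follows from invariance of the uniform $\tilde W$ under permutations of $[m]$, together with the fact that the rejection event depends only on symbol counts.
\item The unused qubits are maximally mixed and independent of the keys. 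This follows from the product form of $\tau_{\tilde W}^v$ and the uniformity of the never-revealed bits of $v$, so these qubits can be absorbed into the prover's private space.
\end{enumerate}
Only then is the remaining interaction literally Broadbent's protocol, so that his bound applies.
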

\begin{proof}
    The proof follows from the soundness of the Broadbent protocol. Write $\sigma_{\tilde W}^v$ for the restriction of $\tau_{\tilde W}^v$ to the qubits in $N$, $T_0$ and $T_1$. Since $\tau_{\tilde W}^v$ is a product state and $v$ is uniform, the unused qubits are independent of the keys derived from $\restr{v}{N\cup T_0\cup T_1}$ and may be absorbed into the prover's private space. Thus, on the support of $\rho_0$, the prover holds (up to this private space) the prepare-and-send state of Interactive Proof System 1 from \cite{Broadbent2018} with the keys of \cref{protocol:verification}; on the support of $\rho_1$ he holds the complex conjugate of that state in the computational basis. The expectations over $\tilde W$ and $v$ supply the uniformly random keys required by Broadbent's analysis. If $\tilde W$ lacks sufficiently many of each symbol the verifier rejects, which can only decrease the acceptance probability, so we may condition on the index sets being well-defined.
    \par
    \medskip
    The remaining interaction is a (possibly adaptive) measurement of this cq-state, so by linearity the acceptance probability is a convex combination of the two branches. On the $\rho_0$ branch the instance is a genuine execution of the Broadbent protocol. 
    \par
    \medskip
    On the $\rho_1$ branch, let $\widetilde P$ be the complex conjugate of $P^*$'s operations in the unencrypted interaction, i.e.\ the strategy obtained by conjugating every Kraus operator in the computational basis (again a valid strategy, since conjugation preserves $\sum_j K_j^\dagger K_j = \mathbb{1}$, though not necessarily efficiency). Since the prover's answers are outcomes of computational-basis measurements, whose projectors are real, and the verifier's messages and acceptance predicate are classical functions of the transcript, $\widetilde P$ on $\sigma_{\tilde W}^v \otimes \overline{\rho_1}$ induces the same transcript distribution as $P^*$ on $\overline{\sigma_{\tilde W}^v} \otimes \rho_1$, by the following identity:
    \[
    \Tr{M\bigl(\overline{\sigma}_{\tilde W}^v\otimes\rho_1\bigr)}=\Tr{\overline{M}\bigl(\sigma_{\tilde W}^v\otimes\overline{\rho}_1\bigr)},
    \]
    Hence $\widetilde P$ is a (possibly inefficient) Broadbent prover on the correct plaintext states. Broadbent's soundness is information-theoretic and gives acceptance at most $7/9$ on a no-instance for Interactive Proof System 1, where the different run types are indistinguishable \cite[Section 7.6]{Broadbent2018}, which is the case in \cref{protocol:verification} since the prior interaction is encrypted and the extra message which Bob receives $(C,N,T_0, T_1)$ is equally distributed in all three run types. Indeed, the sampling procedure of \cref{protocol:verification} is invariant under any permutation of $[m]$ that fixes the symbol classes $P_S$ setwise, since $N,T_0,T_1$ are each chosen uniformly among subsets of a fixed size within fixed symbol classes; hence, conditioned on $\tilde W$ containing enough of each symbol, $(N,T_0,T_1)$ is uniformly distributed over all admissible triples of disjoint subsets of $[m]$ of the required sizes and symbol classes. The rejection event depends only on the symbol counts of $\tilde W$, not on which positions realize them, so it is independent of $(N,T_0,T_1)$ given those counts; thus the law of $(N,T_0,T_1)$, conditioned on non-rejection, is uniform over admissible triples independently of the run type, and so is the law of $(C,N,T_0,T_1)$. The winning probability of both branches is thus bounded by $7/9$, and so is $P^*$.
\end{proof}

\begin{lemma}\label{lemma:soundness-rounding}
    Suppose the verifier executes the compiled Verification protocol $\textprotocol{verify}(C,m,p)$, obtained from \cref{protocol:verification}, where $\norm{\Pi_0C\ket{0^n}}^2\leq 1/3$, with a computationally efficient prover $P^*$, such that the prover is accepted with probability at least $\omega^*-\eps$, for some $\eps > 0$, in the state test. Here $\omega^*=\tfrac{1}{6}(5+\cos^2(\frac{\pi}{8}))$ is the optimal winning probability of the Clifford test. Then the verifier accepts $P^*$ in the delegation game with probability at most $\frac{7}{9}+\gamma$, where $\gamma = O((\eps+\ea)^{1/2})$.
\end{lemma}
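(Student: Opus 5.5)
The plan is to combine the RSP guarantee of \cref{theorem:rsp-guarantee} with the information-theoretic soundness of \cref{lemma:soundness-against-malicious-bob}. The bridge is a reduction showing that the delegation game's acceptance probability is an efficient measurement of the verifier-prover cq-state after the encrypted round. Concretely, the state test runs $\textprotocol{cliff}(X,Y,Z,F,G,m)$ with success $\omega^*-\eps$. Hence \cref{theorem:rsp-guarantee} applies with question distribution $\tilde W\draw\{X,Y,Z,F,G\}^m$. This is exactly the Alice question of the delegation game, and by \cref{rem:fixed-length-encoding} the two encodings are indistinguishable in length.

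We obtain an efficient isometry $V$ with
\[
\E_{\tilde W}\sum_v\ketbra{v,\tilde W}{v,\tilde W}_W\ot V\phi_v^{\Enc(\tilde W)}V^\dag\overset{c}{\approx}_\gamma \E_{v,\tilde W}\ketbra{v,\tilde W}{v,\tilde W}_W\ot\bigl(\tau^v_{\tilde W}\ot\rho_0+\overline\tau^v_{\tilde W}\ot\rho_1\bigr).
\]
The prover's actual state before the unencrypted part of the delegation game is the left-hand side without $V$. The prover cannot distinguish the state test from the delegation game at this stage, since the same encrypted question distribution is used. So the post-Alice state in the delegation game equals the one appearing in the theorem.

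Next, define the distinguisher. Given the classical register $(v,\tilde W)$ and a state on $(\C^2)^m\ot\C^2\ot\hat\cH$, it first applies $V^\dag$ coherently. Concretely, it uncomputes $V$ via its efficient unitary dilation $U_V$ from \cref{cor:efficient-isometry-state-switching}, which is efficient by \cref{cor:gh-efficient}, and postselects on ancilla $\ket0$. It then simulates the rest of the delegation game: the verifier's side (the rejection check on symbol counts, sampling the subtest and $N,T_0,T_1$, deriving keys from $(v,\tilde W)$, and running Broadbent's steps) and the efficient prover $P^*$'s unencrypted strategy. Finally it outputs the accept bit. This is a uniformly efficient POVM $M=\sum_{v,\tilde W}\ketbra{v,\tilde W}{v,\tilde W}\ot M_{v,\tilde W}$. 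On the real state it produces exactly $P^*$'s acceptance probability in the delegation game, because $V^\dag V=\id$, so $V^\dag(V\phi V^\dag)V=\phi$.

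On the ideal state the same procedure defines some, possibly inefficient, post-processing of the ideal cq-state. The acceptance probability of this procedure is at most that of an unrestricted prover holding the ideal state. Indeed, applying $V^\dag$ (with failure counted as reject) followed by $P^*$'s unencrypted operations is itself a valid unencrypted strategy, and the verifier's actions are identical. By \cref{lemma:soundness-against-malicious-bob}, which allows non-efficient provers, this probability is at most $7/9$. Computational indistinguishability with advantage $\gamma$ then gives acceptance at most $7/9+\gamma$.

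The main obstacle is making the distinguisher legitimate. It must use only efficient operations, with no access to $\sk$. The verifier's later messages depend only on $(v,\tilde W)$, which sits in the classical register, so this requires checking that the Broadbent steps need no decryption beyond $v$. It also requires handling the non-unitary $V^\dag$ through the dilation, and ensuring that the extra classical message $(C,N,T_0,T_1)$ is sampled inside the distinguisher. A secondary subtlety is that \cref{theorem:rsp-guarantee} is stated for the qubit count $m$ and questions with unconditioned symbol counts. The rejection event is absorbed into the distinguisher and only lowers acceptance, so this causes no issue.
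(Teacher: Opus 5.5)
Your proposal is correct and matches the paper's proof. Both arguments invoke \cref{theorem:rsp-guarantee} and write the delegation-game acceptance probability as $\Tr[M\Theta]$ for the efficient POVM element $M=(\id\ot V)D(\id\ot V^\dag)$: undo $V$ through its unitary dilation, reject on a nonzero ancilla, then run $P^*$'s cleartext continuation together with the verifier's classical steps, which need only $(v,\tilde W)$. On the ideal state this same procedure is a valid (possibly inefficient) prover strategy, so \cref{lemma:soundness-against-malicious-bob} bounds it by $7/9$, and computational indistinguishability adds $\gamma$.
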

\begin{proof}
    By \cref{theorem:rsp-guarantee} (applied with qubit count $m$, since the state test is $\textprotocol{cliff}(X,Y,Z,F,G,m)$), success of at least $\omega^*-\eps$ in the state test implies the existence of a complex Hilbert space $\tilde\cH$ of finite dimension, an efficient isometry $V: \cH \to (\C^2)^m\ot\tilde\cH$ and a cryptographically small function $\ea$ such that
    \begin{align*}
        \overbrace{\E_{\tilde W}\sum_{v\in\zo^m} \ketbra{v,\tilde W}{v, \tilde W}_W\ot V\phi_v^{\Enc(\tilde W)}V^\dag}^{\Theta} \overset{c}{\approx}_{\gamma} \overbrace{\E_{v,\tilde W}\ketbra{v, \tilde W}{v, \tilde W}_W\ot\left(\tau_{\tilde W}^v\otimes \rho_{0} + \overline{\tau}_{\tilde W}^v\otimes \rho_{1}\right)}^{\Omega},\numberthis\label{eqn:soundness-rounding-state-test}
    \end{align*}
    where $\gamma = O((\eps+\ea)^{1/2})$ and $\phi_v^{\Enc(\tilde W)}$ is $P^*$'s state after the encrypted interaction (marginalized per decrypted outcome). 
    \par
    \medskip
    After the encrypted round the remaining interaction consists of $\poly(m)$ rounds of cleartext messages: the verifier samples the run type and the data $(N,T_0,T_1,d,\dots)$ as a function of the register $W$ and his own randomness, the prover replies, and the verifier applies a classical acceptance predicate to the transcript. The accepting contribution of $P^{*}$'s continuation, averaged over the verifier's randomness, can be written as a family of two-outcome POVMs $\{D_{v,\tilde W},\mathbb 1-D_{v,\tilde W}\}$ on $\cH$ with
    \[
    p_\text{accept} \;=\; \E_{\tilde W}\sum_{v\in\{0,1\}^m}
            \Tr\bigl[D_{v,\tilde W}\,\phi^{\Enc(\tilde W)}_v\bigr].
    \]
    The family depends on $(v,\tilde W)$ because the verifier's unencrypted messages are computed from $v=\Dec(\alpha)$, which the prover cannot compute himself; it is uniformly efficient in $(v,\tilde W)$, since the verifier is classical polynomial time, $P^*$'s continuation is efficient by assumption, and there are only $\poly(m)$ rounds. Let
    \[
    D:=\sum_{v,\tilde W}\ketbra{v,\tilde W}{v,\tilde W}_W\otimes D_{v,\tilde W},
    \qquad
    M:=(\mathbb 1_W\otimes V)\,D\,(\mathbb 1_W\otimes V^\dag).
    \]
    Since $V$ is an isometry we have $0\preceq M\preceq(\mathbb 1_W\otimes VV^\dagger)\preceq\mathbb 1$,
    so $\{M,\mathbb 1-M\}$ is a POVM, and it is uniformly efficient because $D$ and $V$ are.
    \par
    \medskip
    By $V^\dag V=\mathbb 1$ and cyclicity of the trace,
    $\Tr[V D_{v,\tilde W}V^\dag\, V\phi^{\Enc(\tilde W)}_v V^\dag]
    =\Tr[D_{v,\tilde W}\,\phi^{\Enc(\tilde W)}_v]$, hence
    \[
    \Tr[M\Theta] \;=\; p_\text{accept}\numberthis\label{eqn:soundness-rounding-povm-acceptance}.
    \]
    Since $M$ is a uniformly efficient POVM element, \cref{eqn:soundness-rounding-state-test} gives
    \[
    \bigl|\Tr[M\Theta]-\Tr[M\Omega]\bigr|\;\le\;\gamma\numberthis\label{eqn:soundness-rounding-povm-difference}.
    \]

    We claim that $\Tr[M\Omega]$ lower-bounds the acceptance probability of some prover holding the ideal state, i.e.\ the state on the right-hand side of \cref{eqn:soundness-rounding-state-test}. Recall that $M$ was derived from the particular interaction between the verifier and $P^*$, mapped into the dilated space of the isometry $V$. It remains to show that there exists a prover strategy, which acts on the ideal state (that lives in the dilated space), whose transcript distribution is identical to that of $P^*$ (on the parts of the ideal state that live inside the image of $V$).
    \par
    \medskip
    To this end define $U$ as the efficient unitary extension of $V$, such that 
    \[
    V\ket{\psi} = U(\ket{\psi}\ot\ket{0})
    \]
    for all $\ket{\psi}\in\cH$. Then we can define $P'$ as applying $U^\dag$, measuring the ancilla qubits in the computational basis, aborting if the outcome isn't zero and otherwise applying $P^*$'s continuation on $\cH$. The element of the verifier's interaction with $P'$ is then $A=M+A_\text{abort}$, where $A_\text{abort}$ is PSD, such that
    \[
    \Tr[M\Omega]\leq\Tr[A\Omega]\leq \frac{7}{9}.
    \]
    Here the last inequality follows from \Cref{lemma:soundness-against-malicious-bob}, since $P'$ holds the ideal state. Combining with \cref{eqn:soundness-rounding-povm-acceptance} and \cref{eqn:soundness-rounding-povm-difference} yields,
    \[
    p_\text{accept} \;\le\; \frac{7}{9}+\gamma,
    \]
    which completes the proof.
\end{proof}

\begin{definition}[$\textnormal{Q-CIRCUIT}$]\label{def:q-circuit}
    The input to the promise problem $\textnormal{Q-CIRCUIT}$ consists of a quantum circuit $C=C_T\cdots C_1$ acting on $n$ qubits, given in the universal gateset $\{\sigma_X,\sigma_Z,H,\mathrm{CNOT},T\}$ (as in \cref{protocol:verification}, every $H$ gate is replaced by $H(TTH)^3$). Let $\Pi_0$ denote the projector onto $\ket{0}$ of the first output qubit, and let
    \[
    p(C)\deq\norm{\Pi_0 C\ket{0^n}}^2
    \]
    be the probability of observing $0$ as a result of a computational-basis measurement of that qubit after evaluating $C$ on $\ket{0^n}$. Then define $\textnormal{Q-CIRCUIT}=\{\textnormal{Q-CIRCUIT}_{\mathrm{YES}},\,\textnormal{Q-CIRCUIT}_{\mathrm{NO}}\}$ with
    \[
    \textnormal{Q-CIRCUIT}_{\mathrm{YES}}\coloneqq \{C: p(C)\ge 2/3\}\,,\qquad
    \textnormal{Q-CIRCUIT}_{\mathrm{NO}}\coloneqq \{C: p(C)\le 1/3\}\,.
    \]
\end{definition}
    
\begin{theorem}\label{theorem:bqp-verification}
    Assuming LWE is hard for non-uniform quantum adversaries, in sense of \cref{def:crypto-small}, there exist constants $0<p<1$ and $\Delta>0$, $\delta_{SK}>0$ and thresholds $\lambda_0$, $m_0$ such that the following holds.
    For every instance $C$ of $\textnormal{Q-CIRCUIT}$ (\cref{def:q-circuit}) with $m=\Theta(|C|)\ge m_0$ and with security parameter $\lambda\ge\lambda_0$ instantiated from $m$ as in \cref{rem:lambda-vs-n}, the compiled\footnote{Since the test has multiple rounds of interaction with Bob, compiled here means that the first interaction with Alice is encrypted and all interactions with Bob happen in the clear.} protocol $\textprotocol{verify}(C,m,p)$ of \cref{protocol:verification} satisfies:
    \begin{enumerate}
        \item (Completeness:) If $C\in\textnormal{Q-CIRCUIT}_{\mathrm{YES}}$, then there is a strategy for the prover, consuming $O(\poly(\lambda, \log |C|)|C|)$ total resources, that is accepted with probability at least 
        \[
        p_c = p\left(\frac{1}{6}\left(5+\cos^2(\frac{\pi}{8})\right)\right) + (1-p)\frac{8}{9}-e^{-O(m)}-\ea-\delta_{\mathrm{SK}},
        \]
        where the cryptographically small term $\ea$ accounts for the correctness error of the QFHE evaluation, while $\delta_{\mathrm{SK}}$ accounts for gate-synthesis error.
        \item (Soundness:) If $C\in\textnormal{Q-CIRCUIT}_{\mathrm{NO}}$, then any efficient prover strategy is accepted with probability at most $p_s = p_c - \Delta$.
    \end{enumerate}
\end{theorem}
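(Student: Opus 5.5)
Completeness and soundness are handled separately, and the constants $p,\Delta$ are fixed at the end.

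\textbf{Completeness.} The honest prover shares $m$ EPR pairs between the encrypted (Alice) and cleartext (Bob) parts. Alice's measurements are evaluated homomorphically, with the circuit compiled into Clifford+Toffoli as in \cref{subsec:qfhe-overhead}.

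In the state test, \cref{lemma:chsh} and \cref{theorem:clifford-mixed-basis} give acceptance $\omega^*$ up to the QFHE correctness error $\ea$ and the Solovay--Kitaev error $\delta_{\mathrm{SK}}$.

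In the delegation game, a uniform $\tilde W$ contains each symbol at least $n+t$ times except with probability $e^{-O(m)}$; this follows from a Chernoff bound and the choice of $m$. Conditioned on that event, Alice's measurement of her halves in the transposed bases $\tilde W$ with outcome $v$ leaves Bob holding $\tau^v_{\tilde W}$. This is exactly Broadbent's prepare-and-send input with the keys $(a,b,d,e,y)$ assigned in \cref{protocol:verification}. One checks this per run type: for example, $G/F$ eigenstates with outcome bit $v_i$ encode $\ket{+_\theta}$ magic states with $y'$ selecting the $S$-correction. The honest Broadbent prover is then accepted with probability at least $8/9$ on YES instances, since computation runs succeed with probability at least $2/3$ and test runs with probability $1$, averaging over thirds.

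For resources, there are $O(m)=O(|C|)$ qubits. The QFHE overhead is $O(\poly(\lambda)\,\mathrm{polylog}|C|\cdot|C|)$ by \cref{table:brakerski-overhead}, and Broadbent's classical rounds are linear in $|C|$. Summing yields the stated $p_c$.

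\textbf{Soundness.} Fix an efficient prover on a NO instance and write its acceptance probability as $p\,q_1+(1-p)\,q_2$, where $q_1$ is the state-test acceptance and $q_2$ the delegation acceptance. Set $\eps\coloneqq\max(0,\omega^*-q_1)$.

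Compiled Tsirelson (\cref{eqn:chsh-pi}, via \cref{theorem:clifford-mixed-basis}) gives $q_1\le\omega^*+\ea$. Because the encrypted first message in the delegation game is distributed identically to the Alice question in the $\textprotocol{mbt}$ item with uniform $\tilde W\in\{X,Y,Z,F,G\}^m$, the post-Alice states coincide. So \cref{lemma:soundness-rounding} applies to the same prover and gives $q_2\le 7/9+C(\eps+\ea)^{1/2}$.

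The total acceptance is therefore at most
\[
p(\omega^*-\eps)+(1-p)\bigl(7/9+C\sqrt{\eps}\bigr)+\ea.
\]
Maximising over $\eps\ge 0$, the function $-p\eps+(1-p)C\sqrt{\eps}$ has maximum $(1-p)^2C^2/(4p)$. Soundness thus becomes
\[
p_s\le p\,\omega^*+(1-p)\tfrac79+\tfrac{(1-p)^2C^2}{4p}+\ea.
\]
Compared with $p_c$, the gap is
\[
(1-p)\tfrac19-\tfrac{(1-p)^2C^2}{4p}-e^{-O(m)}-2\ea-\delta_{\mathrm{SK}}.
\]
Choose $p$ close to $1$ so that $(1-p)C^2/(4p)<1/18$. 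This leaves a positive constant gap of at least $(1-p)/18$ before the small terms. Then choose $m_0,\lambda_0$ large and $\delta_{\mathrm{SK}}$ small so that these terms total less than half of that gap, and set $\Delta$ to the remainder.

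The main obstacle is justifying that \cref{lemma:soundness-rounding} applies even though the delegation branch is a different game from the Clifford test. The hypothesis concerns the prover's behaviour in the state test, whereas the conclusion concerns the state after the delegation game's encrypted round. I would argue this via identical encrypted question distributions: the prover cannot tell the two apart at that point, because the padding convention of \cref{rem:fixed-length-encoding} ensures the ciphertexts have equal length. I would also note that the dependence of the constant $C$ on the hidden $O(\cdot)$ factors is uniform in $m$. That uniformity is exactly the constant robustness established in \cref{theorem:rsp-guarantee}.
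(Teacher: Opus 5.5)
Your proposal is correct. The completeness part matches the paper's, up to a detail: the paper's honest prover uses $m+2$ EPR pairs, the two extra pairs serving the magic-square and conjugation-control registers. Your soundness argument, however, takes a different route.

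\textbf{The paper's route.} It fixes a threshold $\eps^*$ and splits provers into two cases.
\begin{itemize}
\item Good provers have state-test acceptance at least $\omega^*-\eps^*$. For these, \cref{lemma:soundness-rounding} is invoked only at the single point $\eps^*$.
\item Bad provers have lower state-test acceptance. For these, delegation acceptance is bounded trivially by $1$.
\end{itemize}
It then tunes $p=p^*(\eps^*)$ so that the good branch dominates the maximum, and chooses $\eps^*$ with $\gamma(2\eps^*)<1/9$.

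\textbf{Your route.} You apply \cref{lemma:soundness-rounding} at the prover's own deficit $\eps=\max(0,\omega^*-q_1)$. You then maximise $-p\eps+(1-p)C\sqrt{\eps}$ in closed form, which yields the single inequality $p_s\le p\,\omega^*+(1-p)\tfrac79+\tfrac{(1-p)^2C^2}{4p}+\ea$.

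\textbf{Comparison.} Your route removes the case split and the specific tuning of $p$. It also shows explicitly that every $p$ with $(1-p)C^2/(4p)<1/18$ works. The resulting gap is of order $1/C^2$, the same order as the paper's $\Theta(\eps^*)$. The cost is that you need the bound of \cref{lemma:soundness-rounding} with one fixed constant $C$ over the whole range of $\eps$, not just at one point. This does hold, because the hidden constants in the rigidity chain are absolute and independent of $m$, and for large $\eps$ the bound is vacuous anyway.

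\textbf{Two smaller remarks.}
\begin{itemize}
\item The step you flag as the main obstacle is already built into \cref{lemma:soundness-rounding}: its hypothesis and conclusion both refer to executions of $\textprotocol{verify}(C,m,p)$. Moreover, the identification of the post-Alice states is exact, not merely computational. In the model of \cref{section:modeling}, the prover's first-round operation depends only on the ciphertext, so the padding argument is not needed here.
\item When you choose $\lambda_0$ so that the $\ea$ terms are below half the gap, you implicitly need $\ea$ to be independent of the prover. Otherwise $\lambda_0$, and hence $\Delta$, would be prover-dependent. The paper addresses this with the non-uniform hardwiring remark stated just before its proof, and your argument needs that same step.
\end{itemize}
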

\begin{remark}[On the constants $\lambda_0,m_0,\Delta$]
The functions $\gamma$ and $\eta$ appearing in the proof below are, individually, only guaranteed to be cryptographically small for the specific prover under consideration; a priori this could make the threshold beyond which $\eta(\lambda)<\eps^*$ prover-dependent, which would make $\lambda_0$ (and hence $\Delta$) depend on the prover as well. This is not the case: by \cref{rem:non-uniformity}, for every $\lambda$ we may hardwire, as classical advice, the choice of prover (among the non-uniform class fixed by \cref{def:crypto-small} for the assumed hardness of LWE) that maximizes $\eta(\lambda)$, obtaining a single cryptographically small function that dominates $\eta(\lambda)$ for \emph{every} efficient prover simultaneously. It is this dominating function that is used to fix $\eps^*$, and hence $\lambda_0$ and $\Delta$, below; both are therefore genuine constants, valid uniformly over the entire class of efficient provers, not merely for a single fixed one.
\end{remark}
\begin{proof}
    \textbf{Completeness. } Completeness is straightforward to show; for more details about the honest prover strategy see \cref{table:honest-prover} and \cref{sec:protocols}. The Clifford subtest ($\textsc{cliff}$) has completeness $\frac{1}{6}(5+\cos^2(\frac{\pi}{8}))$, which is achieved by the prover holding $m+2$ EPR pairs and performing the honest measurements. By our consistency convention, the honest Alice measurements are actually the transposes of the requested operators. The delegation subtest corresponds to an execution of the prepare-and-send version of the Broadbent protocol, which has perfect completeness in the test rounds and completeness at least $\frac{2}{3}$ in the computation rounds (cf. \cite[Section 6]{Broadbent2018}), which yields overall completeness of at least $\frac{8}{9}$ since all three rounds are executed with equal probability. The correction term $e^{-O(m)}$ is due to the fact that a uniformly random $\tilde W\in\{X,Y,Z,F,G\}^m$ does not always contain every symbol sufficiently often, and the cryptographically small term $\ea$ accounts for the correctness error of the homomorphic evaluation performed by the honest prover in the encrypted round. Lastly, there is also an error term stemming from the synthesis of the honest Alice circuit into the gate set supported by QFHE. Let $G_{\mathrm{SK}}$ be the number of gates in that circuit that require approximation (in our construction, only controlled-$T$ gates). Synthesize each such gate to operator-norm error at most $\delta_{\mathrm{SK}}/(2G_{\mathrm{SK}})$. It is then possible to bound the diamond-norm distance between the ideal and synthesized Alice channels by $\delta_{\mathrm{SK}}$, which only changes the prover's acceptance probability by at most $\delta_{\mathrm{SK}}$. $\delta_{SK}$ is an independent and constant error term. Solovay--Kitaev gives a per-gate synthesis overhead of $\mathrm{polylog}(G_{\mathrm{SK}}/\delta_{\mathrm{SK}})$, which was already accounted for.
    \par
    \medskip
    We will now show that the basis choices mandated by the protocol correspond to the key assignments in the Broadbent protocol, taking the computation runs as example. On the qubits in $N$ we have $\tilde W_i=Z$, so the prover holds the $\sigma_Z$-eigenstate $\ket{v_i}=\sigma_X^{v_i}\ket{0}$; this is the one-time-padded all-zero input of \cite{Broadbent2018}, with $X$-key $a=\restr{v}{N}$ and trivial $Z$-key $b=0^n$. On the qubits in $T_0\cup T_1$ we have $\tilde W_i\in\{G,F\}$, uniformly. Writing $\ket{+_\theta}=(\ket{0}+e^{i\theta}\ket{1})/\sqrt2$, the eigenstates of $\sigma_G$ with outcomes $v_i=0,1$ are $\ket{+_{\pi/4}}$ and $\ket{+_{5\pi/4}}$, and those of $\sigma_F$ are $\ket{+_{3\pi/4}}$ and $\ket{+_{7\pi/4}}$, respectively. In our gate notation,
    \[
    \ket{+_\theta} = \sigma_X^{d_i} \sigma_Z^{e_i} \sigma_S^{y_i} \sigma_T\ket{+}
    \]
    up to global phase, with $\theta = \pi/4 + (y_i\oplus d_i)\pi/2+(e_i\oplus d_i)\pi$ for any $i\in[t]$. Substituting the keys $d\draw\zo^t$, $y=y'\oplus d$ and $e=\restr{v}{T_0\cup T_1}\oplus d$ of \cref{protocol:verification} recovers exactly the four eigenstates above, so the physical state is consistent with Broadbent's $T$-gadget auxiliaries.
    \par
    \medskip
    The $X$- and $Z$-test runs follow the same pattern; \cref{table:broadbent-key-correspondence} summarizes the correspondence for all three run types.
    \begin{table}[h]
    \centering
    \begin{tabular}{@{}lllll@{}}
    \toprule
    Run & Positions & $\tilde W$ symbol & Broadbent aux state & Keys\\
    \midrule
    Computation & $N$ & $Z$ & $\sigma_X^{v_i}\ket0$ & $a=\restr{v}{N},\ b=0^n$\\[0.2em]
     & $T_0\cup T_1$ & $G,F$ & $\sigma_X^{d_i}\sigma_Z^{e_i}\sigma_S^{y_i}\sigma_T\ket+$ & $d,\ y=y'\oplus d,\ e=\restr{v}{T_0\cup T_1}\oplus d$\\[0.5em]
    \hline\\[-0.5em]
    $X$-test & $N,T_0$ & $Z$ & $\sigma_X^{v_i}\ket0$ & $a=\restr{v}{N},\ b=0^n,\ d_0=\restr{v}{T_0}$\\[0.2em]
     & $T_1$ & $X,Y$ & $\sigma_S^{y_i}\sigma_Z^{(d_1)_i}\ket+$ & $d_1=\restr{v}{T_1},\ y_i=0\Leftrightarrow\tilde W_i=X$\\[0.5em]
    \hline\\[-0.5em]
    $Z$-test & $N$ & $X$ & $\sigma_Z^{v_i}\ket+$ & $b=\restr{v}{N},\ a=0^n$\\[0.2em]
     & $T_0$ & $X,Y$ & $\sigma_S^{y_i}\sigma_Z^{(d_0)_i}\ket+$ & $d_0=\restr{v}{T_0},\ y_i=0\Leftrightarrow\tilde W_i=X$\\[0.2em]
     & $T_1$ & $Z$ & $\sigma_X^{(d_1)_i}\ket0$ & $d_1=\restr{v}{T_1}$\\
    \bottomrule
    \end{tabular}
    \caption{Correspondence between the position classes sampled by \cref{protocol:verification} and the auxiliary states and keys of \cite{Broadbent2018}'s Interactive Proof System 1, for all three run types.}
    \label{table:broadbent-key-correspondence}
    \end{table}
    For the $X$-test, the $\sigma_X$/$\sigma_Y$-eigenstates on $T_1$ with outcome $(d_1)_i$ are exactly $\sigma_Z^{(d_1)_i}\ket+$ (for $\tilde W_i=X$, $y_i=0$) and $\sigma_S\sigma_Z^{(d_1)_i}\ket+$ (for $\tilde W_i=Y$, $y_i=1$), matching the definition of $y$ in \cref{protocol:verification}; the $Z$-test entries follow by the same computation with the roles of $(N,a,b)$ and of the two test-position classes exchanged. This shows that the physical state prepared on every position class, in every run type, is exactly the Broadbent auxiliary state with the keys assigned by \cref{protocol:verification}.
    \par
    \medskip
    The joint law of $(N,T_0,T_1)$, and hence of the keys, does not depend on the run type, and the unused positions are maximally mixed and independent of the keys, by the argument given in the proof of \cref{lemma:soundness-against-malicious-bob}. Finally, in every run type the interaction with Bob continues past the step named in \cref{protocol:verification}: steps A.4--A.5 (resp.\ B.4--B.5, C.4--C.5) of \cite{Broadbent2018}---the output measurement, decryption, and accept predicate---are executed as well, exactly as in the Broadbent protocol. Together with the Clifford-test completeness above, this shows that completeness of our protocol indeed follows from the completeness of the Clifford test and the Broadbent protocol.
    \par
    \medskip
    \textbf{Soundness. } It remains to show that any efficient prover will be accepted with probability at most $p_c - \Delta$, in an execution of $\textprotocol{verify}(C,m,p)$, where $C\in\textnormal{Q-CIRCUIT}_{\mathrm{NO}}$. We can make the following case distinction, where $\eps^*$ is a constant parameter we will set at the end of the proof:
    \begin{enumerate}[label={},leftmargin=14pt,labelsep=0pt]
        \item \emph{Good prover:} The prover is accepted in the state test with probability at least $\omega^*-\eps^*$.
        \item \emph{Bad prover:} The prover is accepted in the state test with probability at most $\omega^*-\eps^*$.
    \end{enumerate}
    
    In the case of a good prover, we invoke \cref{lemma:soundness-rounding} to conclude that $P^*$ succeeds with probability at most $\frac{7}{9}+\gamma(\eps^*+\ea)$. Since no efficient prover can win the state test with probability exceeding $\omega^*+\ea$ (cf.\ \cref{theorem:clifford-mixed-basis}), the upper bound on the winning probability of a good prover is $p_\text{good} \leq p(\omega^*+\ea) + (1-p)\left(\frac{7}{9}+\gamma(\eps^*+\ea)\right)$.
    \par
    \medskip
    In the case of a bad prover, we can't use the rigidity test, nor the soundness guarantee of the Broadbent protocol (because we can't round the state). The upper bound on the winning probability of a bad prover is $p_\text{bad} \leq p(\omega^*-\eps^*) + (1-p)$, since his winning probability in the state test is at most $\omega^*-\eps^*$.
    \par
    \medskip
    Since one of the two cases must hold, in general the winning probability of any prover on a no-instance is
    \[
    \max\left\lbrace p(\omega^*+\ea) + (1-p)\left(\frac{7}{9}+\gamma(\eps^*+\ea)\right), p(\omega^*-\eps^*) + (1-p)\right\rbrace.
    \]
    By choosing the protocol constant $p$ sufficiently close to $1$, specifically
    \[
    p = p^*(\eps^*)\deq \frac{\frac{2}{9}-\gamma(\eps^*)}{\frac{2}{9}-\gamma(\eps^*) + \eps^*}
    \]
    the first argument (corresponding to the good prover case) can be made to always dominate the maximum (the first argument dominates whenever $p(\eps^*+\ea)\geq(1-p)\bigl(\tfrac{2}{9}-\gamma(\eps^*+\ea)\bigr)$, and $p^*(\eps^*)$ satisfies this already for $\ea=0$). Then the highest possible winning probability on a no-instance is
    \[
    p_s = p^*(\eps^*)(\omega^*+\ea) + (1-p^*(\eps^*))\left(\frac{7}{9}+\gamma(\eps^*+\ea)\right).
    \]
    Recalling the completeness probability 
    \[
    p_c = p^*(\eps^*)\left(\frac{1}{6}\left(5+\cos^2(\frac{\pi}{8})\right)\right) + (1-p^*(\eps^*))\frac{8}{9}-e^{-O(m)}-\ea-\delta_{\mathrm{SK}},
    \]
    the completeness--soundness gap becomes
    \[
    \Delta = p_c - p_s = (1-p^*(\eps^*))\left(\frac{1}{9}-\gamma(\eps^*+\ea)\right) - O(\ea) - e^{-O(m)}-\delta_{\mathrm{SK}}.
    \]
    We are now in a position to set $\eps^*$: if we choose it such that $\gamma(2\eps^*) < 1/9$, then for all sufficiently large $\lambda$, $\eta(\lambda) < \eps^*$, so we have $\gamma(\eps^* + \eta(\lambda)) < 1/9$. This means there is some positive constant $K$ such that
    \[ \Delta \geq K - O(\eta(\lambda)) - e^{-O(m)} - \delta_{SK}. \]  We then choose the constant $\delta_{\mathrm{SK}}$  smaller than, for example, $K/2$. If $m$ and $\lambda$ are also sufficiently large, the terms $e^{-O(m)}$ and $O(\ea)$ do not change the sign of the gap, so $\Delta$ remains a positive constant. The more prover errors we wish to tolerate, the smaller the gap becomes and the closer to $1$ the probability of selecting a state test round ($p$) has to be. The only relevant regime is the one where $\gamma(\eps^*+\ea)< \frac{1}{9}$. In this work the analysis was performed without keeping exact count of the constants and the large constant for the fundamental anti-commutation test (see \cref{lemma:compiled-anti-commutation}) propagates through the analysis, requiring a very small $\eps^*$, and yielding a small but constant gap. The protocol is thus far from practical in its current state, although we believe that the constants can be reduced by a tighter analysis.
    \par
    \medskip
    Regarding resource requirements, an honest prover needs $O(\poly(\lambda)\poly(\log m)m)$ qubits (including EPR pairs) to pass the rigidity test, see \cref{subsec:qfhe-overhead} for more details regarding the overhead of the QFHE scheme. Similarly, the communication is of order $O(\poly(\lambda)m)$ bits. Completeness and soundness hold for a generic cryptographically small $\ea$ against efficient provers, in the sense of \cref{def:crypto-small}. Instantiating QFHE under polynomial hardness of LWE requires $\lambda=m^{\Omega(1)}$ (\cref{rem:lambda-vs-n}) and yields almost-linear overhead $O(m^{1+\eps})$ for every $\eps>0$. Instantiating under sub-exponential hardness of LWE, with $\lambda=(\log m)^{\Theta(1/\delta)}$ for some $\delta\in(0,1)$, yields resource requirements of $\widetilde{O}(m)$. When using our rigidity test in the nonlocal setting the honest prover resource requirements are only $O(m)$, since we manage to avoid the sampling overhead of the extended Pauli braiding test in \cite{Coladangelo2024} and do not consider blindness at this point.
\end{proof}

\subsection{Sequential repetition}
The gap $\Delta$ provided by \cref{theorem:bqp-verification} is a positive but small constant, so the completeness and soundness bounds sit near the Clifford test value $\omega^*$. The next lemma shows that the standard, sequential, threshold-based gap amplification works for our construction (allowing only classical advice), by performing a constant number of repetitions, with independent keys. The reduction only requires classical advice: leftover states are sampled by restarting the prover, rather than hardwired as quantum advice.
\begin{lemma}[Sequential repetition]\label{lemma:sequential-repetition}
    Let $p$, $\Delta$, $p_c$ and $p_s=p_c-\Delta$ be as in \cref{theorem:bqp-verification}, and assume $\lambda$ and $m$ are large enough for that theorem to apply. There exist constants $k\in\Nat$ and $t\in[k]$, independent of $\lambda$ and $|C|$, such that the following hold for the $k$-fold sequential repetition of the compiled Verification protocol $\textprotocol{verify}(C,m,p)$, obtained from \cref{protocol:verification}, in which each execution uses independently sampled keys and questions, and the verifier accepts if and only if at least $t$ executions accept.
    \begin{enumerate}
        \item (Completeness:) If $C\in\textnormal{Q-CIRCUIT}_{\mathrm{YES}}$, then there is a strategy for the prover, consuming $O(\poly(\lambda, \log |C|)|C|)$ total resources, that is accepted with probability at least $2/3$.
        \item (Soundness:) If $C\in\textnormal{Q-CIRCUIT}_{\mathrm{NO}}$, then any efficient prover strategy is accepted with probability at most $1/3$.
    \end{enumerate}
\end{lemma}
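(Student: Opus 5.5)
Completeness is a Chernoff bound; soundness needs a hybrid argument, because the prover's behaviour in later executions may depend adaptively on earlier ones.

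\textbf{Choice of parameters.} Set the threshold midway between completeness and soundness, $t\coloneqq\lceil k(p_c+p_s)/2\rceil$. Choose $k$ as a constant depending only on $\Delta$, so that $\exp(-k\Delta^2/8)\le 1/4$; the Hoeffding bound below then leaves slack to absorb the cryptographically small and $e^{-O(m)}$ losses for large $\lambda,m$. Since $\Delta$ is a constant independent of $\lambda$ and $|C|$ (\cref{theorem:bqp-verification} and the remark on $\lambda_0,m_0,\Delta$), both $k$ and $t$ are constants.

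\textbf{Completeness.} The honest prover runs $k$ independent copies of the honest strategy, with fresh EPR pairs and fresh keys. The acceptance indicators are then independent, and each has mean at least $p_c$. Hoeffding's inequality gives
\[
\Pr[\#\text{accepts}<t]\le \exp(-k\Delta^2/2)\le 1/3.
\]
Resources grow by the constant factor $k$.

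\textbf{Soundness.} Fix an efficient prover $P^*$ for the repeated game and let $X_j$ be the acceptance indicator of execution $j$. The goal is to show that for every history $h_{<j}$ of the first $j-1$ executions,
\[
\Pr[X_j=1\mid h_{<j}]\le p_s+\ea
\]
on average. Given this, Azuma's inequality (or the standard coupling with i.i.d.\ Bernoulli variables of mean $p_s+\ea$) bounds the probability of at least $t$ accepts by $\exp(-k\Delta^2/8)+o(1)\le 1/3$.

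To obtain the per-round bound, build a single-execution prover $P_j$ from $P^*$. $P_j$ samples fresh keys, internally plays the verifier for executions $1,\dots,j-1$ against $P^*$, routes execution $j$ to the external verifier, and stops there. Two properties make this work.
\begin{itemize}
\item The keys of different executions are independent. So the simulated transcript, together with $P^*$'s leftover quantum state, is distributed exactly as in the real repeated game, and the leftover state is generated by restarting $P^*$ rather than supplied as quantum advice.
\item $P_j$ is efficient. The verifier is PPT and $k$ is constant, so only classical randomness is added.
\end{itemize}

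\cref{theorem:bqp-verification} then bounds $\Pr[X_j=1]$ for $P_j$ by $p_s+\ea$, but only unconditionally. Conditioning on the history requires one more step: the history $h_{<j}$ consists of simulated outcomes, so $P_j$ knows it (including the decrypted answers, since $P_j$ generated those keys itself).

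\textbf{Main obstacle: the adaptive conditioning.} I would handle it by non-uniformity, as in \cref{rem:non-uniformity}. For each $\lambda$, hardwire as classical advice the index $j$ and the accept/reject pattern $s\in\zo^{j-1}$ that maximise the conditional acceptance probability. $P_j$ then post-selects, by rejection sampling over restarts, on the simulated pattern equal to $s$. This is efficient whenever the pattern has non-negligible probability; patterns of negligible probability contribute negligibly to the total. The result is one efficient prover whose acceptance probability dominates every conditional one, and the theorem bounds it by $p_s+\ea$ with a single cryptographically small function. Taking a union over the constantly many $j$ and substituting into Azuma's inequality completes the proof for sufficiently large $\lambda$ and $m$.
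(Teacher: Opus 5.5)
Your proposal is correct and takes essentially the paper's route: Hoeffding for completeness. For soundness, a single-shot reduction takes $(j,s)$ as classical advice, regenerates the leftover state by restarting the prover and rejection-sampling on the simulated verdict prefix, and the verdicts are then coupled to i.i.d.\ Bernoulli variables of mean $p_s+\ea$.

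The one step to make precise is the ``negligible versus non-negligible'' split of prefixes, because negligibility is an asymptotic property of a function of $\lambda$ and does not fix a cutoff at any given $\lambda$. The paper uses the constant cutoff $\gamma=2^{-k}/6$. Rare prefixes then cost at most $2^k\gamma=1/6$ in total rather than ``negligibly'', and common ones need only $\lceil\lambda/\gamma\rceil$ restarts. Similarly, $t$ should be fixed from constant bounds $\alpha\le p_c$ and $\beta\ge p_s$, not from the $\lambda$-dependent $p_c,p_s$, so that it is genuinely independent of $\lambda$.
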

\begin{proof}
    For all sufficiently large $\lambda$, $p_c\geq\alpha$ is bounded from below and $p_s\leq\beta$ from above by constants separated by a positive gap $\Delta\coloneqq\alpha-\beta$.
    Let $\tau\deq (\alpha+\beta)/2=\alpha-\Delta/2$ denote the midpoint of the gap which is constant, and set $t\deq\lceil k\tau\rceil$ for a constant $k$ to be chosen below.
    \par
    \medskip
    \textbf{Completeness. } By the independence of the sequential runs, completeness follows from a concentration bound and choosing an appropriate number of repetitions. On a yes-instance the honest prover of \cref{theorem:bqp-verification} is accepted with probability at least $p_c\geq\alpha$ in a single execution. Repeating that strategy independently---preparing a fresh state at the start of each execution---yields $k$ independent Bernoulli random variables $Y_1,\dots,Y_k$ (indicating acceptance) with mean at least $p_c$ and Hoeffding's inequality gives for all $a>0$
    \[
        \Pr\Bigl[\sum_{i=1}^k Y_i\leq k\alpha - a\Bigr]\le\exp\bigl(-2a^2/k\bigr),
    \]
    where we used that
    \[
        \E\Bigl[\sum_{i=1}^k Y_i\Bigr]\geq k\alpha\,.
    \]
    The sequential verifier accepts iff $\sum_{i=1}^k Y_i\ge t$. Setting $a=k\alpha-t$ yields the following bound on the verifier's rejection probability
    \[
    \Pr\Bigl[\sum_{i=1}^k Y_i< t\Bigr]\le\exp\bigl(-2(k\alpha-t)^2/k\bigr).
    \]
    From $t\le k\tau+1$ and $\tau=\alpha-\Delta/2$ we have $k\alpha-t\ge k\Delta/2-1$. For $k\ge 8/\Delta$ this is at least $k\Delta/4$, hence
    \[
    \exp\bigl(-2(k\alpha-t)^2/k\bigr)\le\exp\bigl(-2(k\Delta/4)^2/k\bigr)=\exp(-k\Delta^2/8).
    \]
    Choosing $k\geq(8/\Delta^2)\ln 3>8/\Delta$ (since $\Delta<1$) makes the rejection probability at most $1/3$. The $k$ executions contribute only a constant factor to the resource bound of \cref{theorem:bqp-verification}.
    \par
    \medskip
    \textbf{Soundness. } Let $B$ be an efficient sequential prover, with only classical advice, and write $X=(X_1,\dots,X_k)\in\zo^k$ for the accept/reject bits of $k$ executions. For $i\in[k]$ and $u\in\zo^{i-1}$ write $p_i(u)\deq\Pr[X_{<i}=u]$ and $r_i(u)\deq\Pr[X_i=1\mid X_{<i}=u]$, taking $r_i(u)=0$ if $p_i(u)=0$. Fix a threshold constant $\gamma\deq 2^{-k}/6$. A prefix $u$ is \emph{common} if $p_i(u)\ge\gamma$ and \emph{rare} otherwise.
    \par
    \medskip
    Choose any $i\in[k]$. We first claim that $r_i(u)\le p_s+\ea$ for every \emph{common} prefix $u\in\zo^{i-1}$. Define a single-shot prover $A_{i,u}$ against $\textprotocol{verify}(C,m,p)$ as follows: the pair $(i,u)$ is classical advice. The adversary $A_{i,u}$ repeats the following procedure at most $\nu\deq\lceil\lambda/\gamma\rceil$ times: prepare the starting state of $B$ (which is efficient in our adversary model), internally simulate verifiers $1,\dots,i-1$ by sampling their keys and questions independently, perform the interaction and check whether the resulting verdict string equals $u$. If they match, $A_{i,u}$ relays $B$'s $i$-th execution to the external verifier; if no match is found, in $\nu$ iterations, $A_{i,u}$ aborts and is rejected. Each trial succeeds with probability $p_i(u)\ge\gamma$, so the probability that all $\nu$ trials fail is at most $(1-p_i(u))^\nu\leq(1-\gamma)^\nu\le e^{-\lambda}$, which is cryptographically small. The state of $B$ on success is distributed exactly as $B$'s leftover state conditioned on $X_{<i}=u$, hence
    \[
    \Pr[A_{i,u}\text{ is accepted}]=r_i(u)\bigl(1-(1-p_i(u))^\nu\bigr)\ge r_i(u)\bigl(1-e^{-\lambda}\bigr).
    \]
    The adversary $A_{i,u}$ is efficient in the sense of \cref{def:crypto-small}: $k$ is a constant, so $\nu=O(\lambda)$ and $A_{i,u}$ incurs only a linear overhead in $\lambda$ over $B$, in both the polynomial and the sub-exponential instantiations. \Cref{theorem:bqp-verification} therefore bounds its acceptance probability by $p_s$, which rearranges to $r_i(u)\le p_s+\ea$.
    \par
    \medskip
    Since we don't allow quantum auxiliary inputs in our model, the leftover state after $i$ executions, conditioned on a specific verdict, needs to be efficiently preparable. Based on the transcripts alone this is not the case, since it requires post-selecting on the prover's outgoing messages. However, splitting verdict prefixes into common and rare, the argument above shows that preparing the leftover state in the common case is efficient, by restarting $B$ from its classical advice until the verdict string matches.
    \par
    \medskip
    Let $R$ be the event that there exists an $i\in [k]$ such that the prefix $X_{<i}$ of $X$ is rare. There are at most $2^{i-1}$ prefixes of length $i-1$, each rare one having probability mass less than $\gamma$, so
    \[
    \Pr[R]\le\sum_{i=1}^k 2^{i-1}\gamma<2^k\gamma=\frac16.
    \]
    On the complementary event ($\neg R$) every prefix $X_{<i}$ is common, hence every successive conditional satisfies $r(X_{<i})\le p_s+\ea$. Coupling the verdict bits to independent Bernoulli random variables $Z_1,\dots,Z_k$ of mean $p_s+\ea$ in the usual way (draw $U_i$ uniformly from $[0,1]$ and set $X_i=\ind{U_i<r(X_{<i})}$, $Z_i=\ind{U_i<p_s+\ea}$) yields $X_i\le Z_i$ on $\neg R$, and therefore
    \[
    \Pr\Bigl[\sum_{i=1}^k X_i\ge t\Bigr]\le\Pr[R]+\Pr\Bigl[\sum_{i=1}^k Z_i\ge t\Bigr]\le\frac16+\Pr\Bigl[\sum_{i=1}^k Z_i\ge t\Bigr]\numberthis\label{eqn:sequential-repetition-soundness-bound}.
    \]
    The random variables $Z_1,\dots,Z_k$ are independent Bernoulli with mean $p_s+\ea$. Hoeffding's inequality gives, for all $a>0$,
    \[
        \Pr\Bigl[\sum_{i=1}^k Z_i \ge k(\tau-\Delta/4) + a\Bigr]\le\exp\bigl(-2a^2/k\bigr),
    \]
    where we used that for all sufficiently large $\lambda$ one has $p_s+\ea\le\tau-\Delta/4$, and thus
    \[
        \E\Bigl[\sum_{i=1}^k Z_i\Bigr]\le k(\tau-\Delta/4)\,,
    \]

    Setting $a=t-k(\tau-\Delta/4)$ (which is positive, since $t=\lceil k\tau\rceil\ge k\tau$) yields
    \[
        \Pr\Bigl[\sum_{i=1}^k Z_i \ge t\Bigr]\le\exp\bigl(-2\bigl(t-k(\tau-\Delta/4)\bigr)^2/k\bigr).
    \]
    From $t\ge k\tau$ we have $t-k(\tau-\Delta/4)\ge k\Delta/4$, hence
    \[
        \exp\bigl(-2\bigl(t-k(\tau-\Delta/4)\bigr)^2/k\bigr)\le\exp\bigl(-2(k\Delta/4)^2/k\bigr)=\exp(-k\Delta^2/8).
    \]
    Choosing $k\geq(8/\Delta^2)\ln 6> 8/\Delta$ (since $\Delta<1$) upper bounds the quantity by $1/6$, and the sequential verifier accepts $B$ with probability at most $1/3$, by the above and \cref{eqn:sequential-repetition-soundness-bound}.
\end{proof}

\begin{corollary}\label{cor:bqp-argument}
    Assuming LWE is hard for non-uniform quantum adversaries in the sense of \cref{def:crypto-small}, every language $L\in\mathsf{BQP}$ admits a single-prover, classical-verifier argument with completeness $2/3$, soundness $1/3$ and total resource requirements
    \[
    O(\poly(\lambda,\log |C_x|)|C_x|),
    \]
    where $C_x$ is a circuit deciding $L$ on input $x$ and $\lambda$ is the LWE security parameter. Instantiating the security parameter as in \cref{rem:lambda-vs-n} yields almost-linear overhead $O(|C_x|^{1+\eps})$ for every $\eps>0$ under polynomial hardness of LWE, and quasilinear overhead $\widetilde{O}(|C_x|)$ under sub-exponential hardness.
\end{corollary}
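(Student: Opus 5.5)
The plan is to derive the corollary from \cref{lemma:sequential-repetition} through the standard $\mathsf{BQP}$-completeness of $\textnormal{Q-CIRCUIT}$ (\cref{def:q-circuit}). Fix $L\in\mathsf{BQP}$. There is then a uniform family of polynomial-size quantum circuits $\{C_x\}$ that decides $L$ with bounded error. We compile each $C_x$ into the gate set $\{\sigma_X,\sigma_Z,H,\mathrm{CNOT},T\}$ and then apply the substitution $H\mapsto H(TTH)^3$. We also arrange, by swapping output labels if needed, that acceptance is signalled by outcome $0$ on the first output qubit. The result is an instance $C_x$ with $p(C_x)\ge 2/3$ when $x\in L$ and $p(C_x)\le 1/3$ when $x\notin L$.

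Completeness needs some care about the synthesis error. Compiling an arbitrary uniform family into a fixed gate set via Solovay--Kitaev perturbs the acceptance probability by a small constant. The remedy is to start from a family with error at most $1/6$, which constant-factor amplification provides at constant overhead, and to synthesise with total error below $1/6$. This costs only a $\mathrm{polylog}(|C_x|)$ blow-up in size, so $|C_x|$ changes by a polylogarithmic factor. The $H$-replacement is a constant factor.

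The argument itself is as follows. The verifier computes $C_x$ from $x$ classically, which is fine since the family is uniform. It sets $m=\Theta(|C_x|)$ as required by \cref{protocol:verification}, chooses $\lambda$ according to \cref{rem:lambda-vs-n}, and runs the $k$-fold sequential repetition of $\textprotocol{verify}(C_x,m,p)$ with the threshold $t$ from \cref{lemma:sequential-repetition}. Completeness $2/3$ and soundness $1/3$ against efficient provers then follow directly from that lemma. The lemma requires $\lambda\ge\lambda_0$ and $m\ge m_0$. For the finitely many small inputs this fails on, we can pad $m$ up to $m_0$ and $\lambda$ up to $\lambda_0$ at constant cost, or simply have the verifier decide those inputs by brute-force classical simulation.

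For resources, \cref{lemma:sequential-repetition} together with \cref{theorem:bqp-verification} gives $O(\poly(\lambda,\log|C_x|)|C_x|)$ per execution. This covers QFHE overhead via \cref{subsec:qfhe-overhead}, communication, and verifier and prover time. A constant number $k$ of repetitions preserves this bound. The verifier's classical preprocessing to produce $C_x$ is quasilinear in $|C_x|$, since it is given by uniform generation plus Solovay--Kitaev compilation. Substituting $\lambda=m^{\Omega(1)}$ under polynomial hardness gives $O(|C_x|^{1+\eps})$ for every $\eps>0$. Substituting $\lambda=(\log m)^{\Theta(1/\delta)}$ under sub-exponential hardness gives $\widetilde O(|C_x|)$.

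The main obstacle is bookkeeping rather than anything conceptual. The promise gap must survive gate synthesis, and the polylogarithmic synthesis overhead must be charged correctly. I would handle this by amplifying before compiling, as described above.
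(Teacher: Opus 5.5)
Your proposal follows the paper's route. You reduce $L$ to the $\mathsf{BQP}$-complete promise problem $\textnormal{Q-CIRCUIT}$ and run the $k$-fold sequential repetition of $\textprotocol{verify}(C_x,m,p)$ with $m=\Theta(|C_x|)$, as in \cref{lemma:sequential-repetition}; your amplify-then-synthesize step and your handling of the small inputs below the thresholds only make explicit bookkeeping the paper leaves implicit.

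One claim should be dropped, though. The classical time to generate $C_x$ from a uniform family is polynomial, but not in general quasilinear in $|C_x|$: a generator may do heavy classical work and then output a tiny circuit. The resource bound must therefore be read, as the paper implicitly does, as the cost of the protocol given $C_x$, not as including the generation step.
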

\begin{proof}
    The promise problem $\textnormal{Q-CIRCUIT}$ is $\mathsf{BQP}$-complete: for every language $L\in\mathsf{BQP}$ there exists a classical polynomial-time reduction which, on input $x$, outputs a circuit $C_x$ in the gateset of \cref{def:q-circuit}, of size $\poly(|x|)$, such that $C_x\in\textnormal{Q-CIRCUIT}_{\mathrm{YES}}$ if $x\in L$ and $C_x\in\textnormal{Q-CIRCUIT}_{\mathrm{NO}}$ if $x\notin L$.
    \par
    \medskip
    An argument system for all $L\in\BQP$ then immediately follows from \cref{lemma:sequential-repetition}: on input $x$, the verifier constructs $C_x$ and runs the $k$-fold sequential repetition of $\textprotocol{verify}(C_x,m,p)$, with $m=\Theta(|C_x|)$.
\end{proof}

\appendix
\section{Supplementary Material}\label{appendix:supplementary-material}

\subsection{Compiled (anti-)commutation tests}
\begin{lemma}\label{lemma:compiled-commutation}
    Let $P^*$ be any computationally efficient prover modeled as in Section \ref{section:modeling} that succeeds with probability $1-\varepsilon$ in the compiled commutation game $\textprotocol{com}(A,B)$, obtained from Protocol \ref{protocol:commutation-test}. Then for all $q\in\Qa$ there exists a cryptographically small function $\ea$ such that
    \[
    \|[A,B]\|_{\psi^{\Enc(q)}}^2 \leq O(\varepsilon)+\ea,
    \]
    with $A,B\in\Obs(\H)$, $P^*$'s observables that correspond to single-bit answer question labels.
\end{lemma}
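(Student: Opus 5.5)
The plan is to follow the standard commutation-test analysis, transplanted to the compiled setting with the tools of \cref{subsec:crypto-security}. Denote by $\{W^{ij}\}_{i,j\in\zo}$ the four-outcome PVM that Alice's joint question $(A,B)$ induces on the unencrypted side, through the automatic consistency test: by the convention of \cref{sec:protocols}, $(A,B)$ is also asked of Bob. Define the marginal observables $W_A\deq\sum_{i,j}(-1)^iW^{ij}$ and $W_B\deq\sum_{i,j}(-1)^jW^{ij}$. These commute exactly, since they are functions of one PVM.

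\textbf{Step 1: cross-consistency.} Success $1-\eps$ in the commutation game (the first item of \cref{lemma:compiled-consistency}) says that, with Alice receiving $\Enc((A,B))$ and Bob measuring $A$ or $B$, the decrypted answers agree. This is exactly the statement that $\{A^v\}$ (resp.\ $\{B^v\}$) is $\eps$-cross-consistent with the question distribution $(\{(A,B)\},U_1)$, where one uses the $i$-th bit of $\Dec(\alpha)$. The same wrapped test makes $(\{(A,B)\},U_1)$ $\eps$-self-consistent. Reversing the roles of the two observables, I would apply \cref{lemma:generic-cross-isometric-equality} with $f(v,a)=v_1$ (resp.\ $v_2$). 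This yields $\norm{W_A-A}_\peq^2\le O(\eps)+\ea$ and $\norm{W_B-B}_\peq^2\le O(\eps)+\ea$ for every $q\in\Qa$, after the state switch performed inside that lemma. All operators involved are efficient LCUs, so \cref{cor:efficient-operator-state-switching} applies.

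\textbf{Step 2: chain.} I would then write
\[
AB\approx W_AB\approx W_AW_B=W_BW_A\approx W_BA\approx BA,
\]
with every approximation at cost $O(\eps)+\ea$ on $\peq$. The two outer steps use Step 1 directly, via left multiplication by a unitary and then left unitary invariance. The inner steps have a unitary factor on the right ($B$ or $A$), so the error there is handled by compiled prover switching (\cref{lemma:compiled-prover-switching} or \cref{cor:compiled-prover-switching-obs}), using the $\eps$-self-consistency of $(\{A\},U_1)$ and $(\{B\},U_1)$ provided by the consistency wrapper. A final triangle inequality then gives $\norm{[A,B]}_\peq^2\le O(\eps)+\ea$.

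\textbf{Main obstacle.} The delicate point is the steps where the approximated operator does not sit on the left, as in $W_AB\approx W_AW_B$. Here the difference $W_A(B-W_B)$ has a unitary on the left, so that case is actually fine by unitary invariance. The genuinely right-multiplied case is $W_BW_A\approx W_BA$, which again reduces to $\norm{W_A-A}$ by unitary invariance. So the real care is ensuring that the Step 1 bounds hold on an \emph{arbitrary} $\peq$ rather than only on $\pe{(A,B)}$ or $\pe{A}$, which is exactly what the state switch in \cref{lemma:generic-cross-isometric-equality} delivers. If needed, I would also check that the handful of cryptographically small functions can be merged into a single $\ea$ by taking their maximum.
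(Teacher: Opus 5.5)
Your route is correct in substance, but it is not the paper's. The paper never uses the consistency wrapper or Bob's joint measurement on $(A,B)$. From the winning condition alone it gets, for each $b\in\{1,2\}$, $\sum_\alpha\norm{W_b-(-1)^{\Dec(\alpha)_b}\id}^2_{\pea{(A,B)}}\le 8\eps$. In words, each post-Alice state $\pea{(A,B)}$ is an approximate common eigenvector of $A$ and $B$. Since scalars commute, the chain
\[
AB\approx(-1)^{\Dec(\alpha)_2}A\approx(-1)^{\Dec(\alpha)_1+\Dec(\alpha)_2}\id\approx(-1)^{\Dec(\alpha)_1}B\approx BA
\]
can be carried out on $\pea{(A,B)}$ using only left unitary invariance, because the right-hand factor is always replaced by a scalar first. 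A single application of \cref{cor:efficient-operator-state-switching} to the LCU $[A,B]$ then moves the bound to an arbitrary $\peq$. In effect, the paper inlines the mechanism behind \cref{lemma:compiled-prover-switching} directly on $\pea{(A,B)}$.

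Your version is the standard nonlocal argument that replaces each observable by a marginal of the partner's joint measurement. It needs more machinery:
\begin{itemize}
\item the wrapped execution, so that $(A,B)$ is also asked of Bob and $(\{(A,B)\},U_1)$, $(\{A\},U_1)$ and $(\{B\},U_1)$ are $\eps$-self-consistent;
\item \cref{lemma:generic-cross-isometric-equality} with $S=\{1\}$ and $f(v,a)=v_1$ (resp.\ $v_2$);
\item compiled prover switching.
\end{itemize}
All of this is available under the paper's leaf-test convention, and you get the same $O(\eps)+\ea$ bound. The paper's argument is more self-contained: it works for the bare commutation game and gives explicit constants.

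One correction to your bookkeeping: you have swapped which steps need which tool. Your ``Main obstacle'' paragraph then concludes, wrongly, that every step reduces to left unitary invariance.
\begin{itemize}
\item The inner steps $W_AB\approx W_AW_B$ and $W_BW_A\approx W_BA$ have differences $W_A(B-W_B)$ and $W_B(W_A-A)$. These are handled by left unitary invariance.
\item The outer steps $AB\approx W_AB$ and $W_BA\approx BA$ have differences $(A-W_A)B$ and $(W_B-B)A$, with a unitary on the right. Knowing $\norm{A-W_A}_\peq$ is small does not by itself control $\norm{(A-W_A)B}_\peq$.
\end{itemize}
The outer steps genuinely need \cref{lemma:compiled-prover-switching}. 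Apply it with $A_B=A-W_A$ (an efficient, bounded LCU) and the $\eps$-self-consistency of $(\{B\},U_1)$, and symmetrically with $(\{A\},U_1)$. This costs a factor of $2$ plus $O(\eps)+\ea$. With that reassignment the proof goes through. The state-switching concern you raise is already handled inside \cref{lemma:generic-cross-isometric-equality}, so it is not the real obstacle.
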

\begin{proof}
    Let $W=(A,B)$, then $W_1=A$ and $W_2=B$. By a slight abuse of notation, these are simultaneously the labels for the questions sent by the verifier as well as the observables that the prover applies in the unencrypted part of the interaction. We know from the protocol specification that $\alpha$ is checked to be the encryption of a tuple, since the prover immediately loses if this is not the case, we can assume $\Dec(\alpha)\in\{0,1\}^2$. The winning probability is then given by
    \[
    p_\text{win}=\frac{1}{2}\left(1+\E_b \sum_\alpha (-1)^{\Dec(\alpha)_b}\Tr{W_b\pea{W}}\right)\geq 1-\varepsilon,
    \]
    where we inserted the assumed winning probability from the lemma statement. Rearranging, we obtain
    \[
    \E_b \overbrace{\sum_\alpha (-1)^{\Dec(\alpha)_b}\Tr{W_b\pea{W}}}^\gamma\geq 1-2\varepsilon.
    \]
    Since $\gamma\leq 1$ and $b$ is uniformly distributed over two options (see specification of protocol \ref{protocol:commutation-test}), the following must hold
    \[
    \gamma=\sum_\alpha (-1)^{\Dec(\alpha)_b}\Tr{W_b\pea{W}}\geq 1-4\varepsilon.
    \]
    We have
    \begin{align*}
        \sum_\alpha\|W_b-(-1)^{\Dec(\alpha)_b}\id\|_{\pea{W}}^2 = 2 - 2\sum_\alpha (-1)^{\Dec(\alpha)_b}\Tr{W_b\pea{W}}\leq 8\varepsilon,
    \end{align*}
    for any $b\in[2]$. Since the upper-bound holds for a convex combination of sums of non-negative terms, we know that
    \[
    \|W_b-(-1)^{\Dec(\alpha)_b}\id\|_{\pea{W}}^2 \leq \varepsilon_{\alpha},
    \]
    with $\sum_\alpha \varepsilon_{\alpha} = 8\varepsilon$. Since the identity matrix always commutes, we can use the fact that $W_1=A$ and $W_2=B$ approximately act as plus or minus the identity on Alice's post-measurement state to conclude that they must also approximately commute. For fixed $\alpha$, we have
    \begin{align*}
        W_1W_2&\approx_{\varepsilon_{\alpha}} W_1(-1)^{\Dec(\alpha)_2}\id = (-1)^{\Dec(\alpha)_2}W_1\\
        (-1)^{\Dec(\alpha)_2}W_1 &\approx_{\varepsilon_{\alpha}} (-1)^{\Dec(\alpha)_2}(-1)^{\Dec(\alpha)_1} = (-1)^{\Dec(\alpha)_1}(-1)^{\Dec(\alpha)_2}\\
        (-1)^{\Dec(\alpha)_1}(-1)^{\Dec(\alpha)_2}&\approx_{\varepsilon_{\alpha}} (-1)^{\Dec(\alpha)_1} W_2 = W_2(-1)^{\Dec(\alpha)_1}\\
        W_2(-1)^{\Dec(\alpha)_1}&\approx_{\varepsilon_{\alpha}} W_2W_1.
    \end{align*}
    Chaining these approximate equalities together, we obtain
    \[
    \|[W_1,W_2]\|_{\pea{W}}^2  = \|W_1W_2-W_2W_1\|_{\pea{W}}^2 \leq O(\varepsilon_{\alpha}).
    \]
    Reintroducing the sum over the Alice answer $\alpha$, this becomes
    \[
    \|[W_1,W_2]\|_{\psi^{\Enc(W)}}^2 = \sum_\alpha \|[W_1,W_2]\|_{\pea{W}}^2 \leq O(\varepsilon).
    \]
    Since the commutator is an efficient LCU, we can switch out the state by \cref{cor:efficient-operator-state-switching} (choosing a point distribution on $q$) and obtain
    \[
    \|[W_1,W_2]\|_{\psi^{\Enc(q)}}^2 \leq O(\varepsilon) + \ea,
    \]
    which completes the proof by the definition of $W_1$ and $W_2$ as $A$ and $B$, respectively.
\end{proof}

\begin{lemma}\label{lemma:compiled-anti-commutation}
    Let $P^*$ be any computationally efficient prover modeled as in Section \ref{section:modeling} that succeeds with probability $1-\varepsilon$ in the compiled anti-commutation game $\textprotocol{ac}(A,B)$, obtained from Protocol \ref{protocol:anti-commutation-test}. Then for all $q\in\Qa$ there exists a cryptographically small function $\ea$ such that
    \[
    \|\{A,B\}\|_{\psi^{\Enc(q)}}^2 \leq O(\varepsilon)+\ea,
    \]
    with $A,B\in\Obs(\H)$, $P^*$'s observables that correspond to single-bit answer question labels.
\end{lemma}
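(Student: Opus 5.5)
The proof follows the template of \cref{lemma:compiled-commutation}, with the Magic Square game in place of the two-question commutation game. Only the placement of $A$ and $B$ in cells 2 and 4 matters; the other cells are auxiliary labels $\Gamma^i_{AB}$ used only inside this game.

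\emph{Step 1: Bob's observables act as signs on Alice's states.} By the consistency convention, $\textprotocol{ac}(A,B)$ runs as $\textprotocol{con}(\textprotocol{ac}(A,B))$. Hence \cref{lemma:compiled-consistency} makes every question distribution in the game $\eps$-self-consistent, and the prover wins the bare game with probability $1-O(\eps)$. The game has $18$ (row/column, cell) pairs, a constant number. So for every row or column $R=(i_1,i_2,i_3)$ and every cell $i_l\in R$, Bob's cell observable $\Gamma^{i_l}$ matches Alice's decrypted bit with probability $1-O(\eps)$. As in the commutation proof, this gives
\[
\sum_\alpha\norm{\Gamma^{i_l}-(-1)^{\Dec(\alpha)_l}\id}^2_{\pea{R}}\leq O(\eps).
\]
In addition, the decrypted triple satisfies the parity constraint with probability $1-O(\eps)$. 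Multiplying the three approximate sign identities on $\pea{R}$, absorbing the parity mismatch into the error, and using unitarity of each $\Gamma$ gives
\[
\norm{\Gamma^{i_1}\Gamma^{i_2}\Gamma^{i_3}-s_R\id}^2_{\pe{R}}\leq O(\eps),
\]
where $s_R=-1$ for the column $(3,6,9)$ and $s_R=+1$ otherwise. The same argument applied to pairs gives approximate commutation of any two observables sharing a row or column, exactly as in \cref{lemma:compiled-commutation}.

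\emph{Step 2: move everything to a common state.} Each product above is an efficient LCU of norm at most $2$. By \cref{cor:efficient-operator-state-switching}, every relation can therefore be transported to $\peq$ for an arbitrary fixed $q\in\Qa$, at cost $\ea$ each. Since there are only constantly many relations, the maximum of these cryptographically small functions is again cryptographically small. After this step we have, on the single state $\peq$, the six row and column product relations and the pairwise commutations of cells sharing a line.

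\emph{Step 3: derive anti-commutation, the main obstacle.} We run the standard algebraic Magic Square derivation: express $A=\Gamma^2$ and $B=\Gamma^4$ through their row and column relations, reorder factors using the commutations, and collect the $-1$ from column $(3,6,9)$ to conclude $AB\approx -BA$. The difficulty is that each substitution places an operator to the right of others inside the state-dependent norm, and approximations only hold multiplied on the left. To peel or insert operators on the right, we use compiled prover switching (\cref{lemma:compiled-prover-switching}, \cref{cor:compiled-prover-switching-obs}), which applies because each cell observable is $\eps$-self-consistent by Step 1's consistency wrapper. This costs a factor $2$ plus $O(\eps)+\ea$ per use. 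The chain has constant length (about a dozen steps), so the accumulated error is $O(\eps)+\ea$, with a large constant, which is consistent with the remark in \cref{theorem:bqp-verification}. Expanding $\norm{AB+BA}^2_{\peq}=\norm{AB-(-BA)}^2_{\peq}$ along this chain then completes the proof.
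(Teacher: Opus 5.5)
Your proposal is correct, but it takes a different route from the paper. The paper does not derive the anti-commutation relation at all. It imports \cite[Theorem 6.7]{Cui2024}, which already bounds $\E\sum_\alpha\norm{\{A,B\}\ket{\psi_{\alpha c}}}^2\le 17280\eps+52\eta'(\lambda)$ for Alice questions from the magic-square question set. It then uses only \cref{cor:efficient-operator-state-switching} to pass to an arbitrary $q\in\Qa$, and adds an informal remark that the imported bound survives the sub-exponential instantiation because Cui et al.\ use the same kind of IND-CPA reductions.

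You instead rebuild magic-square rigidity inside the paper's own framework, and your Step 3 does close. Write $a_i$ for Bob's cell-$i$ observable, so $a_2=A$ and $a_4=B$. One working chain on $\peq$ is $a_7a_8\approx a_9\approx -a_3a_6$, followed by the substitutions $a_7\approx a_1a_4$, $a_8\approx a_2a_5$, $a_3\approx a_1a_2$ and $a_6\approx a_4a_5$. Replacing $a_7$ and $a_3$, which sit to the left of $a_8$ and $a_6$, costs one use of \cref{lemma:compiled-prover-switching} each. This yields $a_1(a_4a_2+a_2a_4)a_5\approx 0$. The left factor $a_1$ drops by unitary invariance. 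Removing the right factor $a_5$, however, requires \cref{lemma:reverse-compiled-prover-switching}, using self-consistency of the cell-$5$ question. The forward lemma and \cref{cor:compiled-prover-switching-obs} that you cite only insert right factors; they do not remove them.

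Your route is self-contained: it uses only lemmas proven in the paper. It therefore holds verbatim in both hardness regimes and for the paper's QFHE instantiation, with a trackable constant. Its price is that it leans on the $\textprotocol{con}(\cdot)$ wrapper for self-consistency of the auxiliary cell labels, which the paper's convention does supply. The paper's route is a few lines long and does not itself invoke the wrapper, since the cited bound is stated for the compiled magic-square game. Its cost is dependence on an external analysis, with its large constant, plus an informal argument that this analysis transfers to the present setting.
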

\begin{proof}
    In \cite[Theorem 6.7]{Cui2024} it is shown that $\exists\; \eta'(\lambda)$, s.t. $\forall\;q\in\mathcal{Q}_{MS}$ (the questions used in the magic-square test)
    \[
    \underset{\substack{\mathsf{sk}\leftarrow\mathsf{Gen}(1^\lambda)\\c\leftarrow \Enc_\mathsf{sk}(q)}}{\mathbb{E}}\sum_\alpha \|\{B_2,B_4\}\, \ket{\psi_{\alpha c}}\|^2\leq 17280\varepsilon + 52\eta'(\lambda),
    \]
    where $\eta'(\lambda)$ is a negligible function.
    In our setting $B_2=A$ (represented by question $A$), $B_4=B$ (represented by question $B$), observing $\ket{\psi_{\alpha c}} = \kca $. Since the anti-commutator is an efficient LCU, we can extend the statement to hold for any $q\in\Qa$ by \cref{cor:efficient-operator-state-switching}. Their result also holds in the setting of sub-exponential security of LWE (where $\nuQPT$/negligible are replaced by the second two classes of \cref{def:crypto-small}), as their protocol can be instantiated from the same QFHE scheme as ours and their analysis only performs the same same type of IND-CPA invocations as our work. The function $\eta(\lambda)\coloneqq 52\eta'(\lambda)$ is thus cryptographically small in general.
\end{proof}

\subsection{Isometry circuit calculation}\label{sec:circuit-calculation}
Using the expressions from the previous section, we constructed an explicit quantum circuit, where the choice of gates was informed by replicating the required phases and cross-register correlations (represented by the delta functions).
\par
\medskip
We will explicitly verify that the circuit in \cref{fig:circuit} implements the correct operations. The boxed operation on the first two qubits yields
\[
\ket{p_1,p_0,a,b,c}\mapsto \frac{1}{2}\sum_{\ell_0,\ell_1\in\zo} i^{(2p_1+p_0)(2\ell_1+\ell_0)}\ket{\ell_0,\ell_1,a,b,c}.
\]
This step already ensures that the factor shared between the classical and quantum irreps appears. We will first analyze the first half of the circuit, where every operation is controlled on the fact that $\ell_0=0$, after the first two gates, we have
\begin{align*}
    \ket{0,\ell_1,a,b,c}&\mapsto \sqrt{\frac{1}{2^{n}}}\sum_{\mu\in\zo^n}\prod_{j\in[n]}(-1)^{\ell_1 c_j+\mu_j c_j}\ket{0,\ell_1,a,b,\mu}
\end{align*}
The doubly controlled Hadamard gate then yields
\begin{align*}
    \ket{0,\ell_1,a,b,\mu}\mapsto\sqrt{\frac{2^{|\mu|}}{2^{n}}}\sum_{r\in\zo^n}\left(\prod_{j\in S(\mu\oplus 1^n)} (-1)^{a_j r_j}\right)\left(\prod_{j\in S(\mu)} \delta_{r_j,a_j}\right)\ket{0,\ell_1,r,b,\mu}.
\end{align*}
Combining the two we obtain
\begin{align*}
\ket{0,\ell_1,a,b,c}\mapsto\frac{1}{2^{n/2}}\sum_{\mu,r\in\zo^n}\sqrt{\frac{2^{|\mu|}}{2^{n}}}
&\left(\prod_{j\in S(\mu\oplus 1^n)} (-1)^{\ell_1c_j+a_j r_j}\right)\\
&\quad\left( \prod_{j\in S(\mu)} (-1)^{(\ell_1+1)c_j}\delta_{r_j,a_j}\right)\ket{0,\ell_1,r,b,\mu}.
\end{align*}
Applying the Hadamard on $b$ we get
\begin{align*}
\ket{0,\ell_1,a,b,c}\mapsto\frac{1}{2^{n}}\sum_{\mu,r,s\in\zo^n}\sqrt{\frac{2^{|\mu|}}{2^{n}}}
&\left(\prod_{j\in S(\mu\oplus 1^n)} (-1)^{\ell_1c_j+a_j r_j+b_js_j}\right)\\
&\quad\left( \prod_{j\in S(\mu)} (-1)^{(\ell_1+1)c_j+b_js_j}\delta_{r_j,a_j}\right)\ket{0,\ell_1,r,s,\mu}.
\end{align*}
The last step consists of a Toffoli and two swap gates, which yield
\begin{align*}
\ket{0,\ell_1,a,b,c}\mapsto\frac{1}{2^{n}}\sum_{\mu,r,s\in\zo^n}\sqrt{\frac{2^{|\mu|}}{2^{n}}}
&\left(\prod_{j\in S(\mu\oplus 1^n)} (-1)^{\ell_1c_j+a_j r_j+b_js_j}\right)\\
&\quad\left(\prod_{j\in S(\mu)} (-1)^{(\ell_1+1)c_j+b_js_j}\delta_{r_j,a_j}\right)\ket{0,\ell_1,\mu, r\oplus (s\cdot\mu), s}.
\end{align*}
Relabeling $r\oplus (s\cdot \mu)\mapsto r$, we obtain
\begin{align*}
\ket{0,\ell_1,a,b,c}\mapsto\frac{1}{2^{n}}\sum_{\mu,r,s\in\zo^n}\sqrt{\frac{2^{|\mu|}}{2^{n}}}
&\left(\prod_{j\in S(\mu\oplus 1^n)} (-1)^{\ell_1c_j+a_j r_j+b_js_j}\right)\\
&\quad\left(\prod_{j\in S(\mu)} (-1)^{(\ell_1+1)c_j+b_js_j}\delta_{r_j\oplus s_j,a_j}\right)\ket{0,\ell_1,\mu, r, s}.
\end{align*}
Putting everything together, the first part of the circuit implements the following operation:
\begin{align*}
\ket{p_1,p_0,a,b,c}\mapsto &\sum_{\ell_1\in\zo}\sum_{\mu,r,s\in\zo^n}\sqrt{\frac{2^{|\mu|}}{2^{3n+2}}}\left[\rho_{C,\mu}^{(2\ell_1)}(i^{2p_1+p_0}x(a)g(b)z(c))\right]_{r,s}\ket{0,\ell_1,\mu, r, s}\\
&+\frac{1}{2}\sum_{\ell_1\in\zo}i^{(2p_1+p_0)(2\ell_1+1)}\ket{1,\ell_1,a,b,c}.
\end{align*}
It remains to explicitly write out the operations that are controlled on $\ell_0=1$, after the Hadamard, $T$, $S$ and $CZ$ gates, we obtain
\begin{align*}
\ket{1,\ell_1,a,b,c}\mapsto \sum_{s\in\zo^n}\sqrt{\frac{1}{2^n}}\prod_{j\in[n]}(-1)^{c_js_j}
&\overbrace{e^{i\frac{\pi}{4}b_j-i\frac{\pi}{2}s_jb_j-i\frac{\pi}{2}\ell_1b_j+i\pi s_jb_j\ell_1}}^{=\omega_{2+\ell_1}(s_j)^{b_j}}\\
&\ket{1,\ell_1,a,b,s},
\end{align*}
where we recall that $\omega_\ell(s)= \mathbf{1}\{\ell = 1\}\omega(s)+\mathbf{1}\{\ell=3\}\overline{\omega}(s)$, with $\omega(s)=e^{i\frac{\pi}{4}(1-2s)}$. Next, we have two $\mathrm{CNOT}$ and a Hadamard gate
\begin{align*}
\ket{1,\ell_1,a,b,c}\mapsto \sum_{\mu,s\in\zo^n}\frac{1}{2^n}\prod_{j\in[n]}(-1)^{c_js_j+a_j\mu_j}\omega_{2+\ell_1}(s_j)^{b_j}
&\ket{1,\ell_1,\mu,b\oplus a\oplus s,s},
\end{align*}
the final operations are two $CZ$ gates, which yield
\begin{align*}
\ket{1,\ell_1,a,b,c}\mapsto \sum_{\mu,s\in\zo^n}\frac{1}{2^n}
&\prod_{j\in[n]}(-1)^{c_js_j+a_j\mu_j+(a_j+b_j+s_j)\mu_j+s_j\mu_j}\omega_{2+\ell_1}(s_j)^{b_j}\\
&\ket{1,\ell_1,\mu,b\oplus a\oplus s,s},
\end{align*}
simplifying the expression we obtain
\begin{align*}
\ket{1,\ell_1,a,b,c}\mapsto \sum_{\mu,r,s\in\zo^n}\frac{1}{2^n}
&\prod_{j\in[n]}(-1)^{c_js_j+b_j\mu_j}\omega_{2+\ell_1}(s_j)^{b_j}\delta_{r_j,s_j\oplus a_j\oplus b_j}\\
&\ket{1,\ell_1,\mu,r,s}.
\end{align*}
Combining this with the analysis of the first half of the circuit, we see that our circuit indeed implements the desired operation, which maps
\begin{align*}
\ket{p_1,p_0,a,b,c}\mapsto \sum_{\ell_0,\ell_1\in\zo}\sum_{\mu,r,s\in\zo^n}\sqrt{\frac{d_\mu}{2^{3n+2}}}
&\left[\rho_{\mu}^{(\ell_0+2\ell_1)}(i^{2p_1+p_0}x(a)g(b)z(c))\right]_{r,s}\\
&\ket{\ell_0,\ell_1,\mu, r, s}.
\end{align*}
Since our circuit only uses a constant number of gates per qubit, it implements $U_{\mathrm{QFT}}$ in $O(n)$, which is efficient in both senses of \cref{def:crypto-small}, under the corresponding relation between $n$ and $\lambda$.

\subsection{Parseval's identity}
\begin{lemma}[Parseval's identity]\label{lemma:parseval-identity}
    Let $g:\Z^n_2\to L(\H)$ and $\widehat{g}:\Z^n_2\to L(\H)$ be its Fourier transform, defined as
    \[
    \widehat{g}(x)=\E_a(-1)^{x\cdot a}g(a).
    \]
    Then
    \[
    \E_a g(a)^\dag g(a)=\sum_x \widehat{g}(x)^\dag\widehat{g}(x).
    \]
\end{lemma}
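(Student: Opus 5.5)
The identity is a direct computation: expand the right-hand side using the definition of the Fourier transform, exchange the order of summation, and apply the character orthogonality relation on $\Z_2^n$. No earlier result of the paper is needed beyond this orthogonality, which we also use in the proof of \cref{cor:self-consistent-equality}.

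\textbf{Step 1: expand.} Since $(-1)^{x\cdot a}$ is real, $\widehat g(x)^\dag=\E_a(-1)^{x\cdot a}g(a)^\dag$. Hence
\[
\sum_x\widehat g(x)^\dag\widehat g(x)=\sum_x\E_{a}\E_{b}(-1)^{x\cdot(a\oplus b)}\,g(a)^\dag g(b).
\]

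\textbf{Step 2: swap the finite sums.} Pull the sum over $x$ inside the two expectations.

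\textbf{Step 3: apply orthogonality.} Use
\[
\sum_{x\in\zo^n}(-1)^{x\cdot c}=2^n\,\ind{c=0^n},
\]
so the $x$-sum collapses the double expectation onto the diagonal $a=b$. Writing $\E_a\E_b=2^{-2n}\sum_{a,b}$, the factor $2^n$ leaves $2^{-n}\sum_a g(a)^\dag g(a)=\E_a g(a)^\dag g(a)$, which is the claimed identity.

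The only place to be careful is the normalization, because of the asymmetric convention: expectation in the forward transform, plain sum in the inverse. The factor $2^{-2n}$ from the two expectations combines with the $2^n$ from orthogonality to give exactly one expectation. There is no real obstacle; the argument is purely algebraic and holds for operator-valued $g$ because the coefficients are real scalars.

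For later use in \cref{cor:parseval}, the identity can be traced against any $\rho\in\Pos(\H)$. This gives $\E_a\norm{g(a)}_\rho^2=\sum_x\norm{\widehat g(x)}_\rho^2$, the form invoked throughout the paper with $g(a)=A(a)-B(a)$.
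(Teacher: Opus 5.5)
Your proof is correct and matches the paper's argument: you expand both transforms, exchange the finite sums, and use $\sum_x (-1)^{x\cdot(a\oplus b)} = 2^n\,\ind{a=b}$ to collapse onto the diagonal. The normalization $2^{-2n}\cdot 2^n = 2^{-n}$ then yields $\E_a g(a)^\dag g(a)$.
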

\begin{proof}
    \begin{align*}
        \sum_x \widehat{g}(x)^\dag\widehat{g}(x) &= \sum_x \Big(\E_a(-1)^{ax}g(a)^\dag\Big)\cdot\Big(\E_b(-1)^{bx}g(b)\Big) = \E_{a,b}\overbrace{\Big(\sum_x (-1)^{(a+b)x}\Big)}^{=0 \text{ if }a\neq b,\text{ else }2^n}g(a)^\dag g(b)\\
        &=\frac{1}{2^{2n}}\left(\sum_a \overbrace{\Big(\sum_x (-1)^{0\cdot x}\Big)}^{=2^n}g(a)^\dag g(a)+\sum_{\substack{a,b\\a\neq b}}\Big(\sum_x (-1)^{(a+b)x}\Big)g(a)^\dag g(b)\right)\\
        &=\E_a g(a)^\dag g(a)
    \end{align*}
\end{proof}

\begin{corollary}\label{cor:parseval}
    Let $g:\Z^n_2\to L(\H)$ and $\widehat{g}:\Z^n_2\to L(\H)$ be its Fourier transform, defined as
    \[
    \widehat{g}(x)=\E_a(-1)^{x\cdot a}g(a).
    \]
    Then for all $\rho\in\Pos(\H)$,
    \[
    \E_a \|g(a)\|^2_\rho=\sum_x \|\widehat{g}(x)\|^2_\rho.
    \]
\end{corollary}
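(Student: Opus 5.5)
The plan is to apply Lemma~\ref{lemma:parseval-identity} pointwise inside the trace. By Definition~\ref{def:state-dependent-norm}, $\|g(a)\|_\rho^2=\Tr{g(a)^\dag g(a)\rho}$ for every $a$, and the trace is linear in its operator argument. Averaging over $a$ therefore gives
\[
\E_a\|g(a)\|_\rho^2=\Tr{\Big(\E_a g(a)^\dag g(a)\Big)\rho}.
\]

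Next, I would substitute the operator identity $\E_a g(a)^\dag g(a)=\sum_x\widehat g(x)^\dag\widehat g(x)$ from Lemma~\ref{lemma:parseval-identity}. This holds as an exact equality in $L(\H)$, with no dependence on $\rho$, so the substitution is legitimate for any $\rho\in\Pos(\H)$.

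Finally, I would pull the finite sum over $x\in\Z_2^n$ back out of the trace using linearity. Each summand is $\Tr{\widehat g(x)^\dag\widehat g(x)\rho}=\|\widehat g(x)\|^2_\rho$, which is the claimed right-hand side.

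There is no real obstacle here. The only points to check are that all sums are finite, so exchanging them with the trace is immediate, and that the lemma's normalization convention matches the one used in the corollary. The convention is the same: the expectation sits on the $g$ side and the plain sum on the $\widehat g$ side.
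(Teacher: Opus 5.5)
Your proposal is correct and matches the paper's proof: the paper likewise applies Lemma~\ref{lemma:parseval-identity} and then uses the trace form of the state-dependent norm together with linearity of the trace. Your check that the normalizations agree (expectation on the $g$ side, plain sum on the $\widehat g$ side) is right.
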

\begin{proof}
    Apply \cref{lemma:parseval-identity} and the definition of the state-dependent norm, with the linearity of the trace.
\end{proof}

\subsection{Useful results}
\begin{lemma}\label{lemma:closeness-equivalence}
Let $\ket{\psi} \in \H$ with $\braket{\psi}{\psi} = \alpha$, $R\in U(\H)$, $S\in U(\H')$ and $V: \H \to \H'$ an isometry. Then
\[
\| (R-V^\dag SV)\ket{\psi} \|^2 \leq \varepsilon \quad \implies \quad \| (VR-SV) \ket{\psi} \|^2 \leq 4 \sqrt{\alpha\varepsilon}, 
\]
and
\[
\| (VR-SV) \ket{\psi} \|^2 \leq \varepsilon \quad \implies \quad \| (R-V^\dag SV)\ket{\psi} \|^2 \leq \varepsilon. 
\]
\end{lemma}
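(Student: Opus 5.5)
The plan is to use only $V^\dag V=\id$, the unitarity of $R$ and $S$, and Cauchy--Schwarz. I treat the easier second implication first.

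\textbf{Second implication.} Since $V^\dag V=\id_{\H}$, we have the exact identity
\[
(R-V^\dag SV)\ket{\psi}=V^\dag(VR-SV)\ket{\psi}.
\]
The operator norm satisfies $\norm{V^\dag}=\norm{V}=1$, because $V$ is an isometry. Hence $\norm{(R-V^\dag SV)\ket\psi}^2\le\norm{(VR-SV)\ket\psi}^2\le\varepsilon$, with no loss.

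\textbf{First implication.} Expand the square. Unitarity of $R$ and $S$ together with $V^\dag V=\id$ gives $\norm{VR\ket\psi}^2=\norm{SV\ket\psi}^2=\alpha$. Therefore
\[
\norm{(VR-SV)\ket\psi}^2=2\alpha-2\,\mathrm{Re}\,\bra{\psi}R^\dag V^\dag SV\ket{\psi}.
\]
Define the error vector $\ket{e}\coloneqq (R-V^\dag SV)\ket\psi$. By hypothesis $\braket{e}{e}\le\varepsilon$, and $V^\dag SV\ket\psi=R\ket\psi-\ket e$. Substituting this,
\[
\bra{\psi}R^\dag V^\dag SV\ket\psi=\bra{\psi}R^\dag R\ket\psi-\bra{\psi}R^\dag\ket{e}=\alpha-\braket{R\psi}{e}.
\]
Cauchy--Schwarz bounds the cross term by $|\braket{R\psi}{e}|\le\sqrt{\alpha}\sqrt{\varepsilon}$. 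Putting these together,
\[
\norm{(VR-SV)\ket\psi}^2=2\,\mathrm{Re}\braket{R\psi}{e}\le 2\sqrt{\alpha\varepsilon}\le4\sqrt{\alpha\varepsilon}.
\]

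No step is a real obstacle. The one point needing care is to keep $V^\dag V=\id$ (which holds) separate from $VV^\dag$ (which is only a projector). This is why the direction from $R$ to $VR$ loses a square root: closeness of $R\ket\psi$ to the compressed $V^\dag SV\ket\psi$ carries no information about the component of $SV\ket\psi$ orthogonal to the image of $V$. That component is controlled only through the inner-product expansion, which yields the $\sqrt{\alpha\varepsilon}$ bound rather than $\varepsilon$.
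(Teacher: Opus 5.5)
Your proof is correct. For the second implication it is the paper's argument: the paper writes $\norm{V^\dag\ket{\phi'}}^2=\bra{\phi'}VV^\dag\ket{\phi'}\le\braket{\phi'}{\phi'}$ where you quote $\norm{V^\dag}\le 1$.

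For the first implication you take a more direct route. The paper proceeds in four steps:
\begin{itemize}
\item it transfers the hypothesis to $\norm{(VR-VV^\dag SV)\ket{\psi}}^2\le\varepsilon$;
\item it uses the triangle inequality and $\norm{VR\ket{\psi}}=\sqrt{\alpha}$ to get $\norm{VV^\dag SV\ket{\psi}}\ge\sqrt{\alpha}-\sqrt{\varepsilon}$;
\item it concludes that the component of $SV\ket{\psi}$ orthogonal to the image of $V$ has squared norm at most $2\sqrt{\alpha\varepsilon}-\varepsilon$;
\item it recombines the two pieces with $\norm{x+y}^2\le 2\norm{x}^2+2\norm{y}^2$, which is where the constant $4$ comes from.
\end{itemize}
Those two pieces are in fact orthogonal, so Pythagoras would already have given $2\sqrt{\alpha\varepsilon}$.

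You instead expand the square and get the exact identity $\norm{(VR-SV)\ket{\psi}}^2=2\,\mathrm{Re}\braket{R\psi}{e}$, so a single Cauchy--Schwarz step finishes. This buys three things:
\begin{itemize}
\item it is shorter;
\item it yields the optimal constant $2\sqrt{\alpha\varepsilon}$. Take $\H=\C$, $R=1$, $V$ the embedding into $\C^2$ and $S$ a real rotation by angle $\theta$; then both sides equal $2\alpha(1-\cos\theta)$;
\item it avoids the tacit assumption $\varepsilon\le\alpha$ the paper makes when squaring $\sqrt{\alpha}-\sqrt{\varepsilon}$. That assumption is harmless, since for $\varepsilon>\alpha$ the trivial bound $4\alpha$ suffices.
\end{itemize}
In exchange, the paper's decomposition states explicitly the geometric fact you allude to at the end: $S$ maps $V\ket{\psi}$ almost entirely back into the image of $V$.
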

\begin{proof}
We start by showing the first implication. By the assumption of the lemma and the fact that $V$ is an isometry, we have:
\begin{equation}\label{eqn:closeness-equiv-VR-step}
\| (VR-VV^\dag SV) \ket{\psi} \|^2 = \| (R-V^\dag SV)\ket{\psi} \|^2 \leq \varepsilon. 
\end{equation}
Let $\alpha \coloneqq \braket{\psi}{\psi}$, be the squared norm of $\ket{\psi}$. 
The proof idea is to show that by the assumption of the first implication, the projection onto the image space of the isometry of the vector $ SV\ket{\psi}$ is close to the vector $VR\ket{\psi}$.
Specifically this can be shown using that $R$ is a unitary and using the triangle inequality 
\begin{align*}
\sqrt{\alpha} &= \| VR\ket{\psi} \| = \| \left(VV^\dag S V + VR - VV^\dag S V\right)\ket{\psi} \| \\
&\leq \| VV^\dag S V\ket{\psi} \| + \| \left(VR - VV^\dag S V\right) \ket{\psi} \| \\ 
\Leftrightarrow\quad &\| \Pi_V \underbrace{S V \ket{\psi}}_{\ket{\phi}} \| \geq \sqrt{\alpha} - \sqrt{\varepsilon}\quad\implies\quad \bra{\phi}\Pi_V\ket{\phi}\geq \left(\sqrt{\alpha} - \sqrt{\varepsilon}\right)^2,
\end{align*}
with $\Pi_V \coloneqq VV^\dag$, $\ket{\phi} \coloneqq SV\ket{\psi}$, and $\braket{\phi}{\phi} = \alpha$, since $S$ is unitary. 
\par 
\medskip
Since this lower-bound is close to the norm of $\ket{\phi}$ itself, we can conclude that the projection does not have a large effect and $\ket{\phi}$ primarily lies in the image space of the isometry, i.e. the unitary $S$ roughly preserves this space. Quantitatively this is expressed by the following upper-bound on the difference between $\ket{\phi}$ and $\Pi_V\ket{\phi}$:
\begin{equation}\label{eqn:closeness-equiv-projection-gap}
\| (VV^\dag SV - SV)\ket{\psi} \|^2 = \| (\Pi_V -\id)\ket{\phi} \|^2 = \alpha - \bra{\phi}\Pi_V\ket{\phi} \leq 2\sqrt{\alpha\varepsilon}-\varepsilon.
\end{equation}
By combining both observations in a transitive manner, we arrive at the following expression:
\begin{align*}
\| (VR-SV)\ket{\psi} \|^2 &= \| (VR - VV^\dag SV + VV^\dag SV - SV) \ket{\psi} \|^2\\
&\leq 2\varepsilon + 2\left(2\sqrt{\alpha\varepsilon}-\varepsilon\right) = 4\sqrt{\alpha\varepsilon}
\end{align*}
where the inequality follows from \eqref{eqn:closeness-equiv-VR-step} and \eqref{eqn:closeness-equiv-projection-gap} and the triangle inequality. This shows the first implication.
\par 
\medskip
For the reverse implication,
\begin{align*}
\| (R-V^\dag SV) \ket{\psi} \|^2 &= \| V^\dag \overbrace{(VR - SV) \ket{\psi}}^{\ket{\phi'}} \|^2 \\
&= \bra{\phi'}\Pi_V\ket{\phi'} \leq \braket{\phi'}{\phi'} \leq \varepsilon,
\end{align*}
with $\ket{\phi'}\coloneqq(VR - SV) \ket{\psi}$ and where the second-to-last inequality follows from the Cauchy--Schwarz inequality. The last inequality follows from the lemma assumption and concludes the proof of the second implication.
\end{proof}

\begin{lemma}\label{lemma:op-norm-adjoint-invariance}
    Let $A\in L(\H)$. Then $\|A^\dag A\|=\|A\|^2$.
\end{lemma}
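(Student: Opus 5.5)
The plan is to reduce the identity $\|A^\dag A\|=\|A\|^2$ to the variational characterization of the operator norm recalled in \cref{section:distance-measures}, namely $\|A\|=\sup_{\braket{x}{x}\leq 1}\|A\ket{x}\|$ together with the bilinear form $\|A\|=\sup_{\braket{x}{x}\leq1,\braket{y}{y}\leq1}|\bra{x}A\ket{y}|$. I will prove the two inequalities separately; this is the standard C$^*$-identity argument.

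For the upper bound, I will use submultiplicativity of the operator norm, $\|A^\dag A\|\leq\|A^\dag\|\,\|A\|$. Then I need $\|A^\dag\|=\|A\|$. This follows from the bilinear characterization, since $|\bra{x}A^\dag\ket{y}|=|\overline{\bra{y}A\ket{x}}|=|\bra{y}A\ket{x}|$ and the supremum is symmetric in $x$ and $y$. Hence $\|A^\dag A\|\leq\|A\|^2$.

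For the lower bound, I fix any unit vector $\ket{x}$ and write
\[
\|A\ket{x}\|^2=\bra{x}A^\dag A\ket{x}\leq|\bra{x}A^\dag A\ket{x}|\leq\|A^\dag A\|,
\]
where the last step is the bilinear characterization with $y=x$. Taking the supremum over unit $\ket{x}$ then gives $\|A\|^2\leq\|A^\dag A\|$.

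Finite-dimensionality is not needed, but it guarantees that all suprema are attained, so nothing delicate arises. The only point requiring any care is adjoint invariance of $\|\cdot\|$, which is immediate from the symmetric form of the supremum. I therefore expect no real obstacle.
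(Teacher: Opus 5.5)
Your proof is correct, and for the lower bound $\|A\|^2\leq\|A^\dag A\|$ it matches the paper. The paper bounds $\bra{v}A^\dag A\ket{v}\leq\|\ket{v}\|\,\|A^\dag A\ket{v}\|$ by Cauchy--Schwarz. You instead use the bilinear characterization with $y=x$, which comes to the same thing.

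The upper bound is where the two routes differ, and only slightly:
\begin{itemize}
\item \textbf{The paper's route:} it stays inside the bilinear characterization and applies Cauchy--Schwarz directly, $|\bra{u}A^\dag A\ket{v}|\leq\|A\ket{u}\|\,\|A\ket{v}\|\leq\|A\|^2$. This is self-contained: it needs only the variational definition and never invokes submultiplicativity or $\|A^\dag\|=\|A\|$.
\item \textbf{Your route:} you go through $\|A^\dag A\|\leq\|A^\dag\|\,\|A\|$ together with $\|A^\dag\|=\|A\|$. This relies on submultiplicativity, which the paper does supply as the $p=q=r=\infty$ case of its H\"older inequality. As a side benefit, it proves the adjoint invariance $\|A^\dag\|=\|A\|$ explicitly along the way.
\end{itemize}

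Either argument suffices. The only direction the paper actually uses downstream, in \cref{lemma:compiled-prover-switching-2}, is the upper bound $\|A^\dag A\|\leq\|A\|^2$.
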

\begin{proof}
    For the forward bound,
    \begin{align*}
        \|A\|^2 &= \Big(\sup_{\braket{v}{v}\leq 1}\|A\ket{v}\|\Big)^2 = \sup_{\braket{v}{v}\leq 1}\|A\ket{v}\|^2 = \sup_{\braket{v}{v}\leq 1}\bra{v}A^\dagger A\ket{v}\\
        &\leq \sup_{\braket{v}{v}\leq 1}\|\ket{v}\|\cdot\|A^\dagger A\ket{v}\|\leq \sup_{\braket{v}{v}\leq 1}\|A^\dagger A\ket{v}\| = \|A^\dagger A\|,
    \end{align*}
    where we used a well-known identity for the squared supremum over a set of non-negative reals and the Cauchy--Schwarz inequality. For the reverse bound,
    \begin{align*}
        \|A^\dag A\| = \sup_{\braket{u}{u},\braket{v}{v}\leq 1}|\bra{u}A^\dag A\ket{v}|\leq \sup_{\braket{v}{v}\leq 1}\| A\ket{v}\|^2=\|A\|^2,
    \end{align*}
    where we used the variational definition of the operator norm (in terms of optimizing over two states) and the Cauchy--Schwarz inequality.
\end{proof}

\bibliographystyle{halpha}
\bibliography{main}

\end{document}